\documentclass[12pt,letterpaper]{article}
\def\blind{1} % 0 = anonymous; 1 = with authors
\def\documentpart{0} % 0 = full; 1 = main; 2 = supplement
\input{preamble-shared}
\ifnum\documentpart=0
  \newcommand{\suppref}[1]{\ref{#1}}
  
  \newcommand{\mainref}[1]{\ref{#1}}
  \newcommand{\maineqref}[1]{\eqref{#1}}
\else
  \usepackage{xr}
  \ifnum\documentpart=1
    \newcommand{\suppref}[1]{\ref*{external-#1}}
    
    \newcommand{\mainref}[1]{\ref{#1}}
    \newcommand{\maineqref}[1]{\eqref{#1}}
  \else
    \newcommand{\suppref}[1]{\ref{#1}}
    
    \newcommand{\mainref}[1]{\ref*{external-#1}}
    \newcommand{\maineqref}[1]{\textup{(\ref*{external-#1})}}
  \fi
\fi
\begin{document}
\ifnum\documentpart<2
  \def\spacingset#1{\renewcommand{\baselinestretch}%
{#1}\small\normalsize} \spacingset{1}

%%%%%%%%%%%%%%%%%%%%%%%%%%%%%%%%%%%%%%%%%%%%%%%%%%%%%%%%%%%%%%%%%%%%%%%%%%%%%%

\if1\blind
{
  \title{\bf \papertitle}
\author[1]{Daeyoung Ham\thanks{These authors contributed equally to this work.}}
\author[2]{Myeonghun Yu\textsuperscript{*}}
\author[3]{Tate Jacobson}

\affil[1]{Department of Statistics and Data Science, University of Texas at San Antonio}
\affil[2]{Department of Statistics, Ewha Womans University}
\affil[3]{Department of Statistics and Data Science, Oregon State University}

  \maketitle
} \fi

\if0\blind
{
  \bigskip
  \bigskip
  \bigskip
  \begin{center}
    {\LARGE\bf \papertitle}
\end{center}
  \medskip
} \fi

\bigskip

% NOTE: Should be no more than 35 pages, including the title page, abstract, figures, tables, and references. 
\begin{abstract}
With the growing prevalence of streaming data, renewable estimation has become an important tool for updating statistical analyses without retaining historical raw observations. We study estimation and linear hypothesis testing in high-dimensional online $M$-estimation. Existing renewable procedures based on quadratic approximations can incur first-order approximation errors because the gradient of the unpenalized loss generally does not vanish at a preceding penalized estimator, posing a particular challenge for valid inference. To address this issue, we propose a corrected renewable loss that preserves historical first-order information and combine it with partial folded-concave regularization and an iterative local linear approximation. We establish finite-sample $\ell_1$- and $\ell_2$-error bounds, together with contraction bounds and strong oracle properties for both the constrained and unconstrained iterative estimators. Building on these results, we derive oracle Bahadur representations and establish asymptotically valid Wald and score tests for general linear hypotheses, allowing both the testing-oracle dimension and the number of restrictions to diverge. We further characterize their asymptotic power under local alternatives. The proposed procedure requires only recursively updated summaries while recovering the first-order inferential behavior of its pooled-data oracle counterpart. Simulation studies and real-data applications demonstrate the finite-sample performance of the proposed estimation and testing procedures.

\end{abstract}

\noindent%
{\it Keywords:} \paperkeywords
\vfill

\def\myspacing{1.7}
\spacingset{\myspacing}

\clearpage

  \section{Introduction}\label{sec:intro}

Many contemporary applications involve data that arrive sequentially over time, including financial transactions, electronic health records, sensor networks, and online platforms. The continual accumulation of data offers the prospect of increasingly precise estimation and inference, but also poses substantial computational and data-management challenges: repeatedly refitting a statistical model using the entire historical sample can become prohibitively expensive, while retaining all past observations may be impractical because of storage, communication, or privacy constraints. These challenges call for statistical procedures that can update existing estimates and inferential quantities as each new batch of data arrives, using a limited collection of summaries retained from the historical data.
Such coordinatewise guarantees do not directly establish the validity of joint tests when both the number of tested coefficients and the number of linear restrictions diverge. Our framework accommodates this setting by developing renewable Wald and score tests that allow both dimensions to increase with the cumulative sample size.

Such considerations motivate \emph{renewable} or \emph{online} statistical procedures. \citet{schifano2016online} study online updating for linear models and estimating equations, while \citet{luo2020renewable} introduce renewable estimation and incremental inference for generalized linear models (GLMs) by replacing historical data with suitably chosen summary statistics and combining these summaries with the current batch. This framework has subsequently been extended to other models, including quantile regression \citep{jiang2022renewable} and Huber regression \citep{jiang2024renewable}. In high-dimensional settings, where the number of covariates may exceed the available sample size, \citet{han2021online} and \citet{luo2023online} develop renewable Lasso procedures with debiased inference for linear and generalized linear models, respectively, with related developments for high-dimensional single-index models \citep{han2024inference}, smoothed quantile regression \citep{xie2024statistical}, and support vector machines \citep{zhang2026renewable}. A complementary line of work combines stochastic optimization with online debiasing to perform sequential inference on individual regression coefficients \citep{han2024online,han2026adaptive}. These debiasing-based approaches are primarily designed for coordinatewise inference and typically require auxiliary estimation of the relevant inverse information matrix, often through nodewise regression \citep{vdGeer2014,zhang2014confidence}.

For linear hypothesis testing on a low-dimensional subvector $\betab^*_{\mathcal M}$ of the true parameter $\betab^*$, partial penalization provides an alternative to debiasing when the full dataset is available. 
Instead of penalizing all coordinates and subsequently correcting the resulting regularization bias, partial penalization leaves the coordinates involved in the hypothesis unpenalized while regularizing the remaining components. 
\citet{shi2019linear} develop likelihood ratio, Wald, and score tests for linear hypotheses in high-dimensional GLMs using constrained and unconstrained partial penalization. 
Their approach employs folded-concave penalties on the nuisance coordinates to reduce the shrinkage bias associated with the Lasso and achieve consistent variable selection.
An important advantage of partial penalization is its behavior under local alternatives. Consider alternatives of the form $H_a:\betab^*_{\mc M}=\mathbf h_n$ with $\|\mathbf h_n\|_2\ll\lambda_n$, where $\lambda_n$ denotes the regularization level and $\|\cdot\|_2$ is the Euclidean norm. Under such alternatives, penalization may shrink the estimated coefficients in $\mathcal M$ toward zero, thereby attenuating the associated Wald statistic and reducing its ability to distinguish the alternative from the null. 
In contrast, by leaving the tested coordinates unpenalized, partial penalization avoids this attenuation and can retain nontrivial power against such local alternatives. \citet{shi2019linear} further show that, in the settings considered there, the resulting tests can exhibit greater local power than corresponding debiasing-based procedures; see Appendix~S1 of \citet{shi2019linear}.

Because folded-concave penalties lead to nonconvex optimization problems with potentially multiple local minima, computing the corresponding partial-penalized estimators poses substantial computational challenges. To bridge the resulting gap between theory and computation, \citet{jacobson2025strong} employs the local linear approximation (LLA) algorithm \citep{zou2008one,fan2014strong}, which reduces the nonconvex problem to a sequence of weighted $\ell_1$-penalized problems, and establishes strong oracle properties for the resulting LLA iterates. 
Related work extends partial-penalization-based inference from GLMs to other high-dimensional models and data settings, including high-dimensional Tobit models \citep{jacobsonzou2025testing},
nonparametric Box--Cox transformation models \citep{zhouzou2024testing},
high-dimensional quantile regression with distributed data \citep{yang2026communication}, and general $M$-estimation in distributed settings \citep{gu2025collaborative}. However, these methods are not directly applicable to streaming data, where historical observations are no longer available once processed and subsequent estimation and inference must rely on recursively updated summaries.
Moreover, the renewable loss of \citet{luo2023online} cannot be directly combined with partial penalization. Because the gradient of the unpenalized empirical loss generally does not vanish at a penalized estimator, a purely quadratic surrogate omits a nonzero first-order term. The resulting approximation error therefore depends on the preceding estimation error at first order and can propagate across batches. This creates a particular difficulty for inference: even if the oracle support is correctly recovered at the terminal batch, an oracle refit based on the conventional renewable loss need not be first-order equivalent to its pooled-data counterpart. Consequently, terminal oracle recovery alone is insufficient to obtain the Bahadur representation required for valid partial-penalized inference; see Remark~\ref{remark:ls no info lost} for further discussion. This motivates a corrected renewable loss that preserves the historical first-order information needed for valid inference.

In this paper, we develop a framework for estimation and linear hypothesis testing in high-dimensional online $M$-estimation. The proposed procedure combines a corrected renewable loss that preserves historical first-order information with partial folded-concave regularization and a contraction-and-tightening strategy motivated by iterative local adaptive majorize-minimization (I-LAMM) \citep{fan2018lamm}. By preserving the historical first-order term, the resulting approximation error depends on the preceding estimation error only at the second order, thereby preventing first-order errors from propagating across batches. A renewable $\ell_1$ estimator serves as an initializer, after which partially penalized LLA updates are used to construct the constrained and unconstrained estimators required for testing. The entire procedure is implemented using recursively updated summaries and therefore does not require access to historical raw observations once a batch has been processed. Our main contributions are as follows.

First, we establish finite-sample error bounds for the renewable estimator under suitable regularity conditions, based on the corrected loss. This estimator serves as the initializer for the subsequent partial-penalized I-LAMM procedure. Under sparsity, we derive $\ell_2$- and $\ell_1$-error bounds that match the convergence rates of their offline counterparts. A further consequence of the first-order correction is that the constants in these bounds do not deteriorate with the batch index. In contrast, the corresponding error bounds in \citet{luo2023online} involve factors that can grow exponentially with the number of processed batches. Thus, the corrected renewable loss prevents first-order approximation errors from accumulating across successive updates.

Second, for the partially penalized I-LAMM procedure, we analyze both the constrained and unconstrained LLA sequences and establish contraction bounds and strong oracle properties for the resulting estimators. The corresponding oracle estimators retain all coordinates involved in the hypothesis together with the truly active nuisance coordinates. We show that, after finitely many LLA iterations, the computed constrained and unconstrained estimators coincide with these oracle estimators with high probability. Importantly, the required minimum-signal condition is imposed only on the active nuisance coefficients; no such condition is needed for the tested coordinates. Hence, the oracle recovery result remains valid when the tested coefficients are zero under the null or approach zero under local alternatives, which is essential for subsequent power analysis.

Third, we establish asymptotically valid renewable Wald and score tests for general linear hypotheses, allowing both the number of tested coefficients and the number of linear restrictions to diverge with the cumulative sample size. We derive oracle Bahadur representations for the constrained and unconstrained estimators, whose leading terms coincide with those of their pooled-data oracle counterparts. Under some growth conditions, the first-order correction renders the renewable approximation error negligible on the inferential scale. Combining these representations with consistent and renewable covariance estimation and a Gaussian approximation that accommodates a diverging number of restrictions, we establish chi-square approximations under the null and local alternatives. Thus, despite using only recursively updated summaries, the proposed tests recover the first-order inferential behavior of their pooled-data oracle counterparts.

The remainder of the paper is organized as follows.
Section~\ref{sec:method} introduces the methodology, and
Section~\ref{sec:theory} establishes its statistical properties.
Section~\ref{sec: tuning parameter selection} details tuning parameter selection.
Sections~\ref{sec:numerical} and~\ref{sec:real-data} present the simulation studies and real data analysis, respectively. The Supplementary Materials contain additional simulation reports, all technical proofs, and model-specific verifications of the general assumptions and theory for GLMs, Tobit models, and the nonparametric Box--Cox transformation model.

\section{Model setup and methodology}\label{sec:method}

Consider a time point $B \ge 2$, at which a total of $N_B = \sum_{b=1}^B n_b$ observations have arrived in $B$ sequential data batches $\mathcal{D}_B^\star = \bigcup_{b=1}^B \mathcal{D}_b$. For each $1 \leq b \leq B$, the batch $\mathcal{D}_b$ contains $n_b = |\mathcal{D}_b|$ i.i.d. observations $\{(Y_i^{(b)}, X_i^{(b)})\}_{i=1}^{n_b}$ drawn from the joint distribution of $(Y,X)$, where $Y \in \RR$ and $X \in \RR^p$ over the sample space $\mathcal{Y} \times \mathcal{X}$.

For $\betab \in \RR^p$, define the population risk $\mathcal{L}(\betab) = \EE\{\ell(\betab;Y,X)\}$, where $\ell:\RR^p \times \mathcal{Y} \times \mathcal{X} \to \RR$ is twice continuously differentiable and convex in $\betab$. Let $\betab^* = \argmin_{\betab \in \RR^p} \mathcal{L}(\betab)$, which is assumed to be unique.
We study the estimation of $\betab^*$ in a streaming-data setting and inference on a low-dimensional subvector of $\betab^*$ through the linear hypothesis
\#
H_0: \Cb\betab^*_{\mathcal{M}} = \tb. \label{hypothesis}
\#
Here $\mathcal{M} \subset \{1,\ldots,p\}$ indexes the coordinates under test, with $m = |\mathcal{M}| \ll p$, $\boldsymbol{C} \in \RR^{r \times m}$, and $\tb \in \RR^r$. Without loss of generality, we relabel the coordinates so that $\mathcal{M} = \{1,\ldots,m\}$ and $\mathcal{M}^c = \{m+1,\ldots,p\}$. We assume $\mathrm{rank}(\boldsymbol{C}) = r$, so the null restrictions are nonredundant.
Let $\mathcal{A} = \mathrm{supp}(\betab^*)$ denote the support of $\betab^*$, and we assume that the true model is sparse, that is, $|\mathcal{A}| \ll p$. Define $\mathcal{S} = \mathcal{M}^c \cap \mathcal{A}$ and let $s = |\mathcal{S}|$. Write $\overline{\mc M} = \mathcal{M} \cup \mathcal{S}$, and, by reordering indices if necessary, assume $\mathcal{S} = \{m+1,\ldots,m+s\}$.

\subsection{Online estimation via a corrected surrogate loss}
\label{sec:online-estimation}

If all observations were available offline, one could compute the pooled Lasso estimator $\bar{\betab}^{(B)} = \argmin_{\betab \in \RR^p} \left\{(1/N_B) \sum_{i \in \mathcal{D}_B^\star} \ell(\betab;Y_i,X_i) + \lambda \|\betab\|_1\right\}$.
In a strictly online setting, however, repeated access to the historical raw data $\mathcal D_{B-1}^\star$ is not feasible. We therefore construct a surrogate objective that can be updated recursively using only the current batch and stored summary statistics.

Define the batch-wise empirical loss
$\widehat{\mathcal{L}}^{(b)}(\betab) = (1/n_b) \sum_{i \in \mathcal{D}_b} \ell(\betab;Y_i,X_i)$,
and let $\widehat{\betab}^{(b)}$ denote the estimator updated with the $b$th batch. We further define the aggregated curvature matrix
$\widehat{\Jb}^{(b)} = \sum_{j=1}^b n_j \nabla^2 \widehat{\mathcal{L}}^{(j)}(\widehat{\betab}^{(j)})$.
A key limitation of existing online surrogate losses is that they depend on the previous estimators via first-order terms, yielding a non-negligible first-order remainder. To address this issue, we recursively define a corrected surrogate objective. 
Set $\widehat{\mathcal{Q}}_{1}(\betab) = \widehat{\mathcal{L}}^{(1)}(\betab)$, and for $b \geq 2$ let
\#
\widehat{\mathcal{Q}}_{b}(\betab) & = \frac{1}{N_b} \bigg\{ n_b\widehat{\mathcal{L}}^{(b)}(\betab) + \frac{1}{2}(\betab - \widehat{\betab}^{(b-1)})^\T \widehat{\Jb}^{(b-1)}(\betab - \widehat{\betab}^{(b-1)}) + N_{b-1}\nabla \widehat{\mathcal{Q}}_{b-1}(\widehat{\betab}^{(b-1)})^\T \betab \bigg\}. \label{def:q.function}
\#

The three terms in \eqref{def:q.function} play distinct roles. 
The term $n_b\widehat{\mathcal{L}}^{(b)}(\betab)$ incorporates the contribution of the current batch. 
The quadratic term conveys curvature information from the historical data via $\widehat{\Jb}^{(b-1)}$ and is centered on the most recent estimator $\widehat{\betab}^{(b-1)}$. 
The linear term $N_{b-1}\nabla \widehat{\mathcal{Q}}_{b-1}(\widehat{\betab}^{(b-1)})^\T \betab$ corrects the first-order remainder inherited from the previous surrogate objective.
It is worth noting that, in the low-dimensional unpenalized regime with $p \ll N_{b-1}$, the first-order optimality condition implies $\nabla \widehat{\mathcal{Q}}_{b-1}(\widehat{\betab}^{(b-1)}) = \mathbf{0}$. 
In that case, the correction term disappears, and $\widehat{\mathcal{Q}}_b(\betab)$ reduces to the original renewable surrogate objective of \citet{luo2020renewable}. Our construction can therefore be viewed as a bias-corrected extension of the conventional renewable objective for the penalized high-dimensional setting, where the gradient of the previous-step surrogate objective at $\widehat{\betab}^{(b-1)}$ is generally nonzero.

For implementation, store the first-order correction $\widehat{\mathbf g}^{(b)} = \nabla\widehat{\mc Q}_b(\widehat{\betab}^{(b)})$.
Given $\widehat{\mathcal{Q}}_b$, we update the estimator at batch $b$ by solving
\#
\widehat{\betab}^{(b)} \in \argmin_{\betab \in \RR^p} \left\{ \widehat{\mathcal{Q}}_b(\betab) + \lambda_b \|\betab\|_1 \right\}. \label{def:bth.estimator}
\#
The estimator in \eqref{def:bth.estimator} depends only on the current batch $\mathcal{D}_b$ and the stored summaries $\{\widehat{\betab}^{(b-1)}, \widehat{\Jb}^{(b-1)}, \widehat{\mathbf g}^{(b-1)}\}$, and hence can be updated without revisiting the historical data.

\begin{remark} \label{remark:ls no info lost}
To illustrate the importance of the first-order correction, consider the least-squares loss $\ell(\betab;Y,X)=\frac{1}{2}(Y-X^\top\betab)^2$ under the linear model $Y_i=X_i^\top\betab^*+\epsilon_i$.
The conventional renewable loss of \citet{luo2023online} is
$\widetilde{\mc Q}_b(\betab)
=
\frac{1}{N_b}
\left\{
n_b\widehat{\mathcal L}^{(b)}(\betab)
+
\frac{1}{2}
(\betab-\widehat{\betab}^{(b-1)})^\top
\widehat{\Jb}^{(b-1)}
(\betab-\widehat{\betab}^{(b-1)})
\right\}$.
At the second batch, suppose that $\widehat\betab^{(1)}$ is the Lasso estimator based on $\mc D_1$ with $\lambda_1\asymp\sqrt{\log p/n_1}$. Whenever $\widehat\betab^{(1)}\neq\mathbf 0$, the KKT conditions imply $\|\sqrt{N_2}\nabla\widetilde{\mc Q}_2(\betab^*) + \frac{1}{\sqrt{N_2}} \sum_{i\in\mc D_2^*}X_i\epsilon_i \|_\infty \asymp \sqrt{(n_1\log p)/N_2}$.
Hence, for the two scores in the preceding display to be asymptotically equivalent on the $\sqrt{N_2}$-scaled $\ell_\infty$ scale, we require $(n_1\log p)/N_2\to0$, which amounts to $n_2\gg n_1\log p$. Repeated renewable updates can accumulate this first-order discrepancy, leading to increasingly stringent batch-size requirements involving successive powers of $\log p$. Such growth is particularly restrictive in high-dimensional streaming settings. 
In contrast, the corrected surrogate eliminates this discrepancy by preserving the historical first-order term. 
Indeed, for every $b\geq1$, $\widehat{\mathcal Q}_b(\betab) = \frac{1}{N_b}\sum_{i\in\mc D_b^*}\ell(\betab;Y_i,X_i)+C_b$, where $C_b$ does not depend on $\betab$. Thus, for least squares, the corrected surrogate coincides with the pooled empirical loss up to an irrelevant additive constant, and renewable updating does not lose any information relevant to estimation or inference.
See Appendix~\suppref{sec:ls no info lost proof} for details.
\end{remark}

\begin{remark}
    During the final stages of preparing this manuscript, we became aware of two contemporaneous works, \citet{liu2026high} and \citet{gao2026renewable}, whose renewable constructions also preserve historical first-order information and are closely related to our corrected loss. Their theoretical objectives, however, differ from ours. Our correction arises directly from deriving the oracle Bahadur representation: retaining the historical first-order term reduces the dependence of the renewable approximation error on preceding estimation errors, making the resulting remainder sufficiently small for inference; see Remark~\ref{remark:ls no info lost} and Appendix~\suppref{sec:appendix_lemmas}. As a byproduct, we also obtain estimation error bounds whose constants remain uniform over the batch index; see Section~\ref{sec:finite_sample_theory}, in contrast to the batch-dependent bounds of \citet{luo2023online}. Related estimation guarantees are obtained by \citet{liu2026high}, who derive batch-uniform error bounds, signal-adaptive improvements in estimation rates, and approximate support recovery, and by \citet{gao2026renewable}, who establish nonasymptotic error bounds for gradient-enhanced Lasso estimation and extend their method to distributed streaming data. Our main contribution is inferential: we establish strong oracle properties for both the constrained and unconstrained LLA estimators and, building on these results, derive oracle Bahadur representations and asymptotically valid Wald and score tests for general linear hypotheses.
\end{remark}

\subsection{Inference on linear hypotheses at the final batch}

After processing the final batch, we turn to inference for the hypothesis in \eqref{hypothesis}. To motivate the proposed procedure, we first introduce oracle estimators under the idealized assumption that the support set $\mathcal{A}$ is known. Because the inferential target is $\betab_{\mathcal{M}}^*$ and $\mathcal{M}$ need not be contained in $\mathcal{A}$, the appropriate oracle benchmark for testing \eqref{hypothesis} is not the conventional support-restricted oracle based only on $\mathcal{A}$. Instead, following \citet{jacobson2025strong}, we define testing oracles on $\overline{\mathcal{M}} = \mathcal{M} \cup \mathcal{A}$, so that the coordinates in $\mathcal{M}$ are retained regardless of whether the corresponding components of $\betab^*$ are zero.

Specifically, we define the reduced renewable-oracle estimator and the full renewable-oracle estimator by
\begin{align}
&\widehat{\betab}_0^{{\rm ora}} = \argmin_{\betab_{\overline{\mathcal{M}}^c} = \mathbf{0}} \widehat{\mathcal{Q}}_B(\betab)
\quad \text{subject to} \quad \Cb\betab_{\mathcal{M}} = \tb, \label{def:constrained_oracle_estimator}\\
&\widehat{\betab}_a^{{\rm ora}} = \argmin_{\betab_{\overline{\mathcal{M}}^c} = \mathbf{0}} \widehat{\mathcal{Q}}_B(\betab), \label{def:unconstrained_oracle_estimator}
\end{align}
respectively. These oracle estimators are defined in terms of the corrected surrogate objective $\widehat{\mathcal{Q}}_B(\betab)$ rather than the full-sample loss. Moreover, they keep the coordinates in $\mathcal{M}$ regardless of whether the corresponding components of $\betab^*$ are zero, making them appropriate targets for inference under \eqref{hypothesis}.

In practice, however, the support set $\mathcal{A}$ is unknown, and the online estimator $\widehat{\betab}^{(B)}$ obtained from the $\ell_1$-regularized problem generally exhibits shrinkage bias, making it unsuitable for direct inference. 
To mitigate this bias and construct estimators that asymptotically mimic $\widehat{\betab}_0^{\rm ora}$ and $\widehat{\betab}_a^{\rm ora}$, we replace the $\ell_1$ penalty with a folded-concave penalty $p_\lambda(\cdot)$ and adopt a partial penalization scheme that leaves the coordinates in $\mathcal{M}$ unpenalized \citep{shi2019linear,jacobson2025strong}.

\begin{assumption}[Folded-concave penalties]\label{assm:penalty_function}
The penalty has the form $p_\lambda(t)=\lambda^2p(t/\lambda)$ for $t\ge0$, where
$p:[0,\infty)\to[0,\infty)$ is nondecreasing and concave with $p(0)=0$.  Its right derivative
$p'_+(t)$ exists for every $t\ge0$, is nonincreasing, and satisfies $p'_+(0)=1$.  In addition,
$p'_+(c_1)=0$ for some constant $c_1>0$.  We write $p'_{\lambda,+}(t)=\lambda p'_+(t/\lambda)$, $t\geq 0$.
\end{assumption}

Examples satisfying Assumption \ref{assm:penalty_function} include the smoothly clipped absolute deviation penalty \citep{fan2001variable} and the minimax concave penalty \citep{Zhang2010MCP}. The $\ell_1$ penalty satisfies all preceding shape conditions except the flat-tail condition.

Based on the final surrogate loss $\widehat {\mc Q}_B$, we define the reduced and full partially penalized estimators by
\#
\widehat{\betab}_0 \in \argmin_{\betab \in \RR^p} \left\{ \widehat{\mathcal{Q}}_B(\betab) + \sum_{j \notin \mathcal{M}} p_\lambda(|\beta_j|) \right\}
\quad \text{subject to} \quad \boldsymbol{C}\betab_{\mathcal{M}} = \tb,
\#
and
\#
\widehat{\betab}_a \in \argmin_{\betab \in \RR^p} \left\{ \widehat{\mathcal{Q}}_B(\betab) + \sum_{j \notin \mathcal{M}} p_\lambda(|\beta_j|) \right\},
\#
respectively. The coefficients in $\mathcal{M}$ are left unpenalized because they enter the null hypothesis directly, whereas the coordinates in $\mathcal{M}^c$ remain penalized. This partial penalization structure reduces bias in the parameters of interest without requiring a minimal-signal condition on $\betab^*_{\mathcal{M}}$, thereby helping preserve power against local alternatives \citep{shi2019linear,jacobson2025strong}.

Because $p_\lambda(\cdot)$ is nonconvex, the reduced and full optimization problems may admit multiple local minima. We therefore use local linear approximation (LLA) \citep{zou2008one,fan2014strong} to obtain computable estimators. Given an initial value $\tilde{\betab}$, we approximate the folded-concave penalty by
$\sum_{j \notin \mathcal{M}} p_\lambda(|\beta_j|) \approx \sum_{j \notin \mathcal{M}} \left\{ p_\lambda(|\tilde{\beta}_j|) + p'_{\lambda,+}(|\tilde{\beta}_j|)\big(|\beta_j| - |\tilde{\beta}_j|\big) \right\}$,
which yields a convex weighted $\ell_1$ surrogate. We initialize the procedure at $\widehat{\betab}^{(B)}$ and set $\widehat{\betab}_a^{({\rm LLA},0)}=\widehat{\betab}_0^{({\rm LLA},0)}=\widehat{\betab}^{(B)}$. For $k \geq 1$, define
\#
\widehat{\betab}_0^{({\rm LLA},k)} \in \argmin_{\betab \in \RR^p} \left\{ \widehat{\mathcal{Q}}_B(\betab) + \sum_{j \notin \mathcal{M}} \widehat{w}_{0,j}^{(k-1)} |\beta_j| \right\}
\quad \text{subject to} \quad \Cb \betab_{\mathcal{M}} = \tb, \label{def:LLA.0}
\#
and
\#
\widehat{\betab}_a^{({\rm LLA},k)} \in \argmin_{\betab \in \RR^p} \left\{ \widehat{\mathcal{Q}}_B(\betab) + \sum_{j \notin \mathcal{M}} \widehat{w}_{a,j}^{(k-1)} |\beta_j| \right\}, \label{def:LLA.a}
\#
where $\widehat{w}_{0,j}^{(k-1)} = p'_{\lambda,+}(|\widehat{\beta}_{0,j}^{({\rm LLA},k-1)}|)$ and $\widehat{w}_{a,j}^{(k-1)} = p'_{\lambda,+}(|\widehat{\beta}_{a,j}^{({\rm LLA},k-1)}|)$ for $j \notin \mathcal{M}$.

The LLA iterates replace the nonconvex folded-concave penalty with a sequence of convex weighted $\ell_1$ problems while preserving the partial-penalization structure required for testing. The resulting reduced and full estimators form the basis for the inferential procedures developed in Section~\ref{sec:theory}. We note that the constrained update in \eqref{def:LLA.0} is formulated in a renewable setting with linear restrictions, which differs substantially from the unconstrained renewable framework studied in \citet{luo2023online}.

The overall procedure proceeds in two stages. The first is a contraction stage, in which the initial estimator $\widehat{\betab}^{(B)}$ is explicitly computed in the renewable setting via the $\ell_1$-regularized corrected surrogate loss. 
The second is a tightening stage, which refines this initial estimator through the LLA iterates $\{\widehat{\betab}_0^{({\rm LLA},k)}, \widehat{\betab}_a^{({\rm LLA},k)}\}_{k \geq 1}$. 
As shown in Section~\ref{sec:theory}, the contraction stage yields a consistent initial estimator, while the tightening stage further improves the rate and recovers oracle-type behavior.

After computing the reduced- and full-model estimators, we construct partially penalized Wald and score test statistics for the hypothesis in \eqref{hypothesis}. Let $\Jb_0 = \nabla^2_{\overline{\mc M}} {\mc L}(\betab^*)$ and 
$\Kb_0
=
\EE\left\{
\nabla_{\overline{\mc M}}\ell(\betab^*;Y_i, X_i)
\nabla_{\overline{\mc M}}\ell(\betab^*;Y_i, X_i)^\T
\right\}$
denote, respectively, the oracle Hessian matrix and oracle score covariance matrix on the oracle index set $\overline{\mc M}$. 
% Similarly, let $\Jb_a = \nabla^2_{\mc M} {\mc L}(\betab^*)$ and 
% $\Kb_a
% =
% \EE\left\{
% \nabla_{\mc M}\ell(\betab^*;Y_i, X_i)
% \nabla_{\mc M}\ell(\betab^*;Y_i, X_i)^\T
% \right\}$
% denote the corresponding oracle quantities associated with the full model indexed by $\mc M$.
Define $\Vb_0 = \Cb_0 \Jb_0^{-1}\Kb_0 \Jb_0^{-1}\Cb_0^\T$, where $\Cb_0 = (\Cb, \mathbf{0}_{r \times s}) \in \RR^{r \times (m + s)}$.
Under the standard regularity conditions for maximum likelihood estimation, the second Bartlett identity implies that $\Jb_0 = \Kb_0$ and $\Jb_a = \Kb_a$ when $\ell(\cdot)$ is the correctly specified negative log-likelihood function \citep{mccullagh1989generalized}. Consequently, the sandwich covariance matrix reduces to $\Vb_0 = \Cb_0 \Jb_0^{-1}\Cb_0^\T$.

To construct test statistics for the hypothesis in \eqref{hypothesis}, define the second-moment summary for batch $b$ by
$\widehat{\Kb}^{(b)}
=(1/N_b)
\sum_{j=1}^b\sum_{i\in\mc D_j}
\nabla \ell(\widehat{\betab}^{(j)};Y_i^{(j)},X_i^{(j)})
\nabla \ell(\widehat{\betab}^{(j)};Y_i^{(j)},X_i^{(j)})^\T $.
Note that, like the surrogate Hessian summary matrix, this summary matrix can be updated online by $\widehat{\Kb}^{(b)}
=(N_{b-1}/N_b)\widehat{\Kb}^{(b-1)}
+(1/N_b)\sum_{i\in\mc D_b}
\nabla \ell(\widehat{\betab}^{(b)};Y_i^{(b)},X_i^{(b)})
\nabla \ell(\widehat{\betab}^{(b)};Y_i^{(b)},X_i^{(b)})^\T$, $b\geq 2$.
At the final batch $b = B$, for $e \in \{0,a\}$, let $\widehat{\mc S}^{(k)}_e = \operatorname{supp}\{\widehat{\betab}_{e,\mc M^c}^{({\rm LLA},k)}\}$, $\widehat{\overline{\mc M}}^{(k)}_e=\mc M\cup\widehat{\mc S}^{(k)}_e$.
Define $\widehat{\Cb}^{(k)}_e=(\Cb,\mathbf 0_{r\times |\widehat{\mc S}^{(k)}_e|})$ with columns ordered as $(\mc M,\widehat{\mc S}^{(k)}_e)$.
We denote $\widehat{\Jb}^{(k)}_e
=
\nabla^2_{\widehat{\overline{\mc M}}^{(k)}_e\widehat{\overline{\mc M}}^{(k)}_e}
\widehat{\mc Q}_B(\widehat{\betab}^{({\rm LLA},k)}_e)$, 
$\widehat{\Kb}^{(k)}_e
=
\widehat{\Kb}^{(B)}_{\widehat{\overline{\mc M}}^{(k)}_e\widehat{\overline{\mc M}}^{(k)}_e}$,
and let
$\widehat{\Vb}^{(k)}_e
=
\widehat{\Cb}^{(k)}_e
(\widehat{\Jb}^{(k)}_e)^{-1}
\widehat{\Kb}^{(k)}_e
(\widehat{\Jb}^{(k)}_e)^{-1}
(\widehat{\Cb}^{(k)}_e)^\T$,
for $e \in \{0,a\}$.
Then, the partially penalized Wald and score statistics are defined as
\begin{equation*}
\begin{split}
&T_W^{(k)}
=
N_B
(\Cb\widehat{\betab}^{({\rm LLA},k)}_{a,\mc M}-\tb)^\T
(\widehat{\Vb}^{(k)}_a)^{-1}
(\Cb\widehat{\betab}^{({\rm LLA},k)}_{a,\mc M}-\tb),\\
&T_S^{(k)}
=
N_B
\left\{
\nabla_{\widehat{\overline{\mc M}}^{(k)}_0}
\widehat{\mc Q}_B(\widehat{\betab}^{({\rm LLA},k)}_0)
\right\}^{\T}
\!(\widehat{\Jb}^{(k)}_0)^{-1}
\!(\widehat{\Cb}^{(k)}_0)^\T
\!(\widehat{\Vb}^{(k)}_0)^{-1}
\widehat{\Cb}^{(k)}_0
\!(\widehat{\Jb}^{(k)}_0)^{-1}
\left\{
\nabla_{\widehat{\overline{\mc M}}^{(k)}_0}
\widehat{\mc Q}_B(\widehat{\betab}^{({\rm LLA},k)}_0)
\right\}.
\end{split}
\end{equation*}
For comparison, we define the oracle Wald and score statistics, respectively, as
\#
T_W^{\rm ora} = N_B \bigl(\Cb_0 \widehat \betab_{a, {\overline{\mc M}}}^{\rm ora} - \tb \bigr)^\T \Vb_0^{-1} \bigl(\Cb_0 \widehat \betab_{a, \overline{\mc M}}^{\rm ora} - \tb \bigr),  \label{def:oracle.wald}
\#
and
\#
T_S^{\rm ora} = N_B \bigg\{\nabla_{\overline{\mc M}} \widehat{\mc Q}_B(\widehat{\betab}_0^{\rm ora})\bigg\}^\T \Jb_0^{-1}\Cb_0^\T \Vb_0^{-1}\Cb_0 \Jb_0^{-1}\bigg\{\nabla_{\overline{\mc M}} \widehat{\mc Q}_B(\widehat{\betab}_0^{\rm ora})\bigg\}. \label{def:oracle.score}
\#

Our theoretical results show that, under mild regularity conditions, the proposed partially penalized Wald and score statistics, $T_W^{(k)}$ and $T_S^{(k)}$, exhibit an oracle property: for all sufficiently large iteration indices $k$, their limiting distributions coincide with those of their oracle counterparts, $T_W^{\rm ora}$ and $T_S^{\rm ora}$, respectively. 
Consequently, either statistic may be used to test the linear hypothesis in \eqref{hypothesis}. 
Specifically, for a prescribed significance level $\alpha$, we reject $H_0$ whenever $T > \chi_\alpha^2(r)$, where $T$ denotes either $T_W^{(k)}$ or $T_S^{(k)}$, and $\chi_\alpha^2(r)$ is the upper $\alpha$ quantile of the $\chi^2$ distribution with $r$ degrees of freedom. These results are established rigorously in Section~\ref{sec:inference}.

\begin{remark}
We note that LLA may also be applied at intermediate batches, as examined in our simulations. Under the corresponding conditions of Theorem~\ref{thm:batch_b_tightening}, such tightening can reduce shrinkage bias and improve estimation rates, at additional computational cost. 
Our simulations show broadly similar inferential performance with and without intermediate LLA in the correctly specified likelihood examples (GLMs and Tobit), where \(\Jb_0=\Kb_0\), whereas batchwise LLA improves calibration in the composite-likelihood example (nonparametric Box--Cox), where \(\Jb_0\neq \Kb_0\). 
We therefore favor terminal-only LLA for computational considerations in the former settings and batchwise LLA in the latter when computation permits.

Under either scheme, all \(p\) coordinates remain eligible at subsequent batches; intermediate support estimates are not used to restrict later optimization. As \(N_b\) increases and the regularization level decreases, the beta-min requirement in Theorem~\ref{thm:oracle_recovery} becomes less restrictive, while Theorems~\ref{thm:oracle.test} and \ref{thm:test} accommodate growing numbers of tested coefficients and jointly tested linear restrictions under their stated conditions. Accordingly, the support used for inference is determined at the terminal batch.
\end{remark}

\section{Statistical analysis}\label{sec:theory}

\subsection{Finite-sample theory}\label{sec:finite_sample_theory}

For $R,l>0$, define $\BB(R)=\{u\in\RR^p:\|u\|_2\le R\}$, and 
$\CC(l)=\{u\in\RR^p:\|u\|_1\le l\|u\|_2\}$.
For matrices, $\|\cdot\|_2$ and $\|\cdot\|_{\max}$ denote the operator norm and the largest absolute entry, respectively.
Our theoretical analysis relies crucially on the following ``good'' events, which encode the local restricted strong convexity (RSC) and local restricted Lipschitz Hessian (RLH) properties of the empirical loss function for batch $b$, denoted by $\widehat{\mc L}^{(b)}$. We refer readers to \citet{negahban2012unified} and \citet{loh2015regularized} for detailed discussions of local RSC.
The local RLH condition, in turn, is required to control the higher-order Taylor remainder, thereby quantifying the discrepancy between the proposed surrogate loss and the corresponding offline loss based on all observations.

\begin{definition}[Local restricted strong convexity]\label{def:local_rsc}
For $R,l,\kappa>0$, let ${\mc G}_{\rm rsc}^{(b)}(R,l,\kappa)$ be the event on which
$\left\langle
\nabla\widehat{\mc L}^{(b)}(\betab_1)-
\nabla\widehat{\mc L}^{(b)}(\betab_2),
\betab_1-\betab_2
\right\rangle
\ge
\kappa\|\betab_1-\betab_2\|_2^2$,
for every pair $(\betab_1,\betab_2)$ satisfying $\betab_1-\betab_2\in\BB(R)\cap\CC(l)$, and $\betab_2-\betab^*\in\BB(R)\cap\CC(l)$.
\end{definition}

\begin{definition}[Local restricted Lipschitz Hessian]\label{def:local_lipschitz}
For $R,l>0$ and $M\ge0$, let ${\mc E}_{\rm rlh}^{(b)}(R,l,M)$ be the event on which
$\left\|
\left\{
\nabla^2\widehat{\mc L}^{(b)}(\betab^*+\vb)-
\nabla^2\widehat{\mc L}^{(b)}(\betab^*)
\right\}\wb
\right\|_\infty
\le
M\|\vb\|_2\|\wb\|_2$,
for every $\vb,\wb\in\BB(R)\cap\CC(l)$.
\end{definition}

For ease of presentation, for any $R,\kappa,l>0$ and $M\geq 0$, define the batchwise regularity event
\begin{equation*}
{\mc R}_b(R,l,\kappa,M)
=
\left\{
\bigcap_{j=1}^{b}
{\mc G}_{\rm rsc}^{(j)}(R,l,\kappa)
\right\}
\cap
\left\{
\bigcap_{j=1}^{b-1}
{\mc E}_{\rm rlh}^{(j)}(R,l,M)
\right\},
\qquad 1\leq b\leq B.
\end{equation*}
For $1\le b\le B$, define $\omegab^{(b)} = (1/N_b)\sum_{j=1}^b n_j \nabla\widehat{\mc L}^{(j)}(\betab^*)$ and $\phi^{(b)} = (1/N_b)\sum_{j=1}^{b-1}n_j \|\widehat{\betab}^{(j)}-\betab^*\|_2^2$.
Thus, $\omegab^{(b)}$ is the pooled empirical score at the truth, and $\phi^{(b)}$ is the accumulated squared error from the historical batches.

\begin{theorem}[Initializer bound]\label{thm:batch_b_init}
Fix $b\in\{1,\ldots,B\}$ and let $a=|\mc A|$.  Suppose $R_0>1.5 \kappa^{-1} a^{1/2}\lambda_b$,
and, for every $j<b$,
\begin{align}\label{event: historical localization}
\widehat{\betab}^{(j)}-\betab^* \in \BB(R_0)\cap\CC(4a^{1/2}).
\end{align}
On the event
\begin{equation}
\begin{split}
&\left\{
\lambda_b\ge
2\left(\|\omegab^{(b)}\|_\infty+\frac32M\phi^{(b)}\right)
\right\}
\cap
{\mc R}_b(R_0, 4a^{1/2}, \kappa, M),
\end{split}
\label{batch_b_event}
\end{equation}
any minimizer in \eqref{def:bth.estimator} obeys
\[
\|\widehat{\betab}^{(b)}-\betab^*\|_2 \le \frac{3}{2\kappa}a^{1/2}\lambda_b, ~~\mbox{ and }~~ \|\widehat{\betab}^{(b)}-\betab^*\|_1 \le \frac6\kappa a\lambda_b.
\]
%Moreover, $\widehat{\betab}^{(b)}-\betab^*\in\CC_{\mc A}(3) \subset\CC(4a^{1/2})$.
\end{theorem}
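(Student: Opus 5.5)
Our plan follows the standard Lasso deterministic argument of \citet{negahban2012unified}. The only new ingredient is a bound on the gradient of the corrected surrogate at the truth, $\nabla\widehat{\mc Q}_b(\betab^*)$, in terms of $\omegab^{(b)}$ and $\phi^{(b)}$.

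\emph{Step 1 (surrogate gradient at the truth).} Unrolling the recursion \eqref{def:q.function}, we show by induction that
\[
N_b\nabla\widehat{\mc Q}_b(\betab)=\sum_{j=1}^b n_j\nabla\widehat{\mc L}^{(j)}(\betab)-\sum_{j=1}^{b-1}n_j\Bigl\{\nabla\widehat{\mc L}^{(j)}(\betab)-\nabla\widehat{\mc L}^{(j)}(\widehat{\betab}^{(j)})-\nabla^2\widehat{\mc L}^{(j)}(\widehat{\betab}^{(j)})(\betab-\widehat{\betab}^{(j)})\Bigr\}.
\]
The key point is that the linear correction term exactly cancels the gradients of the earlier quadratic pieces evaluated at their centers. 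Consequently, only the Taylor remainders of the historical batch losses survive, each expanded around $\widehat{\betab}^{(j)}$. This identity is what Remark~\ref{remark:ls no info lost} reflects in the least-squares case.

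We evaluate this identity at $\betab=\betab^*$. Writing $\vb_j=\widehat{\betab}^{(j)}-\betab^*$, each remainder equals
\[
\int_0^1\bigl\{\nabla^2\widehat{\mc L}^{(j)}(\betab^*+t\vb_j)-\nabla^2\widehat{\mc L}^{(j)}(\betab^*+\vb_j)\bigr\}(-\vb_j)\,dt .
\]
Adding and subtracting $\nabla^2\widehat{\mc L}^{(j)}(\betab^*)$ and invoking the RLH event ${\mc E}^{(j)}_{\rm rlh}(R_0,4a^{1/2},M)$ requires $t\vb_j,\vb_j\in\BB(R_0)\cap\CC(4a^{1/2})$. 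This holds by \eqref{event: historical localization}, since the cone is scale invariant. The resulting sup-norm bound is $M(t+1)\|\vb_j\|_2^2$, and integrating over $t$ gives $\tfrac32M\|\vb_j\|_2^2$. Hence
\[
\|\nabla\widehat{\mc Q}_b(\betab^*)\|_\infty\le\|\omegab^{(b)}\|_\infty+\tfrac32M\phi^{(b)}\le\lambda_b/2
\]
on the event \eqref{batch_b_event}. The bookkeeping in the induction, in particular confirming that the correction uses $\nabla\widehat{\mc Q}_{b-1}$ rather than $\nabla\widehat{\mc L}^{(b-1)}$ and that the telescoping works, is the step we expect to need the most care.

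\emph{Step 2 (cone condition).} Let $\Deltab=\widehat{\betab}^{(b)}-\betab^*$. Convexity of $\widehat{\mc Q}_b$ holds because the quadratic part has a positive semidefinite $\widehat{\Jb}^{(b-1)}$ and the batch losses are convex. Optimality then gives
\[
0\ge\widehat{\mc Q}_b(\widehat{\betab}^{(b)})-\widehat{\mc Q}_b(\betab^*)+\lambda_b(\|\widehat{\betab}^{(b)}\|_1-\|\betab^*\|_1)\ge-\tfrac{\lambda_b}{2}\|\Deltab\|_1+\lambda_b(\|\Deltab_{\mc A^c}\|_1-\|\Deltab_{\mc A}\|_1).
\]
This yields $\|\Deltab_{\mc A^c}\|_1\le3\|\Deltab_{\mc A}\|_1$, and therefore $\|\Deltab\|_1\le4a^{1/2}\|\Deltab\|_2$, so $\Deltab\in\CC(4a^{1/2})$.

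\emph{Step 3 (localization and strong convexity).} Since $\Deltab$ is not known a priori to lie in $\BB(R_0)$, we use the standard convex-interpolation trick. Suppose $\|\Deltab\|_2>R_0$ and set $\tilde\betab=\betab^*+t\Deltab$ with $\|t\Deltab\|_2=R_0$. Convexity of the penalized objective implies that $\tilde\betab$ still satisfies the basic inequality, so it lies in the same cone, and it lies in $\BB(R_0)$.

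For such points, $N_b\langle\nabla\widehat{\mc Q}_b(\tilde\betab)-\nabla\widehat{\mc Q}_b(\betab^*),\tilde\betab-\betab^*\rangle$ splits into two parts:
\begin{itemize}
\item the batch-$b$ part, bounded below by $n_b\kappa\|\cdot\|_2^2$ via ${\mc G}^{(b)}_{\rm rsc}$ with $\betab_2=\betab^*$;
\item the quadratic part, with matrices $\nabla^2\widehat{\mc L}^{(j)}(\widehat{\betab}^{(j)})$. By the historical RSC events ${\mc G}^{(j)}_{\rm rsc}$ applied with $\betab_2=\widehat{\betab}^{(j)}$ (allowed by \eqref{event: historical localization}) and a limiting argument, these satisfy $u^\T\nabla^2\widehat{\mc L}^{(j)}(\widehat{\betab}^{(j)})u\ge\kappa\|u\|_2^2$ on the cone.
\end{itemize}
Together these give the symmetrized RSC bound $\langle\nabla\widehat{\mc Q}_b(\tilde\betab)-\nabla\widehat{\mc Q}_b(\betab^*),\tilde\Deltab\rangle\ge\kappa\|\tilde\Deltab\|_2^2$.

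Combining this with the first-order optimality condition through the subgradient of the $\ell_1$ penalty, we obtain
\[
\kappa\|\tilde\Deltab\|_2^2\le\tfrac{\lambda_b}{2}\|\tilde\Deltab\|_1+\lambda_b\|\tilde\Deltab_{\mc A}\|_1\le\tfrac32\lambda_b a^{1/2}\|\tilde\Deltab\|_2 .
\]
Here we use $\|\tilde\Deltab_{\mc A^c}\|_1-\|\tilde\Deltab_{\mc A}\|_1$ bookkeeping, which reduces to $\tfrac32\lambda_b\|\tilde\Deltab_{\mc A}\|_1$. The display gives $\|\tilde\Deltab\|_2\le\tfrac{3}{2\kappa}a^{1/2}\lambda_b<R_0$, contradicting $\|\tilde\Deltab\|_2=R_0$. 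Therefore $\Deltab\in\BB(R_0)$, the same inequality applies to $\Deltab$ itself, and the $\ell_2$ bound follows.

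Finally, the $\ell_1$ bound follows from $\|\Deltab\|_1\le4\|\Deltab_{\mc A}\|_1\le4a^{1/2}\|\Deltab\|_2$, which gives $\|\Deltab\|_1\le\tfrac{6}{\kappa}a\lambda_b$.
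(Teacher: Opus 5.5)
Your Steps 1 and 2 are sound. Step 1 is exactly the paper's gradient decomposition, with the $\frac32M\|\widehat{\betab}^{(j)}-\betab^*\|_2^2$ remainder bound. Deriving the cone from the function-value basic inequality, instead of the paper's KKT argument, is a valid alternative.

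The gap is the localization in Step 3. Let $\Delta=\widehat{\betab}^{(b)}-\betab^*$. For $t<1$ the point $\widetilde\betab=\betab^*+t\Delta$ is not a minimizer, so there is no first-order optimality condition at $\widetilde\betab$. Nothing there bounds $\langle\nabla\widehat{\mc Q}_b(\widetilde\betab)-\nabla\widehat{\mc Q}_b(\betab^*),t\Delta\rangle$ by $\frac{\lambda_b}{2}\|t\Delta\|_1+\lambda_b t(\|\Delta_{\mc A}\|_1-\|\Delta_{\mc A^c}\|_1)$.

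What you actually established at $\widetilde\betab$ is the function-value basic inequality. Pairing it with the gradient-form RSC means integrating along the segment, which gives only
\[
\widehat{\mc Q}_b(\widetilde\betab)-\widehat{\mc Q}_b(\betab^*)-\langle\nabla\widehat{\mc Q}_b(\betab^*),t\Delta\rangle\ge\frac{\kappa}{2}\|t\Delta\|_2^2 ,
\]
hence $\|t\Delta\|_2\le3\kappa^{-1}a^{1/2}\lambda_b$. This contradicts $\|t\Delta\|_2=R_0$ only when $R_0>3\kappa^{-1}a^{1/2}\lambda_b$. The theorem assumes only $R_0>1.5\kappa^{-1}a^{1/2}\lambda_b$, so in the intermediate range you never get $\Delta\in\BB(R_0)$. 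Without that, the final RSC step at $\widehat{\betab}^{(b)}$ is not licensed.

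The missing idea is monotonicity of a convex function's gradient along a ray; this is Lemma F.2 of \citet{fan2018lamm}, which the paper uses. Since $s\mapsto\langle\nabla\widehat{\mc Q}_b(\betab^*+s\Delta),\Delta\rangle$ is nondecreasing,
\[
\langle\nabla\widehat{\mc Q}_b(\betab^*+t\Delta)-\nabla\widehat{\mc Q}_b(\betab^*),t\Delta\rangle\le t\,\langle\nabla\widehat{\mc Q}_b(\widehat{\betab}^{(b)})-\nabla\widehat{\mc Q}_b(\betab^*),\Delta\rangle .
\]
The KKT condition at $\widehat{\betab}^{(b)}$ enters only on the right-hand side. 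It gives $t\{\frac32\lambda_b\|\Delta_{\mc A}\|_1-\frac12\lambda_b\|\Delta_{\mc A^c}\|_1\}\le\frac32\lambda_ba^{1/2}\|t\Delta\|_2$. Combined with your inherited-RSC lower bound, this yields $\kappa\|t\Delta\|_2^2\le\frac32\lambda_ba^{1/2}\|t\Delta\|_2$. That gives the contradiction, and then the stated bounds with constant $\frac{3}{2\kappa}$.

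Two small points. First, $t\Delta$ lies in the cone trivially, being a multiple of $\Delta$. Second, your displayed right-hand side must keep the $-\lambda_b t\|\Delta_{\mc A^c}\|_1$ term; otherwise it only gives $3\lambda_ba^{1/2}\|t\Delta\|_2$.
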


Theorem~\ref{thm:batch_b_init} establishes deterministic bounds conditional on the event in \eqref{batch_b_event}. 
Its key feature is that, because the corrected surrogate objective in \eqref{def:q.function} is used, the historical approximation error enters through $\phi^{(b)}$, which is quadratic in the estimation errors from the preceding batches.
To obtain an explicit bound from Theorem~\ref{thm:batch_b_init}, we impose the following sub-Gaussian condition on the score vector.

\begin{assumption}[Sub-Gaussian score]\label{assm:subgaussian}
The score $\nabla\ell(\betab^*;Y,X)$ is centered and $\sigma_0$-sub-Gaussian: $\EE\exp\{v^\T\nabla\ell(\betab^*;Y,X)\}
\le
\exp(\sigma_0^2\|v\|_2^2/2)$, $v\in\RR^p$.
\end{assumption}

\begin{corollary}\label{cor:batch_b_init}
Suppose Assumption~\ref{assm:subgaussian} holds and let $\lambda_b=C\sqrt{\log(p)/N_b}$, where $C>8\sigma_0$.  
Choose a common radius $ R_0 > \max_{1\le b\le B} (1.5\kappa^{-1}|\mc A|^{1/2}\lambda_b)$, and assume, for every $b=2,\ldots,B$, that
\begin{equation}
\sqrt{\frac{N_b}{|\mc A|^2\log p}}
\ge
\frac{27CM}{2\kappa^2}
\left\{1+\log\left(\frac{N_{b-1}}{n_1}\right)\right\}.
\label{sample.size.requirement}
\end{equation}
Then there is an event $\mc S_B$ satisfying $\PP(\mc S_B)\ge1-2B/p$ such that, on
$\mc R_B(R_0, 4|\mc A|^{1/2}, \kappa, M) \cap\mc S_B$, simultaneously for $b=1,\ldots,B$,
\[
\|\widehat{\betab}^{(b)}-\betab^*\|_2
\le
\frac{3C}{2\kappa}
\sqrt{\frac{|\mc A|\log p}{N_b}},
\qquad
\|\widehat{\betab}^{(b)}-\betab^*\|_1
\le
\frac{6C}{\kappa}|\mc A|
\sqrt{\frac{\log p}{N_b}}.
\]
% Consequently,
% \[
% \PP\!\left(\mc R_B\cap
% \left\{\text{all preceding bounds hold}\right\}\right)
% \ge
% \PP(\mc R_B)-\frac{2B}{p}.
% \]
% In particular, if $\mc R_B$ is deterministic under a fixed design, or holds almost surely,
% the simultaneous bounds have probability at least $1-2B/p$.
\end{corollary}

A sufficient condition for the scaling requirement in \eqref{sample.size.requirement} is
$N_b\gtrsim |\mc A|^2\log p$, up to a squared logarithmic factor in the number of batches. In particular, there exists a constant $C_\star$, depending only on $(C,M,\kappa)$, such that, if
$n_1 \geq C_\star |\mc A|^2\log p$, then \eqref{sample.size.requirement} holds for every $b\geq 2$. 
By contrast, ensuring that the event ${\mc R}_B$ holds with high probability generally requires
$n_b\gtrsim |\mc A|\log^2 p$; see Appendix~\suppref{sec:examples} for concrete examples.

% More precisely,
% set
% \[
% \alpha
% =
% \frac{729C^2M^2}{4\kappa^4}|\mc A|^2\log p.
% \]
% If $C_\star>1$ is the larger solution of
% $c=(1+\log c)^2$ and $n_1\ge C_\star\alpha$, then
% \eqref{sample.size.requirement} holds for every $b\ge2$.  Thus the sufficient initial-batch
% order is $|\mc A|^2\log p$, rather than $|\mc A|\sqrt{\log p}$.

Having established the convergence rates of $\{\widehat \betab^{(b)}\}_{b = 1}^B$, we next analyze the iteratively reweighted renewable estimators $\widehat \betab_e^{({\rm LLA}, k)}$, for $e \in \{0,a\}$.
Because our ultimate goal is to use these estimators to test linear hypotheses, we develop the convergence rate in a local-alternative framework.
For notational simplicity, we write $\omegab^* = \omegab^{(B)}$, $\phi^* = \phi^{(B)}$, $\lambda=\lambda_B$, and $q=m+s=|\overline{\mc M}|$.
All beta-min conditions below are vacuous when $\mc S=\varnothing$.

\begin{assumption}[Local alternatives]
\label{assm:local_alternative}
The true parameter $\betab^*$ satisfies $\Cb \betab^*_{\mc M} - \tb = \hb_B$ for some $\hb_B \in \RR^r$, and $\lambda_{\max}\{(\Cb\Cb^\T)^{-1}\}\le C_{\max}^2$.
The null corresponds to $\hb_B=0$. 
\end{assumption}

Assumption~\ref{assm:local_alternative} characterizes the local alternative through $\hb_B = \Cb \betab^*_{\mc M} - \tb$. In particular, $\hb_B = \mathbf 0$ corresponds to the null hypothesis. The assumption also imposes regularity conditions on the constraint matrix $\Cb$. Representative examples include $\Cb = (\Ib_{r \times r}, \Ob_{r \times (m-r)})$, which corresponds to testing $r$ individual components, and the first-order difference matrix, whose $(i,i)$th entry is $1$, whose $(i,i+1)$th entry is $-1$, and whose remaining entries are $0$, which corresponds to testing successive differences of components.

Let $\Zb\in\RR^{m\times(m-r)}$ have orthonormal columns spanning $\ker(\Cb)$.
Define $\betab^\dagger_{\mc M}
=
\betab^*_{\mc M}-\Cb^\T(\Cb\Cb^\T)^{-1}\hb_B$, $\betab^\dagger_{\mc M^c}=\betab^*_{\mc M^c}$,
so that $\Cb\betab^\dagger_{\mc M}=\tb$, and $\|\betab^\dagger-\betab^*\|_2 \leq C_{\max}\|\hb_B\|_2$.
Denote $d_B=C_{\max}\|\hb_B\|_2$, and $\Jb=\nabla^2\mc L(\betab^*)$.
% Then
% \[
% \|\betab^\dagger-\betab^*\|_2
% =
% \{\hb_B^\T(\Cb\Cb^\T)^{-1}\hb_B\}^{1/2}
% \le d_B.
% \]

\begin{definition}[Hessian concentration]\label{def:hessian_concentration}
For $\eta_\infty>0$, define the event
\[
{\mc H}(\eta_\infty)
=
\left\{
\left\|
\frac1{N_B}\sum_{b=1}^B n_b
\nabla^2\widehat{\mc L}^{(b)}(\betab^*)-\Jb
\right\|_{\max}
\le\eta_\infty
\right\}.
\]
\end{definition}

\begin{assumption}[Population Hessian regularity]\label{assm:hessian}
There are constants $0<J_l\le J_u<\infty$ such that
$J_l\le\lambda_{\min}(\Jb)\le\lambda_{\max}(\Jb)\le J_u$.
\end{assumption}

For ease of presentation, define the local-alternative envelope by
$\Gamma_B := \Gamma_B(M,\eta_\infty)
=
2MC_{\max}^2\|\hb_B\|_2^2
+
C_{\max}(J_u+\sqrt m\,\eta_\infty)\|\hb_B\|_2$.
Under the null hypothesis, $\hb_B=\mathbf{0}$, and hence $\Gamma_B=0$.

\begin{theorem}[Tightening-stage contraction]\label{thm:batch_b_tightening}
Suppose Assumption~\ref{assm:penalty_function} holds.  Let $\gamma_0>0$ satisfy $p_0:=p'_+(\gamma_0)>0$, $\kappa\gamma_0>\sqrt{5}/2$, and $\delta=\sqrt{5}/{(2\kappa\gamma_0)}<1$.
Let $c_{\rm aug}>0$ be the unique positive constant  satisfying
\#
2+p_0\sqrt{c_{\rm aug}^2+1} = c_{\rm aug}\kappa\gamma_0,
\label{def:c.aug}
\#
and set $l = (2+2/p_0)\sqrt{(m+s)(c_{\rm aug}^2 + 1)}$.
Assume that $R_0>c_{\rm aug}\gamma_0(m+s)^{1/2}\lambda$ and that the localization conditions, $\widehat{\betab}^{(j)}-\betab^*
\in
\BB(R_0)\cap\CC(4|\mc A|^{1/2})$, hold for $ j=1,\ldots,B-1$.

\noindent\textup{(i) Full model.}
On the event
\#
\left\{
\lambda\ge\frac2{p_0}
\left(\|\omegab^*\|_\infty+2M\phi^*\right)
\right\}
\cap
{\mc R}_B(R_0, l, \kappa, M),
\label{event:theorem2-full}
\#
for every $k\ge1$,
\#
\|\widehat{\betab}^{({\rm LLA},k)}_a-\betab^*\|_2
\le
\delta
\|\widehat{\betab}^{({\rm LLA},k-1)}_a-\betab^*\|_2
+\kappa^{-1}\big\|p'_{\lambda,+}
\bigl((|\betab^*_{\mc S}|-\gamma_0\lambda)_+\bigr)\big\|_2 + r_a^{\rm ora},
\label{eq:theorem2-full-contraction}
\#
where $r_a^{\rm ora}
:=
\kappa^{-1}
\left\{
\|\omegab^*_{\overline{\mc M}}\|_2
+2M \sqrt{s + m}\phi^*
\right\}$.

\noindent\textup{(ii) Reduced model.}
Suppose, in addition, that Assumptions~\ref{assm:local_alternative} and \ref{assm:hessian} hold and $R_0 > C_{\max}\|\hb_B\|_2$.
Moreover, in addition to the event in \eqref{event:theorem2-full}, assume the following event:
\begin{equation}
\left\{
\lambda\ge\frac2{p_0}
\left(
\|\omegab^*\|_\infty+2M\phi^*+\Gamma_B
\right) + C_{\max}\kappa \|\hb_B\|_2
\right\}
\cap
{\mc H}(\eta_\infty) \cap
{\mc E}_{\rm rlh}^{(B)}(R_0,l,M).
\label{event:theorem2-reduced}
\end{equation}
Then, for every $k\ge1$,
\$
\|\widehat{\betab}^{({\rm LLA},k)}_0-\betab^*\|_2
\le
\delta
\|\widehat{\betab}^{({\rm LLA},k-1)}_0-\betab^*\|_2
+\kappa^{-1} \big\|p'_{\lambda,+} \bigl((|\betab^*_{\mc S}|-\gamma_0\lambda)_+\bigr)\big\|_2 + r_{0}^{\rm ora},
\$
where $r_0^{\rm ora} := \kappa^{-1} \big[\|\Zb^\T\omegab^*_{\mc M}\|_2+\|\omegab^*_{\mc S}\|_2  +2\sqrt{s + m}\{2M\phi^*+\Gamma_B\}
\big] +C_{\max}\|\hb_B\|_2$.
\end{theorem}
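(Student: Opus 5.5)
The plan is to follow the I-LAMM/LLA contraction template of \citet{fan2018lamm} and \citet{jacobson2025strong}. We treat one weighted $\ell_1$ step at a time, with $\widetilde{\betab}=\widehat{\betab}^{({\rm LLA},k-1)}_e$ fixed, and we never assume the previous iterate is small. The contraction factor $\delta$ will come from counting the coordinates of $\mc S$ whose weights are not yet switched off, or the coordinates of $\mc S^c\cap\mc M^c$ whose weights are too small.

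\textbf{Step 1: gradient of $\widehat{\mc Q}_B$ at $\betab^*$.} Unrolling \eqref{def:q.function} gives
\[
\nabla\widehat{\mc Q}_B(\betab)=\frac1{N_B}\sum_{b=1}^B n_b\nabla\widehat{\mc L}^{(b)}(\betab)+\text{(remainder)}.
\]
The remainder collects, for each $j<B$, the term $n_j\{\nabla\widehat{\mc L}^{(j)}(\widehat\betab^{(j)})+\nabla^2\widehat{\mc L}^{(j)}(\widehat\betab^{(j)})(\betab-\widehat\betab^{(j)})-\nabla\widehat{\mc L}^{(j)}(\betab)\}$. The first-order correction is exactly what makes this telescoping work.

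We then use the RLH events ${\mc E}^{(j)}_{\rm rlh}$ together with the localization $\widehat\betab^{(j)}-\betab^*\in\BB(R_0)\cap\CC(4|\mc A|^{1/2})\subset\CC(l)$. A second-order Taylor bound gives
\[
\|\nabla\widehat{\mc Q}_B(\betab^*)-\omegab^*\|_\infty\le 2M\phi^*.
\]
For $\betab$ near $\betab^*$ there is also an additional term $M\|\betab-\betab^*\|_2\cdot(\cdot)$, which is absorbed by RSC. This explains the factor $2M\phi^*$ in \eqref{event:theorem2-full}, and the $\sqrt{s+m}$ factor when we pass to $\ell_2$ on $\overline{\mc M}$. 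Likewise, RSC of each $\widehat{\mc L}^{(j)}$ (the event $\mc R_B$) transfers to $\widehat{\mc Q}_B$ in the form
\[
\langle\nabla\widehat{\mc Q}_B(\betab_1)-\nabla\widehat{\mc Q}_B(\betab_2),\betab_1-\betab_2\rangle\ge\kappa\|\betab_1-\betab_2\|_2^2
\]
on the cone. The quadratic parts have Hessians $\nabla^2\widehat{\mc L}^{(j)}(\widehat\betab^{(j)})$, which are PSD and, by RSC, $\ge\kappa$ on the cone.

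\textbf{Step 2: full model.} We partition $\mc M^c$ into three sets:
\[
E=\mc S\cup\{j\in\mc M^c\setminus\mc S:|\widetilde\beta_j|\ge\gamma_0\lambda\},
\]
the complement $E^c\cap\mc M^c$, and the unpenalized set $\mc M$. Off $E$ the weights satisfy $\widehat w_j\ge p'_{\lambda,+}(\gamma_0\lambda)=p_0\lambda$.

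First we show $|E\setminus\mc S|\le c_{\rm aug}^2(m+s)$ by induction, using $|\widetilde\beta_j-\beta^*_j|\ge\gamma_0\lambda$ on those coordinates together with the previous-step bound. The defining equation \eqref{def:c.aug} is exactly the fixed point that makes this induction close.

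Next, set $\widehat{\boldsymbol\Delta}=\widehat\betab^{({\rm LLA},k)}_a-\betab^*$. The KKT conditions combined with $\lambda p_0/2\ge\|\omegab^*\|_\infty+2M\phi^*$ give a cone condition
\[
\|\widehat{\boldsymbol\Delta}_{E^c\cap\mc M^c}\|_1\le (1+2/p_0)\|\widehat{\boldsymbol\Delta}_{E\cup\mc M}\|_1,
\]
which places $\widehat{\boldsymbol\Delta}$ in $\CC(l)$. Localization into $\BB(R_0)$ follows from the standard convexity/rescaling argument.

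RSC then yields
\[
\kappa\|\widehat{\boldsymbol\Delta}\|_2\le\|(\nabla\widehat{\mc Q}_B(\betab^*))_{\overline{\mc M}}\|_2+\|\widehat w_{\mc S}\|_2+(\text{terms on }E\setminus\mc S).
\]
For $j\in\mc S$, either $|\widetilde\beta_j-\beta^*_j|\ge\gamma_0\lambda$, in which case $\widehat w_j\le\lambda\le|\widetilde\beta_j-\beta^*_j|/\gamma_0$, or else $\widehat w_j\le p'_{\lambda,+}((|\beta^*_j|-\gamma_0\lambda)_+)$. The same bound $|\widetilde\beta_j-\beta^*_j|/\gamma_0$ applies to the contributions from $E\setminus\mc S$. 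Summing in $\ell_2$ and using an elementary Cauchy--Schwarz step with constants $1$ and $1/2$ produces $\sqrt5/(2\kappa\gamma_0)\|\widetilde\betab-\betab^*\|_2=\delta\|\widetilde\betab-\betab^*\|_2$, plus the beta-min term, plus $r_a^{\rm ora}$.

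\textbf{Step 3: reduced model.} We compare against $\betab^\dagger$, which is feasible. We write $\betab_{\mc M}-\betab^\dagger_{\mc M}=\Zb\mathbf u$, so that directional derivatives in $\mc M$ are only along $\ker\Cb$; this is why $\Zb^\T\omegab^*_{\mc M}$ appears.

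The gradient at $\betab^\dagger$ differs from that at $\betab^*$ by $\Jb(\betab^\dagger-\betab^*)$ plus Hessian-concentration and RLH errors. Using ${\mc H}(\eta_\infty)$, ${\mc E}^{(B)}_{\rm rlh}$ and $\|\betab^\dagger-\betab^*\|_2\le d_B$, this discrepancy is bounded by $\Gamma_B$ coordinatewise. That accounts for both the enlarged $\lambda$ condition and the term $2\sqrt{s+m}\Gamma_B$.

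We then rerun Step 2 for $\widehat\betab_0^{({\rm LLA},k)}-\betab^\dagger$ and use the triangle inequality, $\|\betab^\dagger-\betab^*\|_2\le C_{\max}\|\hb_B\|_2$, for both the input and output iterates. The extra $C_{\max}\kappa\|\hb_B\|_2$ in $\lambda$ keeps the weight thresholds valid after the shift. Localization uses $R_0>C_{\max}\|\hb_B\|_2$.

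\textbf{Main obstacle.} The delicate part is the bookkeeping on $E\setminus\mc S$: bounding its cardinality uniformly in $k$ so that $\widehat{\boldsymbol\Delta}$ stays in a fixed cone $\CC(l)$ with $l$ independent of $k$, while simultaneously keeping track of the RLH-induced gradient remainder. If the induction does not close directly, my first fallback is to bound $|E\setminus\mc S|$ using $\|\widetilde\betab-\betab^*\|_2\le c_{\rm aug}\gamma_0\sqrt{m+s}\,\lambda$, which is itself proved inductively together with the contraction and started from Corollary~\ref{cor:batch_b_init}.
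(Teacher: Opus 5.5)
Your proposal is essentially the paper's proof. A crude induction, closed by the defining equation of $c_{\rm aug}$, keeps $|E\cup\mc M|\le(c_{\rm aug}^2+1)(s+m)$ and every iterate in $\BB(R_0)\cap\CC(l)$. Then the weighted-$\ell_1$ bound centred at $\betab^*$ (resp.\ $\betab^\dagger$), the split $\widehat w_j\le p'_{\lambda,+}((|\beta_j^*|-\gamma_0\lambda)_+)+\gamma_0^{-1}|\widetilde\beta_j-\beta_j^*|$ on $\mc S$, and $|\nabla_j\widehat{\mc Q}_B(\betab^*)|\le \{p_0/(2\gamma_0)\}|\widetilde\beta_j|$ on $E\setminus\mc S$ give the factor $(\kappa\gamma_0)^{-1}(1+p_0^2/4)^{1/2}\le\delta$.

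A few small fixes are needed:
\begin{itemize}
\item Start the induction from Theorem~\ref{thm:batch_b_init}, not Corollary~\ref{cor:batch_b_init}. The theorem's hypotheses follow from the event because $p_0\le1$ and $l\ge4|\mc A|^{1/2}$, whereas the corollary's sub-Gaussian assumption is not part of this statement.
\item Prove the crude radius directly from $\|\widehat w_{\mc S}\|_2\le\sqrt{s}\,\lambda$, not through the contraction, which need not close when $\delta$ is near $1$.
\item In the reduced model, the previous iterate enters only through coordinates in $\mc M^c$, where $\betab^\dagger=\betab^*$. Apply the triangle inequality to the output only; applying it to the input as well adds an extra $\delta C_{\max}\|\hb_B\|_2$ not present in the stated bound.
\item The extra $C_{\max}\kappa\|\hb_B\|_2$ in the $\lambda$ condition does not protect the weight thresholds. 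Its role is to preserve the crude radius $c_{\rm aug}\gamma_0\sqrt{m+s}\,\lambda$ after adding $C_{\max}\|\hb_B\|_2\le\lambda/\kappa$, using the slack from the ``$2$'' in the equation defining $c_{\rm aug}$.
\end{itemize}
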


Theorem~\ref{thm:batch_b_tightening} shows that, for both the constrained and unconstrained estimators, each tightening step contracts the estimation error inherited from the preceding iteration by a factor of $\delta$. As an immediate consequence, for each $e\in\{0,a\}$,
\[
\|\widehat{\betab}_e^{({\rm LLA},k)}-\betab^*\|_2
\leq
\delta^k\|\widehat{\betab}^{(B)}-\betab^*\|_2
+
\frac{1}{1-\delta}
\left\{
\kappa^{-1}
\left\|
p'_{\lambda,+}\left((|\betab^*_{\mc S}|-\gamma_0\lambda)_+\right)
\right\|_2
+
r_e^{\rm ora}
\right\}.
\]
Thus, for sufficiently large $k$, the contribution of the initial estimator becomes negligible, and the estimation error is governed by the second term on the right-hand side.

The quantity $\kappa^{-1}\|p'_{\lambda,+}((|\betab^*_{\mc S}|-\gamma_0\lambda)_+)\|_2$ represents the shrinkage bias induced by the penalty function; see \citet{fan2018lamm}. 
Under the $\ell_1$ penalty, $p'_{\lambda,+}(t)=\lambda$ for $t\geq0$, so the derivative does not vanish even for arbitrarily strong signals. 
Consequently, the shrinkage bias can be of order $s^{1/2}\lambda$, illustrating a fundamental limitation of the $\ell_1$ penalty. 
By contrast, consider a concave penalty satisfying $p'_{\lambda,+}(\gamma_1\lambda)=0$ for some $\gamma_1>0$. 
Under the beta-min condition $\min_{j\in\mc S}|\beta_j^*|\geq(\gamma_0+\gamma_1)\lambda$, the shrinkage-bias term vanishes. 
In this case, for $e \in \{0,a\}$, the final convergence rate is governed by $r_e^{\rm ora}$ and therefore matches that of the corresponding oracle estimator $\widehat{\betab}^{\rm ora}_e$ defined in \eqref{def:constrained_oracle_estimator} and \eqref{def:unconstrained_oracle_estimator}, respectively.

In particular, up to multiplicative constants, $r_0^{\rm ora}$ consists of the following three components:
\[
\underbrace{
\|\omegab^*_{\mc S}\|_2
+
\|\Zb^\T\omegab^*_{\mc M}\|_2
}_{\mbox{\scriptsize offline oracle rate}},
\qquad
\underbrace{
(s+m)^{1/2}\phi^*
}_{\mbox{\scriptsize surrogate-loss bias}},
\qquad
\underbrace{
(s+m)^{1/2}\Gamma_B+\|\hb_B\|_2
}_{\mbox{\scriptsize local-alternative error}}.
\]
The first component is the stochastic error of the corresponding offline oracle estimator. The second quantifies the approximation error arising from the use of the renewable surrogate loss in place of the full-data offline loss. The third captures the additional error induced by the local alternative in Assumption~\ref{assm:local_alternative}.
Consequently, under the beta-min condition and for sufficiently large $k$, $\widehat{\betab}_0^{({\rm LLA},k)}$ attains the same convergence rate as the oracle estimator $\widehat{\betab}^{\rm ora}_0$, which is constructed with prior knowledge of the true support $\mc S$. We refer to this rate equivalence as the \textit{weak oracle property}. The same conclusion holds for the unconstrained estimator $\widehat{\betab}_a^{({\rm LLA},k)}$.

A similar surrogate-loss bias term appears in Theorem~\ref{thm:batch_b_init}; however, since no penalty is imposed on the indices in $\mc M$ in the definition of $\widehat{\betab}_e^{({\rm LLA}, k)}$, the scaling becomes $(s+m)^{1/2}$ rather than $|\mc A|^{1/2}$.
Furthermore, the unconstrained oracle rate $r^{\rm ora}_a$ involves the term $\|\omegab^*_{\overline{\mc M}}\|_2$, whereas the constrained counterpart $r^{\rm ora}_0$ involves $\|\omegab^*_{\mc S}\|_2 + \|\Zb^\T \omegab^*_{\mc M}\|_2$. 
As shown in Appendix~\suppref{sec:examples}, under standard regularity conditions such as Assumption~\ref{assm:subgaussian}, $\|\omegab^*_{\overline{\mc M}}\|_2$ scales as $\sqrt{(s+m)/N_B}$, while $\|\omegab^*_{\mc S}\|_2 + \|\Zb^\T \omegab^*_{\mc M}\|_2$ scales as $\sqrt{(s+m-r)/N_B}$. This distinction is particularly pronounced in regimes where the number of constraints $r$ diverges with $N_B$, highlighting the advantage of the constrained estimator.
This gain comes at a cost: $r^{\rm ora}_0$ includes the additional term $(s+m)^{1/2}\Gamma_B + \|\hb_B\|_2$, which captures the misspecification induced by local alternatives to the linear constraint.

The preceding results give sufficient conditions under which the LLA sequence $\{\widehat{\betab}_e^{({\rm LLA},k)}\}_{k\geq 1}$, for $e\in\{0,a\}$, satisfies a contraction property. 
We next establish a stronger result: in the renewable-data setting, finitely many LLA iterations suffice for the proposed estimator to recover the corresponding oracle estimator exactly.
This exact recovery property is commonly referred to as the \textit{strong oracle property} \citep{fan2014strong,fan2018lamm}.
% For $e\in\{0,a\}$, define the oracle score
% \[
% \mathfrak{w}^{\mathrm{ora}}_e
% =
% \nabla \widehat{\mc Q}_B(\widehat{\betab}^{\mathrm{ora}}_e),
% \qquad e\in\{0,a\}.
% \]

\begin{theorem}[Strong oracle recovery]\label{thm:oracle_recovery}
Suppose the assumptions and conditions of
Theorem~\ref{thm:batch_b_tightening} hold.  
% Suppose, moreover, that
% \[
% \bigcap_{b=1}^B
% {\mc G}_{\rm rsc}^{(b)}(R_0,l,\kappa)
% \]
% holds and that \suppeqref{eq:crude-lla-localization} is valid for every $k\ge0$ and both
% $e=0$ and $e=a$. 
For the same $\gamma_0$ and $p_0=p'_+(\gamma_0)$ as in
Theorem~\ref{thm:batch_b_tightening}, assume that there is a $\gamma_1>\gamma_0$ such that $p_0\in(0,1/2]$, $p'_+(\gamma_1)=0$, and
$\min_{j\in\mc S}|\beta_j^*|
\ge(\gamma_0+\gamma_1)\lambda$.
Assume $\kappa\gamma_0>5/4$ and choose $\bar\delta\in (5/(4\kappa\gamma_0),1)$.
%When $\mc S=\varnothing$, the beta-min condition above and the beta-min condition inherited from Theorem~\ref{thm:batch_b_tightening} are understood to be vacuous.
For each $e\in\{0,a\}$, suppose
\#
\big\|\nabla_{\overline {\mc M}^{\,c}} \widehat{\mc Q}_B(\widehat{\betab}^{\mathrm{ora}}_e)\big\|_\infty \le \frac{p_0\lambda}{2}, ~~\mbox{ and }~~ \|\widehat{\betab}^{\rm ora}_e-\betab^*\|_2 \le R_0\wedge\frac{\lambda}{5\bar\delta\kappa}. \label{eq:oracle-recovery-score-radius}
\#
Let $\widehat s = |\operatorname{supp}(\widehat{\betab}^{(B)})|+s$, and define $K_0=\lceil\log(\widehat s^{1/2}/\bar\delta)/\log(1/\bar\delta)\rceil$, if $\widehat s \geq 1$, otherwise $K_0=0$.
Then
\[
\widehat{\betab}^{({\rm LLA},k)}_e
=
\widehat{\betab}^{\rm ora}_e,
\qquad
k\ge K_0+1,
\qquad e\in\{0,a\}.
\]
\end{theorem}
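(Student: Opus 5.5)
The plan follows the strong-oracle argument of \citet{fan2014strong} and \citet{fan2018lamm}, adapted to the corrected surrogate $\widehat{\mc Q}_B$ and to partial penalization. It has two parts. First, a \emph{one-step recovery} lemma: if at some iterate $k-1$ the weights satisfy $\widehat w_{e,j}^{(k-1)}=0$ for all $j\in\mc S$ and $\widehat w_{e,j}^{(k-1)}\ge p_0\lambda$ for all $j\in\overline{\mc M}^{\,c}$, then $\widehat{\betab}_e^{({\rm LLA},k)}=\widehat{\betab}_e^{\rm ora}$. Second, a \emph{counting/contraction} argument: starting from $\widehat{\betab}^{(B)}$, after at most $K_0$ LLA steps the iterate is close enough to $\betab^*$ (relative to $\lambda$) that these weight conditions hold. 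Applying the one-step lemma once more then gives recovery at step $K_0+1$, and recovery persists for all later steps.

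\textbf{One-step recovery.} Assume the weight conditions. The objective in \eqref{def:LLA.a} (respectively \eqref{def:LLA.0}) is convex, and it has no penalty on $\overline{\mc M}$. We verify the KKT conditions at $\widehat{\betab}_e^{\rm ora}$.
\begin{itemize}
\item On $\overline{\mc M}$: the gradient vanishes, or for $e=0$ it equals $\Cb_0^\T\nu$ for a Lagrange multiplier $\nu$. This is exactly the first-order condition defining \eqref{def:constrained_oracle_estimator} or \eqref{def:unconstrained_oracle_estimator}.
\item On $\overline{\mc M}^{\,c}$: $\widehat{\betab}^{\rm ora}_e$ is zero there, and by \eqref{eq:oracle-recovery-score-radius} the gradient has sup-norm at most $p_0\lambda/2<p_0\lambda\le \widehat w_{e,j}$. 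So $\mathbf 0$ lies in the subdifferential.
\end{itemize}
For uniqueness, we use that any other minimizer differs from $\widehat{\betab}^{\rm ora}_e$ by a vector whose $\overline{\mc M}^{\,c}$ part is dominated by its $\overline{\mc M}$ part. The strict slack between $p_0\lambda/2$ and $p_0\lambda$ gives a cone of the type $\CC(l)$. Local RSC from $\mc R_B(R_0,l,\kappa,M)$, summed over batches in the Hessian $\widehat{\Jb}^{(B-1)}$ plus $n_B\widehat{\mc L}^{(B)}$, then forces equality. Once $\widehat{\betab}^{({\rm LLA},k)}_e=\widehat{\betab}^{\rm ora}_e$, the next weights are computed from the oracle itself. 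For $j\in\mc S$, $|\widehat\beta^{\rm ora}_{e,j}|\ge(\gamma_0+\gamma_1)\lambda-\lambda/(5\bar\delta\kappa)\ge\gamma_1\lambda$ because $5\bar\delta\kappa>5/4\cdot$ stuff $\ge 1/\gamma_0$, so the weight is $p'_{\lambda,+}=0$. For $j\in\overline{\mc M}^{\,c}$, the coefficient is zero and the weight is $\lambda\ge p_0\lambda$. Hence the fixed point persists for all subsequent $k$.

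\textbf{Counting step.} Here we compare each iterate with $\widehat{\betab}^{\rm ora}_e$ rather than with $\betab^*$. Define the bad set $E_k=\{j\in\mc S:|\widehat\beta^{({\rm LLA},k)}_{e,j}-\beta^*_j|\ge\gamma_0\lambda\}\cup\{j\in\overline{\mc M}^{\,c}:|\widehat\beta^{({\rm LLA},k)}_{e,j}|\ge\gamma_0\lambda\}$. Since $p'_+$ is nonincreasing with $p'_+(\gamma_1)=0$ and $p'_+(\gamma_0)=p_0$, the weights are good outside $E_{k-1}$. Using the KKT conditions of the LLA step and of the oracle, together with RSC, we aim for
\[
\|\widehat{\betab}^{({\rm LLA},k)}_e-\widehat{\betab}^{\rm ora}_e\|_2\le \frac{5}{4\kappa}\lambda|E_{k-1}|^{1/2}.
\]
Two sources contribute to this bound. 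The bad coordinates contribute at most $\lambda$ each. The oracle score slack contributes $p_0\lambda/2\le\lambda/4$ on the zero set, which is absorbed because $p_0\le1/2$, hence the constant $5/4$. Chebyshev gives $|E_{k-1}|\le\|\widehat{\betab}^{({\rm LLA},k-1)}_e-\widehat{\betab}^{\rm ora}_e\|_2^2/(\gamma_0\lambda-\lambda/(5\bar\delta\kappa))^2$, up to the oracle-radius adjustment. Combining the two bounds yields geometric decay with ratio $\bar\delta$. Initially $|E_0|\le\widehat s$, since the bad coordinates of $\widehat{\betab}^{(B)}$ lie in its support or in $\mc S$. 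After $K_0$ steps, $\bar\delta^{K_0}\widehat s^{1/2}<\bar\delta$ forces $|E_{K_0}|<1$, so $E_{K_0}=\varnothing$. The one-step lemma then applies at step $K_0+1$.

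\textbf{Main obstacle.} The hardest step is the error-vs-bad-set inequality for the constrained case $e=0$ under the corrected surrogate. There are two difficulties. First, the difference must be restricted to the subspace $\{\Cb\betab_{\mc M}=\tb\}$, working in coordinates $\Zb$, so that the Lagrange terms cancel. Second, the perturbation must be shown to stay in $\BB(R_0)\cap\CC(l)$, so that RSC of every batch can be invoked through the quadratic form $\widehat{\Jb}^{(B-1)}$. I would first prove a crude localization of all iterates using Theorem~\ref{thm:batch_b_tightening}, then carry out the cone argument on $\ker(\Cb)\times\RR^{p-m}$.
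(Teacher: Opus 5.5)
Your architecture is the paper's own proof. Your bad set $E_k$ coincides with the paper's $\mathcal{B}_{e,k}$, since $\beta^*_j=0$ on $\overline{\mathcal M}^{\,c}$. You measure the $k$th iterate against the oracle through a weighted-$\ell_1$ KKT/RSC bound, which is Lemma~\ref{lem:intermediate.rate2}, with the crude localization from the proof of Theorem~\ref{thm:batch_b_tightening} controlling the enlarged index set and the strict radius condition. You lower-bound $|E_k|$ using the oracle radius $\lambda/(5\bar\delta\kappa)$, and you use $|E_0|\le\widehat s$. Your fixed-point argument for persistence is a valid alternative to the paper's observation that the contraction keeps the bad set empty. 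The precise reason the oracle's signal weights vanish there is $5\bar\delta\kappa\gamma_0>25/4>1$, so $1/(5\bar\delta\kappa)<\gamma_0$.

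The gap is in the constant $5/4$, and it is fatal as written. Combining your two inequalities gives
\[
|E_k|^{1/2}\le \frac{5/4}{\kappa\gamma_0-1/(5\bar\delta)}\,|E_{k-1}|^{1/2}.
\]
The hypotheses only give $\bar\delta\kappa\gamma_0>5/4$, hence $\kappa\gamma_0-1/(5\bar\delta)>21/(20\bar\delta)$. That bounds the ratio by $\tfrac{25}{21}\bar\delta$, not by $\bar\delta$. This can exceed one: with $\kappa\gamma_0=1.3$ and $\bar\delta=0.97$ (admissible, since $5/5.2<0.97$) the ratio is about $1.14$. So there is no decay and $K_0$ means nothing. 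Because $\bar\delta\kappa\gamma_0$ may be arbitrarily close to $5/4$, any constant $c$ in your error bound must satisfy $c\le 21/20$.

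The fix uses the disjointness you already have. Only bad null coordinates enter the enlarged index set: on the other null coordinates the weight, at least $p_0\lambda$, dominates the oracle score, at most $p_0\lambda/2$. So the penalty contribution lives on $\mathcal S\cap E_{k-1}$ and the score contribution on the disjoint set $E_{k-1}\setminus\mathcal S$. Write $a^2=|\mathcal S\cap E_{k-1}|$ and $b^2=|E_{k-1}\setminus\mathcal S|$. Cauchy--Schwarz and $p_0\le 1/2$ then give
\[
a+\tfrac{p_0}{2}\,b\le (1+p_0^2/4)^{1/2}(a^2+b^2)^{1/2}\le \tfrac{\sqrt{17}}{4}\,|E_{k-1}|^{1/2}.
\]
The contraction ratio becomes $\tfrac{5\sqrt{17}}{21}\bar\delta<\bar\delta$, which is exactly what the paper uses. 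Bounding $a$ and $b$ separately by $|E_{k-1}|^{1/2}$ discards precisely the margin that the constants $5/4$ and $1/5$ in the hypotheses are tuned to leave.
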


\subsection{Theoretical guarantees for inference}\label{sec:inference}

In this section, we derive theoretical guarantees for the proposed inference procedure.
We first study the properties of the oracle test statistics $T_W^{\rm ora}$ and $T_S^{\rm ora}$ defined in \eqref{def:oracle.wald} and \eqref{def:oracle.score}, respectively.
The regularity conditions are collected below.

\begin{assumption} [Score covariance] \label{assm:score}
    There exist constants $0 < K_l \leq K_u \lesssim 1$ such that $K_l \leq \lambda_{\min}(\Kb_0) \leq \lambda_{\max}(\Kb_0) \leq K_u$.
\end{assumption}

Define the standardized oracle score map as $\Omega_0 = \Vb_0^{-1/2} \Cb_0 \Jb_0^{-1}$, and noncentrality parameter $e_B = N_B \hb_B^\T \Vb_0^{-1} \hb_B$.
For the Gaussian approximation, we assume the following Lyapunov-type condition.
\begin{assumption}[Lyapunov-type condition]
\label{assm:lyapunov}
$(r^{1/4}/\sqrt{N_B})\EE\left\|
\Omega_0\nabla_{\overline{\mc M}}\ell(\betab^*;Y,X)
\right\|_2^3
\rightarrow 0$, as $N_B\to\infty$. 
\end{assumption}

Assumption~\ref{assm:lyapunov} provides a sufficient Lyapunov-type condition for the multivariate Gaussian approximation used below; see \citet{bentkus2005lyapunov}. 
Since $\EE\left\|\Omega_0\nabla_{\overline{\mc M}}\ell(\betab^*;Y,X)\right\|_2^2 =  \operatorname{tr}\left(\Omega_0\Kb_0\Omega_0^\T\right) = r$,
Jensen's inequality implies $\EE\| \Omega_0\nabla_{\overline{\mc M}}\ell(\betab^*;Y,X)\|_2^3 \geq r^{3/2}$.
Consequently, when $r$ diverges, Assumption~\ref{assm:lyapunov} necessarily requires $r^{7/2} = o(N_B)$.
For fixed $r$, the assumption holds whenever the standardized third moment is uniformly bounded.

We now establish the limiting distributions of the oracle test statistics $T_W^{\rm ora}$ and $T_S^{\rm ora}$. 
Importantly, the resulting distributional approximation remains valid even when both the number of constraints $r$ and the effective model dimension $m+s$ diverge with the cumulative sample size.

\begin{theorem} \label{thm:oracle.test}
Suppose that Assumptions~\ref{assm:local_alternative}, \ref{assm:hessian}, \ref{assm:score}, and \ref{assm:lyapunov} hold, with $\|\hb_B\|_2 = O(\sqrt{r/N_B})$. 
Suppose, further, that with $l \geq (m+s)^{1/2}$, the event
$\bigcap_{j=1}^{B-1} \big\{ \widehat{\betab}^{(j)}-\betab^*\in\BB(R_0)\cap\CC(l) \big\} \cap
{\mc R}_B(R_0,l,\kappa,M) \cap {\mc E}_{\rm rlh}^{(B)}(R_0,l,M) \cap {\mc H}(\eta_\infty)$
holds with probability tending to one.
Moreover, $R_0$ is sufficiently large so that $R_0 \geq 2 \max(r_0^{\rm ora}, r_a^{\rm ora})$ with probability tending to one.
Finally, we assume that, for each $e \in \{0,a\}$,
\#
& \sqrt{\frac{m+s}{N_B}}
\sum_{b=1}^{B-1} n_b
\left\{
\|\widehat\betab^{(b)}-\betab^*\|_2^2
+
\|\widehat\betab^{(b)}-\betab^*\|_2
\|\widehat\betab_e^{\rm ora}-\betab^*\|_2
\right\} \nn
\\
&\qquad
+
\|\widehat\betab_e^{\rm ora}-\betab^*\|_2
\left\{
\sqrt{\frac{n_B^2(m+s)}{N_B}}
\|\widehat\betab_e^{\rm ora}-\betab^*\|_2
+
\sqrt{N_B}(m+s)\eta_\infty
\right\}
\overset{p}{\to} 0. \label{eq:inference-remainder-condition}
\#
Then, for $T = T_W^{\rm ora}$ or $T_S^{\rm ora}$,
\[
\sup_{x \geq 0}
\left|
\PP(T \leq x)
-
\PP\{\chi^2(r,e_B) \leq x\}
\right|
\longrightarrow 0,
\]
where $\chi^2(r,e_B)$ denotes a noncentral chi-square random variable with $r$ degrees of freedom and noncentrality parameter $e_B$.
\end{theorem}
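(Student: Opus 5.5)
The argument goes through oracle Bahadur representations for $\widehat\betab^{\rm ora}_a$ and $\widehat\betab^{\rm ora}_0$ on $\overline{\mc M}$. The leading term of each is the pooled empirical score $\omegab^*_{\overline{\mc M}}=N_B^{-1}\sum_{b}\sum_{i\in\mc D_b}\nabla_{\overline{\mc M}}\ell(\betab^*;Y_i,X_i)$, and the plan is to show the remainders vanish on the $\sqrt{N_B}$ scale. We then apply a high-dimensional CLT for the resulting standardized sum.

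\textbf{Step 1: gradient expansion of the surrogate.} Unrolling the recursion \eqref{def:q.function}, we write $N_B\nabla\widehat{\mc Q}_B(\betab)$ as
\[
\sum_{b=1}^B n_b\nabla\widehat{\mc L}^{(b)}(\betab^*)+\sum_{b=1}^B n_b\nabla^2\widehat{\mc L}^{(b)}(\betab^*)(\betab-\betab^*)+\Delta_B(\betab).
\]
The remainder $\Delta_B$ collects the Taylor errors at each historical centre $\widehat\betab^{(b)}$ and the Hessian mismatch $\nabla^2\widehat{\mc L}^{(b)}(\widehat\betab^{(b)})-\nabla^2\widehat{\mc L}^{(b)}(\betab^*)$. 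Because the linear correction telescopes the historical first-order terms, the event ${\mc E}_{\rm rlh}^{(b)}$ gives the bound
\[
\|\Delta_{B,\overline{\mc M}}(\betab)\|_2\lesssim M\sqrt{m+s}\sum_{b<B}n_b\bigl(\|\widehat\betab^{(b)}-\betab^*\|_2^2+\|\widehat\betab^{(b)}-\betab^*\|_2\|\betab-\betab^*\|_2\bigr)+\sqrt{m+s}\,Mn_B\|\betab-\betab^*\|_2^2 .
\]
The last piece is the current-batch Taylor error; I expect to bound it via $\sqrt{n_B^2(m+s)/N_B}$ after normalization. Replacing the pooled Hessian by $\Jb$ on ${\mc H}(\eta_\infty)$ costs $(m+s)\eta_\infty\|\betab-\betab^*\|_2$ in $\ell_2$. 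Evaluated at $\betab=\widehat\betab^{\rm ora}_e$ and multiplied by $\sqrt{N_B}/N_B$, all these terms are exactly those in \eqref{eq:inference-remainder-condition}, so they are $o_p(1)$. Consistency and localization of $\widehat\betab^{\rm ora}_e$ within $\BB(R_0)\cap\CC(l)$ come from the RSC event and $R_0\ge2\max(r_0^{\rm ora},r_a^{\rm ora})$, following the argument of Theorem~\ref{thm:batch_b_tightening} with the support fixed; here $l\ge(m+s)^{1/2}$ covers vectors supported on $\overline{\mc M}$.

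\textbf{Step 2: Bahadur representations.} For the full oracle, the first-order condition $\nabla_{\overline{\mc M}}\widehat{\mc Q}_B(\widehat\betab^{\rm ora}_a)=0$ gives
\[
\sqrt{N_B}(\widehat\betab^{\rm ora}_{a,\overline{\mc M}}-\betab^*_{\overline{\mc M}})=-\sqrt{N_B}\Jb_0^{-1}\omegab^*_{\overline{\mc M}}+o_p(1)
\]
in $\ell_2$, using Assumption~\ref{assm:hessian}. Hence $\sqrt{N_B}\,\Vb_0^{-1/2}(\Cb_0\widehat\betab^{\rm ora}_{a,\overline{\mc M}}-\tb)=-\sqrt{N_B}\,\Omega_0\omegab^*_{\overline{\mc M}}+\sqrt{N_B}\,\Vb_0^{-1/2}\hb_B+o_p(1)$, where $\lambda_{\min}(\Vb_0)$ is bounded below by Assumptions~\ref{assm:hessian}, \ref{assm:score} and~\ref{assm:local_alternative}.

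For the reduced oracle, the Lagrangian condition reads $\nabla_{\overline{\mc M}}\widehat{\mc Q}_B(\widehat\betab^{\rm ora}_0)=\Cb_0^\T\nu$. Solving the linearized KKT system with the constraint $\Cb_0\betab_{\overline{\mc M}}=\tb=\Cb_0\betab^*_{\overline{\mc M}}-\hb_B$ gives
\[
\sqrt{N_B}\,\Jb_0^{-1}\nabla_{\overline{\mc M}}\widehat{\mc Q}_B(\widehat\betab^{\rm ora}_0)=\Jb_0^{-1}\Cb_0^\T(\Cb_0\Jb_0^{-1}\Cb_0^\T)^{-1}\sqrt{N_B}\bigl(\Cb_0\Jb_0^{-1}\omegab^*_{\overline{\mc M}}-\hb_B\bigr)+o_p(1).
\]
Plugging this in, the score statistic reduces to the same quadratic form as the Wald statistic, up to $o_p(1)$.

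\textbf{Step 3: Gaussian approximation and conclusion.} Both statistics equal $\|\Wb+\sqrt{N_B}\Vb_0^{-1/2}\hb_B\|_2^2+o_p(1)\cdot O_p(\sqrt r)$, where $\Wb=-N_B^{-1/2}\sum_i\Omega_0\nabla_{\overline{\mc M}}\ell(\betab^*;Y_i,X_i)$ has identity covariance. The Bentkus Lyapunov bound over convex sets (balls), together with Assumption~\ref{assm:lyapunov}, shows that $\Wb$ is uniformly close to $N(0,\Ib_r)$ over balls. It follows that $\|\Wb+\mu\|_2^2$ is uniformly close to $\chi^2(r,e_B)$ with $\mu=\sqrt{N_B}\Vb_0^{-1/2}\hb_B$.

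\textbf{Main obstacle.} The hard part is absorbing the remainder when $r$ diverges. The cross term is $2\langle\Wb+\mu,\text{rem}\rangle$; since $\|\Wb+\mu\|_2=O_p(\sqrt r)$ because $\|\hb_B\|_2=O(\sqrt{r/N_B})$, this cross term is only $o_p(\sqrt r)$, and I need to check it is negligible relative to the $\sqrt r$ spread of the $\chi^2$ distribution. The plan is to use the anti-concentration of $(\chi^2(r,e_B)-r)/\sqrt{r}$, so that perturbations of size $o_p(\sqrt r)$ do not change the Kolmogorov distance. Consequently the remainder must be $o_p(1)$ in $\ell_2$ rather than merely small coordinatewise, which is why Step 1 is carried out in $\ell_2$ norms.
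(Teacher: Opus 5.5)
Your proposal is correct and follows essentially the same route as the paper. The telescoped surrogate-gradient expansion gives $\ell_2$ Bahadur remainders of order $o_p(N_B^{-1/2})$ under the stated remainder condition. Both oracle statistics then reduce to $T_0=N_B\|\Vb_0^{-1/2}(\Cb_0\Jb_0^{-1}\omegab^*_{\overline{\mc M}}-\hb_B)\|_2^2$ up to $o_p(\sqrt r)$; your KKT linearization for the score is equivalent to the paper's exact identity writing $\Omega_0\nabla_{\overline{\mc M}}\widehat{\mc Q}_B(\widehat{\betab}^{\rm ora}_0)$ as $\Vb_0^{-1/2}(\Cb_0\Jb_0^{-1}\omegab^*_{\overline{\mc M}}-\hb_B)$ plus $\Omega_0$ times the gradient remainder. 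Bentkus' bound over balls then handles $T_0$, and a noncentral chi-square anti-concentration bound on intervals of width $\epsilon\sqrt r$ transfers the limit. The paper proves that anti-concentration bound as a separate lemma, uniformly in $e\ge 0$ and including the case $r=1$.

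One small repair is needed where you push the remainders through $\Vb_0^{-1/2}\Cb_0$. Use $\|\Vb_0^{-1/2}\Cb_0\|_2=\|\Omega_0\Jb_0\|_2\le J_uK_l^{-1/2}$, which follows from $\Omega_0\Kb_0\Omega_0^\top=\Ib_r$. A lower bound on $\lambda_{\min}(\Vb_0)$ alone would also require $\|\Cb\|_2=O(1)$, and this theorem does not assume that.
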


Theorem~\ref{thm:oracle.test} establishes that the oracle test statistics admit a valid noncentral chi-square approximation in the renewable setting, despite the unavailability of raw observations from previous batches. The result also accommodates a high-dimensional regime in which both $r$ and $m+s$ may diverge. Thus, the corrected surrogate loss function in \eqref{def:q.function} preserves the first-order inferential properties of the corresponding full-data offline procedure. In particular, for generalized linear models, the resulting limiting distribution coincides with that obtained from the offline loss based on all observations, as derived by \citet{shi2019linear}.

We next show that the proposed partially penalized test statistics satisfy an oracle property: despite the unknown support set, their limiting distributions coincide with those of the corresponding oracle statistics in Theorem~\ref{thm:oracle.test}.

\begin{theorem} \label{thm:test}
In addition to the conditions of Theorem~\ref{thm:oracle.test}, suppose that the conditions of Theorem~\ref{thm:oracle_recovery} hold with probability tending to one. Assume moreover that $\lambda_{\max}(\Cb\Cb^\T)=O(1)$ and that
\begin{equation}
\sqrt r\left[
\max_{e\in\{0,a\}}
\left\|
\nabla^2_{\overline{\mc M}\,\overline{\mc M}}
\widehat{\mc Q}_B(\widehat{\betab}^{\rm ora}_e)-\Jb_0
\right\|_2
+
\left\|
\widehat{\Kb}^{(B)}_{\overline{\mc M}\,\overline{\mc M}}-\Kb_0
\right\|_2
\right]
=o_p(1).
\label{eq:feasible-plugin}
\end{equation}
Then, for every $k\ge K_0+1$ and for $T = T_W^{(k)}$ or $T_S^{(k)}$,
\[
\sup_{x \geq 0}
\left|
\PP(T \leq x)
-
\PP\{\chi^2(r,e_B) \leq x\}
\right|
\to 0.
\]
\end{theorem}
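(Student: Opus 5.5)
The plan is to reduce Theorem~\ref{thm:test} to Theorem~\ref{thm:oracle.test} by exact oracle recovery, followed by a plug-in argument for the estimated covariance. By Theorem~\ref{thm:oracle_recovery}, whose conditions hold with probability tending to one, we have $\widehat{\betab}^{({\rm LLA},k)}_e=\widehat{\betab}^{\rm ora}_e$ for every $k\ge K_0+1$ and $e\in\{0,a\}$ on an event $\mc O_B$ with $\PP(\mc O_B)\to1$. On $\mc O_B$ we must also show $\widehat{\mc S}^{(k)}_e=\mc S$. Since $\widehat{\betab}^{\rm ora}_{e,\overline{\mc M}^c}=\mathbf 0$, we have $\widehat{\mc S}^{(k)}_e\subseteq\mc S$. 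Conversely, the beta-min condition $\min_{j\in\mc S}|\beta^*_j|\ge(\gamma_0+\gamma_1)\lambda$ and the radius bound $\|\widehat{\betab}^{\rm ora}_e-\betab^*\|_2\le\lambda/(5\bar\delta\kappa)<\gamma_0\lambda$ in \eqref{eq:oracle-recovery-score-radius} give $|\widehat\beta^{\rm ora}_{e,j}|>\gamma_1\lambda>0$ for all $j\in\mc S$. Here $1/(5\bar\delta\kappa)<\gamma_0/4$ because $\bar\delta>5/(4\kappa\gamma_0)$. Hence $\widehat{\overline{\mc M}}^{(k)}_e=\overline{\mc M}$ and $\widehat{\Cb}^{(k)}_e=\Cb_0$. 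Because $\Cb_0\widehat{\betab}^{\rm ora}_{a,\overline{\mc M}}=\Cb\widehat{\betab}^{({\rm LLA},k)}_{a,\mc M}$, on $\mc O_B$ the statistics $T_W^{(k)}$ and $T_S^{(k)}$ coincide with $T_W^{\rm ora}$ and $T_S^{\rm ora}$, except that $\Jb_0,\Vb_0$ are replaced by $\widehat{\Jb}_e:=\nabla^2_{\overline{\mc M}\,\overline{\mc M}}\widehat{\mc Q}_B(\widehat{\betab}^{\rm ora}_e)$ and $\widehat{\Vb}_e=\Cb_0\widehat{\Jb}_e^{-1}\widehat{\Kb}^{(B)}_{\overline{\mc M}\,\overline{\mc M}}\widehat{\Jb}_e^{-1}\Cb_0^\T$.

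Next, we control the plug-in error. Write $\Delta_J=\widehat{\Jb}_e-\Jb_0$ and $\Delta_K=\widehat{\Kb}^{(B)}_{\overline{\mc M}\,\overline{\mc M}}-\Kb_0$. By \eqref{eq:feasible-plugin}, $\|\Delta_J\|_2+\|\Delta_K\|_2=o_p(r^{-1/2})$. Assumptions~\ref{assm:hessian} and \ref{assm:score}, together with $\lambda_{\max}(\Cb\Cb^\T)=O(1)$ and $\lambda_{\max}\{(\Cb\Cb^\T)^{-1}\}\le C_{\max}^2$, show that the eigenvalues of $\Vb_0$ are bounded above and away from zero. Standard perturbation bounds, such as $\|\widehat{\Jb}_e^{-1}-\Jb_0^{-1}\|_2\le\|\Jb_0^{-1}\|_2\|\widehat{\Jb}_e^{-1}\|_2\|\Delta_J\|_2$, then give $\|\widehat{\Vb}_e-\Vb_0\|_2=o_p(r^{-1/2})$. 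The same bound holds for $\|\widehat{\Vb}_e^{-1}-\Vb_0^{-1}\|_2$, and also for $\|\widehat\Omega-\Omega_0\|_2$ where $\widehat\Omega=\widehat\Vb_0^{-1/2}\Cb_0\widehat\Jb_0^{-1}$.

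For the Wald statistic, set $\xi=\sqrt{N_B}(\Cb_0\widehat{\betab}^{\rm ora}_{a,\overline{\mc M}}-\tb)$. Then
\[
|T_W^{(k)}-T_W^{\rm ora}|\le\|\xi\|_2^2\,\|\widehat{\Vb}_a^{-1}-\Vb_0^{-1}\|_2 .
\]
From the Bahadur representation underlying Theorem~\ref{thm:oracle.test}, $\xi$ equals $\Vb_0^{1/2}$ times an approximately $N(\Vb_0^{-1/2}\sqrt{N_B}\hb_B,\Ib_r)$ vector, plus $o_p(1)$. Using $\|\hb_B\|_2=O(\sqrt{r/N_B})$, this gives $\|\xi\|_2^2=O_p(r)$. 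Hence the difference is $O_p(r)\cdot o_p(r^{-1/2})=o_p(\sqrt r)$.

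This is the main obstacle: an $o_p(\sqrt r)$ error is not trivially negligible when $r$ diverges. The resolution is to compare on the standardized scale. Write $T=\|\zeta\|_2^2$ with $\zeta=\Vb_0^{-1/2}\xi$. The feasible statistic is $\|\widehat\zeta\|_2^2$, where $\|\widehat\zeta-\zeta\|_2\le\|\zeta\|_2\,o_p(r^{-1/2})=o_p(1)$. Then $|T^{(k)}-T^{\rm ora}|\le\|\widehat\zeta-\zeta\|_2(2\|\zeta\|_2+\|\widehat\zeta-\zeta\|_2)=o_p(\sqrt r)$. The noncentral $\chi^2(r,e_B)$ distribution has standard deviation of order at least $\sqrt r$, and its density is uniformly $O(r^{-1/2})$. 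By this anti-concentration, a perturbation of size $o_p(\sqrt r)$ changes the Kolmogorov distance by $o(1)$. Combining this with the Kolmogorov convergence of $T^{\rm ora}$ from Theorem~\ref{thm:oracle.test} and with $\PP(\mc O_B^c)\to0$ yields the claim for the Wald statistic.

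The score statistic is handled identically. Here $\sqrt{N_B}\,\widehat\Omega\nabla_{\overline{\mc M}}\widehat{\mc Q}_B(\widehat{\betab}^{\rm ora}_0)$ plays the role of $\widehat\zeta$. Its oracle counterpart $\sqrt{N_B}\,\Omega_0\nabla_{\overline{\mc M}}\widehat{\mc Q}_B(\widehat{\betab}^{\rm ora}_0)$ has squared norm $O_p(r)$ by the argument in Theorem~\ref{thm:oracle.test}. Since $\|\widehat\Omega-\Omega_0\|_2=o_p(r^{-1/2})$, the same standardized-scale and anti-concentration argument applies.
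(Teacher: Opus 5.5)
Your overall route matches the paper's: exact oracle replacement with support identification, $o_p(r^{-1/2})$ operator-norm control of the plug-in matrices, an $o_p(\sqrt r)$ comparison with the oracle statistic, and anti-concentration of $\chi^2(r,e_B)$ on windows of width $\epsilon\sqrt r$. The Wald part is fine, because there $\xi\in\RR^r$ and $\|\xi\|_2\asymp\|\Vb_0^{-1/2}\xi\|_2$.

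The score part, however, is not ``handled identically.'' Write $\mathfrak w=\nabla_{\overline{\mc M}}\widehat{\mc Q}_B(\widehat\betab^{\rm ora}_0)\in\RR^{m+s}$. Then $\widehat\zeta-\zeta=\sqrt{N_B}(\widehat\Omega-\Omega_0)\mathfrak w$, so the bound $\|\widehat\Omega-\Omega_0\|_2=o_p(r^{-1/2})$ helps only if $\sqrt{N_B}\|\mathfrak w\|_2=O_p(\sqrt r)$. What you cite from Theorem~\ref{thm:oracle.test} is $N_B\|\Omega_0\mathfrak w\|_2^2=O_p(r)$. That says nothing about the part of $\mathfrak w$ lying in the $(m+s-r)$-dimensional kernel of the $r\times(m+s)$ matrix $\Omega_0$. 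For a generic score of size $\sqrt{(m+s)/N_B}$, your bound on $\|\widehat\zeta-\zeta\|_2$ would only be $o_p(\sqrt{(m+s)/r})$. This need not vanish in the allowed regime of fixed $r$ and diverging $m+s$.

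The missing ingredient is the reduced-oracle KKT structure. There is $\widehat\vb\in\RR^r$ with $\mathfrak w=-\Cb_0^\T\widehat\vb$, so $\mathfrak w$ lies in the $r$-dimensional range of $\Cb_0^\T$, where $\Omega_0$ is well conditioned. Indeed $\Omega_0\mathfrak w=-\Vb_0^{-1/2}\Psi_0\widehat\vb$, with $\Psi_0=\Cb_0\Jb_0^{-1}\Cb_0^\T$ and $\lambda_{\min}(\Psi_0)\ge (J_uC_{\max}^2)^{-1}$. This gives $N_B\|\mathfrak w\|_2^2\le\|\Cb_0\|_2^2\,\lambda_{\min}(\Psi_0)^{-2}\,\|\Vb_0\|_2\,T_S^{\rm ora}=O_p(r)$. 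Equivalently, you can read $\sqrt{N_B}\|\mathfrak w\|_2=O_p(\sqrt r)$ off the linearization $\mathfrak w=\Cb_0^\T\Psi_0^{-1}\widehat{\mub}+\Jb_0\mc R_0+\widetilde{\mc R}_0$ in the proof of Theorem~\ref{thm:oracle.test}. Only with this bound does your standardized comparison yield $o_p(\sqrt r)$. The same holds for the paper's direct estimate $|T_S^{(k)}-T_S^{\rm ora}|\le N_B\|\mathfrak w\|_2^2\,\|\widehat{\Ab}_0-\Ab_0\|_2$, where $\widehat{\Ab}_0,\Ab_0$ are the feasible and oracle score weighting matrices.

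A smaller point: you justify anti-concentration by a uniform $O(r^{-1/2})$ density bound, but this is false at $r=1$. There the $\chi^2(1,e)$ density is unbounded near $0$, so that case needs a separate argument. One option is the representation $(Z+\sqrt e)^2$, which gives an $O(\sqrt\epsilon)$ bound for windows of width $2\epsilon$, as in Lemma~\ref{lem:chisq-anticoncentration}.
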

\noindent It follows immediately from Theorem~\ref{thm:test} that
$\sup_{x \geq 0}
\left|
\PP\{T_W^{(k)} \leq x\}
-
\PP\{T_S^{(k)} \leq x\}
\right|
\to 0$,
thereby establishing the asymptotic equivalence of the partially penalized Wald and score statistics. Recall that the testing procedure for \eqref{hypothesis} is defined as follows: for a prescribed significance level $\alpha$, we reject the null hypothesis whenever $T > \chi_\alpha^2(r)$, where $T$ denotes either $T_W^{(k)}$ or $T_S^{(k)}$. The following corollary establishes the asymptotic validity of this procedure.

\begin{corollary} \label{cor:test}
Suppose that the conditions of Theorem~\ref{thm:test} hold. Under the null hypothesis, that is, when $\hb_B = \mathbf{0}$, for any $0<\alpha<1$ and $T = T_W^{(k)}$ or $T_S^{(k)}$,
\[
\lim_{N_B \to \infty}
\PP\left\{T > \chi_\alpha^2(r)\right\}
=
\alpha.
\]
Under local alternatives satisfying $\|\hb_B\|_2 = O(
\sqrt{r/N_B})$, for any $0<\alpha<1$ and $T = T_W^{(k)}$ or $T_S^{(k)}$,
\[
\lim_{N_B \to \infty}
\left|
\PP\left\{T > \chi_\alpha^2(r)\right\}
-
\PP\left\{\chi^2(r,e_B) > \chi_\alpha^2(r)\right\}
\right| = 0.
\]
\end{corollary}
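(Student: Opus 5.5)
The corollary follows from Theorem~\ref{thm:test} once we evaluate the limiting approximation at the single point $x=\chi^2_\alpha(r)$. Since that theorem gives uniform convergence over $x\ge0$, we have
\[
\bigl|\PP\{T>\chi^2_\alpha(r)\}-\PP\{\chi^2(r,e_B)>\chi^2_\alpha(r)\}\bigr|
\le \sup_{x\ge0}\bigl|\PP(T\le x)-\PP\{\chi^2(r,e_B)\le x\}\bigr|\to0 .
\]
This holds for every $k\ge K_0+1$ and for both $T=T_W^{(k)}$ and $T=T_S^{(k)}$. Uniformity is essential here because $r$, and hence the quantile $\chi^2_\alpha(r)$, may vary with $N_B$, so pointwise convergence at a fixed $x$ would not be enough. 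The local-alternative statement is therefore exactly this display. The requirement $\|\hb_B\|_2=O(\sqrt{r/N_B})$ is already among the hypotheses of Theorem~\ref{thm:oracle.test}, which Theorem~\ref{thm:test} inherits.

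For the null statement, set $\hb_B=\mathbf 0$. The noncentrality is then $e_B=N_B\hb_B^\T\Vb_0^{-1}\hb_B=0$, so $\chi^2(r,e_B)$ is the central $\chi^2(r)$. For every $r$,
\[
\PP\{\chi^2(r)>\chi^2_\alpha(r)\}=\alpha
\]
exactly, by continuity of the chi-square distribution and the definition of the upper quantile. The display above then gives $\PP\{T>\chi^2_\alpha(r)\}\to\alpha$. The zero alternative $\hb_B=\mathbf 0$ trivially satisfies the rate condition, so Theorem~\ref{thm:test} applies under the null.

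No real obstacle is expected. The one point to check is that the supremum in Theorem~\ref{thm:test} is genuinely uniform in $x$ along the sequence where $r=r(N_B)$ diverges, since this is what allows evaluation at the moving quantile. This is exactly how the theorem is stated.
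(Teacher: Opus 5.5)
Your proposal is correct and matches the paper's argument. Like the paper, you evaluate the uniform Kolmogorov bound of Theorem~\ref{thm:test} at the possibly $N_B$-dependent point $\chi^2_\alpha(r)$, and under $\hb_B=\mathbf 0$ you use $e_B=0$ together with the exact identity $\PP\{\chi^2(r)>\chi^2_\alpha(r)\}=\alpha$; this is how the paper closes its model-specific test propositions.
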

\noindent Corollary~\ref{cor:test} shows that the asymptotic power function of the proposed test statistics is $\PP\{\chi^2(r,e_B) > \chi_\alpha^2(r)\}$.
When $\Cb$ is the identity matrix, the Wald test based on the debiased Lasso estimator of \citet{luo2023online} can also be applied to the corresponding hypothesis. 
The proposed procedure may nevertheless attain greater power because the debiasing step can inflate the variance of the resulting estimator \citep{shi2019linear}.

\section{Tuning parameter selection}
\label{sec: tuning parameter selection}

A pooled $K$-fold cross-validation requires repeated access to historical observations and is therefore unavailable in the strictly online setting considered in this paper. 
A cross-validation procedure within the current batch remains possible, however, since the stored historical summaries can be held fixed across its training and validation folds. 
We consider two tuning strategies: accumulated one-step-ahead validation with a one-standard-error
(1-SE) rule, and current-batch cross-validation, followed by a terminal generalized information criterion (GIC) when the objective is inference.

\subsection{Accumulated one-step-ahead validation}\label{sec: 1se rule}
Upon receiving batch $\mc D_b$, we
first assess predictions made by estimators stored after batch $b-1$;
we then use these predictive losses to choose the penalty for the update
based on $\mc D_b$. 
Note that $\mc D_2$ is the first one-step-ahead validation batch based on the initial batch. 
This construction follows the prequential and rolling-origin principle of evaluating predictions before updating the fitted model, as considered in \citet{rissanen1986stochastic,tashman2000out,han2021online}.

Let $\Lambda_1\subset(0,\infty)$ be a finite candidate set, constructed
using the first batch if necessary and fixed thereafter. For each
$c\in\Lambda_1$, define $\lambda_b(c):=c\sqrt{n_1/N_b}$, for $b=1,\ldots,B$.
Here $c$ identifies a sequence of penalties, whereas $\lambda_b(c)$ is
the penalty actually used at batch $b$. Given the historical summaries
available before processing $\mc D_b$, let
$\widehat{\betab}^{(b)}(c)
\in\operatorname*{argmin}_{\betab\in\RR^p}
\left\{\widehat{\mc Q}_b(\betab)
       +\lambda_b(c)\|\betab\|_1\right\}$.
All candidates at the same batch use the same $\widehat{\mc Q}_b$.
Its historical terms are constructed from the previously \emph{selected}
estimators, rather than from separate histories indexed by $c$.

\paragraph{Initialization at batch 1.}
For $n_1\ge2$, partition $\mc D_1$ into $K_1$ folds, and write $k(i)$ for the fold
containing observation $i$. 
For each $c\in\Lambda_1$, fit the first-batch Lasso on the other folds using penalty $c$, and denote the result by
$\widehat{\betab}_{1,-k(i)}(c)$. Define the individual held-out losses,
their mean, and the empirical standard error of that mean by $ d_{1i}^{\mathrm{CV}}(c)
 =\ell\!\left(\widehat{\betab}_{1,-k(i)}(c);Y_i,X_i\right)$, $ R_1^{\mathrm{CV}}(c)=(1/n_1)
       \sum_{i\in\mc D_1}d_{1i}^{\mathrm{CV}}(c)$, 
and $ s_1^{\mathrm{CV}}(c)
 =\left\{(\sum_{i\in\mc D_1}
       [d_{1i}^{\mathrm{CV}}(c)-R_1^{\mathrm{CV}}(c)]^2)/
       (n_1(n_1-1))\right\}^{1/2}.$
With ties in the minimizing penalty resolved in favor of the largest
candidate, select
\[
 \widetilde c_1=\max\operatorname*{argmin}_{c\in\Lambda_1}
                   R_1^{\mathrm{CV}}(c),
 \qquad
 \widehat c_1=\max\left\{c\in\Lambda_1:
 R_1^{\mathrm{CV}}(c)\le R_1^{\mathrm{CV}}(\widetilde c_1)
                       +s_1^{\mathrm{CV}}(\widetilde c_1)\right\}.
\]
We then fit every candidate on the entire first batch, set
$\widehat\lambda_1=\widehat c_1$ and
$\widehat{\betab}^{(1)}=\widehat{\betab}^{(1)}(\widehat c_1)$,
and construct the historical summaries from this selected estimate.
The full-batch candidate estimates
$\{\widehat{\betab}^{(1)}(c):c\in\Lambda_1\}$ are stored to predict $\mc D_2$. The first-batch cross-validation losses are used only for
initialization and are not included in the subsequent accumulated
one-step-ahead criterion.

\paragraph{Updating at batch $b\ge2$.}
Before using $\mc D_b$ to fit new candidate estimates, evaluate $ d_{bi}(c)=\ell\!\left(\widehat{\betab}^{(b-1)}(c);Y_i,X_i\right)$, $i\in\mc D_b,\quad c\in\Lambda_1$.
Note that each predictor in this expression was fitted before $\mc D_b$ arrived.
Let $m_b=N_b-n_1$ be the number of observations in the validation batches
$\mc D_2,\ldots,\mc D_b$. Their weighted mean predictive
loss and empirical standard error are $R_b(c)=\frac{1}{m_b}\sum_{j=2}^b\sum_{i\in\mc D_j}d_{ji}(c)$, 
$s_b(c)=\left\{(\sum_{j=2}^b\sum_{i\in\mc D_j}[d_{ji}(c)-R_b(c)]^2)/(m_b(m_b-1))\right\}^{1/2},$
where $m_b\ge2$. We choose
\[
 \widetilde c_b=\max\operatorname*{argmin}_{c\in\Lambda_1}R_b(c),
 \qquad
 \widehat c_b=\max\left\{c\in\Lambda_1:
 R_b(c)\le R_b(\widetilde c_b)+s_b(\widetilde c_b)\right\},
\]
and set $ \widehat\lambda_b=\widehat c_b\sqrt{n_1/N_b}$, $\widehat{\betab}^{(b)}=\widehat{\betab}^{(b)}(\widehat c_b)$.
The maximum in the 1-SE rule selects the strongest regularization among the admissible candidates.

We update the common historical summaries using only
$\widehat{\betab}^{(b)}$, while keeping all current candidate estimates over $c\in\Lambda_1$ for predicting $\mc D_{b+1}$.
The criterion for a fixed $c$ aggregates predictions from the successive
estimators $\widehat{\betab}^{(1)}(c),\ldots,
\widehat{\betab}^{(b-1)}(c)$.
In detail, initialize $S_1(c)=T_1(c)=0$ and update $ S_b(c)=S_{b-1}(c)+\sum_{i\in\mc D_b}d_{bi}(c)$, 
$T_b(c)=T_{b-1}(c)+\sum_{i\in\mc D_b}d_{bi}(c)^2$.
Then one has $R_b(c)=S_b(c)/m_b$ and
$ s_b(c)=\left\{(T_b(c)-S_b(c)^2/m_b)/(m_b(m_b-1))\right\}^{1/2}$.
The loss accumulators require $O(|\Lambda_1|)$ storage, and storing the candidate coefficient vectors additionally requires
$O(p|\Lambda_1|)$ storage.

The motivation is to assess predictive performance using observations
that were not used to fit the candidate estimators, following the same
principle as held-out evaluation in cross-validation. The sequential batch arrivals provide a natural separation between fitting and validation:
when $\mc D_b$ arrives, it serves as a validation sample for the candidate
estimators fitted through batch $b-1$.

\subsection{Batchwise CV and terminal GIC}\label{sec: CV/GIc rule}
We give more details on the cross-validation procedure in this section.
Let $\Lambda_b$ denote a finite penalty grid for batch $b$.
Partition $\mc D_b$ into folds
$\mc I_{b1},\ldots,\mc I_{bK_{\mathrm{CV}}}$. 
Write the unnormalized historical contribution as
$ H_{b-1}(\betab)
 =(1/2)(\betab-\widehat{\betab}^{(b-1)})^\T
       \widehat{\Jb}^{(b-1)}(\betab-\widehat{\betab}^{(b-1)})
 +N_{b-1}\{\widehat{\mathbf g}^{(b-1)}\}^{\T}\betab$,
with $H_0=0$ and $N_0=0$. The fold-specific surrogate is
$ \widehat{\mc Q}_{b,-k}(\betab)
 =(\sum_{i\in\mc D_b\setminus\mc I_{bk}}
                 \ell(\betab;Y_i,X_i)+H_{b-1}(\betab))/
        (N_{b-1}+n_b-|\mc I_{bk}|)$.
For each $\lambda\in\Lambda_b$, minimize
$\widehat{\mc Q}_{b,-k}(\betab)+\lambda\|\betab\|_1$ to obtain
$\widehat{\betab}_{b,-k}(\lambda)$, and select
\[
 \widehat\lambda_b^{\mathrm{CV}}
 =\max\operatorname*{argmin}_{\lambda\in\Lambda_b}
 \left\{\frac{1}{n_b}\sum_{k=1}^{K_{\mathrm{CV}}}\sum_{i\in\mc I_{bk}}
 \ell\!\left(\widehat{\betab}_{b,-k}(\lambda);Y_i,X_i\right)\right\}.
\]
The selected penalty is then used to fit the full current batch and update the summaries.

In the estimation experiments, this minimum-CV rule is applied at every batch. 
For inference tasks, we take the CV/GIC procedure, which uses batchwise CV for $b<B$ and the final GIC for the inferential fits. 
For a specified hypothesis, let $\widehat{\betab}_{a,B}(\lambda)$ and
$\widehat{\betab}_{0,B}(\lambda)$ denote the unrestricted and restricted
partial-penalized fits, respectively, computed at a candidate penalty
$\lambda$ by the LLA procedure described above. For each fitted path,
the terminal criterion is
\[
 \operatorname{GIC}_{B,e}(\lambda)
 =N_B\widehat{\mc Q}_B\!\left(\widehat{\betab}_{e,B}(\lambda)\right)
   +a_{n_B,p}\,\big\|\widehat{\betab}_{e,B}(\lambda)\big\|_0,
 \qquad e\in\{a,0\},
\]
where $a_{n_B,p}=\max\{\log n_B,\log(\log n_B)\log p\}$ in our
simulations. Here $n_B$ is the size of the final batch, whereas $N_B$ is
the cumulative sample size. 
We note that GIC is minimized separately on
the unrestricted and restricted candidate paths for the Wald and score procedures, respectively.
By contrast, the accumulated 1-SE procedure in Section~\ref{sec: 1se rule} keeps its selected
$\widehat\lambda_B$ for the terminal inferential fits.
In Section~\ref{sec: simulation} below, we compare these tuning strategies in detail with comprehensive simulation studies.

\section{Numerical studies}\label{sec: simulation}\label{sec:numerical}

In this section, we conduct simulation studies to evaluate the finite-sample performance of the proposed estimators and the associated hypothesis tests. 
We consider the following three models:
\begin{enumerate}
  \item linear model, $Y_i=X_i^\T\betab^*+\epsilon_i$ with $\epsilon_i\sim N(0,1)$;
  \item logistic model,
        ${\rm logit}\,\PP(Y_i=1\mid X_i)=X_i^\T\betab^*$, where
        ${\rm logit}(u)=\log\{u/(1-u)\}$;
  \item Poisson model, $Y_i\mid X_i\sim{\rm Poisson}(\mu_i)$ with
        $\log\mu_i=X_i^\T\betab^*$.
\end{enumerate}
For the first two models, we generate $X_i \sim N(\mathbf{0}_p,\Sigma)$, where $\Sigma$ is a Toeplitz covariance matrix with entries $\Sigma_{jk}=\rho^{|j-k|}$ and $\rho=0.5$. For the Poisson model, we impose a bounded design by setting $X_{ij} =\sqrt{3}\{2\Phi(Z_{ij})-1\}$, where $Z_i\sim N(\mathbf{0}_p,\Sigma)$ and $\Phi(\cdot)$ denotes the cumulative distribution function of the standard normal distribution.

In this section, we focus on hypothesis testing. A comparison of the
estimation performance of estimators based on the proposed first-order
corrected loss and the renewable loss of \citet{luo2023online} is
presented in Section~\suppref{sec: estimation error simulation}.

We evaluate the empirical Type I error rates and power of the proposed renewable Wald and score tests in logistic and Poisson regression. For linear regression, the first-order correction makes the renewable squared-error loss coincide with the pooled loss up to an additive constant (see Remark~\ref{remark:ls no info lost} above). Thus, for a given penalty and tuning parameter, the terminal partially penalized objectives coincide with their offline counterparts studied by \citet{shi2019linear} and \citet{jacobson2025strong}. 
We present the linear-model results in Section~\suppref{sec:linear_additional_simulation_I}. 
Additional experiments for the Tobit and nonparametric Box--Cox models are presented in Sections~\suppref{sec:tobit_additional_simulation_II}
and~\suppref{sec:BC_additional_simulation_III}, respectively.

Throughout the inferential experiments, we set $n_b=100$ and $B=8$, yielding $N_B=800$, and consider
$p\in\{400,1200\}$.
For each setting, we use 500 independent replications and consider the following five linear hypotheses:
\begin{align*}
H_0^{(1)} &: \ \beta_1^*=0, &
H_0^{(2)} &: \ \beta_4^*+\beta_5^*=0,\\
H_0^{(3)} &: \ (\beta_1^*,\beta_8^*,\beta_9^*)^\T=\mathbf{0}_3, &
H_0^{(4)} &: \ \beta_1^*=0,\ \beta_4^*+\beta_5^*=0,\\
H_0^{(5)} &: \ (\beta_1^*,\beta_8^*,\beta_9^*)^\T=\mathbf{0}_3,
\ \beta_4^*+\beta_5^*=0, &&
\end{align*}
where the corresponding degrees of freedom are $1,1,3,2$, and $4$.
All five null hypotheses hold when $h=0$, while positive values of $h$
produce departures through $\beta_1^*$ and $\beta_4^*+\beta_5^*$.
We set the nominal significance level to $\alpha=0.05$.

We compare two different estimation strategies, Lasso $\to$ LLA and LLA $\to$ LLA, each combined with either tuning procedure in Section~\ref{sec: tuning parameter selection}.
Under Lasso $\to$ LLA, the historical summaries at batches $1,\ldots,B-1$ use corrected renewable Lasso estimates. Under LLA $\to$ LLA, each historical estimate is obtained by four SCAD reweighted iterations following a Lasso initialization. 
The one-step-ahead predictive-loss procedure (1-SE) in Section~\ref{sec: 1se rule} uses five-fold cross-validation for initialization and exploits the
penalty selected at batch $B$ for final inference. 
The procedure (CV/GIC) in Section~\ref{sec: CV/GIc rule} uses ten-fold batchwise cross-validation for the first $B-1$ batches and separate terminal
selection on the unrestricted and restricted paths.
The pooled partial-penalized Wald benchmark ({\tt offline)} of \citet{shi2019linear}
uses four SCAD reweightings and the full-sample GIC for tuning parameter selection as in \citet{jacobson2025strong}.
For $H_0^{(1)}$, we additionally report the debiased online Wald test
of \citet{luo2023online}, denoted by {\tt Luo}.

\paragraph{Logistic regression.}
For logistic regression, we use the regression coefficient vector $\betab^*=(h,0,0,0.7+h,-0.7,\mathbf{0}_{p-5}^\T)^\T$, with 
$h\in\{0,0.025,0.05,0.1,0.2,0.4\}$.
Tables~\suppref{tab:logistic-p400} and~\suppref{tab:logistic-p1200} in Section~\suppref{sec: additional tables GLM} report the
empirical rejection probabilities. Under the null, rejection probabilities
are generally close to the nominal level in both dimensions. Lasso
$\to$ LLA with 1-SE tuning has mildly elevated rejection probabilities
for the first hypothesis, whereas the corresponding LLA $\to$ LLA
results are closer to the nominal level. For $H_0^{(3)}$, CV/GIC tuning
tends to be mildly conservative. Under the alternatives, power generally
increases with the departure and broadly follows the pooled benchmark.
The left panels of Figure~\ref{fig:glm-power-h1} display the rejection
curves for $H_0^{(1)}$ in logistic regression at both dimensions.
For $H_0^{(1)}$, the renewable procedures show higher power
than the procedure of \citet{luo2023online}. Wald and score tests are
particularly similar under CV/GIC tuning. With 1-SE tuning, the score
test can have lower power at intermediate departures, especially with
Lasso $\to$ LLA estimation. Related comparisons for the linear, Tobit,
and nonparametric Box--Cox models appear in
Sections~\suppref{sec:linear_additional_simulation_I},
\suppref{sec:tobit_additional_simulation_II}, and
\suppref{sec:BC_additional_simulation_III}, respectively.

\paragraph{Poisson regression.}
For the Poisson model, the responses satisfy
$Y_i\mid X_i\sim\operatorname{Poisson}\{\exp(X_i^\top\boldsymbol\beta^*)\}$,
with
$\boldsymbol\beta^*=(h,0,0,0.45+h,-0.45,\mathbf{0}_{p-5}^\top)^\top$.
We use $h\in\{0,0.04,0.07,0.1,0.2\}$.
Tables~\suppref{tab:poisson-p400} and~\suppref{tab:poisson-p1200}
in Section~\suppref{sec: additional tables GLM} report the empirical
rejection probabilities. Under $H_0^{(2)}$, $H_0^{(4)}$, and
$H_0^{(5)}$, null rejection probabilities remain close to the nominal level in both dimensions. Under $H_0^{(3)}$, the
pooled Wald test and Lasso $\to$ LLA with 1-SE Wald testing exhibit
higher null rejection probabilities in the larger dimension, whereas
the score tests and the other Wald procedures remain closer to the nominal level.

Under the alternatives, our renewable Wald tests and CV/GIC score tests generally track the pooled
benchmark. The 1-SE score tests can have lower power at intermediate departures, particularly for $H_0^{(2)}$.
The right panels of Figure~\ref{fig:glm-power-h1} display the rejection
curves for $H_0^{(1)}$ in Poisson regression at both dimensions.
For $H_0^{(1)}$, our procedures have higher power than {\tt Luo} at the positive departures, as expected. These patterns are similar across dimensions.

\begin{figure}[t]
\centering
\includegraphics[width=\linewidth]{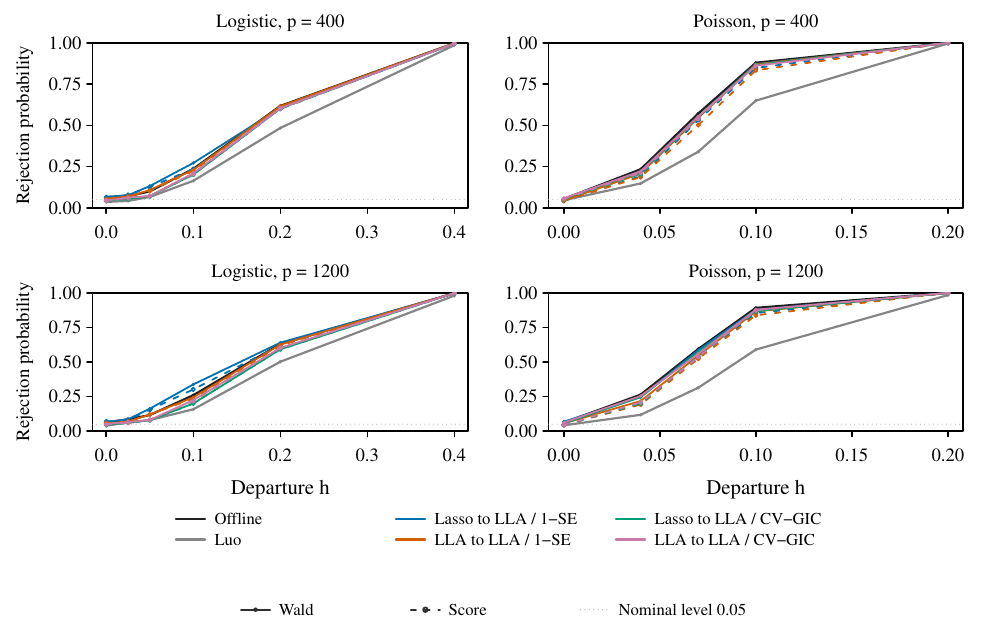}
\caption{Empirical rejection probabilities for $H_0^{(1)}$ in logistic
regression (left) and Poisson regression (right). The top and bottom rows
correspond to $p=400$ and $p=1200$, respectively. Each point uses 500
replications with $n_b=100$ and $B=8$. Solid and dashed colored curves
denote Wald and score tests; black and gray curves denote {\tt Offline}
and {\tt Luo}, respectively.}
\label{fig:glm-power-h1}
\end{figure}

\section{Real data analysis}\label{sec:real-data}
We apply our online estimation and testing procedures to analyze bike rental data from Capital Bikeshare \citep{capitalbikeshare}. We examine data from January and February 2024 and focus on bike rides ending in a generous rectangle around the National Mall, spanning from just north of the Pentagon (38$^{\circ}$52'40.8''N, 77$^{\circ}$03'18.0''W) to Union Station (38$^{\circ}$53'56.4''N, 77$^{\circ}$00'18.0''W). We joined these data with Local Climatological Data from the NCEI Reagan National Airport weather station \citep{kantor2023lcdv2}.

We take whether the rider is a Capital Bikeshare member or a guest user as our response and model it using logistic regression. (A secondary analysis using Poisson regression can be found in Section~\suppref{sec:real data poisson}.)
Our training data consist of 25,260 rides from January 2024 and our test data consist of 35,564 rides from February 2024.
The data contain 646 predictors, including characteristics of the ride -- indicators for which rental stations the ride started and ended at, whether the bicycle was electric or manual, the duration of the ride, and the end time of the ride -- and of the day -- which day of the week it was, whether it was the weekend, the minimum and maximum dry bulb temperature, total precipitation, and the length of the day. 
We categorize the ride end times as follows: overnight (12am - 6am), morning rush (6am - 9am), midday (9am - 4pm), afternoon rush (4pm - 7pm), and evening (7pm - 12am).

We break the training data into three week-long batches and a ten-day final batch. We employ the one-step-ahead predictive-loss tuning procedure (1-SE) in Section~\ref{sec: 1se rule} with $10$-fold CV tuning in the first batch.
After fitting to each batch, we use the updated model to make predictions on the test data and compute the classification error and empirical loss.
Figure \ref{fig:bikeshare_binom_pred} plots the batch number against the classification error and loss on the test data for the corrected online Lasso estimator fit to that batch.
We see that, by both metrics, the predictions steadily improve with each additional batch of training.

\begin{figure}[h]
    \centering
    \includegraphics[width=\linewidth]{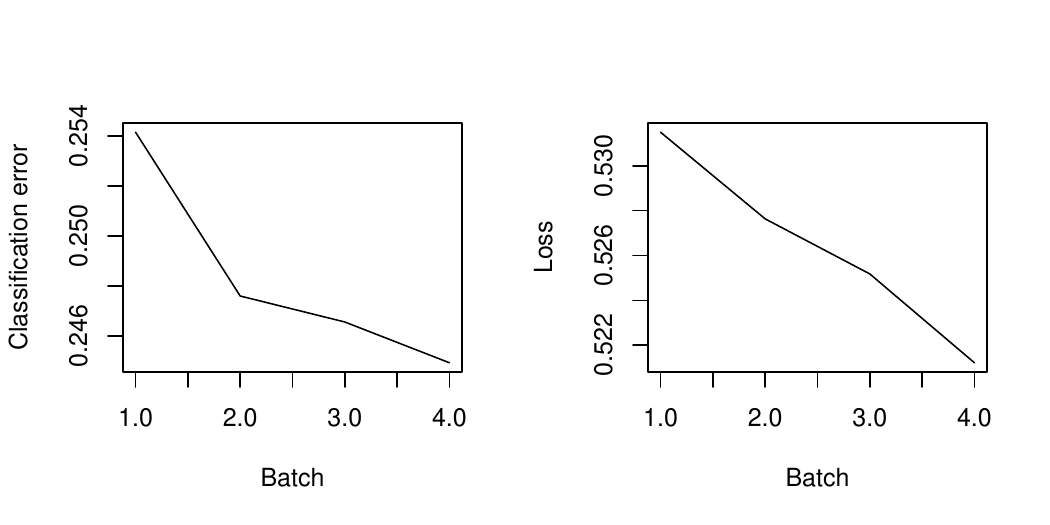}
    \caption{Classification error and loss on test data for predictions made after training on each batch in Capital Bikeshare logistic regression analysis.}
    \label{fig:bikeshare_binom_pred}
\end{figure}

We test whether having the ride end during the morning or afternoon rush hours is a significant predictor of whether the rider is a member -- that is, we test $H_0: (\beta_{\rm MR}, \beta_{\rm AR})^\T = (0,0)^\T$ where $\beta_{\rm MR}$ and $\beta_{\rm AR}$ are the coefficients for the morning and afternoon rush hour indicators, respectively.
We fit 5-step LLA estimators with the SCAD penalty with shape parameter $a = 3.7$ and tune the penalty parameter with GIC.  
The online partial penalized Wald and score statistics are $T_W = 202.47$ and $T_S = 186.63$, respectively, giving us $p$-values $< 2.2\times10^{-16}$ and providing overwhelming evidence that having a ride end during rush hour is a significant predictor of whether someone is a Capital Bikeshare member.

A complementary analysis of the same Capital Bikeshare data, modeling
hourly arrival counts at individual rental stations using Poisson
regression, is presented in Section~\suppref{sec:real data poisson}.

\section*{Data availability statement}
The Capital Bikeshare data used in  the real data analysis (Section \ref{sec:real-data}) are publicly available at \url{https://capitalbikeshare.com/system-data} \citep{capitalbikeshare}. These data were joined with Local Climatological Data (V2) from the NOAA National Centers for Environmental Information which are publicly available at \url{https://doi.org/10.25921/96dw-mb77} \citep{kantor2023lcdv2}.
  \if1\blind
    \InputIfFileExists{content-acks.tex}{}{}
  \fi
  \InputIfFileExists{content-data.tex}{}{}
\fi
\ifnum\documentpart=1
  \def\myspacing{1.55}\spacingset{\myspacing}
\else
  \clearpage
  % Reset supplement counters consistently in full and standalone builds.
  \setcounter{table}{0}\setcounter{figure}{0}
  % Distinct hyperlink destinations after resetting counters.
  \renewcommand{\theHsection}{supp.\arabic{section}}
  \renewcommand{\theHequation}{supp.\arabic{equation}}
  \renewcommand{\theHtable}{supp.\arabic{table}}
  \renewcommand{\theHfigure}{supp.\arabic{figure}}
  \renewcommand{\theHtheorem}{supp.\arabic{theorem}}
  \renewcommand{\theHlemma}{supp.\arabic{lemma}}
  \renewcommand{\theHproposition}{supp.\arabic{proposition}}
  \renewcommand{\theHcorollary}{supp.\arabic{corollary}}
  \renewcommand{\theHremark}{supp.\arabic{remark}}
  \def\spacingset#1{\renewcommand{\baselinestretch}%
{#1}\small\normalsize} \spacingset{1.2}

\if1\blind
{
  \bigskip
  \bigskip
  \bigskip
  \begin{center}
    {\LARGE\bf Supplementary Materials for ``\papertitle''}\\[1.2em]
    {\large Daeyoung Ham, Myeonghun Yu}\\
    {\large Department of Statistics and Data Science, University of Texas at San Antonio}\\
    {\large Department of Statistics, Ewha Womans University}\\[0.75em]
    {\large Tate Jacobson}\\
    {\large Department of Statistics and Data Science, Oregon State University}
  \end{center}
  \medskip
} \fi

\if0\blind
{
  \bigskip
  \bigskip
  \bigskip
  \begin{center}
    {\LARGE\bf Supplementary Materials for ``\papertitle''}
\end{center}
  \medskip
} \fi

% Continue the supplement on the title page.
\def\myspacing{1.7}
\spacingset{\myspacing}

\renewcommand{\thesection}{S\arabic{section}}
\renewcommand{\theequation}{S.\arabic{equation}}
\renewcommand{\thetheorem}{S.\arabic{theorem}}
\renewcommand{\thelemma}{S.\arabic{lemma}}
\renewcommand{\theproposition}{S.\arabic{proposition}}
\renewcommand{\thecorollary}{S.\arabic{corollary}}
\renewcommand{\theremark}{S.\arabic{remark}}

\setcounter{section}{0}
\setcounter{equation}{0}
\setcounter{theorem}{0}
\setcounter{lemma}{0}
\setcounter{proposition}{0}
\setcounter{corollary}{0}
\setcounter{remark}{0}

  \section{Auxiliary lemmas}\label{sec:appendix_lemmas}

This section collects all auxiliary lemmas used in the proofs of Theorems~\mainref{thm:batch_b_init}--\mainref{thm:test}.  
Their proofs are deferred to Appendix~\ref{sec:appendix_lemma_proofs}.

\begin{lemma}[Lemma 3 of \citet{han2021online}]
\label{lem:batch.size}
For $1\le b\le B$,
\[
\sum_{j=1}^b\frac{n_j}{N_j}
\le
1+\log\left(\frac{N_b}{n_1}\right).
\]
\end{lemma}

For a fixed vector $\theta\in\RR^p$, define
\[
R_j(\theta)
=
\nabla\widehat{\mc L}^{(j)}(\widehat{\betab}^{(j)})
-
\nabla\widehat{\mc L}^{(j)}(\theta)
+
\nabla^2\widehat{\mc L}^{(j)}(\widehat{\betab}^{(j)})
(\theta-\widehat{\betab}^{(j)}).
\]
Let $\widehat\delta_j=\widehat{\betab}^{(j)}-\betab^*$.
The following lemma provides an exact decomposition of the renewable-surrogate gradient into the gradient of the pooled loss and an accumulated linearization remainder, and bounds the latter in terms of the squared localization radii.

\begin{lemma}
\label{lem:decomposition}
For every $b\ge1$ and every $\theta\in\RR^p$,
\begin{equation}
\begin{split}
\nabla\widehat{\mc Q}_b(\theta)
=
\frac1{N_b}
\bigg[
&n_b\nabla\widehat{\mc L}^{(b)}(\theta)+
\sum_{j=1}^{b-1}n_j
\left\{
\nabla\widehat{\mc L}^{(j)}(\widehat{\betab}^{(j)})
+
\nabla^2\widehat{\mc L}^{(j)}(\widehat{\betab}^{(j)})
(\theta-\widehat{\betab}^{(j)})
\right\}
\bigg].
\end{split}
\label{eq:general-surrogate-gradient}
\end{equation}
Consequently,
\begin{equation}
\nabla\widehat{\mc Q}_b(\theta)
=
\frac1{N_b}\sum_{j=1}^b n_j
\nabla\widehat{\mc L}^{(j)}(\theta)
+
\frac1{N_b}\sum_{j=1}^{b-1}n_jR_j(\theta).
\label{eq:surrogate-gradient-remainder}
\end{equation}
In particular,
\[
\nabla\widehat{\mc Q}_b(\widehat{\betab}^{(b)})
=
\frac1{N_b}\sum_{j=1}^b n_j
\left\{
\nabla\widehat{\mc L}^{(j)}(\widehat{\betab}^{(j)})
-
\nabla^2\widehat{\mc L}^{(j)}(\widehat{\betab}^{(j)})
(\widehat{\betab}^{(j)}-\widehat{\betab}^{(b)})
\right\}.
\]
Moreover, if
$\widehat\delta_j\in\BB(R_0)\cap\CC(l)$ and
${\mc E}_{\rm rlh}^{(j)}(R_0,l,M)$ holds, then
\begin{equation}
\|R_j(\betab^*)\|_\infty
\le
\frac32M\|\widehat\delta_j\|_2^2.
\label{eq:R-beta-star-bound}
\end{equation}
If, in addition, $d=\theta-\betab^*\in\BB(R_0)\cap\CC(l)$, then
\begin{equation}
\|R_j(\theta)\|_\infty
\le
2M\|\widehat\delta_j\|_2^2+M\|d\|_2^2.
\label{eq:R-general-center-bound}
\end{equation}
\end{lemma}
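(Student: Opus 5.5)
The plan is to prove the gradient identity \eqref{eq:general-surrogate-gradient} by induction on $b$, and then to read off the remaining claims from it together with the Taylor structure of $R_j$ and the local RLH event.

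\textbf{The gradient identity.} For $b=1$ the identity is immediate, since $\widehat{\mc Q}_1=\widehat{\mc L}^{(1)}$ and $N_1=n_1$. For $b\ge2$, differentiate \eqref{def:q.function}:
\[
N_b\nabla\widehat{\mc Q}_b(\theta)=n_b\nabla\widehat{\mc L}^{(b)}(\theta)+\widehat{\Jb}^{(b-1)}(\theta-\widehat{\betab}^{(b-1)})+N_{b-1}\nabla\widehat{\mc Q}_{b-1}(\widehat{\betab}^{(b-1)}).
\]
Apply the induction hypothesis at $\theta=\widehat{\betab}^{(b-1)}$. This gives
\[
N_{b-1}\nabla\widehat{\mc Q}_{b-1}(\widehat{\betab}^{(b-1)})=n_{b-1}\nabla\widehat{\mc L}^{(b-1)}(\widehat{\betab}^{(b-1)})+\sum_{j<b-1}n_j\{\nabla\widehat{\mc L}^{(j)}(\widehat{\betab}^{(j)})+\nabla^2\widehat{\mc L}^{(j)}(\widehat{\betab}^{(j)})(\widehat{\betab}^{(b-1)}-\widehat{\betab}^{(j)})\}.
\]
Next expand $\widehat{\Jb}^{(b-1)}=\sum_{j\le b-1}n_j\nabla^2\widehat{\mc L}^{(j)}(\widehat{\betab}^{(j)})$. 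The $j=b-1$ Hessian term supplies $n_{b-1}\nabla^2\widehat{\mc L}^{(b-1)}(\widehat{\betab}^{(b-1)})(\theta-\widehat{\betab}^{(b-1)})$. For $j<b-1$, the terms in $\widehat{\betab}^{(b-1)}$ telescope:
\[
(\theta-\widehat{\betab}^{(b-1)})+(\widehat{\betab}^{(b-1)}-\widehat{\betab}^{(j)})=\theta-\widehat{\betab}^{(j)}.
\]
This yields \eqref{eq:general-surrogate-gradient}. Rewriting each summand as $\nabla\widehat{\mc L}^{(j)}(\theta)+R_j(\theta)$ gives \eqref{eq:surrogate-gradient-remainder}. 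The ``in particular'' display follows by setting $\theta=\widehat{\betab}^{(b)}$: the $j=b$ term carries a zero Hessian contribution, so it can be folded into the sum.

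\textbf{Remainder bounds.} Here I would use the integral form of Taylor's theorem for the gradient, which applies because $\ell$ is $C^2$:
\[
R_j(\theta)=\int_0^1\{\nabla^2\widehat{\mc L}^{(j)}(\widehat{\betab}^{(j)})-\nabla^2\widehat{\mc L}^{(j)}(\theta+t(\widehat{\betab}^{(j)}-\theta))\}(\widehat{\betab}^{(j)}-\theta)\,dt.
\]
Write $\theta+t(\widehat{\betab}^{(j)}-\theta)=\betab^*+\vb_t$, and write $\widehat{\betab}^{(j)}-\theta=\wb$. Insert and subtract $\nabla^2\widehat{\mc L}^{(j)}(\betab^*)$ in the integrand. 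The RLH event then gives
\[
\|R_j(\theta)\|_\infty\le M\|\wb\|_2\int_0^1(\|\widehat\delta_j\|_2+\|\vb_t\|_2)\,dt.
\]

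\emph{Case $\theta=\betab^*$.} Here $\wb=\widehat\delta_j$ and $\vb_t=t\widehat\delta_j$. Both lie in $\BB(R_0)\cap\CC(l)$, because that set is a scaling-closed cone intersected with a ball, so it contains $t\widehat\delta_j$ for $t\in[0,1]$. Integrating gives $M\|\widehat\delta_j\|_2^2(1+1/2)=\tfrac32M\|\widehat\delta_j\|_2^2$, which is \eqref{eq:R-beta-star-bound}.

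\emph{General $\theta$.} Now $\wb=\widehat\delta_j-d$ and $\vb_t=(1-t)d+t\widehat\delta_j$, so the bound becomes
\[
M(\|\widehat\delta_j\|_2+\|d\|_2)\Bigl(\tfrac32\|\widehat\delta_j\|_2+\tfrac12\|d\|_2\Bigr).
\]
Expanding and applying AM--GM to the cross term $2\|\widehat\delta_j\|_2\|d\|_2$ gives at most $2M\|\widehat\delta_j\|_2^2+M\|d\|_2^2$. This is exactly \eqref{eq:R-general-center-bound}.

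\textbf{Main obstacle.} The delicate step is membership in the RLH domain. $\BB(R_0)\cap\CC(l)$ is not convex in general, so the convex combination $\vb_t$ and the difference $\widehat\delta_j-d$ need not belong to it with the same $l$. I would first try to verify membership directly, using norm bounds that control $\|\cdot\|_1/\|\cdot\|_2$ for these combinations. If that fails, the fallback is to work instead with the RLH event defined over a slightly enlarged cone, or to use the two-step decomposition above at the endpoints. The two-step version applies RLH separately to $\widehat\delta_j$ and to $d$, splitting $R_j(\theta)=R_j(\betab^*)+\{\text{linearization difference between }\betab^*\text{ and }\theta\}$. That split only needs RLH at pure multiples of $\widehat\delta_j$ and of $d$, which do lie in the cone, and the constants $2M$ and $M$ should still emerge.
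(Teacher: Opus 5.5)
The gradient identity \eqref{eq:general-surrogate-gradient} and its consequences match the paper's proof: the induction, the rewriting into \eqref{eq:surrogate-gradient-remainder}, the ``in particular'' display, and the bound \eqref{eq:R-beta-star-bound}. Your integral formula for $R_j(\theta)$ has the wrong overall sign, which is harmless under $\|\cdot\|_\infty$.

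The primary route to \eqref{eq:R-general-center-bound} fails for two reasons. The first is the one you flag: $\vb_t=(1-t)d+t\widehat\delta_j$ need not lie in $\CC(l)$, and it sits inside the Hessian's argument, so no homogeneity argument rescues it. The issue with $\wb=\widehat\delta_j-d$ alone could be handled by linearity in $\wb$. The second reason you do not flag: the arithmetic is wrong. With $a=\|\widehat\delta_j\|_2$ and $b=\|d\|_2$,
\[
(a+b)\bigl(\tfrac32a+\tfrac12b\bigr)=\tfrac32a^2+2ab+\tfrac12b^2 .
\]
Applying $2ab\le a^2+b^2$ gives only $\tfrac52Ma^2+\tfrac32Mb^2$. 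No weighting of AM--GM can do better, because at $a=b$ your bound equals $4Ma^2$, which exceeds $2Ma^2+Mb^2=3Ma^2$. So even if cone membership were granted, the one-step bound does not imply the stated constants. The cross term is doubled because $\|\wb\|_2\le a+b$ multiplies $\|\widehat\delta_j\|_2+\|\vb_t\|_2$.

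Your fallback is exactly the paper's argument, and it should be the main proof. Write
\[
R_j(\theta)-R_j(\betab^*)=\bigl\{\nabla^2\widehat{\mc L}^{(j)}(\widehat\betab^{(j)})-\nabla^2\widehat{\mc L}^{(j)}(\betab^*)\bigr\}d-\int_0^1\bigl\{\nabla^2\widehat{\mc L}^{(j)}(\betab^*+td)-\nabla^2\widehat{\mc L}^{(j)}(\betab^*)\bigr\}d\,{\rm d}t .
\]
This uses RLH only at the pairs $(\vb,\wb)=(\widehat\delta_j,d)$ and $(td,d)$, all of which lie in $\BB(R_0)\cap\CC(l)$. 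It gives $\|R_j(\theta)-R_j(\betab^*)\|_\infty\le Mab+\tfrac12Mb^2$. Adding $\tfrac32Ma^2$ from \eqref{eq:R-beta-star-bound} and using $ab\le(a^2+b^2)/2$ yields exactly $2Ma^2+Mb^2$. Here the cross term enters with coefficient $1$ rather than $2$, and that is what makes the stated constants come out.
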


Building on this decomposition, the next lemma shows that the restricted strong convexity of the per-batch losses is inherited by the renewable surrogate, uniformly over the localized region.

\begin{lemma}[Curvature inherited by the renewable surrogate]
\label{lem:surrogate-curvature-inheritance}
Fix $1\le b\le B$, $R,\kappa>0$, and $l\ge0$.  Suppose
\begin{equation}
\widehat{\betab}^{(j)}-\betab^*
\in\BB(R)\cap\CC(l), j=1,\ldots,b-1,
\qquad\text{and}\qquad
\bigcap_{j=1}^{b}{\mc G}_{\rm rsc}^{(j)}(R,l,\kappa)
\label{eq:surrogate-curvature-sufficient-conditions}
\end{equation}
holds.  Then, for every pair $(\betab_1,\betab_2)$ satisfying
\[
\betab_1-\betab_2\in\BB(R)\cap\CC(l),
\qquad
\betab_2-\betab^*\in\BB(R)\cap\CC(l),
\]
the renewable surrogate satisfies
\begin{equation}
\left\langle
\nabla\widehat{\mc Q}_b(\betab_1)-
\nabla\widehat{\mc Q}_b(\betab_2),
\betab_1-\betab_2
\right\rangle
\ge
\kappa\|\betab_1-\betab_2\|_2^2.
\label{eq:inherited-surrogate-curvature}
\end{equation}
\end{lemma}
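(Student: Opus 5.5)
The plan is to work directly from the gradient representation \eqref{eq:general-surrogate-gradient} in Lemma~\ref{lem:decomposition}, which writes $\nabla\widehat{\mc Q}_b$ as the current-batch gradient plus, for each historical batch, an affine term in $\theta$. Fix $(\betab_1,\betab_2)$ as in the statement and write $u=\betab_1-\betab_2$; if $u=\mathbf 0$ there is nothing to prove. Every historical contribution $\nabla\widehat{\mc L}^{(j)}(\widehat{\betab}^{(j)})+\nabla^2\widehat{\mc L}^{(j)}(\widehat{\betab}^{(j)})(\theta-\widehat{\betab}^{(j)})$ is affine in $\theta$, so its constant part cancels when we take the difference at $\betab_1$ and $\betab_2$. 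This gives
\[
\left\langle \nabla\widehat{\mc Q}_b(\betab_1)-\nabla\widehat{\mc Q}_b(\betab_2),u\right\rangle
=\frac1{N_b}\Big[n_b\big\langle\nabla\widehat{\mc L}^{(b)}(\betab_1)-\nabla\widehat{\mc L}^{(b)}(\betab_2),u\big\rangle+\sum_{j=1}^{b-1}n_j\,u^\T\nabla^2\widehat{\mc L}^{(j)}(\widehat{\betab}^{(j)})u\Big].
\]
When $b=1$ the sum is empty and $\widehat{\mc Q}_1=\widehat{\mc L}^{(1)}$, so that case follows directly from ${\mc G}_{\rm rsc}^{(1)}(R,l,\kappa)$.

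For the current-batch term, the pair $(\betab_1,\betab_2)$ satisfies exactly the requirements of Definition~\ref{def:local_rsc}. On ${\mc G}_{\rm rsc}^{(b)}(R,l,\kappa)$ the bracketed inner product is therefore at least $\kappa\|u\|_2^2$.

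The historical quadratic forms are the only real step. For each $j<b$, I will apply ${\mc G}_{\rm rsc}^{(j)}(R,l,\kappa)$ to the pair $(\widehat{\betab}^{(j)}+tu,\ \widehat{\betab}^{(j)})$ with $t\in(0,1]$. The requirements hold as follows:
\begin{itemize}
\item[(a)] $\widehat{\betab}^{(j)}-\betab^*\in\BB(R)\cap\CC(l)$ is assumed in \eqref{eq:surrogate-curvature-sufficient-conditions}.
\item[(b)] $tu\in\BB(R)\cap\CC(l)$, because $\CC(l)$ is a cone and $\|tu\|_2\le\|u\|_2\le R$.
\end{itemize}
This yields $\langle\nabla\widehat{\mc L}^{(j)}(\widehat{\betab}^{(j)}+tu)-\nabla\widehat{\mc L}^{(j)}(\widehat{\betab}^{(j)}),tu\rangle\ge\kappa t^2\|u\|_2^2$. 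Dividing by $t^2$ and letting $t\downarrow0$, twice continuous differentiability of $\ell$ (hence of $\widehat{\mc L}^{(j)}$) gives $t^{-1}\{\nabla\widehat{\mc L}^{(j)}(\widehat{\betab}^{(j)}+tu)-\nabla\widehat{\mc L}^{(j)}(\widehat{\betab}^{(j)})\}\to\nabla^2\widehat{\mc L}^{(j)}(\widehat{\betab}^{(j)})u$. Hence $u^\T\nabla^2\widehat{\mc L}^{(j)}(\widehat{\betab}^{(j)})u\ge\kappa\|u\|_2^2$.

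Summing the current-batch and historical bounds gives a total of at least $N_b^{-1}(n_b+\sum_{j<b}n_j)\kappa\|u\|_2^2=\kappa\|u\|_2^2$, which is \eqref{eq:inherited-surrogate-curvature}. The main point of care is the limiting argument. Local RSC is stated only for finite differences, so we must check that the scaled directions $tu$ stay admissible; the cone structure of $\CC(l)$ and the radius bound make this automatic. No RLH event is needed, since the surrogate's historical part is exactly quadratic.
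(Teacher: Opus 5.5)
Your proposal is correct and follows essentially the same route as the paper's proof. Both difference the gradient identity \eqref{eq:general-surrogate-gradient} and apply ${\mc G}_{\rm rsc}^{(b)}$ directly to the current batch. Both get the historical Hessian bounds by applying ${\mc G}_{\rm rsc}^{(j)}$ to the pairs $(\widehat{\betab}^{(j)}+tu,\widehat{\betab}^{(j)})$, dividing by $t^2$, and letting $t\downarrow0$, then finish with $\sum_{j\le b}n_j=N_b$.
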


Given a nonnegative weight vector
$\lambdab=(\lambda_1,\ldots,\lambda_p)^\T$ with
$\lambdab_{\mc M}=0$, consider
\begin{equation}
\min_{\betab\in\RR^p}
\left\{
\widehat{\mc Q}_B(\betab)+\|\lambdab\circ\betab\|_1
\right\},
\label{appendix.unconstrained.program}
\end{equation}
and its constrained analogue
\begin{equation}
\min_{\betab\in\RR^p}
\left\{
\widehat{\mc Q}_B(\betab)+\|\lambdab\circ\betab\|_1
\right\}
\quad\text{subject to}\quad
\Cb\betab_{\mc M}=\tb.
\label{appendix.constrained.program}
\end{equation}

\begin{lemma}
\label{lem:intermediate.rate}
Let $\theta\in\RR^p$ satisfy
$\theta_{\overline{\mc M}^{\,c}}=0$. For the reduced problem, further assume that
$\Cb\theta_{\mc M}=\tb$. Let ${\mc E}\supset\overline{\mc M}$,
$d=|\mc E|$, and suppose that
\[
\|\lambdab\|_\infty\le\lambda,
\qquad
\|\lambdab_{{\mc E}^c}\|_{\min}\ge\rho\lambda,
\qquad
0<\rho\le1.
\]
Define $l=(2+2/\rho)d^{1/2}$, and assume that
\[
\big\|\nabla \widehat{\mc Q}_B(\theta)\|_\infty\le\frac12\rho\lambda,
\qquad
\theta-\betab^*\in\BB(R_0)\cap\CC(l).
\]

\begin{enumerate}[label=(\roman*)]
\item Let $\widetilde{\betab}_a$ be any solution to
\eqref{appendix.unconstrained.program}. Then
$\widetilde{\betab}_a-\theta\in\CC(l)$. 
Moreover, if \eqref{eq:inherited-surrogate-curvature} holds with $(b,R)=(B,R_0)$ and
\[
\kappa^{-1}
\left\{
\|\lambdab_{\mc S}\|_2+\big\|\nabla_{\mc E} \widehat{\mc Q}_B(\theta)\big\|_2
\right\}<R_0,
\]
then
\begin{equation}
\|\widetilde{\betab}_a-\theta\|_2
\le
\kappa^{-1}
\left\{
\|\lambdab_{\mc S}\|_2+\big\|\nabla_{\mc E} \widehat{\mc Q}_B(\theta)\big\|_2
\right\}.
\label{eq:weighted-full-bound}
\end{equation}

\item Let $\widetilde{\betab}_0$ be any solution to \eqref{appendix.constrained.program}. 
Then $\widetilde{\betab}_0-\theta\in\CC(l)$. 
Under the same inherited-curvature condition, if
\[
\kappa^{-1}
\left\{
\|\lambdab_{\mc S}\|_2
+
\|\Zb^\T\nabla_{\mc M}\widehat{\mc Q}_B(\theta)\|_2+\|\nabla_{{\mc E}\setminus\mc M}\widehat{\mc Q}_B(\theta)\|_2
\right\}<R_0,
\]
then
\begin{equation}
\|\widetilde{\betab}_0-\theta\|_2 \le \kappa^{-1} \bigg\{\|\lambdab_{\mc S}\|_2 +\|\Zb^\T\nabla_{\mc M}\widehat{\mc Q}_B(\theta)\|_2+\|\nabla_{{\mc E}\setminus\mc M}\widehat{\mc Q}_B(\theta)\|_2 \bigg\}.
\label{eq:weighted-reduced-bound}
\end{equation}
\end{enumerate}
\end{lemma}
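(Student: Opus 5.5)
The plan is to run the standard weighted-$\ell_1$ argument in two stages. First I derive the cone condition from the basic inequality, using only convexity of $\widehat{\mc Q}_B$. Then I obtain the $\ell_2$ bound from a localized first-order (KKT) inequality combined with the inherited curvature of Lemma~\ref{lem:surrogate-curvature-inheritance}. Throughout, write $\Delta=\widetilde{\betab}_e-\theta$ for $e\in\{a,0\}$.

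Convexity of $\widehat{\mc Q}_B$ holds for the following reason. By \eqref{eq:general-surrogate-gradient}, $\widehat{\mc Q}_B$ is the sum of three pieces:
\begin{itemize}
\item the convex term $n_B\widehat{\mc L}^{(B)}/N_B$;
\item a quadratic with matrix $\widehat{\Jb}^{(B-1)}/N_B$, which is positive semidefinite because it is a sum of Hessians of convex losses;
\item a linear term.
\end{itemize}

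\textbf{Cone step.} Since $\theta$ is feasible for both programs (for the reduced one because $\Cb\theta_{\mc M}=\tb$), optimality gives
\[
\widehat{\mc Q}_B(\widetilde{\betab}_e)-\widehat{\mc Q}_B(\theta)\le \|\lambdab\circ\theta\|_1-\|\lambdab\circ\widetilde{\betab}_e\|_1 .
\]
Convexity bounds the left side below by $\langle\nabla\widehat{\mc Q}_B(\theta),\Delta\rangle\ge-\tfrac12\rho\lambda\|\Delta\|_1$.

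For the right side, use that $\theta$ vanishes off $\overline{\mc M}\subset\mc E$ and that $\lambdab_{\mc M}=0$. These give
\[
\|\lambdab\circ\widetilde{\betab}_e\|_1-\|\lambdab\circ\theta\|_1\ge \rho\lambda\|\Delta_{\mc E^c}\|_1-\lambda\|\Delta_{\mc E}\|_1 .
\]
Combining the two displays yields $\|\Delta_{\mc E^c}\|_1\le(1+2/\rho)\|\Delta_{\mc E}\|_1$. Hence
\[
\|\Delta\|_1\le(2+2/\rho)\|\Delta_{\mc E}\|_1\le(2+2/\rho)d^{1/2}\|\Delta\|_2 ,
\]
that is, $\Delta\in\CC(l)$. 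The same reasoning applies verbatim to every point $\theta+t\Delta$ with $t\in[0,1]$, because convexity of the objective keeps its value at or below the value at $\theta$.

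\textbf{$\ell_2$ step.} Denote by $r_e$ the right-hand side of \eqref{eq:weighted-full-bound} or \eqref{eq:weighted-reduced-bound}, as appropriate; by assumption $r_e<R_0$.
\begin{enumerate}
\item Suppose, for contradiction, that $\|\Delta\|_2>r_e$. Choose $t\in(0,1)$ so that $\|t\Delta\|_2=r_e$, and set $\betab_t=\theta+t\Delta$.
\item Then $\betab_t-\theta\in\BB(R_0)\cap\CC(l)$, and $\theta-\betab^*\in\BB(R_0)\cap\CC(l)$ by hypothesis. So \eqref{eq:inherited-surrogate-curvature} gives
\[
\kappa t^2\|\Delta\|_2^2\le t\langle\nabla\widehat{\mc Q}_B(\betab_t)-\nabla\widehat{\mc Q}_B(\theta),\Delta\rangle .
\]
\item Convexity along the segment gives $\langle\nabla\widehat{\mc Q}_B(\betab_t),\Delta\rangle\le\langle\nabla\widehat{\mc Q}_B(\widetilde{\betab}_e),\Delta\rangle$.
\item The KKT condition at $\widetilde{\betab}_e$ reads $\nabla\widehat{\mc Q}_B(\widetilde{\betab}_e)+\lambdab\circ\xib+(\Cb^\T\mub,\mathbf 0)=0$, where $\xib$ is a subgradient of $\|\cdot\|_1$ and the multiplier term $\mub$ is present only in the reduced case. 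The multiplier term is orthogonal to $\Delta$, because $\Cb\Delta_{\mc M}=\mathbf 0$.
\end{enumerate}
It follows that $\kappa t\|\Delta\|_2^2\le-\langle\lambdab\circ\xib,\Delta\rangle-\langle\nabla\widehat{\mc Q}_B(\theta),\Delta\rangle$.

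I bound the right side coordinatewise.
\begin{itemize}
\item For $j\notin\overline{\mc M}$ we have $\theta_j=0$, so $\xi_j\Delta_j=|\Delta_j|$ and the penalty contributes $-\lambda_j|\Delta_j|\le0$.
\item On $\mc S$ the penalty contributes at most $\|\lambdab_{\mc S}\|_2\|\Delta_{\mc S}\|_2$, and on $\mc M$ it contributes nothing.
\item On $\mc E^c$, the gradient term $|\nabla_j\widehat{\mc Q}_B(\theta)\Delta_j|\le\tfrac12\rho\lambda|\Delta_j|\le\lambda_j|\Delta_j|$ is absorbed by the negative penalty contribution there.
\item The gradient term on $\mc E$ is at most $\|\nabla_{\mc E}\widehat{\mc Q}_B(\theta)\|_2\|\Delta\|_2$. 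In the reduced case, write $\Delta_{\mc M}=\Zb\Zb^\T\Delta_{\mc M}$, which is valid because $\Delta_{\mc M}\in\ker\Cb$. The $\mc M$-part then becomes $\langle\Zb^\T\nabla_{\mc M}\widehat{\mc Q}_B(\theta),\Zb^\T\Delta_{\mc M}\rangle$, and this is handled separately from the $\mc E\setminus\mc M$ part.
\end{itemize}
Dividing by $\|\Delta\|_2$ gives $\kappa t\|\Delta\|_2\le\kappa r_e$. Since $t\|\Delta\|_2=r_e$, the inequality is not strict, so I will take $t\Delta$ of norm slightly exceeding $r_e$ but still inside $\BB(R_0)$, which the strict assumption $r_e<R_0$ permits. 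This yields the contradiction, and therefore $\|\Delta\|_2\le r_e$.

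The main obstacle I expect is this localization step. Curvature is available only on $\BB(R_0)\cap\CC(l)$, so the argument has to pass through the intermediate point $\betab_t$, and it must be checked that $\betab_t$ inherits both the cone property and the first-order inequality from $\widetilde{\betab}_e$ through monotonicity of the gradient along the segment.
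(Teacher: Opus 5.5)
Your proof is correct and essentially the paper's. Both arguments obtain the cone condition from a basic inequality; you use the objective-value form, the paper the equivalent KKT/subgradient form. Both then localize to $\BB(R_0)$ along the segment from $\theta$ to $\widetilde{\betab}_e$, use gradient monotonicity along that segment (the Lemma F.2 step from Fan et al.\ that the paper invokes), apply the inherited curvature, and finish with the same coordinatewise KKT bound, including $\Delta_{\mc M}=\Zb\Zb^\T\Delta_{\mc M}$ in the reduced case. Your final repair---choosing $t\le 1$ with $r_e<t\|\Delta\|_2\le\min\{R_0,\|\Delta\|_2\}$---is valid. The paper does the same in one step by taking $t=\min\{1,R_0/\|\Delta\|_2\}$ and deducing $t=1$ from $\|t\Delta\|_2\le r_e<R_0$.
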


Recall that $\betab^\dagger$ is defined as
\$
\betab^\dagger_{\mc M} = \betab^*_{\mc M} - \Cb^\T(\Cb \Cb^\T)^{-1}\hb_B, ~~\mbox{ and }~~ \betab^\dagger_{{\mc M}^c} = \betab^*_{{\mc M}^c},
\$
so that $\Cb\betab^\dagger_{\mc M} = \tb$ and $\|\betab^\dagger - \betab^*\|_2 \leq C_{\max}\|\hb_B\|_2$.
Moreover, recall that $\omegab^* = N_B^{-1}\sum_{j = 1}^B n_j \nabla \widehat{\mc L}^{(j)}(\betab^*)$.
For the following auxiliary arguments, write
\$
\gb^* = \nabla \widehat{\mc Q}_B(\betab^*), \quad \mbox{ and } \quad \gb^\dagger = \nabla \widehat{\mc Q}_B(\betab^\dagger).
\$
Moreover, recall that
\$
\Gamma_B = \Gamma_B(M, \eta_\infty) = 2MC_{\max}^2 \|\hb_B\|_2^2 + C_{\max}(J_u + \sqrt{m}\eta_\infty)\|\hb_B\|_2.
\$

\begin{lemma}
\label{lem:omega_discrepancy}
Suppose, for $j=1,\ldots,B-1$, that
\[
\widehat{\betab}^{(j)}-\betab^*
\in
\BB(R_0)\cap\CC(l).
\]
On the event
\[
\bigcap_{b=1}^{B-1}{\mc E}_{\rm rlh}^{(b)}(R_0,l,M),
\]
we have
\begin{equation}
\|\gb^* -\omegab^*\|_\infty
\le
\frac32M\phi^*.
\label{eq:gstar-score}
\end{equation}

Suppose
$\betab^\dagger - \betab^* \in\BB(R_0)\cap\CC(l)$.  
Under Assumption~\mainref{assm:hessian}, if, in addition,
\[
{\mc H}(\eta_\infty)
\cap
\bigcap_{b=1}^B{\mc E}_{\rm rlh}^{(b)}(R_0,l,M)
\]
holds, then
\begin{equation}
\left\|
\frac1{N_B}\sum_{b=1}^Bn_b
\left\{
\nabla\widehat{\mc L}^{(b)}(\betab^\dagger)
-
\nabla\widehat{\mc L}^{(b)}(\betab^*)
\right\}
\right\|_\infty
\le
\frac12MC_{\max}^2\|\hb_B\|_2^2
+
C_{\max}(J_u+\sqrt m\,\eta_\infty)\|\hb_B\|_2.
\label{eq:omega-transport}
\end{equation}
Moreover,
\begin{equation}
\|\gb^\dagger-\omegab^*\|_\infty
\le
2M\phi^*+\Gamma_B(M,\eta_\infty).
\label{eq:gdagger-score}
\end{equation}
Consequently,
\begin{equation}
\|\Zb^\T(\gb^\dagger-\omegab^*)_{\mc M}\|_2 + \|(\gb^\dagger-\omegab^*)_{\mc S}\|_2 \le
2(s+m)^{1/2}\{2M\phi^*+\Gamma_B(M,\eta_\infty)\}.
\label{eq:gdagger-block-score}
\end{equation}
\end{lemma}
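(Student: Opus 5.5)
The plan is to build all four bounds of Lemma~\ref{lem:omega_discrepancy} from the remainder decomposition in Lemma~\ref{lem:decomposition}, using the local restricted Lipschitz Hessian property to turn linearization errors into quadratic terms.

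\textbf{Bound \eqref{eq:gstar-score}.} Apply \eqref{eq:surrogate-gradient-remainder} with $b=B$ and $\theta=\betab^*$. This gives
\[
\gb^*-\omegab^* = N_B^{-1}\sum_{j=1}^{B-1}n_jR_j(\betab^*).
\]
For each $j\le B-1$, the localization hypothesis together with ${\mc E}^{(j)}_{\rm rlh}$ lets us invoke \eqref{eq:R-beta-star-bound}. Summing gives the bound $\tfrac32 M N_B^{-1}\sum_j n_j\|\widehat\delta_j\|_2^2=\tfrac32M\phi^*$.

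\textbf{Bound \eqref{eq:omega-transport}.} Set $\vb=\betab^\dagger-\betab^*$, which satisfies $\|\vb\|_2\le C_{\max}\|\hb_B\|_2$. For each $b$, write
\[
\nabla\widehat{\mc L}^{(b)}(\betab^\dagger)-\nabla\widehat{\mc L}^{(b)}(\betab^*)=\int_0^1\nabla^2\widehat{\mc L}^{(b)}(\betab^*+t\vb)\,\vb\,dt .
\]
Split the integrand as
\[
\{\nabla^2\widehat{\mc L}^{(b)}(\betab^*+t\vb)-\nabla^2\widehat{\mc L}^{(b)}(\betab^*)\}\vb \;+\; \nabla^2\widehat{\mc L}^{(b)}(\betab^*)\vb .
\]
\begin{itemize}
\item The first piece has $\ell_\infty$ norm at most $Mt\|\vb\|_2^2$ by ${\mc E}^{(b)}_{\rm rlh}$. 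This is valid because $t\vb\in\BB(R_0)\cap\CC(l)$: the cone $\CC(l)$ is scale invariant. Integrating over $t$ yields $\tfrac12 MC_{\max}^2\|\hb_B\|_2^2$.
\item For the second piece, average over batches with weights $n_b/N_B$. This gives $(\Jb+\Delta)\vb$ with $\|\Delta\|_{\max}\le\eta_\infty$ on ${\mc H}(\eta_\infty)$.
\end{itemize}
The support of $\vb$ is contained in $\mc M$, which has size $m$. Hence
\[
\|\Delta\vb\|_\infty\le\eta_\infty\|\vb\|_1\le\sqrt m\,\eta_\infty\|\vb\|_2,
\]
and $\|\Jb\vb\|_\infty\le\|\Jb\vb\|_2\le J_u\|\vb\|_2$ by Assumption~\ref{assm:hessian}. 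Together these give \eqref{eq:omega-transport}.

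\textbf{Bound \eqref{eq:gdagger-score}.} Use \eqref{eq:surrogate-gradient-remainder} at $\theta=\betab^\dagger$:
\[
\gb^\dagger-\omegab^*=\frac1{N_B}\sum_{b=1}^B n_b\{\nabla\widehat{\mc L}^{(b)}(\betab^\dagger)-\nabla\widehat{\mc L}^{(b)}(\betab^*)\}+\frac1{N_B}\sum_{j=1}^{B-1}n_jR_j(\betab^\dagger).
\]
The first term is controlled by \eqref{eq:omega-transport}. For the second term, \eqref{eq:R-general-center-bound} with $d=\vb$ gives
\[
\|R_j(\betab^\dagger)\|_\infty\le 2M\|\widehat\delta_j\|_2^2+MC_{\max}^2\|\hb_B\|_2^2 .
\]
Summing with weights $n_j/N_B$, and noting $N_{B-1}/N_B\le1$, contributes at most $2M\phi^*+MC_{\max}^2\|\hb_B\|_2^2$. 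Adding the transport bound, the total quadratic coefficient in $\|\hb_B\|_2$ is $\tfrac32 MC_{\max}^2\le 2MC_{\max}^2$, so the sum is at most $2M\phi^*+\Gamma_B$.

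\textbf{Bound \eqref{eq:gdagger-block-score}.} Write $\ub=\gb^\dagger-\omegab^*$. Since $\Zb$ has orthonormal columns,
\[
\|\Zb^\T\ub_{\mc M}\|_2\le\|\ub_{\mc M}\|_2\le\sqrt m\,\|\ub\|_\infty,
\qquad
\|\ub_{\mc S}\|_2\le\sqrt s\,\|\ub\|_\infty .
\]
Therefore the left side is at most $(\sqrt m+\sqrt s)\|\ub\|_\infty\le 2\sqrt{s+m}\,\|\ub\|_\infty$, and \eqref{eq:gdagger-score} finishes the proof.

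The only delicate point, where I expect most of the care to go, is checking that the RLH event applies to $t\vb$ and to $\vb$ itself. This requires that they lie in $\CC(l)$ and $\BB(R_0)$, which follows from the stated assumption $\betab^\dagger-\betab^*\in\BB(R_0)\cap\CC(l)$ and scale invariance of the cone. It also requires that ${\mc E}^{(B)}_{\rm rlh}$ be included for the transport step, which is why the second part intersects over all $b\le B$ rather than $b\le B-1$.
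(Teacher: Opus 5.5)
Your proposal is correct and follows essentially the same route as the paper. In both, the identity \eqref{eq:surrogate-gradient-remainder} is applied at $\betab^*$ and at $\betab^\dagger$. The transport term is then split into three pieces: an RLH piece giving $\tfrac12 M\|\betab^\dagger-\betab^*\|_2^2$, a pooled-Hessian concentration piece on $\mc H(\eta_\infty)$ that uses $\operatorname{supp}(\betab^\dagger-\betab^*)\subseteq\mc M$, and a $J_u$ piece. Finally, the block bound uses $(\sqrt m+\sqrt s)\|\cdot\|_\infty\le 2\sqrt{s+m}\,\|\cdot\|_\infty$. The only cosmetic difference is that you apply the RLH bound batch by batch before averaging, whereas the paper applies it directly to the pooled Hessian $\overline{\Hb}_B$; the two are equivalent.
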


\begin{lemma}
\label{lem:intermediate.rate2}
Let $e\in\{0,a\}$ and define
\[
\mathfrak w_e^{\rm ora}
=
\nabla\widehat{\mc Q}_B(\widehat{\betab}^{\rm ora}_e).
\]
Let ${\mc E}\supset\overline{\mc M}$, set $d=|{\mc E}|$, and suppose that
\[
\|\lambdab\|_\infty\leq\lambda,
\qquad
\|\lambdab_{{\mc E}^c}\|_{\min}\geq\rho\lambda,
\qquad
\big\|\mathfrak w_{e,\overline{\mc M}^{\,c}}^{\rm ora}\big\|_\infty
\leq\frac{1}{2}\rho\lambda,
\qquad
\lambda>0,
\qquad
0<\rho\leq1.
\]
Define $l=(2+2/\rho)d^{1/2}$. Assume that
\[
\widehat{\betab}^{\rm ora}_e-\betab^*
\in
\BB(R_0)\cap\CC(l),
\]
and, for each $j=1,\ldots,B-1$, that
\[
\widehat{\betab}^{(j)}-\betab^*
\in
\BB(R_0)\cap\CC(l).
\]
Then any solution $\widetilde{\betab}_e$ to the corresponding weighted program---\eqref{appendix.unconstrained.program} when $e=a$ and \eqref{appendix.constrained.program} when $e=0$---satisfies
\[
\widetilde{\betab}_e-\widehat{\betab}^{\rm ora}_e
\in
\CC(l).
\]
If the curvature inequality \eqref{eq:inherited-surrogate-curvature} holds with $(b,R)=(B,R_0)$ and
\[
\kappa^{-1}
\left\{
\|\lambdab_{\mc S}\|_2
+
\|\mathfrak w^{\rm ora}_{e,{\mc E}\setminus\overline{\mc M}}\|_2
\right\}
<R_0,
\]
then
\begin{equation}
\|\widetilde{\betab}_e-\widehat{\betab}^{\rm ora}_e\|_2
\le
\kappa^{-1}
\left\{
\|\lambdab_{\mc S}\|_2
+
\|\mathfrak w^{\rm ora}_{e,{\mc E}\setminus\overline{\mc M}}\|_2
\right\}.
\label{eq:oracle-centered-weighted-bound}
\end{equation}
For the reduced oracle,
$\Zb^\T\mathfrak w^{\rm ora}_{0,\mc M}=0$ and
$\mathfrak w^{\rm ora}_{0,\mc S}=0$; for the full oracle,
$\mathfrak w^{\rm ora}_{a,\overline{\mc M}}=0$.
\end{lemma}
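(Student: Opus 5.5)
We prove Lemma~\ref{lem:intermediate.rate2} by the argument behind Lemma~\ref{lem:intermediate.rate}, recentred at the oracle estimator $\widehat{\betab}^{\rm ora}_e$ instead of an arbitrary feasible $\theta$. The plan is to use the KKT conditions of both the oracle problem and the weighted program, derive a cone condition, and then obtain the $\ell_2$ bound from curvature.

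\emph{Step 1: oracle KKT identities.} These are stated last in the lemma but are needed first.
\begin{itemize}
\item For the full oracle, $\widehat{\betab}^{\rm ora}_a$ minimizes the smooth convex function $\widehat{\mc Q}_B$ over the coordinate subspace $\{\betab_{\overline{\mc M}^c}=\mathbf 0\}$. Hence $\mathfrak w^{\rm ora}_{a,\overline{\mc M}}=0$.
\item For the reduced oracle, the feasible set is $\{\betab_{\overline{\mc M}^c}=0,\ \Cb\betab_{\mc M}=\tb\}$. Its tangent directions are $\Zb u$ on $\mc M$ together with free directions on $\mc S$. Stationarity therefore gives $\Zb^\T\mathfrak w^{\rm ora}_{0,\mc M}=0$ and $\mathfrak w^{\rm ora}_{0,\mc S}=0$.
\end{itemize}
Since $\widehat{\mc Q}_B$ is convex (the batch losses are convex and $\widehat{\Jb}^{(B-1)}\succeq0$), these first-order conditions characterize the minimizers.

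\emph{Step 2: cone condition.} Let $\Delta=\widetilde{\betab}_e-\widehat{\betab}^{\rm ora}_e$. In the constrained case, $\Delta_{\mc M}\in\ker\Cb$, so $\Delta_{\mc M}=\Zb\Zb^\T\Delta_{\mc M}$.
\begin{itemize}
\item By optimality of $\widetilde{\betab}_e$ and convexity of $\widehat{\mc Q}_B$,
\[
0\le\langle\nabla\widehat{\mc Q}_B(\widetilde{\betab}_e)-\mathfrak w^{\rm ora}_e,\Delta\rangle\le -\langle\mathfrak w^{\rm ora}_e,\Delta\rangle+\|\lambdab\circ\widehat{\betab}^{\rm ora}_e\|_1-\|\lambdab\circ\widetilde{\betab}_e\|_1 .
\]
\item By Step 1, $\langle\mathfrak w^{\rm ora}_e,\Delta\rangle$ reduces to $\langle\mathfrak w^{\rm ora}_{e,\overline{\mc M}^c},\Delta_{\overline{\mc M}^c}\rangle$. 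The $\mc M$ block vanishes via $\Zb\Zb^\T$, and the $\mc S$ block vanishes directly.
\item Because $\widehat{\betab}^{\rm ora}_{e,\overline{\mc M}^c}=0$ and $\lambdab_{\mc M}=0$, the penalty difference is at most $\|\lambdab_{\mc S}\circ\Delta_{\mc S}\|_1-\|\lambdab_{\overline{\mc M}^c}\circ\Delta_{\overline{\mc M}^c}\|_1$.
\item Split $\overline{\mc M}^c$ into ${\mc E}\setminus\overline{\mc M}$ and ${\mc E}^c$. Using $\|\mathfrak w_{\overline{\mc M}^c}\|_\infty\le\rho\lambda/2$ and $\lambdab_{{\mc E}^c}\ge\rho\lambda$ gives
\[
\tfrac{\rho\lambda}{2}\|\Delta_{{\mc E}^c}\|_1\le \lambda\|\Delta_{\mc S}\|_1+\tfrac{\rho\lambda}{2}\|\Delta_{{\mc E}\setminus\overline{\mc M}}\|_1 .
\]
\item Hence $\|\Delta_{{\mc E}^c}\|_1\le(2/\rho+1)\|\Delta_{\mc E}\|_1$. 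Then $\|\Delta\|_1\le(2+2/\rho)\|\Delta_{\mc E}\|_1\le(2+2/\rho)d^{1/2}\|\Delta\|_2$, which is $\Delta\in\CC(l)$.
\end{itemize}

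\emph{Step 3: curvature and the rate.} Apply the inherited curvature \eqref{eq:inherited-surrogate-curvature} with $\betab_2=\widehat{\betab}^{\rm ora}_e$ and $\betab_1=\widetilde{\betab}_e$. The hypothesis $\widehat{\betab}^{\rm ora}_e-\betab^*\in\BB(R_0)\cap\CC(l)$ supplies the anchoring requirement. The main obstacle is that the curvature needs $\Delta\in\BB(R_0)$, which is not yet known.

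To handle this, use the standard localization trick from Lemma~\ref{lem:intermediate.rate}:
\begin{itemize}
\item Suppose $\|\Delta\|_2>R_0$. Set $\widetilde\Delta=t\Delta$ with $t\in(0,1)$ chosen so that $\|\widetilde\Delta\|_2=R_0$; this still lies in the cone.
\item The point $\widehat{\betab}^{\rm ora}_e+\widetilde\Delta$ is a convex combination of $\widehat{\betab}^{\rm ora}_e$ and $\widetilde{\betab}_e$. By convexity of the objective, the inequality of Step 2 persists along the segment: the map $t\mapsto\langle\nabla\widehat{\mc Q}_B(\widehat{\betab}^{\rm ora}_e+t\Delta)-\mathfrak w,\Delta\rangle$ is monotone in $t$.
\end{itemize}

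Curvature then gives
\[
\kappa\|\widetilde\Delta\|_2^2\le \|\lambdab_{\mc S}\|_2\|\widetilde\Delta_{\mc S}\|_2+\|\mathfrak w^{\rm ora}_{e,{\mc E}\setminus\overline{\mc M}}\|_2\|\widetilde\Delta_{{\mc E}\setminus\overline{\mc M}}\|_2 .
\]
Here the ${\mc E}^c$ terms are discarded, since their net contribution is nonpositive by Step 2's bound $\rho\lambda-\rho\lambda/2>0$ on the penalty. This yields $\|\widetilde\Delta\|_2\le\kappa^{-1}\{\cdots\}<R_0$, a contradiction. So $\|\Delta\|_2\le R_0$, and the same display with $\widetilde\Delta=\Delta$ gives \eqref{eq:oracle-centered-weighted-bound}.
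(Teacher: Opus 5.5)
Your proposal is correct and follows essentially the same route as the paper. It uses the oracle KKT identities to eliminate the score on $\overline{\mc M}$ (via $\Zb\Zb^\T$ in the reduced case), then a basic inequality centred at $\widehat{\betab}^{\rm ora}_e$ yielding the cone condition, and finally the radial localization along the segment (convexity plus inherited curvature forcing the scaling factor to equal one) to obtain \eqref{eq:oracle-centered-weighted-bound}. The only cosmetic difference is that you express optimality through the penalty difference $\|\lambdab\circ\widehat{\betab}^{\rm ora}_e\|_1-\|\lambdab\circ\widetilde{\betab}_e\|_1$ rather than through the explicit subgradient $\xib$; the two are equivalent.
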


\begin{lemma}
\label{lem:oracle_rate}
Assume $l\ge (m+s)^{1/2}$, 
% and, for $j=1,\ldots,B-1$, that
% \[
% \widehat{\betab}^{(j)}-\betab^*
% \in
% \BB(R_0)\cap\CC(l),
% \]
and the curvature inequality \eqref{eq:inherited-surrogate-curvature} holds with $(b,R) = (B, R_0)$.

For the full oracle, suppose
\[
\kappa^{-1}\big\|\gb^*_{\overline{\mc M}}\big\|_2<R_0.
\]
Then
\begin{equation}
\|\widehat{\betab}^{\rm ora}_a-\betab^*\|_2
\le
\kappa^{-1}\|\gb^*_{\overline{\mc M}}\|_2.
\label{eq:full-oracle-rate-first}
\end{equation}
If, in addition,
\[
\bigcap_{b=1}^{B-1}{\mc E}_{\rm rlh}^{(b)}(R_0,l,M)
\]
holds, then
\begin{equation}
\|\widehat{\betab}^{\rm ora}_a-\betab^*\|_2
\le
\kappa^{-1}
\left\{
\|\omegab^*_{\overline{\mc M}}\|_2
+2M(s+m)^{1/2}\phi^*
\right\},
\label{eq:full-oracle-rate}
\end{equation}

For the reduced oracle, suppose $\betab^\dagger-\betab^*\in\BB(R_0)\cap\CC(l)$ and
\[
\kappa^{-1}
\left\{
\|\Zb^\T\gb^\dagger_{\mc M}\|_2
+
\|\gb^\dagger_{\mc S}\|_2
\right\}<R_0.
\]
Then
\begin{equation}
\begin{split}
\|\widehat{\betab}^{\rm ora}_0-\betab^*\|_2
\le
&\kappa^{-1}
\left\{
\|\Zb^\T\gb^\dagger_{\mc M}\|_2
+
\|\gb^\dagger_{\mc S}\|_2
\right\}
+\|\betab^\dagger - \betab^*\|_2.
\end{split}
\label{eq:reduced-oracle-rate-first}
\end{equation}
Under Assumption~\mainref{assm:hessian}, if, in addition,
\[
{\mc H}(\eta_\infty)
\cap
\bigcap_{b=1}^B{\mc E}_{\rm rlh}^{(b)}(R_0,l,M)
\]
holds, then
\begin{equation}
\|\widehat{\betab}^{\rm ora}_0-\betab^*\|_2
\le \kappa^{-1}
\big\{
\|\Zb^\T\omegab^*_{\mc M}\|_2
+
\|\omegab^*_{\mc S}\|_2
+2(s+m)^{1/2}
\left[2M\phi^*+\Gamma_B\right]
\big\}
+\|\betab^\dagger - \betab^*\|_2.
\label{eq:reduced-oracle-rate}
\end{equation}
\end{lemma}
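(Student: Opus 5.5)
The plan is to reduce both oracle bounds to the standard ``first-order optimality plus restricted strong convexity'' argument on the oracle subspace $\{\betab:\betab_{\overline{\mc M}^c}=\mathbf 0\}$, and then to convert surrogate gradients into pooled scores via Lemma~\ref{lem:omega_discrepancy}.

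\emph{Full oracle.} Set $\Delta=\widehat{\betab}^{\rm ora}_a-\betab^*$. Since $\betab^*_{\overline{\mc M}^c}=\mathbf 0$, $\Delta$ is supported on $\overline{\mc M}$, so $\|\Delta\|_1\le(m+s)^{1/2}\|\Delta\|_2\le l\|\Delta\|_2$ and $\Delta\in\CC(l)$. To handle the localization requirement $\Delta\in\BB(R_0)$, I will use a localization/convexity argument. Consider $\betab_t=\betab^*+t\Delta$, where $t\in(0,1]$ is chosen so that $\|t\Delta\|_2\le R_0$. The function $\widehat{\mc Q}_B$ restricted to the oracle subspace is convex, because the per-batch losses are convex and the quadratic and linear terms are convex. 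Hence $\langle\nabla_{\overline{\mc M}}\widehat{\mc Q}_B(\betab_t),t\Delta\rangle\le 0$, by monotonicity of the gradient along the segment together with the optimality condition $\nabla_{\overline{\mc M}}\widehat{\mc Q}_B(\widehat{\betab}^{\rm ora}_a)=0$. Applying \eqref{eq:inherited-surrogate-curvature} with $\betab_1=\betab_t$ and $\betab_2=\betab^*$ gives
\[
\kappa\|t\Delta\|_2^2\le\langle\nabla\widehat{\mc Q}_B(\betab_t)-\gb^*,t\Delta\rangle\le\|\gb^*_{\overline{\mc M}}\|_2\|t\Delta\|_2 .
\]
Thus $\|t\Delta\|_2\le\kappa^{-1}\|\gb^*_{\overline{\mc M}}\|_2<R_0$. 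If $\|\Delta\|_2>R_0$, choosing $t$ with $\|t\Delta\|_2=R_0$ gives a contradiction, so we may take $t=1$. This yields \eqref{eq:full-oracle-rate-first}. For \eqref{eq:full-oracle-rate}, bound
\[
\|\gb^*_{\overline{\mc M}}\|_2\le\|\omegab^*_{\overline{\mc M}}\|_2+(m+s)^{1/2}\|\gb^*-\omegab^*\|_\infty
\]
and invoke \eqref{eq:gstar-score}; the factor $\tfrac32\le 2$ is then absorbed.

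\emph{Reduced oracle.} Center at $\betab^\dagger$, which is feasible because $\Cb\betab^\dagger_{\mc M}=\tb$ and $\betab^\dagger_{\overline{\mc M}^c}=\mathbf 0$. Then $\Delta_0=\widehat{\betab}^{\rm ora}_0-\betab^\dagger$ lies in the feasible directions: $\Delta_{0,\mc M}\in\ker(\Cb)$, so $\Delta_{0,\mc M}=\Zb\Zb^\T\Delta_{0,\mc M}$, and $\Delta_0$ is supported on $\overline{\mc M}$. The KKT conditions give $\Zb^\T\nabla_{\mc M}\widehat{\mc Q}_B(\widehat{\betab}^{\rm ora}_0)=0$ and $\nabla_{\mc S}\widehat{\mc Q}_B(\widehat{\betab}^{\rm ora}_0)=0$, so the gradient at the oracle has zero inner product with every feasible direction. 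Repeating the truncation argument, with RSC now applied at the pair $(\betab^\dagger+t\Delta_0,\betab^\dagger)$, requires $\betab^\dagger-\betab^*\in\BB(R_0)\cap\CC(l)$, which is assumed. The linear term is bounded by
\[
|\langle\gb^\dagger,\Delta_0\rangle|=|\langle\Zb^\T\gb^\dagger_{\mc M},\Zb^\T\Delta_{0,\mc M}\rangle+\langle\gb^\dagger_{\mc S},\Delta_{0,\mc S}\rangle|\le(\|\Zb^\T\gb^\dagger_{\mc M}\|_2+\|\gb^\dagger_{\mc S}\|_2)\|\Delta_0\|_2 .
\]
This gives $\|\Delta_0\|_2\le\kappa^{-1}(\|\Zb^\T\gb^\dagger_{\mc M}\|_2+\|\gb^\dagger_{\mc S}\|_2)$, and the triangle inequality then yields \eqref{eq:reduced-oracle-rate-first}. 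Finally, \eqref{eq:reduced-oracle-rate} follows by writing $\gb^\dagger=\omegab^*+(\gb^\dagger-\omegab^*)$ and applying \eqref{eq:gdagger-block-score}.

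\emph{Main obstacle.} The delicate step is the use of RSC at $\betab_2=\betab^\dagger$ rather than at $\betab^*$. Lemma~\ref{lem:surrogate-curvature-inheritance} is stated for $\betab_2-\betab^*\in\BB(R)\cap\CC(l)$, which is exactly why the hypothesis on $\betab^\dagger$ is needed. A second point to check is that the truncated difference $t\Delta_0$ stays in $\CC(l)$. This holds because its support lies in $\overline{\mc M}$ and $l\ge(m+s)^{1/2}$.
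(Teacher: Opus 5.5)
Your proposal is correct and follows essentially the paper's argument: scale the oracle error into $\BB(R_0)$, combine convexity along the segment with the oracle KKT conditions (the multiplier term vanishing because $\Cb\Delta_{0,\mc M}=\mathbf 0$), apply the inherited curvature at $\betab^*$ or $\betab^\dagger$ to force the scaled point strictly inside the ball, and then pass to pooled scores via \eqref{eq:gstar-score} and \eqref{eq:gdagger-block-score}. The only difference is cosmetic: you state the convexity step as monotonicity of the directional derivative along the segment, whereas the paper quotes the equivalent scaling inequality (Lemma F.2 of the I-LAMM paper).
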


Recall from Section~\mainref{sec:method} that $ \Jb_0=\Jb_{\overline{\mc M}\overline{\mc M}}$, and $\Cb_0=(\Cb,\mathbf0_{r\times s})$, and define
\[
\Psi_0=\Cb_0\Jb_0^{-1}\Cb_0^\T, ~~\mbox{ and }~~
\Pb_0
=
\Jb_0^{-1/2}\Cb_0^\T\Psi_0^{-1}\Cb_0    \Jb_0^{-1/2}.
\]

\begin{lemma}[Kiefer--Bahadur representations for the testing oracles]
\label{lem:oracle_bahadur}
Assume that $\lambda_{\min}(\Jb_0)\ge J_l$, that the localization conditions
\[
\widehat{\betab}^{(b)}-\betab^*\in\BB(R_0)\cap\CC(l),
\quad b=1,\ldots,B-1,
\qquad
\widehat{\betab}^{\rm ora}_e-\betab^*\in\BB(R_0)\cap\CC(l),
\quad e\in\{0,a\},
\]
hold, and that the event
\[
\bigcap_{b=1}^B{\mc E}_{\rm rlh}^{(b)}(R_0,l,M)
\cap
{\mc H}(\eta_\infty)
\]
holds.
Then
\begin{equation}
\widehat{\betab}^{\rm ora}_{0,\overline{\mc M}}
-
\betab^*_{\overline{\mc M}}
=
\Jb_0^{-1/2}(\Pb_0-\Ib)\Jb_0^{-1/2}
\left\{
\frac1{N_B}\sum_{b=1}^Bn_b
\nabla_{\overline{\mc M}}\widehat{\mc L}^{(b)}(\betab^*)
\right\} -
\Jb_0^{-1}\Cb_0^\T\Psi_0^{-1}\hb_B+
\mc R_0,
\label{eq:reduced-oracle-bahadur}
\end{equation}
where
\begin{equation}
\begin{split}
\|\mc R_0\|_2
\lesssim
&\frac{Mq^{1/2}}{N_B}
\bigg[
\sum_{b=1}^{B-1}n_b
\left\{
\|\widehat{\betab}^{(b)}-\betab^*\|_2^2
+
\|\widehat{\betab}^{(b)}-\betab^*\|_2
\|\widehat{\betab}^{\rm ora}_0-\betab^*\|_2
\right\}
\\
&\hspace{32mm}
+n_B\|\widehat{\betab}^{\rm ora}_0-\betab^*\|_2^2
\bigg]
+
(s+m)\eta_\infty
\|\widehat{\betab}^{\rm ora}_0-\betab^*\|_2.
\end{split}
\label{eq:reduced-bahadur-rem}
\end{equation}
Moreover,
\begin{equation}
\widehat{\betab}^{\rm ora}_{a,\overline{\mc M}}
-
\betab^*_{\overline{\mc M}}
=
-
\Jb_0^{-1}
\left\{
\frac1{N_B}\sum_{b=1}^Bn_b
\nabla_{\overline{\mc M}}\widehat{\mc L}^{(b)}(\betab^*)
\right\}
+
\mc R_a,
\label{eq:full-oracle-bahadur}
\end{equation}
with
\begin{equation}
\begin{split}
\|\mc R_a\|_2
\lesssim
&\frac{M(m+s)^{1/2}}{N_B}
\bigg[
\sum_{b=1}^{B-1}n_b
\left\{
\|\widehat{\betab}^{(b)}-\betab^*\|_2^2
+
\|\widehat{\betab}^{(b)}-\betab^*\|_2
\|\widehat{\betab}^{\rm ora}_a-\betab^*\|_2
\right\}
\\
&\hspace{32mm}
+n_B\|\widehat{\betab}^{\rm ora}_a-\betab^*\|_2^2
\bigg]
+
(m+s)\eta_\infty
\|\widehat{\betab}^{\rm ora}_a-\betab^*\|_2.
\end{split}
\label{eq:full-bahadur-rem}
\end{equation}
%The hidden constants depend only on $M$, $J_l^{-1}$, and fixed numerical constants.
\end{lemma}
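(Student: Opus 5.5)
The plan is to linearize the first-order (KKT) conditions of the two oracle programs around $\betab^*$, using the exact gradient decomposition of Lemma~\ref{lem:decomposition}. Both oracles are supported on $\overline{\mc M}$, so only the $\overline{\mc M}$-block of the gradient matters.

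\textbf{Full oracle.} Optimality gives $\nabla_{\overline{\mc M}}\widehat{\mc Q}_B(\widehat{\betab}^{\rm ora}_a)=\mathbf 0$ (Lemma~\ref{lem:intermediate.rate2}). Write $\theta=\widehat{\betab}^{\rm ora}_a$ and $\Delta=\theta-\betab^*$. By \eqref{eq:surrogate-gradient-remainder},
\[
\nabla\widehat{\mc Q}_B(\theta)=\frac1{N_B}\sum_{b=1}^B n_b\nabla\widehat{\mc L}^{(b)}(\theta)+\frac1{N_B}\sum_{b<B}n_bR_b(\theta).
\]
I would not bound $R_b(\theta)$ directly through \eqref{eq:R-general-center-bound}, since that gives $M\|\Delta\|_2^2$ with weight $N_{B-1}$. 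Instead, for $b<B$ I combine $\nabla\widehat{\mc L}^{(b)}(\theta)+R_b(\theta)$ back into $\nabla\widehat{\mc L}^{(b)}(\widehat{\betab}^{(b)})+\nabla^2\widehat{\mc L}^{(b)}(\widehat{\betab}^{(b)})(\theta-\widehat{\betab}^{(b)})$. This term is linear in $\theta$. Expanding it around $\betab^*$ gives
\[
\nabla\widehat{\mc L}^{(b)}(\betab^*)+\nabla^2\widehat{\mc L}^{(b)}(\betab^*)\Delta+R_b(\betab^*)+\{\nabla^2\widehat{\mc L}^{(b)}(\widehat{\betab}^{(b)})-\nabla^2\widehat{\mc L}^{(b)}(\betab^*)\}\Delta .
\]
By \eqref{eq:R-beta-star-bound}, $R_b(\betab^*)$ is bounded in sup-norm by $\tfrac32M\|\widehat\delta_b\|_2^2$. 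By the RLH event, the last term is bounded in sup-norm by $M\|\widehat\delta_b\|_2\|\Delta\|_2$. For $b=B$, the exact loss enters, and the integral form of Taylor's theorem with RLH gives $\nabla^2\widehat{\mc L}^{(B)}(\betab^*)\Delta$ plus an error of sup-norm at most $M\|\Delta\|_2^2$ (weight $n_B$). This is where $n_B$ rather than $N_B$ multiplies the quadratic term.

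Summing, restricting to $\overline{\mc M}$, and converting $\ell_\infty$ to $\ell_2$ at the cost of $q^{1/2}$ yields
\[
\mathbf 0=\omegab^*_{\overline{\mc M}}+\Big\{\tfrac1{N_B}\textstyle\sum_b n_b\nabla^2\widehat{\mc L}^{(b)}(\betab^*)\Big\}_{\overline{\mc M}\,\overline{\mc M}}\Delta_{\overline{\mc M}}+\mathbf r,
\]
where $\|\mathbf r\|_2$ is at most the bracketed term in \eqref{eq:full-bahadur-rem}. Replacing the pooled Hessian by $\Jb_0$ costs $\|(\cdot)-\Jb_0\|_2\le q\eta_\infty$ on $\mc H(\eta_\infty)$, which produces the $(m+s)\eta_\infty\|\Delta\|_2$ term. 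Multiplying by $\Jb_0^{-1}$, whose norm is at most $J_l^{-1}$, gives \eqref{eq:full-oracle-bahadur}.

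\textbf{Reduced oracle.} The KKT conditions are $\nabla_{\overline{\mc M}}\widehat{\mc Q}_B(\widehat{\betab}^{\rm ora}_0)=\Cb_0^\T\nub$ for a multiplier $\nub\in\RR^r$, together with $\Cb_0\widehat{\betab}^{\rm ora}_{0,\overline{\mc M}}=\tb$. The constraint is equivalent to $\Cb_0\Delta_{\overline{\mc M}}=-\hb_B$. The same linearization gives
\[
\Jb_0\Delta_{\overline{\mc M}}=-\omegab^*_{\overline{\mc M}}+\Cb_0^\T\nub-\widetilde{\mathbf r}.
\]
I then solve the system: $\Delta=\Jb_0^{-1}(\Cb_0^\T\nub-\omegab^*-\widetilde{\mathbf r})$. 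Applying $\Cb_0$ and using the constraint gives
\[
\nub=\Psi_0^{-1}\{\Cb_0\Jb_0^{-1}(\omegab^*+\widetilde{\mathbf r})-\hb_B\}.
\]
Substituting back yields
\[
\Delta=\Jb_0^{-1/2}(\Pb_0-\Ib)\Jb_0^{-1/2}(\omegab^*+\widetilde{\mathbf r})-\Jb_0^{-1}\Cb_0^\T\Psi_0^{-1}\hb_B.
\]
Since $\Pb_0$ is an orthogonal projection, $\|\Pb_0-\Ib\|_2\le1$, so $\mc R_0=\Jb_0^{-1/2}(\Pb_0-\Ib)\Jb_0^{-1/2}\widetilde{\mathbf r}$ satisfies \eqref{eq:reduced-bahadur-rem}.

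\textbf{Main obstacle.} The delicate step is applying the RLH event legitimately. Both $\Delta$ and $\widehat\delta_b$ must lie in $\BB(R_0)\cap\CC(l)$; this is given by the hypotheses. For the $b=B$ Taylor step, the intermediate points $t\Delta$ must also be admissible, which holds because the cone is scale-invariant. The key is to arrange the decomposition so that no $\|\Delta\|_2^2$ term appears with weight $N_{B-1}$, only cross terms $\|\widehat\delta_b\|_2\|\Delta\|_2$. This is exactly what keeping the linear-in-$\theta$ form of the historical surrogate achieves.
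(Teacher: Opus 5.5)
Your proposal is correct and matches the paper's proof. The paper uses the same exact surrogate-gradient decomposition. The historical terms are kept linear in $\theta$, so only $R_b(\betab^*)$ and the cross terms $\|\widehat\delta_b\|_2\|\Delta\|_2$ appear. The current batch contributes a Taylor remainder weighted by $n_B$, and the Hessian concentration costs $q\eta_\infty$. The paper then solves the same KKT linear system, bounding the remainder by $J_l^{-1}$ times the linearization error via $\|\Pb_0-\Ib\|_2\le 1$.
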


The following lemma provides the probabilistic ingredient for the anti-concentration argument used in the proof of Theorem~\mainref{thm:oracle.test}. 
Upon a careful examination of the proof of Lemma~S.7 in \citet{shi2019linear}, we found that a minor gap appears to arise in the derivation. 
We therefore provide a corrected anti-concentration inequality for the noncentral chi-square distribution. 
Specifically, the lemma uniformly bounds the probability mass assigned to an interval of width proportional to $\sqrt{r}$, over the degrees of freedom, the noncentrality parameter, and the center of the interval.

\begin{lemma}[Noncentral chi-square anti-concentration]
\label{lem:chisq-anticoncentration}
There exists an absolute constant $C>0$ such that, for every $0<\epsilon\le1$,
\begin{equation}
\sup_{r\ge1,\ e\ge0,\ x\in\RR}
\left[
\PP\{\chi^2(r,e)\le x+\epsilon\sqrt r\}
-
\PP\{\chi^2(r,e)\le x-\epsilon\sqrt r\}
\right]
\le C\sqrt\epsilon.
\label{eq:chisq-anticoncentration}
\end{equation}
The supremum over $r$ is taken over the positive integers.
\end{lemma}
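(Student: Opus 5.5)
The approach is to use the Poisson-mixture representation of the noncentral chi-square together with a uniform density bound for central chi-square laws. Write $\chi^2(r,e)\overset{d}{=}\chi^2(r+2J)$ with $J\sim\mathrm{Poisson}(e/2)$, independent of everything else. Conditioning on $J$ gives
\[
\PP\{\chi^2(r,e)\in(x-\epsilon\sqrt r,x+\epsilon\sqrt r]\}
=\sum_{j\ge0}\PP(J=j)\,\PP\{\chi^2(r+2j)\in(x-\epsilon\sqrt r,x+\epsilon\sqrt r]\}.
\]
So it suffices to bound the central chi-square mass of an interval of length $2\epsilon\sqrt r$, uniformly over degrees of freedom $k\ge r$ and over the centre.

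The first step is to bound the density $f_k$ of $\chi^2(k)$. For $k\ge 3$, $f_k$ is unimodal with mode $k-2$. Stirling's formula gives $\sup_x f_k(x)\le c_0/\sqrt{k}$ for an absolute constant $c_0$ and all $k\ge3$. This is the familiar statement that $(\chi^2_k-k)/\sqrt{2k}$ has a uniformly bounded density. For $k\ge \max(r,3)$ the interval mass is then at most $2\epsilon\sqrt r\,c_0/\sqrt k\le 2c_0\epsilon\le 2c_0\sqrt\epsilon$, since $\epsilon\le1$ and $\sqrt r\le\sqrt k$.

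The small cases $k\in\{1,2\}$ can only occur when $r\le2$, with $j=0$ or ($r=1$, $j=0$). For $k=2$ the density is bounded by $1/2$, so the mass is at most $\epsilon\sqrt2\le\sqrt\epsilon\cdot\sqrt2$. For $k=1$ the density $(2\pi x)^{-1/2}e^{-x/2}$ is unbounded near zero, and this is exactly why the rate is $\sqrt\epsilon$ rather than $\epsilon$. An interval of length $2\epsilon$ carries mass at most $\int_0^{2\epsilon}(2\pi x)^{-1/2}dx=2\sqrt{\epsilon/\pi}$, because the density is decreasing, so the worst interval sits at the origin. Combining the cases gives a single absolute constant $C$ such that each conditional probability is at most $C\sqrt\epsilon$. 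Averaging over the Poisson weights preserves this bound, uniformly in $e$, $x$ and $r$.

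The main obstacle is the uniform density bound $\sup_x f_k(x)\lesssim k^{-1/2}$ for all $k\ge3$. To prove it, evaluate $f_k(k-2)=(k-2)^{k/2-1}e^{-(k-2)/2}/\{2^{k/2}\Gamma(k/2)\}$ and apply a lower Stirling bound $\Gamma(z)\ge\sqrt{2\pi}z^{z-1/2}e^{-z}$ with $z=k/2$. This yields $f_k(k-2)\le e\,(1-2/k)^{k/2-1}/\sqrt{\pi k}\le e/\sqrt{\pi k}$, which is the required bound. Beyond this, the only care needed is the case bookkeeping for $k\le2$. Working with the Poisson mixture rather than a normal approximation avoids needing any Berry–Esseen-type control, which is presumably where the gap in the earlier argument of \citet{shi2019linear} arose.
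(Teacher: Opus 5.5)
Your proposal is correct and follows essentially the same route as the paper. Both use the Poisson-mixture representation together with a uniform bound $\sup_y f_k(y)\lesssim k^{-1/2}$ on central $\chi^2$ densities, obtained from a lower Stirling bound. Your Stirling computation in fact yields $e(1-2/k)^{k/2-1}/(2\sqrt{\pi k})$, which is half of what you state, so your claimed bound holds a fortiori.

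The only difference is the case $r=1$. The paper drops the mixture there and instead uses $\chi^2(1,e)\overset{d}{=}(Z+\sqrt e)^2$ together with a preimage-length argument. You keep the mixture, handle the single $k=1$ component through the decreasing $\chi^2_1$ density, and cover the $j\ge1$ terms with the $k\ge3$ bound. Both give the same $2\sqrt{\epsilon/\pi}$ bound, and your version is slightly more unified.
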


Now, recall the definition of $\Vb_0 = \Cb_0 \Jb_0^{-1} \Kb_0 \Jb_0^{-1}\Cb_0^\T$ and $\Omega_0 = \Vb_0^{-1/2} \Cb_0 \Jb_0^{-1}$.
The following lemma establishes the properties of $\Vb_0$ and $\Omega_0$.

\begin{lemma}\label{lem:V0-conditioning}
Under Assumptions~\mainref{assm:local_alternative}, \mainref{assm:hessian}, and \mainref{assm:score}, the matrix $\Vb_0$ is nonsingular with
\[
\lambda_{\min}(\Vb_0)
\ge
\frac{K_l}{J_u^2C_{\max}^2},
\qquad
\left\|
\Omega_0
\right\|_2
\le
K_l^{-1/2}.
\]
\end{lemma}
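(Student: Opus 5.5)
The plan is to bound the smallest eigenvalue of $\Vb_0=\Cb_0\Jb_0^{-1}\Kb_0\Jb_0^{-1}\Cb_0^\T$ from below through a Rayleigh quotient, and then read off the operator norm of $\Omega_0$ from the identity $\Omega_0\Kb_0\Omega_0^\T=\Ib_r$.

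\emph{Eigenvalue bound.} For a unit vector $v\in\RR^r$, set $u=\Jb_0^{-1}\Cb_0^\T v$. By Assumption~\ref{assm:score},
\[
v^\T\Vb_0v=u^\T\Kb_0u\ge K_l\|u\|_2^2 .
\]
Next I bound $\|u\|_2$ from below by two Rayleigh-type steps.

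First, $\Jb_0$ is a principal submatrix of $\Jb$, so Cauchy interlacing and Assumption~\ref{assm:hessian} give $\lambda_{\max}(\Jb_0)\le J_u$. Therefore
\[
\|u\|_2\ge J_u^{-1}\|\Cb_0^\T v\|_2 .
\]
Second, because $\Cb_0=(\Cb,\mathbf 0)$, we have $\|\Cb_0^\T v\|_2=\|\Cb^\T v\|_2$, and
\[
\|\Cb^\T v\|_2^2=v^\T\Cb\Cb^\T v\ge\lambda_{\min}(\Cb\Cb^\T)=1/\lambda_{\max}\{(\Cb\Cb^\T)^{-1}\}\ge C_{\max}^{-2},
\]
using Assumption~\ref{assm:local_alternative}. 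The full-rank condition $\mathrm{rank}(\Cb)=r$ makes $\Cb\Cb^\T$ invertible, so this step is legitimate.

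Combining the two steps gives $v^\T\Vb_0v\ge K_l/(J_u^2C_{\max}^2)>0$. Hence $\Vb_0$ is nonsingular and its smallest eigenvalue satisfies the stated bound.

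\emph{Operator norm of $\Omega_0$.} Since $\Vb_0\succ0$, the matrix $\Omega_0=\Vb_0^{-1/2}\Cb_0\Jb_0^{-1}$ is well defined. A direct computation gives
\[
\Omega_0\Kb_0\Omega_0^\T=\Vb_0^{-1/2}\Vb_0\Vb_0^{-1/2}=\Ib_r .
\]
Using $\Kb_0\succeq K_l\Ib$, for any unit $v$,
\[
1=v^\T\Omega_0\Kb_0\Omega_0^\T v\ge K_l\|\Omega_0^\T v\|_2^2 .
\]
Hence $\|\Omega_0\|_2=\|\Omega_0^\T\|_2\le K_l^{-1/2}$.

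The only point that needs care is the interlacing step transferring the eigenvalue bounds from $\Jb$ to its principal block $\Jb_0$, which is standard. I expect no substantive obstacle.
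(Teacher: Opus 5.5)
Your proof is correct. The eigenvalue bound is the same Rayleigh-quotient chain the paper uses:
\[
v^\T\Vb_0v\ \ge\ K_l\|\Jb_0^{-1}\Cb_0^\T v\|_2^2\ \ge\ (K_l/J_u^2)\,v^\T\Cb\Cb^\T v\ \ge\ K_l/(J_u^2C_{\max}^2).
\]

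For $\|\Omega_0\|_2$ the two routes differ slightly.
\begin{itemize}
\item The paper proves the Loewner inequality $\Ab^\T(\Ab\Kb\Ab^\T)^{-1}\Ab\preceq\Kb^{-1}$ for full-row-rank $\Ab$. It writes the left side as $\Kb^{-1/2}$ times an orthogonal projection times $\Kb^{-1/2}$, then applies this with $\Ab=\Cb_0\Jb_0^{-1}$ and $\Kb=\Kb_0$ to get $\Omega_0^\T\Omega_0\preceq\Kb_0^{-1}$.
\item You use the whitening identity $\Omega_0\Kb_0\Omega_0^\T=\Ib_r$ together with $\Kb_0\succeq K_l\Ib$. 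This gives $\Omega_0\Omega_0^\T\preceq K_l^{-1}\Ib_r$ directly.
\end{itemize}

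Your argument is shorter and needs nothing beyond $\Vb_0\succ0$, which you have already shown. The paper's argument yields the stronger matrix-order statement $\Omega_0^\T\Omega_0\preceq\Kb_0^{-1}$, but only its operator-norm consequence is needed here. The two are essentially the same fact: your identity says $\Kb_0^{1/2}\Omega_0^\T$ has orthonormal columns, which is exactly the projection structure the paper exploits.

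One detail worth stating explicitly: interlacing also gives $\lambda_{\min}(\Jb_0)\ge J_l>0$. That is what makes $\Jb_0^{-1}$, and hence $\Vb_0$, well defined in the first place.
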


The final auxiliary lemma is a deterministic perturbation bound for the sandwich covariance.  
It converts operator-norm consistency of the plug-in matrices $\widehat{\Jb}^{(k)}_e$ and $\widehat{\Kb}^{(k)}_e$ into closeness of the feasible and oracle test statistics.  

\begin{lemma}[Plug-in stability]
\label{lem:plugin_stability}
Suppose Assumptions~\mainref{assm:local_alternative}, \mainref{assm:hessian}, and \mainref{assm:score} hold.  
Let $\widehat{\Jb}$ and $\widehat{\Kb}$ be symmetric matrices with
\[
\|\widehat{\Jb}-\Jb_0\|_2\le\tau_J\le J_l/2,
\qquad
\|\widehat{\Kb}-\Kb_0\|_2\le\tau_K,
\]
and define
$\widehat{\Vb}=\Cb_0\widehat{\Jb}^{-1}\widehat{\Kb}\widehat{\Jb}^{-1}\Cb_0^\T$.  
Then $\widehat{\Jb}$ is nonsingular and, with $C_V=6J_l^{-3}(K_u+J_l)$,
\begin{equation}
\|\widehat{\Vb}-\Vb_0\|_2
\le
C_V\|\Cb_0\|_2^2(\tau_J+\tau_K),
\label{eq:plugin-V-error}
\end{equation}
where $\|\Cb_0\|_2^2=\lambda_{\max}(\Cb\Cb^\T)$.  
If, in addition,
$C_V\|\Cb_0\|_2^2(\tau_J+\tau_K)\le\lambda_{\min}(\Vb_0)/2$, then $\widehat{\Vb}$ is
nonsingular with $\lambda_{\min}(\widehat{\Vb})\ge\lambda_{\min}(\Vb_0)/2$, and
\begin{equation}
\big\|
\Vb_0^{1/2}\widehat{\Vb}^{-1}\Vb_0^{1/2}-\Ib_r
\big\|_2
\le
2\lambda_{\min}(\Vb_0)^{-1}C_V\|\Cb_0\|_2^2(\tau_J+\tau_K).
\label{eq:plugin-whitened-error}
\end{equation}
In particular, for every $N>0$ and every $\ub\in\RR^r$,
\begin{equation}
\left|
N\|\widehat{\Vb}^{-1/2}\ub\|_2^2-N\|\Vb_0^{-1/2}\ub\|_2^2
\right|
\le
\big\|
\Vb_0^{1/2}\widehat{\Vb}^{-1}\Vb_0^{1/2}-\Ib_r
\big\|_2
\,\cdot
N\|\Vb_0^{-1/2}\ub\|_2^2.
\label{eq:plugin-quadratic-error}
\end{equation}
\end{lemma}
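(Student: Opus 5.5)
Write $\Delta_J=\widehat{\Jb}-\Jb_0$ and $\Delta_K=\widehat{\Kb}-\Kb_0$. The plan is a direct deterministic perturbation argument, handled in three layers: the inverse Hessian, the sandwich matrix, and then the whitened inverse and the quadratic form.

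\textbf{Step 1: the inverse Hessian.} Weyl's inequality gives $\lambda_{\min}(\widehat{\Jb})\ge J_l-\tau_J\ge J_l/2$. Hence $\widehat{\Jb}$ is nonsingular with $\|\widehat{\Jb}^{-1}\|_2\le 2/J_l$. The resolvent identity $\widehat{\Jb}^{-1}-\Jb_0^{-1}=-\widehat{\Jb}^{-1}\Delta_J\Jb_0^{-1}$ then yields
\[
\|\widehat{\Jb}^{-1}-\Jb_0^{-1}\|_2\le 2J_l^{-2}\tau_J .
\]

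\textbf{Step 2: the sandwich error.} I telescope
\[
\widehat{\Jb}^{-1}\widehat{\Kb}\widehat{\Jb}^{-1}-\Jb_0^{-1}\Kb_0\Jb_0^{-1}
=(\widehat{\Jb}^{-1}-\Jb_0^{-1})\widehat{\Kb}\widehat{\Jb}^{-1}
+\Jb_0^{-1}\Delta_K\widehat{\Jb}^{-1}
+\Jb_0^{-1}\Kb_0(\widehat{\Jb}^{-1}-\Jb_0^{-1}).
\]
Each term is bounded with $\|\widehat{\Kb}\|_2\le K_u+\tau_K$, $\|\Jb_0^{-1}\|_2\le J_l^{-1}$ and Step~1. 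This gives a bound of the form $c_1J_l^{-3}K_u\tau_J+c_2J_l^{-2}\tau_K+c_3J_l^{-3}\tau_J\tau_K$.

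The cross term $\tau_J\tau_K$ is the only delicate bookkeeping step, since $\tau_K$ is not assumed small. I would remove it by using the Step~1 bound $\|\widehat{\Jb}^{-1}-\Jb_0^{-1}\|_2\le J_l^{-1}$ in place of $2J_l^{-2}\tau_J$ wherever it multiplies $\tau_K$; this is valid because $\tau_J\le J_l/2$. Everything then collapses to $C_V(\tau_J+\tau_K)$ with $C_V=6J_l^{-3}(K_u+J_l)$, using $J_l^{-2}\le J_l^{-3}J_l$. Sandwiching between $\Cb_0$ and $\Cb_0^\T$ multiplies the bound by $\|\Cb_0\|_2^2$, and $\|\Cb_0\|_2^2=\lambda_{\max}(\Cb_0\Cb_0^\T)=\lambda_{\max}(\Cb\Cb^\T)$ because the padded block is zero. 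This proves \eqref{eq:plugin-V-error}. Lemma~\ref{lem:V0-conditioning} guarantees $\lambda_{\min}(\Vb_0)>0$, so the next step is meaningful.

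\textbf{Step 3: the whitened inverse and the quadratic form.} Write $\epsilon:=C_V\|\Cb_0\|_2^2(\tau_J+\tau_K)\le\lambda_{\min}(\Vb_0)/2$. Weyl's inequality gives $\lambda_{\min}(\widehat{\Vb})\ge\lambda_{\min}(\Vb_0)/2$. Let
\[
\Eb=\Vb_0^{-1/2}(\widehat{\Vb}-\Vb_0)\Vb_0^{-1/2},
\]
so that $\|\Eb\|_2\le\epsilon/\lambda_{\min}(\Vb_0)\le 1/2$. Then $\Vb_0^{1/2}\widehat{\Vb}^{-1}\Vb_0^{1/2}=(\Ib_r+\Eb)^{-1}$, and
\[
\|(\Ib_r+\Eb)^{-1}-\Ib_r\|_2\le \|\Eb\|_2/(1-\|\Eb\|_2)\le 2\|\Eb\|_2 ,
\]
which gives \eqref{eq:plugin-whitened-error}. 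For \eqref{eq:plugin-quadratic-error}, set $\vb=\Vb_0^{-1/2}\ub$. Then $N\|\widehat{\Vb}^{-1/2}\ub\|_2^2-N\|\Vb_0^{-1/2}\ub\|_2^2=N\vb^\T\{(\Ib_r+\Eb)^{-1}-\Ib_r\}\vb$, and its absolute value is at most the operator norm times $N\|\vb\|_2^2$.
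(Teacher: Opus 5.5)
Your proposal is correct and follows essentially the same route as the paper: Weyl's inequality plus the resolvent identity for $\widehat{\Jb}^{-1}$, a three-term telescoping of the sandwich, Weyl's inequality for $\widehat{\Vb}$, and the bound $\|(\Ib_r+E)^{-1}-\Ib_r\|_2\le 2\|E\|_2$ for the whitened inverse. The only difference is cosmetic. The paper telescopes as $\widehat{\Jb}^{-1}(\widehat{\Kb}-\Kb_0)\widehat{\Jb}^{-1}+(\widehat{\Jb}^{-1}-\Jb_0^{-1})\Kb_0\widehat{\Jb}^{-1}+\Jb_0^{-1}\Kb_0(\widehat{\Jb}^{-1}-\Jb_0^{-1})$, which never produces a $\tau_J\tau_K$ cross term; your absorption of that term via $\tau_J\le J_l/2$ reaches the same intermediate bound $6K_uJ_l^{-3}\tau_J+4J_l^{-2}\tau_K$.
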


\section{Proofs of results in Section~\mainref{sec:finite_sample_theory}}\label{sec:theorem_proofs}

\subsection{Proof of Theorem~\mainref{thm:batch_b_init}}
\label{sec:proof_batch_b_init}

Let
\[
\widehat{\deltab}
=
\widehat{\betab}^{(b)}-\betab^*,
\qquad
D_b
=
\left\langle
\nabla\widehat{\mc Q}_b(\widehat{\betab}^{(b)})
-
\nabla\widehat{\mc Q}_b(\betab^*),
\widehat{\deltab}
\right\rangle.
\]
Convexity of $\widehat{\mc Q}_b$ gives $D_b\ge0$.  By the KKT condition, there is
$\widehat{\vb}\in\partial\|\widehat{\betab}^{(b)}\|_1$ such that
\[
\nabla\widehat{\mc Q}_b(\widehat{\betab}^{(b)})
+
\lambda_b\widehat{\vb}=0.
\]
Recall that $\mathcal A$ is the support of $\betab^*$.
Therefore, since $\widehat\vb \in \partial\|\widehat{\betab}^{(b)}\|_1$, we have
\begin{equation}\label{eq:thm1-subgradient}
\begin{split}
\langle \widehat \vb, \betab^* - \widehat{\betab}^{(b)} \rangle & \leq \|\betab^*\|_1 - \|\widehat{\betab}^{(b)}\|_1  \\
& = \|\betab^*\|_1 - \|\betab^* + \widehat{\deltab}\|_1  \\
& = \|\betab^*_{\mc A}\|_1 - \|\widehat{\deltab}_{{\mc A}^c}\|_1 - \|\widehat{\deltab}_{\mc A} + \betab^*_{\mc A}\|_1 \leq \|\widehat{\deltab}_{\mc A}\|_1 - \|\widehat{\deltab}_{{\mc A}^c}\|_1.
\end{split}
\end{equation}
Recall that $\omegab^{(b)} = N_b^{-1} \sum_{j = 1}^b n_j \nabla \widehat{\mc L}^{(j)}(\betab^*)$.
By Lemma~\ref{lem:decomposition},
\[
\nabla\widehat{\mc Q}_b(\betab^*)
=
\omegab^{(b)}
+
\frac1{N_b}\sum_{j=1}^{b-1}n_jR_j(\betab^*).
\]
The event in \maineqref{event: historical localization} and the RLH event in \maineqref{batch_b_event} with \eqref{eq:R-beta-star-bound} yield
\[
\left\|
\frac1{N_b}\sum_{j=1}^{b-1}n_jR_j(\betab^*)
\right\|_\infty
\le
\frac32M\phi^{(b)}.
\]
Combining this bound, \eqref{eq:thm1-subgradient}, and the KKT equation gives
\begin{align}
D_b & = \lambda_b \langle \widehat{\vb}, \betab^* - \widehat{\betab}^{(b)} \rangle + \langle \omegab^{(b)}, \betab^* - \widehat{\betab}^{(b)}\rangle + \bigg\langle \frac{1}{N_b} \sum_{j = 1}^{b-1} n_j R_j(\betab^*), \betab^* - \widehat{\betab}^{(b)} \bigg\rangle  \nn \\
&\le
\lambda_b
\left(
\|\widehat{\deltab}_{\mc A}\|_1
-
\|\widehat{\deltab}_{\mc A^c}\|_1
\right)
+
\left(
\|\omegab^{(b)}\|_\infty+\frac32M\phi^{(b)}
\right)
\|\widehat{\deltab}\|_1
\notag\\
&\le
\frac{\lambda_b}{2}
\left(
3\|\widehat{\deltab}_{\mc A}\|_1
-
\|\widehat{\deltab}_{\mc A^c}\|_1
\right),
\label{eq:thm1-upper-bregman}
\end{align}
where the last inequality follows by $\lambda_b \geq 2\{\|\omegab^{(b)}\|_\infty + 1.5 M \phi^{(b)}\}$.
Since $D_b\ge0$, this proves
\begin{equation}
\|\widehat{\deltab}_{\mc A^c}\|_1
\le
3\|\widehat{\deltab}_{\mc A}\|_1.
\label{eq:thm1-cone}
\end{equation}
Consequently,
\begin{equation}
\|\widehat{\deltab}\|_1
\le
4\|\widehat{\deltab}_{\mc A}\|_1
\le
4|\mc A|^{1/2}\|\widehat{\deltab}\|_2.
\label{eq:thm1-global-cone}
\end{equation}
Also, \eqref{eq:thm1-upper-bregman} implies
\begin{equation}
D_b
\le
\frac32|\mc A|^{1/2}\lambda_b
\|\widehat{\deltab}\|_2.
\label{eq:thm1-upper-l2}
\end{equation}

We now turn to the lower bound of $D_b$.
Define the intermediate vector $\widetilde\betab = \betab^* + \eta \widehat{\deltab}$ and $\widetilde \deltab = \widetilde \betab - \betab^* = \eta \widehat \deltab$, where
\$
\eta = \sup\left\{
 t\in[0,1]:t\widehat{\deltab}\in\BB(R_0)
\right\}.
\$
Note that if $\|\widehat{\deltab}\|_2 > R_0$, then $\eta \in (0,1)$ and $\|\widetilde \deltab\|_2 = R_0$.
If instead $\|\widehat{\deltab}\|_2 \leq R_0$, then $\eta = 1$, and hence $\widetilde \betab = \widehat\betab^{(b)}$.
Moreover, \eqref{eq:thm1-global-cone} also yields $\widetilde\deltab \in \BB(R_0) \cap \CC(4|\mc A|^{1/2})$.
By the convexity of $\betab \to \widehat{\mc Q}_b(\betab)$ and Lemma F.2 in the supplementary material of \citet{fan2018lamm}, we obtain
\#
0 \leq \langle \nabla \widehat{\mc Q}_b(\widetilde \betab) - \nabla \widehat{\mc Q}_b(\betab^*), \widetilde \betab - \betab^* \rangle \leq \eta \langle \nabla \widehat{\mc Q}_b(\widehat{\betab}^{(b)}) - \nabla \widehat{\mc Q}_b(\betab^*), \widehat{\betab}^{(b)} - \betab^* \rangle. \label{initial.rate.proof1}
\#
Now, since $\widetilde\betab - \betab^* \in \BB(R_0) \cap \CC(4|\mc A|^{1/2})$, applying Lemma~\ref{lem:surrogate-curvature-inheritance} gives
\$
\langle \nabla \widehat{\mc Q}_b(\widetilde \betab) - \nabla \widehat{\mc Q}_b(\betab^*), \widetilde \betab - \betab^* \rangle \geq \kappa \|\widetilde\betab - \betab^*\|_2^2.
\$
Combining this with \eqref{initial.rate.proof1} and \eqref{eq:thm1-upper-l2}, we have
\$
\kappa\|\widetilde\betab - \betab^*\|_2^2 \leq \frac{3}{2}|\mc A|^{1/2} \lambda_b \eta \|\widehat{\betab}^{(b)} - \betab^*\|_2 = \frac{3}{2}|\mc A|^{1/2} \lambda_b\|\widetilde\betab - \betab^*\|_2.
\$
Canceling $\|\widetilde\betab - \betab^*\|_2$, we obtain
\$
\|\widetilde\betab - \betab^*\|_2 \leq \frac{3}{2\kappa}|\mc A|^{1/2} \lambda_b < R_0,
\$
implying that $\widetilde\betab = \widehat{\betab}^{(b)}$.
This proves the $\ell_2$ bound.  
The $\ell_1$ bound follows from\eqref{eq:thm1-global-cone}.  
\qed

\subsection{Proof of Theorem~\mainref{thm:batch_b_tightening}}
\label{sec:batch_b_tightening_proof}

Throughout the proof, $p_0=p'_+(\gamma_0)$.  
Concavity and monotonicity of $p$ imply $0<p_0\le1$ and $0\le p'_{\lambda,+}(t)\le\lambda$.
The function
\[
h(c)=c\kappa\gamma_0-p_0\sqrt{c^2+1}-2
\]
is strictly increasing because, for every $c\ge0$,
\[
h'(c)
=
\kappa\gamma_0-\frac{p_0c}{\sqrt{c^2+1}}
>
\kappa\gamma_0-p_0
>
0,
\]
while $h(0)<0$ and $h(c)\to\infty$.  
Hence \maineqref{def:c.aug} has a unique positive solution.

For $e\in\{0,a\}$ and $k \geq 1$, define
\[
{\mc E}_{e,k}
=
\overline{\mc M}
\cup
\left\{
j\in\overline{\mc M}^{\,c}:
\widehat w_{e,j}^{(k-1)}<p_0\lambda
\right\},
\]
and let
\[
\lambda_{e,j}^{(k-1)}
=
p'_{\lambda,+}
\left(|\widehat\beta_{e,j}^{({\rm LLA},k-1)}|\right),
\qquad j\in\mc M^c,
\]
and set $\lambda_{e,j}^{(k-1)}=0$ on $\mc M$.
By construction, $|\lambdab_{e, {\mc E}_{e,k}^c}^{(k-1)}|_{\min} \geq p_+ ^\prime(\gamma_0)\lambda$.
If $j\in{\mc E}_{e,k}\setminus\overline{\mc M}$, then monotonicity of $p'_+$ and the fact that $p^\prime_+(\gamma_0) = p_0$ give
\#
|\widehat\beta_{e,j}^{({\rm LLA},k-1)}|>\gamma_0\lambda. \label{tighten.proof1}
\#

We first show the crude induction bound
\begin{equation}
\|\widehat{\betab}^{({\rm LLA},k)}_e-\betab^*\|_2
\le
c_{\rm aug}\gamma_0(s+m)^{1/2}\lambda,
\qquad
|{\mc E}_{e,k+1}|\le (c_{\rm aug}^2 + 1)(s+m).
\label{eq:crude-lla-localization}
\end{equation}
We first consider $k = 0$, and note that $\widehat{\betab}_e^{({\rm LLA}, k)} = \widehat{\betab}^{(B)}$.
Since $p_0\le1$, $\sqrt{c_{\rm aug}^2 + 1}\ge1$, and $|\mc A|\le s + m$,
\[
l=(2+2/p_0)\sqrt{(s+m)(c_{\rm aug}^2 + 1)}
\ge
4(s + m)^{1/2}
\ge
4|\mc A|^{1/2}.
\]
Consequently, the RSC and historical RLH events with cone parameter $l$ imply the corresponding events with cone parameter $4|\mc A|^{1/2}$.
Moreover, $2M\phi^*\ge(3/2)M\phi^*$, thus either event in the theorem implies the initializer tuning condition in Theorem~\mainref{thm:batch_b_init}, yielding
\#
\|\widehat{\betab}^{(B)}-\betab^*\|_2
\le
\frac{3}{2\kappa}|\mc A|^{1/2}\lambda
<
c_{\rm aug}\gamma_0(s+m)^{1/2}\lambda, \label{tighten.proof2}
\#
where the last inequality follows from $c_{\rm aug}\kappa\gamma_0=2+p_0\sqrt{c_{\rm aug}^2 + 1}>3/2$ and $|\mc A|\le s+m$.  
Moreover, if $j \in {\mc E}_{e,1} \setminus \overline{\mc M}$, then \eqref{tighten.proof1} implies $|\widehat{\beta}_{e,j}^{({\rm LLA}, 0)}| > \gamma_0 \lambda$.
Since $\beta_j^* = 0$ for $j \in {\mc E}_{e,1}\setminus\overline{\mc M}$, we get
\#
|{\mc E}_{e,1}\setminus\overline{\mc M}|^{1/2}
\le
\frac{
\|(\widehat{\betab}_e^{({\rm LLA},0)}-\betab^*)_
{{\mc E}_{e,1}\setminus\overline{\mc M}}\|_2
}{\gamma_0\lambda}
\le
\frac{
\|\widehat{\betab}_e^{({\rm LLA},0)}-\betab^*\|_2
}{\gamma_0\lambda} \leq c_{\rm aug}(s + m)^{1/2},
\label{eq:theorem2-cardinality-basic}
\#
where the last inequality follows by the crude bound \eqref{tighten.proof2}.
Therefore,
\$
|{\mc E}_{e,1}| \leq (s + m) + c_{\rm aug}^2(s+m) = (c_{\rm aug}^2 + 1)(s+m).
\$
This establishes the bounds in \eqref{eq:crude-lla-localization} for $k = 0$.

Now, assume \eqref{eq:crude-lla-localization} holds up to step $k-1$.
The same argument as in deriving \eqref{eq:theorem2-cardinality-basic} gives $|{\mc E}_{a,k}| \leq (c_{\rm aug}^2 + 1) (s + m)$.
We next consider the convergence rate bound in \eqref{eq:crude-lla-localization}.
For the full model, Lemma~\ref{lem:omega_discrepancy} and \maineqref{event:theorem2-full} imply
\$
\|\gb^*\|_\infty \leq \frac{1}{2}p_0 \lambda.
\$
Applying Lemma~\ref{lem:intermediate.rate} with $\theta=\betab^*$, $\rho=p_+^\prime(\gamma_0)$, and $\mc E={\mc E}_{a,k}$, where
\[
|{\mc E}_{a,k}|\leq(c_{\rm aug}^2+1)(s+m),
\]
and invoking the curvature conclusion of Lemma~\ref{lem:surrogate-curvature-inheritance}, which holds under the stated assumptions, we obtain
\$
\|\widehat{\betab}^{({\rm LLA},k)}_a-\betab^*\|_2
&\le
\kappa^{-1}
\left\{
\|\lambdab^{(k-1)}_{a,\mc S}\|_2
+
\|\gb^*_{{\mc E}_{a,k}}\|_2
\right\}
\\
&\le
\kappa^{-1}
\left(1+\frac12p_0\sqrt{c_{\rm aug}^2 + 1}\right)(s+m)^{1/2}\lambda
\\
&<
\kappa^{-1}(2+p_0\sqrt{c_{\rm aug}^2 + 1})(s+m)^{1/2}\lambda
=
c_{\rm aug}\gamma_0(s+m)^{1/2}\lambda,
\$
where the final equality follows from the definition of $c_{\rm aug}$.
Therefore, \eqref{eq:crude-lla-localization} holds for $e = a$.

For the reduced model, Lemma~\ref{lem:omega_discrepancy} and \maineqref{event:theorem2-reduced} give $\|\gb^\dagger\|_\infty\le p_0\lambda/2$.
Moreover, $\betab^\dagger-\betab^*$ is supported on $\mc M$, so
\[
\|\betab^\dagger-\betab^*\|_1 \le m^{1/2}\|\betab^\dagger - \betab^*\|_2 \leq l\|\betab^\dagger - \betab^*\|_2.
\]
Since $C_{\rm max}\|\hb_B\|_2<R_0$, this proves $\betab^\dagger-\betab^*\in\BB(R_0)\cap\CC(l)$.
Applying Lemma~\ref{lem:intermediate.rate} with center $\theta = \betab^\dagger$ gives
\begin{align*}
\|\widehat{\betab}^{({\rm LLA},k)}_0-\betab^\dagger\|_2
&\le
\kappa^{-1}
\bigl\{
(s+m)^{1/2}\lambda
+
\|\Zb^\T\gb^\dagger_{\mc M}\|_2
+
\|\gb^\dagger_{{\mc E}_{0,k}\setminus\mc M}\|_2
\bigr\}
\\
&\le
\kappa^{-1}
\left(1+\frac{p_0\sqrt{c_{\rm aug}^2 + 1}}{\sqrt2}\right)(s+m)^{1/2}\lambda,
\end{align*}
where we used $\sqrt m+\sqrt{|{\mc E}_{0,k}|-m}\le\sqrt{2|{\mc E}_{0,k}|}$.
Since $C_{\max} \|\hb_B\|_2 \le\lambda/\kappa\le (s+m)^{1/2}\lambda/\kappa$,
\begin{align*}
\|\widehat{\betab}^{({\rm LLA},k)}_0-\betab^*\|_2
&\le
\kappa^{-1}
\left(2+\frac{p_0\sqrt{c_{\rm aug}^2 + 1}}{\sqrt2}\right)(s+m)^{1/2}\lambda
\\
&<
\kappa^{-1}(2+p_0\sqrt{c_{\rm aug}^2 + 1})(s+m)^{1/2}\lambda
=
c_{\rm aug}\gamma_0(s+m)^{1/2}\lambda.
\end{align*}
This completes the crude induction for \eqref{eq:crude-lla-localization} for both models.

We now prove the refined contractions.  
For every $j\in\mc S$, if $|\widehat\beta_{e,j}^{({\rm LLA},k-1)}-\beta_j^*| \geq \gamma_0 \lambda$, then $\lambda_{e,j}^{(k-1)} \leq \lambda \leq \gamma_0^{-1}|\widehat\beta_{e,j}^{({\rm LLA},k-1)}-\beta_j^*|$.
Otherwise, if $|\widehat\beta_{e,j}^{({\rm LLA},k-1)}-\beta_j^*| \leq \gamma_0 \lambda$, then $\lambda_{e,j}^{(k-1)} \leq p_{\lambda, +}^\prime((|\beta_j^*|-\gamma_0\lambda)_+)$ due to the monotonicity of $p_{\lambda, +}^\prime$.
Putting together the pieces, we have for every $j \in \mc S$,
\$
\lambda_{e,j}^{(k-1)}
\le
p'_{\lambda,+}
\bigl((|\beta_j^*|-\gamma_0\lambda)_+\bigr)
+
\gamma_0^{-1}
|\widehat\beta_{e,j}^{({\rm LLA},k-1)}-\beta_j^*|.
\$
Taking Euclidean norms gives
\begin{equation}
\|\lambdab_{e,\mc S}^{(k-1)}\|_2
\le
\left\|p'_{\lambda,+}
\bigl((|\betab^*_{\mc S}|-\gamma_0\lambda)_+\bigr)\right\|_2
+
\gamma_0^{-1}
\|(\widehat{\betab}^{({\rm LLA},k-1)}_e-\betab^*)_{\mc S}\|_2.
\label{eq:penalty-weight-vector}
\end{equation}

For the full model, Lemma~\ref{lem:intermediate.rate} and the disjoint decomposition
${\mc E}_{a,k}=\overline{\mc M}\cup
({\mc E}_{a,k}\setminus\overline{\mc M})$ give
\begin{align*}
\|\widehat{\betab}^{({\rm LLA},k)}_a-\betab^*\|_2
\le
\kappa^{-1}\bigg\{
&\|\lambdab_{a,\mc S}^{(k-1)}\|_2
+
\|\gb^*_{\overline{\mc M}}\|_2
+
\|\gb^*_{{\mc E}_{a,k}\setminus\overline{\mc M}}\|_2
\bigg\}.
\end{align*}
By \eqref{eq:theorem2-cardinality-basic} applied on the indicated block,
\[
\|\gb^*_{{\mc E}_{a,k}\setminus\overline{\mc M}}\|_2
\le
\frac{p_0}{2\gamma_0}
\|(\widehat{\betab}^{({\rm LLA},k-1)}_a-\betab^*)_{{\mc E}_{a,k}\setminus\overline{\mc M}}\|_2.
\]
Combining this display with \eqref{eq:penalty-weight-vector}, and using that $\mc S$ and ${\mc E}_{a,k}\setminus\overline{\mc M}$ are disjoint, yields
\$
\|\widehat{\betab}^{({\rm LLA},k)}_a-\betab^*\|_2
& \le
\kappa^{-1}\big\{\big\|p'_{\lambda,+}
\bigl((|\betab^*_{\mc S}|-\gamma_0\lambda)_+\bigr)\big\|_2 + \|\gb_{\overline{\mc M}}^*\|_2\big\} \\
&\qquad + \frac1{\kappa\gamma_0}
\left(1+\frac{p_0^2}{4}\right)^{1/2}
\|\widehat{\betab}^{({\rm LLA},k-1)}_a-\betab^*\|_2.
\$
Since $p_0\le1$, the coefficient is at most $\sqrt5/(2\kappa\gamma_0)=\delta$.  
Furthermore, Lemma~\ref{lem:omega_discrepancy} gives
\$
\|\gb^*_{\overline{\mc M}}\|_2 \leq  \|\omegab^*_{\overline{\mc M}}\|_2 + \| (\omegab^* - \gb^*)_{\overline{\mc M}}\|_2 \leq \|\omegab^*_{\overline{\mc M}}\|_2 + 2M\sqrt{s+m}\phi^*.
\$
This proves \maineqref{eq:theorem2-full-contraction}.  
% The upper bound for $r_a^{\sharp}$ follows from
% \eqref{eq:gstar-score} and
% $\|(\gb^*-\omegab^*)_{\overline{\mc M}}\|_2
% \le q^{1/2}\|\gb^*-\omegab^*\|_\infty$.

For the reduced model, apply Lemma~\ref{lem:intermediate.rate} around $\betab^\dagger$ and split the score coordinates as $\mc S$ and ${\mc E}_{0,k}\setminus\overline{\mc M}$.  This gives
\begin{align*}
\|\widehat{\betab}^{({\rm LLA},k)}_0-\betab^\dagger\|_2
\le
\kappa^{-1}\bigg\{
\|\lambdab_{0,\mc S}^{(k-1)}\|_2
+
\|\Zb^\T\gb^\dagger_{\mc M}\|_2
+
\|\gb^\dagger_{\mc S}\|_2 +
\|\gb^\dagger_{{\mc E}_{0,k}\setminus\overline{\mc M}}\|_2
\bigg\}.
\end{align*}
As before,
\[
\|\gb^\dagger_{{\mc E}_{0,k}\setminus\overline{\mc M}}\|_2
\le
\frac{p_0}{2\gamma_0}
\|(\widehat{\betab}^{({\rm LLA},k-1)}_0-\betab^*)_{{\mc E}_{0,k}\setminus\overline{\mc M}}\|_2.
\]
Using \eqref{eq:penalty-weight-vector}, the disjointness of the two error blocks, and
$\|\widehat{\betab}^{({\rm LLA},k)}_0-\betab^*\|_2
\le
\|\widehat{\betab}^{({\rm LLA},k)}_0-\betab^\dagger\|_2+C_{\rm max}\|\hb_B\|_2$, we obtain
\$
\|\widehat{\betab}^{({\rm LLA},k)}_0-\betab^*\|_2
& \le
\!\!\kappa^{-1}\big\{\big\|p'_{\lambda,+}
\bigl((|\betab^*_{\mc S}|-\gamma_0\lambda)_+\bigr)\big\|_2 + \|\Zb^\T\gb^\dagger_{\mc M}\|_2
+
\|\gb^\dagger_{\mc S}\|_2\big\}
+
\delta
\|\widehat{\betab}^{({\rm LLA},k-1)}_0-\betab^*\|_2 \\
& \qquad+ C_{\rm \max} \|\hb_B\|_2.
\$
Finally, \eqref{eq:gdagger-block-score} gives that the previous inequality is further bounded as
\$
\|\widehat{\betab}^{({\rm LLA},k)}_0-\betab^*\|_2 \leq \delta
\|\widehat{\betab}^{({\rm LLA},k-1)}_0-\betab^*\|_2
+\kappa^{-1}
\left\|p'_{\lambda,+}
\bigl((|\betab^*_{\mc S}|-\gamma_0\lambda)_+\bigr)\right\|_2
+r_{0}^{\rm ora}.
\$
This completes the proof.  \qed

\subsection{Proof of Theorem~\mainref{thm:oracle_recovery}}
\label{sec:oracle_recoveery_proof}

Fix $e\in\{0,a\}$ and write
$\widehat{\betab}_e^k=\widehat{\betab}^{({\rm LLA},k)}_e$.
In the proof of Theorem~\mainref{thm:batch_b_tightening}, we have shown the crude bounds \eqref{eq:crude-lla-localization}.
Therefore,
\[
|{\mc E}_{e,k}|\le (c_{\rm aug}^2 + 1)(s+m),
\qquad k\ge1.
\]
Define the bad-coordinate set
\[
{\mc B}_{e,k}
=
\left\{
 j\in\mc M^c:
 |\widehat\beta_{e,j}^k-\beta_j^*|
 \ge\gamma_0\lambda
\right\}.
\]
If $j\in\mc S\setminus{\mc B}_{e,k-1}$, then the beta-min condition gives
\[
|\widehat\beta_{e,j}^{k-1}|
>
\gamma_1\lambda,
\]
so $p'_{\lambda,+}(|\widehat\beta_{e,j}^{k-1}|)=0$.  On the remaining signal coordinates
the weight is at most $\lambda$.  Therefore,
\begin{equation}
\|\lambdab_{e,\mc S}^{(k-1)}\|_2
\le
\lambda|\mc S\cap{\mc B}_{e,k-1}|^{1/2}.
\label{eq:strong-oracle-signal-weight}
\end{equation}
Moreover, if
$j\in{\mc E}_{e,k}\setminus\overline{\mc M}$, then
$|\widehat\beta_{e,j}^{k-1}|>\gamma_0\lambda$ and $\beta_j^*=0$, whence
\begin{equation}
{\mc E}_{e,k}\setminus\overline{\mc M}
\subset
{\mc B}_{e,k-1}\setminus\mc S.
\label{eq:strong-oracle-set-inclusion}
\end{equation}
At iteration $k$, apply Lemma~\ref{lem:intermediate.rate2} with $\lambdab=\lambdab_e^{(k-1)}$, $\rho=p_0$, and ${\mc E}={\mc E}_{e,k}$.
Assumption~\mainref{assm:penalty_function} gives $\|\lambdab_e^{(k-1)}\|_\infty\le\lambda$, while the definition of ${\mc E}_{e,k}$ gives
$\|\lambdab^{(k-1)}_{e,{\mc E}_{e,k}^c}\|_{\min}\ge p_0\lambda$.
Because ${\mc E}_{e,k}\supset\overline{\mc M}$ and
$|{\mc E}_{e,k}|\le (c_{\rm aug}^2 + 1)(s+m)$,
\[
4|\mc A|^{1/2}
\le
(2+2/p_0)|{\mc E}_{e,k}|^{1/2}
\le
l.
\]
The historical displacements therefore belong to the cone required by
Lemma~\ref{lem:intermediate.rate2}.  The oracle displacement is supported on
$\overline{\mc M}$ and hence belongs to the same cone, while its ball condition is
assumed in the theorem.  The RSC event with cone parameter $l$ implies the RSC event
with cone parameter $(2+2/p_0)|{\mc E}_{e,k}|^{1/2}$.

The score condition in \maineqref{eq:oracle-recovery-score-radius}, together with
\eqref{eq:strong-oracle-signal-weight} and
\eqref{eq:strong-oracle-set-inclusion}, bounds the right-hand side of
\eqref{eq:oracle-centered-weighted-bound} by
\begin{align}
\frac{\lambda}{\kappa}
\left\{
|\mc S\cap{\mc B}_{e,k-1}|^{1/2}
+
\frac{p_0}{2}|{\mc B}_{e,k-1}\setminus\mc S|^{1/2}
\right\} \leq
\frac{\sqrt{17}}{4}\frac{\lambda}{\kappa}
|{\mc B}_{e,k-1}|^{1/2},
\label{eq:strong-oracle-one-step-rhs}
\end{align}
where we used $p_0\leq1/2$.  
Moreover, the crude localization gives
\[
|{\mc B}_{e,k-1}|^{1/2}
\le
\frac{\|\widehat{\betab}_e^{k-1}-\betab^*\|_2}{\gamma_0\lambda}
\le
c_{\rm aug}(s+m)^{1/2}.
\]
Since $\sqrt{17}/4 < 5/4 < \kappa \gamma_0$, the right-hand side of \eqref{eq:strong-oracle-one-step-rhs} is strictly smaller than $c_{\rm aug}\gamma_0(s+m)^{1/2}\lambda < R_0$.
All conditions of Lemma~\ref{lem:intermediate.rate2} are therefore satisfied, and the lemma yields
\begin{equation}
\|\widehat{\betab}_e^k-\widehat{\betab}_e^{\rm ora}\|_2
\le
\frac{\sqrt{17}}{4}\frac{\lambda}{\kappa}
|{\mc B}_{e,k-1}|^{1/2}.
\label{eq:strong-oracle-upper}
\end{equation}

We next contract the bad-coordinate set. 
For any $j\in{\mc B}_{e,k}$,
\$
|\widehat\beta_{e,j}^k-\widehat\beta_{e,j}^{\rm ora}|
\ge |\widehat\beta_{e,j}^k-\beta_j^*| - |\widehat\beta_{e,j}^{\rm ora}-\beta_j^*| \ge \gamma_0\lambda-\frac{\lambda}{5\bar\delta\kappa} > \frac{21}{20}\frac{\lambda}{\bar\delta\kappa}.
\$
Therefore, we obtain
\begin{equation}
\frac{21}{20}\frac{\lambda}{\bar\delta\kappa}
|{\mc B}_{e,k}|^{1/2}
\le
\|\widehat{\betab}_e^k-\widehat{\betab}_e^{\rm ora}\|_2.
\label{eq:strong-oracle-lower}
\end{equation}
Combining \eqref{eq:strong-oracle-upper} and
\eqref{eq:strong-oracle-lower} yields, for every $k\geq1$,
\begin{equation}
|{\mc B}_{e,k}|^{1/2}
\leq
\frac{5\sqrt{17}}{21}\bar\delta
|{\mc B}_{e,k-1}|^{1/2}
\le
\bar\delta|{\mc B}_{e,k-1}|^{1/2}.
\label{eq:bad-set-contraction}
\end{equation}

At initialization, $|{\mc B}_{e,0}|\leq\widehat s$: every bad null coordinate belongs to $\operatorname{supp} (\widehat{\betab}^{(B)})$, and at most $s$ signal coordinates can be bad.  
If $\widehat s=0$, then \eqref{eq:bad-set-contraction} gives
${\mc B}_{e,k}=\varnothing$ for every $k\ge0$.
If $\widehat s\ge1$, then \eqref{eq:bad-set-contraction} gives
\[
|{\mc B}_{e,k}|^{1/2}
\le
\bar\delta^{k}|{\mc B}_{e,0}|^{1/2}
\leq
\bar\delta^{k}\widehat s^{1/2}
\leq
\bar\delta
<1
\]
for all $k \geq K_0$.
Because $|{\mc B}_{e,k}|$ is a nonnegative integer,
${\mc B}_{e,k}=\varnothing$. 
Consequently, by \eqref{eq:strong-oracle-upper}, we have $\widehat{\betab}_e^{k} = \widehat{\betab}_e^{\rm ora}$ for all $k \geq K_0 + 1$.

\qed

\subsection{Proof of Corollary~\mainref{cor:batch_b_init}}

For every fixed $b$ and coordinate $k$, Assumption~\mainref{assm:subgaussian} gives
\[
\PP\left(
|\omega_k^{(b)}|>\frac{C}{4}\sqrt{\frac{\log p}{N_b}}
\right)
\le
2\exp\left(-\frac{C^2\log p}{32\sigma_0^2}\right)
\le
2p^{-2}.
\]
A union bound over $k$ and then over $b$ gives, with probability at least $1-2B/p$,
\begin{equation}
\|\omegab^{(b)}\|_\infty\le\lambda_b/4,
\qquad b=1,\ldots,B.
\label{eq:cor-score-event}
\end{equation}
Work on this event.  At $b=1$, $\phi^{(1)}=0$, and
Theorem~\mainref{thm:batch_b_init} applies.  Suppose inductively that the asserted bounds hold
through batch $b-1$.  Then Lemma~\ref{lem:batch.size} gives
\begin{align*}
\phi^{(b)}
&\le
\frac{9C^2|\mc A|\log p}{4\kappa^2N_b}
\sum_{j=1}^{b-1}\frac{n_j}{N_j}
\\
&\le
\frac{9C^2|\mc A|\log p}{4\kappa^2N_b}
\left\{1+\log\left(\frac{N_{b-1}}{n_1}\right)\right\}.
\end{align*}
Condition \maineqref{sample.size.requirement} implies
$M\phi^{(b)}\le\lambda_b/6$, and hence
$(3/2)M\phi^{(b)}\le\lambda_b/4$.  Together with
\eqref{eq:cor-score-event}, this verifies the tuning condition in
Theorem~\mainref{thm:batch_b_init}.  The induction is complete.  \qed

\section{Proofs of results in Section~\mainref{sec:inference}}

\subsection{Proof of Theorem~\mainref{thm:oracle.test}}
To begin with, recall the notation
\[
\omegab^*_{\overline{\mc M}}=\frac{1}{N_B}\sum_{b=1}^B n_b\nabla_{\overline{\mc M}}\widehat{\mc L}^{(b)}(\betab^*),
\]
and define
\[
T_0=N_B\left\|\Vb_0^{-1/2}\underbrace{\left(\Cb_0\Jb_0^{-1}\omegab^*_{\overline{\mc M}}-\hb_B\right)}_{=:\widehat{\mub}}\right\|_2^2.
\]
By the definitions of $\Kb_0$ and $\Vb_0$,
\[
\EE\left\{N_B\left\|\Vb_0^{-1/2}\Cb_0\Jb_0^{-1}\omegab^*_{\overline{\mc M}}\right\|_2^2\right\}=r.
\]
Markov's inequality therefore gives
\[
N_B^{1/2}\left\|\Vb_0^{-1/2}\Cb_0\Jb_0^{-1}\omegab^*_{\overline{\mc M}}\right\|_2=O_p(r^{1/2}).
\]
Moreover, since $\|\hb_B\|_2=O(\sqrt{r/N_B})$, Lemma~\ref{lem:V0-conditioning} implies
\[
N_B^{1/2}\|\Vb_0^{-1/2}\hb_B\|_2=O(r^{1/2}).
\]
Combining the preceding two displays with the triangle inequality yields
\begin{equation}
N_B^{1/2}\|\Vb_0^{-1/2}\widehat{\mub}\|_2=O_p(r^{1/2}),
\qquad
T_0=O_p(r).
\label{eq:T0-leading-order}
\end{equation}

The remainder of the proof proceeds in three steps. First, we establish a Gaussian approximation for $T_0$. Second, we show that both oracle test statistics equal $T_0$ up to $o_p(\sqrt{r})$. 
Finally, we transfer the noncentral chi-square approximation from $T_0$ to the two oracle statistics using Lemma~\ref{lem:chisq-anticoncentration}.

\medskip
\noindent\textsc{Step 1: Gaussian approximation for $T_0$.}
For the $i$th observation in batch $b$, define
\[
\xib_{b,i}=N_B^{-1/2}\Omega_0\nabla_{\overline{\mc M}}\ell(\betab^*;Y_i^{(b)},X_i^{(b)}),
\qquad
\Omega_0=\Vb_0^{-1/2}\Cb_0\Jb_0^{-1}.
\]
Since $\betab^*$ is an interior minimizer of $\mc L(\betab)=\EE\{\ell(\betab;Y,X)\}$, we have $\EE(\xib_{b,i})=\mathbf 0_r$. 
Furthermore,
\[
\sum_{b=1}^B\sum_{i=1}^{n_b}\EE(\xib_{b,i}\xib_{b,i}^\T)
=\Omega_0\Kb_0\Omega_0^\T
=\Vb_0^{-1/2}\Vb_0\Vb_0^{-1/2}
=\Ib_r.
\]
Assumption~\mainref{assm:lyapunov} gives
\[
r^{1/4}\sum_{b=1}^B\sum_{i=1}^{n_b}\EE\|\xib_{b,i}\|_2^3
=r^{1/4}N_B^{-1/2}\EE\left\|\Omega_0\nabla_{\overline{\mc M}}\ell(\betab^*;Y,X)\right\|_2^3
\to 0.
\]
Therefore, the multivariate Lyapunov-type bound of \citet{bentkus2005lyapunov} implies that, for some absolute constant $C_0>0$,
\[
\sup_{\mc B}\left|\PP\left(\sum_{b=1}^B\sum_{i=1}^{n_b}\xib_{b,i}\in\mc B\right)-\PP(\Zb_r\in\mc B)\right|
\leq C_0r^{1/4}\sum_{b=1}^B\sum_{i=1}^{n_b}\EE\|\xib_{b,i}\|_2^3
\longrightarrow 0,
\]
where $\Zb_r\sim N(\mathbf 0_r,\Ib_r)$ and the supremum is taken over all convex Borel sets in $\RR^r$.

Because
\[
\sum_{b=1}^B\sum_{i=1}^{n_b}\xib_{b,i}
=\sqrt{N_B}\Omega_0\omegab^*_{\overline{\mc M}}
=\sqrt{N_B}\Vb_0^{-1/2}\Cb_0\Jb_0^{-1}\omegab^*_{\overline{\mc M}},
\]
and
\[
T_0=\left\|\sum_{b=1}^B\sum_{i=1}^{n_b}\xib_{b,i}-\sqrt{N_B}\Vb_0^{-1/2}\hb_B\right\|_2^2,
\]
we may apply the preceding approximation to Euclidean balls centered at $\sqrt{N_B}\Vb_0^{-1/2}\hb_B$. It follows that
\[
\sup_{x\geq 0}\left|\PP(T_0\leq x)-\PP\left\{\left\|\Zb_r-\sqrt{N_B}\Vb_0^{-1/2}\hb_B\right\|_2^2\leq x\right\}\right|\longrightarrow 0.
\]
Remark that the squared norm of $\Zb_r-\sqrt{N_B}\Vb_0^{-1/2}\hb_B$ follows a noncentral chi-square distribution with $r$ degrees of freedom and noncentrality parameter
\[
\left\|\sqrt{N_B}\Vb_0^{-1/2}\hb_B\right\|_2^2
=N_B\hb_B^\T\Vb_0^{-1}\hb_B
=e_B.
\]
Consequently,
\#
\sup_{x\geq 0}\left|\PP(T_0\leq x)-\PP\{\chi^2(r,e_B)\leq x\}\right| \to 0.
\label{eq:T0-noncentral-approximation}
\#

\medskip
\noindent\textsc{Step 2: Approximation of the oracle statistics to $T_0$.}

Denote by ${\mc E}_{\rm ora}$ the intersection of the event displayed in the statement of Theorem~\mainref{thm:oracle.test} with the radius event $\{R_0\ge2\max(r^{\rm ora}_0,r^{\rm ora}_a)\}$; by hypothesis both have probability tending to one, so $\PP({\mc E}_{\rm ora})\to1$. 
Recall that $\gb^*=\nabla\widehat{\mc Q}_B(\betab^*)$ and $\gb^\dagger=\nabla\widehat{\mc Q}_B(\betab^\dagger)$. 
We first verify that, on ${\mc E}_{\rm ora}$, the three hypotheses of Lemma~\ref{lem:oracle_bahadur} are satisfied.

First, $\Jb_0=\Jb_{\overline{\mc M}\,\overline{\mc M}}$ is a principal submatrix of $\Jb$, so Assumption~\mainref{assm:hessian} gives $\lambda_{\min}(\Jb_0)\ge\lambda_{\min}(\Jb)\ge J_l$.

Second, we establish the localization of the oracle estimators.  
Recall that $\gb^*=\nabla\widehat{\mc Q}_B(\betab^*)$ and $\gb^\dagger=\nabla\widehat{\mc Q}_B(\betab^\dagger)$.  
Since
\[
\|\betab^\dagger-\betab^*\|_2
\leq C_{\max}\|\hb_B\|_2
\leq r_0^{\rm ora}
\leq R_0/2
\]
and $\betab^\dagger-\betab^*$ is supported on $\mc M$, the condition $l\geq(m+s)^{1/2}$ implies $\betab^\dagger-\betab^*\in\BB(R_0)\cap\CC(l)$.  
We may therefore apply Lemma~\ref{lem:omega_discrepancy}.  
Equations~\eqref{eq:gstar-score} and \eqref{eq:gdagger-block-score}, together with the definitions of $r_a^{\rm ora}$ and $r_0^{\rm ora}$, yield
\$
& \kappa^{-1}\|\gb^*_{\overline{\mc M}}\|_2
 \leq r_a^{\rm ora}\leq R_0/2,  \\
& \kappa^{-1}\left\{
\|\Zb^\T\gb^\dagger_{\mc M}\|_2+\|\gb^\dagger_{\mc S}\|_2
\right\}
+\|\betab^\dagger-\betab^*\|_2
\leq r_0^{\rm ora}\leq R_0/2.
\$
Thus the strict score-radius conditions in Lemma~\ref{lem:oracle_rate} are satisfied.  
Applying that lemma gives, for $e\in\{0,a\}$,
\[
\|\widehat\betab_e^{\rm ora}-\betab^*\|_2
\leq r_e^{\rm ora}\leq R_0/2.
\]
Each oracle error is supported on $\overline{\mc M}$, whose cardinality is
$m+s\leq l^2$.  Consequently,
\[
\|\widehat\betab_e^{\rm ora}-\betab^*\|_1
\leq(m+s)^{1/2}\|\widehat\betab_e^{\rm ora}-\betab^*\|_2
\leq l\|\widehat\betab_e^{\rm ora}-\betab^*\|_2,
\]
and hence
$\widehat\betab_e^{\rm ora}-\betab^*\in\BB(R_0)\cap\CC(l)$ for both
$e=0$ and $e=a$.

Finally, the event $\bigcap_{b=1}^B{\mc E}_{\rm rlh}^{(b)}(R_0,l,M)\cap{\mc H}(\eta_\infty)$ holds on ${\mc E}_{\rm ora}$.
Consequently, the Kiefer--Bahadur representations in Lemma~\ref{lem:oracle_bahadur} hold.
Based on the Kiefer--Bahadur representation, we derive the approximation of the oracle Wald and score statistics to $T_0$, respectively.
In the remainder of the proof, all $o_p$ and $O_p$ statements derived from that lemma are understood after intersection with ${\mc E}_{\rm ora}$; the complement contributes at most $\PP({\mc E}_{\rm ora}^c)=o(1)$ to each Kolmogorov distance, which is absorbed in Step 3.

\smallskip
\noindent\textsc{Step 2-1: Wald statistic.}

The full-oracle representation in Lemma~\ref{lem:oracle_bahadur} gives
\[
\widehat\betab^{\rm ora}_{a,\overline{\mc M}}-\betab^*_{\overline{\mc M}}
=-\Jb_0^{-1}\omegab^*_{\overline{\mc M}}+{\mc R}_a.
\]
Since $\Cb_0\betab^*_{\overline{\mc M}}=\tb+\hb_B$,
\[
\Cb\widehat\betab^{\rm ora}_{a,\mc M}-\tb
=\hb_B-\Cb_0\Jb_0^{-1}\omegab^*_{\overline{\mc M}}+\Cb_0{\mc R}_a
=-\widehat{\mub}+\Cb_0{\mc R}_a.
\]
The remainder bound \eqref{eq:full-bahadur-rem}, together with \maineqref{eq:inference-remainder-condition} for $e=a$, gives
\[
\sqrt{N_B}\|\mc R_a \|_2=o_p(1).
\]
Lemma~\ref{lem:V0-conditioning} also implies
\[
\|\Vb_0^{-1/2}\Cb_0\|_2\leq J_uK_l^{-1/2},
\]
and therefore
\[
\sqrt{N_B}\|\Vb_0^{-1/2}\Cb_0\mc R_a\|_2=o_p(1).
\]
Remark that
\[
T_W^{\rm ora}
=\left\{-\sqrt{N_B}\Vb_0^{-1/2}\widehat{\mub}+\sqrt{N_B}\Vb_0^{-1/2}\Cb_0\mc R_a\right\}^\T
\left\{-\sqrt{N_B}\Vb_0^{-1/2}\widehat{\mub}+\sqrt{N_B}\Vb_0^{-1/2}\Cb_0\mc R_a\right\}.
\]
Using
\[
\left|\|\ub+\vb\|_2^2-\|\ub\|_2^2\right|
\leq 2\|\ub\|_2\|\vb\|_2+\|\vb\|_2^2,
\]
together with \eqref{eq:T0-leading-order}, yields
\begin{equation}
T_W^{\rm ora}=T_0+o_p(r^{1/2}).
\label{eq:oracle-wald-T0-reduction}
\end{equation}

\smallskip
\noindent\textsc{Step 2-2: Score statistic.}
The reduced-oracle representation in Lemma~\ref{lem:oracle_bahadur} is
\#
\widehat\betab^{\rm ora}_{0,\overline{\mc M}}-\betab^*_{\overline{\mc M}}
=\Jb_0^{-1/2}(\Pb_0-\Ib)\Jb_0^{-1/2}\omegab^*_{\overline{\mc M}}
-\Jb_0^{-1}\Cb_0^\T\Psi_0^{-1}\hb_B+{\mc R}_0. \label{score.oracle.proof1}
\#
Multiplying by $\Cb_0$ in the previous display and using
\[
\Cb_0\Jb_0^{-1/2}(\Pb_0-\Ib)=\mathbf 0,
\qquad
\Psi_0=\Cb_0\Jb_0^{-1}\Cb_0^\T,
\]
together with the reduced-oracle constraint
\[
\Cb_0\left(\widehat\betab^{\rm ora}_{0,\overline{\mc M}}-\betab^*_{\overline{\mc M}}\right)=-\hb_B,
\]
gives $\Cb_0{\mc R}_0=\mathbf 0$.
Recall that $\Pb_0 = \Jb_0^{-1/2}\Cb_0^\T\Psi_0^{-1}\Cb_0\Jb_0^{-1/2}$.
Multiplying \eqref{score.oracle.proof1} by $\Jb_0$ and adding $\omegab^*_{\overline{\mc M}}$ yields
\#
\omegab^*_{\overline{\mc M}}
+\Jb_0\left(\widehat\betab^{\rm ora}_{0,\overline{\mc M}}-\betab^*_{\overline{\mc M}}\right)
=\Cb_0^\T\Psi_0^{-1}\widehat{\mub}+\Jb_0{\mc R}_0.
\label{eq:score-normal-equation-T0}
\#

Lemma~\ref{lem:decomposition} further gives the exact identity
\[
\nabla_{\overline{\mc M}}\widehat{\mc Q}_B(\widehat\betab^{\rm ora}_0)
=\omegab^*_{\overline{\mc M}}
+\Jb_0\left(\widehat\betab^{\rm ora}_{0,\overline{\mc M}}-\betab^*_{\overline{\mc M}}\right)
+\widetilde{\mc R}_0,
\]
where
\[
\widetilde{\mc R}_0
=\frac{1}{N_B}\sum_{b=1}^{B-1}n_bR_b(\betab^*)_{\overline{\mc M}}
+\nabla_{\overline{\mc M}}\widehat{\mc Q}_B(\widehat\betab^{\rm ora}_0)
-\nabla_{\overline{\mc M}}\widehat{\mc Q}_B(\betab^*)
-\Jb_0\left(\widehat\betab^{\rm ora}_{0,\overline{\mc M}}-\betab^*_{\overline{\mc M}}\right).
\]
Substituting \eqref{eq:score-normal-equation-T0} and multiplying by $\Omega_0$ give
\[
\Omega_0\nabla_{\overline{\mc M}}\widehat{\mc Q}_B(\widehat\betab^{\rm ora}_0)
=\Omega_0\Cb_0^\T\Psi_0^{-1}\widehat{\mub}
+\Omega_0\Jb_0{\mc R}_0
+\Omega_0\widetilde{\mc R}_0.
\]
Because
\[
\Omega_0\Cb_0^\T\Psi_0^{-1}=\Vb_0^{-1/2},
\qquad
\Omega_0\Jb_0{\mc R}_0=\Vb_0^{-1/2}\Cb_0{\mc R}_0=\mathbf 0,
\]
we obtain
\#
\Omega_0\nabla_{\overline{\mc M}}\widehat{\mc Q}_B(\widehat\betab^{\rm ora}_0)
=\Vb_0^{-1/2}\widehat{\mub}+\Omega_0\widetilde{\mc R}_0.
\label{eq:oracle-score-T0-linearization}
\#

The bound established in the proof of Lemma~\ref{lem:oracle_bahadur} gives
\[
\begin{aligned}
\|\widetilde{\mc R}_0\|_2\lesssim{}&
\frac{\sqrt{m+s}}{N_B}\Bigg[
\sum_{b=1}^{B-1}n_b\left\{
\|\widehat\betab^{(b)}-\betab^*\|_2^2
+\|\widehat\betab^{(b)}-\betab^*\|_2\|\widehat\betab^{\rm ora}_0-\betab^*\|_2
\right\} \\
&\hspace{35mm}
+n_B\|\widehat\betab^{\rm ora}_0-\betab^*\|_2^2
\Bigg]
+(m+s)\eta_\infty\|\widehat\betab^{\rm ora}_0-\betab^*\|_2.
\end{aligned}
\]
Condition \maineqref{eq:inference-remainder-condition} for $e=0$ implies
\[
\|\widetilde{\mc R}_0\|_2=o_p(N_B^{-1/2}).
\]
Since $\|\Omega_0\|_2\leq K_l^{-1/2}$ by Lemma~\ref{lem:V0-conditioning},
\[
\sqrt{N_B}\|\Omega_0\widetilde{\mc R}_0\|_2=o_p(1).
\]
Using the definition of the oracle score statistic and \eqref{eq:oracle-score-T0-linearization},
\[
T_S^{\rm ora}
=N_B\left\|\Vb_0^{-1/2}\widehat{\mub}+\Omega_0\widetilde{\mc R}_0\right\|_2^2.
\]
Applying the same elementary inequality used for the oracle Wald statistic, together with \eqref{eq:T0-leading-order}, yields
\#
T_S^{\rm ora}=T_0+o_p(r^{1/2}).
\label{eq:oracle-score-T0-reduction}
\#

\medskip
\noindent\textsc{Step 3: Anti-concentration transfer.}
Let $Y_B\sim\chi^2(r,e_B)$, and let $T$ denote either $T_W^{\rm ora}$ or $T_S^{\rm ora}$. By \eqref{eq:oracle-wald-T0-reduction} and \eqref{eq:oracle-score-T0-reduction},
\[
T=T_0+\Delta_B,
\qquad
\Delta_B=o_p(\sqrt{r}).
\]
Fix $\epsilon\in(0,1)$. 
Uniformly over $x\geq 0$,
\[
\PP(T_0\leq x-\epsilon\sqrt{r})-\PP(|\Delta_B|>\epsilon\sqrt{r})
\leq \PP(T\leq x)
\leq \PP(T_0\leq x+\epsilon\sqrt{r})+\PP(|\Delta_B|>\epsilon\sqrt{r}).
\]
The probability $\PP(|\Delta_B|>\epsilon\sqrt{r})$ is $o(1)$. 
Combining the preceding inequalities with \eqref{eq:T0-noncentral-approximation} gives
\[
\sup_{x\geq 0}\left|\PP(T\leq x)-\PP(Y_B\leq x)\right|
\leq
\sup_{x\geq 0}\left[\PP(Y_B\leq x+\epsilon\sqrt{r})-\PP(Y_B\leq x-\epsilon\sqrt{r})\right]+o(1).
\]
By Lemma~\ref{lem:chisq-anticoncentration}, the first term on the right-hand side is bounded by $C\sqrt{\epsilon}$ uniformly over $r$, $e_B$, and $x$. Therefore,
\[
\limsup_{N_B\to\infty}\sup_{x\geq 0}\left|\PP(T\leq x)-\PP(Y_B\leq x)\right|
\leq C\sqrt{\epsilon}.
\]
Letting $\epsilon\downarrow 0$ yields
\[
\sup_{x\geq 0}\left|\PP(T_W^{\rm ora}\leq x)-\PP\{\chi^2(r,e_B)\leq x\}\right|\longrightarrow 0
\]
and
\[
\sup_{x\geq 0}\left|\PP(T_S^{\rm ora}\leq x)-\PP\{\chi^2(r,e_B)\leq x\}\right|\longrightarrow 0.
\]
This completes the proof.
\qed

\subsection{Proof of Theorem~\mainref{thm:test}}

Fix $k\geq K_0+1$ and define 
\[ 
\tau_J=\max_{e\in\{0,a\}}\left\|\nabla^2_{\overline{\mc M}\,\overline{\mc M}}\widehat{\mc Q}_B(\widehat{\betab}^{\rm ora}_e)-\Jb_0\right\|_2, \qquad \tau_K=\left\|\widehat{\Kb}^{(B)}_{\overline{\mc M}\,\overline{\mc M}}-\Kb_0\right\|_2. 
\] 
Then condition \maineqref{eq:feasible-plugin} can be written as 
\[ 
\sqrt{r}\,(\tau_J+\tau_K)=o_p(1), 
\] 
which, in particular, implies $\tau_J+\tau_K=o_p(1)$. 
The proof proceeds in four parts. 
We first show that, on an event with probability tending to 1, the feasible estimators and selected supports coincide exactly with their oracle counterparts. 
Thus, the remaining discrepancies arise only from replacing the population standardization matrices by their feasible plug-in estimators. 
Next, we establish 
\[ T_W^{(k)}=T_W^{\rm ora}+o_p(\sqrt{r}) \qquad\mbox{and}\qquad T_S^{(k)}=T_S^{\rm ora}+o_p(\sqrt{r}), 
\] 
respectively. 
Finally, we show that any statistic that is $o_p(\sqrt{r})$-close to its oracle counterpart inherits the uniform noncentral chi-square approximation established in Theorem~\mainref{thm:oracle.test}. 

\medskip \noindent\textsc{Step 1: Exact oracle replacement.} Let $\mc E_{\rm rec}$ denote the event on which the conditions of Theorem~\mainref{thm:oracle_recovery} hold. 
By assumption, $\PP(\mc E_{\rm rec})\to1$. 
On $\mc E_{\rm rec}$, Theorem~\mainref{thm:oracle_recovery} gives, jointly for $e\in\{0,a\}$ and $k \geq K_0 + 1$,  
\# 
\widehat{\betab}^{({\rm LLA},k)}_e=\widehat{\betab}^{\rm ora}_e. \label{eq:feasible-estimator-equality} 
\# 
We first show that \eqref{eq:feasible-estimator-equality} propagates to the estimated supports. 
Suppose that $\mc S\neq\varnothing$ and let $j\in\mc S$. 
The beta-min condition in Theorem~\mainref{thm:oracle_recovery}, together with the oracle-radius bound in \maineqref{eq:oracle-recovery-score-radius}, gives 
\$ 
|\widehat\beta^{\rm ora}_{e,j}| \geq |\beta_j^*|-\|\widehat{\betab}^{\rm ora}_e-\betab^*\|_\infty \geq(\gamma_0+\gamma_1)\lambda-\|\widehat{\betab}^{\rm ora}_e-\betab^*\|_2 \geq(\gamma_0+\gamma_1)\lambda-\frac{\lambda}{5\bar\delta\kappa}>0. 
\$ 
Indeed, $\bar\delta>5/(4\kappa\gamma_0)$ implies $\gamma_0>1/(5\bar\delta\kappa)$. 
Since $\mc S=\mc M^c\cap\mc A$ and both oracle estimators are supported on $\overline{\mc M}=\mc M\cup\mc S$ by their definition \maineqref{def:constrained_oracle_estimator}--\maineqref{def:unconstrained_oracle_estimator}, the nonzero coordinates of $\widehat{\betab}^{\rm ora}_{e,\mc M^c}$ are exactly those indexed by $\mc S$. 
Combining this observation with \eqref{eq:feasible-estimator-equality} yields 
\# 
\widehat{\mc S}^{(k)}_e=\mc S, \qquad \widehat{\overline{\mc M}}{}^{(k)}_e=\overline{\mc M}, \qquad \widehat{\Cb}^{(k)}_e=(\Cb,\mathbf 0_{r\times s})=\Cb_0, ~\mbox{ for } e \in \{0,a\}. \label{eq:feasible-support-identification} 
\#  
When $\mc S=\varnothing$, the margin argument is vacuous, and \eqref{eq:feasible-support-identification} follows directly from the oracle support containment. 
Consequently, on the event $\mc E_{\rm rec}$, the definitions of the plug-in matrices give 
\# 
\widehat{\Jb}^{(k)}_e =\nabla^2_{\overline{\mc M}\,\overline{\mc M}}\widehat{\mc Q}_B(\widehat{\betab}^{\rm ora}_e), \qquad \widehat{\Kb}^{(k)}_e =\widehat{\Kb}^{(B)}_{\overline{\mc M}\,\overline{\mc M}}, \qquad \widehat{\Vb}^{(k)}_e =\Cb_0(\widehat{\Jb}^{(k)}_e)^{-1}\widehat{\Kb}^{(k)}_e(\widehat{\Jb}^{(k)}_e)^{-1}\Cb_0^\T. \label{eq:feasible-oracle-plugin-identification} 
\#
Thus, the condition in Theorem~\mainref{thm:test} yields 
\[ 
\|\widehat{\Jb}^{(k)}_e-\Jb_0\|_2\leq\tau_J, \qquad \|\widehat{\Kb}^{(k)}_e-\Kb_0\|_2\leq\tau_K. 
\] 
By Lemma~\ref{lem:V0-conditioning} and the assumption $\lambda_{\max}(\Cb\Cb^\T) = O(1)$, 
\begin{equation}
\lambda_{\min}(\Vb_0)\ge\frac{K_l}{J_u^2C_{\max}^2},
\qquad
\|\Vb_0\|_2\le\|\Cb_0\|_2^2J_l^{-2}K_u,
\qquad
\|\Cb_0\|_2=O(1),
\label{eq:feasible-V0-bounds}
\end{equation}
where the upper bound follows from $\|\Vb_0\|_2\le\|\Cb_0\|_2^2\|\Jb_0^{-1}\|_2^2\|\Kb_0\|_2$ together with $\lambda_{\min}(\Jb_0)\ge J_l$ from Assumption~\mainref{assm:hessian} and $\lambda_{\max}(\Kb_0)\le K_u$ from Assumption~\mainref{assm:score}, and $\|\Cb_0\|_2=\|\Cb\|_2$.

Define the working event
\begin{equation}
\mc E_N = \mc E_{\rm rec} \cap\Big\{ \sup_{e \in \{0,a\}}\|\widehat{\Jb}^{(k)}_e-\Jb_0\|_2\leq\tau_J\leq J_l/2\Big\} \cap\left\{C_V\|\Cb_0\|_2^2(\tau_J+\tau_K)\leq\lambda_{\min}(\Vb_0)/2\right\}, \label{eq:feasible-working-event}
\end{equation} 
where $C_V=6J_l^{-3}(K_u+J_l)$ is the constant in Lemma~\ref{lem:plugin_stability}. 
By \eqref{eq:feasible-V0-bounds}, the two deterministic thresholds $J_l$ and $\lambda_{\min}(\Vb_0)$ in \eqref{eq:feasible-working-event} are bounded away from zero. 
Since $\tau_J+\tau_K=o_p(1)$, $\PP(\mc E_{\rm rec})\to1$, and $\PP(\mc E_B^{\rm ora})\to1$, it follows that 
\# 
\PP(\mc E_N)\to1.  \label{eq:feasible-working-event-probability}
\# 
On $\mc E_N$, Lemma~\ref{lem:plugin_stability} ensures that $\widehat{\Jb}^{(k)}_e$ and $\widehat{\Vb}^{(k)}_e$ are nonsingular, and hence the feasible statistics are well defined.
Outside $\mc E_N$, where the matrices might not be singular, we set the statistics equal to zero. 
This convention does not affect any asymptotic conclusion because $\PP(\mc E_N^c)=o(1)$, which is absorbed into the approximation error later. 

\medskip
\noindent\textsc{Step 2: Approximation of the feasible statistics to the oracle statistics.}

We first establish several stability bounds for the plug-in matrices. On the working event ${\mc E}_N$, Weyl's inequality and the resolvent identity yield
\#
\|(\widehat{\Jb}^{(k)}_e)^{-1}\|_2\leq\frac{2}{J_l},~
\|\widehat{\Kb}^{(k)}_e\|_2\leq K_u+\tau_K=O_p(1),~
\|(\widehat{\Jb}^{(k)}_e)^{-1}-\Jb_0^{-1}\|_2\leq\frac{2\tau_J}{J_l^2}=o_p(r^{-1/2}). \label{eq:hat-J-inverse-bounds}
\#
All conditions of Lemma~\ref{lem:plugin_stability} are now verified; applying it for $e=0$ and $e=a$ gives
\begin{equation}
\|\widehat{\Vb}^{(k)}_e-\Vb_0\|_2\leq C_V\|\Cb_0\|_2^2(\tau_J+\tau_K)=o_p(r^{-1/2}).
\end{equation}
With the definition of $\mc E_N$ and the first bound in \eqref{eq:feasible-V0-bounds},
\[
\lambda_{\min}(\widehat{\Vb}^{(k)}_e)\geq\frac{K_l}{2J_u^2C_{\max}^2},
\qquad
\|(\widehat{\Vb}^{(k)}_e)^{-1}\|_2\leq\frac{2J_u^2C_{\max}^2}{K_l},
\]
and a second application of the resolvent identity gives
\$
\|(\widehat{\Vb}^{(k)}_e)^{-1}-\Vb_0^{-1}\|_2
&\leq\|(\widehat{\Vb}^{(k)}_e)^{-1}\|_2\|\widehat{\Vb}^{(k)}_e-\Vb_0\|_2\|\Vb_0^{-1}\|_2\\
&\leq2\left(\frac{J_u^2C_{\max}^2}{K_l}\right)^2C_V\|\Cb_0\|_2^2(\tau_J+\tau_K)=o_p(r^{-1/2}).
\$
Lemma~\ref{lem:plugin_stability} also gives the relative inverse bound
\begin{equation}
\left\|\Vb_0^{1/2}(\widehat{\Vb}^{(k)}_e)^{-1}\Vb_0^{1/2}-\Ib_r\right\|_2\leq2\frac{J_u^2C_{\max}^2}{K_l}C_V\|\Cb_0\|_2^2(\tau_J+\tau_K)=o_p(r^{-1/2}).
\label{eq:feasible-whitened-inverse-error}
\end{equation}
For the score statistic, define
\[
\widehat{\Ab}_0=(\widehat{\Jb}^{(k)}_0)^{-1}\Cb_0^\T(\widehat{\Vb}^{(k)}_0)^{-1}\Cb_0(\widehat{\Jb}^{(k)}_0)^{-1},
\qquad
\Ab_0=\Jb_0^{-1}\Cb_0^\T\Vb_0^{-1}\Cb_0\Jb_0^{-1}.
\]
The telescoping identity
\[
\begin{aligned}
\widehat{\Ab}_0-\Ab_0
={}&\{(\widehat{\Jb}^{(k)}_0)^{-1}-\Jb_0^{-1}\}\Cb_0^\T(\widehat{\Vb}^{(k)}_0)^{-1}\Cb_0(\widehat{\Jb}^{(k)}_0)^{-1}\\
&+\Jb_0^{-1}\Cb_0^\T\{(\widehat{\Vb}^{(k)}_0)^{-1}-\Vb_0^{-1}\}\Cb_0(\widehat{\Jb}^{(k)}_0)^{-1}\\
&+\Jb_0^{-1}\Cb_0^\T\Vb_0^{-1}\Cb_0\{(\widehat{\Jb}^{(k)}_0)^{-1}-\Jb_0^{-1}\},
\end{aligned}
\]
and the fact that every factor other than a displayed matrix difference is $O_p(1)$ by \eqref{eq:feasible-V0-bounds} and \eqref{eq:hat-J-inverse-bounds} give
\begin{equation}
\|\widehat{\Ab}_0-\Ab_0\|_2=O_p(\tau_J+\tau_K)=o_p(r^{-1/2}).
\label{eq:feasible-score-matrix-error}
\end{equation}

Finally, by \eqref{eq:feasible-V0-bounds} and $\|\hb_B\|_2=O(\sqrt{r/N_B})$,
\[
e_B=N_B\hb_B^\T\Vb_0^{-1}\hb_B=O(r).
\]
Writing $Y_B\sim\chi^2(r,e_B)$ and $d_N=\sup_{x\geq0}|\PP(T^{\rm ora}\leq x)-\PP(Y_B\leq x)|$ for $T^{\rm ora}\in\{T_W^{\rm ora},T_S^{\rm ora}\}$, Theorem~\mainref{thm:oracle.test} gives $d_N\to0$, so Markov's inequality yields, for every $\Lambda>0$,
\[
\PP(T^{\rm ora}>\Lambda r)\leq\PP(Y_B>\Lambda r)+d_N\leq\frac{\EE(Y_B)}{\Lambda r}+d_N \leq \frac{r+e_B}{\Lambda r}+d_N.
\]
Therefore, $T_W^{\rm ora}=O_p(r)$ and $T_S^{\rm ora}=O_p(r)$.

\medskip
\noindent\textsc{Step 2-1: The feasible Wald statistic.}
By \eqref{eq:feasible-estimator-equality} and \eqref{eq:feasible-support-identification}, we have the following:
\[
\Cb\widehat{\betab}^{({\rm LLA},k)}_{a,\mc M}-\tb=\Cb_0\widehat{\betab}^{({\rm LLA},k)}_{a,\overline{\mc M}}-\tb=\Cb_0\widehat{\betab}^{\rm ora}_{a,\overline{\mc M}}-\tb=:\ub.
\]
Hence
\[
T_W^{(k)}=N_B\|\{\widehat{\Vb}^{(k)}_a\}^{-1/2}\ub\|_2^2,
\qquad
T_W^{\rm ora}=N_B\|\Vb_0^{-1/2}\ub\|_2^2,
\]
and \eqref{eq:feasible-whitened-inverse-error} gives
\[
\frac{|T_W^{(k)}-T_W^{\rm ora}|}{\sqrt r}\leq\sqrt r\left\|\Vb_0^{1/2}(\widehat{\Vb}^{(k)}_a)^{-1}\Vb_0^{1/2}-\Ib_r\right\|_2\frac{T_W^{\rm ora}}{r}=o_p(1),
\]
using $\sqrt r(\tau_J+\tau_K)=o_p(1)$ and $T_W^{\rm ora}/r=O_p(1)$. 
Therefore,
\begin{equation}
T_W^{(k)}=T_W^{\rm ora}+o_p(\sqrt r).
\label{eq:feasible-wald-perturbation}
\end{equation}

\medskip
\noindent\textsc{Step 2-2: The feasible score statistic.}
On $\mc E_N$, \eqref{eq:feasible-estimator-equality} and \eqref{eq:feasible-support-identification} give the exact identity
\begin{equation}
\nabla_{\widehat{\overline{\mc M}}^{(k)}_0}\widehat{\mc Q}_B(\widehat{\betab}^{({\rm LLA},k)}_0)=\nabla_{\overline{\mc M}}\widehat{\mc Q}_B(\widehat{\betab}^{\rm ora}_0)= (\mathfrak w_0^{\rm ora})_{\overline{\mc M}}.
\label{eq:feasible-score-numerator}
\end{equation}
By the definition of the constrained estimator, there exists $\widehat{\vb}\in\RR^r$ satisfying
\$
(\mathfrak w_0^{\rm ora})_{\overline{\mc M}}+\Cb_0^\T\widehat{\vb}=\mathbf 0_{m+s},
\qquad
\Cb_0\widehat{\betab}^{\rm ora}_{0,\overline{\mc M}}=\tb.
\$
With $\Psi_0=\Cb_0\Jb_0^{-1}\Cb_0^\T$ and $\lambda_{\max}(\Jb_0)\leq J_u$,
\[
\lambda_{\min}(\Psi_0)\geq J_u^{-1}\lambda_{\min}(\Cb_0\Cb_0^\T)\geq\frac{1}{J_uC_{\max}^2}.
\]
Moreover, since $\Omega_0=\Vb_0^{-1/2}\Cb_0\Jb_0^{-1}$, the identity $(\mathfrak w_0^{\rm ora})_{\overline{\mc M}}=-\Cb_0^\T\widehat{\vb}$ gives $\Omega_0(\mathfrak w_0^{\rm ora})_{\overline{\mc M}}=-\Vb_0^{-1/2}\Psi_0\widehat{\vb}$ and hence
\[
\|\widehat{\vb}\|_2\leq\lambda_{\min}(\Psi_0)^{-1}\|\Vb_0\|_2^{1/2}\|\Omega_0(\mathfrak w_0^{\rm ora})_{\overline{\mc M}}\|_2.
\]
Combining this with $(\mathfrak w_0^{\rm ora})_{\overline{\mc M}}=-\Cb_0^\T\widehat{\vb}$, \eqref{eq:feasible-V0-bounds}, and $T_S^{\rm ora}=N_B\|\Omega_0(\mathfrak w_0^{\rm ora})_{\overline{\mc M}}\|_2^2 = O_p(r)$ yields
\begin{equation}
N_B\|(\mathfrak w_0^{\rm ora})_{\overline{\mc M}}\|_2^2\leq\|\Cb_0\|_2^2\lambda_{\min}(\Psi_0)^{-2}\|\Vb_0\|_2T_S^{\rm ora}=O_p(r).
\label{eq:feasible-score-order}
\end{equation}
By \eqref{eq:feasible-score-numerator}, the two score statistics are quadratic forms in $(\mathfrak w_0^{\rm ora})_{\overline{\mc M}}$,
\[
T_S^{(k)}=N_B(\mathfrak w_0^{\rm ora})_{\overline{\mc M}}^\T\widehat{\Ab}_0(\mathfrak w_0^{\rm ora})_{\overline{\mc M}},
\qquad
T_S^{\rm ora}=N_B(\mathfrak w_0^{\rm ora})_{\overline{\mc M}}^\T\Ab_0(\mathfrak w_0^{\rm ora})_{\overline{\mc M}},
\]
so \eqref{eq:feasible-score-matrix-error} and \eqref{eq:feasible-score-order} give
\[
\frac{|T_S^{(k)}-T_S^{\rm ora}|}{\sqrt r}\leq\frac{N_B\|(\mathfrak w_0^{\rm ora})_{\overline{\mc M}}\|_2^2}{r}\left\{\sqrt r\,\|\widehat{\Ab}_0-\Ab_0\|_2\right\}=o_p(1).
\]
Therefore,
\begin{equation}
T_S^{(k)}=T_S^{\rm ora}+o_p(\sqrt r).
\label{eq:feasible-score-perturbation}
\end{equation}

\medskip
\noindent\textsc{Step 3: Anti-concentration transfer.}
Let $T$ denote either feasible statistic, $T^{\rm ora}$ its matching oracle statistic, $Y_B\sim\chi^2(r,e_B)$, and $d_N=\sup_{x\geq0}|\PP(T^{\rm ora}\leq x)-\PP(Y_B\leq x)|$; by Theorem~\mainref{thm:oracle.test}, $d_N\to0$. 
Equations \eqref{eq:feasible-wald-perturbation} and \eqref{eq:feasible-score-perturbation}, with \eqref{eq:feasible-working-event-probability}, give, for every fixed $\epsilon\in(0,1]$,
\[
\PP\left\{\mc E_N\cap(|T-T^{\rm ora}|>\epsilon\sqrt r)\right\}+\PP(\mc E_N^c)=o(1),
\]
so that, uniformly over $x\geq0$,
\[
\PP(T^{\rm ora}\leq x-\epsilon\sqrt r)-o(1)\leq\PP(T\leq x)\leq\PP(T^{\rm ora}\leq x+\epsilon\sqrt r)+o(1).
\]
So the Kolmogorov bound $d_N$ applies at both shifted arguments, and
\[
\begin{aligned}
\sup_{x\geq0}\left|\PP(T\leq x)-\PP(Y_B\leq x)\right|
\leq{}&\sup_{x\geq0}\left[\PP(Y_B\leq x+\epsilon\sqrt r)-\PP(Y_B\leq x-\epsilon\sqrt r)\right]\\
&+d_N+o(1).
\end{aligned}
\]
By Lemma~\ref{lem:chisq-anticoncentration}, the first term on the right is bounded by $C\sqrt\epsilon$ uniformly over every positive integer $r$, every $e_B\geq0$, and every center $x$; hence
\[
\limsup_{N_B\to\infty}\sup_{x\geq0}\left|\PP(T\leq x)-\PP(Y_B\leq x)\right|\leq C\sqrt\epsilon.
\]
Letting $\epsilon\downarrow0$ and applying the conclusion to the Wald and score pairs separately gives
\[
\sup_{x\geq0}\left|\PP(T_W^{(k)}\leq x)-\PP\{\chi^2(r,e_B)\leq x\}\right|\longrightarrow0
\]
and
\[
\sup_{x\geq0}\left|\PP(T_S^{(k)}\leq x)-\PP\{\chi^2(r,e_B)\leq x\}\right|\longrightarrow0.
\]
This completes the proof.
\qed

\section{Proofs of the auxiliary lemmas} \label{sec:appendix_lemma_proofs}

% \subsection{Proof of Lemma~\ref{lem:batch.size}}

% For $j\ge2$,
% \[
% \frac{n_j}{N_j}
% =
% \frac{N_j-N_{j-1}}{N_j}
% \le
% \int_{N_{j-1}}^{N_j}\frac{dx}{x}
% =
% \log\left(\frac{N_j}{N_{j-1}}\right).
% \]
% Since $n_1/N_1=1$, summation gives
% \[
% \sum_{j=1}^b\frac{n_j}{N_j}
% \le
% 1+\log\left(\frac{N_b}{N_1}\right)
% =
% 1+\log\left(\frac{N_b}{n_1}\right).
% \]
% \qed

\subsection{Proof of Lemma~\ref{lem:decomposition}}

To begin with, we first derive the equation \eqref{eq:general-surrogate-gradient}. 
The result is immediate for $b=1$. 
Suppose that it holds for $b-1$. 
Differentiating the recursive definition of $\widehat{\mc Q}_b$ yields
\[
N_b\nabla\widehat{\mc Q}_b(\theta)=n_b\nabla\widehat{\mc L}^{(b)}(\theta)+\widehat{\Jb}^{(b-1)}(\theta-\widehat{\betab}^{(b-1)})+N_{b-1}\nabla\widehat{\mc Q}_{b-1}(\widehat{\betab}^{(b-1)}).
\]
By the induction hypothesis, evaluated at $\widehat{\betab}^{(b-1)}$,
\[
N_{b-1}\nabla\widehat{\mc Q}_{b-1}(\widehat{\betab}^{(b-1)})=\sum_{j=1}^{b-1}n_j\left\{\nabla\widehat{\mc L}^{(j)}(\widehat{\betab}^{(j)})+\nabla^2\widehat{\mc L}^{(j)}(\widehat{\betab}^{(j)})(\widehat{\betab}^{(b-1)}-\widehat{\betab}^{(j)})\right\}.
\]
Recall that
\[
R_j(\theta)=\nabla\widehat{\mc L}^{(j)}(\widehat{\betab}^{(j)})-\nabla\widehat{\mc L}^{(j)}(\theta)+\nabla^2\widehat{\mc L}^{(j)}(\widehat{\betab}^{(j)})(\theta-\widehat{\betab}^{(j)}).
\]
Combining the preceding displays and using
\[
\widehat{\Jb}^{(b-1)}=\sum_{j = 1}^{b-1} n_j\nabla^2\widehat{\mc L}^{(j)}(\widehat{\betab}^{(j)})
\]
establishes \eqref{eq:general-surrogate-gradient}. Adding and subtracting $\nabla\widehat{\mc L}^{(j)}(\theta)$ in each historical summand, for $1\leq j\leq b-1$, gives \eqref{eq:surrogate-gradient-remainder}. The corresponding identity at $\widehat{\betab}^{(b)}$ follows by direct substitution.

We next establish the remainder bounds. Let $a_j=\widehat{\betab}^{(j)}-\betab^*$. By the fundamental theorem of calculus,
\[
R_j(\betab^*)=\int_0^1\left\{\nabla^2\widehat{\mc L}^{(j)}(\betab^*+t a_j)-\nabla^2\widehat{\mc L}^{(j)}(\widehat{\betab}^{(j)})\right\}a_j\,{\rm d}t.
\]
Adding and subtracting $\nabla^2\widehat{\mc L}^{(j)}(\betab^*)$ inside the integrand and applying the RLH condition yield
\[
\begin{aligned}
\|R_j(\betab^*)\|_\infty
&\leq \int_0^1 tM\|a_j\|_2^2\,{\rm d}t+\int_0^1 M\|a_j\|_2^2\,{\rm d}t \\
&=\frac{3}{2}M\|a_j\|_2^2.
\end{aligned}
\]
Now let $d=\theta-\betab^*$. A straightforward algebra shows that $R_j(\theta)-R_j(\betab^*)$ equals
\[
\begin{aligned}
\left\{\nabla^2\widehat{\mc L}^{(j)}(\widehat{\betab}^{(j)})-\nabla^2\widehat{\mc L}^{(j)}(\betab^*)\right\}d -\int_0^1\left\{\nabla^2\widehat{\mc L}^{(j)}(\betab^*+td)-\nabla^2\widehat{\mc L}^{(j)}(\betab^*)\right\}d\,{\rm d}t.
\end{aligned}
\]
Therefore, the RLH condition implies
\[
\|R_j(\theta)-R_j(\betab^*)\|_\infty\leq M\|a_j\|_2\|d\|_2+\frac{1}{2}M\|d\|_2^2.
\]
Combining this inequality with the preceding bound and using
\[
\|a_j\|_2\|d\|_2\leq (1/2)\left(\|a_j\|_2^2+\|d\|_2^2\right)
\]
yields
\[
\|R_j(\theta)\|_\infty\leq 2M\|a_j\|_2^2+M\|d\|_2^2.
\]
\qed

\subsection{Proof of Lemma~\ref{lem:surrogate-curvature-inheritance}}
Let $\vb=\betab_1-\betab_2$, where $(\betab_1,\betab_2)$ satisfies
\begin{equation}
\betab_1-\betab_2\in\BB(R)\cap\CC(l),
\qquad
\betab_2-\betab^*\in\BB(R)\cap\CC(l).
\label{surrogate-curvature-proof1}
\end{equation}
The current-batch RSC event implies
\[
\left\langle \nabla\widehat{\mc L}^{(b)}(\betab_1)-\nabla\widehat{\mc L}^{(b)}(\betab_2),\vb \right\rangle \geq \kappa\|\vb\|_2^2.
\]
For each historical batch $j<b$ and any $t\in(0,1]$, the pair $(\widehat{\betab}^{(j)}+t\vb,\widehat{\betab}^{(j)})$
also satisfies the conditions in \eqref{surrogate-curvature-proof1}: the cone condition is preserved by homogeneity, $t\vb\in\BB(R)$, and the second argument satisfies the localization condition in \eqref{eq:surrogate-curvature-sufficient-conditions}. Therefore,
\[
\left\langle \nabla\widehat{\mc L}^{(j)}(\widehat{\betab}^{(j)}+t\vb)-\nabla\widehat{\mc L}^{(j)}(\widehat{\betab}^{(j)}),t\vb \right\rangle \geq \kappa t^2\|\vb\|_2^2.
\]
Dividing by $t^2$ and letting $t\downarrow0$, twice differentiability yields
\begin{equation}
\vb^\T\nabla^2\widehat{\mc L}^{(j)}(\widehat{\betab}^{(j)})\vb \geq \kappa\|\vb\|_2^2,\qquad j<b.
\label{eq:historical-hessian-inherits-rsc}
\end{equation}
Finally, subtracting \eqref{eq:general-surrogate-gradient} evaluated at $\betab_2$ from the same identity evaluated at $\betab_1$ gives
\[
\begin{split}
&\left\langle \nabla\widehat{\mc Q}_b(\betab_1)-\nabla\widehat{\mc Q}_b(\betab_2),\vb \right\rangle\\
&\quad=\frac{1}{N_b}\left[n_b\left\langle \nabla\widehat{\mc L}^{(b)}(\betab_1)-\nabla\widehat{\mc L}^{(b)}(\betab_2),\vb \right\rangle+\sum_{j=1}^{b-1}n_j\vb^\T\nabla^2\widehat{\mc L}^{(j)}(\widehat{\betab}^{(j)})\vb\right].
\end{split}
\]
Applying the current-batch RSC bound and \eqref{eq:historical-hessian-inherits-rsc}, together with $\sum_{j=1}^b n_j=N_b$, establishes \eqref{eq:inherited-surrogate-curvature}.
\qed

\subsection{Proof of Lemma~\ref{lem:intermediate.rate}}

We prove the result for the reduced-model estimator. The corresponding result for the full-model estimator follows by omitting the Lagrange multiplier and replacing the projected score on $\mc M$ with the ordinary score. 

Let $\Delta=\widetilde{\betab}_0-\theta$. 
By the KKT conditions, there exist a subgradient $\xib$ and a Lagrange multiplier $\vb\in\RR^r$ such that 
\#
\nabla\widehat{\mc Q}_B(\widetilde{\betab}_0)+\lambdab\circ\xib+((\Cb^\T\vb)^\T,\mathbf 0_{p-m}^\T)^\T=0. \label{intermediate.rate.KKT} 
\#
Because both $\widetilde{\betab}_0$ and $\theta$ are feasible, $\Cb\Delta_{\mc M}=0$, and hence $\langle ((\Cb^\T\vb)^\T,\mathbf 0_{p-m}^\T)^\T,\Delta\rangle=0$. 
Moreover, the convexity of $\widehat{\mc Q}_B$ implies 
\[ 
\left\langle \nabla\widehat{\mc Q}_B(\widetilde{\betab}_0)-\nabla\widehat{\mc Q}_B(\theta),\Delta\right\rangle\geq 0. 
\] 
Combining this inequality with \eqref{intermediate.rate.KKT} gives 
\#
0\leq \left\langle \lambdab\circ\xib,\theta-\widetilde{\betab}_0\right\rangle+\left\langle \nabla\widehat{\mc Q}_B(\theta),\theta-\widetilde{\betab}_0\right\rangle. \label{eq:weighted-center-basic} 
\#
Because $\theta_{\overline{\mc M}^{\,c}}=0$ and $\lambdab_{\mc M}=0$, the subgradient property yields 
\$
\left\langle \lambdab\circ\xib,\theta-\widetilde{\betab}_0\right\rangle\leq \lambda\|\Delta_{\mc E}\|_1-\rho\lambda\|\Delta_{{\mc E}^c}\|_1. 
\$
By the assumed score bound, 
\$
\left\langle \nabla\widehat{\mc Q}_B(\theta),\theta-\widetilde{\betab}_0\right\rangle\leq \|\nabla\widehat{\mc Q}_B(\theta)\|_\infty\|\Delta\|_1\leq \frac{\rho\lambda}{2}\|\Delta\|_1. 
\$
Substituting the preceding two inequalities into \eqref{eq:weighted-center-basic} gives 
\$
\|\Delta_{{\mc E}^c}\|_1\leq \left(1+\frac{2}{\rho}\right)\|\Delta_{\mc E}\|_1. 
\$
Consequently, 
\$
\|\Delta\|_1\leq \left(2+\frac{2}{\rho}\right)\|\Delta_{\mc E}\|_1\leq (2+2/\rho)d^{1/2}\|\Delta\|_2=l\|\Delta\|_2, 
\$
and hence $\Delta\in\CC(l)$. 

We next derive the $\ell_2$ bound. Define 
\$
\eta=\sup\{t\in[0,1]:t\Delta\in\BB(R_0)\},\qquad \Delta_\eta=\eta\Delta,\qquad \betab_\eta=\theta+\Delta_\eta. 
\$
Equivalently, $\eta=1$ when $\|\Delta\|_2\leq R_0$, whereas $\eta=R_0/\|\Delta\|_2\in(0,1)$ when $\|\Delta\|_2>R_0$. 
In particular, if $\|\Delta_\eta\|_2 < R_0$, then $\eta = 1$ and $\|\Delta\|_2 < R_0$. 

Now, since $\widehat{\mc Q}_B$ is a convex function, Lemma~F.2 in \cite{fan2018lamm} implies that 
\#
\left\langle \nabla\widehat{\mc Q}_B(\betab_\eta)-\nabla\widehat{\mc Q}_B(\theta),\Delta_\eta\right\rangle\leq \eta\left\langle \nabla\widehat{\mc Q}_B(\widetilde{\betab}_0)-\nabla\widehat{\mc Q}_B(\theta),\Delta\right\rangle. \label{intermediate_rate_proof1} 
\#
We first find a lower bound of the left-hand side of \eqref{intermediate_rate_proof1}.
Since $\CC(l)$ is a cone, $\Delta_\eta\in\BB(R_0)\cap\CC(l)$, while $\theta-\betab^*\in\BB(R_0)\cap\CC(l)$ by assumption. 
Therefore, the inherited-curvature inequality gives 
\#
\left\langle \nabla\widehat{\mc Q}_B(\betab_\eta)-\nabla\widehat{\mc Q}_B(\theta),\Delta_\eta\right\rangle\geq \kappa\|\Delta_\eta\|_2^2. \label{eq:weighted-center-lower}
\#

Turning to bounding the right-hand side, since $\Cb\Delta_{\mc M}=0$ and $\operatorname{ker}(\Cb)=\operatorname{range}(\Zb)$, we have $\Delta_{\mc M}=\Zb\Zb^\T\Delta_{\mc M}$, and 
$\|\Zb^\T\Delta_{\mc M}\|_2=\|\Delta_{\mc M}\|_2$,
where the latter identity follows from $\Zb^\T\Zb=\Ib_{m-r}$. 
Using \eqref{intermediate.rate.KKT} and the orthogonality of the multiplier term, we obtain 
\$
\left\langle \nabla\widehat{\mc Q}_B(\widetilde{\betab}_0)-\nabla\widehat{\mc Q}_B(\theta),\Delta\right\rangle=-\left\langle \lambdab\circ\xib,\Delta\right\rangle-\left\langle \nabla\widehat{\mc Q}_B(\theta),\Delta\right\rangle, 
\$ 
For $j\in{\mc E}^c$, we have $\theta_j=0$ and hence $\xib_j\Delta_j=|\Delta_j|$. Therefore, 
\$ 
-\sum_{j\in{\mc E}^c}\lambda_j\xib_j\Delta_j-\left\langle \nabla_{{\mc E}^c} \widehat{\mc Q}_B(\theta),\Delta_{{\mc E}^c}\right\rangle\leq -\sum_{j\in{\mc E}^c}\big\{\lambda_j-\big|\nabla_j \widehat{\mc Q}_B(\theta)\big|\big\}|\Delta_j|\leq -\frac{\rho\lambda}{2}\|\Delta_{{\mc E}^c}\|_1\leq 0. 
\$
Furthermore, $\lambda_j = 0$ for $j \in \mc M$.
Therefore, the remaining terms satisfy 
\$
& \left\langle \nabla\widehat{\mc Q}_B(\widetilde{\betab}_0)-\nabla\widehat{\mc Q}_B(\theta),\Delta\right\rangle \\
& \leq \|\lambdab_{\mc S}\|_2\|\Delta_{\mc S}\|_2 +\big\|\Zb^\T\nabla_{\mc M}\widehat{\mc Q}_B(\theta)\big\|_2\|\Delta_{\mc M}\|_2  +\|\nabla_{{\mc E}\setminus\mc M} \widehat{\mc Q}_B(\theta)\|_2\|\Delta_{{\mc E}\setminus\mc M}\|_2. 
\$
Indeed, 
\$ 
\left|\left\langle \nabla_{\mc M}\widehat{\mc Q}_B(\theta),\Delta_{\mc M}\right\rangle\right|=\left|\left\langle \Zb^\T\nabla_{\mc M}\widehat{\mc Q}_B(\theta),\Zb^\T\Delta_{\mc M}\right\rangle\right|\leq \|\Zb^\T\nabla_{\mc M}\widehat{\mc Q}_B(\theta)\|_2\|\Delta_{\mc M}\|_2. 
\$
Combining this bound with \eqref{intermediate_rate_proof1}, \eqref{eq:weighted-center-lower}, and the fact that $\Delta_\eta=\eta\Delta$ yields 
\$
\kappa\|\Delta_\eta\|_2^2\leq \left\{\|\lambdab_{\mc S}\|_2+\big\|\Zb^\T\nabla_{\mc M}\widehat{\mc Q}_B(\theta)\big\|_2+\|\nabla_{{\mc E}\setminus\mc M}\widehat{\mc Q}_B(\theta)\|_2\right\}\|\Delta_\eta\|_2. 
\$
Thus, \[ \|\Delta_\eta\|_2\leq \kappa^{-1}\left\{\|\lambdab_{\mc S}\|_2+\|\Zb^\T\nabla_{\mc M}\widehat{\mc Q}_B(\theta)\|_2+\|\nabla_{{\mc E}\setminus\mc M}\widehat{\mc Q}_B(\theta)\|_2\right\}. \] By assumption, the right-hand side is strictly smaller than $R_0$. If $\eta<1$, however, then $\|\Delta_\eta\|_2=R_0$, which is a contradiction. 
Hence $\eta=1$, so $\Delta_\eta=\Delta$, and the claimed reduced-model bound follows. 

For the full-model estimator, the same argument applies without the constraint and the associated Lagrange multiplier. 
The score contributions on $\mc M$ and ${\mc E}\setminus\mc M$ are then combined into $\|\nabla_{\mc E}\widehat{\mc Q}_B(\theta)\|_2$, yielding 
\$
\|\widetilde{\betab}_a-\theta\|_2\leq \kappa^{-1}\left\{\|\lambdab_{\mc S}\|_2+\|\nabla_{\mc E}\widehat{\mc Q}_B(\theta)\|_2\right\}. 
\$
\qed

\subsection{Proof of Lemma~\ref{lem:omega_discrepancy}}

By \eqref{eq:surrogate-gradient-remainder},
\[
\gb^*-\omegab^*
=
\frac1{N_B}\sum_{b=1}^{B-1}n_bR_b(\betab^*).
\]
Hence \eqref{eq:R-beta-star-bound} gives
\[
\|\gb^*-\omegab^*\|_\infty
\le
\frac{3M}{2N_B}
\sum_{b=1}^{B-1}n_b
\|\widehat{\betab}^{(b)}-\betab^*\|_2^2
=
\frac32M\phi^*,
\]
which proves \eqref{eq:gstar-score} using only the historical RLH events.

Let
\[
\overline{\Hb}_B(\betab)
=
\frac1{N_B}\sum_{b=1}^Bn_b
\nabla^2\widehat{\mc L}^{(b)}(\betab),
\qquad
 d=\betab^\dagger-\betab^*.
\]
Taylor's expansion gives
\[
\frac1{N_B}\sum_{b=1}^Bn_b
\left\{
\nabla\widehat{\mc L}^{(b)}(\betab^\dagger)
-
\nabla\widehat{\mc L}^{(b)}(\betab^*)
\right\}
=
\int_0^1\overline{\Hb}_B(\betab^*+td)d\,{\rm d}t.
\]
Add and subtract $\overline{\Hb}_B(\betab^*)$ and $\Jb$.  The all-batch RLH event gives
\[
\left\|
\int_0^1
\left\{
\overline{\Hb}_B(\betab^*+td)
-
\overline{\Hb}_B(\betab^*)
\right\}d\,{\rm d}t
\right\|_\infty
\le
\frac12M\|d\|_2^2.
\]
Since $d$ is supported on $\mc M$,
\[
\|\{\overline{\Hb}_B(\betab^*)-\Jb\}d\|_\infty
\le
\eta_\infty\|d\|_1
\le
\sqrt m\,\eta_\infty\|d\|_2.
\]
Finally,
$\|\Jb d\|_\infty\le\|\Jb d\|_2\le J_u\|d\|_2$.
Combining the preceding three bounds proves \eqref{eq:omega-transport}.

Again by \eqref{eq:surrogate-gradient-remainder},
\[
\gb^\dagger-\omegab^*
=
\frac1{N_B}\sum_{b=1}^Bn_b
\left\{
\nabla\widehat{\mc L}^{(b)}(\betab^\dagger)
-
\nabla\widehat{\mc L}^{(b)}(\betab^*)
\right\}+
\frac1{N_B}\sum_{b=1}^{B-1}n_bR_b(\betab^\dagger).
\]
Using \eqref{eq:R-general-center-bound} and
\eqref{eq:omega-transport},
\begin{align*}
\|\gb^\dagger-\omegab^*\|_\infty
&\le
2M\phi^*+M\|d\|_2^2
+
\frac12M\|d\|_2^2
+
(J_u+\sqrt m\eta_\infty)\|d\|_2
\\
&\le
2M\phi^*+
2Md_B^2+(J_u+\sqrt m\eta_\infty)d_B.
\end{align*}
This is \eqref{eq:gdagger-score}.  The block bound follows from
$\|\Zb\|_2\le1$ and
\[
\|\Zb^\T v_{\mc M}\|_2+\|v_{\mc S}\|_2
\le
\|v_{\mc M}\|_2+\|v_{\mc S}\|_2
\le
(\sqrt m+\sqrt s)\|v\|_\infty
\le
2\sqrt q\|v\|_\infty.
\]
\qed

\subsection{Proof of Lemma~\ref{lem:intermediate.rate2}}

We first report the oracle KKT relations.  For the full oracle, we have $\mathfrak w^{\rm ora}_{a,\overline{\mc M}}=0$, $\widehat{\betab}^{\rm ora}_{a,\overline{\mc M}^{\,c}}=0$.
For the reduced oracle there is $\nu\in\RR^r$ such that $\mathfrak w^{\rm ora}_{0,\mc S}=0$, $\mathfrak w^{\rm ora}_{0,\mc M}+\Cb^\T\nu=0$, $\widehat{\betab}^{\rm ora}_{0,\overline{\mc M}^{\,c}}=0$.
Thus, we have $\Zb^\T\mathfrak w^{\rm ora}_{0,\mc M}=0$.

Let $\Delta=\widetilde{\betab}_e-\widehat{\betab}^{\rm ora}_e$.
If $\Delta=0$, both the cone conclusion and
\eqref{eq:oracle-centered-weighted-bound} are immediate.  Hence suppose
$\Delta\ne0$.
The weighted-program KKT conditions provide a subgradient $\xib$ with
$\|\xib\|_\infty\le1$.  In the reduced problem, the multiplier term has zero inner
product with $\Delta$ because both the weighted-program solution and the reduced oracle
are feasible.
The weighted-program KKT equation, convexity, and feasibility in the reduced case give
\[
0
\le
-\langle\lambdab\circ\xib,\Delta\rangle
-
\langle\mathfrak w_e^{\rm ora},\Delta\rangle.
\]
For the full oracle, the oracle-gradient term on $\overline{\mc M}$ vanishes.  For the
reduced oracle, its term on $\mc S$ vanishes, while
$\left\langle
\mathfrak w^{\rm ora}_{0,\mc M},
\Delta_{\mc M}
\right\rangle
=
-\langle\nu,\Cb\Delta_{\mc M}\rangle
=0$.
because both the weighted-program solution and the reduced oracle are feasible.  On
$\overline{\mc M}^{\,c}$, the oracle is zero, and hence the subgradient relation gives
$\xi_j\Delta_j=|\Delta_j|$.  Using
$\|\lambdab\|_\infty\le\lambda$ on $\mc S$, the oracle-score bound on
${\mc E}\setminus\overline{\mc M}$, and both the weight and oracle-score bounds on
${\mc E}^c$, the preceding KKT inequality yields
\[
0
\le
\lambda\|\Delta_{\mc S}\|_1
+
\frac12\rho\lambda
\|\Delta_{{\mc E}\setminus\overline{\mc M}}\|_1
-
\frac12\rho\lambda
\|\Delta_{{\mc E}^c}\|_1.
\]
Since $\mc S\subset\mc E$ and
${\mc E}\setminus\overline{\mc M}\subset\mc E$, when
${\mc E}^c\ne\varnothing$ the last display implies
\[
\|\Delta_{{\mc E}^c}\|_1
\le
\left(1+\frac2\rho\right)\|\Delta_{\mc E}\|_1,
\]
and the same inequality is immediate when ${\mc E}^c=\varnothing$.  Consequently,
\[
\|\Delta\|_1
\le
\left(2+\frac2\rho\right)\|\Delta_{\mc E}\|_1
\le
\left(2+\frac2\rho\right)d^{1/2}\|\Delta\|_2
=
l\|\Delta\|_2.
\]
Let
\[
\eta
=
\sup\{t\in[0,1]:t\Delta\in\BB(R_0)\},
\qquad
\Delta_\eta=\eta\Delta,
\qquad
\betab_\eta=\widehat{\betab}^{\rm ora}_e+\Delta_\eta.
\]
By construction and the cone bound just proved, $\Delta_\eta\in\BB(R_0)\cap\CC(l)$.  
By Lemma~\ref{lem:surrogate-curvature-inheritance}, the inherited curvature bound applies directly to the pair \((\betab_\eta,\widehat{\betab}^{\rm ora}_e)\), yielding
\[
\left\langle
\nabla\widehat{\mc Q}_B(\betab_\eta)
-
\nabla\widehat{\mc Q}_B(\widehat{\betab}^{\rm ora}_e),
\Delta_\eta
\right\rangle
\ge
\kappa\|\Delta_\eta\|_2^2.
\]
Convexity along the segment from $\widehat{\betab}^{\rm ora}_e$ to
$\widetilde{\betab}_e$ bounds the left-hand side by
\[
\eta
\left\langle
\nabla\widehat{\mc Q}_B(\widetilde{\betab}_e)
-
\nabla\widehat{\mc Q}_B(\widehat{\betab}^{\rm ora}_e),
\Delta
\right\rangle.
\]
The weighted-program KKT equation and the same oracle KKT cancellations used above show
that the last display is at most
$-\langle\lambdab\circ\xib,\Delta_\eta\rangle
-
\langle\mathfrak w_e^{\rm ora},\Delta_\eta\rangle$.
The terms on ${\mc E}^c$ are nonpositive.  On the remaining coordinates, the only
possibly positive terms are the penalty on $\mc S$ and the oracle score on
${\mc E}\setminus\overline{\mc M}$.  Hence
\[
\kappa\|\Delta_\eta\|_2^2
\le
\left\{
\|\lambdab_{\mc S}\|_2
+
\|\mathfrak w^{\rm ora}_{e,{\mc E}\setminus\overline{\mc M}}\|_2
\right\}\|\Delta_\eta\|_2.
\]
Because $\Delta\ne0$ and $R_0>0$, the definition of $\eta$ gives
$\Delta_\eta\ne0$.  Cancellation and the strict localization assumption therefore give
\[
\|\Delta_\eta\|_2
\le
\kappa^{-1}
\left\{
\|\lambdab_{\mc S}\|_2
+
\|\mathfrak w^{\rm ora}_{e,{\mc E}\setminus\overline{\mc M}}\|_2
\right\}
<R_0.
\]
If $\eta<1$, the definition of $\eta$ and $\Delta\ne0$ would imply
$\|\Delta_\eta\|_2=R_0$, contradicting the last display.  Hence $\eta=1$, and the
preceding bound becomes \eqref{eq:oracle-centered-weighted-bound}.  \qed

\subsection{Proof of Lemma~\ref{lem:oracle_rate}}

Let $\Delta_a=\widehat{\betab}^{\rm ora}_a-\betab^*$.
If $\Delta_a=0$, \eqref{eq:full-oracle-rate-first} is immediate.  Otherwise, put
\[
\eta
=
\sup\{t\in[0,1]:t\Delta_a\in\BB(R_0)\},
\qquad
\Delta_{a,\eta}=\eta\Delta_a.
\]
Because $\Delta_a$ is supported on $\overline{\mc M}$,
\[
\|\Delta_{a,\eta}\|_1
\le
(s+m)^{1/2}\|\Delta_{a,\eta}\|_2
\le
l\|\Delta_{a,\eta}\|_2.
\]
The current-batch RSC inequality, the historical Hessian inequalities, and
\eqref{eq:general-surrogate-gradient} therefore give
\[
\left\langle
\nabla\widehat{\mc Q}_B(\betab^*+\Delta_{a,\eta})
-
\nabla\widehat{\mc Q}_B(\betab^*),
\Delta_{a,\eta}
\right\rangle
\ge
\kappa\|\Delta_{a,\eta}\|_2^2.
\]
Convexity along the segment from $\betab^*$ to
$\widehat{\betab}^{\rm ora}_a$ bounds the left-hand side by
\[
\eta
\left\langle
\nabla\widehat{\mc Q}_B(\widehat{\betab}^{\rm ora}_a)
-
\nabla\widehat{\mc Q}_B(\betab^*),
\Delta_a
\right\rangle.
\]
The full-oracle KKT equation gives
$\nabla_{\overline{\mc M}}\widehat{\mc Q}_B
(\widehat{\betab}^{\rm ora}_a)=0$.  Consequently,
\begin{align*}
\kappa\|\Delta_{a,\eta}\|_2^2
&\le
-\left\langle
\gb^*_{\overline{\mc M}},
\Delta_{a,\eta,\overline{\mc M}}
\right\rangle
\\
&\le
\|\gb^*_{\overline{\mc M}}\|_2
\|\Delta_{a,\eta}\|_2.
\end{align*}
It follows that
\[
\|\Delta_{a,\eta}\|_2
\le
\kappa^{-1}\|\gb^*_{\overline{\mc M}}\|_2
<R_0.
\]
The localization point therefore cannot lie on the boundary of $\BB(R_0)$, so
$\eta=1$.  This proves \eqref{eq:full-oracle-rate-first}.  On the historical RLH event,
\eqref{eq:gstar-score} gives
\[
\|\gb^*_{\overline{\mc M}}\|_2
\le
\|\omegab^*_{\overline{\mc M}}\|_2
+
\frac32M(s+m)^{1/2}\phi^*
\le
\|\omegab^*_{\overline{\mc M}}\|_2
+
2M(s+m)^{1/2}\phi^*,
\]
which proves \eqref{eq:full-oracle-rate}.

For the reduced oracle, let $\Delta_0=\widehat{\betab}^{\rm ora}_0-\betab^\dagger$.
Both vectors are supported on $\overline{\mc M}$ and satisfy the same constraint, so
\[
\Cb\Delta_{0,\mc M}=0,
\qquad
\Delta_{0,\mc M}=\Zb u,
\qquad
\|u\|_2=\|\Delta_{0,\mc M}\|_2.
\]
If $\Delta_0=0$, the bound about $\betab^\dagger$ is immediate.  Otherwise, put
\[
\eta
=
\sup\{t\in[0,1]:t\Delta_0\in\BB(R_0)\},
\qquad
\Delta_{0,\eta}=\eta\Delta_0.
\]
Because $\betab^\dagger-\betab^*\in\BB(R_0)\cap\CC(l)$, the same aggregate RSC
argument and convexity along the feasible line segment give
\[
\kappa\|\Delta_{0,\eta}\|_2^2
\le
\eta
\left\langle
\nabla\widehat{\mc Q}_B(\widehat{\betab}^{\rm ora}_0)
-
\nabla\widehat{\mc Q}_B(\betab^\dagger),
\Delta_0
\right\rangle.
\]
For some multiplier $\nu\in\RR^r$, the reduced-oracle KKT equations are
\[
\nabla_{\mc S}\widehat{\mc Q}_B
(\widehat{\betab}^{\rm ora}_0)=0,
\qquad
\nabla_{\mc M}\widehat{\mc Q}_B
(\widehat{\betab}^{\rm ora}_0)+\Cb^\T\nu=0.
\]
Since $\Cb\Delta_{0,\mc M}=0$, the multiplier term vanishes.  Therefore,
\begin{align*}
\kappa\|\Delta_{0,\eta}\|_2^2
&\le
-\left\langle
\gb^\dagger_{\mc M},
\Delta_{0,\eta,\mc M}
\right\rangle
-
\left\langle
\gb^\dagger_{\mc S},
\Delta_{0,\eta,\mc S}
\right\rangle
\\
&\le
\left\{
\|\Zb^\T\gb^\dagger_{\mc M}\|_2
+
\|\gb^\dagger_{\mc S}\|_2
\right\}
\|\Delta_{0,\eta}\|_2.
\end{align*}
It follows that
\[
\|\Delta_{0,\eta}\|_2
\le
\kappa^{-1}
\left\{
\|\Zb^\T\gb^\dagger_{\mc M}\|_2
+
\|\gb^\dagger_{\mc S}\|_2
\right\}
<R_0.
\]
Thus $\eta=1$.  Adding
$\|\betab^\dagger-\betab^*\|_2\le d_B$ proves
\eqref{eq:reduced-oracle-rate-first}.  Under the additional event,
\eqref{eq:gdagger-block-score} proves
\eqref{eq:reduced-oracle-rate}.  \qed

\subsection{Proof of Lemma~\ref{lem:oracle_bahadur}}

Let
\[
\scoreb_B
=
\frac1{N_B}\sum_{b=1}^Bn_b
\nabla_{\overline{\mc M}}\widehat{\mc L}^{(b)}(\betab^*),
\qquad
\Delta_e
=
\widehat{\betab}^{\rm ora}_{e,\overline{\mc M}}
-
\betab^*_{\overline{\mc M}}.
\]
By Lemma~\ref{lem:decomposition},
\[
\nabla_{\overline{\mc M}}\widehat{\mc Q}_B(\betab^*)
=
\scoreb_B
+
\frac1{N_B}\sum_{b=1}^{B-1}n_b
R_b(\betab^*)_{\overline{\mc M}}.
\]
Define
\begin{align*}
\remb_e
=
\frac1{N_B}\sum_{b=1}^{B-1}n_b
R_b(\betab^*)_{\overline{\mc M}}
+
\nabla_{\overline{\mc M}}\widehat{\mc Q}_B
(\widehat{\betab}^{\rm ora}_e)
-
\nabla_{\overline{\mc M}}\widehat{\mc Q}_B(\betab^*)
-
\Jb_0\Delta_e.
\end{align*}

For the reduced oracle, the KKT conditions and the constraint give
\[
\Jb_0\Delta_0
=
-\scoreb_B-\remb_0-\Cb_0^\T\nu,
\qquad
\Cb_0\Delta_0=-\hb_B.
\]
Solving this linear system gives
\[
\Delta_0
=
\Jb_0^{-1/2}(\Pb_0-\Ib)\Jb_0^{-1/2}\scoreb_B
-
\Jb_0^{-1}\Cb_0^\T\Psi_0^{-1}\hb_B
+
\Jb_0^{-1/2}(\Pb_0-\Ib)\Jb_0^{-1/2}\remb_0.
\]
This is \eqref{eq:reduced-oracle-bahadur} with the last term defined as $\mc R_0$.
For the full oracle, the KKT equation is
$\Jb_0\Delta_a=-\scoreb_B-\remb_a$, which gives
\eqref{eq:full-oracle-bahadur}.

It remains to bound $\remb_e$.  By \eqref{eq:R-beta-star-bound},
\begin{equation}
\left\|
\frac1{N_B}\sum_{b=1}^{B-1}n_b
R_b(\betab^*)_{\overline{\mc M}}
\right\|_2
\le
\frac{3M(s+m)^{1/2}}{2N_B}
\sum_{b=1}^{B-1}n_b
\|\widehat{\betab}^{(b)}-\betab^*\|_2^2.
\label{eq:bahadur-historical-R}
\end{equation}
The centered gradient difference in the definition of $\remb_e$, $\nabla_{\overline{\mc M}}\widehat{\mc Q}_B
(\widehat{\betab}^{\rm ora}_e)
-
\nabla_{\overline{\mc M}}\widehat{\mc Q}_B(\betab^*) - \Jb_0\Delta_e$, equals
\begin{align*}
&\frac{n_B}{N_B}
\int_0^1
\left\{
\nabla^2_{\overline{\mc M}\overline{\mc M}}
\widehat{\mc L}^{(B)}(\betab^*+t\widetilde\Delta_e)
-
\nabla^2_{\overline{\mc M}\overline{\mc M}}
\widehat{\mc L}^{(B)}(\betab^*)
\right\}\Delta_e\,{\rm d}t
\\
&+
\frac1{N_B}\sum_{b=1}^{B-1}n_b
\left\{
\nabla^2_{\overline{\mc M}\overline{\mc M}}
\widehat{\mc L}^{(b)}(\widehat{\betab}^{(b)})
-
\nabla^2_{\overline{\mc M}\overline{\mc M}}
\widehat{\mc L}^{(b)}(\betab^*)
\right\}\Delta_e
\\
&+
\left\{
\frac1{N_B}\sum_{b=1}^Bn_b
\nabla^2_{\overline{\mc M}\overline{\mc M}}
\widehat{\mc L}^{(b)}(\betab^*)
-
\Jb_0
\right\}\Delta_e,
\end{align*}
where \(\widetilde{\Delta}_e\in\RR^p\) denotes the zero-padded extension of \(\Delta_e\), with \((\widetilde{\Delta}_e)_{\overline{\mc M}}=\Delta_e\) and \((\widetilde{\Delta}_e)_{\overline{\mc M}^{\,c}}=\mathbf 0\).
The RLH event bounds the first two lines in Euclidean norm by
\[
\frac{M(s+m)^{1/2}}{2N_B}n_B\|\Delta_e\|_2^2
+
\frac{M(s+m)^{1/2}}{N_B}
\sum_{b=1}^{B-1}n_b
\|\widehat{\betab}^{(b)}-\betab^*\|_2\|\Delta_e\|_2.
\]
For the last line, the event $\mc H(\eta_\infty)$ implies
\$
\left\|
\left\{
\frac1{N_B}\sum_{b=1}^Bn_b
\nabla^2_{\overline{\mc M}\overline{\mc M}}
\widehat{\mc L}^{(b)}(\betab^*)
-
\Jb_0
\right\}\Delta_e
\right\|_2
\le
q\eta_\infty\|\Delta_e\|_2.
\$
Indeed, each coordinate is bounded by
$\eta_\infty\|\Delta_e\|_1\le\sqrt q\eta_\infty\|\Delta_e\|_2$, and there are $q$
coordinates.  Combining these estimates with \eqref{eq:bahadur-historical-R}, and using
bounded operator norms of
$\Jb_0^{-1}$ and
$\Jb_0^{-1/2}(\Pb_0-\Ib)\Jb_0^{-1/2}$, proves
\eqref{eq:reduced-bahadur-rem} and \eqref{eq:full-bahadur-rem}.  \qed

\subsection{Proof of Lemma~\ref{lem:chisq-anticoncentration}}
Since the map $t \to \PP\{\chi^2(r,e)\leq t\}$
is nondecreasing, the difference on the left-hand side of \eqref{eq:chisq-anticoncentration} is equal to
\[
\PP\left\{
x-\epsilon\sqrt{r}
<
\chi^2(r,e)
\leq
x+\epsilon\sqrt{r}
\right\}.
\]
Therefore, it suffices to bound, uniformly over $r\geq 1$, $e\geq 0$, and $x\in\RR$, the probability that $\chi^2(r,e)$ lies in an interval of length $2\epsilon\sqrt{r}$. 
We treat the cases $r\geq 2$ and $r=1$ separately as in \citet{shi2019linear}.

First, suppose that $r\geq 2$. 
Let $f_\nu$ denote the density of a central chi-square random variable with $\nu$ degrees of freedom:
\[
f_\nu(y)=\frac{1}{2^{\nu/2}\Gamma(\nu/2)}y^{\nu/2-1}e^{-y/2},\qquad y>0,
\]
where $\Gamma(\cdot)$ denotes the gamma function.
We first establish that
\begin{equation}
\sup_{y\geq 0}f_\nu(y)
\leq
\nu^{-1/2},
\qquad
\nu\in\{2,3,4,\ldots\}.
\label{eq:central-chisq-density-bound}
\end{equation}
When $\nu=2$, the density $f_2(y)=\tfrac{1}{2}e^{-y/2}$ is decreasing, and hence
\[
\sup_{y\geq 0}f_2(y)
=
f_2(0)
=
\frac{1}{2}
\leq
2^{-1/2}.
\]
For $\nu\geq 3$, the density attains its maximum at $y=\nu-2$. Therefore,
\[
\sup_{y\geq 0}f_\nu(y)
=
f_\nu(\nu-2)
=
\frac{(\nu/2-1)^{\nu/2-1}}{2\Gamma(\nu/2)}
e^{-\nu/2+1}.
\]
By Stirling's lower bound,
\[
\Gamma(\nu/2)
=
\Gamma\bigl((\nu/2-1)+1\bigr)
\geq
\sqrt{2\pi}
(\nu/2-1)^{\nu/2-1/2}
e^{-\nu/2+1}.
\]
Consequently,
\[
\begin{aligned}
f_\nu(\nu-2)
&\leq
\frac{
(\nu/2-1)^{\nu/2-1}e^{-\nu/2+1}
}{
2\sqrt{2\pi}
(\nu/2-1)^{\nu/2-1/2}
e^{-\nu/2+1}
},
\end{aligned}
\]
which is equal to $1/\{2\sqrt{\pi(\nu-2)}\}$.
For every integer $\nu\geq 3$, we have $\nu\leq 3(\nu-2)$, and hence
\[
(\nu-2)^{-1/2}
\leq
\sqrt{3}\,\nu^{-1/2}.
\]
It follows that
\[
f_\nu(\nu-2)
\leq
\frac{\sqrt{3}}{2\sqrt{\pi}}\nu^{-1/2}
\leq
\nu^{-1/2},
\]
due to $\sqrt{3}/(2\sqrt{\pi})<1$. This proves
\eqref{eq:central-chisq-density-bound}.

The density of a noncentral chi-square random variable admits the
following representation
\[
f_{r,e}(y)
=
\sum_{k=0}^{\infty}
e^{-e/2}\frac{(e/2)^k}{k!}
f_{r+2k}(y),
\qquad y>0.
\]
For $r\geq 2$, each index satisfies $r+2k\geq 2$. Therefore, by
\eqref{eq:central-chisq-density-bound},
\[
\sup_{y\geq 0}f_{r+2k}(y)
\leq
(r+2k)^{-1/2}
\leq
r^{-1/2}.
\]
Because the weights are nonnegative and sum to one, we obtain
\$
\sup_{e\geq 0}\sup_{y\geq 0}f_{r,e}(y)
\leq
r^{-1/2},
\qquad r\geq 2.
\$
It follows that, for every $x\in\RR$ and $e\geq 0$, the probability that
$\chi^2(r,e)$ lies in an interval of length $2\epsilon\sqrt{r}$ is bounded by
\[
2\epsilon\sqrt{r}
\sup_{y\geq 0}f_{r,e}(y)
\leq
2\epsilon.
\]

It remains to consider $r=1$, in which case the interval has length
$2\epsilon$. Using the representation
\[
\chi^2(1,e)
\overset{d}{=}
(Z+\sqrt{e})^2,
\qquad
Z\sim N(0,1),
\]
consider an arbitrary interval $I\subseteq[0,\infty)$ of length at most
$2\epsilon$. Write $I=(a,b]$, where $0\leq a\leq b$ and
$b-a\leq 2\epsilon$. Its preimage under the map $u\mapsto u^2$ is contained in $[-\sqrt{b},-\sqrt{a})
\cup
(\sqrt{a},\sqrt{b}]$,
whose total Lebesgue measure is
$2(\sqrt{b}-\sqrt{a})
\leq
2\sqrt{b-a}
\leq
2\sqrt{2\epsilon}.$
Since the density of $Z+\sqrt{e}$ is uniformly bounded by
$(2\pi)^{-1/2}$, it follows that
\[
\PP\{(Z+\sqrt{e})^2\in I\}
\leq
\frac{1}{\sqrt{2\pi}}
\,2\sqrt{2\epsilon}
=
2\sqrt{\frac{\epsilon}{\pi}},
\]
uniformly over $e\geq 0$ and over all intervals $I\subseteq[0,\infty)$ of
length at most $2\epsilon$. The same bound applies to an arbitrary interval
in $\RR$, after intersecting it with the support $[0,\infty)$.

Combining the two cases yields
\[
\begin{aligned}
&\sup_{r\geq 1,\ e\geq 0,\ x\in\RR}
\left[
\PP\{\chi^2(r,e)\leq x+\epsilon\sqrt{r}\}
-
\PP\{\chi^2(r,e)\leq x-\epsilon\sqrt{r}\}
\right]
\leq
\max\left\{
2\epsilon,\,
2\sqrt{\frac{\epsilon}{\pi}}
\right\},
\end{aligned}
\]
which is bounded by $2\sqrt{\epsilon}$. This proves
\eqref{eq:chisq-anticoncentration}.
\qed

\subsection{Proof of Lemma~\ref{lem:V0-conditioning}}

Fix a unit vector \(u\in\RR^r\).  Since \(\Cb_0\Cb_0^\T=\Cb\Cb^\T\) and
Assumption~\mainref{assm:local_alternative} gives
\(\lambda_{\min}(\Cb\Cb^\T)\ge C_{\max}^{-2}\),
\[
u^\T\Vb_0u
=
(\Jb_0^{-1}\Cb_0^\T u)^\T
\Kb_0
(\Jb_0^{-1}\Cb_0^\T u)
\ge
K_l\|\Jb_0^{-1}\Cb_0^\T u\|_2^2
\ge
\frac{K_l}{J_u^2}\|\Cb_0^\T u\|_2^2
=
\frac{K_l}{J_u^2}u^\T\Cb\Cb^\T u
\ge
\frac{K_l}{J_u^2C_{\max}^2}.
\]
Here the first inequality uses \(\lambda_{\min}(\Kb_0)\ge K_l\) from
Assumptions~\mainref{assm:hessian} and \mainref{assm:score}, and the second uses
\(\|\Cb_0^\T u\|_2=\|\Jb_0\Jb_0^{-1}\Cb_0^\T u\|_2
\le\|\Jb_0\|_2\|\Jb_0^{-1}\Cb_0^\T u\|_2\)
together with \(\lambda_{\max}(\Jb_0)\le J_u\), which follows from
Assumption~\mainref{assm:hessian}.  
This proves the eigenvalue bound; in particular, \(\Vb_0\) is nonsingular.

Now, we prove the second statement: for any full-row-rank matrix
\(\Ab\) and any positive definite matrix \(\Kb\),
\[
\Ab^\T(\Ab\Kb\Ab^\T)^{-1}\Ab\preceq \Kb^{-1}.
\]
Indeed, writing \(\Bb=\Kb^{1/2}\Ab^\T\), the left-hand side equals $\Kb^{-1/2}\Bb(\Bb^\T\Bb)^{-1}\Bb^\T\Kb^{-1/2}$,
and
\(\Bb(\Bb^\T\Bb)^{-1}\Bb^\T\) is an orthogonal projection.
Applying this inequality with \(\Ab=\Cb_0\Jb_0^{-1}\) and \(\Kb=\Kb_0\), and noting that
\(\Ab\Kb_0\Ab^\T=\Vb_0\), gives
\[
\left\|
\Omega_0
\right\|_2^2
=
\lambda_{\max}\left\{
\Ab^\T\Vb_0^{-1}\Ab
\right\}
\le
\lambda_{\max}(\Kb_0^{-1})
\le
K_l^{-1}.
\] 

\qed

\subsection{Proof of Lemma~\ref{lem:plugin_stability}}

Note that $\lambda_{\min}(\mathbf J_0)\geq J_l$, and $\|\mathbf J_0^{-1}\|_2\leq J_l^{-1}$.
Weyl's inequality gives
\[
\lambda_{\min}(\widehat{\mathbf J})
\geq
\lambda_{\min}(\mathbf J_0)
-\|\widehat{\mathbf J}-\mathbf J_0\|_2
\geq J_l-\tau_J
\geq \frac{J_l}{2}>0.
\]
Thus, $\widehat{\mathbf J}$ is positive definite and
$\|\widehat{\mathbf J}^{-1}\|_2\leq 2/J_l$.
The resolvent identity gives
$\widehat{\mathbf J}^{-1}-\mathbf J_0^{-1}
=
\widehat{\mathbf J}^{-1}
(\mathbf J_0-\widehat{\mathbf J})
\mathbf J_0^{-1}$,
and hence
\[
\|\widehat{\mathbf J}^{-1}-\mathbf J_0^{-1}\|_2
\leq
\|\widehat{\mathbf J}^{-1}\|_2
\|\widehat{\mathbf J}-\mathbf J_0\|_2
\|\mathbf J_0^{-1}\|_2
\leq
\frac{2\tau_J}{J_l^2}.
\]
Using the exact identity
\[
\begin{split}
\widehat{\mathbf J}^{-1}\widehat{\mathbf K}
 \widehat{\mathbf J}^{-1}
-\mathbf J_0^{-1}\mathbf K_0\mathbf J_0^{-1}
&=
\widehat{\mathbf J}^{-1}
(\widehat{\mathbf K}-\mathbf K_0)
\widehat{\mathbf J}^{-1}\\
&\quad+
(\widehat{\mathbf J}^{-1}-\mathbf J_0^{-1})
\mathbf K_0\widehat{\mathbf J}^{-1}
+
\mathbf J_0^{-1}\mathbf K_0
(\widehat{\mathbf J}^{-1}-\mathbf J_0^{-1}),
\end{split}
\]
Assumption~\mainref{assm:score} and submultiplicativity yield
\[
\begin{split}
\bigl\|
\widehat{\mathbf J}^{-1}\widehat{\mathbf K}
\widehat{\mathbf J}^{-1}
-\mathbf J_0^{-1}\mathbf K_0\mathbf J_0^{-1}
\bigr\|_2&\leq
\|\widehat{\mathbf J}^{-1}\|_2^2
\|\widehat{\mathbf K}-\mathbf K_0\|_2+
\|\widehat{\mathbf J}^{-1}-\mathbf J_0^{-1}\|_2
\|\mathbf K_0\|_2
\|\widehat{\mathbf J}^{-1}\|_2\\
&\qquad\quad+
\|\mathbf J_0^{-1}\|_2
\|\mathbf K_0\|_2
\|\widehat{\mathbf J}^{-1}-\mathbf J_0^{-1}\|_2\\
&\quad\leq
\frac{4\tau_K}{J_l^2}
+\frac{4K_u\tau_J}{J_l^3}
+\frac{2K_u\tau_J}{J_l^3}\\
&\quad=
\frac{6K_u}{J_l^3}\tau_J
+\frac{4}{J_l^2}\tau_K\\
&\quad\leq
6J_l^{-3}(K_u+J_l)(\tau_J+\tau_K).
\end{split}
\]
Therefore,
\[
\begin{split}
\|\widehat{\mathbf V}-\mathbf V_0\|_2
\leq
\|\mathbf C_0\|_2^2
\bigl\|
\widehat{\mathbf J}^{-1}\widehat{\mathbf K}
\widehat{\mathbf J}^{-1}
-\mathbf J_0^{-1}\mathbf K_0\mathbf J_0^{-1}
\bigr\|_2\leq
C_V\|\mathbf C_0\|_2^2(\tau_J+\tau_K).
\end{split}
\]
Moreover,
\[
\|\mathbf C_0\|_2^2
=
\lambda_{\max}(\mathbf C_0\mathbf C_0^\top)
=
\lambda_{\max}(\mathbf C\mathbf C^\top),
\]
which proves~\eqref{eq:plugin-V-error}.

By Lemma~\ref{lem:V0-conditioning}, $\mathbf V_0$ is positive definite. Since
$\widehat{\mathbf J}^{-1}$ and $\widehat{\mathbf K}$ are symmetric,
$\widehat{\mathbf V}$ is symmetric. Under the additional condition in
the statement, Weyl's inequality and \eqref{eq:plugin-V-error} give
\[
\begin{split}
\lambda_{\min}(\widehat{\mathbf V})
&\geq
\lambda_{\min}(\mathbf V_0)
-\|\widehat{\mathbf V}-\mathbf V_0\|_2\\
&\geq
\lambda_{\min}(\mathbf V_0)
-C_V\|\mathbf C_0\|_2^2(\tau_J+\tau_K)\geq
\frac{\lambda_{\min}(\mathbf V_0)}{2}>0.
\end{split}
\]
Thus, $\widehat{\mathbf V}$ is positive definite. Furthermore,
\[
\begin{split}
\bigl\|
\mathbf V_0^{-1/2}\widehat{\mathbf V}\mathbf V_0^{-1/2}
-\mathbf I_r
\bigr\|_2
&=
\bigl\|
\mathbf V_0^{-1/2}
(\widehat{\mathbf V}-\mathbf V_0)
\mathbf V_0^{-1/2}
\bigr\|_2\\
&\leq
\lambda_{\min}(\mathbf V_0)^{-1}
\|\widehat{\mathbf V}-\mathbf V_0\|_2\\
&\leq
\lambda_{\min}(\mathbf V_0)^{-1}
C_V\|\mathbf C_0\|_2^2(\tau_J+\tau_K)
\leq \frac12.
\end{split}
\]
Consequently,
\[
\left\|
\left\{
\mathbf V_0^{-1/2}\widehat{\mathbf V}\mathbf V_0^{-1/2}
\right\}^{-1}
\right\|_2
\leq 2.
\]
Since
$\mathbf V_0^{1/2}\widehat{\mathbf V}^{-1}\mathbf V_0^{1/2}
=
\left\{
\mathbf V_0^{-1/2}\widehat{\mathbf V}\mathbf V_0^{-1/2}
\right\}^{-1}$,
we obtain
\[
\begin{split}
\bigl\|
\mathbf V_0^{1/2}\widehat{\mathbf V}^{-1}
\mathbf V_0^{1/2}
-\mathbf I_r
\bigr\|_2
&=
\left\|
\left\{
\mathbf V_0^{-1/2}\widehat{\mathbf V}\mathbf V_0^{-1/2}
\right\}^{-1}
\left\{
\mathbf I_r-
\mathbf V_0^{-1/2}\widehat{\mathbf V}\mathbf V_0^{-1/2}
\right\}
\right\|_2\\
&\leq
2
\bigl\|
\mathbf V_0^{-1/2}
(\widehat{\mathbf V}-\mathbf V_0)
\mathbf V_0^{-1/2}
\bigr\|_2\\
&\leq
2\lambda_{\min}(\mathbf V_0)^{-1}
\|\widehat{\mathbf V}-\mathbf V_0\|_2\\
&\leq
2\lambda_{\min}(\mathbf V_0)^{-1}
C_V\|\mathbf C_0\|_2^2(\tau_J+\tau_K),
\end{split}
\]
which proves~\eqref{eq:plugin-whitened-error}.

Finally, for every $N>0$ and every $\mathbf u\in\mathbb R^r$,
\[
\begin{split}
\left|
N\|\widehat{\mathbf V}^{-1/2}\mathbf u\|_2^2
-
N\|\mathbf V_0^{-1/2}\mathbf u\|_2^2
\right|&=
N\left|
(\mathbf V_0^{-1/2}\mathbf u)^\top
\left\{
\mathbf V_0^{1/2}\widehat{\mathbf V}^{-1}
\mathbf V_0^{1/2}
-\mathbf I_r
\right\}
(\mathbf V_0^{-1/2}\mathbf u)
\right|\\
&\leq
\bigl\|
\mathbf V_0^{1/2}\widehat{\mathbf V}^{-1}
\mathbf V_0^{1/2}
-\mathbf I_r
\bigr\|_2
\,N\|\mathbf V_0^{-1/2}\mathbf u\|_2^2,
\end{split}
\]
which proves~\eqref{eq:plugin-quadratic-error}.
\qed

\subsection{Sample-size calculation for Corollary~\mainref{cor:batch_b_init}}
\label{sec:sample_size_discussion}

Squaring \maineqref{sample.size.requirement} gives
\[
N_b
\ge
\alpha
\left\{1+\log\left(\frac{N_{b-1}}{n_1}\right)\right\}^2,
\qquad
\alpha
=
\frac{729C^2M^2}{4\kappa^4}|\mc A|^2\log p.
\]
Let $C_\star>1$ be the larger solution of
$c=(1+\log c)^2$.  Then
$x\ge(1+\log x)^2$ for every $x\ge C_\star$.
Assume $n_1\ge C_\star\alpha$.  Since $n_1\ge\alpha$ and
$N_{b-1}\ge n_1\ge C_\star\alpha$,
\begin{align*}
N_b
\ge N_{b-1}
\ge
\alpha
\left\{1+\log\left(\frac{N_{b-1}}{\alpha}\right)\right\}^2
\ge
\alpha
\left\{1+\log\left(\frac{N_{b-1}}{n_1}\right)\right\}^2.
\end{align*}
Thus the condition holds for every $b\ge2$.  

\qed

\section{Examples} \label{sec:examples}

\subsection{Generalized linear models} 

Fix $b\in\{1,\ldots,B\}$.  For $j=1,\ldots,b$, write
$X^{(j)}=(X_1^{(j)},\ldots,X_{n_j}^{(j)})^\T\in\RR^{n_j\times p}$,
and let $X_{\cdot k}^{(j)}$ be its $k$th column.  Conditionally on
$X^{(1)},\ldots,X^{(b)}$, the responses are independent with canonical
parameter $(X_i^{(j)})^\T\betab^*$ and
\begin{equation}
\widehat{\mc L}^{(j)}(\betab)
=
\frac1{n_j}\sum_{i=1}^{n_j}
\left[
b\{(X_i^{(j)})^\T\betab\}
-Y_i^{(j)}(X_i^{(j)})^\T\betab
\right].
\label{eq:glm-loss}
\end{equation}
Thus
\begin{equation}
\nabla\widehat{\mc L}^{(j)}(\betab^*)
=-\frac1{n_j}\sum_{i=1}^{n_j}\varepsilon_i^{(j)}X_i^{(j)},
\qquad
\varepsilon_i^{(j)}
=Y_i^{(j)}-b'\{(X_i^{(j)})^\T\betab^*\}.
\label{eq:glm-score}
\end{equation}
Denote
\$
c_b=
\max_{1\le j\le b}\max_{1\le i\le n_j}
\|X_i^{(j)}\|_\infty.
\$

\subsubsection{Assumptions for generalized linear models}

Throughout, we assume $p\ge2$ and $|\mc A|\ge1$.
\begin{enumerate}[label=(GLM-\arabic*),start=1,leftmargin=3\parindent]
\item\label{GLMassm: correct specification}
For every $j\le b$ and $i\le n_j$,
\[
\EE(\varepsilon_i^{(j)}\mid X^{(1)},\ldots,X^{(b)})=0.
\]

\item\label{GLMassm: covariate norm conditions}
There are constants $B_1,B_2>0$ such that, for every $j\le b$,
\begin{equation}
\max_{1\le k\le p}\|X_{\cdot k}^{(j)}\|_2^2\le B_1n_j,
\qquad
\max_{1\le k\le p}\|X_{\cdot k}^{(j)}\|_\infty
\le B_2\left(\frac{n_j}{\log p}\right)^{1/2}.
\label{eq:glm-column-bounds}
\end{equation}

\item\label{GLMassm: residual Bernstein}
There are deterministic $M_\varepsilon>0$ and $v_0>0$, possibly depending
on $(N_b,p)$, such that
\begin{equation}
M_\varepsilon^2\EE\left[
\exp\left\{\frac{|\varepsilon_i^{(j)}|}{M_\varepsilon}\right\}
-1-\frac{|\varepsilon_i^{(j)}|}{M_\varepsilon}
\,\middle|\,X^{(1)},\ldots,X^{(b)}
\right]
\le\frac{v_0}{2}
\label{eq:glm-residual-Bernstein}
\end{equation}
and
\begin{equation}
b''\{(X_i^{(j)})^\T\betab^*\}\le v_0
\label{eq:glm-truth-weight}
\end{equation}
for every $j\le b$ and $i\le n_j$.

\item\label{GLMassm: RSC}
There is $\underline\kappa>0$ such that
\begin{equation}
\min_{j\le b}
\inf_{\substack{u\ne0:\,
\|u_{\mc A^c}\|_1\le3\|u_{\mc A}\|_1}}
\frac{u^\T\nabla^2\widehat{\mc L}^{(j)}(\betab^*)u}{\|u\|_2^2}
\ge\underline\kappa.
\label{eq:glm-truth-rsc}
\end{equation}

\item\label{GLMassm: self concordance}
The function $b$ is three times continuously differentiable, $b''(t)>0$, and
there is $K\ge0$ such that
\$
|b'''(t)|\le Kb''(t),\qquad t\in\RR.
\$
\end{enumerate}

Condition~\ref{GLMassm: covariate norm conditions} is the batchwise form of
(A7) of \citet{jacobson2025strong}.  The first display in
\ref{GLMassm: residual Bernstein} is the conditional, batchwise form of
(A3) of \citet{shi2019linear} and (A8) of
\citet{jacobson2025strong}.  
The bound
\eqref{eq:glm-truth-weight} is additional for the ease of theory.
Conditions
\ref{GLMassm: RSC} and \ref{GLMassm: self concordance} are the batchwise
forms of (A9) and (A10), respectively, of
\citet{jacobson2025strong}. 
Indeed, on the tuning event in \maineqref{batch_b_event}, the KKT argument in the proof of Theorem~\mainref{thm:batch_b_init} gives, for the displacement
$u=\widehat\betab^{(j)}-\betab^*$ at the batch under consideration,
\begin{equation}
\|u_{\mc A^c}\|_1
\le3\|u_{\mc A}\|_1,
\qquad
\|u\|_1
\le4|\mc A|^{1/2}\|u\|_2;
\label{eq:glm-main-body-cone}
\end{equation}
see \eqref{eq:thm1-cone}--\eqref{eq:thm1-global-cone}.  
Hence every displacement generated by the proof of Theorem~\mainref{thm:batch_b_init} belongs to the cone $\CC(4|\mc A|^{1/2})$.  

\subsubsection{Score concentration}

\begin{lemma}\label{lem:glm-score-concentration}
Let $\lambda_j=C(\log p/N_j)^{1/2}$, $j=1,\ldots,b$.  Under
\ref{GLMassm: correct specification}--\ref{GLMassm: residual Bernstein}, if
\begin{equation}
C^2\ge64B_1v_0+16M_\varepsilon C
\min\left\{B_2,c_b\left(\frac{\log p}{n_1}\right)^{1/2}\right\},
\label{eq:glm-C-condition}
\end{equation}
then
\begin{equation}
\PP\left\{
\max_{1\le j\le b}\frac{\|\omegab^{(j)}\|_\infty}{\lambda_j}
\le\frac14
\,\middle|\,X^{(1)},\ldots,X^{(b)}
\right\}
\ge1-\frac{2b}{p}.
\label{eq:glm-score-event}
\end{equation}
\end{lemma}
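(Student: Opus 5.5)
We prove the bound conditionally on $X^{(1)},\ldots,X^{(b)}$ and work one coordinate at a time with a Bernstein-type inequality. Fix $j\le b$ and a coordinate $k$. By \eqref{eq:glm-score}, $\omega^{(j)}_k=-N_j^{-1}\sum_{l=1}^j\sum_{i=1}^{n_l}\varepsilon_i^{(l)}X_{ik}^{(l)}$. This is a sum of conditionally independent terms, and each term has mean zero by \ref{GLMassm: correct specification}. The Bernstein moment condition \eqref{eq:glm-residual-Bernstein} gives, for every integer $m\ge2$,
\[
\EE\bigl(|\varepsilon_i^{(l)}|^m\mid X\bigr)\le \frac{m!}{2}v_0M_\varepsilon^{m-2}.
\]
Multiplying by $|X_{ik}^{(l)}|^m$ and summing over $(l,i)$ gives a variance proxy
\[
\sum_{l\le j}\|X_{\cdot k}^{(l)}\|_2^2\,v_0\le B_1v_0N_j,
\]
using the first bound in \eqref{eq:glm-column-bounds}. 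The scale parameter is $M_\varepsilon\max_{l,i}|X^{(l)}_{ik}|$.

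Next we bound this maximal entry in two ways, which is what produces the minimum in \eqref{eq:glm-C-condition}. First, it is at most $c_b$ by definition. Second, it is at most $B_2\max_{l\le j}(n_l/\log p)^{1/2}$ by \eqref{eq:glm-column-bounds}. To keep the scale comparable to $N_j$, we use $c_b$ directly and write $c_b=c_b(\log p/n_1)^{1/2}\cdot(n_1/\log p)^{1/2}$. For the $B_2$ route we use $n_l\le N_j$. The two routes lead to the same structural form, namely scale $\le \tau(N_j/\log p)^{1/2}$ with $\tau=\min\{B_2,c_b(\log p/n_1)^{1/2}\}$. (The $c_b$ route needs $n_1\le N_j$.) Bernstein's inequality for sums with the moment bound above then gives
\[
\PP\Bigl(|\omega^{(j)}_k|>t\,\Big|\,X\Bigr)\le 2\exp\Bigl\{-\frac{N_j^2t^2}{2(B_1v_0N_j+M_\varepsilon\tau(N_j/\log p)^{1/2}N_jt)}\Bigr\}.
\]

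Plugging in $t=\lambda_j/4=(C/4)(\log p/N_j)^{1/2}$, the exponent becomes
\[
\frac{C^2\log p/16}{2B_1v_0+M_\varepsilon\tau C/2}.
\]
Condition \eqref{eq:glm-C-condition} is designed so that $C^2\ge 64B_1v_0+16M_\varepsilon\tau C$, which makes this exponent at least $2\log p$. The probability for coordinate $k$ is therefore at most $2p^{-2}$. A union bound over the $p$ coordinates and the $b$ batches gives total failure probability at most $2b/p$, which is \eqref{eq:glm-score-event}.

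The main obstacle is the bookkeeping of the scale term, in particular checking that the $B_2$ route really yields $\tau(N_j/\log p)^{1/2}$ uniformly over batches, and that the constants in the form of Bernstein's inequality we use (the version with $m!/2$ moments) match the factors $64$ and $16$. If these constants are slightly off, we would instead invoke the moment form of Bernstein directly and verify the constant $2\log p$ with some slack.
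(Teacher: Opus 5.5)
Your proposal is correct and follows the paper's proof essentially step for step. Both arguments derive the moment form of Bernstein by expanding the exponential in the residual condition, take the variance proxy $B_1v_0$ from the column-norm bound, bound the scale in two ways to produce the minimum, set $t=\lambda_j/4$, and finish with a union bound over $p$ coordinates and $b$ batches; your constants check out exactly, since $C^2\ge 64B_1v_0+16M_\varepsilon\tau C$ makes the exponent at least $2\log p$. The only cosmetic difference is that the paper carries $c_b(\log p/N_j)^{1/2}$ in the scale and replaces $N_j$ by $n_1$ only at the end, whereas you make that replacement up front.
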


\begin{proof}
Expanding the exponential in \eqref{eq:glm-residual-Bernstein} gives, for
every integer $m\ge2$,
\[
\EE(|\varepsilon_i^{(j)}|^m\mid X^{(1)},\ldots,X^{(b)})
\le\frac{m!}{2}v_0M_\varepsilon^{m-2}.
\]
This is the moment condition used in the argument leading to display (S5.14) of \citet{shi2019linear}.  
Bernstein's inequality then gives
\begin{equation}
\PP\left\{|\omega_k^{(j)}|>u\mid X^{(1)},\ldots,X^{(b)}\right\}
\le2\exp\left\{-\frac{u^2}{2(V_j+L_ju)}\right\},
\label{eq:glm-coordinate-Bernstein}
\end{equation}
where, by \eqref{eq:glm-column-bounds},
\[
V_j\le\frac{B_1v_0}{N_j},
\qquad
L_j\le
\frac{M_\varepsilon}{(N_j\log p)^{1/2}}
\min\left\{B_2,c_b\left(\frac{\log p}{N_j}\right)^{1/2}\right\}.
\]
With $u=\lambda_j/4$, the right-hand side of
\eqref{eq:glm-coordinate-Bernstein} is at most
\[
2\exp\left[
-\frac{C^2\log p}{32B_1v_0+8M_\varepsilon C
\min\{B_2,c_b(\log p/N_j)^{1/2}\}}
\right]
\le2p^{-2}
\]
by \eqref{eq:glm-C-condition}.  Union bounds over $k\le p$ and $j\le b$
prove \eqref{eq:glm-score-event}.
\end{proof}
\noindent Note that a sufficient condition for \eqref{eq:glm-C-condition} is
\begin{equation}
\begin{split}
C\ge8M_\varepsilon
\min\left\{B_2,c_b\left(\frac{\log p}{n_1}\right)^{1/2}\right\}
+8\left[
M_\varepsilon^2
\min\left\{B_2^2,\frac{c_b^2\log p}{n_1}\right\}
+B_1v_0
\right]^{1/2}.
\end{split}
\label{eq:glm-C-explicit}
\end{equation}

\subsubsection{Verification of the local RSC and RLH inequalities}

Define
\begin{equation}
M_{\rm GLM}
=16K\exp\{4Kc_b|\mc A|^{1/2}R_0\}
c_bv_0B_1|\mc A|.
\label{eq:glm-M-initializer}
\end{equation}

\begin{lemma}\label{lem:glm-initializer-curvature}
Suppose \ref{GLMassm: covariate norm conditions}--\ref{GLMassm: self concordance}
hold.  If
\begin{equation}
0<\kappa_{\rm GLM}\le
\underline\kappa\exp\{-8Kc_b|\mc A|^{1/2}R_0\},
\label{eq:glm-kappa-initializer}
\end{equation}
then the RSC and RLH inequalities used in the proof of
Theorem~\mainref{thm:batch_b_init} hold, respectively, with constants $\kappa_{\rm GLM}$
and $M_{\rm GLM}$, for all displacements satisfying
\eqref{eq:glm-main-body-cone} and having Euclidean norm at most $R_0$.
In particular, for every such $v,w$ and every $j\le b$,
\begin{equation}
\left\|
\{\nabla^2\widehat{\mc L}^{(j)}(\betab^*+v)
-\nabla^2\widehat{\mc L}^{(j)}(\betab^*)\}w
\right\|_\infty
\le M_{\rm GLM}\|v\|_2\|w\|_2.
\label{eq:glm-initializer-rlh}
\end{equation}
\end{lemma}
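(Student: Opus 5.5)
The argument rests on the GLM Hessian structure together with the self-concordance-type bound in \ref{GLMassm: self concordance}. For the GLM loss \eqref{eq:glm-loss},
\[
\nabla^2\widehat{\mc L}^{(j)}(\betab)=\frac1{n_j}\sum_{i=1}^{n_j} b''\{(X_i^{(j)})^\T\betab\}X_i^{(j)}(X_i^{(j)})^\T .
\]
The condition $|b'''|\le Kb''$ integrates to the comparison $b''(t+s)\le e^{K|s|}b''(t)$ and $b''(t+s)\ge e^{-K|s|}b''(t)$. For a displacement $v$ in the cone \eqref{eq:glm-main-body-cone} with $\|v\|_2\le R_0$ we have $|(X_i^{(j)})^\T v|\le c_b\|v\|_1\le 4c_b|\mc A|^{1/2}R_0$. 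So along any such segment the weights $b''$ differ from their values at $\betab^*$ by at most a factor $\exp\{4Kc_b|\mc A|^{1/2}R_0\}$, and by at most twice that exponent when both arguments are displaced from the truth.

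\textbf{RSC.} Write the gradient difference as $\int_0^1\nabla^2\widehat{\mc L}^{(j)}(\betab_2+t(\betab_1-\betab_2))\,dt$ applied to $\betab_1-\betab_2$. Every point on this segment lies at distance at most $\|\betab_2-\betab^*\|_1+\|\betab_1-\betab_2\|_1\le 8|\mc A|^{1/2}R_0$ (in $\ell_1$) from $\betab^*$, so every weight is at least $\exp\{-8Kc_b|\mc A|^{1/2}R_0\}\,b''\{(X_i^{(j)})^\T\betab^*\}$. Hence
\[
u^\T\nabla^2\widehat{\mc L}^{(j)}(\cdot)\,u\ \ge\ e^{-8Kc_b|\mc A|^{1/2}R_0}\,u^\T\nabla^2\widehat{\mc L}^{(j)}(\betab^*)\,u .
\]
Applying \eqref{eq:glm-truth-rsc} to $u=\betab_1-\betab_2$, which satisfies the cone condition $\|u_{\mc A^c}\|_1\le 3\|u_{\mc A}\|_1$ by \eqref{eq:glm-main-body-cone}, gives the bound with $\kappa_{\rm GLM}$ as in \eqref{eq:glm-kappa-initializer}.

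\textbf{RLH.} By the mean value theorem in the scalar argument, $|b''(t+s)-b''(t)|\le K|s|e^{K|s|}b''(t)$. Then the $k$th coordinate of the Hessian difference applied to $w$ is bounded by
\[
\frac{K e^{4Kc_b|\mc A|^{1/2}R_0}}{n_j}\sum_i b''_i\,|(X_i^{(j)})^\T v|\,|X_{ik}^{(j)}|\,|(X_i^{(j)})^\T w| .
\]
To control this sum:
\begin{itemize}
\item Bound $|X_{ik}^{(j)}|\le c_b$ and $b''_i\le v_0$ using \eqref{eq:glm-truth-weight}.
\item Apply Cauchy--Schwarz to get $\sum_i|(X_i^\T v)(X_i^\T w)|\le \|X^{(j)}v\|_2\|X^{(j)}w\|_2$.
\item Bound $\|X^{(j)}v\|_2\le\sum_k|v_k|\,\|X^{(j)}_{\cdot k}\|_2\le (B_1n_j)^{1/2}\|v\|_1\le 4(B_1n_j|\mc A|)^{1/2}\|v\|_2$ using \eqref{eq:glm-column-bounds} and the cone, and likewise for $w$.
\end{itemize}
Collecting terms gives a constant $16Kc_bv_0B_1|\mc A|\,e^{4Kc_b|\mc A|^{1/2}R_0}$, which is exactly $M_{\rm GLM}$.

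\textbf{Main obstacle.} The delicate step is the RLH bound. The cruder estimate $\|Xv\|_2^2/n\le B_1\|v\|_1^2$ is what forces the factor $|\mc A|$ into $M_{\rm GLM}$. One must also check that the exponent bookkeeping (a factor of 4 for RLH versus 8 for RSC) matches the claimed constants, since RSC involves two displacements while RLH involves only $v$ from the truth.
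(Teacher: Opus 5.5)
Your proposal is correct and follows essentially the same route as the paper. The multiplicative comparison $e^{-K|r|}b''(a)\le b''(a+r)\le e^{K|r|}b''(a)$, together with $|X_i^\T u|\le c_b\|u\|_1\le 4c_b|\mc A|^{1/2}\|u\|_2$, gives the RSC factor $e^{-8Kc_b|\mc A|^{1/2}R_0}$ through \eqref{eq:glm-truth-rsc}; the mean-value bound on $b''$, Cauchy--Schwarz, and \eqref{eq:glm-column-bounds} then give exactly $M_{\rm GLM}$.

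The only cosmetic difference is the order of steps in the RLH bound. The paper applies Cauchy--Schwarz with the weights $b''_i$, bounding by the quadratic forms of $\nabla^2\widehat{\mc L}^{(j)}(\betab^*)$, and invokes $b''\le v_0$ afterwards. You invoke $b''\le v_0$ first and then use the unweighted Cauchy--Schwarz. Both orders produce the same constant.
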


\begin{proof}
Condition~\ref{GLMassm: self concordance} gives
\begin{equation}
e^{-K|r|}b''(a)\le b''(a+r)\le e^{K|r|}b''(a),
\qquad a,r\in\RR,
\label{eq:glm-curvature-comparison}
\end{equation}
because $|(\log b'')'|\le K$.  The first relation in
\eqref{eq:glm-main-body-cone} implies
$\|u\|_1\le4|\mc A|^{1/2}\|u\|_2$.  
The mean-value theorem, displays
\eqref{eq:glm-curvature-comparison} and \eqref{eq:glm-truth-rsc} give the
RSC lower bound \eqref{eq:glm-kappa-initializer}.

For the RLH bound, the mean-value theorem,
\eqref{eq:glm-curvature-comparison}, and the Cauchy--Schwarz inequality give, for every
coordinate $k$,
\begin{align*}
&\left|e_k^\T
\{\nabla^2\widehat{\mc L}^{(j)}(\betab^*+v)
-\nabla^2\widehat{\mc L}^{(j)}(\betab^*)\}w\right|
\\
&\quad\le
Kc_b e^{4Kc_b|\mc A|^{1/2}R_0}
\{v^\T\nabla^2\widehat{\mc L}^{(j)}(\betab^*)v\}^{1/2}
\{w^\T\nabla^2\widehat{\mc L}^{(j)}(\betab^*)w\}^{1/2}.
\end{align*}
By \eqref{eq:glm-column-bounds} and \eqref{eq:glm-truth-weight},
\[
z^\T\nabla^2\widehat{\mc L}^{(j)}(\betab^*)z
\le v_0n_j^{-1}\|X^{(j)}z\|_2^2
\le16v_0B_1|\mc A|\|z\|_2^2
\]
for $z=v,w$.  This proves \eqref{eq:glm-initializer-rlh} with
\eqref{eq:glm-M-initializer}.
\end{proof}
We remark that the exponential term in \eqref{eq:glm-kappa-initializer} is not required for the linear GLM, since the Hessian does not depend on the parameter in that case.

\subsubsection{Online accumulated-error bound}

If $\lambda_j=C(\log p/N_j)^{1/2}$ and
\[
\|\widehat\betab^{(r)}-\betab^*\|_2
\le\frac{3}{2\kappa_{\rm GLM}}|\mc A|^{1/2}\lambda_r,
\qquad r<j,
\]
then Lemma~\ref{lem:batch.size} gives
\begin{equation}
\phi^{(j)}
\le
\frac{9C^2|\mc A|\log p}{4\kappa_{\rm GLM}^2N_j}
\left\{1+\log\left(\frac{N_{j-1}}{n_1}\right)\right\}.
\label{eq:glm-phi-bound}
\end{equation}
Consequently,
\begin{equation}
\left(\frac{N_j}{|\mc A|^2\log p}\right)^{1/2}
\ge
\frac{27CM_{\rm GLM}}{2\kappa_{\rm GLM}^2}
\left\{1+\log\left(\frac{N_{j-1}}{n_1}\right)\right\}
\label{eq:glm-online-sample}
\end{equation}
implies $3M_{\rm GLM}\phi^{(j)}/2\le\lambda_j/4$.

\subsubsection{Verification of the initializer theorem}

\begin{proposition}\label{prop:glm-initializer}
Suppose \ref{GLMassm: correct specification}--\ref{GLMassm: self concordance}
hold.  Let $\lambda_j=C(\log p/N_j)^{1/2}$, where
\eqref{eq:glm-C-condition} holds, and let $M_{\rm GLM}$ and $\kappa_{\rm GLM}$ satisfy
\eqref{eq:glm-M-initializer}--\eqref{eq:glm-kappa-initializer}.  Suppose
\$
R_0>
\max_{1\le j\le b}
\frac{3}{2\kappa_{\rm GLM}}|\mc A|^{1/2}\lambda_j
\$
and \eqref{eq:glm-online-sample} holds for $j=2,\ldots,b$.  Then,
conditionally on $X^{(1)},\ldots,X^{(b)}$, with probability at least
$1-2b/p$, simultaneously for $j=1,\ldots,b$,
\begin{equation}
\begin{split}
\|\widehat\betab^{(j)}-\betab^*\|_2
&\le\frac{3}{2\kappa_{\rm GLM}}|\mc A|^{1/2}\lambda_j,
\\
\|\widehat\betab^{(j)}-\betab^*\|_1
&\le\frac{6}{\kappa_{\rm GLM}}|\mc A|\lambda_j,
\end{split}
\label{eq:glm-initializer-rates}
\end{equation}
and $\widehat\betab^{(j)}-\betab^*$ satisfies
\eqref{eq:glm-main-body-cone}.
\end{proposition}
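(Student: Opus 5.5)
The plan is to deduce Proposition~\ref{prop:glm-initializer} from Corollary~\mainref{cor:batch_b_init}, or more precisely from an induction over $j=1,\ldots,b$ that applies Theorem~\mainref{thm:batch_b_init} at each step. The constants are $(\kappa,M)=(\kappa_{\rm GLM},M_{\rm GLM})$. We work conditionally on $X^{(1)},\ldots,X^{(b)}$, and we replace the sub-Gaussian score step by Lemma~\ref{lem:glm-score-concentration}. That lemma yields an event $\mc S_b$ with conditional probability at least $1-2b/p$ on which $\|\omegab^{(j)}\|_\infty\le\lambda_j/4$ for all $j\le b$. All remaining arguments are deterministic on $\mc S_b$.

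\textbf{Base step ($j=1$).} Here $\phi^{(1)}=0$, so the tuning condition in \maineqref{batch_b_event} reduces to $\lambda_1\ge 2\|\omegab^{(1)}\|_\infty$, which holds on $\mc S_b$. We also need the regularity event $\mc R_1(R_0,4|\mc A|^{1/2},\kappa_{\rm GLM},M_{\rm GLM})$. The proof of Theorem~\mainref{thm:batch_b_init} uses curvature only at pairs $(\widetilde\betab,\betab^*)$ with $\widetilde\betab-\betab^*\in\BB(R_0)\cap\CC(4|\mc A|^{1/2})$, where the displacement satisfies the cone relation \eqref{eq:glm-main-body-cone}; that relation is derived from KKT before curvature is invoked, and the truncation by $\eta$ preserves it. Lemma~\ref{lem:glm-initializer-curvature} supplies exactly this restricted RSC. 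The theorem therefore yields the bounds \eqref{eq:glm-initializer-rates} at $j=1$, together with the cone relation from \eqref{eq:thm1-cone}--\eqref{eq:thm1-global-cone}.

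\textbf{Inductive step.} Suppose the rates and the cone relation hold for all $r<j$. The historical localization \maineqref{event: historical localization} then holds, because $\frac{3}{2\kappa_{\rm GLM}}|\mc A|^{1/2}\lambda_r<R_0$ by assumption. By \eqref{eq:glm-phi-bound} and \eqref{eq:glm-online-sample}, we get $\tfrac32M_{\rm GLM}\phi^{(j)}\le\lambda_j/4$. Adding this to $\|\omegab^{(j)}\|_\infty\le\lambda_j/4$ gives $\lambda_j\ge2(\|\omegab^{(j)}\|_\infty+\tfrac32M_{\rm GLM}\phi^{(j)})$. The RLH property is used only through Lemma~\ref{lem:decomposition} at $R_r(\betab^*)$ with $v=w=\widehat\betab^{(r)}-\betab^*$ (and at intermediate points $\betab^*+ta_r$), all of which satisfy the cone relation. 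This is covered by \eqref{eq:glm-initializer-rlh}.

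\textbf{Curvature of the surrogate.} Lemma~\ref{lem:surrogate-curvature-inheritance} requires RSC at historical centres $\widehat\betab^{(r)}$ along cone directions, and in the form stated it quantifies over the full cone $\CC(l)$. We will reuse its argument directly: the historical Hessian $\nabla^2\widehat{\mc L}^{(r)}(\widehat\betab^{(r)})$ is compared with $\nabla^2\widehat{\mc L}^{(r)}(\betab^*)$ via \eqref{eq:glm-curvature-comparison}, at the cost of the factor $e^{-4Kc_b|\mc A|^{1/2}R_0}$. This is absorbed by the choice $\kappa_{\rm GLM}\le\underline\kappa e^{-8Kc_b|\mc A|^{1/2}R_0}$. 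Theorem~\mainref{thm:batch_b_init} then gives the conclusion at batch $j$, and the induction closes.

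\textbf{Main obstacle.} The step I expect to be hardest is the cone mismatch. Lemma~\ref{lem:glm-initializer-curvature} gives RSC only over the $\mc A$-cone in \eqref{eq:glm-main-body-cone}, whereas the general results are phrased over $\CC(4|\mc A|^{1/2})$. The RSC direction $u$ in \eqref{eq:glm-truth-rsc} must obey $\|u_{\mc A^c}\|_1\le3\|u_{\mc A}\|_1$, and this is precisely the relation the KKT argument produces for $\widehat\delta$. The remedy is to state explicitly that every invocation of curvature or RLH in the proof of Theorem~\mainref{thm:batch_b_init} occurs at such displacements or their scalar multiples, so the restricted versions suffice.
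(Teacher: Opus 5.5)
Your proposal is correct and follows the paper's proof essentially step for step. Both work on the score event of Lemma~\ref{lem:glm-score-concentration} and induct over batches, using \eqref{eq:glm-phi-bound} and \eqref{eq:glm-online-sample} to obtain the tuning inequality. Both then rerun the proof of Theorem~\ref{thm:batch_b_init}, invoking Lemma~\ref{lem:glm-initializer-curvature} only on the KKT displacements, which lie in the $\mc A$-cone, instead of imposing the full regularity event $\mc R_j$.

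The one ingredient the paper states and you omit is existence of each $\widehat\betab^{(j)}$, which your induction tacitly requires. Since $\|\nabla\widehat{\mc Q}_j(\betab^*)\|_\infty\le\lambda_j/2$, convexity bounds $\widehat{\mc Q}_j(\betab)+\lambda_j\|\betab\|_1$ below by a constant plus $(\lambda_j/2)\|\betab\|_1$, so a minimizer exists. Your direct Hessian comparison for the historical curvature is a harmless variant of the paper's appeal to the pairwise RSC in that lemma.
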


\begin{proof}
Work on \eqref{eq:glm-score-event}.  For $j=1$, $\phi^{(1)}=0$.
For $j\ge2$, assume the conclusion through batch $j-1$.
Equation \eqref{eq:glm-phi-bound} and
\eqref{eq:glm-online-sample} give
$3M_{\rm GLM}\phi^{(j)}/2\le\lambda_j/4$, while
\eqref{eq:glm-score-event} gives
$\|\omegab^{(j)}\|_\infty\le\lambda_j/4$.  Hence
\[
\lambda_j\ge
2\{\|\omegab^{(j)}\|_\infty+3M_{\rm GLM}\phi^{(j)}/2\}.
\]
Lemma~\ref{lem:decomposition} and \eqref{eq:R-beta-star-bound} therefore
give
$\|\nabla\widehat{\mc Q}_j(\betab^*)\|_\infty\le\lambda_j/2$.
By convexity,
$\widehat{\mc Q}_j(\betab)+\lambda_j\|\betab\|_1$ is bounded below by a
constant plus $(\lambda_j/2)\|\betab\|_1$.
Apply the proof of Theorem~\mainref{thm:batch_b_init}.  Its KKT step gives
\eqref{eq:glm-main-body-cone}; Lemma~\ref{lem:glm-initializer-curvature}
then supplies every RSC and RLH inequality used in the remainder and
localization steps.  The sole modification is to invoke that lemma on the
KKT displacements rather than to impose the larger event
$\mc R_j(R_0,4|\mc A|^{1/2},\kappa_{\rm GLM},M_{\rm GLM})$. 
Thus we get \eqref{eq:glm-initializer-rates}.
The probability follows from Lemma~\ref{lem:glm-score-concentration}.
\end{proof}

\subsubsection{Verification of the tightening-stage theorem}
\label{sec:glm_tightening_verification}

Suppose Assumptions~\mainref{assm:penalty_function} and
\mainref{assm:local_alternative} hold.  
Suppose that the design conditions in
Proposition~\ref{prop:glm-initializer}, with $b=B$, hold.
Let $\gamma_0,p_0,c_{\rm aug},l$, and $\delta$
be as in
Theorem~\mainref{thm:batch_b_tightening}, and use the same $\kappa_{\rm GLM}$ as in
Proposition~\ref{prop:glm-initializer}. 
Define
\begin{equation}
k_l=p\wedge2\lceil l^2\rceil,
\qquad
h_l=k_l\log(ep/k_l)+2\log\log(ep),
\qquad
\|z\|_{(k)}=\max_{|S|\le k}\|z_S\|_2.
\label{eq:glm-terminal-dimensions}
\end{equation}
Impose the following conditions.

\begin{enumerate}[label=(GLM-\arabic*),start=6,leftmargin=3\parindent]
\item\label{GLMassm: terminal random design revised}
The observations are independent across all batches and the covariate rows
are copies of a random vector $X$ satisfying
\begin{equation}
\|u^\T X\|_{\psi_2}\le K_X\|u\|_2,
\qquad u\in\RR^p.
\label{eq:glm-subgaussian-design}
\end{equation}
With $\varepsilon=Y-b'(X^\T\betab^*)$, assume
$\EE(\varepsilon\mid X)=0$ and, for constants $v_\varepsilon>0$ and
$c_\varepsilon\ge0$,
\begin{equation}
\log\EE(e^{t\varepsilon}\mid X)
\le
\frac{v_\varepsilon b''(X^\T\betab^*)t^2}
{2(1-c_\varepsilon|t|)},
\qquad c_\varepsilon|t|<1.
\label{eq:glm-score-mgf}
\end{equation}

\item\label{GLMassm: terminal hessian tails revised}
Assumption~\mainref{assm:hessian} holds, and either
\begin{equation*}
\|\{b''(X^\T\betab^*)\}^{1/2}u^\T X\|_{\psi_2}
\le K_H\|u\|_2,
\qquad u\in\RR^p,
\tag{a}\label{eq:glm-weighted-subgaussian}
\end{equation*}
or there are constants $C_H,c_H>0$ such that
\begin{equation*}
\PP\{\log b''(X^\T\betab^*)>C_H+t\}
\le2e^{-c_Ht^2},
\qquad t\ge0.
\tag{b}\label{eq:glm-log-weight-tail}
\end{equation*}
\end{enumerate}
Under \eqref{eq:glm-weighted-subgaussian}, set
\begin{equation}
\eta_\infty
=CK_H^2\left\{\sqrt{\frac{h_l}{N_B}}+\frac{h_l}{N_B}\right\},
\qquad
T_B=CK_H^2\{k_l\log(ep/k_l)+\log(2N_B)+2\log\log(ep)\}.
\label{eq:glm-terminal-scales-a}
\end{equation}
Under \eqref{eq:glm-log-weight-tail}, let
$T_B\ge Ck_l\log(ep/k_l)$ satisfy
\begin{equation}
N_B\exp\left[-c\log^2\left\{
\frac{T_B}{Ck_l\log(ep/k_l)}
\right\}\right]\le\{\log(ep)\}^{-2}
\label{eq:glm-tail-truncation}
\end{equation}
and set
\begin{equation}
\eta_\infty
=C\left\{\sqrt{\frac{h_l}{N_B}}+\frac{T_Bh_l}{N_B}\right\}.
\label{eq:glm-terminal-scales-b}
\end{equation}
Here and below, $C>0$ is sufficiently large and $c>0$ is sufficiently
small generic constants; their values depend only on the constants in
\ref{GLMassm: terminal random design revised}--\ref{GLMassm: terminal hessian tails revised}.
Increase $C$ if necessary and set
\begin{equation}
\begin{split}
M={}&4KCK_XT_B\{\log(2pN_B)+2\log\log(ep)\}^{1/2}\\
&\quad\times\exp\left[
KCK_XlR_0\{\log(2pN_B)+2\log\log(ep)\}^{1/2}
\right].
\end{split}
\label{eq:glm-terminal-M}
\end{equation}

\begin{lemma}\label{lem:glm-terminal-design}
Under \ref{GLMassm: terminal random design revised}--\ref{GLMassm: terminal hessian tails revised},
with probability at least $1-C\{\log(ep)\}^{-2}$,
\begin{align}
&\sup_{\substack{\|u\|_2=1\\\|u\|_0\le k_l}}
\left|u^\T\left\{
\frac1{N_B}\sum_{j=1}^Bn_j
\nabla^2\widehat{\mc L}^{(j)}(\betab^*)-\Jb
\right\}u\right|
\le\eta_\infty,
\label{eq:glm-sparse-hessian}
\\
&c_B\le CK_X\{\log(2pN_B)+2\log\log(ep)\}^{1/2},
\label{eq:glm-row-envelope}
\\
&\max_{j\le B}\max_{i\le n_j}
b''\{(X_i^{(j)})^\T\betab^*\}
\|X_i^{(j)}\|_{(k_l)}^2\le T_B,
\label{eq:glm-weighted-row-envelope}
\\
&\max_{j\le B}\max_{i\le n_j}
\|X_{i,\overline{\mc M}}^{(j)}\|_2
\le CK_X\{q+\log(2N_B)+2\log\log(ep)\}^{1/2},
\label{eq:glm-model-row-envelope}
\\
&\max_{j\le B}\max_{i\le n_j}
\|\operatorname{diag}(\Zb,\Ib_s)^\T
X_{i,\overline{\mc M}}^{(j)}\|_2
\le CK_X\{q-r+\log(2N_B)+2\log\log(ep)\}^{1/2}.
\label{eq:glm-tangent-row-envelope}
\end{align}
\end{lemma}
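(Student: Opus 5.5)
We will prove the five displays separately, each on an event of probability at least $1-C\{\log(ep)\}^{-2}$, and then take a union bound. Throughout, $C$ and $c$ are generic constants depending only on the constants in \ref{GLMassm: terminal random design revised}--\ref{GLMassm: terminal hessian tails revised}.

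\textbf{Envelope bounds.} Displays \eqref{eq:glm-row-envelope}, \eqref{eq:glm-model-row-envelope}, and \eqref{eq:glm-tangent-row-envelope} follow from sub-Gaussian maxima.

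For \eqref{eq:glm-row-envelope}, \eqref{eq:glm-subgaussian-design} gives $\PP(|X_{k}|>t)\le 2e^{-t^2/(CK_X^2)}$ for each coordinate. A union bound over the $pN_B$ pairs (row, coordinate), with $t=CK_X\{\log(2pN_B)+2\log\log(ep)\}^{1/2}$, gives failure probability at most $\{\log(ep)\}^{-2}$.

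For \eqref{eq:glm-model-row-envelope}, write $\|X_{\overline{\mc M}}\|_2=\sup_{u\in S^{q-1}}u^\T X_{\overline{\mc M}}$. A $1/2$-net of $S^{q-1}$ has cardinality $5^q$, so the Euclidean norm of the sub-Gaussian vector $X_{\overline{\mc M}}$ satisfies $\PP(\|X_{\overline{\mc M}}\|_2>CK_X(\sqrt q+t))\le e^{-t^2}$. A union bound over the $N_B$ rows, with $t^2=\log(2N_B)+2\log\log(ep)$, gives the claim.

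For \eqref{eq:glm-tangent-row-envelope}, apply the same net argument to the vector $\operatorname{diag}(\Zb,\Ib_s)^\T X_{\overline{\mc M}}$. It has dimension $q-r$, and it remains $K_X$-sub-Gaussian because $\operatorname{diag}(\Zb,\Ib_s)$ has orthonormal columns.

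\textbf{Weighted envelope \eqref{eq:glm-weighted-row-envelope}.} Write $w=b''(X^\T\betab^*)$ and use $\|X\|_{(k)}^2=\sup_{|S|\le k,\|u\|_2=1}(u^\T X)^2$.

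Under \eqref{eq:glm-weighted-subgaussian}, each $\sqrt w\,u^\T X$ is $K_H$-sub-Gaussian. Take a union over the $\binom{p}{k_l}\le (ep/k_l)^{k_l}$ supports, a $1/2$-net of size $5^{k_l}$ on each sparse sphere, and the $N_B$ rows. This yields the bound $T_B$ in \eqref{eq:glm-terminal-scales-a}, with the $2\log\log(ep)$ term paying for the $\{\log(ep)\}^{-2}$ probability.

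Under \eqref{eq:glm-log-weight-tail}, split on the size of the weight:
\[
\PP\bigl(w\|X\|_{(k_l)}^2>T_B\bigr)\le \PP\bigl(\|X\|_{(k_l)}^2>Ck_l\log(ep/k_l)\,e^{\tau}\bigr)+\PP\bigl(\log w>\log\{T_B/(Ck_l\log(ep/k_l))\}\bigr),
\]
with $\tau$ chosen to balance the two events. The second term is at most $2e^{-c\log^2\{T_B/(Ck_l\log(ep/k_l))\}}$. Summed over the $N_B$ rows, it is controlled exactly by \eqref{eq:glm-tail-truncation}. The first term is handled by the sparse-sphere net argument, as in the previous case.

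\textbf{Sparse Hessian concentration \eqref{eq:glm-sparse-hessian}.} This is the main step. Write
\[
\frac1{N_B}\sum_{j=1}^B n_j\nabla^2\widehat{\mc L}^{(j)}(\betab^*)-\Jb=\frac1{N_B}\sum_{i=1}^{N_B}\bigl\{w_iX_iX_i^\T-\EE(wXX^\T)\bigr\},
\]
a centered sum over all $N_B$ independent rows.

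Under \eqref{eq:glm-weighted-subgaussian}, for fixed unit $k_l$-sparse $u$, the variables $(\sqrt{w_i}u^\T X_i)^2$ are sub-exponential with $\psi_1$-norm at most $CK_H^2$. Bernstein's inequality gives a deviation of order $K_H^2(\sqrt{t/N_B}+t/N_B)$ with probability $1-2e^{-t}$. We then take a union over a $1/4$-net of the $k_l$-sparse unit vectors, of log-cardinality at most $Ck_l\log(ep/k_l)$. The standard discretization for quadratic forms, $\sup_u\le 2\max_{\text{net}}$, still applies on sparse spheres after restricting to $2k_l$-sparse differences; this is why $k_l$ was defined with the factor $2$. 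Setting $t=h_l$ gives $\eta_\infty$ in \eqref{eq:glm-terminal-scales-a}.

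Under \eqref{eq:glm-log-weight-tail}, we truncate. On the envelope event \eqref{eq:glm-weighted-row-envelope}, each summand $w_i(u^\T X_i)^2$ is bounded by $T_B$. Bernstein's inequality for bounded variables then applies, with variance proxy $\EE\{w^2(u^\T X)^4\}\le C$, since the log-weight tail gives bounded moments of $w$. This yields the $\sqrt{h_l/N_B}+T_Bh_l/N_B$ rate in \eqref{eq:glm-terminal-scales-b}.

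The truncation also biases the mean. The bias is $\EE\{w(u^\T X)^2\mathbf 1(\cdot>T_B)\}$, which is exponentially small in $\log^2(T_B/k_l)$ and hence absorbed into $\eta_\infty$ by enlarging $C$.

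The delicate point is coordinating the truncation level with the union bound so that all failure probabilities stay at most $C\{\log(ep)\}^{-2}$. Condition \eqref{eq:glm-tail-truncation} is designed exactly for this. Combining the five events by a union bound completes the proof.
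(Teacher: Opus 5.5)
Your proposal is correct and follows essentially the same route as the paper: Bernstein's inequality over per-support $1/4$-nets under \eqref{eq:glm-weighted-subgaussian}; under \eqref{eq:glm-log-weight-tail}, truncation at $T_B$ with bounded Bernstein, the fourth-moment variance bound, and \eqref{eq:glm-tail-truncation}, which also controls the $O(N_B^{-1/2})$ truncation bias via Cauchy--Schwarz; and coordinatewise and net-based sub-Gaussian maxima for the envelopes. Two small slips need fixing: in your split for \eqref{eq:glm-weighted-row-envelope} the two thresholds must be $A$ and $T_B/A$ for the same $A$ (e.g.\ $A=\{T_B\,Ck_l\log(ep/k_l)\}^{1/2}$), and the factor $2$ in $k_l$ plays no role in the net discretization---per-support nets keep $u-u_0$ on the support of $u$---but comes from the block cross terms in \eqref{eq:glm-sparse-to-cone}.
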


\begin{proof}
Under \eqref{eq:glm-weighted-subgaussian}, apply Bernstein's inequality to
$\{b''(X^\T\betab^*)\}^{1/2}u^\T X$ on a $1/4$-net for each support of
size at most $k_l$.  The net cardinality is at most
$9^{k_l}(ep/k_l)^{k_l}$, so a union bound gives
\eqref{eq:glm-sparse-hessian}; applying the same bound to each of the
$N_B$ weighted rows gives \eqref{eq:glm-weighted-row-envelope} with
\eqref{eq:glm-terminal-scales-a}.  This is an analogous argument used in the proofs of Lemmas S.2--S.3 of
\citet{shi2019linear}; the only changes are the replacement of the fixed
$\mc M\cup\mc S$ block by the union over $k_l$-sparse supports and the
replacement of $X$ by $\{b''(X^\T\betab^*)\}^{1/2}X$.

Under \eqref{eq:glm-log-weight-tail}, sub-Gaussianity of $X$ and a support
net give, for $z\ge Ck_l\log(ep/k_l)$,
\begin{equation}
\PP\left[
b''(X^\T\betab^*)\|X\|_{(k_l)}^2>z
\right]
\le
C\exp\left[-c\log^2\left\{
\frac{z}{Ck_l\log(ep/k_l)}
\right\}\right].
\label{eq:glm-weighted-row-tail}
\end{equation}
Moreover, the Cauchy--Schwarz inequality and the two tail assumptions give
\[
\sup_{\|u\|_2=1}
\EE[b''(X^\T\betab^*)^2(u^\T X)^4]\le C.
\]
Truncate
$b''(X^\T\betab^*)(u^\T X)^2$ on the event in
\eqref{eq:glm-weighted-row-tail} with $z=T_B$.  Bernstein's inequality and
the preceding support net give \eqref{eq:glm-sparse-hessian} with
\eqref{eq:glm-terminal-scales-b}.  Equation
\eqref{eq:glm-tail-truncation} bounds the probability of any truncation by
$C\{\log(ep)\}^{-2}$.  The preceding fourth-moment bound and
the Cauchy--Schwarz inequality bound the
single-observation truncation bias by
$C\{\log(ep)\}^{-1}N_B^{-1/2}$, which is absorbed by
$C(h_l/N_B)^{1/2}$.  This also proves
\eqref{eq:glm-weighted-row-envelope}.

Finally, coordinatewise sub-Gaussian concentration and a union bound over
$pN_B$ entries prove \eqref{eq:glm-row-envelope}.  For a fixed matrix $U$
with $d$ orthonormal columns, a $1/2$-net gives
$\PP\{\|U^\T X\|_2>CK_X(d+t)^{1/2}\}\le2e^{-t}$.
Apply this with the coordinate embedding of $\overline{\mc M}$ and with
$U=\operatorname{diag}(\Zb,\Ib_s)$, and take
$t=\log(2N_B)+2\log\log(ep)$, to obtain
\eqref{eq:glm-model-row-envelope}--\eqref{eq:glm-tangent-row-envelope}.
\end{proof}

For every symmetric matrix $A$, the block decomposition of a vector in
$\CC(l)$ into successive blocks of size $\lceil l^2\rceil$ gives
\begin{equation}
\sup_{v\in\CC(l),\,v\ne0}
\frac{|v^\T Av|}{\|v\|_2^2}
\le
4\sup_{\substack{\|u\|_2=1\\\|u\|_0\le k_l}}|u^\T Au|,
\qquad
\|A\|_{\max}
\le
\sup_{\substack{\|u\|_2=1\\\|u\|_0\le k_l}}|u^\T Au|.
\label{eq:glm-sparse-to-cone}
\end{equation}
Indeed, if the blocks are $S_1,S_2,\ldots$, then
$\sum_{a\ge2}\|v_{S_a}\|_2\le\lceil l^2\rceil^{-1/2}\|v\|_1
\le\|v\|_2$, and hence
$\sum_a\|v_{S_a}\|_2\le2\|v\|_2$.

\begin{lemma}\label{lem:glm-terminal-curvature}
On \eqref{eq:glm-row-envelope}--\eqref{eq:glm-weighted-row-envelope},
\begin{equation}
\bigcap_{j=1}^B\mc E_{\rm rlh}^{(j)}(R_0,l,M)
\label{eq:glm-terminal-rlh}
\end{equation}
holds.  Suppose also that the historical estimators satisfy the conclusion
of Proposition~\ref{prop:glm-initializer}.  On
\eqref{eq:glm-sparse-hessian},
\begin{equation}
\left\langle
\nabla\widehat{\mc Q}_B(\betab_1)
-\nabla\widehat{\mc Q}_B(\betab_2),
\betab_1-\betab_2
\right\rangle
\ge\kappa_{\rm GLM}\|\betab_1-\betab_2\|_2^2
\label{eq:glm-terminal-surrogate-rsc}
\end{equation}
for every pair in Definition~\mainref{def:local_rsc}, provided
\begin{equation}
0<\kappa_{\rm GLM}\le
\exp[-2KCK_XlR_0
\{\log(2pN_B)+2\log\log(ep)\}^{1/2}]
(J_l-4\eta_\infty).
\label{eq:glm-terminal-kappa}
\end{equation}
\end{lemma}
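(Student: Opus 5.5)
The proof splits into two parts: the RLH conclusion, which is a pointwise bound holding on the row envelopes, and the surrogate RSC conclusion, which reduces to curvature at $\betab^*$ together with a multiplicative comparison.

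\emph{Step 1 (RLH).} Fix $j$ and $\vb,\wb\in\BB(R_0)\cap\CC(l)$, and a coordinate $k$. Write
\[
e_k^\T\{\nabla^2\widehat{\mc L}^{(j)}(\betab^*+\vb)-\nabla^2\widehat{\mc L}^{(j)}(\betab^*)\}\wb=\frac1{n_j}\sum_i\big[b''\{X_i^\T(\betab^*+\vb)\}-b''(X_i^\T\betab^*)\big]X_{ik}X_i^\T\wb .
\]
By the mean value theorem and \eqref{eq:glm-curvature-comparison}, each bracket is at most $K|X_i^\T\vb|e^{K|X_i^\T\vb|}b''(X_i^\T\betab^*)$. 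The uniform bound $|X_i^\T\vb|\le c_B\|\vb\|_1\le c_BlR_0$ together with \eqref{eq:glm-row-envelope} controls the exponential factor, giving the $\exp[\cdots]$ term in \eqref{eq:glm-terminal-M}, and $|X_{ik}|\le c_B$ gives the prefactor $c_B$. The remaining term $b''(X_i^\T\betab^*)|X_i^\T\vb||X_i^\T\wb|$ is bounded using the block decomposition behind \eqref{eq:glm-sparse-to-cone}. Split $\vb$ into blocks of size $\lceil l^2\rceil$; then $|X_i^\T\vb|\le\sum_a\|X_{i,S_a}\|_2\|\vb_{S_a}\|_2\le 2\|X_i\|_{(k_l)}\|\vb\|_2$, and the same holds for $\wb$. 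Applying \eqref{eq:glm-weighted-row-envelope} yields the bound $4T_B\|\vb\|_2\|\wb\|_2$. Multiplying the factors reproduces $M$ in \eqref{eq:glm-terminal-M}, uniformly in $j$ and $i$, so \eqref{eq:glm-terminal-rlh} holds.

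\emph{Step 2 (surrogate RSC).} Use the exact gradient identity \eqref{eq:general-surrogate-gradient} for $\widehat{\mc Q}_B$, with $\ub=\betab_1-\betab_2$:
\[
\langle\nabla\widehat{\mc Q}_B(\betab_1)-\nabla\widehat{\mc Q}_B(\betab_2),\ub\rangle=\frac1{N_B}\Big[n_B\int_0^1\ub^\T\nabla^2\widehat{\mc L}^{(B)}(\betab_2+t\ub)\ub\,dt+\sum_{j<B}n_j\ub^\T\nabla^2\widehat{\mc L}^{(j)}(\widehat\betab^{(j)})\ub\Big].
\]
Every evaluation point $\betab'$ appearing here differs from $\betab^*$ by a vector that is a sum of at most two vectors in $\BB(R_0)\cap\CC(l)$: for the current batch, $\betab_2+t\ub-\betab^*$; for the historical batches, $\widehat\betab^{(j)}-\betab^*$, which lies in this set by Proposition~\ref{prop:glm-initializer} since $4|\mc A|^{1/2}\le l$. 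Hence $|X_i^\T(\betab'-\betab^*)|\le 2c_BlR_0$. By \eqref{eq:glm-curvature-comparison}, each weight satisfies $b''(X_i^\T\betab')\ge e^{-2Kc_BlR_0}b''(X_i^\T\betab^*)$, and \eqref{eq:glm-row-envelope} turns this into the exponential factor in \eqref{eq:glm-terminal-kappa}. The right-hand side is therefore at least
\[
e^{-2Kc_BlR_0}\,\ub^\T\Big\{\frac1{N_B}\sum_j n_j\nabla^2\widehat{\mc L}^{(j)}(\betab^*)\Big\}\ub .
\]
Since $\ub\in\CC(l)$, \eqref{eq:glm-sparse-to-cone} combined with \eqref{eq:glm-sparse-hessian} shows this quadratic form is at least $(\lambda_{\min}(\Jb)-4\eta_\infty)\|\ub\|_2^2\ge(J_l-4\eta_\infty)\|\ub\|_2^2$, which gives \eqref{eq:glm-terminal-surrogate-rsc}.

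\emph{Main obstacle.} The delicate point is that in Step 2 RSC is needed only for the pooled, weighted sum, not batchwise. Individual batches may be too small for per-batch RSC, so the argument must avoid invoking it. This requires reducing every Hessian to its value at $\betab^*$ with a single uniform multiplicative factor before summing, and this reduction is valid precisely because all weights are nonnegative. I would check carefully that the historical displacements and the current-batch displacement all satisfy the uniform linear-predictor bound $2c_BlR_0$, since that bound is what makes the factor uniform.
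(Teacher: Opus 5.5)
Your proof is correct and follows the paper's argument essentially step for step. For the RLH constant $M$ you use the mean-value theorem with the self-concordance bound, the row envelopes, and the $\lceil l^2\rceil$-block decomposition; for the pooled RSC you use the exact surrogate gradient identity, a uniform multiplicative comparison of every Hessian to $\nabla^2\widehat{\mc L}^{(j)}(\betab^*)$, and the sparse-to-cone transfer of \eqref{eq:glm-sparse-hessian}, all exactly as the paper does. The only cosmetic difference is that you bound $|X_i^\T\vb|\le 2\|X_i\|_{(k_l)}\|\vb\|_2$ directly rather than applying \eqref{eq:glm-sparse-to-cone} to each weighted rank-one matrix, which gives the same factor $4T_B$.
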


\begin{proof}
For $v,w\in\BB(R_0)\cap\CC(l)$, the mean-value theorem and
\eqref{eq:glm-curvature-comparison} give
\begin{align*}
&\left|e_a^\T
\{\nabla^2\widehat{\mc L}^{(j)}(\betab^*+v)
-\nabla^2\widehat{\mc L}^{(j)}(\betab^*)\}w\right|
\\
&\quad\le
Kc_Be^{Kc_BlR_0}\frac1{n_j}\sum_{i=1}^{n_j}
b''\{(X_i^{(j)})^\T\betab^*\}
|(X_i^{(j)})^\T v|\,|(X_i^{(j)})^\T w|.
\end{align*}
Equations \eqref{eq:glm-weighted-row-envelope} and
\eqref{eq:glm-sparse-to-cone}, applied to each weighted rank-one matrix,
bound the last average by $4T_B\|v\|_2\|w\|_2$.  Equations
\eqref{eq:glm-row-envelope} and \eqref{eq:glm-terminal-M} prove
\eqref{eq:glm-terminal-rlh}.

Let $v=\betab_1-\betab_2$.  The linear term in
$\widehat{\mc Q}_B$ cancels from its gradient difference.  The remaining
quadratic form is the weighted average of the current-batch Hessians along
the segment from $\betab_2$ to $\betab_1$ and the historical Hessians at
$\widehat\betab^{(j)}$.  Since
$l\ge4|\mc A|^{1/2}$, Proposition~\ref{prop:glm-initializer} places every
historical displacement in $\BB(R_0)\cap\CC(l)$.  Equations
\eqref{eq:glm-curvature-comparison} and \eqref{eq:glm-row-envelope}
therefore bound every Hessian below by
$\exp[-2KCK_XlR_0
\{\log(2pN_B)+2\log\log(ep)\}^{1/2}]$
times its value at $\betab^*$.  The resulting quadratic form is at least
this factor times
\[
v^\T\left\{N_B^{-1}\sum_{j=1}^Bn_j
\nabla^2\widehat{\mc L}^{(j)}(\betab^*)\right\}v.
\]
Assumption~\mainref{assm:hessian}, \eqref{eq:glm-sparse-hessian}, and
\eqref{eq:glm-sparse-to-cone} bound the last display below by
$(J_l-4\eta_\infty)\|v\|_2^2$.  This proves
\eqref{eq:glm-terminal-surrogate-rsc}.
\end{proof}

\begin{lemma}\label{lem:glm-terminal-score}
On the event in Lemma~\ref{lem:glm-terminal-design}, with conditional
probability at least $1-C\{\log(ep)\}^{-2}$,
\begin{align}
\|\omegab^*\|_\infty
&\le C\left[
\left\{\frac{(J_u+\eta_\infty)
\{\log(2p)+2\log\log(ep)\}}{N_B}\right\}^{1/2}
+\frac{c_B\{\log(2p)+2\log\log(ep)\}}{N_B}
\right],
\label{eq:glm-terminal-score-sup}
\\
\|\omegab^*_{\overline{\mc M}}\|_2
&\le C\left[
\sqrt{\frac{q+2\log\log(ep)}{N_B}}
+\frac{\{q+\log(2N_B)+2\log\log(ep)\}^{1/2}
\{q+2\log\log(ep)\}}{N_B}
\right].
\label{eq:glm-terminal-score-block}
\end{align}
The right-hand side of \eqref{eq:glm-terminal-score-block}, with $q$
replaced by $q-r$, also bounds
$\|\Zb^\T\omegab^*_{\mc M}\|_2+\|\omegab^*_{\mc S}\|_2$.
\end{lemma}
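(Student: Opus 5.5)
Conditionally on the design $\{X^{(j)}\}_{j\le B}$, and working on the event of Lemma~\ref{lem:glm-terminal-design}, I would write $\omegab^*=-N_B^{-1}\sum_{j,i}\varepsilon_i^{(j)}X_i^{(j)}$ as a sum of $N_B$ independent, conditionally centered vectors, using \ref{GLMassm: terminal random design revised}. The conditional cumulant bound \eqref{eq:glm-score-mgf} makes each scalar $\varepsilon_i^{(j)}a_i$ sub-exponential, with variance proxy $v_\varepsilon b''\{(X_i^{(j)})^\T\betab^*\}a_i^2$ and scale $c_\varepsilon|a_i|$. Bernstein's inequality in its mgf form therefore applies directly to any fixed linear functional $u^\T\omegab^*$. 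The conditional variance proxy is $v_\varepsilon N_B^{-1}\, u^\T\{N_B^{-1}\sum_j n_j\nabla^2\widehat{\mc L}^{(j)}(\betab^*)\}u$, and the scale is $c_\varepsilon N_B^{-1}\max_{j,i}|u^\T X_i^{(j)}|$.

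For the sup-norm bound \eqref{eq:glm-terminal-score-sup}, I take $u=e_k$. By \eqref{eq:glm-sparse-hessian} with a $1$-sparse $u$ and Assumption~\mainref{assm:hessian}, the diagonal entry of the pooled Hessian is at most $J_u+\eta_\infty$. The scale is at most $c_B$ by definition. Bernstein's inequality with deviation level $t=\log(2p)+2\log\log(ep)$ gives tail probability $2e^{-t}\le\{p\log^2(ep)\}^{-1}$ per coordinate, and a union bound over $k\le p$ yields \eqref{eq:glm-terminal-score-sup} with failure probability at most $\{\log(ep)\}^{-2}$.

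For the block bound \eqref{eq:glm-terminal-score-block}, I would use a $1/2$-net $\mathcal N$ of the unit sphere in $\RR^q$, embedded on the coordinates $\overline{\mc M}$. This gives $|\mathcal N|\le5^q$ and $\|\omegab^*_{\overline{\mc M}}\|_2\le2\max_{u\in\mathcal N}u^\T\omegab^*_{\overline{\mc M}}$. For each $u$, the variance proxy is bounded by $(J_u+4\eta_\infty)/N_B$ using \eqref{eq:glm-sparse-hessian} (here $q\le k_l$ since $l\ge(m+s)^{1/2}$), which is $O(1/N_B)$. The scale is bounded by $N_B^{-1}\max\|X_{i,\overline{\mc M}}^{(j)}\|_2$, controlled by \eqref{eq:glm-model-row-envelope}. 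I then apply Bernstein with $t=q\log5+2\log\log(ep)$ and union over $\mathcal N$. This produces the two terms $\sqrt{(q+2\log\log(ep))/N_B}$ and $\{q+\log(2N_B)+2\log\log(ep)\}^{1/2}\{q+2\log\log(ep)\}/N_B$, with constants absorbed into $C$.

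For the constrained version, I replace the coordinate embedding by $U=\operatorname{diag}(\Zb,\Ib_s)$, which has $q-r$ orthonormal columns. Then $\|\Zb^\T\omegab^*_{\mc M}\|_2+\|\omegab^*_{\mc S}\|_2\le\sqrt2\,\|U^\T\omegab^*_{\overline{\mc M}}\|_2$. The net lives in $\RR^{q-r}$, so the same argument applies. The vectors $Uu$ are only $q$-sparse, so the variance bound still follows from \eqref{eq:glm-sparse-hessian}, and the scale uses \eqref{eq:glm-tangent-row-envelope}.

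The main delicate point is that the variance must be bounded using the \emph{pooled} empirical Hessian at $\betab^*$ rather than a population quantity. Conditioning on the design keeps the summands independent and centered, and the design event supplies the uniform sparse-Hessian control needed for every net point simultaneously. The probabilities sum to $C\{\log(ep)\}^{-2}$.
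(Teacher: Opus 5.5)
Your proposal is correct and follows essentially the same route as the paper. The paper also uses conditional Bernstein from \eqref{eq:glm-score-mgf} with the pooled empirical Hessian at $\betab^*$ as the variance proxy, a coordinatewise union bound for the sup-norm, a net on the unit sphere over $\overline{\mc M}$ with \eqref{eq:glm-model-row-envelope}, and the same argument with $\operatorname{diag}(\Zb,\Ib_s)$ and \eqref{eq:glm-tangent-row-envelope} for the constrained block; the only differences are cosmetic (your $1/2$-net versus the paper's $1/4$-net, your explicit $\sqrt2$ factor, and the harmless slack $J_u+4\eta_\infty$ where $J_u+\eta_\infty$ already suffices since $q\le k_l$).
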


\begin{proof}
For deterministic $a$, conditional on all covariates,
$N_B^{-1}\sum_{j,i}\varepsilon_i^{(j)}a^\T X_i^{(j)}$ satisfies Bernstein's inequality with variance proxy
\[
\frac{v_\varepsilon}{N_B} a^\T\left\{N_B^{-1}\sum_{j=1}^Bn_j
\nabla^2\widehat{\mc L}^{(j)}(\betab^*)\right\}a
\]
and scale bounded by
$c_\varepsilon\max_{j,i}|a^\T X_i^{(j)}|/N_B$; this is the direct
application of \eqref{eq:glm-score-mgf}.  Take $a=e_k$ and a union bound
over $k\le p$ to obtain \eqref{eq:glm-terminal-score-sup}.  Apply the same
inequality on a $1/4$-net of the unit sphere in $\RR^q$, using
\eqref{eq:glm-model-row-envelope}, to obtain
\eqref{eq:glm-terminal-score-block}.  Applying it to the orthonormal
columns of $\operatorname{diag}(\Zb,\Ib_s)$ and using
\eqref{eq:glm-tangent-row-envelope} proves the last claim.
\end{proof}

On the initializer event,
\begin{equation}
\phi^*\le
\frac{9|\mc A|}{4\kappa_{\rm GLM}^2N_B}
\sum_{j=1}^{B-1}n_j\lambda_j^2.
\label{eq:glm-terminal-phi}
\end{equation}
Define
\begin{equation}
\begin{split}
r_B^{\rm G}={}&
\sqrt{\frac{q+2\log\log(ep)}{N_B}}
+\frac{\{q+\log(2N_B)+2\log\log(ep)\}^{1/2}
\{q+2\log\log(ep)\}}{N_B}
\\
&+q^{1/2}\left\{
\frac{9M|\mc A|}{4\kappa_{\rm GLM}^2N_B}
\sum_{j=1}^{B-1}n_j\lambda_j^2
+d_B+m^{1/2}\eta_\infty d_B+Md_B^2
\right\}.
\end{split}
\label{eq:glm-oracle-radius-rate}
\end{equation}
Recall $\lambda=\lambda_B$.
\begin{enumerate}[label=(GLM-\arabic*),start=8,leftmargin=3\parindent]
\item\label{GLMassm: theorem2 rates revised}
\begin{gather*}
\lambda\ge C_\lambda
\sqrt{\frac{\log(2p)+2\log\log(ep)}{N_B}},
\qquad
p_0^{-1}=O(1),
\\
K_XlR_0\{\log(2pN_B)+2\log\log(ep)\}^{1/2}=O(1),
\qquad
4\eta_\infty\le J_l/2,
\end{gather*}
where $C_\lambda p_0$ is sufficiently large and
\eqref{eq:glm-terminal-kappa} holds.  In addition,
\begin{gather}
R_0>c_{\rm aug}\gamma_0q^{1/2}\lambda\vee d_B,
\qquad
\kappa d_B\le\lambda,
\label{eq:glm-t2-localization}
\\
K_X\left[
\frac{\{\log(2pN_B)+2\log\log(ep)\}
\{\log(2p)+2\log\log(ep)\}}{N_B}
\right]^{1/2}=o(1),
\label{eq:glm-t2-score-scale}
\\
\frac{M|\mc A|}{\kappa^2N_B}
\sum_{j=1}^{B-1}n_j\lambda_j^2=o(\lambda),
\qquad
d_B+m^{1/2}\eta_\infty d_B+Md_B^2=o(\lambda).
\label{eq:glm-t2-remainders}
\end{gather}

\item\label{GLMassm: theorem3 rates revised}
There are $\gamma_1>\gamma_0$ and
$\bar\delta\in(5/(4\kappa\gamma_0),1)$ such that
\begin{equation}
p_0\in(0,1/2],
\qquad
p'_+(\gamma_1)=0,
\qquad
\min_{j\in\mc S}|\beta_j^*|
\ge(\gamma_0+\gamma_1)\lambda.
\label{eq:glm-t3-penalty}
\end{equation}
Moreover,
\begin{equation}
r_B^{\rm G}=o(\lambda),
\qquad
\left[
\eta_\infty+q^{1/2}M
\left\{\frac{9|\mc A|}{4\kappa^2N_B}
\sum_{j=1}^{B-1}n_j\lambda_j^2\right\}^{1/2}
\right]r_B^{\rm G}
+q^{1/2}M(r_B^{\rm G})^2=o(\lambda).
\label{eq:glm-t3-remainders}
\end{equation}
\end{enumerate}

\begin{proposition}\label{prop:glm-tightening}
Under Assumptions~\mainref{assm:penalty_function} and
\mainref{assm:local_alternative}, the conditions of
Proposition~\ref{prop:glm-initializer} with $b=B$, and
\ref{GLMassm: terminal random design revised}--\ref{GLMassm: theorem2 rates revised},
the full- and reduced-model conclusions of
Theorem~\mainref{thm:batch_b_tightening}, including
\eqref{eq:crude-lla-localization}, hold, for all sufficiently large $N_B$,
with probability at least
$1-2B/p-C\{\log(ep)\}^{-2}$.
\end{proposition}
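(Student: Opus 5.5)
The plan is to verify, on a single high-probability event, every hypothesis of Theorem~\ref{thm:batch_b_tightening} (parts (i) and (ii)), and then apply that theorem deterministically.

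\textbf{Step 1: the intersected event.} Intersect three events:
\begin{itemize}
\item the initializer event of Proposition~\ref{prop:glm-initializer} with $b=B$, which has probability at least $1-2B/p$;
\item the design event of Lemma~\ref{lem:glm-terminal-design};
\item the score event of Lemma~\ref{lem:glm-terminal-score}.
\end{itemize}
The last two contribute failure probability $C\{\log(ep)\}^{-2}$. On this intersection, the localization conditions $\widehat{\betab}^{(j)}-\betab^*\in\BB(R_0)\cap\CC(4|\mc A|^{1/2})$ for $j<B$ follow from Proposition~\ref{prop:glm-initializer} and \eqref{eq:glm-main-body-cone}. The radius requirements $R_0>c_{\rm aug}\gamma_0q^{1/2}\lambda$ and $R_0>C_{\max}\|\hb_B\|_2=d_B$ are assumed in \eqref{eq:glm-t2-localization}.

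\textbf{Step 2: curvature and Lipschitz-Hessian events.} Lemma~\ref{lem:glm-terminal-curvature} gives $\bigcap_{j\le B}\mc E_{\rm rlh}^{(j)}(R_0,l,M)$ with $M$ as in \eqref{eq:glm-terminal-M}, which includes $\mc E_{\rm rlh}^{(B)}$. It also gives the surrogate RSC inequality \eqref{eq:glm-terminal-surrogate-rsc} with $\kappa_{\rm GLM}$.

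This step needs care. The theorem is phrased through $\mc R_B$, that is, batchwise RSC. However, the proof of Theorem~\ref{thm:batch_b_tightening} uses RSC only through the inherited inequality of Lemma~\ref{lem:surrogate-curvature-inheritance}, as it enters Lemma~\ref{lem:intermediate.rate}. I would therefore argue as in the proof of Proposition~\ref{prop:glm-initializer}: substitute the aggregate inequality \eqref{eq:glm-terminal-surrogate-rsc} directly wherever that lemma is invoked, and note that the condition $4\eta_\infty\le J_l/2$ keeps $\kappa_{\rm GLM}$ bounded away from zero. The sparse-to-cone bound \eqref{eq:glm-sparse-to-cone} converts \eqref{eq:glm-sparse-hessian} into $\mc H(\eta_\infty)$, since $\|\cdot\|_{\max}$ is dominated by the $k_l$-sparse quadratic form. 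Assumption~\ref{assm:hessian} is part of \ref{GLMassm: terminal hessian tails revised}.

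\textbf{Step 3: tuning inequalities.} We need
\[
\lambda\ge\frac{2}{p_0}\bigl(\|\omegab^*\|_\infty+2M\phi^*+\Gamma_B\bigr)+C_{\max}\kappa\|\hb_B\|_2 .
\]
The score term is handled by \eqref{eq:glm-terminal-score-sup}. Its leading part is at most $C\sqrt{(J_u+\eta_\infty)\{\log(2p)+2\log\log(ep)\}/N_B}$, which is at most $p_0\lambda/8$ once $C_\lambda p_0$ is large. For the second-order part, bound $c_B$ by \eqref{eq:glm-row-envelope}; then \eqref{eq:glm-t2-score-scale} makes it $o(\lambda)$.

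The historical term uses \eqref{eq:glm-terminal-phi}: $M\phi^*$ is a constant multiple of $\frac{M|\mc A|}{\kappa^2N_B}\sum_{j<B} n_j\lambda_j^2$, which is $o(\lambda)$ by \eqref{eq:glm-t2-remainders}.

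The alternative term is controlled by noting $\Gamma_B\lesssim Md_B^2+(J_u+\sqrt m\eta_\infty)d_B$, which is $o(\lambda)$ by the second part of \eqref{eq:glm-t2-remainders}. Together with $\kappa d_B\le\lambda$ from \eqref{eq:glm-t2-localization}, this handles the term $C_{\max}\kappa\|\hb_B\|_2$.

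The main obstacle is the constant bookkeeping. The displayed inequality has $\kappa d_B\le\lambda$ on the right without slack, while the other pieces must fit into what remains. I would try tightening the argument by using $d_B=o(\lambda)$, which is available from \eqref{eq:glm-t2-remainders}. With that, every term is eventually at most a small fraction of $\lambda$ for all sufficiently large $N_B$, which is why the proposition is stated only for large $N_B$.

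\textbf{Step 4: conclusion.} With all hypotheses verified on the intersected event, Theorem~\ref{thm:batch_b_tightening} yields both contractions. Its proof also yields \eqref{eq:crude-lla-localization}. A union bound gives the stated probability $1-2B/p-C\{\log(ep)\}^{-2}$.
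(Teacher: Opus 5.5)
Your proposal is correct and follows essentially the same route as the paper's proof. You intersect the initializer, terminal-design and (conditional) score events. You obtain the RLH events from the design lemma, and $\mathcal H(\eta_\infty)$ from the sparse-to-cone bound. You then rerun the proof of the tightening theorem with the aggregate surrogate RSC in place of the batchwise RSC. Finally, you verify the full and reduced tuning inequalities from the score bounds with $C_\lambda p_0$ large and the $o(\lambda)$ rate conditions, including $d_B=o(\lambda)$ to absorb the $\kappa d_B$ term.
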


\begin{proof}
Intersect the initializer event, Lemma~\ref{lem:glm-terminal-design}, and
the conditional event in Lemma~\ref{lem:glm-terminal-score}.  Conditional
expectation and a union bound give the stated probability.  Equations
\eqref{eq:R-beta-star-bound}, \eqref{eq:glm-terminal-rlh}, and
\eqref{eq:glm-terminal-phi} yield
\begin{equation}
\|\gb^*-\omegab^*\|_\infty\le\frac32M\phi^*.
\label{eq:glm-gstar-bound}
\end{equation}
Since $k_l\ge2$, \eqref{eq:glm-sparse-hessian} implies
$\mc H(\eta_\infty)$.  Lemma~\ref{lem:omega_discrepancy} therefore gives
\begin{equation}
\|\gb^\dagger-\omegab^*\|_\infty
\le2M\phi^*+2Md_B^2+(J_u+m^{1/2}\eta_\infty)d_B.
\label{eq:glm-gdagger-bound}
\end{equation}
Equations \eqref{eq:glm-terminal-score-sup},
\eqref{eq:glm-t2-score-scale}, and \eqref{eq:glm-t2-remainders}, with
$C_\lambda p_0$ sufficiently large, verify the full and reduced tuning
inequalities in Theorem~\mainref{thm:batch_b_tightening}.
Then, applying the proofs of Lemmas~\ref{lem:intermediate.rate},
\ref{lem:intermediate.rate2}, and Theorem~\mainref{thm:batch_b_tightening} yields the conclusion.
\end{proof}

\subsubsection{Verification of the strong-oracle theorem}
\label{sec:glm_strong_oracle_verification}

Write
\begin{equation}
\Ub_a=\Ib_q,
\qquad
\Ub_0=\operatorname{diag}(\Zb,\Ib_s),
\qquad
\Ab_e=\Jb_{\overline{\mc M}^c,\overline{\mc M}}\Ub_e(\Ub_e^\T\Jb_{\overline{\mc M},\overline{\mc M}}\Ub_e)^{-1},
\quad e=0,a.
\label{eq:glm-tangent-projection}
\end{equation}
Assumption~\mainref{assm:hessian} gives
\begin{equation}
\max_{j\in \overline{\mc M}^c}\|(\Ab_e)_{j,\cdot}\|_2\le J_u/J_l,
\qquad e=0,a.
\label{eq:glm-projection-row}
\end{equation}

\begin{lemma}\label{lem:glm-projected-score}
Under \ref{GLMassm: terminal random design revised}--\ref{GLMassm: terminal hessian tails revised},
with probability at least $1-C\{\log(ep)\}^{-2}$,
\begin{equation}
\begin{split}
&\max_{e\in\{0,a\}}
\|\omegab^*_{\overline{\mc M}^c}-\Ab_e\Ub_e^\T\omegab^*_{\overline{\mc M}}\|_\infty
\\
&\quad\le C\left[
\sqrt{\frac{(J_u+\eta_\infty)
\{\log(2p)+2\log\log(ep)\}}{N_B}}\right.\\
&\qquad\qquad\qquad \left.+\frac{\{\log(2pN_B)+2\log\log(ep)\}^{1/2}
\{\log(2p)+2\log\log(ep)\}}{N_B}
\right].
\end{split}
\label{eq:glm-projected-score}
\end{equation}
\end{lemma}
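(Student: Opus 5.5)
The plan is to handle each $e\in\{0,a\}$ and each $j\in\overline{\mc M}^c$ separately. I would write
\[
\omega^*_j-(\Ab_e)_{j,\cdot}\Ub_e^\T\omegab^*_{\overline{\mc M}}
=\frac1{N_B}\sum_{b=1}^B\sum_{i=1}^{n_b}\varepsilon_i^{(b)}\,a_{e,j}^\T X_i^{(b)},
\qquad
a_{e,j}=-\bigl(e_j-\Ub_e(\Ab_e)_{j,\cdot}^\T\bigr),
\]
where $\Ub_e(\Ab_e)_{j,\cdot}^\T$ is embedded on $\overline{\mc M}$. The vector $a_{e,j}$ is deterministic, because $\Ab_e$ and $\Ub_e$ depend only on the population Hessian $\Jb$ and on $\Zb$. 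By \eqref{eq:glm-projection-row} and $\|\Ub_e\|_2\le1$, it satisfies $\|a_{e,j}\|_2\le 1+J_u/J_l$. Moreover, $a_{e,j}$ is supported on $\{j\}\cup\overline{\mc M}$, so it has at most $q+1$ nonzero entries.

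Conditionally on all covariates, I would then apply the Bernstein bound from \eqref{eq:glm-score-mgf}, exactly as in the proof of Lemma~\ref{lem:glm-terminal-score}. The variance proxy is
\[
\frac{v_\varepsilon}{N_B}\,a_{e,j}^\T\Bigl\{N_B^{-1}\sum_b n_b\nabla^2\widehat{\mc L}^{(b)}(\betab^*)\Bigr\}a_{e,j}.
\]
If $q+1\le k_l$, \eqref{eq:glm-sparse-hessian} bounds this by $v_\varepsilon(J_u+\eta_\infty)\|a_{e,j}\|_2^2/N_B$. Otherwise I would use \eqref{eq:glm-sparse-to-cone}, at the cost of a factor of $4$, since $\|a_{e,j}\|_1\le l\|a_{e,j}\|_2$ once $l\ge (q+1)^{1/2}$; this holds because $l\ge 4 q^{1/2}$. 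The scale parameter is $c_\varepsilon\max_{b,i}|a_{e,j}^\T X_i^{(b)}|/N_B$.

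The main obstacle is the envelope $\max_{b,i}|a_{e,j}^\T X_i^{(b)}|$. Bounding it through the row envelopes \eqref{eq:glm-model-row-envelope} would introduce a factor $q^{1/2}$, and the target bound \eqref{eq:glm-projected-score} contains no $q$ in its second term. Instead I would use that $a_{e,j}$ is a fixed vector with bounded $\ell_2$ norm. Under \eqref{eq:glm-subgaussian-design}, each $a_{e,j}^\T X$ is sub-Gaussian with parameter $K_X\|a_{e,j}\|_2$. A union bound over the $2(p-q)N_B$ triples $(e,j,i)$ then gives
\[
\max_{e,j,b,i}|a_{e,j}^\T X_i^{(b)}|\le CK_X\{\log(2pN_B)+2\log\log(ep)\}^{1/2}
\]
with probability at least $1-C\{\log(ep)\}^{-2}$. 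This event plays the same role as \eqref{eq:glm-row-envelope}, and I would add it to the design event.

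Finally, I would set the deviation level to $t=\log(2p)+2\log\log(ep)$ and take a union bound over the $2(p-q)$ pairs $(e,j)$. This makes the conditional failure probability at most $C\{\log(ep)\}^{-2}$. The resulting deviation has the form $\sqrt{\text{variance}\cdot t}+\text{scale}\cdot t$, which is exactly \eqref{eq:glm-projected-score}. Integrating over the design event yields the unconditional probability bound.
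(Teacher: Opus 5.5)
Your proposal is correct and follows the paper's proof essentially step for step. The matching steps are: the same deterministic contrast vector supported on $\{j\}\cup\overline{\mc M}$ with bounded $\ell_2$ norm; the sparse-Hessian event \eqref{eq:glm-sparse-hessian} to control the conditional variance proxy by $C(J_u+\eta_\infty)$; a union bound over the $O(pN_B)$ fixed linear forms $a_{e,j}^\T X_i^{(b)}$, which gives a $q$-free modulus; and Bernstein's inequality with a union bound over $(e,j)$. Your fallback for $q+1>k_l$ is unnecessary, because $k_l=p\wedge 2\lceil l^2\rceil\ge q+1$ whenever $\overline{\mc M}^c\neq\varnothing$.
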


\begin{proof}
For $j\in \overline{\mc M}^c$, the corresponding coordinate is
\[
-N_B^{-1}\sum_{r=1}^B\sum_{i=1}^{n_r}\varepsilon_i^{(r)}
\{X_{ij}^{(r)}-(\Ab_e)_{j,\cdot}\Ub_e^\T X_{i,\overline{\mc M}}^{(r)}\}.
\]
The coefficient vector in braces has support at most $q+1\le k_l$ and
squared Euclidean norm at most $1+(J_u/J_l)^2$.  Equations
\eqref{eq:glm-subgaussian-design}, \eqref{eq:glm-sparse-hessian}, and
\eqref{eq:glm-projection-row} bound its conditional variance proxy by
$C(J_u+\eta_\infty)$; a union bound over the $pN_B$ corresponding linear
forms bounds the modulus used for the following Bernstein's inequality by
$C\{\log(2pN_B)+2\log\log(ep)\}^{1/2}$.
Bernstein's inequality and a
union bound over $j\in \overline{\mc M}^c$ prove the result.
\end{proof}

\begin{lemma}\label{lem:glm-oracle-conditions}
Under the assumptions of Proposition~\ref{prop:glm-tightening} and
\ref{GLMassm: theorem3 rates revised}, for all sufficiently large $N_B$,
with probability at least
$1-2B/p-C\{\log(ep)\}^{-2}$,
\begin{align}
\max_{e\in\{0,a\}}\|\widehat\betab_e^{\rm ora}-\betab^*\|_2
&\le Cr_B^{\rm G},
\label{eq:glm-oracle-rate}
\\
\max_{e\in\{0,a\}}
\|\nabla_{\overline{\mc M}^c}\widehat{\mc Q}_B(\widehat\betab_e^{\rm ora})\|_\infty
&\le p_0\lambda/2.
\label{eq:glm-oracle-offscore}
\end{align}
\end{lemma}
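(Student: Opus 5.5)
The oracle-rate bound \eqref{eq:glm-oracle-rate} will come from Lemma~\ref{lem:oracle_rate}, and the off-support score bound \eqref{eq:glm-oracle-offscore} from a first-order expansion of $\nabla_{\overline{\mc M}^c}\widehat{\mc Q}_B$ around $\betab^*$ (or $\betab^\dagger$). Throughout we work on the intersection of three events: the initializer event of Proposition~\ref{prop:glm-initializer}, the design event of Lemma~\ref{lem:glm-terminal-design}, and the score events of Lemmas~\ref{lem:glm-terminal-score} and \ref{lem:glm-projected-score}. A union bound gives the stated probability. On this intersection, Lemma~\ref{lem:glm-terminal-curvature} supplies the inherited curvature \eqref{eq:inherited-surrogate-curvature} with $\kappa=\kappa_{\rm GLM}$ and all RLH events with $M$. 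Since $l\ge 4|\mc A|^{1/2}\vee q^{1/2}$, the hypotheses of Lemmas~\ref{lem:omega_discrepancy} and \ref{lem:oracle_rate} are met.

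\emph{Oracle rate.} Insert into \eqref{eq:full-oracle-rate} and \eqref{eq:reduced-oracle-rate} three ingredients: the block score bounds from Lemma~\ref{lem:glm-terminal-score} (with $q$ for the full model and $q-r$ for the reduced model), the bound $\phi^*$ from \eqref{eq:glm-terminal-phi}, and $\Gamma_B\lesssim Md_B^2+(J_u+m^{1/2}\eta_\infty)d_B$. Each term then matches a summand of $r_B^{\rm G}$ in \eqref{eq:glm-oracle-radius-rate}. The side condition ``radius $<R_0$'' holds for large $N_B$ because $r_B^{\rm G}=o(\lambda)$ and $R_0>c_{\rm aug}\gamma_0q^{1/2}\lambda$.

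\emph{Off-support score.} For $e\in\{0,a\}$ write $\Delta_e=\widehat\betab_e^{\rm ora}-\betab^*$, which is supported on $\overline{\mc M}$. By Lemma~\ref{lem:decomposition}, as in the proof of Lemma~\ref{lem:oracle_bahadur}, expand the full gradient as
\[
\nabla\widehat{\mc Q}_B(\widehat\betab_e^{\rm ora})=\omegab^*+\Jb_{\cdot,\overline{\mc M}}\Delta_e+\remb,
\]
where $\remb$ collects three pieces: the historical remainders $N_B^{-1}\sum n_bR_b$, the Hessian-concentration error, and the RLH Taylor error. In sup-norm these are controlled as follows. The historical remainders are bounded by $M\phi^*$ plus cross terms $M\|\widehat\betab^{(b)}-\betab^*\|_2\|\Delta_e\|_2$; the cross terms are summed with Cauchy--Schwarz to give $q^{1/2}M\{\cdot\}^{1/2}r_B^{\rm G}$. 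The Hessian-concentration error is bounded by $\eta_\infty\|\Delta_e\|_1\le q^{1/2}\eta_\infty r_B^{\rm G}$, using \eqref{eq:glm-sparse-to-cone} so that the max-norm bound applies. The Taylor error is bounded by $M\|\Delta_e\|_2^2$. By \eqref{eq:glm-t3-remainders}, $\|\remb\|_\infty=o(\lambda)$, apart from the $\eta_\infty$ term, which may need the extra $q^{1/2}$ absorbed there or handled through the cone argument.

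The key step is to eliminate $\Delta_e$. The oracle KKT conditions (Lemma~\ref{lem:intermediate.rate2}) give $\Ub_e^\T\nabla_{\overline{\mc M}}\widehat{\mc Q}_B(\widehat\betab_e^{\rm ora})=0$ for the reduced model, and the same with $\Ub_a=\Ib$ for the full model. For the reduced model, $\Delta_e=\Ub_e u+(\betab^\dagger-\betab^*)$. Solving the projected equation for $u$ gives
\[
\nabla_{\overline{\mc M}^c}\widehat{\mc Q}_B(\widehat\betab_e^{\rm ora})=\omegab^*_{\overline{\mc M}^c}-\Ab_e\Ub_e^\T\omegab^*_{\overline{\mc M}}+O_\infty(\|\remb\|_\infty\cdot q^{1/2}J_u/J_l)+O(J_ud_B),
\]
where the bounded row norms \eqref{eq:glm-projection-row} are used to transfer the $\overline{\mc M}$-block remainder. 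Lemma~\ref{lem:glm-projected-score} together with \eqref{eq:glm-t2-score-scale} bounds the leading term by a small multiple of $\lambda$, given that $C_\lambda p_0$ is large. The $d_B$ term is $o(\lambda)$ by \eqref{eq:glm-t2-remainders}. This yields $\le p_0\lambda/2$ for large $N_B$.

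The main obstacle is this transfer step. The $\overline{\mc M}$-block remainder enters through $\Ab_e$ in $\ell_2$, not $\ell_\infty$, so we must bound $\|\Ub_e^\T\remb_{\overline{\mc M}}\|_2\le q^{1/2}\|\remb\|_\infty$. We then have to check that \eqref{eq:glm-t3-remainders} is strong enough for the resulting $q^{1/2}$ factor, bounding the $\ell_2$ norm of the historical and Taylor pieces directly, as in \eqref{eq:bahadur-historical-R}.
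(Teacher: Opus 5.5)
Your route is essentially the paper's. You use Lemma~\ref{lem:oracle_rate} with the GLM curvature, block-score and $\phi^*$ bounds for \eqref{eq:glm-oracle-rate}. For \eqref{eq:glm-oracle-offscore} you use the oracle KKT equation, the tangent projection $\Ab_e$ with \eqref{eq:glm-projection-row}, and Lemma~\ref{lem:glm-projected-score} for the leading term. Expanding about $\betab^*$ and solving the projected KKT system for the tangent coordinate is algebraically equivalent to the paper's expansion about $\theta_e\in\{\betab^*,\betab^\dagger\}$, and it handles the $d_B$ term directly. Your historical, cross and Taylor pieces are also fine. They do pick up a factor $q^{1/2}$ when the $\overline{\mc M}$-block is passed through $\Ab_e$. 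But \eqref{eq:glm-t3-remainders}, together with $q^{1/2}M\phi^*\lesssim r_B^{\rm G}=o(\lambda)$, is written exactly to absorb that factor.

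The one step you leave open is the Hessian-concentration term, and your max-norm route is what creates the difficulty. The bound $\eta_\infty\|\Delta_e\|_1$ gives $q^{1/2}\eta_\infty r_B^{\rm G}$ on each off-support row, and $q\,\eta_\infty r_B^{\rm G}$ after the $\ell_\infty\to\ell_2$ transfer on $\overline{\mc M}$. Condition \eqref{eq:glm-t3-remainders} only lists $\eta_\infty r_B^{\rm G}$, so neither bound is directly covered.

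The missing observation is that \eqref{eq:glm-sparse-hessian} is a restricted spectral-norm bound, not just a max-norm bound. For the symmetric matrix $\mathbf W=N_B^{-1}\sum_{j=1}^Bn_j\nabla^2\widehat{\mc L}^{(j)}(\betab^*)-\Jb$ it says $\max_{|S|\le k_l}\|\mathbf W_{S,S}\|_2\le\eta_\infty$. Since $q+1\le k_l$, taking $S=\{k\}\cup\overline{\mc M}$ gives $\|\mathbf W_{k,\overline{\mc M}}\|_2\le\eta_\infty$ for every $k\in\overline{\mc M}^c$, and also $\|\mathbf W_{\overline{\mc M},\overline{\mc M}}\|_2\le\eta_\infty$. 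The concentration error is therefore at most $\eta_\infty\|\Delta_e\|_2\le C\eta_\infty r_B^{\rm G}$, both on the off-support rows and inside $\|\Ub_e^\T(\cdot)_{\overline{\mc M}}\|_2$. This is exactly how the paper obtains its two displayed Hessian bounds. The cone reduction \eqref{eq:glm-sparse-to-cone} is not needed, because every vector involved is supported on at most $q+1$ coordinates.

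Separately, you take the existence and uniqueness of $\widehat\betab_e^{\rm ora}$ for granted, and for logistic regression existence is not automatic. The paper handles this as follows. It minimizes over each oracle affine set intersected with a closed $R_0$-ball about $\betab^*$ or $\betab^\dagger$. The argument of Lemma~\ref{lem:oracle_rate} shows the minimizer is interior, hence global by convexity. Uniqueness then follows from \eqref{eq:glm-terminal-surrogate-rsc}.
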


\begin{proof}
Minimize $\widehat{\mc Q}_B$ on each oracle affine set intersected with the
closed radius-$R_0$ ball centered at $\betab^*$ for the full oracle and at
$\betab^\dagger$ for the reduced oracle.  The sets are nonempty and
compact.  
The argument in the proof of Lemma~\ref{lem:oracle_rate} shows that both localized minimizers are interior and hence are global oracle minimizers.  Equation
\eqref{eq:glm-terminal-surrogate-rsc} also makes them unique on their oracle
affine sets.
The proof of Lemma~\ref{lem:oracle_rate} uses the batchwise RSC events only
through \eqref{eq:inherited-surrogate-curvature}.  Substitute
\eqref{eq:glm-terminal-surrogate-rsc} and use
\eqref{eq:glm-terminal-score-block}, \eqref{eq:glm-gstar-bound}, and
\eqref{eq:glm-gdagger-bound}.  This gives \eqref{eq:glm-oracle-rate}; for
the reduced oracle, center at $\betab^\dagger$, use the last assertion of
Lemma~\ref{lem:glm-terminal-score}, and add
$\|\betab^\dagger-\betab^*\|_2\le d_B$.  These terms are precisely those
in \eqref{eq:glm-oracle-radius-rate}.

To prove \eqref{eq:glm-oracle-offscore}, write
$\theta_a=\betab^*$, $\theta_0=\betab^\dagger$, and
$\Delta_e=\widehat\betab_e^{\rm ora}-\theta_e$.  The oracle KKT equations
and \eqref{eq:glm-tangent-projection} give
\begin{equation}
\Ub_e^\T\nabla_{\overline{\mc M}}\widehat{\mc Q}_B(\widehat\betab_e^{\rm ora})=0,
\qquad
\Jb_{\overline{\mc M}^c,\overline{\mc M}}\Ub_e-
\Ab_e\Ub_e^\T\Jb_{\overline{\mc M},\overline{\mc M}}\Ub_e=0.
\label{eq:glm-oracle-cancellation}
\end{equation}
Both $\theta_e$ and $\widehat\betab_e^{\rm ora}$ vanish on $\overline{\mc M}^c$.
Moreover, $\Cb(\Delta_0)_{\mc M}=0$ and $\Ub_a=\Ib_q$, so
\begin{equation*}
(\Delta_e)_{\overline{\mc M}^c}=0,
\qquad
(\Delta_e)_{\overline{\mc M}}=\Ub_e\Ub_e^\T(\Delta_e)_{\overline{\mc M}},
\qquad e=0,a.
\end{equation*}
The fundamental theorem of calculus, \eqref{eq:glm-oracle-cancellation},
and the preceding identities yield
\begin{align*}
\nabla_{\overline{\mc M}^c}\widehat{\mc Q}_B(\widehat\betab_e^{\rm ora})
={}&
\omegab^*_{\overline{\mc M}^c}-\Ab_e\Ub_e^\T\omegab^*_{\overline{\mc M}}
\\
&+
\left[\nabla_{\overline{\mc M}^c}\widehat{\mc Q}_B(\theta_e)-\omegab^*_{{\overline{\mc M}}^c}
-\Ab_e\Ub_e^\T
\{\nabla_{\overline{\mc M}}\widehat{\mc Q}_B(\theta_e)-\omegab^*_{\overline{\mc M}}\}\right]
\\
&+
\int_0^1
\bigg[
\left\{\nabla^2_{\overline{\mc M}^c,\overline{\mc M}}\widehat{\mc Q}_B
(\theta_e+t\Delta_e)-\Jb_{\overline{\mc M}^c,\overline{\mc M}}\right\}
\\
&\hspace{35mm}
-\Ab_e\Ub_e^\T
\left\{\nabla^2_{\overline{\mc M},\overline{\mc M}}\widehat{\mc Q}_B
(\theta_e+t\Delta_e)-\Jb_{\overline{\mc M},\overline{\mc M}}\right\}
\bigg](\Delta_e)_{\overline{\mc M}}\,{\rm d}t.
\end{align*}
By \eqref{eq:glm-gstar-bound}--\eqref{eq:glm-gdagger-bound} and
\eqref{eq:glm-projection-row}, we get
\begin{align*}
\max_{e\in\{0,a\}}
\big\|&\nabla_{\overline{\mc M}^c}\widehat{\mc Q}_B(\theta_e)-\omegab^*_{\overline{\mc M}^c}
-\Ab_e\Ub_e^\T
\{\nabla_{\overline{\mc M}}\widehat{\mc Q}_B(\theta_e)-\omegab^*_{\overline{\mc M}}\}\big\|_\infty
\\
&\le
\left(1+\frac{J_u}{J_l}q^{1/2}\right)
\{2M\phi^*+2Md_B^2+(J_u+m^{1/2}\eta_\infty)d_B\}
\\
&\le
Cq^{1/2}\{M\phi^*+d_B+m^{1/2}\eta_\infty d_B+Md_B^2\}.
\end{align*}
For every $\betab\in\RR^p$, \maineqref{def:q.function} implies
\begin{align*}
\nabla^2\widehat{\mc Q}_B(\betab)-\Jb
&=\!
\frac1{N_B}\sum_{j=1}^{B-1}n_j
\left\{
\nabla^2\widehat{\mc L}^{(j)}(\widehat\betab^{(j)})
-\nabla^2\widehat{\mc L}^{(j)}(\betab^*)
\right\}
+
\frac{n_B}{N_B}
\left\{
\nabla^2\widehat{\mc L}^{(B)}(\betab)
-\nabla^2\widehat{\mc L}^{(B)}(\betab^*)
\right\}
\\
&\qquad+
\frac1{N_B}\sum_{j=1}^Bn_j
\nabla^2\widehat{\mc L}^{(j)}(\betab^*)-\Jb.
\end{align*}
Since $q\ge1$, the definition of $r_B^{\rm G}$ gives
$d_B\le r_B^{\rm G}$.  Equation \eqref{eq:glm-oracle-rate} therefore yields
\begin{equation*}
\max_{e\in\{0,a\}}\|\Delta_e\|_2
\le Cr_B^{\rm G},
\qquad
\max_{e\in\{0,a\}}\sup_{0\le t\le1}
\|\theta_e+t\Delta_e-\betab^*\|_2
\le Cr_B^{\rm G}.
\end{equation*}
By \eqref{eq:glm-t2-localization} and
\eqref{eq:glm-t3-remainders}, $Cr_B^{\rm G}<R_0$ for all sufficiently
large $N_B$.  The Cauchy--Schwarz inequality yields
\begin{equation*}
\frac1{N_B}\sum_{j=1}^{B-1}n_j
\|\widehat\betab^{(j)}-\betab^*\|_2
\le(\phi^*)^{1/2}.
\end{equation*}
The vectors $\theta_e+t\Delta_e-\betab^*$ are supported on $\overline{\mc M}$, and
$q^{1/2}\le l$.  For $\|u\|_2=1$, the vector $\Ub_eu$ has Euclidean norm
one and is supported on $\overline{\mc M}$.  Since $q+1\le k_l$,
and \eqref{eq:glm-sparse-hessian} implies
\begin{equation*}
\max_{|S|\le k_l}
\left\|
\left\{
\frac1{N_B}\sum_{j=1}^Bn_j
\nabla^2\widehat{\mc L}^{(j)}(\betab^*)-\Jb
\right\}_{S,S}
\right\|_2
\le\eta_\infty.
\end{equation*}
Proposition~\ref{prop:glm-initializer},
\eqref{eq:glm-terminal-rlh}, the displayed identity for
$\nabla^2\widehat{\mc Q}_B(\betab)-\Jb$, and the RLH event give
\begin{align*}
&\max_{\substack{e\in\{0,a\},\ k\in \overline{\mc M}^c\\0\le t\le1}}
\left\|
\left\{\nabla^2\widehat{\mc Q}_B(\theta_e+t\Delta_e)-\Jb\right\}_{k,\overline{\mc M}}
\Ub_e
\right\|_2
\le
C\left[
\eta_\infty+M
\left\{(\phi^*)^{1/2}+r_B^{\rm G}\right\}
\right],
\\
&\max_{\substack{e\in\{0,a\}\\0\le t\le1}}
\left\|
\Ub_e^\T
\left\{\nabla^2\widehat{\mc Q}_B(\theta_e+t\Delta_e)-\Jb\right\}_{\overline{\mc M},\overline{\mc M}}
\Ub_e
\right\|_2
\le
C\left[
\eta_\infty+q^{1/2}M
\left\{(\phi^*)^{1/2}+r_B^{\rm G}\right\}
\right].
\end{align*}
For the first inequality, apply the RLH inequality to $\Ub_eu$ and take the maximum
over $k\in \overline{\mc M}^c$.  For the second inequality, use
\begin{equation*}
\|\Ub_e^\T\vb_{\overline{\mc M}}\|_2
\le\|\vb_{\overline{\mc M}}\|_2
\le q^{1/2}\|\vb\|_\infty,
\qquad \vb\in\RR^p.
\end{equation*}
It follows from \eqref{eq:glm-projection-row} that
\begin{align*}
&\max_{e\in\{0,a\}}
\bigg\|
\int_0^1
\bigg[
\left\{\nabla^2_{\overline{\mc M}^c,\overline{\mc M}}\widehat{\mc Q}_B
(\theta_e+t\Delta_e)-\Jb_{\overline{\mc M}^c,\overline{\mc M}}\right\}
\\
&\hspace{35mm}
-\Ab_e\Ub_e^\T
\left\{\nabla^2_{\overline{\mc M},\overline{\mc M}}\widehat{\mc Q}_B
(\theta_e+t\Delta_e)-\Jb_{\overline{\mc M},\overline{\mc M}}\right\}
\bigg](\Delta_e)_{\overline{\mc M}}\,{\rm d}t
\bigg\|_\infty
\\
&\qquad\le
C\left[
\eta_\infty+q^{1/2}M
\{(\phi^*)^{1/2}+r_B^{\rm G}\}
\right]r_B^{\rm G}.
\end{align*}
By \eqref{eq:glm-terminal-phi}, the definition of $r_B^{\rm G}$, and
\eqref{eq:glm-t3-remainders}, we have
\begin{equation*}
q^{1/2}\{M\phi^*+d_B+m^{1/2}\eta_\infty d_B+Md_B^2\}
\le Cr_B^{\rm G}=o(\lambda),
\end{equation*}
and
\begin{align*}
&\left[
\eta_\infty+q^{1/2}M
\{(\phi^*)^{1/2}+r_B^{\rm G}\}
\right]r_B^{\rm G}
\\
&\quad\le
\left[
\eta_\infty+q^{1/2}M
\left\{
\frac{9|\mc A|}{4\kappa^2N_B}
\sum_{j=1}^{B-1}n_j\lambda_j^2
\right\}^{1/2}
\right]r_B^{\rm G}
+q^{1/2}M(r_B^{\rm G})^2
=o(\lambda).
\end{align*}
Moreover, \ref{GLMassm: theorem2 rates revised} and
$4\eta_\infty\le J_l/2$ give
\begin{equation*}
\sqrt{\frac{(J_u+\eta_\infty)
\{\log(2p)+2\log\log(ep)\}}{N_B}}
\le\frac{C}{C_\lambda}\lambda,
\end{equation*}
whereas the lower bound for $\lambda_B$ in
\ref{GLMassm: theorem2 rates revised} and
\eqref{eq:glm-t2-score-scale} yield
\begin{equation*}
\frac{\{\log(2pN_B)+2\log\log(ep)\}^{1/2}
\{\log(2p)+2\log\log(ep)\}}{N_B}
=o(\lambda).
\end{equation*}
Substitution of these bounds and \eqref{eq:glm-projected-score} into the
displayed identity for
$\nabla_{\overline{\mc M}^c}\widehat{\mc Q}_B(\widehat\betab_e^{\rm ora})$ yields
\begin{equation*}
\max_{e\in\{0,a\}}
\|\nabla_{\overline{\mc M}^c}\widehat{\mc Q}_B
(\widehat\betab_e^{\rm ora})\|_\infty
\le
CC_\lambda^{-1}\lambda+o(\lambda)
\le\frac{p_0\lambda}{2},
\end{equation*}
where the last inequality holds for all sufficiently large $N_B$ because
$p_0^{-1}=O(1)$ and $C_\lambda p_0$ is sufficiently large.
This proves \eqref{eq:glm-oracle-offscore}.
\end{proof}

\begin{proposition}\label{prop:glm-strong-oracle}
Under the assumptions of Proposition~\ref{prop:glm-tightening} and
\ref{GLMassm: theorem3 rates revised}, for all sufficiently large $N_B$,
the conclusion of Theorem~\mainref{thm:oracle_recovery} holds with probability
at least $1-2B/p-C\{\log(ep)\}^{-2}$.
\end{proposition}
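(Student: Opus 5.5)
The plan is to verify, on a single high-probability event, every hypothesis of Theorem~\ref{thm:oracle_recovery} and then invoke that theorem as a deterministic statement. Let $\mc E_{\rm T}$ denote the event of Proposition~\ref{prop:glm-tightening}. It is the intersection of:
\begin{enumerate}
\item the initializer event of Proposition~\ref{prop:glm-initializer} with $b=B$;
\item the design event of Lemma~\ref{lem:glm-terminal-design};
\item the conditional score event of Lemma~\ref{lem:glm-terminal-score}.
\end{enumerate}
Let $\mc E_{\rm O}$ denote the event of Lemma~\ref{lem:glm-oracle-conditions}. Lemma~\ref{lem:glm-oracle-conditions} is proved on the same design and score events, plus the projected-score event of Lemma~\ref{lem:glm-projected-score}. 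Its probability bound is already stated as $1-2B/p-C\{\log(ep)\}^{-2}$. So I will work on $\mc E_{\rm T}\cap\mc E_{\rm O}$ and enlarge $C$ to absorb the union bound.

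On this event, Proposition~\ref{prop:glm-tightening} supplies all assumptions and conditions of Theorem~\ref{thm:batch_b_tightening}, with $\kappa=\kappa_{\rm GLM}$ and $M$ as in \eqref{eq:glm-terminal-M}. In particular it supplies the crude localization \eqref{eq:crude-lla-localization}, which the proof of Theorem~\ref{thm:oracle_recovery} uses. There is one subtlety. The event $\mc R_B(R_0,l,\kappa,M)$ is replaced in the GLM setting by the inherited surrogate curvature \eqref{eq:glm-terminal-surrogate-rsc} together with the RLH event \eqref{eq:glm-terminal-rlh}. I will note that the proofs of Lemma~\ref{lem:intermediate.rate2} and Theorem~\ref{thm:oracle_recovery} use batchwise RSC only through \eqref{eq:inherited-surrogate-curvature}. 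Hence \eqref{eq:glm-terminal-surrogate-rsc} suffices, exactly as in the proof of Lemma~\ref{lem:glm-oracle-conditions}.

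Next come the penalty and beta-min requirements: $p_0\in(0,1/2]$, $p'_+(\gamma_1)=0$, $\min_{j\in\mc S}|\beta^*_j|\ge(\gamma_0+\gamma_1)\lambda$, and the existence of $\bar\delta\in(5/(4\kappa\gamma_0),1)$. These are assumed verbatim in \ref{GLMassm: theorem3 rates revised}, and they imply $\kappa\gamma_0>5/4$. The score half of \eqref{eq:oracle-recovery-score-radius} is exactly \eqref{eq:glm-oracle-offscore}.

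The radius half requires $\|\widehat\betab^{\rm ora}_e-\betab^*\|_2\le R_0\wedge \lambda/(5\bar\delta\kappa)$. I will take \eqref{eq:glm-oracle-rate}, which gives $\|\widehat\betab^{\rm ora}_e-\betab^*\|_2\le Cr_B^{\rm G}$. I will then use $r_B^{\rm G}=o(\lambda)$ from \eqref{eq:glm-t3-remainders}. Since $\bar\delta,\kappa$ are fixed constants, $Cr_B^{\rm G}\le\lambda/(5\bar\delta\kappa)$ for all sufficiently large $N_B$. Also $Cr_B^{\rm G}<R_0$, by \eqref{eq:glm-t2-localization} together with $r_B^{\rm G}=o(\lambda)$ and $R_0>c_{\rm aug}\gamma_0q^{1/2}\lambda$.

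With all hypotheses in place, Theorem~\ref{thm:oracle_recovery} yields $\widehat\betab^{({\rm LLA},k)}_e=\widehat\betab^{\rm ora}_e$ for $k\ge K_0+1$ and $e\in\{0,a\}$, on the stated event.

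The main obstacle is making sure the events line up. The oracle estimators must be well defined and unique, and Lemma~\ref{lem:glm-oracle-conditions} handles this via localized minimization and strict convexity on the affine sets. The constant $\kappa$ in Theorem~\ref{thm:oracle_recovery}, and in the requirement on $\bar\delta$, must be the same $\kappa_{\rm GLM}$ used in \eqref{eq:glm-terminal-surrogate-rsc}. Finally, the "sufficiently large $N_B$" thresholds from Proposition~\ref{prop:glm-tightening} and Lemma~\ref{lem:glm-oracle-conditions} must be taken jointly. I would state explicitly that the requirement \eqref{eq:glm-terminal-kappa} is imposed with the same $\kappa$ throughout.
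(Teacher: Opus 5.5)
Your proposal is correct and follows the paper's proof essentially step for step. It takes the crude localization from Proposition~\ref{prop:glm-tightening}, the radius half of \eqref{eq:oracle-recovery-score-radius} from \eqref{eq:glm-oracle-rate} with \eqref{eq:glm-t2-localization} and \eqref{eq:glm-t3-remainders}, the off-support score from \eqref{eq:glm-oracle-offscore}, and the signal conditions from \eqref{eq:glm-t3-penalty}, and it uses \eqref{eq:glm-terminal-surrogate-rsc} in place of \eqref{eq:inherited-surrogate-curvature} in Lemma~\ref{lem:intermediate.rate2}. For the probability bound, intersect the constituent events (initializer, design, score, projected score) rather than taking a union bound over $\mathcal{E}_{\rm T}$ and $\mathcal{E}_{\rm O}$ separately. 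That way the $2B/p$ term is counted once and only the $\{\log(ep)\}^{-2}$ terms need to be absorbed into $C$.
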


\begin{proof}
Proposition~\ref{prop:glm-tightening} gives
\eqref{eq:crude-lla-localization}.  Displays \eqref{eq:glm-t2-localization}, \eqref{eq:glm-t3-remainders}, and \eqref{eq:glm-oracle-rate} yield, for $e=0,a$,
\[
\|\widehat\betab_e^{\rm ora}-\betab^*\|_2
\le R_0\wedge\{\lambda/(5\bar\delta\kappa)\}.
\]
Equation \eqref{eq:glm-oracle-offscore} verifies the oracle-score condition,
and \eqref{eq:glm-t3-penalty} provides signal conditions.
Lemma~\ref{lem:intermediate.rate2} remains valid with
\eqref{eq:glm-terminal-surrogate-rsc} in place of
\eqref{eq:inherited-surrogate-curvature}.  Theorem
\mainref{thm:oracle_recovery} therefore applies on the common event.
\end{proof}

\subsubsection{Verification for linear, logistic, and Poisson regression}

Suppose the covariate conditions in
\ref{GLMassm: covariate norm conditions} and
\ref{GLMassm: RSC} hold.  Conditionally on the covariates, we consider the following three classes:
\begin{align*}
&Y_i^{(j)}\sim N((X_i^{(j)})^\T\betab^*,1),
&&b(t)=t^2/2,
&&\text{linear regression},
\\
&Y_i^{(j)}\sim\operatorname{Bernoulli}
\left(\frac{e^{(X_i^{(j)})^\T\betab^*}}
{1+e^{(X_i^{(j)})^\T\betab^*}}\right),
&&b(t)=\log(1+e^t),
&&\text{logistic regression},
\\
&Y_i^{(j)}\sim\operatorname{Poisson}
\{e^{(X_i^{(j)})^\T\betab^*}\},
&&b(t)=e^t,
&&\text{Poisson regression}.
\end{align*}
For Poisson regression, we further assume
\begin{equation}
4\max\left\{1,
\max_{j\le b}\max_{i\le n_j}e^{(X_i^{(j)})^\T\betab^*}
\right\}\le v_0,
\qquad v_0=O(1).
\label{eq:glm-poisson-sample-bound}
\end{equation}
Then Conditions \ref{GLMassm: correct specification},
\ref{GLMassm: residual Bernstein}, and
\ref{GLMassm: self concordance} hold with
\begin{center}
\begin{tabular}{lcccc}
\toprule
Model & $K$ & $M_\varepsilon$ & $v_0$ & $b''(t)$ \\
\midrule
Linear & $0$ & $2$ & $4$ & $1$ \\
Logistic & $1$ & $1$ & $1$ & $e^t/(1+e^t)^2$ \\
Poisson & $1$ & $2\sqrt{v_0}$ & $v_0$ & $e^t$ \\
\bottomrule
\end{tabular}
\end{center}
For the linear model, $\EE(|\varepsilon|^m\mid X)\le m!$ for $m\ge2$,
and hence
\[
4\EE\{e^{|\varepsilon|/2}-1-|\varepsilon|/2\mid X\}
\le4\sum_{m=2}^\infty2^{-m}=2.
\]
For the logistic model, $|\varepsilon|\le1$ and
$\EE(\varepsilon^2\mid X)=b''(X^\T\betab^*)\le1/4$.  Since
$e^z-1-z\le z^2$ for $0\le z\le1$,
\[
\EE\{e^{|\varepsilon|}-1-|\varepsilon|\mid X\}
\le\EE(\varepsilon^2\mid X)\le1/4.
\]
For the Poisson model, with
$M_\varepsilon=2\sqrt{v_0}$,
\[
\log\EE(e^{\pm\varepsilon/M_\varepsilon}\mid X)
=e^{X^\T\betab^*}
\{e^{\pm1/M_\varepsilon}-1\mp1/M_\varepsilon\}.
\]
The inequalities
$e^{\pm z}-1\mp z\le3z^2/5$ for $0\le z\le1/2$ and
$e^z-1\le80z/77$ for $0\le z\le3/80$ yield
\[
M_\varepsilon^2\EE\left[
e^{|\varepsilon|/M_\varepsilon}-1-|\varepsilon|/M_\varepsilon
\,\middle|\,X\right]
\le\frac{24}{77}v_0<\frac{v_0}{2}.
\]
To obtain the last display, use $e^{|z|}-1-|z|
\le(e^z-1-z)+(e^{-z}-1+z)$, $z\in\RR$.
By \eqref{eq:glm-poisson-sample-bound},
$M_\varepsilon^{-1}\le1/4$ and
$e^{X^\T\betab^*}M_\varepsilon^{-2}\le1/16$.  Hence both quantities
$e^{X^\T\betab^*}
\{e^{\pm1/M_\varepsilon}-1\mp1/M_\varepsilon\}$
are at most $3/80$, and
\begin{align*}
M_\varepsilon^2\EE\left[
e^{|\varepsilon|/M_\varepsilon}-1-|\varepsilon|/M_\varepsilon
\,\middle|\,X\right]
&\leq
\frac{80M_\varepsilon^2}{77}e^{X^\T\betab^*}
\left\{
e^{1/M_\varepsilon}-1-M_\varepsilon^{-1}
+e^{-1/M_\varepsilon}-1+M_\varepsilon^{-1}
\right\}\\
&\leq\frac{24}{77}v_0.
\end{align*}
The bounds $2=v_0/2$, $1/4<v_0/2$, and
$24v_0/77<v_0/2$ verify \eqref{eq:glm-residual-Bernstein} for the
linear, logistic, and Poisson constants in the table, respectively.  Moreover, the three choices of
$b$ satisfy, respectively,
\[
b'''(t)=0,
\qquad
|b'''(t)|=b''(t)|1-2e^t/(1+e^t)|\le b''(t),
\qquad
b'''(t)=b''(t)=e^t,
\]
so Condition~\ref{GLMassm: self concordance} holds with
$K=0,1,1$, respectively; see
\citet[Section S5]{jacobson2025strong}. 
In the linear
model, $M_{\rm GLM}=0$ and one may take
$\kappa\le\underline\kappa$; hence the term containing $M_{\rm GLM}$ in
\eqref{eq:glm-online-sample} equals zero.  In the logistic and Poisson
models, one may take
\$
M_{\rm GLM}
=16e^{4c_b|\mc A|^{1/2}R_0}c_bv_0B_1|\mc A|,
\qquad
0<\kappa\le
\underline\kappa e^{-8c_b|\mc A|^{1/2}R_0},
\$
with $v_0=1$ and the $v_0$ in
\eqref{eq:glm-poisson-sample-bound}, respectively.

In addition, \ref{GLMassm: terminal random design revised} and alternative (a) in
\ref{GLMassm: terminal hessian tails revised} hold with
\begin{center}
\begin{tabular}{lccc}
\toprule
Model & $v_\varepsilon$ & $c_\varepsilon$ & $K_H$ \\
\midrule
Linear & $1$ & $0$ & $K_X$ \\
Logistic & $1$ & $1/3$ & $K_X/2$ \\
Poisson & $1$ & $1/3$ & $\sqrt{v_0}K_X/2$ \\
\bottomrule
\end{tabular}
\end{center}
For linear regression, \eqref{eq:glm-score-mgf} is the Gaussian identity.
For logistic regression,
\[
\EE(|\varepsilon|^m\mid X)
\le b''(X^\T\betab^*)
\le\frac{m!}{2}b''(X^\T\betab^*)3^{-(m-2)},
\qquad m\ge2.
\]
For Poisson regression,
\[
\log\EE(e^{t\varepsilon}\mid X)
=b''(X^\T\betab^*)(e^t-1-t)
\le\frac{b''(X^\T\betab^*)t^2}{2(1-|t|/3)},
\qquad |t|<3.
\]
For the logistic model, the first inequality and
$m!\ge2\cdot3^{m-2}$ imply, for $|t|<3$,
\begin{align*}
\log\EE(e^{t\varepsilon}\mid X)
&\le \EE(e^{t\varepsilon}\mid X)-1
\\
&\le\sum_{m=2}^\infty
\frac{|t|^m}{m!}\EE(|\varepsilon|^m\mid X)
\le\frac{b''(X^\T\betab^*)t^2}{2(1-|t|/3)}.
\end{align*}
For the Poisson model, $m!\ge2\cdot3^{m-2}$ gives
\[
e^t-1-t
\le\sum_{m=2}^\infty\frac{|t|^m}{m!}
\le\frac{t^2}{2}\sum_{m=2}^\infty
\left(\frac{|t|}{3}\right)^{m-2}
=\frac{t^2}{2(1-|t|/3)}.
\]
Thus \eqref{eq:glm-score-mgf} holds with the values of
$v_\varepsilon$ and $c_\varepsilon$ in the table.  The values of $K_H$
follow from $b''=1$, $b''\le1/4$, and $b''\le v_0/4$, respectively.

For all three models, \eqref{eq:glm-terminal-scales-a} applies with these
values of $K_H$.  Equation \eqref{eq:glm-terminal-M} gives $M=0$ for the
linear model.  If $B_1,B_2,M_\varepsilon,v_0=O(1)$,
\eqref{eq:glm-C-explicit} permits a constant $C$ independent of $(N_j,p)$;
hence
\begin{equation}
\lambda_j\asymp\sqrt{\frac{\log p}{N_j}},
\qquad j=1,\ldots,B.
\label{eq:glm-canonical-lambda-rate}
\end{equation}
Recall that we have
\begin{equation}
\phi^*
\le
\frac{9C^2|\mc A|\log p}{4\kappa^2N_B}
\left\{1+\log\left(\frac{N_{B-1}}{n_1}\right)\right\}.
\label{eq:glm-canonical-phi-rate}
\end{equation}
Consequently, the first condition in
\eqref{eq:glm-t2-remainders} is implied by
\begin{equation}
\frac{M|\mc A|\log p}{\kappa^2N_B}
\left\{1+\log\left(\frac{N_{B-1}}{n_1}\right)\right\}
=o\left(\sqrt{\frac{\log p}{N_B}}\right).
\label{eq:glm-canonical-history-scaling}
\end{equation}
For linear regression, \eqref{eq:glm-canonical-history-scaling} holds
because $M=0$.  For logistic and Poisson regression, under
$K_XlR_0\{\log(2pN_B)+2\log\log(ep)\}^{1/2}=O(1)$,
\eqref{eq:glm-terminal-scales-a}--\eqref{eq:glm-terminal-M} give
\begin{align}
\eta_\infty
&\le CK_H^2\left\{
\sqrt{\frac{k_l\log(ep/k_l)+2\log\log(ep)}{N_B}}
+\frac{k_l\log(ep/k_l)+2\log\log(ep)}{N_B}
\right\},
\label{eq:glm-canonical-eta-rate}
\\
M
&\le CK_XK_H^2
\{k_l\log(ep/k_l)+\log(2N_B)+2\log\log(ep)\}
\{\log(2pN_B)+2\log\log(ep)\}^{1/2}.
\label{eq:glm-canonical-M-rate}
\end{align}
Substitute \eqref{eq:glm-canonical-eta-rate} for $\eta_\infty$ in
$4\eta_\infty\le J_l/2$ and the second growth condition of
\eqref{eq:glm-t2-remainders}, and substitute
\eqref{eq:glm-canonical-M-rate} for $M$ in
\eqref{eq:glm-t2-remainders}.
Under the null, $d_B=0$, and
\eqref{eq:glm-oracle-radius-rate} and
\eqref{eq:glm-canonical-phi-rate} give
\begin{equation}
\begin{split}
r_B^{\rm G}\le C\bigg[&
\sqrt{\frac{q+2\log\log(ep)}{N_B}}
+\frac{\{q+\log(2N_B)+2\log\log(ep)\}^{1/2}
\{q+2\log\log(ep)\}}{N_B}
\\
&+q^{1/2}\frac{M|\mc A|\log p}{\kappa^2N_B}
\left\{1+\log\left(\frac{N_{B-1}}{n_1}\right)\right\}
\bigg].
\end{split}
\label{eq:glm-canonical-oracle-rate}
\end{equation}
For $d_B>0$, the right-hand side of
\eqref{eq:glm-oracle-radius-rate} contains in addition
$q^{1/2}\{d_B+m^{1/2}\eta_\infty d_B+Md_B^2\}$.
For \ref{GLMassm: theorem3 rates revised}, compare
\eqref{eq:glm-canonical-oracle-rate} under the null, or
\eqref{eq:glm-oracle-radius-rate} when $d_B>0$, with
\eqref{eq:glm-canonical-lambda-rate} to evaluate
$r_B^{\rm G}=o(\lambda_B)$; substitute
\eqref{eq:glm-canonical-phi-rate}--\eqref{eq:glm-canonical-M-rate} into
\eqref{eq:glm-t3-remainders} to evaluate its remaining little-$o$ claims.

Taking the multiplicative constant in
\eqref{eq:glm-canonical-lambda-rate} sufficiently large therefore gives
\begin{equation*}
\lambda_B\ge C_\lambda
\sqrt{\frac{\log(2p)+2\log\log(ep)}{N_B}},
\end{equation*}
as required in \ref{GLMassm: theorem2 rates revised}.  By
\eqref{eq:glm-canonical-lambda-rate}, the minimal signal lower bound in
\eqref{eq:glm-t3-penalty} has order $\sqrt{\log(p)/N_B}$.

\subsubsection{Online Bahadur representation}
\label{sec:glm_bahadur_verification}

Suppose that Assumption~\mainref{assm:local_alternative}, the conditions of
Proposition~\ref{prop:glm-initializer} with $b=B$, and
\ref{GLMassm: terminal random design revised}--\ref{GLMassm: terminal hessian tails revised}
hold.  Let $k_l$, $\eta_\infty$, $M$, and $r_B^{\rm G}$ be given by
\eqref{eq:glm-terminal-dimensions},
\eqref{eq:glm-terminal-scales-a} or \eqref{eq:glm-terminal-scales-b},
\eqref{eq:glm-terminal-M}, and \eqref{eq:glm-oracle-radius-rate},
respectively.  Suppose that
$l\ge4|\mc A|^{1/2}\vee q^{1/2}$,
$4\eta_\infty\le J_l/2$, and
\eqref{eq:glm-terminal-kappa} holds.  
We further require the following condition.

\begin{enumerate}[label=(GLM-\arabic*),start=10,leftmargin=3\parindent]
\item\label{GLMassm: bahadur rates}
As $N_B\to\infty$,
\begin{gather}
\frac{B}{p}+\{\log(ep)\}^{-2}=o(1),
\quad
\|\hb_B\|_2=O\!\left(\sqrt{\frac r{N_B}}\right),
\quad
\frac{r_B^{\rm G}}{R_0}=o(1),
\quad
\frac{r^{7/2}}{N_B}=o(1),
\label{eq:glm-bahadur-probability}
\\
\begin{split}
&(1+M)\sqrt{\frac q{N_B}}
\left\{
\frac{|\mc A|}{\kappa^2}\sum_{j=1}^{B-1}n_j\lambda_j^2
+\frac{|\mc A|^{1/2}}{\kappa}r_B^{\rm G}
  \sum_{j=1}^{B-1}n_j\lambda_j
+n_B(r_B^{\rm G})^2
\right\}
\\
&\qquad\qquad
+(q+\sqrt{N_B})\eta_\infty r_B^{\rm G}=o(1).
\end{split}
\label{eq:glm-bahadur-rate}
\end{gather}
\end{enumerate}

\begin{proposition}\label{prop:glm-bahadur}
Under Assumption~\mainref{assm:local_alternative}, the conditions of Proposition
\ref{prop:glm-initializer} with $b=B$,
\ref{GLMassm: terminal random design revised}--
\ref{GLMassm: terminal hessian tails revised},
$l\ge4|\mc A|^{1/2}\vee q^{1/2}$,
$4\eta_\infty\le J_l/2$, \eqref{eq:glm-terminal-kappa}, and
\ref{GLMassm: bahadur rates},
\begin{align}
\widehat\betab^{\rm ora}_{0,\overline{\mc M}}
-\betab^*_{\overline{\mc M}}
={}&
\Jb_0^{-1/2}(\Ib-\Pb_0)\Jb_0^{-1/2}
\frac1{N_B}\sum_{j=1}^B\sum_{i=1}^{n_j}
\varepsilon_i^{(j)}X_{i,\overline{\mc M}}^{(j)}
-\Jb_0^{-1}\Cb_0^\T\Psi_0^{-1}\hb_B
+o_p(N_B^{-1/2}),
\label{eq:glm-reduced-bahadur}
\\
\widehat\betab^{\rm ora}_{a,\overline{\mc M}}
-\betab^*_{\overline{\mc M}}
={}&\Jb_0^{-1}\frac1{N_B}
\sum_{j=1}^B\sum_{i=1}^{n_j}
\varepsilon_i^{(j)}X_{i,\overline{\mc M}}^{(j)}
+o_p(N_B^{-1/2}).
\label{eq:glm-full-bahadur}
\end{align}
Moreover, Assumptions~\mainref{assm:score} and \mainref{assm:lyapunov} hold with
$K_l=J_l$ and $K_u=J_u$.
\end{proposition}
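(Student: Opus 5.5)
The plan is to obtain both representations from Lemma~\ref{lem:oracle_bahadur}, after checking its hypotheses on a high-probability event, and then to show that the remainders $\mc R_0,\mc R_a$ are $o_p(N_B^{-1/2})$. The sign convention needs care first. By \eqref{eq:glm-score}, $\frac1{N_B}\sum_b n_b\nabla_{\overline{\mc M}}\widehat{\mc L}^{(b)}(\betab^*)=-\frac1{N_B}\sum_{j,i}\varepsilon_i^{(j)}X^{(j)}_{i,\overline{\mc M}}$. Substituting this into \eqref{eq:reduced-oracle-bahadur} turns $\Jb_0^{-1/2}(\Pb_0-\Ib)\Jb_0^{-1/2}$ into $\Jb_0^{-1/2}(\Ib-\Pb_0)\Jb_0^{-1/2}$, and substituting into \eqref{eq:full-oracle-bahadur} turns $-\Jb_0^{-1}$ into $+\Jb_0^{-1}$. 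These are exactly \eqref{eq:glm-reduced-bahadur}--\eqref{eq:glm-full-bahadur}.

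The working event is the intersection of three pieces: the initializer event of Proposition~\ref{prop:glm-initializer}, which has probability at least $1-2B/p$; the design event of Lemma~\ref{lem:glm-terminal-design}; and the score event of Lemma~\ref{lem:glm-terminal-score}. By \eqref{eq:glm-bahadur-probability} this intersection has probability tending to one. On it, the following hold.
\begin{itemize}
\item Lemma~\ref{lem:glm-terminal-curvature} gives all the RLH events with constant $M$ and the surrogate RSC \eqref{eq:glm-terminal-surrogate-rsc}.
\item Since $k_l\ge2$, \eqref{eq:glm-sparse-hessian} implies $\mc H(\eta_\infty)$.
\item The historical displacements lie in $\BB(R_0)\cap\CC(4|\mc A|^{1/2})\subset\CC(l)$.
\item The argument of Lemma~\ref{lem:glm-oracle-conditions} gives \eqref{eq:glm-oracle-rate}, namely $\|\widehat\betab^{\rm ora}_e-\betab^*\|_2\le Cr_B^{\rm G}$. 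This is below $R_0$ because $r_B^{\rm G}/R_0=o(1)$. Support on $\overline{\mc M}$ together with $q^{1/2}\le l$ places these displacements in the cone as well.
\end{itemize}
That argument used only curvature, the RLH bounds, the score bounds and $l\ge q^{1/2}$, and does not need \ref{GLMassm: theorem3 rates revised}. I need to double-check that claim. With it, the hypotheses of Lemma~\ref{lem:oracle_bahadur} hold, and $\lambda_{\min}(\Jb_0)\ge J_l$ follows from Assumption~\ref{assm:hessian}.

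Next I bound the remainders \eqref{eq:reduced-bahadur-rem}--\eqref{eq:full-bahadur-rem}. Insert $\|\widehat\betab^{(b)}-\betab^*\|_2\le\frac{3}{2\kappa}|\mc A|^{1/2}\lambda_b$ and $\|\widehat\betab^{\rm ora}_e-\betab^*\|_2\le Cr_B^{\rm G}$. Then $\sqrt{N_B}\|\mc R_e\|_2$ is bounded by a constant times
\[
M\sqrt{\tfrac{q}{N_B}}\Big\{\tfrac{|\mc A|}{\kappa^2}\sum_{j<B}n_j\lambda_j^2+\tfrac{|\mc A|^{1/2}}{\kappa}r_B^{\rm G}\sum_{j<B}n_j\lambda_j+n_B(r_B^{\rm G})^2\Big\}+\sqrt{N_B}\,q\,\eta_\infty r_B^{\rm G}/\sqrt{N_B}\cdot\sqrt{N_B}.
\]
The last term needs care. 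It reads $(m+s)\eta_\infty\|\Delta_e\|_2$ multiplied by $\sqrt{N_B}$, so it equals $q\sqrt{N_B}\,\eta_\infty r_B^{\rm G}$, which is not obviously covered by the term $(q+\sqrt{N_B})\eta_\infty r_B^{\rm G}$ in \eqref{eq:glm-bahadur-rate}.

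This is the step I expect to be the main obstacle. I would not use the crude $\ell_\infty$-to-$\ell_2$ bound from the proof of Lemma~\ref{lem:oracle_bahadur}. Instead I would bound the Hessian-concentration term in operator norm through \eqref{eq:glm-sparse-hessian}, since $q\le k_l$. That gives $\|\{\overline{\Hb}_B(\betab^*)-\Jb\}_{\overline{\mc M}\overline{\mc M}}\Delta_e\|_2\le\eta_\infty\|\Delta_e\|_2$, so this term contributes only $\sqrt{N_B}\eta_\infty r_B^{\rm G}$. The RLH terms keep the $q^{1/2}$ factor and are controlled by the first part of \eqref{eq:glm-bahadur-rate}. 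The factor $(1+M)$ there covers $M=0$ in the linear case. Together these give $\sqrt{N_B}\|\mc R_e\|_2=o_p(1)$.

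The final step is to verify Assumptions~\ref{assm:score} and \ref{assm:lyapunov}. Under \ref{GLMassm: terminal random design revised} with a correctly specified canonical GLM, $\Kb_0=\EE\{\mathrm{Var}(\varepsilon\mid X)X_{\overline{\mc M}}X_{\overline{\mc M}}^\T\}$. Here $\mathrm{Var}(\varepsilon\mid X)=b''(X^\T\betab^*)$; this identity holds for the exponential family, and for the linear model both sides equal $1$. Hence $\Kb_0=\Jb_0$, and Assumption~\ref{assm:hessian} gives $K_l=J_l$, $K_u=J_u$. For the Lyapunov condition, Lemma~\ref{lem:V0-conditioning} gives $\|\Omega_0\|_2\le J_l^{-1/2}$. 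The score $\Omega_0\nabla_{\overline{\mc M}}\ell$ is an $r$-dimensional vector with identity covariance, whose linear forms are $\varepsilon\cdot$(sub-Gaussian) and hence sub-exponential by \eqref{eq:glm-score-mgf} and \eqref{eq:glm-weighted-subgaussian}. Its third moment is therefore $O(r^{3/2})$, and the condition follows from $r^{7/2}/N_B=o(1)$.
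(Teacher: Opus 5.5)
Your proposal is correct and follows the paper's own proof of this proposition. It uses the same high-probability intersection of the initializer, design and score events. It obtains the same oracle localization $\|\widehat\betab^{\rm ora}_e-\betab^*\|_2\le Cr_B^{\rm G}=o(R_0)$ through Lemma~\ref{lem:oracle_rate} with the terminal surrogate curvature, and you are right that \ref{GLMassm: theorem3 rates revised} is not needed for that step. Most importantly, it uses exactly the paper's fix for the Hessian-concentration term: the operator-norm bound \eqref{eq:glm-bahadur-hessian} deduced from \eqref{eq:glm-sparse-hessian} with $q\le k_l$. That bound is what makes $(q+\sqrt{N_B})\eta_\infty r_B^{\rm G}=o(1)$ sufficient.

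The one place your write-up is narrower than the statement is the Lyapunov check. You invoke \eqref{eq:glm-weighted-subgaussian}, but under the alternative \eqref{eq:glm-log-weight-tail} the weighted design is not sub-Gaussian. In that case, bound $\EE|\varepsilon X^\T u|^3$ by Cauchy--Schwarz, using $\EE\{b''(X^\T\betab^*)^3\}<\infty$ and the sub-Gaussian moments of $X^\T u$, as the paper does. Only bounded third moments of the linear forms are needed, so you can drop the sub-exponential claim, which is not quite accurate for a product of a conditionally sub-exponential $\varepsilon$ and a sub-Gaussian factor.
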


\begin{proof}
Recall that the conclusions of Proposition~\ref{prop:glm-initializer} and
Lemmas~\ref{lem:glm-terminal-design}--\ref{lem:glm-terminal-score},
hold simultaneously with probability at least
$1-2B/p-C\{\log(ep)\}^{-2}=1-o(1)$.
On this event, Proposition~\ref{prop:glm-initializer} gives
\begin{equation}
\|\widehat\betab^{(j)}-\betab^*\|_2
\le \frac{3}{2\kappa}|\mc A|^{1/2}\lambda_j,
\qquad
\widehat\betab^{(j)}-\betab^*\in\BB(R_0)\cap\CC(l),
\qquad j=1,\ldots,B-1.
\label{eq:glm-bahadur-initializer}
\end{equation}
For the quantities $r_0^{\rm ora}$ and $r_a^{\rm ora}$ in
Lemma~\ref{lem:oracle_rate}, equations
\eqref{eq:glm-terminal-score-block}, \eqref{eq:glm-terminal-phi}, and
\eqref{eq:glm-oracle-radius-rate} give
$r_0^{\rm ora}\vee r_a^{\rm ora}\le Cr_B^{\rm G}$.
Consequently, \eqref{eq:glm-bahadur-probability} gives
$2(r_0^{\rm ora}\vee r_a^{\rm ora})\le R_0$ for all sufficiently large
$N_B$.  Using \eqref{eq:glm-terminal-surrogate-rsc} in place of
\eqref{eq:inherited-surrogate-curvature} in
Lemma~\ref{lem:oracle_rate}, together with
\eqref{eq:glm-terminal-score-block}, \eqref{eq:glm-gstar-bound},
\eqref{eq:glm-gdagger-bound}, and \eqref{eq:glm-terminal-phi}, gives
\begin{equation}
\max_{e\in\{0,a\}}
\|\widehat\betab_e^{\rm ora}-\betab^*\|_2
\le Cr_B^{\rm G}=o(R_0).
\label{eq:glm-bahadur-oracle-localization}
\end{equation}
Since
$(\widehat\betab_e^{\rm ora}-\betab^*)_{\overline{\mc M}^c}=0$ and
$|\overline{\mc M}|=q\le l^2$, we get $\widehat\betab_e^{\rm ora}-\betab^*\in\CC(l)$, for 
$e=0,a$.
Equation \eqref{eq:glm-terminal-rlh} validates
$\bigcap_{j=1}^B\mc E_{\rm rlh}^{(j)}(R_0,l,M)$.
Since $k_l\ge q$, \eqref{eq:glm-sparse-hessian} yields
\begin{equation}
\left\|
\frac1{N_B}\sum_{j=1}^Bn_j
\nabla^2_{\overline{\mc M}\,\overline{\mc M}}
\widehat{\mc L}^{(j)}(\betab^*)-\Jb_0
\right\|_2
\le\eta_\infty.
\label{eq:glm-bahadur-hessian}
\end{equation}
Equations \eqref{eq:glm-sparse-hessian} and
\eqref{eq:glm-sparse-to-cone} also validate $\mc H(\eta_\infty)$.

Lemma~\ref{lem:decomposition},
\eqref{eq:glm-terminal-rlh}, \eqref{eq:glm-bahadur-hessian},
\eqref{eq:glm-score}, and the KKT arguments in
Lemma~\ref{lem:oracle_bahadur} give
\begin{equation}
\label{eq:glm-bahadur-remainder}
\begin{split}
&\max\Bigg\{
\Bigg\|
\widehat\betab^{\rm ora}_{0,\overline{\mc M}}
-\betab^*_{\overline{\mc M}}
-\Jb_0^{-1/2}(\Ib-\Pb_0)\Jb_0^{-1/2}
\frac1{N_B}\sum_{j=1}^B\sum_{i=1}^{n_j}
\varepsilon_i^{(j)}X_{i,\overline{\mc M}}^{(j)}
+\Jb_0^{-1}\Cb_0^\T\Psi_0^{-1}\hb_B
\Bigg\|_2,
\\[-1mm]
&\hspace{23mm}
\Bigg\|
\widehat\betab^{\rm ora}_{a,\overline{\mc M}}
-\betab^*_{\overline{\mc M}}
-\Jb_0^{-1}\frac1{N_B}
\sum_{j=1}^B\sum_{i=1}^{n_j}
\varepsilon_i^{(j)}X_{i,\overline{\mc M}}^{(j)}
\Bigg\|_2
\Bigg\}
\\
&\quad\le C
\frac{M\sqrt q}{N_B}
\bigg[
\sum_{j=1}^{B-1}n_j
\bigg\{
\|\widehat\betab^{(j)}-\betab^*\|_2^2
+\|\widehat\betab^{(j)}-\betab^*\|_2
\max_{e\in\{0,a\}}
\|\widehat\betab_e^{\rm ora}-\betab^*\|_2
\bigg\}
\\[-1mm]
&\hspace{38mm}
+n_B\max_{e\in\{0,a\}}
\|\widehat\betab_e^{\rm ora}-\betab^*\|_2^2
\bigg]
+C\eta_\infty\max_{e\in\{0,a\}}
\|\widehat\betab_e^{\rm ora}-\betab^*\|_2.
\end{split}
\end{equation}
After multiplication by $\sqrt{N_B}$, the right-hand side of
\eqref{eq:glm-bahadur-remainder} is, by
\eqref{eq:glm-bahadur-initializer} and
\eqref{eq:glm-bahadur-oracle-localization}, at most
\begin{align*}
&C
M\sqrt{\frac q{N_B}}
\left\{
\frac{|\mc A|}{\kappa^2}\sum_{j=1}^{B-1}n_j\lambda_j^2
+\frac{|\mc A|^{1/2}}{\kappa}r_B^{\rm G}
 \sum_{j=1}^{B-1}n_j\lambda_j
+n_B(r_B^{\rm G})^2
\right\}
+C\sqrt{N_B}\eta_\infty r_B^{\rm G}
=o(1)
\end{align*}
by \eqref{eq:glm-bahadur-rate}.  This proves
\eqref{eq:glm-reduced-bahadur}--\eqref{eq:glm-full-bahadur}.
In addition, \eqref{eq:glm-bahadur-rate} implies
\begin{equation*}
\sqrt{\frac q{N_B}}
\left\{
\frac{|\mc A|}{\kappa^2}\sum_{j=1}^{B-1}n_j\lambda_j^2
+\frac{|\mc A|^{1/2}}{\kappa}r_B^{\rm G}
 \sum_{j=1}^{B-1}n_j\lambda_j
+n_B(r_B^{\rm G})^2
\right\}
+q\eta_\infty r_B^{\rm G}=o(1),
\end{equation*}
which verifies \maineqref{eq:inference-remainder-condition} by
\eqref{eq:glm-bahadur-initializer} and
\eqref{eq:glm-bahadur-oracle-localization}.

By \eqref{eq:glm-loss} and \eqref{eq:glm-score}, we have
\begin{equation}
\nabla\ell(\betab^*;Y,X)=-\varepsilon X,
\qquad
\EE(\varepsilon^2\mid X)=b''(X^\T\betab^*),
\qquad
\Kb_0=\Jb_0.
\label{eq:glm-bartlett}
\end{equation}
Assumption~\mainref{assm:hessian} and eigenvalue calculation imply
$J_l\le\lambda_{\min}(\Kb_0)
\le\lambda_{\max}(\Kb_0)\le J_u$,
which verifies Assumption~\mainref{assm:score} with $K_l=J_l$ and $K_u=J_u$.

Equation \eqref{eq:glm-score-mgf} gives, for $t>0$,
\begin{equation*}
\PP\!\left(
|\varepsilon|>
\sqrt{2v_\varepsilon b''(X^\T\betab^*)t}+c_\varepsilon t
\,\middle|\,X
\right)\le2e^{-t},
\end{equation*}
and hence
\begin{equation}
\EE(|\varepsilon|^3\mid X)
\le C\left[
\{v_\varepsilon b''(X^\T\betab^*)\}^{3/2}
+c_\varepsilon^3
\right].
\label{eq:glm-third-error-moment}
\end{equation}
Under \eqref{eq:glm-weighted-subgaussian},
\eqref{eq:glm-third-error-moment} and
\eqref{eq:glm-subgaussian-design} imply
\begin{equation*}
\sup_{\|u\|_2=1}\EE|\varepsilon X^\T u|^3
\le C\sup_{\|u\|_2=1}
\left[
\EE|\{b''(X^\T\betab^*)\}^{1/2}X^\T u|^3
+\EE|X^\T u|^3
\right]<\infty.
\end{equation*}
Under \eqref{eq:glm-log-weight-tail},
$\EE\{b''(X^\T\betab^*)^3\}<\infty$, and
\eqref{eq:glm-third-error-moment} and
\eqref{eq:glm-subgaussian-design} yield
\begin{equation*}
\sup_{\|u\|_2=1}\EE|\varepsilon X^\T u|^3
\le C\sup_{\|u\|_2=1}
\left[
\{\EE b''(X^\T\betab^*)^3\}^{1/2}
\{\EE|X^\T u|^6\}^{1/2}
+\EE|X^\T u|^3
\right]<\infty.
\end{equation*}
By Lemma~\ref{lem:V0-conditioning}, every row of $\Omega_0$ has
Euclidean norm at most $K_l^{-1/2}$.  Therefore, 
\begin{align*}
\EE\left\|
\Omega_0\nabla_{\overline{\mc M}}
\ell(\betab^*;Y,X)
\right\|_2^3
&\le r^{1/2}\sum_{j=1}^r
\EE\left|
(\Omega_0)_{j,\cdot}
\nabla_{\overline{\mc M}}\ell(\betab^*;Y,X)
\right|^3
\le Cr^{3/2},
\\
\frac{r^{1/4}}{\sqrt{N_B}}
\EE\left\|
\Omega_0\nabla_{\overline{\mc M}}
\ell(\betab^*;Y,X)
\right\|_2^3
&\le C\sqrt{\frac{r^{7/2}}{N_B}}=o(1).
\end{align*}
By $r^{7/2}/N_B=o(1)$ in \eqref{eq:glm-bahadur-probability}, the second
line is $o(1)$, which verifies Assumption~\mainref{assm:lyapunov}.
\end{proof}

\subsubsection{Chi-square approximation of test statistics}
\label{sec:glm_test_statistic_verification}

Suppose that the conditions of Propositions~\ref{prop:glm-strong-oracle}
and \ref{prop:glm-bahadur} hold and that
\(\lambda_{\max}(\Cb\Cb^\T)=O(1)\).
By \eqref{eq:glm-full-bahadur},
\[
\sqrt{N_B}\,
\Cb_0(\widehat\betab^{\rm ora}_{a,\overline{\mc M}}
-\betab^*_{\overline{\mc M}})
=
\Cb_0\Jb_0^{-1}
\frac1{\sqrt{N_B}}
\sum_{j=1}^B\sum_{i=1}^{n_j}
\varepsilon_i^{(j)}X_{i,\overline{\mc M}}^{(j)}+o_p(1),
\]
and \eqref{eq:glm-bartlett} gives $\Cb_0\Jb_0^{-1}\Kb_0\Jb_0^{-1}\Cb_0^\T=\Vb_0$.
Equations \eqref{eq:glm-full-bahadur} and \eqref{eq:glm-bartlett} show
that the covariance part in the GLM specialization of
\maineqref{def:oracle.wald} is \(\Vb_0\).  
We assume the following condition.

\begin{enumerate}[label=(GLM-\arabic*),start=11,leftmargin=3\parindent]
\item\label{GLMassm: test statistic rates}
As \(N_B\to\infty\),
\begin{gather}
\max_{1\le j\le B}
\sup_{\substack{\|u\|_2=1\\\|u\|_0\le k_l}}
u^\T\nabla^2\widehat{\mc L}^{(j)}(\betab^*)u
=O_p(1),
\label{eq:glm-batch-upper-curvature}
\\
\sqrt r\left[
\frac q{\sqrt{N_B}}
+
\left\{
\frac{T_B|\mc A|}{\kappa^2N_B}
\sum_{j=1}^Bn_j\lambda_j^2
\right\}^{1/2}
\right]=o(1).
\label{eq:glm-test-statistic-rate}
\end{gather}
\end{enumerate}

Under \eqref{eq:glm-weighted-subgaussian}, Bernstein's inequality on a
\(1/4\)-net of the unit sphere for each support of size at most \(k_l\),
followed by a union bound over the at most
\(9^{k_l}(ep/k_l)^{k_l}B\) resulting events, yields
\begin{align*}
&\PP\left[
\max_{1\le j\le B}
\sup_{\substack{\|u\|_2=1\\\|u\|_0\le k_l}}
\left|u^\T
\{\nabla^2\widehat{\mc L}^{(j)}(\betab^*)-\Jb\}u\right|
>CK_H^2
\right]
\le C\{\log(ep)\}^{-2},
\end{align*}
provided
\$
k_l\log(ep/k_l)+\log(2B)+2\log\log(ep)
\lesssim \min_{1\le j\le B}n_j,
\$
which is a standard growth condition in high dimensions.
Since \(\lambda_{\max}(\Jb)\le J_u\), this proves
\eqref{eq:glm-batch-upper-curvature}.  Under
\eqref{eq:glm-log-weight-tail}, \eqref{eq:glm-batch-upper-curvature} is
imposed in \ref{GLMassm: test statistic rates}.

\begin{proposition}\label{prop:glm-test-statistic}
Under Propositions~\ref{prop:glm-strong-oracle} and
\ref{prop:glm-bahadur}, $\lambda_{\max}(\Cb\Cb^\T)=O(1)$, and
\ref{GLMassm: test statistic rates}, condition
\maineqref{eq:feasible-plugin} holds.  Hence, for every \(k\ge K_0+1\) and
\(T=T_W^{(k)}\) or \(T_S^{(k)}\),
\[
\sup_{z\ge0}
\left|
\PP(T\le z)-\PP\{\chi^2(r,e_B)\le z\}
\right|\longrightarrow0.
\]
The conclusions of Corollary~\mainref{cor:test} hold for both statistics.
\end{proposition}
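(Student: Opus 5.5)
The proposition reduces to verifying the plug-in condition \maineqref{eq:feasible-plugin}. Once that holds, Theorem~\mainref{thm:test} gives the chi-square approximation, and Corollary~\mainref{cor:test} follows immediately. The remaining hypotheses of Theorem~\mainref{thm:test} are supplied by earlier results. The conditions of Theorem~\mainref{thm:oracle_recovery} hold with probability tending to one by Proposition~\ref{prop:glm-strong-oracle}. The conditions of Theorem~\mainref{thm:oracle.test}, namely \maineqref{eq:inference-remainder-condition}, Assumptions~\mainref{assm:score} and \mainref{assm:lyapunov}, and the localization and RLH events, come from Proposition~\ref{prop:glm-bahadur} and its proof. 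Finally, $\lambda_{\max}(\Cb\Cb^\T)=O(1)$ is assumed. It therefore remains to bound $\tau_J$ and $\tau_K$ and show that $\sqrt r(\tau_J+\tau_K)=o_p(1)$.

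\emph{Hessian term.} By \maineqref{def:q.function}, the $\overline{\mc M}\,\overline{\mc M}$ block of $\nabla^2\widehat{\mc Q}_B(\widehat\betab^{\rm ora}_e)-\Jb_0$ decomposes into three pieces, using the identity already written in the proof of Lemma~\ref{lem:glm-oracle-conditions}:
\begin{itemize}
\item the historical differences $N_B^{-1}\sum_{j<B}n_j\{\nabla^2\widehat{\mc L}^{(j)}(\widehat\betab^{(j)})-\nabla^2\widehat{\mc L}^{(j)}(\betab^*)\}$;
\item the current-batch difference at $\widehat\betab^{\rm ora}_e$;
\item the pooled concentration term, which is bounded by $\eta_\infty$ via \eqref{eq:glm-bahadur-hessian}.
\end{itemize}
For the first two pieces I avoid the $\ell_\infty$ RLH bound, which would cost a factor $q$. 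Instead I use self-concordance \eqref{eq:glm-curvature-comparison} directly. We have $|b''(x^\T\betab)-b''(x^\T\betab^*)|\le K e^{K|x^\T v|}b''(x^\T\betab^*)|x^\T v|$ with $v$ the displacement. Combining \eqref{eq:glm-weighted-row-envelope} with $\|x\|_{(k_l)}$ and $v\in\CC(l)$ (via \eqref{eq:glm-sparse-to-cone}) bounds $|x^\T v|\le C T_B^{1/2}\|v\|_2/\sqrt{b''}$, up to the bounded exponential factor. The operator norm of each weighted difference is then at most $C\,T_B^{1/2}\|v\|_2$ times $\sup_u u^\T\nabla^2\widehat{\mc L}^{(j)}(\betab^*)u$ over $k_l$-sparse $u$, after a Cauchy--Schwarz split $b''|x^\T v|(x^\T u)^2\le \{b''(x^\T u)^2\}^{1/2}\cdots$. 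That supremum is $O_p(1)$ by \eqref{eq:glm-batch-upper-curvature}.

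Summing over batches, Cauchy--Schwarz and \eqref{eq:glm-bahadur-initializer} bound the historical piece by $C\{T_B N_B^{-1}\sum_{j<B} n_j\|\widehat\betab^{(j)}-\betab^*\|_2^2\}^{1/2}\le C\{T_B|\mc A|\kappa^{-2}N_B^{-1}\sum_j n_j\lambda_j^2\}^{1/2}$. The current-batch piece is $O(T_B^{1/2}r_B^{\rm G})$, which is of no larger order. Multiplied by $\sqrt r$, both are $o(1)$ by \eqref{eq:glm-test-statistic-rate}. The remaining term $\sqrt r\,\eta_\infty$ is controlled because $\eta_\infty\lesssim\sqrt{h_l/N_B}\lesssim q/\sqrt{N_B}$ in the relevant regime. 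If that comparison is too crude, I would bound the fixed $q$-block concentration directly by $C\sqrt{q/N_B}$, using a net on $\RR^q$.

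\emph{Score-covariance term.} For the score covariance, write $\widehat{\Kb}^{(B)}=N_B^{-1}\sum_{j,i}(Y_i-b'(x_i^\T\widehat\betab^{(j)}))^2x_ix_i^\T$ restricted to $\overline{\mc M}$. Expand $Y-b'(x^\T\widehat\betab^{(j)})=\varepsilon-\{b'(x^\T\widehat\betab^{(j)})-b'(x^\T\betab^*)\}$. The oracle part $N_B^{-1}\sum\varepsilon_i^2x_{i,\overline{\mc M}}x_{i,\overline{\mc M}}^\T-\Kb_0$ is a mean-zero $q\times q$ sample covariance with sub-exponential-times-sub-Gaussian entries, and $\Kb_0=\Jb_0$ by \eqref{eq:glm-bartlett}. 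A net argument combined with the row envelope \eqref{eq:glm-model-row-envelope} gives $O_p(q/\sqrt{N_B})$. The mean-value term in $b'$ is $\le e^{K|\cdot|}b''|x^\T\widehat\delta_j|$, and it is handled exactly as in the Hessian step. The cross term with $\varepsilon$ is bounded by Cauchy--Schwarz against the oracle part and the squared term. This yields the same $\{T_B|\mc A|\kappa^{-2}N_B^{-1}\sum_j n_j\lambda_j^2\}^{1/2}$ rate. By \eqref{eq:glm-test-statistic-rate}, $\sqrt r\,\tau_K=o_p(1)$.

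\emph{Main obstacle.} I expect the main obstacle to be the $\widehat{\Kb}$ term. The residual $\varepsilon^2$ is only sub-exponential (Poisson and logistic) and multiplies $x x^\T$ on a $q$-dimensional block, so the operator-norm concentration needs a truncation step. I would first truncate $|\varepsilon|$ at $C\log N_B$ using \eqref{eq:glm-score-mgf} and then apply matrix Bernstein. The extra logarithmic factors should be absorbed by \eqref{eq:glm-test-statistic-rate} together with the growth conditions already present in \ref{GLMassm: bahadur rates}.
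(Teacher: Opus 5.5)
Your reduction (verify \eqref{eq:feasible-plugin}, then invoke Theorem~\ref{thm:test} and Corollary~\ref{cor:test}) matches the paper, and so does your treatment of the score covariance. The paper also splits $\widehat{\Kb}^{(B)}_{\overline{\mc M}\,\overline{\mc M}}$ into $\widetilde{\Kb}^{(B)}=N_B^{-1}\sum\varepsilon_i^2x_{i,\overline{\mc M}}x_{i,\overline{\mc M}}^\T$ and a fitted-score matrix $\mathbf D_B$. It bounds $(d_i^{(j)})^2\|x_{i,\overline{\mc M}}\|_2^2\le CT_B\,b''(x_i^\T\betab^*)(x_i^\T\Delta_j)^2$ via \eqref{eq:glm-weighted-row-envelope}, and closes with \eqref{eq:glm-batch-upper-curvature} and Cauchy--Schwarz.

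The ``main obstacle'' you anticipate is not one. The paper gets $\|\widetilde{\Kb}^{(B)}-\Kb_0\|_2=O_p(q/\sqrt{N_B})$ directly from $\EE\|\widetilde{\Kb}^{(B)}-\Kb_0\|_{\rm F}^2\le Cq^2/N_B$ and Chebyshev. This uses only fourth moments of $\varepsilon$ (integrated from \eqref{eq:glm-score-mgf}) and of the sub-Gaussian design. No truncation or matrix Bernstein is needed, and $q/\sqrt{N_B}$ is exactly the first term of \eqref{eq:glm-test-statistic-rate}.

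The Hessian step, however, does not go through as written.
\begin{itemize}
\item After $|x^\T v|\le 2\|x\|_{(k_l)}\|v\|_2\le 2T_B^{1/2}\|v\|_2/\{b''(x^\T\betab^*)\}^{1/2}$, the difference $b''(x^\T\betab^*+x^\T v)-b''(x^\T\betab^*)$ carries only one factor of $b''$. You are therefore left with $n_j^{-1}\sum_i\{b''(x_i^\T\betab^*)\}^{1/2}(x_i^\T u)^2$.
\item For logistic and Poisson regression $b''$ is not bounded below, so this is not dominated by $u^\T\nabla^2\widehat{\mc L}^{(j)}(\betab^*)u$. Cauchy--Schwarz only gives the geometric mean of that weighted form and the unweighted Gram form $n_j^{-1}\sum_i(x_i^\T u)^2$, which no hypothesis controls.
\item The same trick does work for $\mathbf D_B$, precisely because $(d_i^{(j)})^2$ carries $(b'')^2$.
\item Your claims that $\sqrt r$ times the current-batch piece and $\sqrt r\,\eta_\infty$ are $o(1)$ ``by \eqref{eq:glm-test-statistic-rate}'' are also unsupported. $r_B^{\rm G}$ contains $q^{1/2}d_B$ and $M$-dependent terms that \eqref{eq:glm-test-statistic-rate} does not involve. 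Also, $\sqrt{h_l/N_B}\lesssim q/\sqrt{N_B}$ fails when $\log(ep/k_l)\gg q$.
\end{itemize}

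The paper's route avoids all of this.
\begin{itemize}
\item Apply the RLH event \eqref{eq:glm-terminal-rlh} to unit $w$ supported on $\overline{\mc M}$. This gives $\|\{\nabla^2\widehat{\mc L}^{(j)}(\betab^*+v)-\nabla^2\widehat{\mc L}^{(j)}(\betab^*)\}_{\overline{\mc M}\,\overline{\mc M}}\|_2\le\sqrt q\,M\|v\|_2$.
\item Hence $\tau_J\le\eta_\infty+CM\sqrt q\,N_B^{-1}\{\kappa^{-1}|\mc A|^{1/2}\sum_{j<B}n_j\lambda_j+n_Br_B^{\rm G}\}$.
\item Then $\sqrt r\le\sqrt q\le\sqrt{N_B}\,r_B^{\rm G}$ (from the first term of $r_B^{\rm G}$) turns every term of $\sqrt r\,\tau_J$ into a term of \eqref{eq:glm-bahadur-rate}.
\end{itemize}
So the $\sqrt q$ factor you tried to avoid is harmless. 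The Hessian part rests on (GLM-10) rather than (GLM-11), and (GLM-11) is needed only for $\tau_K$.

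If you want to keep your sharper self-concordance bound, you must add an assumption such as $\max_{j\le B}\lambda_{\max}\{n_j^{-1}\sum_i x_{i,\overline{\mc M}}^{(j)}(x_{i,\overline{\mc M}}^{(j)})^\T\}=O_p(1)$. You would still need to route the current-batch and $\eta_\infty$ terms through \eqref{eq:glm-bahadur-rate} as above.
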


\begin{proof}
Adding and subtracting the Hessians at \(\betab^*\) in
\maineqref{def:q.function}, and using \eqref{eq:glm-bahadur-hessian},
\eqref{eq:glm-terminal-rlh}, \eqref{eq:glm-bahadur-initializer}, and
\eqref{eq:glm-bahadur-oracle-localization}, give, for each $e \in \{e,0\}$,
\begin{equation}
\begin{split}
\left\|
\nabla^2_{\overline{\mc M}\,\overline{\mc M}}
\widehat{\mc Q}_B(\widehat\betab_e^{\rm ora})-\Jb_0
\right\|_2
\le
\eta_\infty
+\frac{CM\sqrt q}{N_B}
\left\{
\frac{|\mc A|^{1/2}}{\kappa}
\sum_{j=1}^{B-1}n_j\lambda_j
+n_Br_B^{\rm G}
\right\}.
\end{split}
\label{eq:glm-test-hessian-bound}
\end{equation}
The RLH event, $q\le l^2$,
\eqref{eq:glm-bahadur-initializer}, and
\eqref{eq:glm-bahadur-oracle-localization} yield
\begin{align*}
&\left\|
\left\{\nabla^2\widehat{\mc L}^{(j)}(\widehat\betab^{(j)})
-\nabla^2\widehat{\mc L}^{(j)}(\betab^*)\right\}_{
\overline{\mc M}\,\overline{\mc M}}
\right\|_2
\le \sqrt q M
\|\widehat\betab^{(j)}-\betab^*\|_2,
\qquad j=1,\ldots,B-1,
\\
&\left\|
\left\{\nabla^2\widehat{\mc L}^{(B)}(\widehat\betab_e^{\rm ora})
-\nabla^2\widehat{\mc L}^{(B)}(\betab^*)\right\}_{
\overline{\mc M}\,\overline{\mc M}}
\right\|_2
\le \sqrt q M
\|\widehat\betab_e^{\rm ora}-\betab^*\|_2.
\end{align*}
Together with \eqref{eq:glm-bahadur-hessian}, these inequalities imply
\eqref{eq:glm-test-hessian-bound}.

Since \(r\le q\) and \eqref{eq:glm-oracle-radius-rate} gives
\(r_B^{\rm G}\ge(q/N_B)^{1/2}\), we get
\begin{align*}
&\sqrt r\left\|
\nabla^2_{\overline{\mc M}\,\overline{\mc M}}
\widehat{\mc Q}_B(\widehat\betab_0^{\rm ora})-\Jb_0
\right\|_2
\\
&\quad\le
(q+\sqrt{N_B})\eta_\infty r_B^{\rm G}
+CM\sqrt{\frac q{N_B}}
\left\{
\frac{|\mc A|^{1/2}}{\kappa}r_B^{\rm G}
\sum_{j=1}^{B-1}n_j\lambda_j
+n_B(r_B^{\rm G})^2
\right\}
=o_p(1)
\end{align*}
by \eqref{eq:glm-bahadur-rate}.  
By the definition of \(\widehat{\Jb}^{(k)}_e\) and Proposition~\ref{prop:glm-strong-oracle}, we have
\begin{equation}
\widehat{\Jb}^{(k)}_e = \nabla^2_{\overline{\mc M}\,\overline{\mc M}}
\widehat{\mc Q}_B(\widehat\betab^{\rm ora}_e),
\qquad k\ge K_0+1.
\label{eq:glm-feasible-common-hessian}
\end{equation}
Consequently, we obtain
\begin{equation}
\sqrt r\max_{e\in\{0,a\}}
\|\widehat{\Jb}^{(k)}_e-\Jb_0\|_2=o_p(1).
\label{eq:glm-test-hessian-rate}
\end{equation}

Define
\[
\widetilde{\Kb}^{(B)}
=\frac1{N_B}\sum_{j=1}^B\sum_{i=1}^{n_j}
(\varepsilon_i^{(j)})^2
X_{i,\overline{\mc M}}^{(j)}
(X_{i,\overline{\mc M}}^{(j)})^\T.
\]
Equation \eqref{eq:glm-bartlett} gives
\(\EE\widetilde{\Kb}^{(B)}=\Kb_0\).  Integration of the conditional
tail bound obtained from \eqref{eq:glm-score-mgf} gives
\begin{equation}
\EE(|\varepsilon|^4\mid X)
\le
C\left[
\{v_\varepsilon b''(X^\T\betab^*)\}^2+c_\varepsilon^4
\right].
\label{eq:glm-fourth-error-moment}
\end{equation}
Under \eqref{eq:glm-weighted-subgaussian}, the Cauchy--Schwarz inequality and the sub-Gaussian bound yield
\[
\sup_{k,l\le p}\EE\left[
\{b''(X^\T\betab^*)\}^2X_k^2X_l^2+X_k^2X_l^2
\right]<\infty.
\]
Under \eqref{eq:glm-log-weight-tail}, the same inequality follows from
\(\EE\{b''(X^\T\betab^*)^4\}<\infty\),
\eqref{eq:glm-subgaussian-design}, and the Cauchy--Schwarz inequality.
Thus
\eqref{eq:glm-fourth-error-moment}, independence, and
\(\|A\|_2\le\|A\|_{\rm F}\) give
\begin{equation}
\begin{split}
\EE\left\|
\widetilde{\Kb}^{(B)}-\Kb_0
\right\|_{\rm F}^2
&=\frac1{N_B^2}
\sum_{j=1}^B\sum_{i=1}^{n_j}
\sum_{k,l\in\overline{\mc M}}
\operatorname{Var}\{(\varepsilon_i^{(j)})^2
X_{ik}^{(j)}X_{il}^{(j)}\}
\le\frac{Cq^2}{N_B},
\\
\left\|
\widetilde{\Kb}^{(B)}-\Kb_0
\right\|_2
&=O_p\!\left(\frac q{\sqrt{N_B}}\right).
\end{split}
\label{eq:glm-truth-score-covariance}
\end{equation}

For \(j\le B\), write $\Delta_j=\widehat\betab^{(j)}-\betab^*$, and
\begin{equation*}
\begin{split}
&d_i^{(j)}
=b'\{(X_i^{(j)})^\T\widehat\betab^{(j)}\}
-b'\{(X_i^{(j)})^\T\betab^*\},\\
&\mathbf D_B
=\frac1{N_B}\sum_{j=1}^B\sum_{i=1}^{n_j}
(d_i^{(j)})^2
X_{i,\overline{\mc M}}^{(j)}
(X_{i,\overline{\mc M}}^{(j)})^\T.
\end{split}
\end{equation*}
Proposition~\ref{prop:glm-initializer} with \(b=B\) gives
\(\Delta_j\in\BB(R_0)\cap\CC(l)\) for \(j\le B\), and \(q\le k_l\)
gives
\(\|X_{i,\overline{\mc M}}^{(j)}\|_2
\le\|X_i^{(j)}\|_{(k_l)}\).
Condition~\ref{GLMassm: self concordance}, the mean-value theorem, and
\eqref{eq:glm-weighted-row-envelope} imply
\begin{equation}
(d_i^{(j)})^2
\|X_{i,\overline{\mc M}}^{(j)}\|_2^2
\le
CT_B b''\{(X_i^{(j)})^\T\betab^*\}
\{(X_i^{(j)})^\T\Delta_j\}^2.
\label{eq:glm-fitted-score-pointwise}
\end{equation}
To obtain \eqref{eq:glm-fitted-score-pointwise}, we use
$|d_i^{(j)}|
\le b''\{(X_i^{(j)})^\T\betab^*\}
e^{K|(X_i^{(j)})^\T\Delta_j|}
|(X_i^{(j)})^\T\Delta_j|$,
and
\[
\begin{split}
\max_{j\le B}\max_{i\le n_j}
K|(X_i^{(j)})^\T\Delta_j|
&\le Kc_B\max_{j\le B}\|\Delta_j\|_1
\le Kc_BlR_0
\\
&\le CK_XlR_0
\{\log(2pN_B)+2\log\log(ep)\}^{1/2}=O(1),
\end{split}
\]
where the last inequality follows from \eqref{eq:glm-row-envelope} and
\ref{GLMassm: theorem2 rates revised}.

Since \(\mathbf D_B\) is positive semidefinite,
\eqref{eq:glm-sparse-to-cone},
\eqref{eq:glm-batch-upper-curvature}, and
Proposition~\ref{prop:glm-initializer} with \(b=B\), together with
\eqref{eq:glm-initializer-rates}, yield
\begin{equation}
\begin{split}
\|\mathbf D_B\|_2
\le\operatorname{tr}(\mathbf D_B)
\le
\frac{CT_B}{N_B}
\sum_{j=1}^Bn_j
\Delta_j^\T
\nabla^2\widehat{\mc L}^{(j)}(\betab^*)\Delta_j
=O_p\!\left(
\frac{T_B|\mc A|}{\kappa^2N_B}
\sum_{j=1}^Bn_j\lambda_j^2
\right).
\end{split}
\label{eq:glm-fitted-score-quadratic}
\end{equation}
Since
\(
\nabla\ell(\widehat\betab^{(j)};Y_i^{(j)},X_i^{(j)})
=(d_i^{(j)}-\varepsilon_i^{(j)})X_i^{(j)}
\), the Cauchy--Schwarz inequality applied to each quadratic form gives
\begin{equation}
\left\|
\widehat{\Kb}^{(B)}_{\overline{\mc M}\,\overline{\mc M}}
-\widetilde{\Kb}^{(B)}
\right\|_2
\le
2\|\widetilde{\Kb}^{(B)}\|_2^{1/2}
\|\mathbf D_B\|_2^{1/2}
+\|\mathbf D_B\|_2.
\label{eq:glm-fitted-score-covariance}
\end{equation}
Equations \eqref{eq:glm-test-statistic-rate},
\eqref{eq:glm-truth-score-covariance}, and
\eqref{eq:glm-fitted-score-quadratic}--
\eqref{eq:glm-fitted-score-covariance} imply
\[
\|\widetilde{\Kb}^{(B)}\|_2=O_p(1),
\qquad
\sqrt r\,\|\mathbf D_B\|_2^{1/2}=o_p(1),
\qquad
\sqrt r\,\|\mathbf D_B\|_2=o_p(1),
\]
where the first relation also uses \(\|\Kb_0\|_2\le K_u\), and the third
follows from the second because \(r\ge1\).  Therefore, we obtain
\begin{equation}
\sqrt r\left\|
\widehat{\Kb}^{(B)}_{\overline{\mc M}\,\overline{\mc M}}-\Kb_0
\right\|_2=o_p(1).
\label{eq:glm-test-score-covariance-rate}
\end{equation}
Equations \eqref{eq:glm-test-hessian-rate} and
\eqref{eq:glm-test-score-covariance-rate} verify
\maineqref{eq:feasible-plugin}.  Proposition~\ref{prop:glm-bahadur} verifies
the expansions and Assumptions~\mainref{assm:score} and
\mainref{assm:lyapunov} used in Theorem~\mainref{thm:oracle.test}, while
\eqref{eq:glm-bahadur-remainder} and \eqref{eq:glm-bahadur-rate} give the
\(o_p(N_B^{-1/2})\) bound required in the proof of  Theorem~\mainref{thm:oracle.test}.  
The same operator-norm Hessian bound also controls the reduced-oracle score-gradient remainder in that proof, so the sharper GLM condition applies to both statistics.
Proposition \ref{prop:glm-strong-oracle} gives the equalities required in
\eqref{eq:feasible-oracle-plugin-identification}, with its Hessian equality given by \eqref{eq:glm-feasible-common-hessian}. Therefore, the proof of Theorem~\mainref{thm:test} gives the claimed uniform approximation.

Under \(\hb_B=\mathbf0\),
\(\PP\{\chi^2(r)>\chi_\alpha^2(r)\}=\alpha\).  Under the local
alternatives in Corollary~\mainref{cor:test}, the uniform approximation implies
\[
\left|
\PP\{T>\chi_\alpha^2(r)\}
-\PP\{\chi^2(r,e_B)>\chi_\alpha^2(r)\}
\right|\longrightarrow0.
\]
\end{proof}
\noindent We note that, if \eqref{eq:glm-canonical-lambda-rate} holds, then
Lemma~\ref{lem:batch.size} gives
\[
\frac1{N_B}\sum_{j=1}^Bn_j\lambda_j^2
\lesssim
\frac{\log p}{N_B}
\left\{1+\log\left(\frac{N_B}{n_1}\right)\right\}.
\]
Consequently, \eqref{eq:glm-test-statistic-rate} is implied by
\[
\sqrt r\left[
\frac q{\sqrt{N_B}}
+\left\{
\frac{T_B|\mc A|\log p}{\kappa^2N_B}
\left\{1+\log\left(\frac{N_B}{n_1}\right)\right\}
\right\}^{1/2}
\right]=o(1).
\]

\subsubsection{Proof of the claims in Remark~\mainref{remark:ls no info lost}}\label{sec:ls no info lost proof}

Consider the linear model \(Y_i=X_i^\top\betab^*+\epsilon_i\) with least-squares loss \(\ell(\betab;Y_i,X_i)=\frac12(Y_i-X_i^\top\betab)^2\). We first examine the conventional renewable loss
\[
\widetilde{\mc Q}_b(\betab)=\frac{1}{N_b}\left\{n_b\widehat{\mathcal L}^{(b)}(\betab)+\frac12(\betab-\widehat{\betab}^{(b-1)})^\top\widehat{\Jb}^{(b-1)}(\betab-\widehat{\betab}^{(b-1)})\right\},
\]
where \(\widehat{\Jb}^{(b-1)}=\sum_{i\in\mc D_{b-1}^\star}X_iX_i^\top\). Suppose that the renewable Lasso estimator at batch \(b\) is defined by
\[
\widehat{\betab}^{(b)}\in\arg\min_{\betab}\left\{\widetilde{\mc Q}_b(\betab)+\lambda_b\|\betab\|_1\right\}.
\]

Because the least-squares gradient is affine, that is, $ \nabla\ell(\betab;Y_i,X_i)=-X_iY_i+X_iX_i^\top\betab$, we have
\#
\widehat{\Jb}^{(b-1)}(\betab-\widehat{\betab}^{(b-1)}) = \sum_{i\in\mc D_{b-1}^\star} \Big\{ \nabla\ell(\betab;Y_i,X_i) - \nabla\ell(\widehat{\betab}^{(b-1)};Y_i,X_i) \Big\}. \label{affine}
\#
Therefore,
\[
\nabla\widetilde{\mc Q}_b(\betab)
=
\frac{1}{N_b}\sum_{i\in\mc D_b^\star}\nabla\ell(\betab;Y_i,X_i)
-
\frac{1}{N_b}\sum_{i\in\mc D_{b-1}^\star}
\nabla\ell(\widehat{\betab}^{(b-1)};Y_i,X_i).
\]
Define the discrepancy
\[
\Delta_b
=
\nabla\widetilde{\mc Q}_b(\betab)
-
\frac{1}{N_b}\sum_{i\in\mc D_b^\star}\nabla\ell(\betab;Y_i,X_i).
\]
By the preceding identity, \(\Delta_b\) does not depend on \(\betab\) and satisfies
\[
\Delta_b
=
-\frac{1}{N_b}\sum_{i\in\mc D_{b-1}^\star}
\nabla\ell(\widehat{\betab}^{(b-1)};Y_i,X_i).
\]

Now, the Karush--Kuhn--Tucker (KKT) conditions for \(\widehat{\betab}^{(b-1)}\) give
\[
\nabla\widetilde{\mc Q}_{b-1}(\widehat{\betab}^{(b-1)})
+
\lambda_{b-1}\xib^{(b-1)}
=
0,
\qquad
\xib^{(b-1)}
\in
\partial\|\widehat{\betab}^{(b-1)}\|_1.
\]
By the definition of \(\Delta_{b-1}\),
\[
\frac{1}{N_{b-1}}
\sum_{i\in\mc D_{b-1}^\star}
\nabla\ell(\widehat{\betab}^{(b-1)};Y_i,X_i)
=
\nabla\widetilde{\mc Q}_{b-1}(\widehat{\betab}^{(b-1)})
-
\Delta_{b-1}
=
-\lambda_{b-1}\xib^{(b-1)}-\Delta_{b-1}.
\]
Substituting this expression into the preceding display yields
\[
\Delta_b
=
\frac{N_{b-1}}{N_b}
\left\{
\lambda_{b-1}\xib^{(b-1)}
+
\Delta_{b-1}
\right\}.
\]
Since \(\Delta_1=0\), induction gives
\[
\Delta_b
=
\frac{1}{N_b}
\sum_{k=1}^{b-1}N_k\lambda_k\xib^{(k)}.
\]
Consequently,
\[
\nabla\widetilde{\mc Q}_b(\betab)
=
\frac{1}{N_b}\sum_{i\in\mc D_b^\star}\nabla\ell(\betab;Y_i,X_i)
+
\frac{1}{N_b}\sum_{k=1}^{b-1}N_k\lambda_k\xib^{(k)}.
\]
Thus, the conventional renewable score differs exactly from the empirical score based on all observations through batch \(b\) by the accumulated first-order term \(N_b^{-1}\sum_{k=1}^{b-1}N_k\lambda_k\xib^{(k)}\).

At the true parameter, $N_b^{-1}\sum_{i\in\mc D_b^\star}\nabla\ell(\betab^*;Y_i,X_i)=-N_b^{-1}\sum_{i\in\mc D_b^\star}X_i\epsilon_i$. 
Hence,
\[
\sqrt{N_b}\nabla\widetilde{\mc Q}_b(\betab^*)=-\frac{1}{\sqrt{N_b}}\sum_{i\in\mc D_b^\star}X_i\epsilon_i+\frac{1}{\sqrt{N_b}}\sum_{k=1}^{b-1}N_k\lambda_k\xib^{(k)}.
\]
In particular, when \(b=2\),
\[
\sqrt{N_2}\nabla\widetilde{\mc Q}_2(\betab^*)
=
-\frac{1}{\sqrt{N_2}}\sum_{i\in\mc D_2^\star}X_i\epsilon_i
+
\frac{n_1\lambda_1}{\sqrt{N_2}}\xib^{(1)}.
\]
If \(\widehat{\betab}^{(1)}\neq\mathbf 0\), then \(\|\xib^{(1)}\|_\infty=1\), and hence
\[
\left\|
\sqrt{N_2}\nabla\widetilde{\mc Q}_2(\betab^*)
+
\frac{1}{\sqrt{N_2}}\sum_{i\in\mc D_2^\star}X_i\epsilon_i
\right\|_\infty
=
\frac{n_1\lambda_1}{\sqrt{N_2}}.
\]
For the usual choice \(\lambda_1\asymp\sqrt{\log p/n_1}\), this discrepancy is of order
\[
\sqrt{\frac{n_1\log p}{N_2}}.
\]
Therefore, first-order equivalence between the conventional renewable score and the full-data empirical score requires
\[
\frac{n_1\log p}{N_2}\to0.
\]
Since \(N_2=n_1+n_2\), when \(\log p\to\infty\), this requires \(n_2\gg n_1\log p\). Thus, even in the linear model, the conventional quadratic surrogate introduces a first-order distortion at the true parameter unless the incoming batch is sufficiently large.

The corrected surrogate removes this distortion exactly. We show by induction that, for every \(b\geq1\),
\[
\widehat{\mathcal Q}_b(\betab)
=
\frac{1}{N_b}\sum_{i\in\mc D_b^\star}\ell(\betab;Y_i,X_i)+C_b,
\]
where \(C_b\) does not depend on \(\betab\). The claim holds for \(b=1\) by definition. Suppose that it holds for \(b=B-1\). Then
\[
N_{B-1}\nabla\widehat{\mathcal Q}_{B-1}
(\widehat{\betab}^{(B-1)})
=
\sum_{i\in\mc D_{B-1}^\star}
\nabla\ell(\widehat{\betab}^{(B-1)};Y_i,X_i).
\]
By the definition of the corrected surrogate,
\[
N_B\nabla\widehat{\mathcal Q}_B(\betab) = \sum_{i\in\mc D_B}\nabla\ell(\betab;Y_i,X_i) + \widehat{\Jb}^{(B-1)} (\betab-\widehat{\betab}^{(B-1)}) +  N_{B-1}\nabla\widehat{\mathcal Q}_{B-1}
(\widehat{\betab}^{(B-1)}).
\]
Because the least-squares gradient is affine as in \eqref{affine}, combining this with the preceding two displays gives
\[
\nabla\widehat{\mathcal Q}_B(\betab)
=
\frac{1}{N_B}\sum_{i\in\mc D_B^\star}\nabla\ell(\betab;Y_i,X_i).
\]
The two objectives, therefore, differ only by an additive constant independent of \(\betab\), proving
\[
\widehat{\mathcal Q}_B(\betab)
=
\frac{1}{N_B}\sum_{i\in\mc D_B^\star}\ell(\betab;Y_i,X_i)+C_B.
\]
The result follows by induction.

Thus, for least squares, the contrast between the two renewable constructions is exact. That is, the conventional quadratic surrogate retains an accumulated first-order distortion generated by the penalized updates, whereas the corrected surrogate reproduces the full-data empirical loss up to an irrelevant additive constant. In particular, the corrected renewable update loses no estimation- or inference-relevant information.

% In this section we prove the claim made in Remark \mainref{remark:ls no info lost}.
% \begin{proof}
% A simple proof by induction suffices. The relation
% $
% \widehat{\mathcal{Q}}_{b}(\betab) = \frac{1}{N_b}\sum_{i \in \mc D_b^*} \ell(\betab; Y_i, X_i) + C
% $
% holds for $b = 1$ by definition. Suppose it holds for $b = B-1$. Observe that $\ell(\betab; Y_i, X_i) = \frac{1}{2}Y_i^2 - Y_iX_i^T\betab + \frac{1}{2}\betab^TX_iX_i^T\betab$, $\nabla  \ell(\betab; Y_i, X_i) = -X_iY_i + X_iX_i^T\betab$, and $\hat{\Jb}^{(b)} = \sum_{j = 1}^b \sum_{i \in D_j} X_i X_i^T = \sum_{i \in D_b^*} X_i X_i^T $. By assumption, $N_{B-1}\nabla\widehat{\mathcal{Q}}_{B-1}(\hat \betab^{(B-1)}) = -\sum_{i \in D_{B-1}^*}X_iY_i +\sum_{i \in D_{B-1}^*} X_iX_i^T\hat \betab^{(B-1)} $.
% The corrected objective in batch $B$ therefore simplifies to
% \begin{align*}
%     \widehat{\mathcal{Q}}_{B}(\betab) 
%     & = \frac{1}{N_B} \bigg\{  \frac{1}{2}\sum_{i \in \mc D_B} Y_i^2 -\sum_{i \in \mc D_B} Y_iX_i^T\betab + \frac{1}{2}\betab^T\sum_{i \in \mc D_B}X_iX_i^T\betab \\
%     & \hspace{0.8cm} + \frac{1}{2} \betab^T \hat{\Jb}^{(B-1)} \betab - (\hat{\betab}^{(B-1)})^T\hat{\Jb}^{(B-1)}\betab + \frac{1}{2} (\hat{\betab}^{(B-1)} )^T\hat{\Jb}^{(B-1)} \hat{\betab}^{(B-1)}\\
%     & \hspace{0.8cm} + (-\sum_{i \in \mc D_{B-1}^*}X_iY_i  + \sum_{i \in \mc D_{B-1}^*}X_iX_i^T \hat{\betab}^{(B-1)})^T\betab\bigg\}\\
%     & = \frac{1}{N_B} \bigg\{ -\sum_{i \in \mc D_B^*} Y_iX_i^T\betab + \frac{1}{2}\betab^T\sum_{i \in \mc D_B^*}X_iX_i^T\betab + C\bigg \},
% \end{align*}
% establishing the relation for $b = B$.
% \end{proof}

\subsubsection{Additional simulation I-1: Estimation errors}\label{sec: estimation error simulation}

We follow the data-generating process in Section~\mainref{sec:numerical}. We take $n_b=40$, $B=50$, and $p\in\{500,1000\}$, giving a total sample size of $2000$. The coefficient vector is $\boldsymbol\beta^*=(c,c,c,c,c,\mathbf 0_{p-5}^{\top})^{\top}$, with $c=1$, $0.5$, and $0.2$ for the linear, logistic, and Poisson models, respectively. We set the true intercept to zero. All comparisons use the same $200$ independent replications within each setting.

We compare four estimators: our corrected online Lasso with accumulated one-step-ahead prediction losses and the one-standard-error rule (1-SE), corrected online Lasso with 10-fold cross-validation loss on the current batch (CV), the online Lasso proposed by \citet{luo2023online} (Luo), and pooled offline Lasso benchmark with 5-fold CV. 
The tuning parameter selection for our procedures is illustrated in Section~\mainref{sec: tuning parameter selection} above.

Figure~\ref{fig:glm-estimation-mean-log} displays the average log-error measure, $\log\|\widehat{\boldsymbol\beta}^{(b)}-\boldsymbol\beta^*\|_2^2$, against the batch index. 
The pooled Lasso has the smallest mean log error in all six settings. 
The corrected online Lasso with accumulated 1-SE tuning has smaller final-batch mean log error than its current-batch CV counterpart in the linear and Poisson models, whereas the ordering is reversed in the logistic model. 
Both corrected online procedures have smaller final-batch mean log error than {\tt Luo} across the six settings.
Thus, beyond enabling valid inference under more relaxed conditions, the corrected loss can also improve estimation accuracy.

\begin{figure}[p]
\centering
\includegraphics[width=\textwidth]{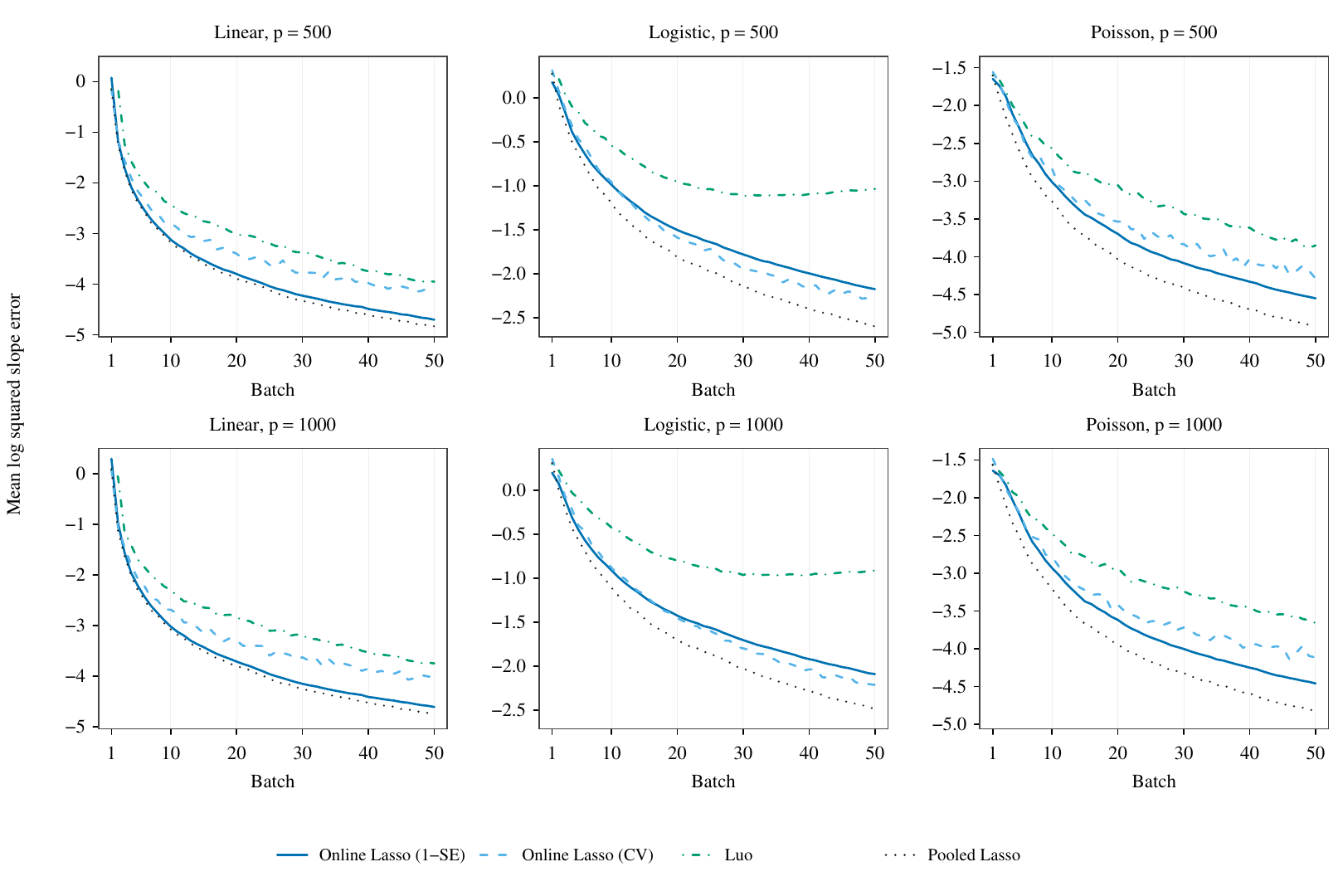}
\caption{Mean log-squared regression coefficient error versus batch index for the four estimators. Rows correspond to $p=500$ and $1000$; columns correspond to linear, logistic, and Poisson regression. Each curve averages $200$ paired replications with $n_b=40$ and $B=50$.}
\label{fig:glm-estimation-mean-log}
\end{figure}

\subsubsection{Additional simulation I-2: Inference for the linear model}
\label{sec:linear_additional_simulation_I}

In this section, we examine the finite-sample inference performance of the renewable Wald and score
tests in the linear model. 
We generate the covariate vectors as in Section~\mainref{sec: simulation}.
We set $n_b=100$,
$B=8$, and $p\in\{400,1200\}$, yielding $N_B=800$, and take
\[
\betab^*=(h,0,0,1+h,1,\mathbf{0}_{p-5}^\T)^\T,
\qquad h\in\{0,0.025,0.05,0.075,0.1,0.2,0.4\}.
\]
Here $\betab^*$ denotes the slope vector. For each setting, we use 500 independent replications and test the following five hypotheses:
\begin{align*}
H_{0,\mathrm L}^{(1)} &: \ \beta_1^*=0, &
H_{0,\mathrm L}^{(2)} &: \ \beta_4^*-\beta_5^*=0,\\
H_{0,\mathrm L}^{(3)} &: \ (\beta_1^*,\beta_8^*,\beta_9^*)^\T=\mathbf{0}_3, &
H_{0,\mathrm L}^{(4)} &: \ \beta_1^*=0,\ \beta_4^*-\beta_5^*=0,\\
H_{0,\mathrm L}^{(5)} &: \ (\beta_1^*,\beta_8^*,\beta_9^*)^\T=\mathbf{0}_3,
\ \beta_4^*-\beta_5^*=0. &&
\end{align*}
Again, the corresponding degrees of freedom are $1,1,3,2$, and $4$.
The methods, tuning procedures, LLA reweighting mechanism, competitor setups, and nominal level are those in
Section~\mainref{sec:numerical}.

Tables~\ref{tab:linear-additional-I-p400} and~\ref{tab:linear-additional-I-p1200}
report the empirical rejection probabilities. 
The results are similar across the two dimensions.
The null rejection probabilities
are close to the nominal level for both dimensions. 
The online Wald tests closely track the pooled benchmark under both estimation procedures and tuning rules, while exhibiting higher power than their score-test counterparts.
Again, our procedures showed greater power than {\tt Luo} against local alternatives in $H_{0,\mathrm L}^{(1)}$.

\begin{landscape}
\begin{table}[p]
\centering\fontsize{9}{9}\selectfont
\setlength{\tabcolsep}{5pt}
\renewcommand{\arraystretch}{1.02}
\captionsetup{font=small}
\caption{Empirical rejection probabilities for the linear model with $p=400$, $n_b=100$, $B=8$, and 500 replications per setting. The nominal level is $0.05$. $T_W$ and $T_S$ denote Wald and score tests. The {\tt Offline} and {\tt Luo} columns report Wald tests; {\tt Luo} is evaluated for the first hypothesis only.}
\label{tab:linear-additional-I-p400}
\begin{tabular*}{\linewidth}{@{\extracolsep{\fill}}llrr*{8}{r}@{}}
\toprule
 & & & & \multicolumn{4}{c}{1-SE} & \multicolumn{4}{c}{CV/GIC}\\
\cmidrule(lr){5-8}\cmidrule(lr){9-12}
 & & & & \multicolumn{2}{c}{Lasso $\to$ LLA} & \multicolumn{2}{c}{LLA $\to$ LLA} & \multicolumn{2}{c}{Lasso $\to$ LLA} & \multicolumn{2}{c}{LLA $\to$ LLA}\\
\cmidrule(lr){5-6}\cmidrule(lr){7-8}\cmidrule(lr){9-10}\cmidrule(lr){11-12}
Hypothesis & $h$ & {\tt Offline} & {\tt Luo} & $T_W$ & $T_S$ & $T_W$ & $T_S$ & $T_W$ & $T_S$ & $T_W$ & $T_S$\\
\midrule
\multirow{7}{*}{$H_{0,\mathrm L}^{(1)}$} & 0.000 & 0.054 & 0.048 & 0.054 & 0.056 & 0.054 & 0.054 & 0.054 & 0.054 & 0.054 & 0.054\\
 & 0.025 & 0.106 & 0.070 & 0.106 & 0.112 & 0.106 & 0.108 & 0.106 & 0.106 & 0.106 & 0.106\\
 & 0.050 & 0.338 & 0.218 & 0.338 & 0.332 & 0.338 & 0.332 & 0.336 & 0.336 & 0.336 & 0.336\\
 & 0.075 & 0.596 & 0.442 & 0.596 & 0.588 & 0.596 & 0.598 & 0.596 & 0.596 & 0.596 & 0.596\\
 & 0.100 & 0.822 & 0.676 & 0.822 & 0.820 & 0.822 & 0.818 & 0.822 & 0.822 & 0.822 & 0.822\\
 & 0.200 & 1.000 & 0.996 & 1.000 & 1.000 & 1.000 & 1.000 & 1.000 & 1.000 & 1.000 & 1.000\\
 & 0.400 & 1.000 & 1.000 & 1.000 & 1.000 & 1.000 & 1.000 & 1.000 & 1.000 & 1.000 & 1.000\\
\addlinespace
\multirow{7}{*}{$H_{0,\mathrm L}^{(2)}$} & 0.000 & 0.054 & --- & 0.052 & 0.054 & 0.054 & 0.054 & 0.054 & 0.054 & 0.054 & 0.054\\
 & 0.025 & 0.068 & --- & 0.068 & 0.066 & 0.068 & 0.068 & 0.068 & 0.068 & 0.068 & 0.068\\
 & 0.050 & 0.102 & --- & 0.102 & 0.106 & 0.102 & 0.102 & 0.102 & 0.102 & 0.102 & 0.102\\
 & 0.075 & 0.172 & --- & 0.172 & 0.162 & 0.172 & 0.170 & 0.172 & 0.172 & 0.172 & 0.172\\
 & 0.100 & 0.298 & --- & 0.296 & 0.276 & 0.298 & 0.280 & 0.298 & 0.298 & 0.298 & 0.298\\
 & 0.200 & 0.860 & --- & 0.858 & 0.798 & 0.862 & 0.798 & 0.852 & 0.810 & 0.852 & 0.810\\
 & 0.400 & 0.998 & --- & 0.998 & 0.998 & 0.998 & 0.998 & 0.998 & 0.998 & 0.998 & 0.998\\
\addlinespace
\multirow{7}{*}{$H_{0,\mathrm L}^{(3)}$} & 0.000 & 0.052 & --- & 0.052 & 0.052 & 0.052 & 0.052 & 0.052 & 0.052 & 0.052 & 0.052\\
 & 0.025 & 0.076 & --- & 0.076 & 0.080 & 0.076 & 0.078 & 0.076 & 0.076 & 0.076 & 0.076\\
 & 0.050 & 0.218 & --- & 0.218 & 0.214 & 0.218 & 0.218 & 0.218 & 0.218 & 0.218 & 0.218\\
 & 0.075 & 0.430 & --- & 0.430 & 0.418 & 0.428 & 0.430 & 0.430 & 0.430 & 0.430 & 0.430\\
 & 0.100 & 0.660 & --- & 0.660 & 0.652 & 0.660 & 0.658 & 0.660 & 0.660 & 0.660 & 0.660\\
 & 0.200 & 1.000 & --- & 1.000 & 0.998 & 1.000 & 0.998 & 1.000 & 1.000 & 1.000 & 1.000\\
 & 0.400 & 1.000 & --- & 1.000 & 1.000 & 1.000 & 1.000 & 1.000 & 1.000 & 1.000 & 1.000\\
\addlinespace
\multirow{7}{*}{$H_{0,\mathrm L}^{(4)}$} & 0.000 & 0.054 & --- & 0.056 & 0.054 & 0.054 & 0.056 & 0.054 & 0.054 & 0.054 & 0.054\\
 & 0.025 & 0.092 & --- & 0.092 & 0.096 & 0.092 & 0.090 & 0.092 & 0.092 & 0.092 & 0.092\\
 & 0.050 & 0.300 & --- & 0.300 & 0.296 & 0.300 & 0.298 & 0.300 & 0.300 & 0.300 & 0.300\\
 & 0.075 & 0.584 & --- & 0.586 & 0.572 & 0.584 & 0.578 & 0.584 & 0.584 & 0.584 & 0.584\\
 & 0.100 & 0.828 & --- & 0.828 & 0.822 & 0.828 & 0.824 & 0.828 & 0.828 & 0.828 & 0.828\\
 & 0.200 & 1.000 & --- & 1.000 & 1.000 & 1.000 & 1.000 & 1.000 & 1.000 & 1.000 & 1.000\\
 & 0.400 & 1.000 & --- & 1.000 & 1.000 & 1.000 & 1.000 & 1.000 & 1.000 & 1.000 & 1.000\\
\addlinespace
\multirow{7}{*}{$H_{0,\mathrm L}^{(5)}$} & 0.000 & 0.054 & --- & 0.054 & 0.054 & 0.054 & 0.052 & 0.054 & 0.054 & 0.054 & 0.054\\
 & 0.025 & 0.082 & --- & 0.082 & 0.082 & 0.082 & 0.082 & 0.082 & 0.082 & 0.082 & 0.082\\
 & 0.050 & 0.228 & --- & 0.228 & 0.228 & 0.228 & 0.230 & 0.228 & 0.228 & 0.228 & 0.228\\
 & 0.075 & 0.480 & --- & 0.478 & 0.466 & 0.478 & 0.472 & 0.480 & 0.480 & 0.480 & 0.480\\
 & 0.100 & 0.746 & --- & 0.746 & 0.728 & 0.746 & 0.738 & 0.746 & 0.746 & 0.746 & 0.746\\
 & 0.200 & 1.000 & --- & 1.000 & 1.000 & 1.000 & 1.000 & 1.000 & 1.000 & 1.000 & 1.000\\
 & 0.400 & 1.000 & --- & 1.000 & 1.000 & 1.000 & 1.000 & 1.000 & 1.000 & 1.000 & 1.000\\
\bottomrule
\end{tabular*}
\end{table}
\end{landscape}

\begin{landscape}
\begin{table}[p]
\centering\fontsize{9}{9}\selectfont
\setlength{\tabcolsep}{5pt}
\renewcommand{\arraystretch}{1.02}
\captionsetup{font=small}
\caption{Empirical rejection probabilities for the linear model with $p=1200$, $n_b=100$, $B=8$, and 500 replications per departure.}
\label{tab:linear-additional-I-p1200}
\begin{tabular*}{\linewidth}{@{\extracolsep{\fill}}llrr*{8}{r}@{}}
\toprule
 & & & & \multicolumn{4}{c}{1-SE} & \multicolumn{4}{c}{CV/GIC}\\
\cmidrule(lr){5-8}\cmidrule(lr){9-12}
 & & & & \multicolumn{2}{c}{Lasso $\to$ LLA} & \multicolumn{2}{c}{LLA $\to$ LLA} & \multicolumn{2}{c}{Lasso $\to$ LLA} & \multicolumn{2}{c}{LLA $\to$ LLA}\\
\cmidrule(lr){5-6}\cmidrule(lr){7-8}\cmidrule(lr){9-10}\cmidrule(lr){11-12}
Hypothesis & $h$ & Offline & Luo & $T_W$ & $T_S$ & $T_W$ & $T_S$ & $T_W$ & $T_S$ & $T_W$ & $T_S$\\
\midrule
\multirow{7}{*}{$H_{0,\mathrm L}^{(1)}$} & 0.000 & 0.054 & 0.050 & 0.054 & 0.056 & 0.054 & 0.056 & 0.054 & 0.054 & 0.054 & 0.054\\
 & 0.025 & 0.106 & 0.084 & 0.106 & 0.110 & 0.106 & 0.110 & 0.106 & 0.106 & 0.106 & 0.106\\
 & 0.050 & 0.338 & 0.224 & 0.336 & 0.334 & 0.336 & 0.334 & 0.336 & 0.336 & 0.336 & 0.336\\
 & 0.075 & 0.596 & 0.420 & 0.596 & 0.592 & 0.596 & 0.596 & 0.596 & 0.596 & 0.596 & 0.596\\
 & 0.100 & 0.822 & 0.628 & 0.822 & 0.818 & 0.822 & 0.822 & 0.822 & 0.822 & 0.822 & 0.822\\
 & 0.200 & 1.000 & 0.992 & 1.000 & 1.000 & 1.000 & 1.000 & 1.000 & 1.000 & 1.000 & 1.000\\
 & 0.400 & 1.000 & 1.000 & 1.000 & 1.000 & 1.000 & 1.000 & 1.000 & 1.000 & 1.000 & 1.000\\
\addlinespace
\multirow{7}{*}{$H_{0,\mathrm L}^{(2)}$} & 0.000 & 0.054 & --- & 0.054 & 0.054 & 0.054 & 0.054 & 0.054 & 0.054 & 0.054 & 0.054\\
 & 0.025 & 0.068 & --- & 0.068 & 0.068 & 0.068 & 0.066 & 0.068 & 0.068 & 0.068 & 0.068\\
 & 0.050 & 0.102 & --- & 0.102 & 0.104 & 0.102 & 0.102 & 0.102 & 0.102 & 0.102 & 0.102\\
 & 0.075 & 0.172 & --- & 0.170 & 0.168 & 0.172 & 0.170 & 0.172 & 0.172 & 0.172 & 0.172\\
 & 0.100 & 0.298 & --- & 0.298 & 0.278 & 0.300 & 0.284 & 0.298 & 0.298 & 0.298 & 0.298\\
 & 0.200 & 0.866 & --- & 0.862 & 0.790 & 0.862 & 0.798 & 0.860 & 0.834 & 0.860 & 0.834\\
 & 0.400 & 0.998 & --- & 0.998 & 0.998 & 0.998 & 0.998 & 0.998 & 0.998 & 0.998 & 0.998\\
\addlinespace
\multirow{7}{*}{$H_{0,\mathrm L}^{(3)}$} & 0.000 & 0.052 & --- & 0.052 & 0.048 & 0.052 & 0.052 & 0.052 & 0.052 & 0.052 & 0.052\\
 & 0.025 & 0.076 & --- & 0.076 & 0.074 & 0.076 & 0.074 & 0.076 & 0.076 & 0.076 & 0.076\\
 & 0.050 & 0.218 & --- & 0.218 & 0.216 & 0.218 & 0.216 & 0.218 & 0.218 & 0.218 & 0.218\\
 & 0.075 & 0.430 & --- & 0.430 & 0.424 & 0.428 & 0.426 & 0.430 & 0.430 & 0.430 & 0.430\\
 & 0.100 & 0.660 & --- & 0.660 & 0.654 & 0.660 & 0.658 & 0.660 & 0.660 & 0.660 & 0.660\\
 & 0.200 & 1.000 & --- & 1.000 & 0.998 & 1.000 & 0.998 & 1.000 & 1.000 & 1.000 & 1.000\\
 & 0.400 & 1.000 & --- & 1.000 & 1.000 & 1.000 & 1.000 & 1.000 & 1.000 & 1.000 & 1.000\\
\addlinespace
\multirow{7}{*}{$H_{0,\mathrm L}^{(4)}$} & 0.000 & 0.054 & --- & 0.054 & 0.056 & 0.054 & 0.056 & 0.054 & 0.054 & 0.054 & 0.054\\
 & 0.025 & 0.092 & --- & 0.092 & 0.096 & 0.092 & 0.092 & 0.092 & 0.092 & 0.092 & 0.092\\
 & 0.050 & 0.300 & --- & 0.300 & 0.302 & 0.298 & 0.302 & 0.300 & 0.300 & 0.300 & 0.300\\
 & 0.075 & 0.584 & --- & 0.584 & 0.578 & 0.584 & 0.578 & 0.584 & 0.584 & 0.584 & 0.584\\
 & 0.100 & 0.828 & --- & 0.828 & 0.828 & 0.828 & 0.826 & 0.828 & 0.828 & 0.828 & 0.828\\
 & 0.200 & 1.000 & --- & 1.000 & 1.000 & 1.000 & 1.000 & 1.000 & 1.000 & 1.000 & 1.000\\
 & 0.400 & 1.000 & --- & 1.000 & 1.000 & 1.000 & 1.000 & 1.000 & 1.000 & 1.000 & 1.000\\
\addlinespace
\multirow{7}{*}{$H_{0,\mathrm L}^{(5)}$} & 0.000 & 0.054 & --- & 0.054 & 0.056 & 0.054 & 0.054 & 0.054 & 0.054 & 0.054 & 0.054\\
 & 0.025 & 0.082 & --- & 0.082 & 0.080 & 0.080 & 0.082 & 0.082 & 0.082 & 0.082 & 0.082\\
 & 0.050 & 0.228 & --- & 0.228 & 0.224 & 0.228 & 0.222 & 0.228 & 0.228 & 0.228 & 0.228\\
 & 0.075 & 0.480 & --- & 0.480 & 0.474 & 0.480 & 0.476 & 0.480 & 0.480 & 0.480 & 0.480\\
 & 0.100 & 0.746 & --- & 0.746 & 0.734 & 0.746 & 0.746 & 0.746 & 0.746 & 0.746 & 0.746\\
 & 0.200 & 1.000 & --- & 1.000 & 1.000 & 1.000 & 1.000 & 1.000 & 1.000 & 1.000 & 1.000\\
 & 0.400 & 1.000 & --- & 1.000 & 1.000 & 1.000 & 1.000 & 1.000 & 1.000 & 1.000 & 1.000\\
\bottomrule
\end{tabular*}
\end{table}
\end{landscape}

\subsubsection{Additional numerical results to Section~\mainref{sec:numerical}}\label{sec: additional tables GLM}

\begin{landscape}
\begin{table}[p]
\centering\fontsize{9}{9}\selectfont
\setlength{\tabcolsep}{5pt}
\renewcommand{\arraystretch}{1.02}
\captionsetup{font=small}
\caption{Empirical rejection probabilities for logistic regression with
$p=400$, $n_b=100$, $B=8$, and 500 independent replications per value of $h$.}
\label{tab:logistic-p400}
\begin{tabular*}{\linewidth}{@{\extracolsep{\fill}}llrr*{8}{r}@{}}
\toprule
 & & & & \multicolumn{4}{c}{1-SE} & \multicolumn{4}{c}{CV/GIC}\\
\cmidrule(lr){5-8}\cmidrule(lr){9-12}
 & & & & \multicolumn{2}{c}{Lasso $\to$ LLA}
 & \multicolumn{2}{c}{LLA $\to$ LLA}
 & \multicolumn{2}{c}{Lasso $\to$ LLA}
 & \multicolumn{2}{c}{LLA $\to$ LLA}\\
\cmidrule(lr){5-6}\cmidrule(lr){7-8}
\cmidrule(lr){9-10}\cmidrule(lr){11-12}
Hypothesis & $h$ & {\tt Offline} & {\tt Luo}
 & $T_W$ & $T_S$ & $T_W$ & $T_S$
 & $T_W$ & $T_S$ & $T_W$ & $T_S$\\
\midrule
\multirow{6}{*}{$H_0^{(1)}$}
 & 0.000 & 0.050 & 0.036 & 0.068 & 0.064 & 0.056 & 0.052 & 0.044 & 0.042 & 0.044 & 0.044\\
 & 0.025 & 0.070 & 0.042 & 0.076 & 0.078 & 0.068 & 0.068 & 0.056 & 0.054 & 0.062 & 0.060\\
 & 0.050 & 0.100 & 0.066 & 0.132 & 0.126 & 0.108 & 0.102 & 0.072 & 0.070 & 0.072 & 0.072\\
 & 0.100 & 0.236 & 0.164 & 0.272 & 0.236 & 0.230 & 0.224 & 0.202 & 0.202 & 0.210 & 0.206\\
 & 0.200 & 0.618 & 0.484 & 0.606 & 0.600 & 0.616 & 0.612 & 0.602 & 0.602 & 0.604 & 0.602\\
 & 0.400 & 0.998 & 0.986 & 0.996 & 0.996 & 0.998 & 0.998 & 0.996 & 0.996 & 0.996 & 0.996\\
\addlinespace
\multirow{6}{*}{$H_0^{(2)}$}
 & 0.000 & 0.060 & --- & 0.052 & 0.052 & 0.052 & 0.052 & 0.054 & 0.054 & 0.052 & 0.052\\
 & 0.025 & 0.066 & --- & 0.062 & 0.060 & 0.060 & 0.062 & 0.062 & 0.062 & 0.056 & 0.056\\
 & 0.050 & 0.098 & --- & 0.084 & 0.082 & 0.084 & 0.082 & 0.082 & 0.082 & 0.078 & 0.078\\
 & 0.100 & 0.226 & --- & 0.210 & 0.204 & 0.206 & 0.206 & 0.212 & 0.210 & 0.196 & 0.196\\
 & 0.200 & 0.636 & --- & 0.586 & 0.560 & 0.602 & 0.578 & 0.608 & 0.600 & 0.584 & 0.580\\
 & 0.400 & 0.988 & --- & 0.970 & 0.960 & 0.978 & 0.974 & 0.980 & 0.978 & 0.972 & 0.970\\
\addlinespace
\multirow{6}{*}{$H_0^{(3)}$}
 & 0.000 & 0.058 & --- & 0.050 & 0.050 & 0.040 & 0.040 & 0.038 & 0.038 & 0.046 & 0.046\\
 & 0.025 & 0.062 & --- & 0.060 & 0.060 & 0.050 & 0.050 & 0.040 & 0.040 & 0.042 & 0.042\\
 & 0.050 & 0.072 & --- & 0.090 & 0.084 & 0.074 & 0.074 & 0.054 & 0.056 & 0.058 & 0.056\\
 & 0.100 & 0.148 & --- & 0.164 & 0.140 & 0.140 & 0.140 & 0.100 & 0.100 & 0.110 & 0.104\\
 & 0.200 & 0.472 & --- & 0.442 & 0.404 & 0.430 & 0.424 & 0.436 & 0.436 & 0.434 & 0.432\\
 & 0.400 & 0.988 & --- & 0.972 & 0.956 & 0.984 & 0.980 & 0.980 & 0.980 & 0.984 & 0.984\\
\addlinespace
\multirow{6}{*}{$H_0^{(4)}$}
 & 0.000 & 0.058 & --- & 0.046 & 0.046 & 0.046 & 0.046 & 0.046 & 0.046 & 0.048 & 0.048\\
 & 0.025 & 0.064 & --- & 0.056 & 0.054 & 0.056 & 0.056 & 0.054 & 0.054 & 0.054 & 0.054\\
 & 0.050 & 0.102 & --- & 0.088 & 0.084 & 0.088 & 0.088 & 0.092 & 0.092 & 0.092 & 0.092\\
 & 0.100 & 0.314 & --- & 0.270 & 0.258 & 0.274 & 0.274 & 0.280 & 0.276 & 0.266 & 0.260\\
 & 0.200 & 0.842 & --- & 0.818 & 0.782 & 0.824 & 0.808 & 0.822 & 0.822 & 0.816 & 0.814\\
 & 0.400 & 1.000 & --- & 1.000 & 1.000 & 1.000 & 1.000 & 1.000 & 1.000 & 1.000 & 1.000\\
\addlinespace
\multirow{6}{*}{$H_0^{(5)}$}
 & 0.000 & 0.056 & --- & 0.050 & 0.050 & 0.050 & 0.048 & 0.048 & 0.048 & 0.046 & 0.046\\
 & 0.025 & 0.058 & --- & 0.048 & 0.048 & 0.050 & 0.048 & 0.050 & 0.050 & 0.052 & 0.052\\
 & 0.050 & 0.088 & --- & 0.072 & 0.068 & 0.074 & 0.074 & 0.072 & 0.072 & 0.076 & 0.076\\
 & 0.100 & 0.222 & --- & 0.178 & 0.162 & 0.190 & 0.186 & 0.184 & 0.184 & 0.184 & 0.188\\
 & 0.200 & 0.744 & --- & 0.704 & 0.664 & 0.704 & 0.686 & 0.708 & 0.706 & 0.698 & 0.696\\
 & 0.400 & 1.000 & --- & 1.000 & 1.000 & 1.000 & 1.000 & 1.000 & 1.000 & 1.000 & 1.000\\
\bottomrule
\end{tabular*}
\end{table}
\end{landscape}

\begin{landscape}
\begin{table}[p]
\centering\fontsize{9}{9}\selectfont
\setlength{\tabcolsep}{5pt}
\renewcommand{\arraystretch}{1.02}
\captionsetup{font=small}
\caption{Empirical rejection probabilities for logistic regression with
$p=1200$, $n_b=100$, $B=8$, and 500 independent replications per value of $h$.}
\label{tab:logistic-p1200}
\begin{tabular*}{\linewidth}{@{\extracolsep{\fill}}llrr*{8}{r}@{}}
\toprule
 & & & & \multicolumn{4}{c}{1-SE} & \multicolumn{4}{c}{CV/GIC}\\
\cmidrule(lr){5-8}\cmidrule(lr){9-12}
 & & & & \multicolumn{2}{c}{Lasso $\to$ LLA}
 & \multicolumn{2}{c}{LLA $\to$ LLA}
 & \multicolumn{2}{c}{Lasso $\to$ LLA}
 & \multicolumn{2}{c}{LLA $\to$ LLA}\\
\cmidrule(lr){5-6}\cmidrule(lr){7-8}
\cmidrule(lr){9-10}\cmidrule(lr){11-12}
Hypothesis & $h$ & {\tt Offline} & {\tt Luo}
 & $T_W$ & $T_S$ & $T_W$ & $T_S$
 & $T_W$ & $T_S$ & $T_W$ & $T_S$\\
\midrule
\multirow{6}{*}{$H_0^{(1)}$}
 & 0.000 & 0.062 & 0.040 & 0.066 & 0.070 & 0.056 & 0.058 & 0.048 & 0.044 & 0.052 & 0.048\\
 & 0.025 & 0.086 & 0.062 & 0.086 & 0.082 & 0.068 & 0.068 & 0.060 & 0.060 & 0.066 & 0.062\\
 & 0.050 & 0.116 & 0.080 & 0.162 & 0.154 & 0.120 & 0.120 & 0.080 & 0.078 & 0.080 & 0.082\\
 & 0.100 & 0.260 & 0.158 & 0.338 & 0.302 & 0.244 & 0.236 & 0.200 & 0.200 & 0.224 & 0.218\\
 & 0.200 & 0.630 & 0.502 & 0.640 & 0.622 & 0.626 & 0.620 & 0.592 & 0.592 & 0.602 & 0.602\\
 & 0.400 & 0.998 & 0.980 & 0.996 & 0.996 & 0.998 & 0.998 & 0.998 & 0.998 & 0.998 & 0.998\\
\addlinespace
\multirow{6}{*}{$H_0^{(2)}$}
 & 0.000 & 0.060 & --- & 0.052 & 0.050 & 0.052 & 0.052 & 0.052 & 0.052 & 0.046 & 0.046\\
 & 0.025 & 0.066 & --- & 0.060 & 0.060 & 0.060 & 0.060 & 0.058 & 0.058 & 0.054 & 0.052\\
 & 0.050 & 0.098 & --- & 0.082 & 0.080 & 0.082 & 0.082 & 0.084 & 0.084 & 0.080 & 0.080\\
 & 0.100 & 0.226 & --- & 0.208 & 0.204 & 0.206 & 0.202 & 0.202 & 0.202 & 0.210 & 0.210\\
 & 0.200 & 0.636 & --- & 0.592 & 0.566 & 0.600 & 0.584 & 0.600 & 0.600 & 0.590 & 0.592\\
 & 0.400 & 0.990 & --- & 0.976 & 0.962 & 0.982 & 0.972 & 0.984 & 0.980 & 0.982 & 0.980\\
\addlinespace
\multirow{6}{*}{$H_0^{(3)}$}
 & 0.000 & 0.062 & --- & 0.058 & 0.058 & 0.052 & 0.050 & 0.042 & 0.040 & 0.046 & 0.044\\
 & 0.025 & 0.066 & --- & 0.078 & 0.078 & 0.060 & 0.060 & 0.042 & 0.042 & 0.046 & 0.044\\
 & 0.050 & 0.076 & --- & 0.114 & 0.118 & 0.086 & 0.090 & 0.060 & 0.060 & 0.058 & 0.060\\
 & 0.100 & 0.174 & --- & 0.228 & 0.202 & 0.156 & 0.158 & 0.108 & 0.108 & 0.118 & 0.110\\
 & 0.200 & 0.480 & --- & 0.472 & 0.430 & 0.444 & 0.428 & 0.430 & 0.428 & 0.442 & 0.440\\
 & 0.400 & 0.988 & --- & 0.974 & 0.964 & 0.980 & 0.978 & 0.978 & 0.978 & 0.982 & 0.982\\
\addlinespace
\multirow{6}{*}{$H_0^{(4)}$}
 & 0.000 & 0.058 & --- & 0.048 & 0.050 & 0.046 & 0.048 & 0.046 & 0.046 & 0.048 & 0.048\\
 & 0.025 & 0.064 & --- & 0.058 & 0.058 & 0.056 & 0.056 & 0.058 & 0.058 & 0.056 & 0.056\\
 & 0.050 & 0.102 & --- & 0.088 & 0.090 & 0.086 & 0.086 & 0.094 & 0.094 & 0.096 & 0.096\\
 & 0.100 & 0.314 & --- & 0.264 & 0.264 & 0.272 & 0.264 & 0.278 & 0.278 & 0.268 & 0.268\\
 & 0.200 & 0.842 & --- & 0.818 & 0.812 & 0.822 & 0.812 & 0.818 & 0.818 & 0.812 & 0.812\\
 & 0.400 & 1.000 & --- & 1.000 & 1.000 & 1.000 & 1.000 & 1.000 & 1.000 & 1.000 & 1.000\\
\addlinespace
\multirow{6}{*}{$H_0^{(5)}$}
 & 0.000 & 0.056 & --- & 0.050 & 0.048 & 0.050 & 0.048 & 0.050 & 0.050 & 0.046 & 0.046\\
 & 0.025 & 0.058 & --- & 0.050 & 0.048 & 0.050 & 0.048 & 0.050 & 0.052 & 0.046 & 0.046\\
 & 0.050 & 0.088 & --- & 0.072 & 0.070 & 0.072 & 0.070 & 0.076 & 0.076 & 0.072 & 0.072\\
 & 0.100 & 0.222 & --- & 0.184 & 0.180 & 0.186 & 0.184 & 0.176 & 0.176 & 0.192 & 0.194\\
 & 0.200 & 0.744 & --- & 0.700 & 0.674 & 0.706 & 0.688 & 0.708 & 0.706 & 0.704 & 0.704\\
 & 0.400 & 1.000 & --- & 1.000 & 1.000 & 1.000 & 1.000 & 1.000 & 1.000 & 1.000 & 1.000\\
\bottomrule
\end{tabular*}
\end{table}
\end{landscape}

\begin{landscape}
\begin{table}[p]
\centering\fontsize{9}{9}\selectfont
\setlength{\tabcolsep}{5pt}
\renewcommand{\arraystretch}{1.02}
\captionsetup{font=small}
\caption{Empirical rejection probabilities for Poisson regression with
$p=400$, $n_b=100$, $B=8$, and 500 independent replications per value of $h$.}
\label{tab:poisson-p400}
\begin{tabular*}{\linewidth}{@{\extracolsep{\fill}}llrr*{8}{r}@{}}
\toprule
 & & & & \multicolumn{4}{c}{1-SE} & \multicolumn{4}{c}{CV/GIC}\\
\cmidrule(lr){5-8}\cmidrule(lr){9-12}
 & & & & \multicolumn{2}{c}{Lasso $\to$ LLA}
 & \multicolumn{2}{c}{LLA $\to$ LLA}
 & \multicolumn{2}{c}{Lasso $\to$ LLA}
 & \multicolumn{2}{c}{LLA $\to$ LLA}\\
\cmidrule(lr){5-6}\cmidrule(lr){7-8}
\cmidrule(lr){9-10}\cmidrule(lr){11-12}
Hypothesis & $h$ & {\tt Offline} & {\tt Luo}
 & $T_W$ & $T_S$ & $T_W$ & $T_S$
 & $T_W$ & $T_S$ & $T_W$ & $T_S$\\
\midrule
\multirow{5}{*}{$H_0^{(1)}$}
 & 0.000 & 0.054 & 0.046 & 0.048 & 0.042 & 0.044 & 0.044 & 0.056 & 0.052 & 0.058 & 0.054\\
 & 0.040 & 0.234 & 0.148 & 0.220 & 0.192 & 0.208 & 0.186 & 0.218 & 0.206 & 0.222 & 0.212\\
 & 0.070 & 0.572 & 0.340 & 0.552 & 0.532 & 0.546 & 0.502 & 0.550 & 0.542 & 0.554 & 0.540\\
 & 0.100 & 0.880 & 0.650 & 0.872 & 0.848 & 0.870 & 0.834 & 0.866 & 0.864 & 0.862 & 0.862\\
 & 0.200 & 1.000 & 0.996 & 1.000 & 1.000 & 1.000 & 1.000 & 1.000 & 1.000 & 1.000 & 1.000\\
\addlinespace
\multirow{5}{*}{$H_0^{(2)}$}
 & 0.000 & 0.054 & --- & 0.052 & 0.044 & 0.054 & 0.040 & 0.062 & 0.062 & 0.056 & 0.056\\
 & 0.040 & 0.174 & --- & 0.164 & 0.156 & 0.170 & 0.152 & 0.168 & 0.168 & 0.174 & 0.174\\
 & 0.070 & 0.480 & --- & 0.442 & 0.390 & 0.438 & 0.372 & 0.470 & 0.470 & 0.476 & 0.476\\
 & 0.100 & 0.820 & --- & 0.804 & 0.682 & 0.802 & 0.670 & 0.832 & 0.830 & 0.818 & 0.816\\
 & 0.200 & 1.000 & --- & 1.000 & 0.998 & 1.000 & 0.998 & 1.000 & 1.000 & 1.000 & 1.000\\
\addlinespace
\multirow{5}{*}{$H_0^{(3)}$}
 & 0.000 & 0.050 & --- & 0.064 & 0.046 & 0.052 & 0.044 & 0.046 & 0.048 & 0.044 & 0.048\\
 & 0.040 & 0.168 & --- & 0.156 & 0.128 & 0.144 & 0.130 & 0.146 & 0.136 & 0.150 & 0.136\\
 & 0.070 & 0.414 & --- & 0.400 & 0.346 & 0.380 & 0.338 & 0.390 & 0.370 & 0.396 & 0.374\\
 & 0.100 & 0.742 & --- & 0.722 & 0.678 & 0.712 & 0.650 & 0.708 & 0.694 & 0.714 & 0.700\\
 & 0.200 & 1.000 & --- & 1.000 & 1.000 & 1.000 & 1.000 & 1.000 & 1.000 & 1.000 & 1.000\\
\addlinespace
\multirow{5}{*}{$H_0^{(4)}$}
 & 0.000 & 0.046 & --- & 0.048 & 0.044 & 0.048 & 0.036 & 0.048 & 0.048 & 0.046 & 0.046\\
 & 0.040 & 0.272 & --- & 0.254 & 0.236 & 0.252 & 0.230 & 0.258 & 0.258 & 0.264 & 0.264\\
 & 0.070 & 0.730 & --- & 0.734 & 0.682 & 0.716 & 0.664 & 0.730 & 0.730 & 0.728 & 0.728\\
 & 0.100 & 0.978 & --- & 0.978 & 0.950 & 0.978 & 0.944 & 0.980 & 0.980 & 0.976 & 0.976\\
 & 0.200 & 1.000 & --- & 1.000 & 1.000 & 1.000 & 1.000 & 1.000 & 1.000 & 1.000 & 1.000\\
\addlinespace
\multirow{5}{*}{$H_0^{(5)}$}
 & 0.000 & 0.046 & --- & 0.048 & 0.036 & 0.048 & 0.038 & 0.048 & 0.048 & 0.046 & 0.046\\
 & 0.040 & 0.196 & --- & 0.186 & 0.156 & 0.186 & 0.148 & 0.194 & 0.194 & 0.190 & 0.190\\
 & 0.070 & 0.610 & --- & 0.604 & 0.550 & 0.596 & 0.530 & 0.604 & 0.604 & 0.610 & 0.610\\
 & 0.100 & 0.944 & --- & 0.938 & 0.892 & 0.936 & 0.884 & 0.940 & 0.940 & 0.940 & 0.940\\
 & 0.200 & 1.000 & --- & 1.000 & 1.000 & 1.000 & 1.000 & 1.000 & 1.000 & 1.000 & 1.000\\
\bottomrule
\end{tabular*}
\end{table}
\end{landscape}

\begin{landscape}
\begin{table}[p]
\centering\fontsize{9}{9}\selectfont
\setlength{\tabcolsep}{5pt}
\renewcommand{\arraystretch}{1.02}
\captionsetup{font=small}
\caption{Empirical rejection probabilities for Poisson regression with
$p=1200$, $n_b=100$, $B=8$, and 500 independent replications per value of $h$.}
\label{tab:poisson-p1200}
\begin{tabular*}{\linewidth}{@{\extracolsep{\fill}}llrr*{8}{r}@{}}
\toprule
 & & & & \multicolumn{4}{c}{1-SE} & \multicolumn{4}{c}{CV/GIC}\\
\cmidrule(lr){5-8}\cmidrule(lr){9-12}
 & & & & \multicolumn{2}{c}{Lasso $\to$ LLA}
 & \multicolumn{2}{c}{LLA $\to$ LLA}
 & \multicolumn{2}{c}{Lasso $\to$ LLA}
 & \multicolumn{2}{c}{LLA $\to$ LLA}\\
\cmidrule(lr){5-6}\cmidrule(lr){7-8}
\cmidrule(lr){9-10}\cmidrule(lr){11-12}
Hypothesis & $h$ & {\tt Offline} & {\tt Luo}
 & $T_W$ & $T_S$ & $T_W$ & $T_S$
 & $T_W$ & $T_S$ & $T_W$ & $T_S$\\
\midrule
\multirow{5}{*}{$H_0^{(1)}$}
 & 0.000 & 0.062 & 0.042 & 0.068 & 0.046 & 0.056 & 0.044 & 0.056 & 0.044 & 0.062 & 0.044\\
 & 0.040 & 0.264 & 0.118 & 0.246 & 0.198 & 0.218 & 0.192 & 0.246 & 0.200 & 0.254 & 0.208\\
 & 0.070 & 0.594 & 0.314 & 0.586 & 0.534 & 0.546 & 0.524 & 0.572 & 0.542 & 0.560 & 0.536\\
 & 0.100 & 0.892 & 0.590 & 0.878 & 0.854 & 0.866 & 0.838 & 0.870 & 0.866 & 0.878 & 0.874\\
 & 0.200 & 1.000 & 0.984 & 1.000 & 1.000 & 1.000 & 1.000 & 1.000 & 1.000 & 1.000 & 1.000\\
\addlinespace
\multirow{5}{*}{$H_0^{(2)}$}
 & 0.000 & 0.054 & --- & 0.050 & 0.048 & 0.052 & 0.042 & 0.054 & 0.054 & 0.052 & 0.052\\
 & 0.040 & 0.174 & --- & 0.158 & 0.146 & 0.166 & 0.140 & 0.176 & 0.176 & 0.176 & 0.176\\
 & 0.070 & 0.480 & --- & 0.436 & 0.384 & 0.444 & 0.356 & 0.458 & 0.458 & 0.470 & 0.470\\
 & 0.100 & 0.820 & --- & 0.798 & 0.706 & 0.804 & 0.674 & 0.822 & 0.820 & 0.824 & 0.820\\
 & 0.200 & 1.000 & --- & 1.000 & 1.000 & 1.000 & 0.996 & 1.000 & 1.000 & 1.000 & 1.000\\
\addlinespace
\multirow{5}{*}{$H_0^{(3)}$}
 & 0.000 & 0.076 & --- & 0.082 & 0.046 & 0.060 & 0.042 & 0.052 & 0.044 & 0.056 & 0.042\\
 & 0.040 & 0.192 & --- & 0.186 & 0.124 & 0.146 & 0.112 & 0.160 & 0.136 & 0.166 & 0.138\\
 & 0.070 & 0.452 & --- & 0.424 & 0.356 & 0.388 & 0.350 & 0.414 & 0.360 & 0.414 & 0.372\\
 & 0.100 & 0.748 & --- & 0.734 & 0.694 & 0.724 & 0.660 & 0.722 & 0.700 & 0.720 & 0.710\\
 & 0.200 & 1.000 & --- & 1.000 & 1.000 & 1.000 & 1.000 & 1.000 & 0.998 & 1.000 & 1.000\\
\addlinespace
\multirow{5}{*}{$H_0^{(4)}$}
 & 0.000 & 0.046 & --- & 0.046 & 0.042 & 0.048 & 0.040 & 0.048 & 0.048 & 0.046 & 0.046\\
 & 0.040 & 0.272 & --- & 0.250 & 0.228 & 0.248 & 0.218 & 0.266 & 0.266 & 0.264 & 0.264\\
 & 0.070 & 0.730 & --- & 0.736 & 0.690 & 0.724 & 0.640 & 0.740 & 0.740 & 0.738 & 0.738\\
 & 0.100 & 0.978 & --- & 0.980 & 0.964 & 0.978 & 0.944 & 0.976 & 0.976 & 0.978 & 0.978\\
 & 0.200 & 1.000 & --- & 1.000 & 1.000 & 1.000 & 1.000 & 1.000 & 1.000 & 1.000 & 1.000\\
\addlinespace
\multirow{5}{*}{$H_0^{(5)}$}
 & 0.000 & 0.046 & --- & 0.048 & 0.044 & 0.048 & 0.042 & 0.048 & 0.048 & 0.048 & 0.048\\
 & 0.040 & 0.196 & --- & 0.194 & 0.170 & 0.186 & 0.148 & 0.202 & 0.202 & 0.210 & 0.210\\
 & 0.070 & 0.610 & --- & 0.596 & 0.538 & 0.590 & 0.510 & 0.608 & 0.608 & 0.620 & 0.620\\
 & 0.100 & 0.944 & --- & 0.942 & 0.910 & 0.936 & 0.872 & 0.938 & 0.938 & 0.940 & 0.940\\
 & 0.200 & 1.000 & --- & 1.000 & 1.000 & 1.000 & 1.000 & 1.000 & 1.000 & 1.000 & 1.000\\
\bottomrule
\end{tabular*}
\end{table}
\end{landscape}

\subsubsection{Additional real data analysis}\label{sec:real data poisson}
We return to the Capital Bikeshare data analyzed in Section~\mainref{sec:real-data} for a second analysis using Poisson regression.
We model the number of arrivals at each ending rental station at each hour in the previous dataset. Our training data consist of 41,354 rental station-hour pairings from January 2024 and our test data consist of 38,686 such pairs from February 2024.
The data contain 73 predictors including the daily characteristics, end time (categorized as before), and end station indicators from the previous analysis.

We use the same dates to define batches and the same tuning approach as in the logistic regression analysis. We likewise make predictions on the test data after fitting to each batch and compute the loss for those predictions. Figure \ref{fig:bikeshare_pois_pred} plots the batch number against the loss. We observe a generally decreasing trend, though there is a notable dip in the loss with the second batch followed by an increase with the third batch. This could be due to covariate drift, with the second batch better resembling conditions in the test batch than later batches.

For our hypothesis test, we again examine the effect of rush hour, testing whether the mean number of arrivals changes during rush hour -- as before, the null hypothesis is H$_0: (\beta_{\rm MR}, \beta_{\rm AR})^T = (0,0)^T$.
Using 5-step LLA estimators with GIC tuning, the online partial penalized Wald and score statistics are $T_W = 7378.25$ and $T_S = 7377.91$, respectively, giving us $p$-values $< 2.2\times10^{-16}$ and providing overwhelming evidence that rush hour is a significant predictor of the mean number of arrivals.

\begin{figure}[t]
    \centering
    \includegraphics[width=0.5\linewidth]{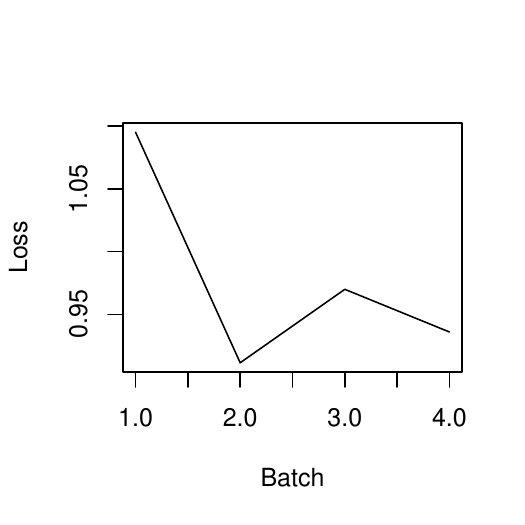}
    \caption{Loss on test data for predictions made after training on each batch in Capital Bikeshare Poisson regression analysis.}
    \label{fig:bikeshare_pois_pred}
\end{figure}

\subsection{Tobit model}

For $j=1,\ldots,B$, let
$x_i^{(j)}=(1,x_{i1}^{(j)},\ldots,x_{ip}^{(j)})^\top\in\mathbb R^{p+1}$ and
suppose that
\begin{equation}
\begin{split}
Y_i^{*(j)}
&=(x_i^{(j)})^\top\beta_{\mathrm T}^*
+\sigma^*\varepsilon_i^{(j)},
\\
Y_i^{(j)}&=\max\{Y_i^{*(j)},0\},
\qquad
D_i^{(j)}=\mathbf 1\{Y_i^{(j)}>0\}.
\end{split}
\label{eq:tobit-latent-model}
\end{equation}
The pairs $(Y_i^{(j)},x_i^{(j)})$ are independent copies of $(Y,X)$, and the errors are independent $N(0,1)$
variables independent of the covariates. All expectations and probabilities
below are with respect to this joint distribution.  Set
\$
\alpha^*=\frac{\beta_{\mathrm T}^*}{\sigma^*},
\qquad
\tau^*=\frac1{\sigma^*},
\qquad
\theta^*=((\alpha^*)^\top,\tau^*)^\top,
\qquad
\Theta_{\mathrm T}=\mathbb R^{p+1}\times(0,\infty),
\$
and $d_{\mathrm T}=p+2$.  For $y\ge0$, write $D(y)=\mathbf 1(y>0)$ and define
\$
\ell_{\mathrm T}(\theta;y,x)
= D(y)\left\{-\log\tau+\frac12(\tau y-x^\top\alpha)^2\right\} -\{1-D(y)\}\log\Phi(-x^\top\alpha),
\$
where $\theta=(\alpha^\top,\tau)^\top\in\Theta_{\mathrm T}$.
This is the Olsen parametrization \citep{olsen1978note} in Section~3 of
\citet{jacobsonzou2024tobit} and Section~3.1 of
\citet{jacobsonzou2025testing}.  In this subsection, $\theta$ replaces the
generic parameter in Sections~\mainref{sec:method}--\mainref{sec:theory},
$d_{\mathrm T}$ is its ambient dimension, and
\$
\widehat{\mc L}^{(j)}(\theta)
=\frac1{n_j}\sum_{i=1}^{n_j}
\ell_{\mathrm T}(\theta;Y_i^{(j)},x_i^{(j)}),
\$
and with optimization subject to $\Theta_{\mathrm T}$.  In particular,
$\widehat{\mc Q}_j$, $\omegab^{(j)}$, $\phi^{(j)}$, $\BB(R)$, and $\CC(l)$
have the same definitions in Sections~\mainref{sec:method}--\mainref{sec:theory}.

Let $\varphi$ denote the standard normal density and put
\begin{equation}
\mathfrak g(t)=\frac{\varphi(t)}{\Phi(t)},
\qquad
\mathfrak h(t)=\mathfrak g(t)\{t+\mathfrak g(t)\}.
\label{eq:tobit-mills-definitions}
\end{equation}
A direct differentiation and the conditional variance formula for a truncated
standard normal variable give
\begin{equation}
\mathfrak g'(t)=-\mathfrak h(t),
\qquad
0<\mathfrak h(t)<1,
\qquad t\in\mathbb R,
\label{eq:tobit-mills-properties-a}
\end{equation}
and Lemma~S.6 of \citet{jacobsonzou2024tobit} gives
\begin{equation}
\sup_{t\in\mathbb R}|\mathfrak h'(t)|
=\sup_{t\in\mathbb R}|\mathfrak g''(t)|<4.3.
\label{eq:tobit-mills-properties}
\end{equation}
The score and Hessian are
\begin{equation}
\begin{split}
\nabla_\alpha\ell_{\mathrm T}(\theta;y,x)
&=D(y)x(x^\top\alpha-\tau y)
+\{1-D(y)\}\mathfrak g(-x^\top\alpha)x,
\\
\nabla_\tau\ell_{\mathrm T}(\theta;y,x)
&=D(y)\{-\tau^{-1}+y(\tau y-x^\top\alpha)\},
\end{split}
\label{eq:tobit-score-formula}
\end{equation}
and
\begin{equation}
\begin{split}
\nabla^2\ell_{\mathrm T}(\theta;y,x)
={}&D(y)
\begin{pmatrix}
xx^\top&-yx\\[-1mm]
-yx^\top&y^2+\tau^{-2}
\end{pmatrix}
+\{1-D(y)\}
\begin{pmatrix}
\mathfrak h(-x^\top\alpha)xx^\top&0\\
0&0
\end{pmatrix}.
\end{split}
\label{eq:tobit-hessian-formula}
\end{equation}
We remark that both matrices on the right-hand side are positive semidefinite \citep{jacobsonzou2024tobit,jacobsonzou2025testing}.  Hence
$\ell_{\mathrm T}(\cdot;y,x)$ and $\widehat{\mc L}^{(j)}$ are convex on
$\Theta_{\mathrm T}$.

Let
\$
\mc P_{\mathrm T}=\{1,\ldots,p\},
\qquad
\mc U_{\mathrm T}=\{0,p+1\},
\qquad
\mc A_{\mathrm T}
=\{k\in\mc P_{\mathrm T}:\alpha_k^*\ne0\}\cup\mc U_{\mathrm T},
\qquad
a_{\mathrm T}=|\mc A_{\mathrm T}|.
\$
The initializer is obtained from \maineqref{def:bth.estimator} by replacing its
penalty by $\lambda_j\|\alpha_{\mc P_{\mathrm T}}\|_1$.  Thus the intercept
and scale coordinates are unpenalized.  
Recall that the proof of
Theorem~\mainref{thm:batch_b_init} uses the support cone only through the KKT
inequality.  In the present parametrization that inequality is
\begin{equation}
\|u_{\mc A_{\mathrm T}^c}\|_1
\le3\|u_{\mc A_{\mathrm T}}\|_1,
\qquad\text{which implies}\qquad
u\in\CC(4a_{\mathrm T}^{1/2}).
\label{eq:tobit-main-cone-reduction}
\end{equation}
Indeed, the subgradient is zero on $\mc U_{\mathrm T}$ and
$\theta^*_{\mc A_{\mathrm T}^c}=0$; the calculation in
\eqref{eq:thm1-subgradient}--\eqref{eq:thm1-global-cone} is otherwise
unchanged.  Accordingly, all localization statements below use the
conic constraint $\CC(4a_{\mathrm T}^{1/2})$.

The linear hypothesis of interest transforms as
\$
\Cb\beta_{\mathrm T,\mc M}^*=\tb
\quad\Longleftrightarrow\quad
\widetilde{\Cb}_{\mathrm T}\theta_{\mc M_{\mathrm T}}^*=0,
\qquad
\mc M_{\mathrm T}=\{0\}\cup\mc M\cup\{p+1\},
\quad
\widetilde{\Cb}_{\mathrm T}
=\begin{pmatrix}\mathbf0_{r\times1}&\Cb&-\tb\end{pmatrix},
\$
where the coordinates in $\theta_{\mc M_{\mathrm T}}$ are ordered as
$(\alpha_0,\alpha_{\mc M}^\top,\tau)^\top$ (see Section~3.1 of \citet{jacobsonzou2025testing}).

\subsubsection{Assumptions for the Tobit model}

For $j\le B$ and $i\le n_j$, put
$\eta_i^{(j)}=(x_i^{(j)})^\top\alpha^*$ and
\begin{equation}
\zeta_i^{(j)}
=D_i^{(j)}\begin{pmatrix}-x_i^{(j)}\\Y_i^{(j)}\end{pmatrix},
\qquad
\widehat\Sigma_{\mathrm T}^{(j)}
=\frac1{n_j}\sum_{i=1}^{n_j}\zeta_i^{(j)}(\zeta_i^{(j)})^\top,
\qquad
\Sigma_{\mathrm T}^{(j)}
=\mathbb E\widehat\Sigma_{\mathrm T}^{(j)}.
\label{eq:tobit-augmented-gram}
\end{equation}
We impose the following conditions.
\begin{enumerate}[label=(Tobit-\arabic*),leftmargin=3\parindent]
\item\label{Tobitassm:model}
The model in \eqref{eq:tobit-latent-model} holds and, for constants
$0<\underline\tau\le\overline\tau<\infty$,
\begin{equation}
\underline\tau\le\tau^*\le\overline\tau.
\label{eq:tobit-scale-bounds}
\end{equation}

\item\label{Tobitassm:design}
For constants $B_1\ge1$ and $B_\eta<\infty$, almost surely,
\begin{equation}
\max_{\substack{j\le B,\ i\le n_j\\0\le k\le p}}
|x_{ik}^{(j)}|\le B_1^{1/2},
\qquad
\max_{j\le B,\ i\le n_j}|\eta_i^{(j)}|\le B_\eta^{1/2}.
\label{eq:tobit-bahadur-design-bounds}
\end{equation}

\item\label{Tobitassm:expected-re}
For a constant $\underline\kappa_{\mathrm T}>0$,
\begin{equation}
\min_{j\le B}
\inf_{\substack{u\ne0:\ 
\|u_{\mc A_{\mathrm T}^c}\|_1
\le3\|u_{\mc A_{\mathrm T}}\|_1}}
\frac{u^\top\Sigma_{\mathrm T}^{(j)}u}{\|u\|_2^2}
\ge\underline\kappa_{\mathrm T}.
\label{eq:tobit-expected-re}
\end{equation}
\end{enumerate}
Condition \eqref{eq:tobit-expected-re} is the batchwise population version
of (A0) in \citet{jacobsonzou2024tobit}. The two bounds in
\eqref{eq:tobit-bahadur-design-bounds} are the uniform online versions of
the last two bounds in (A4) of \citet{jacobsonzou2025testing};
\eqref{eq:tobit-design-envelopes} follows from them. The constants do not
depend on $p$, $n_j$, or $B$.  
From \ref{Tobitassm:design}, for every $j\le B$, one has
\begin{equation} 
\begin{split}
\max_{0\le k\le p}\frac1{n_j}\sum_{i=1}^{n_j}(x_{ik}^{(j)})^2
&\le B_1,
\qquad
\frac1{n_j}\sum_{i=1}^{n_j}(\eta_i^{(j)})^2\le B_\eta,
\\
\max_{0\le k,l\le p}\frac1{n_j}\sum_{i=1}^{n_j}
(x_{ik}^{(j)})^2(x_{il}^{(j)})^2&\le B_{1}^2.
\end{split}
\label{eq:tobit-design-envelopes}
\end{equation}

\subsubsection{Verification of the initializer theorem}

Fix $b\in\{1,\ldots,B\}$. Integrating the single observation bounds in Lemmas~S.1--S.2 of \citet{jacobsonzou2024tobit} over the covariate row and
using \eqref{eq:tobit-bahadur-design-bounds} yields
\begin{equation}
\begin{split}
\mathbb E\nabla\ell_{\mathrm T}(\theta^*;Y_i^{(j)},x_i^{(j)})&=0,
\\
\mathbb E\exp\{t[\nabla_\alpha\ell_{\mathrm T}
(\theta^*;Y_i^{(j)},x_i^{(j)})]_k\}
&\le\exp(t^2B_1/2),\qquad t\in\mathbb R,
\\
\mathbb E\exp\{t\tau^*\nabla_\tau\ell_{\mathrm T}
(\theta^*;Y_i^{(j)},x_i^{(j)})\}
&\le\exp\{(16+4B_\eta)t^2/2\},\qquad |t|<1/4.
\end{split}
\label{eq:tobit-score-mgf}
\end{equation}
Let
\begin{equation}
V_{\mathrm T}=16+4B_\eta,
\qquad
\lambda_j=C\left(\frac{\log d_{\mathrm T}}{N_j}\right)^{1/2}.
\label{eq:tobit-lambda}
\end{equation}
If
\begin{equation}
C^2\ge64\max\left\{B_1,\frac{V_{\mathrm T}}{\underline\tau^2}\right\},
\qquad
C\ge\frac{64}{\underline\tau}
\left(\frac{\log d_{\mathrm T}}{n_1}\right)^{1/2},
\label{eq:tobit-C-conditions}
\end{equation}
then multiplication of the moment generating functions in
\eqref{eq:tobit-score-mgf}, followed by coordinatewise Chernoff bounds, gives
\begin{equation}
\mathbb P\left\{
\max_{1\le j\le b}\frac{\|\omegab^{(j)}\|_\infty}{\lambda_j}
\le\frac14\right\}
\ge1-\frac{2b}{d_{\mathrm T}},
\qquad b\le B.
\label{eq:tobit-simultaneous-score-event}
\end{equation}
Let $c>0$ be sufficiently small, depending only on the constants in
\ref{Tobitassm:model}--\ref{Tobitassm:design}, and define
\begin{equation*}
\begin{split}
\pi_{\mathrm T,j}(t)
&=2d_{\mathrm T}^2\exp\{-c n_j\min(t^2,t)\},\qquad t>0,
\\
\Pi_{\mathrm T,b}(t)&=\sum_{j=1}^b\pi_{\mathrm T,j}(t),
\qquad t_{\mathrm T}=\frac{\underline\kappa_{\mathrm T}}{32a_{\mathrm T}}.
\end{split}
\end{equation*}
The entries of $\zeta_i^{(j)}(\zeta_i^{(j)})^\top$ are
$D x_kx_l$, $-x_kY$, and $Y^2$. Since
$0\le\tau^*Y\le B_\eta^{1/2}+|\varepsilon|$,
\eqref{eq:tobit-bahadur-design-bounds} and the Gaussian moments bound the
absolute $m$th moments of the centered entries of both $\zeta\zeta^\top$
and $(1+Y^2)XX^\top$ by $m!C^m$, for every $m\ge2$.
Here and below, $C$ denotes a sufficiently large constant depending only on
the stated model and population constants. The two symmetric matrices have
$d_{\mathrm T}(d_{\mathrm T}+1)/2+(d_{\mathrm T}-1)d_{\mathrm T}/2
=d_{\mathrm T}^2$ upper-triangular entries in total. Bernstein's inequality
and a union bound over these entries and batches therefore give
\begin{equation}
\begin{split}
&\mathbb P\Bigg\{\max_{1\le j\le b}\Bigg[
\|\widehat\Sigma_{\mathrm T}^{(j)}-\Sigma_{\mathrm T}^{(j)}\|_{\max}
\\
&\qquad\qquad\qquad\vee\left\|\frac1{n_j}\sum_{i=1}^{n_j}
\{1+(Y_i^{(j)})^2\}x_i^{(j)}(x_i^{(j)})^\top
-\mathbb E\{(1+Y^2)XX^\top\}\right\|_{\max}
\Bigg]\le t_{\mathrm T}\Bigg\}\\
&\quad\ge1-\Pi_{\mathrm T,b}(t_{\mathrm T}).
\end{split}
\label{eq:tobit-gram-event-probability}
\end{equation}
Let $l_{\mathrm T}=4a_{\mathrm T}^{1/2}$.  On the event in
\eqref{eq:tobit-gram-event-probability}, condition
\ref{Tobitassm:expected-re} and \eqref{eq:tobit-main-cone-reduction} imply
\begin{equation}
u^\top\widehat\Sigma_{\mathrm T}^{(j)}u
\ge\frac{\underline\kappa_{\mathrm T}}2\|u\|_2^2
\label{eq:tobit-empirical-re}
\end{equation}
for every $j\le b$ and every $u$ satisfying the first inequality in
\eqref{eq:tobit-main-cone-reduction}.  Equation
\eqref{eq:tobit-hessian-formula} also gives, for every
$\theta\in\Theta_{\mathrm T}$,
\begin{equation}
\nabla^2\widehat{\mc L}^{(j)}(\theta)
\succeq\widehat\Sigma_{\mathrm T}^{(j)}.
\label{eq:tobit-global-rsc-lower-bound}
\end{equation}
Thus \eqref{eq:tobit-empirical-re} supplies the restricted curvature used in
the proof of Theorem~\mainref{thm:batch_b_init}.

Under Assumption~\mainref{assm:hessian},
\eqref{eq:tobit-hessian-formula} and
\eqref{eq:tobit-bahadur-design-bounds} imply
\begin{align*}
\Phi(-B_\eta^{1/2})\mathbb E(XX^\top)
&\preceq\mathbb E\nabla^2_{\alpha,\alpha}
\ell_{\mathrm T}(\theta^*;Y,X)\preceq J_u\Ib_{p+1},
\\
\mathbb E\{(1+Y^2)XX^\top\}
&\preceq
\frac{1+(B_\eta+1)/\underline\tau^2}{\Phi(-B_\eta^{1/2})}
J_u\Ib_{p+1},
\end{align*}
where the second line uses
$\mathbb E(Y^2\mid X)\le(B_\eta+1)/\underline\tau^2$.
On the event in \eqref{eq:tobit-gram-event-probability}, every
$v\in\CC(l_{\mathrm T})$ therefore satisfies
\[
\frac1{n_j}\sum_{i=1}^{n_j}\{1+(Y_i^{(j)})^2\}
\{(x_i^{(j)})^\top v_\alpha\}^2
\le\left\{
\frac{1+(B_\eta+1)/\underline\tau^2}{\Phi(-B_\eta^{1/2})}J_u
+\frac{\underline\kappa_{\mathrm T}}2\right\}\|v\|_2^2,
\]
since $\|v_\alpha\|_1^2\le16a_{\mathrm T}\|v\|_2^2$.
For $0<R_0\le\underline\tau/2$, the Cauchy--Schwarz inequality and \eqref{eq:tobit-mills-properties} consequently imply, for
$v,w\in\BB(R_0)\cap\CC(l_{\mathrm T})$ and $j\le b$,
\begin{equation}
\left\|\{\nabla^2\widehat{\mc L}^{(j)}(\theta^*+v)
-\nabla^2\widehat{\mc L}^{(j)}(\theta^*)\}w\right\|_\infty
\le M_{\mathrm T}\|v\|_2\|w\|_2,
\label{eq:tobit-rlh-bound}
\end{equation}
where
\$
M_{\mathrm T}
=\max\left\{
4.3B_1^{1/2}\left[
\frac{1+(B_\eta+1)/\underline\tau^2}{\Phi(-B_\eta^{1/2})}J_u
+\frac{\underline\kappa_{\mathrm T}}2\right],
2(\underline\tau-R_0)^{-3}\right\}.
\$
For the $\alpha$ coordinates, use $|x_k|\le B_1^{1/2}$ in the empirical
mean of $4.3|x_k(x^\top v_\alpha)(x^\top w_\alpha)|$.
For the $\tau$ coordinate, use
$|\{(\tau^*+v_\tau)^{-2}-(\tau^*)^{-2}\}w_\tau|
\le2(\underline\tau-R_0)^{-3}|v_\tau w_\tau|$.

\begin{proposition}
\label{prop:tobit-initializer-verification}
Suppose that \ref{Tobitassm:model}--\ref{Tobitassm:expected-re},
Assumption~\mainref{assm:hessian}, \eqref{eq:tobit-lambda}, and
\eqref{eq:tobit-C-conditions} hold.  Let
$0<\kappa\le\underline\kappa_{\mathrm T}/2$ and suppose that
\$
\max_{1\le j\le b}\frac{3a_{\mathrm T}^{1/2}\lambda_j}{2\kappa}
<R_0\le\frac{\underline\tau}{2},
\$
and, for $j=2,\ldots,b$,
\begin{equation}
\left(\frac{N_j}{a_{\mathrm T}^2\log d_{\mathrm T}}\right)^{1/2}
\ge\frac{27CM_{\mathrm T}}{2\kappa^2}
\left\{1+\log\left(\frac{N_{j-1}}{n_1}\right)\right\}.
\label{eq:tobit-online-sample-condition}
\end{equation}
Then, with probability at least
\$
1-\frac{2b}{d_{\mathrm T}}-\Pi_{\mathrm T,b}(t_{\mathrm T}),
\$
every initializer exists and, simultaneously for $j=1,\ldots,b$,
\begin{equation}
\|\widehat\theta^{(j)}-\theta^*\|_2
\le\frac{3a_{\mathrm T}^{1/2}\lambda_j}{2\kappa},
\qquad
\|\widehat\theta^{(j)}-\theta^*\|_1
\le\frac{6a_{\mathrm T}\lambda_j}{\kappa},
\qquad
\widehat\theta^{(j)}-\theta^*\in\BB(R_0)\cap\CC(l_{\mathrm T}).
\label{eq:tobit-final-rates}
\end{equation}
\end{proposition}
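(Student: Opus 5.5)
The plan is to mimic the proof of Corollary~\mainref{cor:batch_b_init} (equivalently Proposition~\ref{prop:glm-initializer}) by induction over batches. We work on the intersection of the score event \eqref{eq:tobit-simultaneous-score-event} and the Gram event \eqref{eq:tobit-gram-event-probability}. Its probability is at least $1-2b/d_{\mathrm T}-\Pi_{\mathrm T,b}(t_{\mathrm T})$ by a union bound. On this event, for every $j\le b$, we have $\|\omegab^{(j)}\|_\infty\le\lambda_j/4$. We also have the restricted curvature \eqref{eq:tobit-empirical-re}, which by \eqref{eq:tobit-global-rsc-lower-bound} holds at every $\theta\in\Theta_{\mathrm T}$, not only locally. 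Finally, the Lipschitz Hessian bound \eqref{eq:tobit-rlh-bound} holds with constant $M_{\mathrm T}$ on $\BB(R_0)\cap\CC(l_{\mathrm T})$.

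For $j=1$ we have $\phi^{(1)}=0$, so the tuning condition of Theorem~\mainref{thm:batch_b_init} holds. Now assume the rates \eqref{eq:tobit-final-rates} through batch $j-1$. Lemma~\ref{lem:batch.size} then gives
\[
\phi^{(j)}\le \frac{9C^2a_{\mathrm T}\log d_{\mathrm T}}{4\kappa^2N_j}\Bigl\{1+\log\bigl(\tfrac{N_{j-1}}{n_1}\bigr)\Bigr\},
\]
and condition \eqref{eq:tobit-online-sample-condition} yields $\tfrac32 M_{\mathrm T}\phi^{(j)}\le\lambda_j/4$. Together these verify $\lambda_j\ge2(\|\omegab^{(j)}\|_\infty+\tfrac32M_{\mathrm T}\phi^{(j)})$. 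The historical displacements lie in $\BB(R_0)\cap\CC(l_{\mathrm T})$ by the induction hypothesis. We therefore rerun the proof of Theorem~\mainref{thm:batch_b_init} with the partial $\ell_1$ penalty. As noted around \eqref{eq:tobit-main-cone-reduction}, the subgradient vanishes on $\mc U_{\mathrm T}$, so the KKT step gives the cone $\|u_{\mc A_{\mathrm T}^c}\|_1\le3\|u_{\mc A_{\mathrm T}}\|_1$ and hence $u\in\CC(l_{\mathrm T})$. Lemma~\ref{lem:decomposition}, with the Tobit RLH bound, controls $\nabla\widehat{\mc Q}_j(\theta^*)-\omegab^{(j)}$. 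The curvature step follows Lemma~\ref{lem:surrogate-curvature-inheritance}, with the RSC now supplied by \eqref{eq:tobit-empirical-re}--\eqref{eq:tobit-global-rsc-lower-bound} and $\kappa\le\underline\kappa_{\mathrm T}/2$. The localization argument with $\eta$ and the strict inequality $\tfrac32 a_{\mathrm T}^{1/2}\lambda_j/\kappa<R_0$ then give the $\ell_2$ bound. The $\ell_1$ bound follows from the cone.

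The main obstacle is that $\Theta_{\mathrm T}$ is the open half-space $\tau>0$, so both existence of the initializer and the segment arguments must stay inside the domain. We will handle this as follows.
\begin{itemize}
\item \emph{Domain.} Since $R_0\le\underline\tau/2$, the ball $\theta^*+\BB(R_0)$ lies in $\{\tau\ge\underline\tau/2\}$. All segments used, including those from $\theta^*$ to the truncated point $\theta^*+\eta\widehat\deltab$, therefore remain in $\Theta_{\mathrm T}$, and the RLH constant $2(\underline\tau-R_0)^{-3}$ is valid.
\item \emph{Existence.} Because $-\log\tau\to\infty$ as $\tau\downarrow0$, the penalized surrogate is coercive toward the boundary. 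For existence we minimize over the compact set $\theta^*+\BB(R_0)$, which is contained in the domain. By the localization argument, any such minimizer satisfies the strict bound and so lies in the interior of the ball. Convexity then makes it a global minimizer.
\item \emph{Uniqueness of the bound.} Any other global minimizer obeys the same bound by convexity applied along the segment joining the two minimizers.
\end{itemize}
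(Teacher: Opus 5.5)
Your proposal is correct and takes essentially the same route as the paper: intersect the score and Gram events, induct over batches using Lemma~\ref{lem:batch.size} and \eqref{eq:tobit-online-sample-condition} to obtain $\tfrac32M_{\mathrm T}\phi^{(j)}\le\lambda_j/4$, and rerun the proof of Theorem~\mainref{thm:batch_b_init} with the KKT support cone \eqref{eq:tobit-main-cone-reduction} and the global curvature bound \eqref{eq:tobit-empirical-re}--\eqref{eq:tobit-global-rsc-lower-bound}. Your explicit handling of the domain and of existence, by minimizing over the compact ball $\theta^*+\BB(R_0)$ as the paper does in the Box--Cox case, refines the paper's one-line appeal to the GLM argument, whose $\ell_1$-coercivity does not by itself cover the unpenalized intercept and scale coordinates; when you apply the localization argument to a ball-restricted minimizer $\bar\theta$, the KKT condition at a boundary point gains a normal-cone term $\mu(\bar\theta-\theta^*)$ with $\mu\ge0$, which only strengthens the basic inequality.
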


\begin{proof}
Intersect the events in \eqref{eq:tobit-simultaneous-score-event} and
\eqref{eq:tobit-gram-event-probability}.  Inductively, the bounds through
batch $j-1$ and Lemma~\ref{lem:batch.size} give
\$
\phi^{(j)}
\le\frac{9C^2a_{\mathrm T}\log d_{\mathrm T}}{4\kappa^2N_j}
\left\{1+\log\left(\frac{N_{j-1}}{n_1}\right)\right\}.
\$
Consequently, \eqref{eq:tobit-online-sample-condition} gives
$(3/2)M_{\mathrm T}\phi^{(j)}\le\lambda_j/4$, while
\eqref{eq:tobit-simultaneous-score-event} gives
$\|\omegab^{(j)}\|_\infty\le\lambda_j/4$.
Equation~\eqref{eq:tobit-main-cone-reduction} gives the required cone,
and \eqref{eq:tobit-empirical-re}--\eqref{eq:tobit-global-rsc-lower-bound}
give curvature at least $\kappa$ on that cone for the batch and surrogate
losses. The argument in the proof of
Proposition~\ref{prop:glm-initializer} therefore gives the bounds in
\eqref{eq:tobit-final-rates}.
\end{proof}

\subsubsection{Verification of the tightening-stage theorem}
\label{sec:tobit_tightening_verification}

Let
\begin{equation*}
\begin{split}
\mc S_{\mathrm T}
&=\{j\in\mc P_{\mathrm T}\setminus\mc M:\alpha_j^*\ne0\},
\qquad
s_{\mathrm T}=|\mc S_{\mathrm T}|,
\\
\overline{\mc M}_{\mathrm T}
&=\mc M_{\mathrm T}\cup\mc S_{\mathrm T},
\qquad
m_{\mathrm T}=m+2,
\qquad
q_{\mathrm T}=m_{\mathrm T}+s_{\mathrm T}.
\end{split}
\end{equation*}
Then $\mc A_{\mathrm T}\subseteq\overline{\mc M}_{\mathrm T}$ and
$a_{\mathrm T}\le q_{\mathrm T}$.  The reduced and full LLA iterates and
their oracle targets are those in \maineqref{def:LLA.0}, \maineqref{def:LLA.a},
\maineqref{def:constrained_oracle_estimator}, and
\maineqref{def:unconstrained_oracle_estimator}, after the substitutions
\begin{equation}
\begin{split}
(\betab,\mc M,\overline{\mc M})
=(\theta,\mc M_{\mathrm T},\overline{\mc M}_{\mathrm T}),\qquad
(\Cb,\tb)=(\widetilde{\Cb}_{\mathrm T},\mathbf0).
\end{split}
\label{eq:tobit-terminal-main-substitution}
\end{equation}
We denote them by
$\widehat\theta_{\mathrm T,e}^{({\rm LLA},k)}$ and
$\widehat\theta_{\mathrm T,e}^{\rm ora}$, $e\in\{0,a\}$.
Suppose that the local alternative in Assumption~\mainref{assm:local_alternative}
holds for $\beta_{\mathrm T}^*$.  Define
\$
\hb_{\mathrm T,B}=\widetilde{\Cb}_{\mathrm T}
\theta_{\mc M_{\mathrm T}}^*=\tau^*\hb_B,
\qquad
\theta_{\mathrm T,\mc M_{\mathrm T}}^\dagger
=\theta_{\mc M_{\mathrm T}}^*
-\widetilde{\Cb}_{\mathrm T}^\top
(\widetilde{\Cb}_{\mathrm T}\widetilde{\Cb}_{\mathrm T}^\top)^{-1}
\hb_{\mathrm T,B},
\qquad
\theta_{\mathrm T,\mc M_{\mathrm T}^c}^\dagger
=\theta_{\mc M_{\mathrm T}^c}^*,
\$
and $d_{\mathrm T,B}=C_{\max}\|\hb_{\mathrm T,B}\|_2$.  Since
$\widetilde{\Cb}_{\mathrm T}\widetilde{\Cb}_{\mathrm T}^\top
=\Cb\Cb^\top+\tb\tb^\top\succeq\Cb\Cb^\top$,
\begin{equation}
\widetilde{\Cb}_{\mathrm T}\theta_{\mathrm T,\mc M_{\mathrm T}}^\dagger=0,
\qquad
\|\theta_{\mathrm T}^\dagger-\theta^*\|_2
\le d_{\mathrm T,B}.
\label{eq:tobit-terminal-projection-bound}
\end{equation}
Let $\Zb_{\mathrm T}$ have orthonormal columns spanning
$\ker(\widetilde{\Cb}_{\mathrm T})$.  Use the constants
$\gamma_0,p_0,c_{\rm aug},l$, and $\delta$ in
Theorem~\mainref{thm:batch_b_tightening}, with $m+s=q_{\mathrm T}$.

Denote
\begin{equation}
\begin{split}
\overline\Sigma_{\mathrm T,B}
&=\frac1{N_B}\sum_{j=1}^Bn_j\widehat\Sigma_{\mathrm T}^{(j)},
\qquad
\Sigma_{\mathrm T,B}
 =\mathbb E\overline\Sigma_{\mathrm T,B},
\\
\overline H_{\mathrm T,B}^*
&=\frac1{N_B}\sum_{j=1}^Bn_j
\nabla^2\widehat{\mc L}^{(j)}(\theta^*),
\qquad
J_{\mathrm T,B}
 =\mathbb E\overline H_{\mathrm T,B}^*.
\end{split}
\label{eq:tobit-terminal-pooled-matrices}
\end{equation}
Let $k_{\mathrm T,l}=(p+1)\wedge\lceil l^2\rceil$ and
\$
Q_{\mathrm T,l}
=\max_{\substack{1\le j\le B,\ 0\le a\le p\\
S\subseteq\{0,\ldots,p\},\ 1\le|S|\le k_{\mathrm T,l}}}
\lambda_{\max}\left\{
\frac1{n_j}\sum_{i=1}^{n_j}|x_{ia}^{(j)}|
x_{iS}^{(j)}(x_{iS}^{(j)})^\top\right\}.
\$
On the event in \eqref{eq:tobit-gram-event-probability} with $b=B$,
$|x_{ia}^{(j)}|\le B_1^{1/2}$ and the population bound preceding
\eqref{eq:tobit-rlh-bound} give
\begin{equation*}
\begin{split}
Q_{\mathrm T,l}&\le\overline Q_{\mathrm T},
\\
\overline Q_{\mathrm T}
&=B_1^{1/2}\left\{
\frac{1+(B_\eta+1)/\underline\tau^2}{\Phi(-B_\eta^{1/2})}J_u
+\frac{\underline\kappa_{\mathrm T}k_{\mathrm T,l}}{32a_{\mathrm T}}
\right\}.
\end{split}
\end{equation*}
Under Assumption~\mainref{assm:hessian},
$J_{\mathrm T,B}=\nabla^2\mathcal L(\theta^*)$ is the population matrix
$\Jb$ in the main text. Set
\$
J_{\mathrm T,u}=J_u.
\$
For every row and coordinate subset, the operator norm bound on $\Jb$ then yields
\begin{equation}
\max_{0\le a\le p+1}
\|(J_{\mathrm T,B})_{a,\overline{\mc M}_{\mathrm T}}\|_2
\le J_{\mathrm T,u}.
\label{eq:tobit-terminal-J-row-bound}
\end{equation}
Since $d_{\mathrm T}=p+2\ge2$,
$2\log\log(e d_{\mathrm T})\ge1$ and
$2\log\log(e d_{\mathrm T})\le2\log d_{\mathrm T}$.  Define
\begin{equation}
\begin{split}
u_{\mathrm T,B}
&=2\log(2d_{\mathrm T})+2\log\log(e d_{\mathrm T}),
\\
\eta_{\mathrm T,\infty}
&=C\left\{\sqrt{\frac{u_{\mathrm T,B}}{N_B}}
+\frac{u_{\mathrm T,B}}{N_B}\right\},
\\
\nu_{\mathrm T}^2&=C(J_u+1),
\qquad c_{\mathrm T}=Ck_{\mathrm T,l}^{1/2},
\\
\chi_{\mathrm T,\infty}
&=\nu_{\mathrm T}\left\{
\frac{2\{\log(2d_{\mathrm T})+2\log\log(e d_{\mathrm T})\}}
{N_B}\right\}^{1/2}
+\frac{2c_{\mathrm T}\{\log(2d_{\mathrm T})
+2\log\log(e d_{\mathrm T})\}}{N_B},
\\
\chi_{\mathrm T}(v)
&=2\nu_{\mathrm T}\left\{
\frac{2\{v\log5+2\log\log(e d_{\mathrm T})\}}{N_B}\right\}^{1/2}
+\frac{4c_{\mathrm T}\{v\log5+2\log\log(e d_{\mathrm T})\}}{N_B}.
\end{split}
\label{eq:tobit-terminal-concentration-rates}
\end{equation}

\begin{lemma}
\label{lem:tobit-terminal-concentration}
Under \ref{Tobitassm:model}, \ref{Tobitassm:design}, and
Assumption~\mainref{assm:hessian}, with probability at least
$1-8\{\log(e d_{\mathrm T})\}^{-2}$,
\begin{equation}
\begin{split}
\|\overline\Sigma_{\mathrm T,B}-\Sigma_{\mathrm T,B}\|_{\max}
\vee
\|\overline H_{\mathrm T,B}^*-J_{\mathrm T,B}\|_{\max}
&\le\eta_{\mathrm T,\infty},
\\
\|\omegab^{(B)}\|_\infty
&\le\chi_{\mathrm T,\infty},
\\
\|\omegab_{\overline{\mc M}_{\mathrm T}}^{(B)}\|_2
&\le\chi_{\mathrm T}(q_{\mathrm T}),
\\
\|\Zb_{\mathrm T}^\top\omegab_{\mc M_{\mathrm T}}^{(B)}\|_2
+\|\omegab_{\mc S_{\mathrm T}}^{(B)}\|_2
&\le\sqrt2\,\chi_{\mathrm T}(q_{\mathrm T}-r).
\end{split}
\label{eq:tobit-terminal-concentration-event}
\end{equation}
\end{lemma}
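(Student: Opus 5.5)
The plan is to prove each of the four displays in \eqref{eq:tobit-terminal-concentration-event} separately on an event of probability at least $1-2\{\log(e d_{\mathrm T})\}^{-2}$, and then combine them with a union bound to get the stated $1-8\{\log(e d_{\mathrm T})\}^{-2}$. The common tool is a Bernstein inequality for sums of independent, centered variables whose absolute $m$th moments are bounded by $m!\,\nu^2 c^{m-2}/2$. Conditions \ref{Tobitassm:model}--\ref{Tobitassm:design} give these moment bounds: the design entries are bounded by $B_1^{1/2}$, the linear predictors by $B_\eta^{1/2}$, $0\le\tau^*Y\le B_\eta^{1/2}+|\varepsilon|$ with $\varepsilon$ Gaussian, and \eqref{eq:tobit-mills-properties-a} gives $0<\mathfrak h<1$. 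Throughout I weight each observation by $1/N_B$, so every pooled quantity is an average of $N_B$ independent summands, as in \eqref{eq:tobit-terminal-pooled-matrices}.

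For the first display, each entry of $\zeta\zeta^\top$ and of $\nabla^2\ell_{\mathrm T}(\theta^*;Y,X)$ is a polynomial in bounded design entries and in $Y$. Using \eqref{eq:tobit-hessian-formula}, the entries are $Dx_kx_l$, $-Dx_kY$, $D(Y^2+\tau^{*-2})$, and $(1-D)\mathfrak h(-\eta)x_kx_l$. As already noted before \eqref{eq:tobit-gram-event-probability}, the centered versions satisfy the Bernstein moment condition with constants $C$. Bernstein's inequality at level $t=u_{\mathrm T,B}$ gives a per-entry tail of $2e^{-u_{\mathrm T,B}/2}$. A union bound over the at most $d_{\mathrm T}^2$ entries of both matrices then costs at most $2\cdot 4d_{\mathrm T}^2 e^{-\log(2d_{\mathrm T})-\log\log(ed_{\mathrm T})}$, which is of order $\{\log(ed_{\mathrm T})\}^{-1}$. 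I will set the exponent so that the bound becomes $2\{\log(ed_{\mathrm T})\}^{-2}$; this only requires the constant in $u_{\mathrm T,B}$, and it matches the $2\log\log$ term. This yields the deviation $C\{\sqrt{u/N_B}+u/N_B\}=\eta_{\mathrm T,\infty}$.

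For the score bounds, \eqref{eq:tobit-score-mgf} shows that the coordinates of $\nabla\ell_{\mathrm T}(\theta^*)$ are centered with sub-exponential moment generating functions. For any fixed unit vector $a$ supported on a set of size at most $k_{\mathrm T,l}$, the variable $a^\top\nabla\ell_{\mathrm T}$ has variance proxy $a^\top\Kb a\le C(J_u+1)=\nu_{\mathrm T}^2$. This holds because, for the correctly specified Tobit likelihood, $\Kb=\Jb$ by the Bartlett identity, and the $\tau$ coordinate is rescaled by the bounded $\tau^*$. The Bernstein scale is at most $\|a\|_1$ times a constant, so at most $c_{\mathrm T}=Ck_{\mathrm T,l}^{1/2}$. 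The bound on $\|\omegab^{(B)}\|_\infty$ then follows from $a=e_k$ and a union over $d_{\mathrm T}$ coordinates at level $\log(2d_{\mathrm T})+2\log\log(ed_{\mathrm T})$, which produces $\chi_{\mathrm T,\infty}$ exactly. For the block bounds I will take a $1/2$-net of the unit sphere in $\RR^{v}$, which has cardinality at most $5^{v}$, and use $\|w\|_2\le2\max_{a\in\text{net}}a^\top w$. Applying this with $v=q_{\mathrm T}$ on $\overline{\mc M}_{\mathrm T}$ gives $\chi_{\mathrm T}(q_{\mathrm T})$; the factor $2$ and the $v\log5$ in the level come from the net. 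For the constrained block I will take the orthonormal columns of $\operatorname{diag}(\Zb_{\mathrm T},\Ib_{s_{\mathrm T}})$, which span $q_{\mathrm T}-r$ dimensions, and then use $x+y\le\sqrt2(x^2+y^2)^{1/2}$.

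The main obstacle is the Bernstein moment verification for the $\tau$ coordinate and for the $\mathfrak h(-\eta)xx^\top$ term. Here unbounded $Y^2$ multiplies design entries, so I will handle it through the Gaussian tail of $\varepsilon$ together with $D\le1$, obtaining $m$th moments bounded by $m!C^m$. A related concern is that net vectors in $\RR^{q_{\mathrm T}}$ have $\ell_1$ norm up to $q_{\mathrm T}^{1/2}$. I will therefore need $q_{\mathrm T}\le k_{\mathrm T,l}$, which follows from $l\ge q_{\mathrm T}^{1/2}$ in Theorem~\ref{thm:batch_b_tightening}, so that the scale stays at most $c_{\mathrm T}$.
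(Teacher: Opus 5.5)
Your architecture is the paper's: entrywise Bernstein with a union bound for the two pooled matrices, a sub-gamma bound for sparse linear forms of the pooled score, Chernoff bounds on the coordinates and on $1/2$-nets of size $5^{v}$, and $x+y\le\sqrt2(x^2+y^2)^{1/2}$ for the constrained block. The gap is in the one step that carries the score bounds. You assert that $a^\top\nabla\ell_{\mathrm T}(\theta^*;Y,X)$ obeys Bernstein's inequality with variance $C(J_u+1)\|a\|_2^2$ and scale $C\|a\|_1$, but you never derive the moment condition $\mathbb E|a^\top\nabla\ell_{\mathrm T}|^m\le (m!/2)\,\nu_{\mathrm T}^2\|a\|_2^2\,(c_{\mathrm T}\|a\|_2)^{m-2}$ that this requires.

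Neither ingredient you cite supplies it. The Bartlett identity controls only the second moment, $\operatorname{Var}(a^\top\nabla\ell_{\mathrm T})=a^\top\Jb a$. The coordinatewise bounds in \eqref{eq:tobit-score-mgf}, combined by H\"older's inequality, give a variance proxy of order $\|a\|_1^2$, which is up to $q_{\mathrm T}\|a\|_2^2$ on the net vectors. That would inflate $\chi_{\mathrm T}(q_{\mathrm T})$ from order $\sqrt{q_{\mathrm T}/N_B}$ to order $q_{\mathrm T}/\sqrt{N_B}$. Nor does ``true variance plus a $\psi_1$-scale $K$'' yield the Bernstein moment condition in general. A variable equal to $\pm K\log(1/\delta)$ with probability $\delta$ each has $\psi_1$-norm of order $K$. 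Yet $\mathbb E|Z|^m/(m!\,\mathbb E Z^2\,K^{m-2})=\log^{m-2}(1/\delta)/m!$, so you lose a logarithmic factor in the scale and do not obtain $c_{\mathrm T}=Ck_{\mathrm T,l}^{1/2}$.

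The missing idea, which is what the paper uses, is structural. The observed Tobit score is the conditional expectation, given $(Y,X)$, of the complete-data Gaussian score. Jensen's inequality then gives $\mathbb E|a^\top\nabla\ell_{\mathrm T}|^m\le\mathbb E|L(X)\varepsilon+(a_\tau/\tau^*)(\varepsilon^2-1)|^m$, where $L(X)=-a_\alpha^\top X+(a_\tau/\tau^*)X^\top\alpha^*$. The design bounds give $|L(X)|\le Ck_{\mathrm T,l}^{1/2}\|a\|_2$ almost surely. The comparison $\{\inf_{|t|\le B_\eta^{1/2}}\mathfrak h(t)\}\,\mathbb E(XX^\top)\preceq (J_{\mathrm T,B})_{\alpha,\alpha}\preceq J_u\Ib_{p+1}$ gives $\mathbb E L(X)^2\le C(J_u+1)\|a\|_2^2$. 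Independence of $X$ and $\varepsilon$ then yields $\mathbb E|L(X)\varepsilon|^m\le \mathbb E L(X)^2\,(\sup|L|)^{m-2}\,\mathbb E|\varepsilon|^m$, which is exactly the Bernstein condition with $\nu_{\mathrm T}^2$ and $c_{\mathrm T}$. Equivalently, you could factor \eqref{eq:tobit-score-formula} as a bounded design linear form $a_\alpha^\top x$ times the conditionally sub-Gaussian residual $-D\varepsilon+(1-D)\mathfrak g(-\eta)$, plus a bounded-coefficient polynomial in $\varepsilon$. With this step in place, the rest of your plan goes through as in the paper, including the levels $\log(2d_{\mathrm T})+2\log\log(ed_{\mathrm T})$ and $v\log5+2\log\log(ed_{\mathrm T})$, the requirement $q_{\mathrm T}\le k_{\mathrm T,l}$, and the probability accounting.
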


\begin{proof}
The entries in \eqref{eq:tobit-hessian-formula} and
\eqref{eq:tobit-augmented-gram} satisfy the same moment bound used for
\eqref{eq:tobit-gram-event-probability}. Apply Bernstein's inequality to
each centered entry, with $u_{\mathrm T,B}$ in
\eqref{eq:tobit-terminal-concentration-rates}. The union over both
matrices has complement probability at most
$\{\log(e d_{\mathrm T})\}^{-2}$ and validates the first line of
\eqref{eq:tobit-terminal-concentration-event}.

For the score bounds, let $a=(a_\alpha^\top,a_\tau)^\top$ have at most
$k_{\mathrm T,l}$ nonzero $\alpha$ coordinates. The observed score is the
conditional expectation, given $(Y,X)$, of the complete Gaussian score.
Consequently, for every integer $m\ge2$, Jensen's inequality gives
\begin{align*}
\mathbb E|a^\top\nabla\ell_{\mathrm T}(\theta^*;Y,X)|^m
\le
\mathbb E\left|
\{-a_\alpha^\top X+(a_\tau/\tau^*)X^\top\alpha^*\}\varepsilon
+(a_\tau/\tau^*)(\varepsilon^2-1)\right|^m.
\end{align*}
By \eqref{eq:tobit-bahadur-design-bounds} and
\eqref{eq:tobit-hessian-formula}, we get
\[
\left\{\inf_{|t|\le B_\eta^{1/2}}\mathfrak h(t)\right\}
\mathbb E(XX^\top)
\preceq(J_{\mathrm T,B})_{\alpha,\alpha}\preceq J_u\Ib_{p+1}.
\]
The infimum is positive by \eqref{eq:tobit-mills-properties-a}.
Thus the coefficient of $\varepsilon$ in the preceding display has
absolute value at most $Ck_{\mathrm T,l}^{1/2}\|a\|_2$ and second moment
at most $C(J_u+1)\|a\|_2^2$. Independence of $X$ and $\varepsilon$,
together with the Gaussian moments, gives
\[
\mathbb E|a^\top\nabla\ell_{\mathrm T}(\theta^*;Y,X)|^m
\le\frac{m!}{2}\,C(J_u+1)\|a\|_2^2
\{Ck_{\mathrm T,l}^{1/2}\|a\|_2\}^{m-2}.
\]
Expanding the moment generating function and enlarging the constants in
\eqref{eq:tobit-terminal-concentration-rates} therefore yields
\begin{equation}
\mathbb E\exp\{tN_Ba^\top\omegab^{(B)}\}
\le\exp\{N_B\nu_{\mathrm T}^2t^2\|a\|_2^2/2\},
\qquad |t|<\{c_{\mathrm T}\|a\|_2\}^{-1}.
\label{eq:tobit-terminal-pooled-score-mgf}
\end{equation}
Apply Chernoff's inequality at
$\log(2d_{\mathrm T})+2\log\log(e d_{\mathrm T})$ to the coordinate
directions. For $1/2$-nets of the unit spheres in
$\mathbb R^{q_{\mathrm T}}$ and
$\operatorname{col}\{\operatorname{diag}(\Zb_{\mathrm T},
\Ib_{s_{\mathrm T}})\}$, use, respectively,
$q_{\mathrm T}\log5+2\log\log(e d_{\mathrm T})$ and
$(q_{\mathrm T}-r)\log5+2\log\log(e d_{\mathrm T})$.
The nets have at most $5^{q_{\mathrm T}}$ and $5^{q_{\mathrm T}-r}$
elements. Moreover,
\[
\|\Zb_{\mathrm T}^\top\omegab_{\mc M_{\mathrm T}}^{(B)}\|_2
+\|\omegab_{\mc S_{\mathrm T}}^{(B)}\|_2
\le\sqrt2\,
\|\operatorname{diag}(\Zb_{\mathrm T},\Ib_{s_{\mathrm T}})^\top
\omegab_{\overline{\mc M}_{\mathrm T}}^{(B)}\|_2.
\]
The union of these bounds proves
\eqref{eq:tobit-terminal-concentration-event} with the stated complement
probability.
\end{proof}

Define
\$
M_{\mathrm T,l}
=\max\left\{4.3\,\overline Q_{\mathrm T},
2(\underline\tau-R_0)^{-3}\right\}.
\$
On the event in \eqref{eq:tobit-gram-event-probability} with $b=B$, every $v\in\CC(l)$
satisfies $\|v_\alpha\|_1^2\le\min(l^2,p+1)\|v\|_2^2
\le k_{\mathrm T,l}\|v\|_2^2$, and hence
\[
\frac1{n_j}\sum_{i=1}^{n_j}\{1+(Y_i^{(j)})^2\}
\{(x_i^{(j)})^\top v_\alpha\}^2
\le B_1^{-1/2}\overline Q_{\mathrm T}\|v\|_2^2.
\]
The Cauchy--Schwarz inequality, $|x_{ia}^{(j)}|\le B_1^{1/2}$, and
\eqref{eq:tobit-mills-properties} bound each $\alpha$ coordinate of the
Hessian difference applied to $w\in\CC(l)$ by
$4.3\overline Q_{\mathrm T}\|v\|_2\|w\|_2$.
For $R_0\le\underline\tau/2$, the $\tau$-coordinate bound in
\eqref{eq:tobit-rlh-bound} directly applies.
Consequently, for $j\le B$,
\begin{equation}
\left\|\{\nabla^2\widehat{\mc L}^{(j)}(\theta^*+v)
-\nabla^2\widehat{\mc L}^{(j)}(\theta^*)\}w\right\|_\infty
\le M_{\mathrm T,l}\|v\|_2\|w\|_2,
\qquad v,w\in\BB(R_0)\cap\CC(l).
\label{eq:tobit-terminal-rlh}
\end{equation}
By homogeneity in $w$, \eqref{eq:tobit-terminal-rlh} also holds for every
$w\in\CC(l)$.

To obtain the terminal restricted curvature, use the same population
matrix $\Jb=J_{\mathrm T,B}$ as in
Assumption~\mainref{assm:hessian}. For $a,b\in\mathbb R$ and
$|\eta|\le B_\eta^{1/2}$, truncated-normal moments give
\begin{align*}
&\mathbb E\!\left[\mathbf1\{\eta+\varepsilon>0\}
\{-a+b(\eta+\varepsilon)\}^2\right]
\\
&\quad=\Phi(\eta)\left[\{-a+b\{\eta+\mathfrak g(\eta)\}\}^2
+\{1-\mathfrak h(\eta)\}b^2\right]
\ge C^{-1}(a^2+b^2).
\end{align*}
Indeed, $\Phi(\eta)$ and $1-\mathfrak h(\eta)$ have positive minima on
this compact interval, and $\eta+\mathfrak g(\eta)$ is bounded there.
Since
$\Phi(-\eta)\mathfrak h(-\eta)a^2+\Phi(\eta)b^2\le a^2+b^2$,
apply the preceding inequality with
$a=X^\top u_\alpha$ and $b=u_\tau/\tau^*$, and then integrate over $X$.
Equations \eqref{eq:tobit-hessian-formula} and
\eqref{eq:tobit-terminal-pooled-matrices} give
$J_{\mathrm T,B}\preceq C\Sigma_{\mathrm T,B}$.
Decreasing the constant in \eqref{eq:tobit-expected-re}, if necessary,
Assumption~\mainref{assm:hessian} therefore yields
\begin{equation}
u^\top\Sigma_{\mathrm T,B}u
\ge\underline\kappa_{\mathrm T}\|u\|_2^2,
\qquad u\in\mathbb R^{d_{\mathrm T}}.
\label{eq:tobit-terminal-population-bounds}
\end{equation}
For every $\theta\in\Theta_{\mathrm T}$,
\eqref{eq:tobit-global-rsc-lower-bound} implies
\begin{equation*}
\nabla^2\widehat{\mc Q}_B(\theta)
=\frac1{N_B}\left\{
\sum_{j<B}n_j\nabla^2\widehat{\mc L}^{(j)}
(\widehat\theta^{(j)})
+n_B\nabla^2\widehat{\mc L}^{(B)}(\theta)\right\}
\succeq\overline\Sigma_{\mathrm T,B}.
\end{equation*}
If
\$
l^2\eta_{\mathrm T,\infty}
\le\frac{\underline\kappa_{\mathrm T}}2,
\$
then, by \eqref{eq:tobit-terminal-population-bounds}, on the first event in
\eqref{eq:tobit-terminal-concentration-event}, every $v\in\CC(l)$ obeys
\begin{equation*}
v^\top\overline\Sigma_{\mathrm T,B}v
\ge v^\top\Sigma_{\mathrm T,B}v
-\eta_{\mathrm T,\infty}\|v\|_1^2
\ge\{\underline\kappa_{\mathrm T}
-l^2\eta_{\mathrm T,\infty}\}\|v\|_2^2
\ge\frac{\underline\kappa_{\mathrm T}}2\|v\|_2^2.
\end{equation*}
Consequently, for every $\theta_1,\theta_2\in\Theta_{\mathrm T}$ such that
$\theta_1-\theta_2\in\CC(l)$, we obtain
\begin{equation}
\begin{split}
\left\langle
\nabla\widehat{\mc Q}_B(\theta_1)
-\nabla\widehat{\mc Q}_B(\theta_2),
\theta_1-\theta_2\right\rangle
\ge
(\theta_1-\theta_2)^\top\overline\Sigma_{\mathrm T,B}
(\theta_1-\theta_2)
\ge\frac{\underline\kappa_{\mathrm T}}2
\|\theta_1-\theta_2\|_2^2,
\end{split}
\label{eq:tobit-terminal-surrogate-rsc}
\end{equation}
In particular, \eqref{eq:tobit-terminal-surrogate-rsc} implies
\eqref{eq:inherited-surrogate-curvature} for every pair used below.

Set $\lambda=\lambda_B$ and
\begin{equation*}
\begin{split}
\Phi_{\mathrm T,B}
=\frac{9a_{\mathrm T}}{4\kappa^2N_B}
\sum_{j=1}^{B-1}n_j\lambda_j^2,
\qquad
\Gamma_{\mathrm T,B}
=2M_{\mathrm T,l}d_{\mathrm T,B}^2
+\{J_{\mathrm T,u}+m_{\mathrm T}^{1/2}
\eta_{\mathrm T,\infty}\}d_{\mathrm T,B}.
\end{split}
\end{equation*}
Assume, in addition, that
\begin{enumerate}[label=(Tobit-\arabic*),start=4,leftmargin=3\parindent]
\item\label{Tobitassm:terminal-rates}
$0<\kappa\le\underline\kappa_{\mathrm T}/2$ and
\begin{equation}
\begin{gathered}
l^2\eta_{\mathrm T,\infty}
\le\frac{\underline\kappa_{\mathrm T}}2,
\qquad
\chi_{\mathrm T,\infty}\le\frac{p_0\lambda}{8},
\qquad
2M_{\mathrm T,l}\Phi_{\mathrm T,B}\le\frac{p_0\lambda}{8},
\\
\Gamma_{\mathrm T,B}\le\frac{p_0\lambda}{8},
\qquad
\kappa d_{\mathrm T,B}\le\frac\lambda4,
\qquad
c_{\rm aug}\gamma_0q_{\mathrm T}^{1/2}\lambda
\vee d_{\mathrm T,B}<R_0\le\frac{\underline\tau}{2}.
\end{gathered}
\label{eq:tobit-terminal-rate-conditions}
\end{equation}
\end{enumerate}
Define
\begin{equation}
\begin{split}
r_{\mathrm T,a}^{\rm ora}
&=\kappa^{-1}\left\{
\|\omegab_{\overline{\mc M}_{\mathrm T}}^{(B)}\|_2
+2M_{\mathrm T,l}q_{\mathrm T}^{1/2}\phi^{(B)}\right\},
\\
r_{\mathrm T,0}^{\rm ora}
&=\kappa^{-1}\bigg[
\|\Zb_{\mathrm T}^\top\omegab_{\mc M_{\mathrm T}}^{(B)}\|_2
+\|\omegab_{\mc S_{\mathrm T}}^{(B)}\|_2
+2q_{\mathrm T}^{1/2}
\{2M_{\mathrm T,l}\phi^{(B)}+\Gamma_{\mathrm T,B}\}\bigg]
+d_{\mathrm T,B}.
\end{split}
\label{eq:tobit-terminal-oracle-radii}
\end{equation}

\begin{proposition}
\label{prop:tobit-tightening-verification}
Suppose that Assumptions~\mainref{assm:penalty_function},
\mainref{assm:local_alternative}, and \mainref{assm:hessian}, conditions
\ref{Tobitassm:model}--\ref{Tobitassm:terminal-rates}, and those of
Proposition~\ref{prop:tobit-initializer-verification} with $b=B$ hold.
Then, with probability at least
\begin{equation}
1-\frac{2B}{d_{\mathrm T}}-\Pi_{\mathrm T,B}(t_{\mathrm T})
-8\{\log(e d_{\mathrm T})\}^{-2},
\label{eq:tobit-terminal-probability}
\end{equation}
for every $k\ge1$,
\begin{equation}
\begin{split}
\|\widehat\theta_{\mathrm T,a}^{({\rm LLA},k)}-\theta^*\|_2
\le{}&\delta
\|\widehat\theta_{\mathrm T,a}^{({\rm LLA},k-1)}-\theta^*\|_2
+\kappa^{-1}\left\|p'_{\lambda,+}
\{(|\alpha_{\mc S_{\mathrm T}}^*|-\gamma_0\lambda)_+\}\right\|_2
+r_{\mathrm T,a}^{\rm ora},
\\
\|\widehat\theta_{\mathrm T,0}^{({\rm LLA},k)}-\theta^*\|_2
\le{}&\delta
\|\widehat\theta_{\mathrm T,0}^{({\rm LLA},k-1)}-\theta^*\|_2
+\kappa^{-1}\left\|p'_{\lambda,+}
\{(|\alpha_{\mc S_{\mathrm T}}^*|-\gamma_0\lambda)_+\}\right\|_2
+r_{\mathrm T,0}^{\rm ora}.
\end{split}
\label{eq:tobit-terminal-contractions}
\end{equation}
The solution sets of all weighted convex programs are nonempty, and the crude
localization conclusion \eqref{eq:crude-lla-localization} holds with
$s+m=q_{\mathrm T}$.
\end{proposition}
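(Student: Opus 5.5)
We will follow the template of the GLM verification in Proposition~\ref{prop:glm-tightening}, and rerun the deterministic argument of Theorem~\ref{thm:batch_b_tightening} on one good event. This event is the intersection of three pieces: the initializer event of Proposition~\ref{prop:tobit-initializer-verification} with $b=B$ (the score event \eqref{eq:tobit-simultaneous-score-event} together with the Gram event \eqref{eq:tobit-gram-event-probability}); the terminal concentration event of Lemma~\ref{lem:tobit-terminal-concentration}; and the deterministic consequences \eqref{eq:tobit-terminal-rlh} and \eqref{eq:tobit-terminal-surrogate-rsc}. A union bound gives the probability \eqref{eq:tobit-terminal-probability}. The substitutions \eqref{eq:tobit-terminal-main-substitution} are in force throughout, with $(m,s,M,\kappa)$ replaced by $(m_{\mathrm T},s_{\mathrm T},M_{\mathrm T,l},\kappa)$ and $\betab^\dagger$ replaced by $\theta^\dagger_{\mathrm T}$.

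The first step checks the localization inputs. Proposition~\ref{prop:tobit-initializer-verification} places every historical $\widehat\theta^{(j)}-\theta^*$ in $\BB(R_0)\cap\CC(4a_{\mathrm T}^{1/2})\subset\BB(R_0)\cap\CC(l)$, using $a_{\mathrm T}\le q_{\mathrm T}$ and $l\ge4q_{\mathrm T}^{1/2}$. The displacement $\theta^\dagger_{\mathrm T}-\theta^*$ is supported on $\mc M_{\mathrm T}$ and has norm at most $d_{\mathrm T,B}<R_0$ by \eqref{eq:tobit-terminal-projection-bound}, so it also lies in the cone. Since \eqref{eq:tobit-terminal-rlh} holds for all batches including $B$, it supplies both $\bigcap_{j\le B}\mc E^{(j)}_{\rm rlh}(R_0,l,M_{\mathrm T,l})$ and the extra event $\mc E^{(B)}_{\rm rlh}$. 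The event $\mc H(\eta_{\mathrm T,\infty})$ is the first line of \eqref{eq:tobit-terminal-concentration-event}. Curvature is the one place where we bypass the batchwise RSC event $\mc R_B$: \eqref{eq:tobit-terminal-surrogate-rsc} gives \eqref{eq:inherited-surrogate-curvature} directly with constant $\underline\kappa_{\mathrm T}/2\ge\kappa$. Because Lemmas~\ref{lem:intermediate.rate} and \ref{lem:omega_discrepancy} use RSC only through \eqref{eq:inherited-surrogate-curvature}, this substitution is legitimate. The same lower bound makes the weighted objectives coercive on $\Theta_{\mathrm T}$, so their solution sets are nonempty.

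The second step verifies the tuning inequalities \eqref{event:theorem2-full} and \eqref{event:theorem2-reduced}. Lemma~\ref{lem:omega_discrepancy} gives $\|\gb^*-\omegab^{(B)}\|_\infty\le\frac32M_{\mathrm T,l}\phi^{(B)}$ and $\|\gb^\dagger-\omegab^{(B)}\|_\infty\le 2M_{\mathrm T,l}\phi^{(B)}+\Gamma_{\mathrm T,B}$, where $J_u$ enters via \eqref{eq:tobit-terminal-J-row-bound}. The initializer rates bound $\phi^{(B)}\le\Phi_{\mathrm T,B}$. Condition \ref{Tobitassm:terminal-rates} then yields
\[
\frac{2}{p_0}\bigl(\chi_{\mathrm T,\infty}+2M_{\mathrm T,l}\Phi_{\mathrm T,B}+\Gamma_{\mathrm T,B}\bigr)+\kappa d_{\mathrm T,B}\le \tfrac34\lambda+\tfrac14\lambda=\lambda,
\]
which gives both events. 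With these inputs, the crude induction \eqref{eq:crude-lla-localization} and the refined contraction in the proof of Theorem~\ref{thm:batch_b_tightening} apply verbatim. This produces \eqref{eq:tobit-terminal-contractions} with the radii \eqref{eq:tobit-terminal-oracle-radii}. The $\ell_2$ score blocks there are taken directly from \eqref{eq:tobit-terminal-concentration-event}.

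The main obstacle is the constraint on $\Theta_{\mathrm T}$, namely $\tau>0$. The LLA iterates and the segments used in the localization arguments of Lemma~\ref{lem:intermediate.rate} must stay in the domain where $-\log\tau$ is finite and convex. We handle this by noting that every point examined lies within $R_0\le\underline\tau/2$ of $\theta^*$ or $\theta^\dagger_{\mathrm T}$. Both centers have $\tau$-coordinate of order at least $\underline\tau$, since $\tau^\dagger$ differs from $\tau^*$ by at most $d_{\mathrm T,B}<R_0$. Hence $\tau\ge\underline\tau/2>0$ along the way. For the minimizers themselves, the convexity-along-the-segment argument shows each minimizer lies inside that ball. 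The objective is $+\infty$ as $\tau\downarrow0$, so restricting to $\Theta_{\mathrm T}$ introduces no boundary KKT terms.
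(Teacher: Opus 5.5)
Your proposal is correct and follows the paper's proof essentially step for step: you intersect the initializer and terminal-concentration events, place the historical displacements and $\theta^\dagger_{\mathrm T}-\theta^*$ in $\BB(R_0)\cap\CC(l)$, use \eqref{eq:tobit-terminal-rlh} and the surrogate curvature \eqref{eq:tobit-terminal-surrogate-rsc} in place of the batchwise events, verify \maineqref{event:theorem2-full} and \maineqref{event:theorem2-reduced} through the $\tfrac34\lambda+\tfrac14\lambda$ split, and rerun the induction of Theorem~\mainref{thm:batch_b_tightening}; the paper obtains the bounds on $\gb^*-\omegab^{(B)}$ and $\gb^\dagger-\omegab^{(B)}$ directly from Lemma~\ref{lem:decomposition} rather than citing Lemma~\ref{lem:omega_discrepancy}, but this is the same computation. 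Two small refinements: the $\tau>0$ concern is automatic, because \eqref{eq:tobit-terminal-surrogate-rsc} holds for all pairs in the convex set $\Theta_{\mathrm T}$ (also, points within $R_0$ of $\theta^\dagger_{\mathrm T}$ only satisfy $\tau>\underline\tau-2R_0\ge0$, not $\tau\ge\underline\tau/2$); and the existence claim first needs the basic-inequality step showing that every point of a sublevel set has displacement in $\CC(l)$, after which the cone-restricted curvature bounds that sublevel set.
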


\begin{proof}
Intersect the events in
Proposition~\ref{prop:tobit-initializer-verification} and
Lemma~\ref{lem:tobit-terminal-concentration}.  By the definition of
$\phi^{(B)}$ and \eqref{eq:tobit-final-rates}, we have
\begin{equation*}
\phi^{(B)}
=\frac1{N_B}\sum_{j<B}n_j
\|\widehat\theta^{(j)}-\theta^*\|_2^2
\le\frac{9a_{\mathrm T}}{4\kappa^2N_B}
\sum_{j<B}n_j\lambda_j^2
=\Phi_{\mathrm T,B}.
\end{equation*}
Moreover, $l\ge4q_{\mathrm T}^{1/2}\ge l_{\mathrm T}$, so
\eqref{eq:tobit-final-rates} places every historical displacement in
$\BB(R_0)\cap\CC(l)$.

Lemma~\ref{lem:decomposition} and \eqref{eq:tobit-terminal-rlh} yield
\begin{equation}
\|\nabla\widehat{\mc Q}_B(\theta^*)-\omegab^{(B)}\|_\infty
\le\frac32M_{\mathrm T,l}\phi^{(B)}.
\label{eq:tobit-terminal-gstar-bound}
\end{equation}
By \eqref{eq:tobit-terminal-projection-bound} and
\eqref{eq:tobit-terminal-rate-conditions},
\(\theta_{\mathrm T}^\dagger-\theta^*\in\BB(R_0)\cap\CC(l)\).
Equations \eqref{eq:R-general-center-bound},
\eqref{eq:tobit-terminal-J-row-bound},
\eqref{eq:tobit-terminal-concentration-event}, and
\eqref{eq:tobit-terminal-projection-bound} give
\begin{equation*}
\begin{split}
\frac1{N_B}\sum_{j<B}n_j\|R_j(\theta_{\mathrm T}^\dagger)\|_\infty
&\le2M_{\mathrm T,l}\phi^{(B)}+M_{\mathrm T,l}d_{\mathrm T,B}^2,
\\
\left\|\frac1{N_B}\sum_{j=1}^Bn_j
\{\nabla\widehat{\mc L}^{(j)}(\theta_{\mathrm T}^\dagger)
-\nabla\widehat{\mc L}^{(j)}(\theta^*)\}\right\|_\infty
&\le\{J_{\mathrm T,u}+m_{\mathrm T}^{1/2}
\eta_{\mathrm T,\infty}\}d_{\mathrm T,B}
+\frac12M_{\mathrm T,l}d_{\mathrm T,B}^2.
\end{split}
\end{equation*}
Substitution in \eqref{eq:surrogate-gradient-remainder} yields
\begin{equation}
\|\nabla\widehat{\mc Q}_B(\theta_{\mathrm T}^\dagger)
-\omegab^{(B)}\|_\infty
\le2M_{\mathrm T,l}\phi^{(B)}+\Gamma_{\mathrm T,B}.
\label{eq:tobit-terminal-gdagger-bound}
\end{equation}
Equations \eqref{eq:tobit-terminal-concentration-event} and
\eqref{eq:tobit-terminal-rate-conditions} imply
\begin{equation*}
\begin{split}
&\|\omegab^{(B)}\|_\infty+2M_{\mathrm T,l}\phi^{(B)}
\le\frac{p_0\lambda}{4},\\
&\frac2{p_0}\{\|\omegab^{(B)}\|_\infty
+2M_{\mathrm T,l}\phi^{(B)}+\Gamma_{\mathrm T,B}\}
+\kappa d_{\mathrm T,B}
\le\frac{3\lambda}{4}+\frac\lambda4=\lambda.
\end{split}
\end{equation*}
These are \maineqref{event:theorem2-full} and
\maineqref{event:theorem2-reduced} under
\eqref{eq:tobit-terminal-main-substitution}.
Equations \eqref{eq:tobit-terminal-gstar-bound}--
\eqref{eq:tobit-terminal-gdagger-bound} also give
\begin{equation*}
\max\left\{
\|\nabla\widehat{\mc Q}_B(\theta^*)\|_\infty,
\|\nabla\widehat{\mc Q}_B(\theta_{\mathrm T}^\dagger)\|_\infty
\right\}
\le\frac{p_0\lambda}{2}.
\end{equation*}
The induction argument in the proof of
Theorem~\mainref{thm:batch_b_tightening} therefore yields
\begin{equation*}
\begin{split}
\|\widehat\theta_{\mathrm T,a}^{({\rm LLA},k)}-\theta^*\|_2
&\le\kappa^{-1}\left\{1+\frac{p_0}{2}
(c_{\rm aug}^2+1)^{1/2}\right\}q_{\mathrm T}^{1/2}\lambda,
\\
\|\widehat\theta_{\mathrm T,0}^{({\rm LLA},k)}
-\theta_{\mathrm T}^\dagger\|_2+d_{\mathrm T,B}
&\le\kappa^{-1}\left\{\frac54+\frac{p_0}{\sqrt2}
(c_{\rm aug}^2+1)^{1/2}\right\}q_{\mathrm T}^{1/2}\lambda.
\end{split}
\end{equation*}
Each right-hand side is strictly smaller than
$\kappa^{-1}\{2+p_0(c_{\rm aug}^2+1)^{1/2}\}
q_{\mathrm T}^{1/2}\lambda
=c_{\rm aug}\gamma_0q_{\mathrm T}^{1/2}\lambda<R_0$.
Replace
\eqref{eq:inherited-surrogate-curvature} by
\eqref{eq:tobit-terminal-surrogate-rsc} in the proofs of
Lemma~\ref{lem:intermediate.rate} and
Theorem~\mainref{thm:batch_b_tightening}.  Equations
\eqref{eq:penalty-weight-vector},
\eqref{eq:tobit-terminal-gstar-bound}, and
\eqref{eq:tobit-terminal-gdagger-bound} then give
\eqref{eq:tobit-terminal-contractions} and
\eqref{eq:crude-lla-localization}.
\end{proof}
\subsubsection{Verification of the strong-oracle theorem}
\label{sec:tobit_strong_oracle_verification}

Let $T=\overline{\mc M}_{\mathrm T}$ and
\$
U_{\mathrm T,0}=\operatorname{diag}(\Zb_{\mathrm T},\Ib_{s_{\mathrm T}}),
\qquad
U_{\mathrm T,a}=\Ib_{q_{\mathrm T}},
\qquad
\varrho_{\mathrm T}=\frac{J_{\mathrm T,u}}{J_l},
\$
and, for $e\in\{0,a\}$,
\begin{equation}
A_{\mathrm T,e}
=(J_{\mathrm T,B})_{T^c,T}U_{\mathrm T,e}
\{U_{\mathrm T,e}^\top(J_{\mathrm T,B})_{T,T}
U_{\mathrm T,e}\}^{-1}.
\label{eq:tobit-terminal-tangent-projection}
\end{equation}
Every vector in the column space of $U_{\mathrm T,e}$ belongs to
$\CC(q_{\mathrm T}^{1/2})\subseteq\CC(l)$.  Assumption~\mainref{assm:hessian}
and \eqref{eq:tobit-terminal-J-row-bound} give
\begin{equation}
\lambda_{\min}\{U_{\mathrm T,e}^\top(J_{\mathrm T,B})_{T,T}
U_{\mathrm T,e}\}\ge J_l,
\qquad
\max_{j\in T^c}\|(A_{\mathrm T,e})_{j,\cdot}\|_2
\le\varrho_{\mathrm T}.
\label{eq:tobit-terminal-projection-bounds}
\end{equation}
We note that $q_{\mathrm T}\le p+1$ and
$k_{\mathrm T,l}\ge q_{\mathrm T}$.  For $j\in T^c$, the vector equal to
$1$ at $j$, to
$-U_{\mathrm T,e}(A_{\mathrm T,e})_{j,\cdot}^\top$ on $T$, and to zero
elsewhere has at most $q_{\mathrm T}$ nonzero $\alpha$ coordinates and
squared norm at most $1+\varrho_{\mathrm T}^2$.  Apply
\eqref{eq:tobit-terminal-pooled-score-mgf} to these vectors and take a
union over $j$ and $e$. With probability at least $1-\{\log(e d_{\mathrm T})\}^{-2}$, one has
\begin{equation}
\begin{split}
\max_{e\in\{0,a\}}
\|\omegab_{T^c}^{(B)}
-A_{\mathrm T,e}U_{\mathrm T,e}^\top\omegab_T^{(B)}\|_\infty
\le{}&(1+\varrho_{\mathrm T}^2)^{1/2}
\bigg[
\nu_{\mathrm T}\left\{
\frac{2\{\log(4d_{\mathrm T})+2\log\log(e d_{\mathrm T})\}}
{N_B}\right\}^{1/2}
\\
&\hspace{31mm}
+\frac{2c_{\mathrm T}\{\log(4d_{\mathrm T})
+2\log\log(e d_{\mathrm T})\}}{N_B}
\bigg].
\end{split}
\label{eq:tobit-terminal-projected-score}
\end{equation}
Define
\begin{equation}
\begin{split}
\overline r_{\mathrm T,B}
={}&\kappa^{-1}\{\chi_{\mathrm T}(q_{\mathrm T})
+2M_{\mathrm T,l}q_{\mathrm T}^{1/2}\Phi_{\mathrm T,B}\}
\\
&\vee\left[
\kappa^{-1}\left\{\sqrt2\,\chi_{\mathrm T}(q_{\mathrm T}-r)
+2q_{\mathrm T}^{1/2}
(2M_{\mathrm T,l}\Phi_{\mathrm T,B}+\Gamma_{\mathrm T,B})\right\}
+d_{\mathrm T,B}\right],
\\
\Psi_{\mathrm T,B}
={}&\mathbf1\{T^c\ne\varnothing\}\bigg[
(1+\varrho_{\mathrm T}^2)^{1/2}
\bigg\{
\nu_{\mathrm T}\left[
\frac{2\{\log(4d_{\mathrm T})+2\log\log(e d_{\mathrm T})\}}
{N_B}\right]^{1/2}
\\
&\hspace{24mm}
+\frac{2c_{\mathrm T}\{\log(4d_{\mathrm T})
+2\log\log(e d_{\mathrm T})\}}{N_B}
\bigg\}
\\
&+(1+\varrho_{\mathrm T}q_{\mathrm T}^{1/2})
\bigg\{2M_{\mathrm T,l}\Phi_{\mathrm T,B}+\Gamma_{\mathrm T,B}
\\
&\hspace{31mm}
+\big[q_{\mathrm T}^{1/2}\eta_{\mathrm T,\infty}
+M_{\mathrm T,l}\{\Phi_{\mathrm T,B}^{1/2}
+\overline r_{\mathrm T,B}\}\big]\overline r_{\mathrm T,B}\bigg\}
\bigg].
\end{split}
\label{eq:tobit-terminal-strong-oracle-rates}
\end{equation}
The last condition is as follows.
\begin{enumerate}[label=(Tobit-\arabic*),start=5,leftmargin=3\parindent]
\item\label{Tobitassm:strong-oracle}
There are $\gamma_1>\gamma_0$ and
$\bar\delta\in(5/(4\kappa\gamma_0),1)$ such that
$p_0\in(0,1/2]$, $p'_+(\gamma_1)=0$, $\kappa\gamma_0>5/4$, and
\begin{equation}
\begin{gathered}
s_{\mathrm T}\ge1
\quad\Longrightarrow\quad
\min_{j\in\mc S_{\mathrm T}}|\alpha_j^*|
\ge(\gamma_0+\gamma_1)\lambda,
\\
\overline r_{\mathrm T,B}
<R_0\wedge\frac{\lambda}{5\bar\delta\kappa},
\qquad
\Psi_{\mathrm T,B}\le\frac{p_0\lambda}{2}.
\end{gathered}
\label{eq:tobit-terminal-strong-oracle-conditions}
\end{equation}
\end{enumerate}

On the event in Proposition~\ref{prop:tobit-tightening-verification},
minimize $\widehat{\mc Q}_B$ on the full oracle affine set intersected with
the closed radius-$R_0$ ball centered at $\theta^*$ and on the reduced
oracle affine set intersected with the closed
radius-$(R_0-d_{\mathrm T,B})$ ball centered at
$\theta_{\mathrm T}^\dagger$.  The calculation in the proof of
Lemma~\ref{lem:oracle_rate}, with
\eqref{eq:tobit-terminal-surrogate-rsc} in place of
\eqref{eq:inherited-surrogate-curvature}, gives
\begin{equation}
\begin{gathered}
\|\widehat\theta_{\mathrm T,a}^{\rm ora}-\theta^*\|_2
\le\kappa^{-1}\{\chi_{\mathrm T}(q_{\mathrm T})
+2M_{\mathrm T,l}q_{\mathrm T}^{1/2}\Phi_{\mathrm T,B}\}
\le\overline r_{\mathrm T,B},
\\
\|\widehat\theta_{\mathrm T,0}^{\rm ora}
-\theta_{\mathrm T}^\dagger\|_2
\le\kappa^{-1}\{\sqrt2\,\chi_{\mathrm T}(q_{\mathrm T}-r)
+2q_{\mathrm T}^{1/2}(2M_{\mathrm T,l}\Phi_{\mathrm T,B}
+\Gamma_{\mathrm T,B})\}
\le\overline r_{\mathrm T,B}-d_{\mathrm T,B},
\\
\max_{e\in\{0,a\}}
\|\widehat\theta_{\mathrm T,e}^{\rm ora}-\theta^*\|_2
\le\overline r_{\mathrm T,B}.
\end{gathered}
\label{eq:tobit-terminal-strong-oracle-rate}
\end{equation}
The first two bounds are strictly below the corresponding ball radii by
\eqref{eq:tobit-terminal-strong-oracle-conditions}, and the convexity makes the two minimizers global.  Any two
oracle minimizers differ by a vector supported on $T$, which belongs to
$\CC(q_{\mathrm T}^{1/2})$; therefore
\eqref{eq:tobit-terminal-surrogate-rsc} makes each minimizer unique.

Let
$\mathfrak w_{\mathrm T,e}^{\rm ora}
=\nabla\widehat{\mc Q}_B(\widehat\theta_{\mathrm T,e}^{\rm ora})$.
On the intersection with the event in
\eqref{eq:tobit-terminal-projected-score}, put
$\theta_a=\theta^*$, $\theta_0=\theta_{\mathrm T}^\dagger$, and
$\Delta_e=\widehat\theta_{\mathrm T,e}^{\rm ora}-\theta_e$.
The oracle KKT equations and
\eqref{eq:tobit-terminal-tangent-projection} give
\begin{equation}
U_{\mathrm T,e}^\top\mathfrak w_{\mathrm T,e,T}^{\rm ora}=0,
\qquad
(J_{\mathrm T,B})_{T^c,T}U_{\mathrm T,e}
-A_{\mathrm T,e}U_{\mathrm T,e}^\top
(J_{\mathrm T,B})_{T,T}U_{\mathrm T,e}=0.
\label{eq:tobit-terminal-oracle-cancellation}
\end{equation}
Moreover,
$(\Delta_e)_{T^c}=0$ and
$(\Delta_e)_T=U_{\mathrm T,e}U_{\mathrm T,e}^\top(\Delta_e)_T$.
Subtracting $A_{\mathrm T,e}$ times the first equality in
\eqref{eq:tobit-terminal-oracle-cancellation} and applying the fundamental
theorem of calculus yields
\begin{align}
\mathfrak w_{\mathrm T,e,T^c}^{\rm ora}
={}&\omegab_{T^c}^{(B)}
-A_{\mathrm T,e}U_{\mathrm T,e}^\top\omegab_T^{(B)}
\notag\\
&+\nabla_{T^c}\widehat{\mc Q}_B(\theta_e)-\omegab_{T^c}^{(B)}
-A_{\mathrm T,e}U_{\mathrm T,e}^\top
\{\nabla_T\widehat{\mc Q}_B(\theta_e)-\omegab_T^{(B)}\}
\notag\\
&+\int_0^1\bigg[
\{\nabla^2\widehat{\mc Q}_B(\theta_e+t\Delta_e)
-J_{\mathrm T,B}\}_{T^c,T}U_{\mathrm T,e}
\notag\\
&\hspace{22mm}
-A_{\mathrm T,e}U_{\mathrm T,e}^\top
\{\nabla^2\widehat{\mc Q}_B(\theta_e+t\Delta_e)
-J_{\mathrm T,B}\}_{T,T}U_{\mathrm T,e}
\bigg]U_{\mathrm T,e}^\top(\Delta_e)_T\,dt,
\label{eq:tobit-terminal-oracle-score-identity}
\end{align}
where we used the expansion in Part~(ii) of the proof of Theorem~1 of
\citet{jacobsonzou2025testing} (see (S1.4)--(S1.6) therein).

For every $\theta\in\Theta_{\mathrm T}$, we have
\begin{equation*}
\begin{split}
\nabla^2\widehat{\mc Q}_B(\theta)-J_{\mathrm T,B}
={}&\overline H_{\mathrm T,B}^*-J_{\mathrm T,B}
+\frac1{N_B}\sum_{j<B}n_j
\{\nabla^2\widehat{\mc L}^{(j)}(\widehat\theta^{(j)})
-\nabla^2\widehat{\mc L}^{(j)}(\theta^*)\}
\\
&+\frac{n_B}{N_B}
\{\nabla^2\widehat{\mc L}^{(B)}(\theta)
-\nabla^2\widehat{\mc L}^{(B)}(\theta^*)\}.
\end{split}
\end{equation*}
By \eqref{eq:tobit-terminal-strong-oracle-rate}, for $0\le t\le1$,
$\theta_e+t\Delta_e-\theta^*$ is supported on $T$, belongs to
$\BB(R_0)\cap\CC(l)$, and has Euclidean norm at most
$\overline r_{\mathrm T,B}$.
The definition of $\phi^{(B)}$ gives
$N_B^{-1}\sum_{j<B}n_j
\|\widehat\theta^{(j)}-\theta^*\|_2
\le\Phi_{\mathrm T,B}^{1/2}$.
Equations \eqref{eq:tobit-terminal-concentration-event},
\eqref{eq:tobit-terminal-rlh}, and
\eqref{eq:tobit-terminal-strong-oracle-rate} therefore give
\begin{equation}\label{eq:tobit-terminal-oracle-hessian-bounds}
\begin{split}
&\max_{\substack{e\in\{0,a\},\ 0\le t\le1\\j\in T^c}}
\left\|
\{\nabla^2\widehat{\mc Q}_B(\theta_e+t\Delta_e)
-J_{\mathrm T,B}\}_{j,T}U_{\mathrm T,e}
\right\|_2
\le q_{\mathrm T}^{1/2}\eta_{\mathrm T,\infty}
+M_{\mathrm T,l}\{\Phi_{\mathrm T,B}^{1/2}
+\overline r_{\mathrm T,B}\},\\
&\max_{\substack{e\in\{0,a\}\\0\le t\le1}}
\left\|U_{\mathrm T,e}^\top
\{\nabla^2\widehat{\mc Q}_B(\theta_e+t\Delta_e)
-J_{\mathrm T,B}\}_{T,T}U_{\mathrm T,e}\right\|_2
\le q_{\mathrm T}^{1/2}\left[
q_{\mathrm T}^{1/2}\eta_{\mathrm T,\infty}
+M_{\mathrm T,l}\{\Phi_{\mathrm T,B}^{1/2}
+\overline r_{\mathrm T,B}\}\right].
\end{split}
\end{equation}
Equations \eqref{eq:tobit-terminal-gstar-bound}--
\eqref{eq:tobit-terminal-gdagger-bound} and
\eqref{eq:tobit-terminal-projection-bounds} also yield
\begin{equation}
\begin{split}
\max_{e\in\{0,a\}}
\big\|&\nabla_{T^c}\widehat{\mc Q}_B(\theta_e)-\omegab_{T^c}^{(B)}
-A_{\mathrm T,e}U_{\mathrm T,e}^\top
\{\nabla_T\widehat{\mc Q}_B(\theta_e)-\omegab_T^{(B)}\}
\big\|_\infty
\\
&\le(1+\varrho_{\mathrm T}q_{\mathrm T}^{1/2})
\{2M_{\mathrm T,l}\Phi_{\mathrm T,B}+\Gamma_{\mathrm T,B}\}.
\end{split}
\label{eq:tobit-terminal-oracle-center-bound}
\end{equation}
Since $\|U_{\mathrm T,e}^\top(\Delta_e)_T\|_2
\le\overline r_{\mathrm T,B}$,
\eqref{eq:tobit-terminal-projected-score} and
\eqref{eq:tobit-terminal-oracle-score-identity},
\eqref{eq:tobit-terminal-oracle-hessian-bounds}, and
\eqref{eq:tobit-terminal-oracle-center-bound} imply
\begin{equation}
\max_{e\in\{0,a\}}
\|\mathfrak w_{\mathrm T,e,T^c}^{\rm ora}\|_\infty
\le\Psi_{\mathrm T,B}
\le\frac{p_0\lambda}{2}.
\label{eq:tobit-terminal-oracle-offscore}
\end{equation}

Set
\begin{equation*}
\widehat s_{\mathrm T}
=|\operatorname{supp}(\widehat\alpha_{\mc M_{\mathrm T}^c}^{(B)})|
+s_{\mathrm T},
\qquad
K_{\mathrm T}^{\rm LLA}
=\begin{cases}
0,&\widehat s_{\mathrm T}=0,\\[1mm]
\displaystyle
\left\lceil
\frac{\log(\widehat s_{\mathrm T}^{1/2}/\bar\delta)}
{\log(1/\bar\delta)}
\right\rceil,&\widehat s_{\mathrm T}\ge1.
\end{cases}
\end{equation*}

\begin{proposition}
\label{prop:tobit-strong-oracle-verification}
Suppose that the assumptions and conditions of
Proposition~\ref{prop:tobit-tightening-verification} and
\ref{Tobitassm:strong-oracle} hold. Then, with probability at least
\begin{equation}
1-\frac{2B}{d_{\mathrm T}}-\Pi_{\mathrm T,B}(t_{\mathrm T})
-9\{\log(e d_{\mathrm T})\}^{-2},
\label{eq:tobit-terminal-strong-oracle-probability}
\end{equation}
every sequence of LLA solutions satisfies
\begin{equation}
\widehat\theta_{\mathrm T,e}^{({\rm LLA},k)}
=\widehat\theta_{\mathrm T,e}^{\rm ora},
\qquad e\in\{0,a\},
\qquad k\ge K_{\mathrm T}^{\rm LLA}+1.
\label{eq:tobit-terminal-strong-oracle-recovery}
\end{equation}
\end{proposition}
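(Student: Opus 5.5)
The plan is to verify, on the intersection of the event of Proposition~\ref{prop:tobit-tightening-verification} and the projected-score event in \eqref{eq:tobit-terminal-projected-score}, every hypothesis of Theorem~\ref{thm:oracle_recovery} under the substitution \eqref{eq:tobit-terminal-main-substitution}. On that event we then run the bad-coordinate contraction argument from the proof of Theorem~\ref{thm:oracle_recovery}, with the Tobit curvature bound \eqref{eq:tobit-terminal-surrogate-rsc} in place of \eqref{eq:inherited-surrogate-curvature}.

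\textbf{Probability and the hypotheses of Theorem~\ref{thm:batch_b_tightening}.} The failure probability of the first event is at most the complement of \eqref{eq:tobit-terminal-probability}. The projected-score bound fails with probability at most $\{\log(e d_{\mathrm T})\}^{-2}$. A union bound therefore gives \eqref{eq:tobit-terminal-strong-oracle-probability}. On this intersection, Proposition~\ref{prop:tobit-tightening-verification} already supplies the crude localization \eqref{eq:crude-lla-localization} with $s+m=q_{\mathrm T}$, together with nonemptiness of all weighted programs.

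\textbf{Oracle-side conditions and structural facts.} The conditions in \eqref{eq:oracle-recovery-score-radius} follow from results established just before the statement. The off-support score bound is \eqref{eq:tobit-terminal-oracle-offscore}. The radius bound combines \eqref{eq:tobit-terminal-strong-oracle-rate} with $\overline r_{\mathrm T,B}<R_0\wedge\lambda/(5\bar\delta\kappa)$ from \ref{Tobitassm:strong-oracle}. The same argument also gives existence and uniqueness of the oracle estimators. The remaining inputs of Theorem~\ref{thm:oracle_recovery} are $p_0\in(0,1/2]$, $p'_+(\gamma_1)=0$, $\kappa\gamma_0>5/4$, and the beta-min condition on $\mc S_{\mathrm T}$, all of which are assumed directly in \ref{Tobitassm:strong-oracle}. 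Two structural points need checking:
\begin{enumerate}
\item The unpenalized coordinates are now $\mc M_{\mathrm T}=\{0\}\cup\mc M\cup\{p+1\}$, so the intercept and scale never enter the penalty or the bad sets.
\item The initial bad set satisfies $|\mc B_{e,0}|\le\widehat s_{\mathrm T}$, because bad null coordinates lie in $\operatorname{supp}(\widehat\alpha^{(B)}_{\mc M_{\mathrm T}^c})$. This yields the iteration count $K^{\rm LLA}_{\mathrm T}$.
\end{enumerate}

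\textbf{Rerunning the recovery argument.} We then repeat the argument step by step. First, Lemma~\ref{lem:intermediate.rate2} is applied with centre $\widehat\theta^{\rm ora}_{\mathrm T,e}$, $\mc E=\mc E_{e,k}$, $\rho=p_0$, and the curvature \eqref{eq:tobit-terminal-surrogate-rsc}. This lemma only needs inner-product curvature on $\CC(l)$ along the segment, and \eqref{eq:tobit-terminal-surrogate-rsc} holds globally on $\Theta_{\mathrm T}$ for cone differences, so the localization step goes through unchanged. Combining this with \eqref{eq:strong-oracle-signal-weight}--\eqref{eq:strong-oracle-set-inclusion} gives \eqref{eq:strong-oracle-upper}. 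The lower bound \eqref{eq:strong-oracle-lower} uses the radius bound to produce the contraction $|\mc B_{e,k}|^{1/2}\le\bar\delta|\mc B_{e,k-1}|^{1/2}$. Integrality then forces $\mc B_{e,k}=\varnothing$ for $k\ge K^{\rm LLA}_{\mathrm T}$, and so \eqref{eq:tobit-terminal-strong-oracle-recovery} holds at the next step.

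\textbf{Main obstacle.} The hardest point is to confirm that every iterate stays in $\Theta_{\mathrm T}$, i.e.\ keeps $\tau>0$, so that the Hessian formula, the RLH bound \eqref{eq:tobit-terminal-rlh} (which requires $R_0\le\underline\tau/2$), and the curvature comparison remain valid. I would handle this through the crude localization: it bounds each iterate within $c_{\rm aug}\gamma_0q_{\mathrm T}^{1/2}\lambda<R_0\le\underline\tau/2$ of $\theta^*$, which forces $\tau\ge\underline\tau/2$. I would also check that the cone parameter $(2+2/p_0)|\mc E_{e,k}|^{1/2}\le l$ is compatible with $k_{\mathrm T,l}$, which was used in the definition of $M_{\mathrm T,l}$.
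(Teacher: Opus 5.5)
Your proposal is correct and follows essentially the paper's proof. Both take a union bound of the tightening-stage event with the projected-score event. Both read off the score and radius conditions of Theorem~\ref{thm:oracle_recovery} from the oracle bounds established just before the statement, take the penalty and beta-min inputs from the assumption, use $|\mc B_{e,0}|\le\widehat s_{\mathrm T}$, and finish with the bad-set contraction. Your extra concern about iterates leaving $\Theta_{\mathrm T}$ is unnecessary: every weighted program is solved over the convex set $\Theta_{\mathrm T}$, and the recovery step uses only the global cone curvature \eqref{eq:tobit-terminal-surrogate-rsc}, not the RLH bound.
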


\begin{proof}
The union bound applied to
\eqref{eq:tobit-terminal-probability} and
\eqref{eq:tobit-terminal-projected-score} gives
\eqref{eq:tobit-terminal-strong-oracle-probability}.  On this event,
Proposition~\ref{prop:tobit-tightening-verification} gives
\eqref{eq:crude-lla-localization}, and
\eqref{eq:tobit-terminal-strong-oracle-rate} and
\eqref{eq:tobit-terminal-oracle-offscore} verify
\maineqref{eq:oracle-recovery-score-radius} after the substitutions in
\eqref{eq:tobit-terminal-main-substitution}.  The remaining inequalities
in Theorem~\mainref{thm:oracle_recovery} are exactly
\eqref{eq:tobit-terminal-strong-oracle-conditions}. 
Every null coordinate in
$\mc B_{e,0}$ belongs to
$\operatorname{supp}(\widehat\alpha_{\mc M_{\mathrm T}^c}^{(B)})$, and at
most $s_{\mathrm T}$ coordinates in $\mc S_{\mathrm T}$ can belong to
$\mc B_{e,0}$.  Hence $|\mc B_{e,0}|\le\widehat s_{\mathrm T}$.
Equation \eqref{eq:bad-set-contraction} gives
\begin{equation*}
|\mc B_{e,k}|^{1/2}
\le\bar\delta^k\widehat s_{\mathrm T}^{1/2}<1,
\qquad k\ge K_{\mathrm T}^{\rm LLA}.
\end{equation*}
Thus $\mc B_{e,k}=\varnothing$ for
$k\ge K_{\mathrm T}^{\rm LLA}$, and
\eqref{eq:strong-oracle-upper} at the next update gives
\eqref{eq:tobit-terminal-strong-oracle-recovery}.
\end{proof}

\subsubsection{Online Bahadur representation}
\label{sec:tobit_bahadur_verification}

Set $T=\overline{\mc M}_{\mathrm T}$.  Under the substitutions in
\eqref{eq:tobit-terminal-main-substitution}, take
$\Jb=J_{\mathrm T,B}$, so that
\begin{equation}
\Jb_0=(J_{\mathrm T,B})_{T,T},
\qquad
\Cb_0=(\widetilde{\Cb}_{\mathrm T},\mathbf0_{r\times s_{\mathrm T}}).
\label{eq:tobit-bahadur-main-objects}
\end{equation}
The definitions of $\Psi_0$ and $\Pb_0$ use the $\Jb_0$ and $\Cb_0$ in
\eqref{eq:tobit-bahadur-main-objects}.

For the remainders in the Bahadur representation, we impose the following rates.
\begin{enumerate}[label=(Tobit-\arabic*),start=6,leftmargin=3\parindent]
\item\label{Tobitassm:bahadur}
As $N_B\to\infty$,
\begin{gather}
\frac{2B}{d_{\mathrm T}}+\Pi_{\mathrm T,B}(t_{\mathrm T})
+\{\log(e d_{\mathrm T})\}^{-2}=o(1),
\qquad
\frac{\overline r_{\mathrm T,B}}{R_0}=o(1),
\qquad
\frac{r^{5/2}q_{\mathrm T}}{N_B}=o(1),
\label{eq:tobit-bahadur-probability-and-lyapunov}
\\
\begin{split}
&(1+M_{\mathrm T,l})\sqrt{\frac{q_{\mathrm T}}{N_B}}
\left\{
\frac{a_{\mathrm T}}{\kappa^2}
\sum_{j=1}^{B}n_j\lambda_j^2
+\frac{a_{\mathrm T}^{1/2}}{\kappa}\overline r_{\mathrm T,B}
\sum_{j=1}^{B}n_j\lambda_j
+n_B\overline r_{\mathrm T,B}^{\,2}
\right\}
\\
&\qquad
+\sqrt{N_B}\,q_{\mathrm T}\eta_{\mathrm T,\infty}
\overline r_{\mathrm T,B}=o(1).
\end{split}
\label{eq:tobit-bahadur-remainder-rate}
\end{gather}
\end{enumerate}

\begin{proposition}\label{prop:tobit-bahadur-verification}
Suppose that Assumptions~\mainref{assm:local_alternative} and
\mainref{assm:hessian}, conditions
\ref{Tobitassm:model}--\ref{Tobitassm:terminal-rates}, those of
Proposition~\ref{prop:tobit-initializer-verification} with $b=B$, and
\ref{Tobitassm:bahadur} hold.  Then
\begin{align}
\widehat\theta_{\mathrm T,0,T}^{\rm ora}-\theta_T^*
={}&
\Jb_0^{-1/2}(\Pb_0-\Ib_{q_{\mathrm T}})\Jb_0^{-1/2}
\frac1{N_B}\sum_{j=1}^B\sum_{i=1}^{n_j}
\nabla_T\ell_{\mathrm T}
(\theta^*;Y_i^{(j)},x_i^{(j)})
\notag\\
&-\Jb_0^{-1}\Cb_0^\top\Psi_0^{-1}\hb_{\mathrm T,B}
+o_p(N_B^{-1/2}),
\label{eq:tobit-reduced-bahadur}
\\
\widehat\theta_{\mathrm T,a,T}^{\rm ora}-\theta_T^*
={}&
-\Jb_0^{-1}
\frac1{N_B}\sum_{j=1}^B\sum_{i=1}^{n_j}
\nabla_T\ell_{\mathrm T}
(\theta^*;Y_i^{(j)},x_i^{(j)})
+o_p(N_B^{-1/2}).
\label{eq:tobit-full-bahadur}
\end{align}
\end{proposition}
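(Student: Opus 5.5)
The plan is to follow the proof of Proposition~\ref{prop:glm-bahadur} and apply Lemma~\ref{lem:oracle_bahadur} under the substitutions in \eqref{eq:tobit-terminal-main-substitution}, with $\Jb=J_{\mathrm T,B}$. The resulting remainders $\mc R_0$ and $\mc R_a$ should then be bounded by $o_p(N_B^{-1/2})$ using \eqref{eq:tobit-bahadur-remainder-rate}. In the Tobit parametrization the pooled score $N_B^{-1}\sum_{j}n_j\nabla_T\widehat{\mc L}^{(j)}(\theta^*)$ is exactly $N_B^{-1}\sum_{j,i}\nabla_T\ell_{\mathrm T}(\theta^*;Y_i^{(j)},x_i^{(j)})$. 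The leading terms of \eqref{eq:reduced-oracle-bahadur} and \eqref{eq:full-oracle-bahadur} are therefore precisely those of \eqref{eq:tobit-reduced-bahadur} and \eqref{eq:tobit-full-bahadur}, with $\hb_B$ replaced by $\hb_{\mathrm T,B}$. The only change is the sign convention: $\Pb_0-\Ib$ appears here in place of $\Ib-\Pb_0$ in the GLM version, because of how the score is written.

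First I would work on the intersection of the events in Proposition~\ref{prop:tobit-initializer-verification} (with $b=B$) and Lemma~\ref{lem:tobit-terminal-concentration}. The first condition in \eqref{eq:tobit-bahadur-probability-and-lyapunov} shows this event has probability $1-o(1)$. On it, \eqref{eq:tobit-final-rates} gives $\|\widehat\theta^{(j)}-\theta^*\|_2\le 3a_{\mathrm T}^{1/2}\lambda_j/(2\kappa)$ and places each historical displacement in $\BB(R_0)\cap\CC(l)$, since $l\ge 4q_{\mathrm T}^{1/2}\ge l_{\mathrm T}$. Next, the oracle localization \eqref{eq:tobit-terminal-strong-oracle-rate} was derived on the event of Proposition~\ref{prop:tobit-tightening-verification} using only \eqref{eq:tobit-terminal-surrogate-rsc}, \eqref{eq:tobit-terminal-gstar-bound} and \eqref{eq:tobit-terminal-gdagger-bound}. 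It gives $\max_e\|\widehat\theta^{\rm ora}_{\mathrm T,e}-\theta^*\|_2\le\overline r_{\mathrm T,B}$, which is $o(R_0)$ by \eqref{eq:tobit-bahadur-probability-and-lyapunov}. Because the oracle error is supported on $T$ and $q_{\mathrm T}\le l^2$, it lies in $\CC(l)$. The RLH events hold with $M=M_{\mathrm T,l}$ by \eqref{eq:tobit-terminal-rlh}. The first line of \eqref{eq:tobit-terminal-concentration-event} gives $\mc H(\eta_{\mathrm T,\infty})$. Finally, $\lambda_{\min}(\Jb_0)\ge J_l$ follows from Assumption~\ref{assm:hessian}. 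Together these verify every hypothesis of Lemma~\ref{lem:oracle_bahadur}.

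Substituting these bounds into \eqref{eq:reduced-bahadur-rem} and \eqref{eq:full-bahadur-rem} and multiplying by $\sqrt{N_B}$ bounds the remainder by a constant times the following three pieces:
\[
M_{\mathrm T,l}\sqrt{\tfrac{q_{\mathrm T}}{N_B}}\Big\{\tfrac{a_{\mathrm T}}{\kappa^2}\textstyle\sum_{j<B}n_j\lambda_j^2+\tfrac{a_{\mathrm T}^{1/2}}{\kappa}\overline r_{\mathrm T,B}\sum_{j<B}n_j\lambda_j+n_B\overline r_{\mathrm T,B}^{\,2}\Big\}+\sqrt{N_B}\,q_{\mathrm T}\eta_{\mathrm T,\infty}\overline r_{\mathrm T,B}.
\]
Each piece is dominated by the left side of \eqref{eq:tobit-bahadur-remainder-rate}, so the whole bound is $o(1)$.

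The main obstacle I expect is bookkeeping rather than any new idea. The generic Lemma~\ref{lem:oracle_bahadur} has to be legitimately invoked in the Tobit setting, where the parameter space is $\Theta_{\mathrm T}$ rather than $\RR^p$ and the curvature comes from \eqref{eq:tobit-terminal-surrogate-rsc} rather than from the event $\mc R_B$. The proof of Lemma~\ref{lem:oracle_bahadur} uses only the KKT system, Taylor expansion along segments, the RLH bound and the Hessian concentration. So I must check two things: that segments from $\theta^*$ to the oracle stay in $\Theta_{\mathrm T}$, which holds because $R_0\le\underline\tau/2$ keeps $\tau>0$; and that the oracle KKT conditions hold with an interior multiplier, which follows because both oracles are interior minimizers per the discussion preceding \eqref{eq:tobit-terminal-strong-oracle-rate}. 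I also need to confirm that the invertibility of $\Psi_0$ uses $\widetilde\Cb_{\mathrm T}\widetilde\Cb_{\mathrm T}^\top\succeq\Cb\Cb^\top$, so that $C_{\max}$ still controls it.
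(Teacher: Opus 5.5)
Your proposal is correct and follows essentially the same route as the paper. You work on the intersection of the initializer and terminal-concentration events and localize both oracles within $\overline r_{\mathrm T,B}=o(R_0)$. You then check the RLH, $\mc H(\eta_{\mathrm T,\infty})$ and $\lambda_{\min}(\Jb_0)\ge J_l$ hypotheses, apply Lemma~\ref{lem:oracle_bahadur} under the Tobit substitutions, and absorb the remainders via \eqref{eq:tobit-bahadur-remainder-rate}. One bookkeeping point: the interiority step behind \eqref{eq:tobit-terminal-strong-oracle-rate} was originally justified by \ref{Tobitassm:strong-oracle}, which is not assumed here. It must instead come from $\overline r_{\mathrm T,B}/R_0=o(1)$ for large $N_B$, as the paper does through Lemma~\ref{lem:oracle_rate}; your $o(R_0)$ remark already supplies exactly this.
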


\begin{proof}
On the intersection of the events in
Proposition~\ref{prop:tobit-initializer-verification} and
Lemma~\ref{lem:tobit-terminal-concentration}, the complement
probability is at most
$2B/d_{\mathrm T}+\Pi_{\mathrm T,B}(t_{\mathrm T})
+8\{\log(e d_{\mathrm T})\}^{-2}=o(1)$.
On this event, \eqref{eq:tobit-final-rates} and the definition of
$\phi^{(B)}$ yield
\begin{equation}
\begin{gathered}
\|\widehat\theta^{(j)}-\theta^*\|_2
\le\frac{3a_{\mathrm T}^{1/2}\lambda_j}{2\kappa},
\qquad j=1,\ldots,B,
\\
\phi^{(B)}
=\frac1{N_B}\sum_{j=1}^{B-1}n_j
\|\widehat\theta^{(j)}-\theta^*\|_2^2
\le\Phi_{\mathrm T,B}.
\end{gathered}
\label{eq:tobit-bahadur-initializer-rate}
\end{equation}
Equations
\eqref{eq:tobit-terminal-concentration-event},
\eqref{eq:tobit-terminal-gstar-bound},
\eqref{eq:tobit-terminal-gdagger-bound}, and
\eqref{eq:tobit-terminal-oracle-radii} give
$r_{\mathrm T,0}^{\rm ora}\vee r_{\mathrm T,a}^{\rm ora}
\le\overline r_{\mathrm T,B}$.
Hence \eqref{eq:tobit-bahadur-probability-and-lyapunov} gives
$2(r_{\mathrm T,0}^{\rm ora}\vee r_{\mathrm T,a}^{\rm ora})\le R_0$
for all sufficiently large $N_B$.  Applying
Lemma~\ref{lem:oracle_rate}, with
\eqref{eq:tobit-terminal-surrogate-rsc} in place of
\eqref{eq:inherited-surrogate-curvature}, yields oracle minimizers satisfying
\begin{equation}
\max_{e\in\{0,a\}}
\|\widehat\theta_{\mathrm T,e}^{\rm ora}-\theta^*\|_2
\le\overline r_{\mathrm T,B}=o(R_0).
\label{eq:tobit-bahadur-oracle-localization}
\end{equation}
Any two minimizers of the same oracle problem differ only on $T$; hence
\eqref{eq:tobit-terminal-surrogate-rsc} makes each minimizer unique.
Every difference in \eqref{eq:tobit-bahadur-oracle-localization} is
supported on $T$ and therefore belongs to
$\CC(q_{\mathrm T}^{1/2})\subseteq\CC(l)$.
Equation~\eqref{eq:tobit-terminal-rlh} gives
$\bigcap_{j=1}^B\mc E_{\rm rlh}^{(j)}(R_0,l,M_{\mathrm T,l})$, and the
first line of \eqref{eq:tobit-terminal-concentration-event} gives
$\mc H(\eta_{\mathrm T,\infty})$ under
\eqref{eq:tobit-bahadur-main-objects}.

Let $\mc R_{\mathrm T,0}$ and $\mc R_{\mathrm T,a}$ denote the
remainders in \eqref{eq:reduced-oracle-bahadur} and
\eqref{eq:full-oracle-bahadur}, respectively, after
\eqref{eq:tobit-terminal-main-substitution}.  For $e\in\{0,a\}$, the
proof of Lemma~\ref{lem:oracle_bahadur},
Lemma~\ref{lem:decomposition},
\eqref{eq:tobit-terminal-rlh}, and the first line of
\eqref{eq:tobit-terminal-concentration-event} give
\begin{equation*}
\begin{split}
\|\mc R_{\mathrm T,e}\|_2
\lesssim{}&
\frac{M_{\mathrm T,l}q_{\mathrm T}^{1/2}}{N_B}
\bigg[
\sum_{j=1}^{B-1}n_j
\bigg\{
\|\widehat\theta^{(j)}-\theta^*\|_2^2
+\|\widehat\theta^{(j)}-\theta^*\|_2
 \|\widehat\theta_{\mathrm T,e}^{\rm ora}-\theta^*\|_2
\bigg\}
+n_B\|\widehat\theta_{\mathrm T,e}^{\rm ora}-\theta^*\|_2^2
\bigg]\\
&+q_{\mathrm T}\eta_{\mathrm T,\infty}
\|\widehat\theta_{\mathrm T,e}^{\rm ora}-\theta^*\|_2.
\end{split}
\end{equation*}
Equations \eqref{eq:tobit-bahadur-initializer-rate} and
\eqref{eq:tobit-bahadur-oracle-localization} show that both
$\sqrt{N_B}\max_{e\in\{0,a\}}\|\mc R_{\mathrm T,e}\|_2$ and the
left-hand side of \maineqref{eq:inference-remainder-condition}, uniformly in
$e\in\{0,a\}$, are bounded by
\begin{equation*}
\begin{split}
C(1+M_{\mathrm T,l})\sqrt{\frac{q_{\mathrm T}}{N_B}}
\left\{
\frac{a_{\mathrm T}}{\kappa^2}
\sum_{j=1}^{B}n_j\lambda_j^2
+\frac{a_{\mathrm T}^{1/2}}{\kappa}\overline r_{\mathrm T,B}
\sum_{j=1}^{B}n_j\lambda_j
+n_B\overline r_{\mathrm T,B}^{\,2}
\right\}
+C\sqrt{N_B}\,q_{\mathrm T}\eta_{\mathrm T,\infty}
\overline r_{\mathrm T,B},
\end{split}
\end{equation*}
which is $o(1)$ by \eqref{eq:tobit-bahadur-remainder-rate}.  Applying
Lemma~\ref{lem:oracle_bahadur} under
\eqref{eq:tobit-terminal-main-substitution} proves
\eqref{eq:tobit-reduced-bahadur}--\eqref{eq:tobit-full-bahadur}.

For the score covariance, take
\begin{equation}
\Kb_0=
\frac1{N_B}\sum_{j=1}^B\sum_{i=1}^{n_j}
\mathbb E\!\left[
\nabla_T\ell_{\mathrm T}(\theta^*;Y_i^{(j)},x_i^{(j)})
\nabla_T\ell_{\mathrm T}(\theta^*;Y_i^{(j)},x_i^{(j)})^\top
\right],
\label{eq:tobit-bahadur-score-covariance}
\end{equation}
and evaluate $\Vb_0$ and $\Omega_0$ with
\eqref{eq:tobit-bahadur-main-objects} and
\eqref{eq:tobit-bahadur-score-covariance}.
For every row of the design matrix, the Tobit model satisfies
\begin{equation}
\Phi\{-(x_i^{(j)})^\top\alpha\}
+\int_0^\infty
\tau\varphi\{\tau y-(x_i^{(j)})^\top\alpha\}\,\mathrm dy=1,
\qquad \theta\in\Theta_{\mathrm T}.
\label{eq:tobit-observed-mass-identity}
\end{equation}
We differentiate \eqref{eq:tobit-observed-mass-identity} twice at $\theta^*$
and integrate over the covariate row. The Gaussian moments and
\eqref{eq:tobit-bahadur-design-bounds} justify differentiation under the
integral and yield
\begin{equation}
\mathbb E\!\left[
\nabla\ell_{\mathrm T}(\theta^*;Y_i^{(j)},x_i^{(j)})
\nabla\ell_{\mathrm T}(\theta^*;Y_i^{(j)},x_i^{(j)})^\top
\right]
=
\mathbb E\!\left[
\nabla^2\ell_{\mathrm T}(\theta^*;Y_i^{(j)},x_i^{(j)})
\right].
\label{eq:tobit-bartlett-identity}
\end{equation}
Consequently, averaging over $(j,i)$ yields
\begin{equation}
\Kb_0=\Jb_0=(J_{\mathrm T,B})_{T,T}.
\label{eq:tobit-pooled-information-identity}
\end{equation}
By Assumption~\mainref{assm:hessian}, we get
\begin{equation*}
u^\top\Jb_0u\ge J_l\|u\|_2^2,
\qquad \lambda_{\min}(\Jb_0)\ge J_l.
\end{equation*}
The upper population bound in Assumption~\mainref{assm:hessian} gives
\begin{equation}
u^\top\Jb_0u\le J_u\|u\|_2^2,
\qquad\lambda_{\max}(\Jb_0)\le J_u.
\label{eq:tobit-bahadur-information-upper}
\end{equation}
Thus \eqref{eq:tobit-pooled-information-identity} gives
$J_l\le\lambda_{\min}(\Kb_0)\le\lambda_{\max}(\Kb_0)\le J_u$.
These are the bounds used in Lemma~\ref{lem:V0-conditioning} and
Theorem~\mainref{thm:oracle.test}.

Partition $\Omega_0=(\Omega_{0,\alpha},\Omega_{0,\tau})$ according to the
$\alpha$ and $\tau$ coordinates in $T$.  From
\eqref{eq:tobit-pooled-information-identity} and the definition of
$\Omega_0$, $\Omega_0\Jb_0\Omega_0^\top=\Ib_r$ and
$\|\Omega_0\|_2\le J_l^{-1/2}$.
Equations \eqref{eq:tobit-score-formula} and
\eqref{eq:tobit-bahadur-design-bounds}, together with Gaussian moments, imply
\begin{equation*}
\mathbb E\left\|\Omega_0\nabla_T
\ell_{\mathrm T}(\theta^*;Y,X)\right\|_2^3
\le C\left\{\mathbb E\|\Omega_{0,\alpha}X_{T\setminus\{p+1\}}\|_2^3
+\|\Omega_{0,\tau}\|_2^3\right\}.
\end{equation*}
The covariance bound proved before
\eqref{eq:tobit-terminal-pooled-score-mgf} gives
\begin{equation*}
\mathbb E\|\Omega_{0,\alpha}X_{T\setminus\{p+1\}}\|_2^2
\le C\|\Omega_{0,\alpha}\|_{\rm F}^2\le Cr.
\end{equation*}
Also, \eqref{eq:tobit-bahadur-design-bounds} implies
$\|\Omega_{0,\alpha}X_{T\setminus\{p+1\}}\|_2\le Cq_{\mathrm T}^{1/2}$
almost surely. Therefore
\begin{equation}
\begin{aligned}
\mathbb E\|\Omega_{0,\alpha}X_{T\setminus\{p+1\}}\|_2^3
&\le Cq_{\mathrm T}^{1/2}
\mathbb E\|\Omega_{0,\alpha}X_{T\setminus\{p+1\}}\|_2^2
\le Crq_{\mathrm T}^{1/2},
\\
\frac{r^{1/4}}{N_B^{3/2}}
\sum_{j=1}^B\sum_{i=1}^{n_j}
\mathbb E\left\|\Omega_0\nabla_T\ell_{\mathrm T}
(\theta^*;Y_i^{(j)},x_i^{(j)})\right\|_2^3
&\le C\left(\frac{r^{5/2}q_{\mathrm T}}{N_B}\right)^{1/2}=o(1).
\end{aligned}
\label{eq:tobit-bahadur-summed-third-moment}
\end{equation}
The summands are independent and centered under the joint law of $(Y,X)$,
and \eqref{eq:tobit-pooled-information-identity} gives their standardized
total covariance $\Ib_r$. Thus Theorem~1.1 of
\citet{bentkus2005lyapunov} yields, for
$\Zb_r\sim N(\mathbf0_r,\Ib_r)$,
\begin{equation*}
\begin{split}
\sup_{\mathcal A\ {\rm convex}}
\left|\mathbb P\left\{
\frac1{\sqrt{N_B}}\sum_{j=1}^B\sum_{i=1}^{n_j}
\Omega_0\nabla_T\ell_{\mathrm T}
(\theta^*;Y_i^{(j)},x_i^{(j)})\in\mathcal A\right\}
-\mathbb P(\Zb_r\in\mathcal A)\right|
\leq C\left(\frac{r^{5/2}q_{\mathrm T}}{N_B}\right)^{1/2},
\end{split}
\end{equation*}
which is $o(1)$.
Moreover,
$\widetilde{\Cb}_{\mathrm T}\widetilde{\Cb}_{\mathrm T}^\top
\succeq\Cb\Cb^\top$ yields
$\lambda_{\max}\{(
\widetilde{\Cb}_{\mathrm T}\widetilde{\Cb}_{\mathrm T}^\top)^{-1}\}
\le C_{\max}^2$.
\end{proof}

\subsubsection{Chi-square approximation of test statistics}
\label{sec:tobit_test_statistic_verification}

Set $T=\overline{\mc M}_{\mathrm T}$ and apply
\eqref{eq:tobit-terminal-main-substitution}.  In the Tobit specialization
of \maineqref{def:oracle.wald}, the covariance is $\Vb_0$:
the leading score term in \eqref{eq:tobit-full-bahadur}, after multiplication
by $\sqrt{N_B}\Cb_0$, has covariance $\Cb_0\Jb_0^{-1}\Kb_0\Jb_0^{-1}\Cb_0^\top$, which equals $\Vb_0$.
Moreover, since
$\widetilde{\Cb}_{\mathrm T}\widetilde{\Cb}_{\mathrm T}^{\top}
=\Cb\Cb^\top+\tb\tb^\top\succeq\Cb\Cb^\top$, define
\begin{equation}
\begin{gathered}
R_{\mathrm T}
=(\widetilde{\Cb}_{\mathrm T}
\widetilde{\Cb}_{\mathrm T}^{\top})^{-1/2},
\qquad
R_{\mathrm T}^2
\preceq(\Cb\Cb^\top)^{-1},
\qquad
\|R_{\mathrm T}\|_2\le C_{\max},
\\
R_{\mathrm T}\widetilde{\Cb}_{\mathrm T}
\theta_{\mc M_{\mathrm T}}=0
\quad\Longleftrightarrow\quad
\widetilde{\Cb}_{\mathrm T}\theta_{\mc M_{\mathrm T}}=0,
\qquad
(R_{\mathrm T}\widetilde{\Cb}_{\mathrm T})
(R_{\mathrm T}\widetilde{\Cb}_{\mathrm T})^\top=\Ib_r.
\end{gathered}
\label{eq:tobit-row-normalized-constraint}
\end{equation}
For every nonsingular $R$, every positive definite $V$ and $J$, every
full-row-rank $C$, and every conformable $v$, one can easily check that
\begin{equation*}
\begin{split}
(Rv)^\top(RVR^\top)^{-1}(Rv)
&=v^\top V^{-1}v,
\\
(RC)^\top(RVR^\top)^{-1}(RC)
&=C^\top V^{-1}C,
\\
J^{-1}(RC)^\top
\{RCJ^{-1}C^\top R^\top\}^{-1}R
&=J^{-1}C^\top(CJ^{-1}C^\top)^{-1}.
\end{split}
\end{equation*}
With $V=CJ^{-1}C^\top$, the second identity, pre- and post-multiplied by $J^{-1/2}$, gives the same matrix $\Pb_0$.  Right
multiplication of the third identity by $\hb_{\mathrm T,B}$ gives the
same constrained term in \eqref{eq:tobit-reduced-bahadur}.  Therefore
\eqref{eq:tobit-row-normalized-constraint} does not change the feasible
sets, the oracle estimators, the two statistics, or $e_B$.
For every $u\in\mathbb R^{q_{\mathrm T}}$, we obtain
\begin{equation}
\left\|
(R_{\mathrm T}\Vb_0R_{\mathrm T}^\top)^{-1/2}
R_{\mathrm T}\Cb_0\Jb_0^{-1}u
\right\|_2
=\|\Vb_0^{-1/2}\Cb_0\Jb_0^{-1}u\|_2
=\|\Omega_0u\|_2.
\label{eq:tobit-row-normalized-lyapunov}
\end{equation}
Thus \eqref{eq:tobit-bahadur-summed-third-moment} is unchanged by
\eqref{eq:tobit-row-normalized-constraint}.

Suppose that the conditions of
Propositions~\ref{prop:tobit-strong-oracle-verification} and
\ref{prop:tobit-bahadur-verification} hold. Additionally, we impose
the following conditions.

\begin{enumerate}[label=(Tobit-\arabic*),start=7,leftmargin=3\parindent]
\item\label{Tobitassm:test-statistic}
As $N_B\to\infty$,
\begin{gather}
\|R_{\mathrm T}\hb_{\mathrm T,B}\|_2
=O\!\left(\sqrt{\frac r{N_B}}\right),
\label{eq:tobit-test-local-alternative-rate}
\\
\sqrt r\left[
\frac{q_{\mathrm T}}{\sqrt{N_B}}
+
\left\{
\frac{q_{\mathrm T}a_{\mathrm T}}{\kappa^2N_B}
\sum_{j=1}^Bn_j\lambda_j^2
\right\}^{1/2}
\right]=o(1).
\label{eq:tobit-test-statistic-rate}
\end{gather}
\end{enumerate}

\begin{proposition}\label{prop:tobit-test-statistic-verification}
Suppose that the conditions of
Propositions~\ref{prop:tobit-strong-oracle-verification} and
\ref{prop:tobit-bahadur-verification} and
\ref{Tobitassm:test-statistic} hold.  Under
\eqref{eq:tobit-terminal-main-substitution} and
\eqref{eq:tobit-row-normalized-constraint}, condition
\maineqref{eq:feasible-plugin} holds.
Consequently, for every
$k\ge K_{\mathrm T}^{\rm LLA}+1$ and
$T'=T_W^{(k)}$ or $T_S^{(k)}$,
\begin{equation*}
\sup_{z\ge0}
\left|
\PP(T'\le z)-\PP\{\chi^2(r,e_B)\le z\}
\right|\longrightarrow0.
\end{equation*}
Under the null, both tests have asymptotic size $\alpha$. The local-power
conclusion of Corollary~\mainref{cor:test} holds under its local-alternative rate.
\end{proposition}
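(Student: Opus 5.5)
The Tobit statement can be reduced to Theorem~\mainref{thm:test}, exactly as in the GLM case (Proposition~\ref{prop:glm-test-statistic}). Proposition~\ref{prop:tobit-strong-oracle-verification} supplies the strong oracle conditions with probability tending to one. Proposition~\ref{prop:tobit-bahadur-verification} supplies the Bahadur expansions, Assumptions~\mainref{assm:score} and \mainref{assm:lyapunov}, and the remainder condition \maineqref{eq:inference-remainder-condition}. Condition \eqref{eq:tobit-test-local-alternative-rate} gives the local-alternative rate after row normalization, and \eqref{eq:tobit-row-normalized-constraint} gives $\lambda_{\max}(\Cb\Cb^\T)=1$. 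The only genuinely new input is therefore the plug-in condition \maineqref{eq:feasible-plugin}. For it I would bound
\[
\tau_J=\max_e\|\nabla^2_{TT}\widehat{\mc Q}_B(\widehat\theta^{\rm ora}_{\mathrm T,e})-\Jb_0\|_2,
\qquad
\tau_K=\|\widehat{\Kb}^{(B)}_{TT}-\Kb_0\|_2 ,
\]
and show $\sqrt r(\tau_J+\tau_K)=o_p(1)$. By the row-normalization identities, $T_W$, $T_S$ and $e_B$ are unchanged, so I may work with $R_{\mathrm T}\widetilde{\Cb}_{\mathrm T}$ throughout.

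For $\tau_J$, I would split $\nabla^2\widehat{\mc Q}_B(\theta)-J_{\mathrm T,B}$ into three pieces, as in the display used for the strong-oracle verification. The first is the pooled Hessian concentration $\overline H^*_{\mathrm T,B}-J_{\mathrm T,B}$. Its $T\times T$ block has operator norm at most $q_{\mathrm T}\eta_{\mathrm T,\infty}$. The second is the historical Hessian shifts. By \eqref{eq:tobit-terminal-rlh} applied columnwise on $T$, they are at most $q_{\mathrm T}^{1/2}M_{\mathrm T,l}N_B^{-1}\sum_{j<B}n_j\|\widehat\theta^{(j)}-\theta^*\|_2$. The third is the current-batch shift at the oracle, which is at most $q_{\mathrm T}^{1/2}M_{\mathrm T,l}(n_B/N_B)\overline r_{\mathrm T,B}$. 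Multiplying by $\sqrt r\le\sqrt{q_{\mathrm T}}$ and using $\overline r_{\mathrm T,B}\ge\chi_{\mathrm T}(q_{\mathrm T})/\kappa\gtrsim(q_{\mathrm T}/N_B)^{1/2}$, each term is dominated by a term of \eqref{eq:tobit-bahadur-remainder-rate}. This is the same trick as in the GLM proof. On the strong-oracle event $\widehat{\Jb}^{(k)}_e$ equals this oracle Hessian, so $\sqrt r\,\tau_J=o_p(1)$.

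For $\tau_K$, I would first compare with $\widetilde{\Kb}^{(B)}=N_B^{-1}\sum\nabla_T\ell_{\mathrm T}(\theta^*)\nabla_T\ell_{\mathrm T}(\theta^*)^\T$. Since $\Kb_0=\Jb_0$ by \eqref{eq:tobit-pooled-information-identity}, the mean of $\widetilde{\Kb}^{(B)}$ is $\Kb_0$. The score entries are products of bounded covariates, $Y$, $\varepsilon$ and Mills ratios, so they have bounded fourth moments, and a Frobenius variance computation gives $\|\widetilde{\Kb}^{(B)}-\Kb_0\|_2=O_p(q_{\mathrm T}/\sqrt{N_B})$. Next comes the fitted-versus-true score difference $\mathbf D_B$. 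From \eqref{eq:tobit-score-formula}, the change in the score is linear in $(x^\T\Delta_\alpha,\Delta_\tau y,\Delta_\tau)$ up to a Mills-ratio term, and $|\mathfrak g'|<1$ controls that term. Hence
\[
\|\nabla\ell_{\mathrm T}(\widehat\theta^{(j)})-\nabla\ell_{\mathrm T}(\theta^*)\|_2^2\lesssim q_{\mathrm T}(1+Y^2)\{(x^\T\Delta_{j,\alpha})^2+\Delta_{j,\tau}^2\}.
\]
Here $\|x_T\|_2^2\le B_1q_{\mathrm T}$ by \eqref{eq:tobit-bahadur-design-bounds}, and the $\tau$-denominators are bounded because $R_0\le\underline\tau/2$. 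Averaging, and using the Gram-event bound on $(1+Y^2)xx^\T$ over $\CC(l_{\mathrm T})$ together with \eqref{eq:tobit-final-rates}, I expect $\operatorname{tr}\mathbf D_B=O_p\bigl(q_{\mathrm T}a_{\mathrm T}\kappa^{-2}N_B^{-1}\sum_j n_j\lambda_j^2\bigr)$. The Cauchy--Schwarz bound \eqref{eq:glm-fitted-score-covariance} then gives $\|\widehat{\Kb}^{(B)}_{TT}-\widetilde{\Kb}^{(B)}\|_2\le2\|\widetilde{\Kb}^{(B)}\|_2^{1/2}\|\mathbf D_B\|_2^{1/2}+\|\mathbf D_B\|_2$. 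Both pieces, multiplied by $\sqrt r$, vanish by \eqref{eq:tobit-test-statistic-rate}.

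The main obstacle is this fitted-score bound. The Tobit score is not of the form residual times covariate, and the $Y^2$ factor in the $\tau$-coordinate is unbounded. I would handle it by conditioning on the Gram event \eqref{eq:tobit-gram-event-probability}, which already controls $(1+Y^2)xx^\T$ on the cone, and by bounding the censored part through $|\mathfrak g'|\le1$.

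With \maineqref{eq:feasible-plugin} verified, Theorem~\mainref{thm:test} gives the uniform noncentral chi-square approximation for every $k\ge K^{\rm LLA}_{\mathrm T}+1$, and Corollary~\mainref{cor:test} then gives the size and local-power claims.
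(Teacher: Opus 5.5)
Your overall route is the paper's. You split $\nabla^2_{TT}\widehat{\mc Q}_B-\Jb_0$ into three pieces and absorb them into \eqref{eq:tobit-bahadur-remainder-rate} via $\sqrt r\le q_{\mathrm T}^{1/2}\lesssim\sqrt{N_B}\,\overline r_{\mathrm T,B}$. You then handle $\widetilde{\Kb}$ by a Frobenius variance bound, bound the fitted-score perturbation $\mathbf{D}$ through its trace, and finish with the Cauchy--Schwarz comparison. There is, however, a concrete error at exactly the step you flag as the main obstacle.

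By \eqref{eq:tobit-score-formula}, the $\tau$-coordinate of the score is $D\{-\tau^{-1}+\tau Y^2-Y x^\top\alpha\}$. Its $\tau$-derivative is $D(\tau^{-2}+Y^2)$, so the increment $\nabla_\tau\ell_{\mathrm T}(\widehat\theta^{(j)})-\nabla_\tau\ell_{\mathrm T}(\theta^*)$ contains the term $D\,Y^2\Delta_{j,\tau}$. Squaring gives $Y^4\Delta_{j,\tau}^2$, so your displayed pointwise inequality, with only $(1+Y^2)$ multiplying $\Delta_{j,\tau}^2$, is false. The correct bound is
\[
\|d_i^{(j)}\|_2^2\lesssim q_{\mathrm T}\Bigl[\{1+(Y_i^{(j)})^2\}\{(x_i^{(j)})^\top\Delta_{j,\alpha}\}^2+\{1+(Y_i^{(j)})^4\}\Delta_{j,\tau}^2\Bigr].
\]
The Gram event \eqref{eq:tobit-gram-event-probability} only controls averages of $\zeta\zeta^\top$ and $(1+Y^2)xx^\top$. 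It handles the first term but says nothing about $N_B^{-1}\sum_{j,i}\{1+(Y_i^{(j)})^4\}$, so your proposed mechanism does not close the argument.

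The repair, which is what the paper does, takes two steps:

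1. On the initializer event, use the deterministic bound $\Delta_{j,\tau}^2\le\|\Delta_j\|_2^2\le 9a_{\mathrm T}\lambda_j^2/(4\kappa^2)$ from \eqref{eq:tobit-final-rates} and pull it outside the average. This must come first, because $\widehat\theta^{(j)}$ depends on the same batch's responses.
2. Apply Markov's inequality to $N_B^{-1}\sum_j\sum_i\lambda_j^2\{1+(Y_i^{(j)})^4\}$, using $\mathbb{E}(Y^4)\le C$. This moment bound follows from $0\le\tau^*Y\le B_\eta^{1/2}+|\varepsilon|$ and $\tau^*\ge\underline\tau$.

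This gives the same rate $\operatorname{tr}(\mathbf{D})=O_p\{q_{\mathrm T}a_{\mathrm T}\kappa^{-2}N_B^{-1}\sum_j n_j\lambda_j^2\}$, after which the rest of your argument goes through unchanged.
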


\begin{proof}
Denote
\begin{equation*}
\begin{split}
\tau_J
=\max_{e\in\{0,a\}}
\left\|\nabla^2_{T,T}\widehat{\mc Q}_B
(\widehat\theta_{\mathrm T,e}^{\rm ora})-\Jb_0\right\|_2,
\qquad
\tau_K
=\left\|\widehat{\Kb}^{(B)}_{T,T}-\Kb_0\right\|_2.
\end{split}
\end{equation*}
For $e\in\{0,a\}$, the definition of $\widehat{\mc Q}_B$ implies
\begin{equation*}
\begin{split}
\nabla^2_{T,T}\widehat{\mc Q}_B
(\widehat\theta_{\mathrm T,e}^{\rm ora})-\Jb_0
&=(\overline H_{\mathrm T,B}^*-J_{\mathrm T,B})_{T,T}
+\frac1{N_B}\sum_{j=1}^{B-1}n_j
\left\{\nabla^2\widehat{\mc L}^{(j)}(\widehat\theta^{(j)})
-\nabla^2\widehat{\mc L}^{(j)}(\theta^*)\right\}_{T,T}
\\
&\qquad
+\frac{n_B}{N_B}
\left\{\nabla^2\widehat{\mc L}^{(B)}
(\widehat\theta_{\mathrm T,e}^{\rm ora})
-\nabla^2\widehat{\mc L}^{(B)}(\theta^*)\right\}_{T,T}.
\end{split}
\end{equation*}
The first line of \eqref{eq:tobit-terminal-concentration-event}, \eqref{eq:tobit-terminal-rlh}, and $q_{\mathrm T}^{1/2}\le l$ yield
\begin{equation*}
\begin{split}
\left\|\nabla^2_{T,T}\widehat{\mc Q}_B
(\widehat\theta_{\mathrm T,e}^{\rm ora})-\Jb_0\right\|_2
\le q_{\mathrm T}\eta_{\mathrm T,\infty}
+\frac{M_{\mathrm T,l}q_{\mathrm T}^{1/2}}{N_B}
\left\{
\sum_{j=1}^{B-1}n_j
\|\widehat\theta^{(j)}-\theta^*\|_2
+n_B\|\widehat\theta_{\mathrm T,e}^{\rm ora}-\theta^*\|_2
\right\}.
\end{split}
\end{equation*}
Equations \eqref{eq:tobit-bahadur-initializer-rate} and
\eqref{eq:tobit-bahadur-oracle-localization} therefore provide
\begin{equation}
\tau_J
\lesssim
q_{\mathrm T}\eta_{\mathrm T,\infty}
+\frac{M_{\mathrm T,l}q_{\mathrm T}^{1/2}}{N_B}
\left\{
\frac{a_{\mathrm T}^{1/2}}{\kappa}
\sum_{j=1}^{B-1}n_j\lambda_j
+n_B\overline r_{\mathrm T,B}
\right\}.
\label{eq:tobit-test-hessian-bound}
\end{equation}
Equations \eqref{eq:tobit-terminal-concentration-rates} and
\eqref{eq:tobit-terminal-strong-oracle-rates} give
$\sqrt{N_B}\,\overline r_{\mathrm T,B}\gtrsim q_{\mathrm T}^{1/2}$.
Since $r\le q_{\mathrm T}$, \eqref{eq:tobit-test-hessian-bound} implies
\begin{equation*}
\begin{split}
\sqrt r\,\tau_J
\lesssim &
\sqrt{N_B}\,q_{\mathrm T}\eta_{\mathrm T,\infty}
\overline r_{\mathrm T,B}
+M_{\mathrm T,l}\sqrt{\frac{q_{\mathrm T}}{N_B}}
\left\{
\frac{a_{\mathrm T}^{1/2}}{\kappa}\overline r_{\mathrm T,B}
\sum_{j=1}^{B-1}n_j\lambda_j
+n_B\overline r_{\mathrm T,B}^{\,2}
\right\},
\end{split}
\end{equation*}
which is $o_p(1)$ by \eqref{eq:tobit-bahadur-remainder-rate}.  Thus
\begin{equation}
\sqrt r\,\tau_J=o_p(1).
\label{eq:tobit-test-hessian-rate}
\end{equation}
Equations \eqref{eq:tobit-terminal-strong-oracle-recovery} and
\eqref{eq:feasible-support-identification}, under
\eqref{eq:tobit-terminal-main-substitution}, give, for
$k\ge K_{\mathrm T}^{\rm LLA}+1$,
\begin{equation}
\widehat{\Jb}^{(k)}_e
=\nabla^2_{T,T}\widehat{\mc Q}_B
(\widehat\theta_{\mathrm T,e}^{\rm ora}),
\qquad
\widehat{\Kb}^{(k)}_e=\widehat{\Kb}^{(B)}_{T,T},
\qquad e\in\{0,a\}.
\label{eq:tobit-feasible-common-hessian}
\end{equation}
Thus \eqref{eq:tobit-test-hessian-rate} applies to each feasible Hessian
at its corresponding oracle estimator.

Define
\begin{equation*}
\widetilde{\Kb}_{\mathrm T,B}
=\frac1{N_B}\sum_{j=1}^B\sum_{i=1}^{n_j}
s_i^{(j)}(\theta^*)s_i^{(j)}(\theta^*)^\top,
\qquad
s_i^{(j)}(\theta)=
\nabla_T\ell_{\mathrm T}
(\theta;Y_i^{(j)},x_i^{(j)}).
\end{equation*}
Equations \eqref{eq:tobit-bartlett-identity} and
\eqref{eq:tobit-pooled-information-identity} give
$\mathbb E\widetilde{\Kb}_{\mathrm T,B}=\Kb_0$.
The score formula \eqref{eq:tobit-score-formula}, the scale bounds
\eqref{eq:tobit-scale-bounds},
\eqref{eq:tobit-bahadur-design-bounds}, and the Gaussian moment bounds imply
\begin{equation*}
\max_{\substack{j\le B,\ i\le n_j\\ k,l\in T}}
\mathbb E\left(
[s_i^{(j)}(\theta^*)]_k^2
[s_i^{(j)}(\theta^*)]_l^2
\right)\lesssim1.
\end{equation*}
The independence in
\eqref{eq:tobit-latent-model}, together with
$\|A\|_2\le\|A\|_{\rm F}$, therefore gives
\begin{equation}
\begin{split}
\mathbb E\left(
\left\|\widetilde{\Kb}_{\mathrm T,B}-\Kb_0\right\|_{\rm F}^2
\right)
\lesssim\frac{q_{\mathrm T}^2}{N_B},
\qquad
\left\|\widetilde{\Kb}_{\mathrm T,B}-\Kb_0\right\|_2
=O_p\left(\frac{q_{\mathrm T}}{\sqrt{N_B}}\right).
\end{split}
\label{eq:tobit-truth-score-covariance-rate}
\end{equation}

For $j\le B$, let
$\Delta_j=\widehat\theta^{(j)}-\theta^*$ and
\begin{equation*}
d_i^{(j)}
=s_i^{(j)}(\widehat\theta^{(j)})-s_i^{(j)}(\theta^*),
\qquad
\mathbf D_{\mathrm T,B}
=\frac1{N_B}\sum_{j=1}^B\sum_{i=1}^{n_j}
d_i^{(j)}(d_i^{(j)})^\top.
\end{equation*}
By \eqref{eq:tobit-final-rates},
$\|\Delta_j\|_1\le4a_{\mathrm T}^{1/2}\|\Delta_j\|_2$ and
$\tau^*+t\Delta_{j,\tau}\ge\underline\tau/2$ for $0\le t\le1$.
The score formula \eqref{eq:tobit-score-formula},
$|\mathfrak g'|\le1$, and \eqref{eq:tobit-bahadur-design-bounds} yield
\begin{equation}
\|d_i^{(j)}\|_2^2
\le Cq_{\mathrm T}\left[
\{1+(Y_i^{(j)})^2\}
\big\{(x_i^{(j)})^\top\Delta_{j,\alpha}\big\}^2
+\{1+(Y_i^{(j)})^4\}\Delta_{j,\tau}^2\right].
\label{eq:tobit-fitted-score-pointwise}
\end{equation}
On the event in \eqref{eq:tobit-gram-event-probability} with $b=B$,
the bound on
$\mathbb E\{(1+Y^2)XX^\top\}$ obtained from
Assumption~\mainref{assm:hessian} and \eqref{eq:tobit-final-rates} give,
simultaneously for $j\le B$,
\[
\frac1{n_j}\sum_{i=1}^{n_j}\{1+(Y_i^{(j)})^2\}
\big\{(x_i^{(j)})^\top\Delta_{j,\alpha}\big\}^2
\le C\|\Delta_j\|_2^2+t_{\mathrm T}\|\Delta_j\|_1^2
\le\frac{Ca_{\mathrm T}\lambda_j^2}{\kappa^2}.
\]
Equations \eqref{eq:tobit-scale-bounds} and
\eqref{eq:tobit-bahadur-design-bounds} yield $\mathbb E(Y^4)\le C$.
Then, Markov's inequality implies
\$
\frac1{N_B}\sum_{j=1}^B\sum_{i=1}^{n_j}
\lambda_j^2\{1+(Y_i^{(j)})^4\}
=O_p\left(\frac1{N_B}\sum_{j=1}^Bn_j\lambda_j^2\right).
\$
Together with
$\Delta_{j,\tau}^2\le\|\Delta_j\|_2^2
\le9a_{\mathrm T}\lambda_j^2/(4\kappa^2)$,
these bounds and \eqref{eq:tobit-fitted-score-pointwise} yield
\$
\|\mathbf D_{\mathrm T,B}\|_2
\le\operatorname{tr}(\mathbf D_{\mathrm T,B})
=O_p\left\{
\frac{q_{\mathrm T}a_{\mathrm T}}{\kappa^2N_B}
\sum_{j=1}^Bn_j\lambda_j^2\right\}.
\$
Thus \eqref{eq:tobit-test-statistic-rate} gives
$\sqrt r\|\mathbf D_{\mathrm T,B}\|_2^{1/2}=o_p(1)$.

For every unit vector $v\in\mathbb R^{q_{\mathrm T}}$, the Cauchy--Schwarz inequality applied to $v^\top s_i^{(j)}(\theta^*)$ and $v^\top d_i^{(j)}$ yields
\begin{equation}
\left\|
\widehat{\Kb}^{(B)}_{T,T}-\widetilde{\Kb}_{\mathrm T,B}
\right\|_2
\le
2\|\widetilde{\Kb}_{\mathrm T,B}\|_2^{1/2}
\|\mathbf D_{\mathrm T,B}\|_2^{1/2}
+\|\mathbf D_{\mathrm T,B}\|_2.
\label{eq:tobit-fitted-score-covariance}
\end{equation}
Condition \eqref{eq:tobit-test-statistic-rate},
\eqref{eq:tobit-truth-score-covariance-rate}, and the preceding bound give
$\|\widetilde{\Kb}_{\mathrm T,B}\|_2=O_p(1)$ and
$\sqrt r\{\|\widetilde{\Kb}_{\mathrm T,B}-\Kb_0\|_2
+\|\mathbf D_{\mathrm T,B}\|_2^{1/2}\}=o_p(1)$.
Since $r\ge1$, this also yields
$\sqrt r\|\mathbf D_{\mathrm T,B}\|_2=o_p(1)$.  Thus
\eqref{eq:tobit-fitted-score-covariance} and the triangle inequality give
\begin{equation}
\begin{split}
\sqrt r\,\tau_K
\leq\sqrt r\,
\|\widetilde{\Kb}_{\mathrm T,B}-\Kb_0\|_2
+2\sqrt r\,\|\widetilde{\Kb}_{\mathrm T,B}\|_2^{1/2}
\|\mathbf D_{\mathrm T,B}\|_2^{1/2}
+\sqrt r\,\|\mathbf D_{\mathrm T,B}\|_2
=o_p(1).
\end{split}
\label{eq:tobit-test-score-covariance-rate}
\end{equation}

Equations \eqref{eq:tobit-test-hessian-rate} and
\eqref{eq:tobit-test-score-covariance-rate} verify
\maineqref{eq:feasible-plugin}.  Equations
\eqref{eq:tobit-pooled-information-identity}--
\eqref{eq:tobit-bahadur-information-upper} provide the bounds on $\Jb_0$ and
$\Kb_0$ used in Lemma~\ref{lem:V0-conditioning} and the proof of
Theorem~\mainref{thm:oracle.test}.  Equations
\eqref{eq:tobit-bahadur-summed-third-moment},
\eqref{eq:tobit-bahadur-probability-and-lyapunov}, and
\eqref{eq:tobit-row-normalized-lyapunov} supply the
triangular-array bound used in Step~1 in the proof of
Theorem~\mainref{thm:oracle.test}.  
Equations
\eqref{eq:tobit-bahadur-oracle-localization} and
\eqref{eq:tobit-bahadur-remainder-rate} give the remaining radius and
remainder bounds, and \eqref{eq:tobit-test-local-alternative-rate} yields
the required rate after \eqref{eq:tobit-row-normalized-constraint}.  The failure
probabilities in \eqref{eq:tobit-bahadur-probability-and-lyapunov} and
\eqref{eq:tobit-terminal-strong-oracle-probability} tend to zero.
Finally, \eqref{eq:tobit-terminal-strong-oracle-recovery} and
\eqref{eq:feasible-support-identification} give the estimator, support,
constraint-matrix, and score-covariance identities in
\eqref{eq:feasible-oracle-plugin-identification}.  For the Hessians, the
definition of $\widehat{\Jb}^{(k)}_e$ gives
\eqref{eq:tobit-feasible-common-hessian}; hence
\eqref{eq:tobit-test-hessian-rate} applies to each feasible Hessian.
The proofs of Theorems~\mainref{thm:oracle.test} and \mainref{thm:test} now give
the claimed uniform approximation.

Under \eqref{eq:tobit-row-normalized-constraint}, we get
\begin{equation*}
\begin{split}
&N_B(R_{\mathrm T}\hb_{\mathrm T,B})^\top
(R_{\mathrm T}\Vb_0R_{\mathrm T}^\top)^{-1}
(R_{\mathrm T}\hb_{\mathrm T,B})
=N_B\hb_{\mathrm T,B}^\top\Vb_0^{-1}\hb_{\mathrm T,B}=e_B,
\\
&\|R_{\mathrm T}\hb_{\mathrm T,B}\|_2
\le C_{\max}\overline\tau\|\hb_B\|_2.
\end{split}
\end{equation*}
Thus $\hb_B=\mathbf0$ gives
$\PP\{\chi^2(r)>\chi_\alpha^2(r)\}=\alpha$, while
$\|\hb_B\|_2=O[\{\min(q_{\mathrm T}-r,r)/N_B\}^{1/2}]$ gives
$
\left|
\PP\{T'>\chi_\alpha^2(r)\}
-\PP\{\chi^2(r,e_B)>\chi_\alpha^2(r)\}
\right|\to0
$.
These are the two conclusions of Corollary~\mainref{cor:test}.
\end{proof}

Since \eqref{eq:tobit-lambda} and Lemma~\ref{lem:batch.size} give
$N_B^{-1}\sum_{j=1}^Bn_j\lambda_j^2
\lesssim N_B^{-1}\log d_{\mathrm T}\{1+\log(N_B/n_1)\}$,
condition \eqref{eq:tobit-test-statistic-rate} is implied by
\begin{equation*}
\sqrt r\left[
\frac{q_{\mathrm T}}{\sqrt{N_B}}
+
\left\{
\frac{q_{\mathrm T}a_{\mathrm T}\log d_{\mathrm T}}
{\kappa^2N_B}
\left\{1+\log\left(\frac{N_B}{n_1}\right)\right\}
\right\}^{1/2}
\right]=o(1).
\end{equation*}
For bounded $B$, the bound
$\sum_{j=1}^Bn_j/N_j\le B$ removes the logarithmic batch factor.
Together with $a_{\mathrm T}\le q_{\mathrm T}$ and $r\le q_{\mathrm T}$,
this shows that $q_{\mathrm T}^3\log d_{\mathrm T}=o(N_B)$ is sufficient
for \eqref{eq:tobit-test-statistic-rate}.  If
$\log d_{\mathrm T}=O(\log q_{\mathrm T})$, this has the same order as
$(s+m)^3\log(s+m)=o(n)$ in Theorem~2 of
\citet{jacobsonzou2025testing}; otherwise the logarithmic factor is
$\log d_{\mathrm T}$ rather than $\log q_{\mathrm T}$.

\subsubsection{Additional simulation II}
\label{sec:tobit_additional_simulation_II}

In this section, we explore the finite-sample inference performance of the renewable Wald and score
tests in the Tobit model. We generate covariate vectors
$X_i\sim N(\mathbf{0}_p,\Sigma)$ with $\Sigma_{jk}=0.5^{|j-k|}$ and
responses satisfying
\[
Y_i^*=1+X_i^\T\betab^*+\epsilon_i,\qquad
Y_i=\max\{0,Y_i^*\},\qquad \epsilon_i\sim N(0,1),
\]
where the errors are independent of the covariates. We set $p=400$,
$n_b=100$, and $B=8$, yielding $N_B=800$, and choose
\[
\betab^*=(h,0,0,1+h,1,\mathbf{0}_{p-5}^\T)^\T,
\qquad h\in\{0,0.05,0.1,0.2,0.4\}.
\]
For each value of $h$, we use 500 independent replications and test the following five hypotheses:
\begin{align*}
H_{0,\mathrm T}^{(1)} &: \ \beta_1^*=0, &
H_{0,\mathrm T}^{(2)} &: \ \beta_4^*-\beta_5^*=0,\\
H_{0,\mathrm T}^{(3)} &: \ (\beta_1^*,\beta_2^*,\beta_3^*)^\T=\mathbf{0}_3, &
H_{0,\mathrm T}^{(4)} &: \ \beta_1^*=0,\ \beta_4^*-\beta_5^*=0,\\
H_{0,\mathrm T}^{(5)} &: \ (\beta_1^*,\beta_2^*,\beta_3^*)^\T=\mathbf{0}_3,
\ \beta_4^*-\beta_5^*=0. &&
\end{align*}
Note that the corresponding degrees of freedom are $1,1,3,2$, and $4$.
The methods, tuning procedures, LLA reweighting mechanism, competitor setups, and nominal level are those in
Section~\mainref{sec:numerical}.

Table~\ref{tab:tobit-additional-II} reports the empirical rejection
probabilities. The null rejection probabilities are broadly comparable
across the online procedures and the pooled benchmark. The first hypothesis, $H_{0,\mathrm T}^{(1)}$, exhibits mildly conservative rejection behavior, whereas $H_{0,\mathrm T}^{(2)}$ tends to reject more often than the nominal level for both the pooled benchmark and most online procedures.

With LLA $\to$ LLA estimation, the online results are generally close to those of the pooled benchmark under both tuning rules.  With Lasso $\to$ LLA estimation and 1-SE tuning, the score test has lower power than the corresponding Wald test at several intermediate departures from the null.
This difference is smaller with LLA $\to$ LLA estimation. 
Under CV/GIC tuning, the Wald and score rejection probabilities
are very similar for both estimators.

\begin{table}[p]
\centering\footnotesize
\setlength{\tabcolsep}{6pt}
\renewcommand{\arraystretch}{1.02}
\caption{Empirical rejection probabilities for the Tobit model with Gaussian covariates, $p=400$, $n_b=100$, $B=8$, and 500 independent replications per value of $h$.}
\label{tab:tobit-additional-II}
\begin{tabular}{llr*{8}{r}}
\toprule
 & & & \multicolumn{4}{c}{1-SE} & \multicolumn{4}{c}{CV/GIC}\\
\cmidrule(lr){4-7}\cmidrule(lr){8-11}
 & & & \multicolumn{2}{c}{Lasso $\to$ LLA} & \multicolumn{2}{c}{LLA $\to$ LLA} & \multicolumn{2}{c}{Lasso $\to$ LLA} & \multicolumn{2}{c}{LLA $\to$ LLA}\\
\cmidrule(lr){4-5}\cmidrule(lr){6-7}\cmidrule(lr){8-9}\cmidrule(lr){10-11}
Hypothesis & $h$ & {\tt Offline} & $T_W$ & $T_S$ & $T_W$ & $T_S$ & $T_W$ & $T_S$ & $T_W$ & $T_S$\\
\midrule
\multirow{5}{*}{$H_{0,\mathrm T}^{(1)}$} & 0.000 & 0.040 & 0.036 & 0.036 & 0.040 & 0.038 & 0.040 & 0.040 & 0.040 & 0.040\\
 & 0.050 & 0.264 & 0.242 & 0.216 & 0.256 & 0.258 & 0.252 & 0.252 & 0.264 & 0.264\\
 & 0.100 & 0.710 & 0.692 & 0.644 & 0.708 & 0.700 & 0.700 & 0.700 & 0.710 & 0.710\\
 & 0.200 & 1.000 & 1.000 & 0.996 & 1.000 & 1.000 & 1.000 & 1.000 & 1.000 & 1.000\\
 & 0.400 & 1.000 & 1.000 & 1.000 & 1.000 & 1.000 & 1.000 & 1.000 & 1.000 & 1.000\\
\addlinespace
\multirow{5}{*}{$H_{0,\mathrm T}^{(2)}$} & 0.000 & 0.066 & 0.066 & 0.058 & 0.066 & 0.066 & 0.066 & 0.066 & 0.066 & 0.066\\
 & 0.050 & 0.128 & 0.126 & 0.128 & 0.126 & 0.120 & 0.126 & 0.126 & 0.126 & 0.126\\
 & 0.100 & 0.338 & 0.318 & 0.278 & 0.338 & 0.314 & 0.334 & 0.334 & 0.340 & 0.340\\
 & 0.200 & 0.784 & 0.732 & 0.666 & 0.770 & 0.712 & 0.770 & 0.756 & 0.768 & 0.756\\
 & 0.400 & 1.000 & 1.000 & 0.998 & 1.000 & 1.000 & 1.000 & 1.000 & 1.000 & 1.000\\
\addlinespace
\multirow{5}{*}{$H_{0,\mathrm T}^{(3)}$} & 0.000 & 0.062 & 0.050 & 0.050 & 0.054 & 0.056 & 0.052 & 0.052 & 0.058 & 0.058\\
 & 0.050 & 0.174 & 0.144 & 0.152 & 0.162 & 0.168 & 0.158 & 0.158 & 0.170 & 0.170\\
 & 0.100 & 0.552 & 0.508 & 0.472 & 0.548 & 0.548 & 0.528 & 0.528 & 0.550 & 0.550\\
 & 0.200 & 0.996 & 0.990 & 0.986 & 0.992 & 0.992 & 0.994 & 0.994 & 0.994 & 0.994\\
 & 0.400 & 1.000 & 1.000 & 1.000 & 1.000 & 1.000 & 1.000 & 1.000 & 1.000 & 1.000\\
\addlinespace
\multirow{5}{*}{$H_{0,\mathrm T}^{(4)}$} & 0.000 & 0.056 & 0.048 & 0.046 & 0.052 & 0.054 & 0.054 & 0.054 & 0.054 & 0.054\\
 & 0.050 & 0.290 & 0.268 & 0.226 & 0.290 & 0.272 & 0.282 & 0.282 & 0.290 & 0.290\\
 & 0.100 & 0.750 & 0.734 & 0.658 & 0.750 & 0.740 & 0.734 & 0.734 & 0.748 & 0.748\\
 & 0.200 & 1.000 & 1.000 & 0.996 & 1.000 & 1.000 & 1.000 & 1.000 & 1.000 & 1.000\\
 & 0.400 & 1.000 & 1.000 & 1.000 & 1.000 & 1.000 & 1.000 & 1.000 & 1.000 & 1.000\\
\addlinespace
\multirow{5}{*}{$H_{0,\mathrm T}^{(5)}$} & 0.000 & 0.062 & 0.054 & 0.054 & 0.060 & 0.066 & 0.056 & 0.056 & 0.060 & 0.060\\
 & 0.050 & 0.212 & 0.194 & 0.200 & 0.200 & 0.204 & 0.200 & 0.200 & 0.210 & 0.210\\
 & 0.100 & 0.658 & 0.634 & 0.592 & 0.656 & 0.650 & 0.650 & 0.650 & 0.660 & 0.660\\
 & 0.200 & 0.998 & 0.998 & 0.996 & 0.998 & 0.998 & 0.998 & 0.998 & 0.998 & 0.998\\
 & 0.400 & 1.000 & 1.000 & 1.000 & 1.000 & 1.000 & 1.000 & 1.000 & 1.000 & 1.000\\
\bottomrule
\end{tabular}
\end{table}

\subsection{Nonparametric Box--Cox model}

Fix an integer $K\ge1$, nonrandom thresholds
$y_{01}<\cdots<y_{0K}$, and nonrandom weights $w_1,\ldots,w_K>0$
satisfying $\sum_{k=1}^K w_k=1$.  For $j=1,\ldots,B$ and
$i=1,\ldots,n_j$, suppose
\begin{equation}
g(Y_i^{(j)})=(x_i^{(j)})^\top\beta^*+\varepsilon_i^{(j)},
\label{eq:boxcox-model}
\end{equation}
where $g:\mathbb R\to\mathbb R$ is strictly increasing.  The observations are independent draws from the
joint law of $(Y,X)$, and $\varepsilon$ is independent of $X$ with law
$N(0,1)$.  All probabilities and expectations below refer to this joint
law.  Let
\[
b_k^*=g(y_{0k}),\qquad
D_{ik}^{(j)}=\mathbf 1\{Y_i^{(j)}\ge y_{0k}\},\qquad
S_{ik}^{(j)}=2D_{ik}^{(j)}-1,
\]
and define
\$
\theta^*=\begin{pmatrix}\beta^*\\ \boldsymbol b^*\end{pmatrix},
\qquad
\boldsymbol b^*=(b_1^*,\ldots,b_K^*)^\top,
\qquad
d_{\mathrm{BC}}=p+K,
\$
and
\$
z_{ik}^{(j)}
=
\begin{pmatrix}x_i^{(j)}\\-e_k\end{pmatrix},
\qquad
\eta_{ik}^{*(j)}=(z_{ik}^{(j)})^\top\theta^*,
\$
where $e_k$ is the $k$th canonical basis vector in $\mathbb R^K$.
Then
\begin{equation}
\mathbb P\{D_{ik}^{(j)}=1\mid x_i^{(j)}\}
=\Phi(\eta_{ik}^{*(j)}).
\label{eq:boxcox-probit-law}
\end{equation}
Equation~\eqref{eq:boxcox-probit-law} is the composite-probit parametrization developed in
\citet[Section.~2.1, display~(4)]{zhouzou2024boxcox} and
\citet[Section~2.1, displays~(3)--(4)]{zhouzou2024testing}.
We note that the thresholds are fixed before observing the data.

Let $L_{\mathrm P}(t)=-\log\Phi(t)$ and use $\mathfrak g$ and $\mathfrak h$
from \eqref{eq:tobit-mills-definitions}.  By
\citet[Lemma~2]{zhouzou2024boxcox} and
\eqref{eq:tobit-mills-properties}, one has
\begin{equation}
\begin{gathered}
L_{\mathrm P}'(t)=-\mathfrak g(t),\qquad
L_{\mathrm P}''(t)=\mathfrak h(t)\in(0,1),\qquad
|L_{\mathrm P}'''(t)|\le4.3,\\
|\log\mathfrak h(t+r)-\log\mathfrak h(t)|
\le\max\{|t|,1\}|r|+r^2/2,\qquad t,r\in\mathbb R.
\end{gathered}
\label{eq:boxcox-probit-derivatives}
\end{equation}
For the last inequality, the bound between displays (53) and (54)
in the proof of \citet[Theorem~2]{zhouzou2024boxcox} gives
$|\mathfrak h'(t)|\le\max\{|t|,1\}\mathfrak h(t)$.
Integrate along $t+sr$, $0\le s\le1$, using
$\max\{|t+sr|,1\}\le\max\{|t|,1\}+s|r|$.
For $s_k(y)=2\mathbf 1\{y\ge y_{0k}\}-1$ and
$z_k(x)=(x^\top,-e_k^\top)^\top$, define
\$
\ell_{\mathrm{BC}}(\theta;y,x)
=
\sum_{k=1}^K w_k
L_{\mathrm P}\{s_k(y)z_k(x)^\top\theta\},
\qquad
\widehat{\mc L}^{(j)}(\theta)
=
\frac1{n_j}\sum_{i=1}^{n_j}
\ell_{\mathrm{BC}}(\theta;Y_i^{(j)},x_i^{(j)}).
\$
This is the negative composite-probit objective function introduced in
\citet[Section~3.3, display~(24)]{zhouzou2024boxcox} and the negative of the
composite log-likelihood in
\citet[Section~3.2, display~(9)]{zhouzou2024testing}.  Taking a direct differentiation yields
\begin{equation}
\begin{split}
\nabla\ell_{\mathrm{BC}}(\theta;y,x)
&=
\sum_{k=1}^K w_kL_{\mathrm P}'\{s_k(y)z_k(x)^\top\theta\}
s_k(y)z_k(x),
\\
\nabla^2\ell_{\mathrm{BC}}(\theta;y,x)
&=
\sum_{k=1}^K w_k\mathfrak h\{s_k(y)z_k(x)^\top\theta\}
z_k(x)z_k(x)^\top.
\end{split}
\label{eq:boxcox-score-hessian}
\end{equation}
Thus $\ell_{\mathrm{BC}}(\cdot;y,x)$ is twice continuously differentiable and convex \citep{zhouzou2024boxcox,zhouzou2024testing}.

Let
\begin{equation*}
\begin{split}
\mc P_{\mathrm{BC}}&=\{1,\ldots,p\},
\qquad
\mc U_{\mathrm{BC}}=\{p+1,\ldots,p+K\},
\\
\mc M_{\mathrm{BC}}&=\mc M\cup\mc U_{\mathrm{BC}},
\qquad
\widetilde{\Cb}_{\mathrm{BC}}
=\begin{pmatrix}\Cb&\mathbf0_{r\times K}\end{pmatrix}.
\end{split}
\end{equation*}
Under the ordering
$\theta_{\mc M_{\mathrm{BC}}}=(\beta_{\mc M}^\top,\boldsymbol b^\top)^\top$,
display (6) of \citet{zhouzou2024testing} becomes
\[
\Cb\beta_{\mc M}^*=\tb
\quad\Longleftrightarrow\quad
\widetilde{\Cb}_{\mathrm{BC}}
\theta_{\mc M_{\mathrm{BC}}}^*=\tb.
\]

In the remainder of this subsection, the definitions in
Section~\mainref{sec:finite_sample_theory} are applied with $(\betab^*,p,\ell,\widehat\betab^{(j)},\mc M,\Cb)$ replaced by $(\theta^*,d_{\mathrm{BC}},\ell_{\mathrm{BC}},
\widehat\theta^{(j)},\mc M_{\mathrm{BC}},
\widetilde{\Cb}_{\mathrm{BC}})$.
The corrected surrogate $\widehat{\mc Q}_j$ is
\maineqref{def:q.function} under this substitution, and the initializer is
\begin{equation}
\widehat\theta^{(j)}
\in
\argmin_{\theta=(\beta^\top,\boldsymbol b^\top)^\top}
\{\widehat{\mc Q}_j(\theta)+\lambda_j\|\beta\|_1\}.
\label{eq:boxcox-online-initializer}
\end{equation}
Note that only the coordinates in $\mc P_{\mathrm{BC}}$ are penalized.

Define
\begin{equation*}
\mc A_{\mathrm{BC}}
=\operatorname{supp}(\beta^*)\cup\mc U_{\mathrm{BC}},
\qquad
a_{\mathrm{BC}}=|\mc A_{\mathrm{BC}}|,
\end{equation*}
and
\[
c_{\mathrm{BC},B}
=
\max_{j\le B}\max_{i\le n_j}\max_{k\le K}
\|z_{ik}^{(j)}\|_\infty,
\qquad
H_{\mathrm{BC},B}
=
\max_{j\le B}\max_{i\le n_j}\max_{k\le K}
|\eta_{ik}^{*(j)}|.
\]

\subsubsection{Assumptions for the nonparametric Box--Cox model}

\begin{enumerate}[label=(Box--Cox-\arabic*),leftmargin=3\parindent]
\item\label{BoxCoxassm:model}
Equation~\eqref{eq:boxcox-model} holds with the independent observations
and normal errors specified above.  The integer $K$ and the positive
weights are fixed.

\item\label{BoxCoxassm:design}
For a constant $B_1\ge1$, almost surely,
\begin{equation}
\max_{j\le B}\max_{1\le a\le d_{\mathrm{BC}}}
\frac1{n_j}\sum_{i=1}^{n_j}\sum_{k=1}^K
w_k(z_{ik,a}^{(j)})^2\le B_1,
\qquad H_{\mathrm{BC},B}\le B_1.
\label{eq:boxcox-design-conditions}
\end{equation}

\item\label{BoxCoxassm:re}
For a constant $\underline\kappa_{\mathrm{BC}}>0$,
\begin{equation}
\min_{j\le B}
\inf_{\substack{u\ne0:\\
\|u_{\mc A_{\mathrm{BC}}^c}\|_1
\le3\|u_{\mc A_{\mathrm{BC}}}\|_1}}
\frac{u^\top\mathbb E\{\nabla^2\widehat{\mc L}^{(j)}(\theta^*)\}u}
{\|u\|_2^2}
\ge\underline\kappa_{\mathrm{BC}}.
\label{eq:boxcox-design-re}
\end{equation}
\end{enumerate}
The inactive-column bound and the active Gram-matrix upper bound in condition
(A1) of \citet[Section~3.2.1]{zhouzou2024testing} imply the fixed-design
counterpart of the first inequality in
\eqref{eq:boxcox-design-conditions}.  The second inequality in
\eqref{eq:boxcox-design-conditions} is the uniform batchwise version
of the third condition in (A1) therein.
Under the stated i.i.d.\ model, by \eqref{eq:boxcox-design-conditions}, one has $|X_a|\le B_1^{1/2}$ and $c_{\mathrm{BC},B}\le B_1^{1/2}$, almost surely.
By \eqref{eq:boxcox-probit-law} and
\eqref{eq:tobit-mills-definitions}, we get
\begin{align*}
\mathbb E\{\mathfrak h(S_{ik}^{(j)}\eta_{ik}^{*(j)})
\mid x_i^{(j)}\}
&=\frac{\varphi(\eta_{ik}^{*(j)})^2}
{\Phi(\eta_{ik}^{*(j)})\Phi(-\eta_{ik}^{*(j)})},\\
u^\top\mathbb E\{\nabla^2\widehat{\mc L}^{(j)}(\theta^*)\}u
&=\mathbb E\left[
\frac1{n_j}\sum_{i=1}^{n_j}\sum_{k=1}^K
\frac{w_k\varphi(\eta_{ik}^{*(j)})^2
\{(z_{ik}^{(j)})^\top u\}^2}
{\Phi(\eta_{ik}^{*(j)})\Phi(-\eta_{ik}^{*(j)})}
\right].
\end{align*}
Condition~\ref{BoxCoxassm:re} is the batchwise population
restricted-eigenvalue formulation of the curvature condition (C),
display (27), in \citet[Theorem~2]{zhouzou2024boxcox}.  
The display \eqref{eq:boxcox-design-re} supplies the curvature used
in Theorem~\mainref{thm:batch_b_init}, with the probit weights included
in the expectation.

The support inequality
$\|u_{\mc A_{\mathrm{BC}}^c}\|_1
\le3\|u_{\mc A_{\mathrm{BC}}}\|_1$ implies
\begin{equation}
\|u\|_1\le4\|u_{\mc A_{\mathrm{BC}}}\|_1
\le4a_{\mathrm{BC}}^{1/2}\|u\|_2.
\label{eq:boxcox-main-cone-implication}
\end{equation}
Under Assumption~\mainref{assm:hessian}, let
\begin{equation}
B_3=
\frac{B_1^{1/2}(J_u+\underline\kappa_{\mathrm{BC}}/2)}
{\inf_{|t|\le B_1}\mathfrak h(t)},
\qquad M_{\mathrm{BC}}=4.3B_3,
\qquad G_{\mathrm{BC}}=\mathfrak g(-B_1).
\label{eq:boxcox-initializer-constants}
\end{equation}
The denominator is positive by continuity and
\eqref{eq:boxcox-probit-derivatives}, so these constants do not depend
on $a_{\mathrm{BC}}$ or $d_{\mathrm{BC}}$.  Choose
\begin{equation}
\begin{gathered}
0<\kappa\le
\frac{\underline\kappa_{\mathrm{BC}}}{2}
\exp\{-8B_1^{3/2}a_{\mathrm{BC}}^{1/2}R_0
      -32B_1a_{\mathrm{BC}}R_0^2\},\\
\lambda_j=C\left\{\frac{\log d_{\mathrm{BC}}}{N_j}\right\}^{1/2},
\qquad C\ge8(2B_1)^{1/2}.
\end{gathered}
\label{eq:boxcox-initializer-tuning}
\end{equation}

\begin{proposition}
\label{prop:boxcox-initializer-verification}
Suppose Assumption~\mainref{assm:hessian},
conditions~\ref{BoxCoxassm:model}--\ref{BoxCoxassm:re}, and
\eqref{eq:boxcox-initializer-tuning} hold,
\begin{equation*}
R_0>
\max_{1\le j\le B}
\frac{3}{2\kappa}a_{\mathrm{BC}}^{1/2}\lambda_j,
\end{equation*}
and, for $j=2,\ldots,B$,
\begin{equation}
\left\{\frac{N_j}{a_{\mathrm{BC}}^2\log d_{\mathrm{BC}}}\right\}^{1/2}
\ge
\frac{27CM_{\mathrm{BC}}}{2\kappa^2}
\left\{1+\log\left(\frac{N_{j-1}}{n_1}\right)\right\}.
\label{eq:boxcox-online-sample-condition}
\end{equation}
Then, with probability at least
\[
1-\frac{2B}{d_{\mathrm{BC}}}
-2d_{\mathrm{BC}}^2\sum_{j=1}^B
\exp\left\{-\frac{n_j\underline\kappa_{\mathrm{BC}}^2}
{2048B_1^2a_{\mathrm{BC}}^2}\right\},
\]
every initializer exists and, simultaneously for $j=1,\ldots,B$,
\begin{equation}
\begin{split}
\|\widehat\theta^{(j)}-\theta^*\|_2
&\le
\frac{3}{2\kappa}a_{\mathrm{BC}}^{1/2}\lambda_j,
\\
\|\widehat\theta^{(j)}-\theta^*\|_1
&\le
\frac6\kappa a_{\mathrm{BC}}\lambda_j,
\\
\|(\widehat\theta^{(j)}-\theta^*)_{\mc A_{\mathrm{BC}}^c}\|_1
&\le
3\|(\widehat\theta^{(j)}-\theta^*)_{\mc A_{\mathrm{BC}}}\|_1,
\qquad
\widehat\theta^{(j)}-\theta^*
\in\BB(R_0)\cap\CC(4a_{\mathrm{BC}}^{1/2}).
\end{split}
\label{eq:boxcox-initializer-rates}
\end{equation}
\end{proposition}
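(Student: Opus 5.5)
The plan follows the template of Propositions~\ref{prop:glm-initializer} and \ref{prop:tobit-initializer-verification}. We intersect two high-probability events and then run the induction over batches from Corollary~\mainref{cor:batch_b_init}. On that intersection we invoke the deterministic argument of Theorem~\mainref{thm:batch_b_init}, with the support cone $\mc A_{\mathrm{BC}}$ in place of $\mc A$. The unpenalized cutpoint coordinates $\mc U_{\mathrm{BC}}\subseteq\mc A_{\mathrm{BC}}$ carry zero subgradient, so the KKT step \eqref{eq:thm1-subgradient}--\eqref{eq:thm1-global-cone} is unchanged. It yields the support cone in \eqref{eq:boxcox-initializer-rates} and, by \eqref{eq:boxcox-main-cone-implication}, membership in $\CC(4a_{\mathrm{BC}}^{1/2})$.

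\emph{Score event.} By \eqref{eq:boxcox-score-hessian}, each coordinate of $\nabla\ell_{\mathrm{BC}}(\theta^*;Y,X)$ is a weighted sum $\sum_k w_k L_{\mathrm P}'(\cdot)s_k z_{k,a}$. It is centered by \eqref{eq:boxcox-probit-law}, since $\mathbb E\{s_k\mathfrak g(s_k\eta)\mid x\}=\varphi(\eta)-\varphi(\eta)=0$. It is bounded by $\mathfrak g(-B_1)\sum_k w_k|z_{k,a}|$ because $|\eta|\le B_1$, and its conditional variance is controlled by $B_1$ through \eqref{eq:boxcox-design-conditions}. Hoeffding or Bernstein with $u=\lambda_j/4$, together with $C\ge 8(2B_1)^{1/2}$, gives $\PP(|\omega^{(j)}_a|>\lambda_j/4)\le 2d_{\mathrm{BC}}^{-2}$. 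A union bound over coordinates and batches then gives $\max_j\|\omegab^{(j)}\|_\infty/\lambda_j\le 1/4$ with probability at least $1-2B/d_{\mathrm{BC}}$.

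\emph{Curvature and RLH.} Let $\widehat\Sigma^{(j)}=n_j^{-1}\sum_{i,k}w_k\,\mathfrak h(S_{ik}\eta_{ik}^*)z_{ik}z_{ik}^\top$. Its entries are bounded by $B_1$ since $\mathfrak h<1$ and $|z|\le B_1^{1/2}$. Hoeffding plus a union bound over the $d_{\mathrm{BC}}^2$ entries and over batches gives $\|\widehat\Sigma^{(j)}-\mathbb E\widehat\Sigma^{(j)}\|_{\max}\le t:=\underline\kappa_{\mathrm{BC}}/(32a_{\mathrm{BC}})$, except on a set of probability at most the stated exponential sum. On the cone $\|u\|_1^2\le16a_{\mathrm{BC}}\|u\|_2^2$, this together with \eqref{eq:boxcox-design-re} yields $u^\top\nabla^2\widehat{\mc L}^{(j)}(\theta^*)u\ge(\underline\kappa_{\mathrm{BC}}/2)\|u\|_2^2$. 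We transport curvature from $\theta^*$ to $\theta^*+v$ with $v\in\BB(R_0)\cap\CC(4a_{\mathrm{BC}}^{1/2})$ using the log-Lipschitz bound in \eqref{eq:boxcox-probit-derivatives}. With $|t|\le B_1$ and $|r|=|z^\top v|\le B_1^{1/2}\|v\|_1\le 4B_1^{1/2}a_{\mathrm{BC}}^{1/2}R_0$, we get $\mathfrak h(t+r)\ge\mathfrak h(t)\exp\{-B_1|r|-r^2/2\}$. Since $\max\{|t|,1\}\le B_1$ because $B_1\ge1$, this exponent is exactly the factor in \eqref{eq:boxcox-initializer-tuning}. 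Hence both the current-batch RSC and the historical Hessian inequality \eqref{eq:historical-hessian-inherits-rsc} hold with constant $\kappa$ along any segment in this region, and Lemma~\ref{lem:surrogate-curvature-inheritance} transfers the result to $\widehat{\mc Q}_j$.

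For RLH, the mean-value theorem and $|\mathfrak h'|\le4.3$ bound each coordinate $a$ of the Hessian difference applied to $w$ by $4.3\,n_j^{-1}\sum_{i,k}w_k|z_{ik,a}||z_{ik}^\top v||z_{ik}^\top w|$. Cauchy--Schwarz reduces this to the cone quadratic forms $n_j^{-1}\sum w_k(z^\top v)^2$. We control these through $\inf_{|t|\le B_1}\mathfrak h(t)$, the empirical curvature at $\theta^*$, the population upper bound $J_u$, and the deviation $t\|v\|_1^2\le(\underline\kappa_{\mathrm{BC}}/2)\|v\|_2^2$. This gives the constant $M_{\mathrm{BC}}=4.3B_3$ from \eqref{eq:boxcox-initializer-constants}. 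We expect this step to be the main obstacle: $J_u$ bounds the Hessian, i.e.\ the $\mathfrak h$-weighted Gram matrix, not the raw $w$-weighted Gram matrix. Dividing by $\inf\mathfrak h$ is what makes the bound work, and the bookkeeping must stay free of $a_{\mathrm{BC}}$.

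\emph{Induction.} At $j=1$ we have $\phi^{(1)}=0$. If the bounds hold through batch $j-1$, Lemma~\ref{lem:batch.size} gives $\phi^{(j)}\le 9C^2a_{\mathrm{BC}}\log d_{\mathrm{BC}}\{1+\log(N_{j-1}/n_1)\}/(4\kappa^2N_j)$. Then \eqref{eq:boxcox-online-sample-condition} gives $\tfrac32M_{\mathrm{BC}}\phi^{(j)}\le\lambda_j/4$, so the tuning condition in \eqref{batch_b_event} holds. The localization argument in the proof of Theorem~\mainref{thm:batch_b_init}, with curvature supplied on the KKT displacement as in Proposition~\ref{prop:glm-initializer}, gives existence and \eqref{eq:boxcox-initializer-rates}. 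Existence follows because $\widehat{\mc Q}_j+\lambda_j\|\beta\|_1$ is coercive on this cone under the curvature bound. Since $R_0>\tfrac{3}{2\kappa}a_{\mathrm{BC}}^{1/2}\lambda_j$, every displacement lies in $\BB(R_0)$, which closes the induction. The final step is a union bound over the two events.
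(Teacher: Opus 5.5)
\textbf{Overall.} Your architecture is the paper's. The Hoeffding event for $\nabla^2\widehat{\mc L}^{(j)}(\theta^*)$ at level $\underline\kappa_{\mathrm{BC}}/(32a_{\mathrm{BC}})$ reproduces the stated exponential sum exactly. The RLH constant $M_{\mathrm{BC}}=4.3B_3$ comes from dividing the $\mathfrak h$-weighted Gram form by $\inf_{|t|\le B_1}\mathfrak h(t)$. Curvature at $\theta^*$ is transported by the log-Lipschitz bound on $\mathfrak h$, the KKT cone uses zero subgradient on $\mc U_{\mathrm{BC}}$, and the induction runs on $\phi^{(j)}$. The step that does not go through as written is the score event.

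\textbf{The gap in the score event.} Let $\xi_a$ be a coordinate of $\nabla\ell_{\mathrm{BC}}(\theta^*;Y,X)$. Your envelope is $|\xi_a|\le\mathfrak g(-B_1)B_1^{1/2}$, so Hoeffding gives variance proxy $\mathfrak g(-B_1)^2B_1$. At $C^2=128B_1$, which the tuning allows, the per-coordinate tail is $2d_{\mathrm{BC}}^{-4/\mathfrak g(-B_1)^2}$. Since $\mathfrak g(-B_1)\ge\mathfrak g(-1)\approx1.53$ and $\mathfrak g(-B_1)\sim B_1$, this is never $2d_{\mathrm{BC}}^{-2}$. Already at $B_1=2$ the union bound over coordinates does not give a vanishing probability. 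Bernstein with variance $B_1$ leaves $\tfrac83\mathfrak g(-B_1)B_1^{1/2}\lambda_j$ next to $32B_1$ in the denominator of the exponent. You would then also need $\lambda_j\lesssim B_1^{1/2}/\mathfrak g(-B_1)$, which is not a hypothesis. Such a bound can be extracted from $R_0>\tfrac{3}{2\kappa}a_{\mathrm{BC}}^{1/2}\lambda_j$ together with the exponential factor in $\kappa$, but that argument is missing.

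\textbf{The missing idea.} By normality, $S_{ik}\mathfrak g(S_{ik}\eta^{*(j)}_{ik})=\mathbb E(\varepsilon_i^{(j)}\mid D_{ik}^{(j)},x_i^{(j)})$, a truncated-normal mean. Hence $\xi_a=-\sum_kw_kz_{k,a}\mathbb E(\varepsilon\mid D_k,x)$ is a weighted sum of conditional expectations of a standard Gaussian. Conditional Jensen gives $\mathbb E[\exp\{tz_{k,a}\mathbb E(\varepsilon\mid D_k,x)\}\mid x]\le e^{t^2z_{k,a}^2/2}$. H\"older across the $K$ terms with exponents $1/w_k$ then yields $\mathbb E\exp(t\xi_a)\le\exp(B_1t^2/2)$, a sub-Gaussian bound free of $\mathfrak g(-B_1)$. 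Chernoff gives $\mathbb P\{|\omega^{(j)}_a|>\lambda_j/4\}\le2d_{\mathrm{BC}}^{-C^2/(32B_1)}\le2d_{\mathrm{BC}}^{-4}$ exactly under $C\ge8(2B_1)^{1/2}$. The unions over coordinates and batches then produce the $2B/d_{\mathrm{BC}}$ term.

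\textbf{A minor slip.} The factor $\exp\{-8B_1^{3/2}a_{\mathrm{BC}}^{1/2}R_0-32B_1a_{\mathrm{BC}}R_0^2\}$ in $\kappa$ corresponds to segment points at $\ell_1$-distance $8a_{\mathrm{BC}}^{1/2}R_0$ from $\theta^*$. It does not correspond to your $|r|\le4B_1^{1/2}a_{\mathrm{BC}}^{1/2}R_0$. Your exponent is therefore conservative rather than exact, which is harmless.
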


\begin{proof}
Let $\xi_i^{(j)}=\nabla\ell_{\mathrm{BC}}
(\theta^*;Y_i^{(j)},x_i^{(j)})$.  For each threshold, the normality gives
\[
-S_{ik}^{(j)}\mathfrak g(S_{ik}^{(j)}\eta_{ik}^{*(j)})
=-\mathbb E(\varepsilon_i^{(j)}
 \mid D_{ik}^{(j)},x_i^{(j)}).
\]
Jensen's inequality, followed by expectation over $X$, therefore yields
\begin{equation}
\mathbb E\xi_i^{(j)}=0,\qquad
\mathbb E\exp(t\xi_{i,a}^{(j)})\le\exp(B_1t^2/2),
\qquad t\in\mathbb R.
\label{eq:boxcox-score-centering}
\end{equation}
For the second inequality, apply H\"older's inequality to the $K$ terms before integrating over $X$; their squared coefficient average is at
most $B_1$ by \eqref{eq:boxcox-design-conditions}.

By \eqref{eq:boxcox-score-hessian} and
\eqref{eq:boxcox-design-conditions}, every Hessian entry
belongs to $[-B_1,B_1]$.  Hoeffding's inequality and a union over entries yield, for each $j$,
\[
\mathbb P\left\{
\left\|\nabla^2\widehat{\mc L}^{(j)}(\theta^*)
-\mathbb E\{\nabla^2\widehat{\mc L}^{(j)}(\theta^*)\}\right\|_{\max}
>\frac{\underline\kappa_{\mathrm{BC}}}{32a_{\mathrm{BC}}}\right\}
\le2d_{\mathrm{BC}}^2
\exp\left\{-\frac{n_j\underline\kappa_{\mathrm{BC}}^2}
{2048B_1^2a_{\mathrm{BC}}^2}\right\}.
\]
Equations \eqref{eq:boxcox-score-centering} and
\eqref{eq:boxcox-initializer-tuning} also give
$\mathbb P\{\|\omegab^{(j)}\|_\infty>\lambda_j/4\}
\le2/d_{\mathrm{BC}}$.  A union bound over batches yields
\begin{equation}
\begin{split}
\mathbb P\bigg[
&\max_{j\le B}\frac{\|\omegab^{(j)}\|_\infty}{\lambda_j}\le\frac14,
\quad \max_{j\le B}\left\|\nabla^2\widehat{\mc L}^{(j)}(\theta^*)
-\mathbb E\{\nabla^2\widehat{\mc L}^{(j)}(\theta^*)\}\right\|_{\max}
\le\frac{\underline\kappa_{\mathrm{BC}}}{32a_{\mathrm{BC}}}\bigg]
\\
&\quad\ge1-\frac{2B}{d_{\mathrm{BC}}}
-2d_{\mathrm{BC}}^2\sum_{j=1}^B
\exp\left\{-\frac{n_j\underline\kappa_{\mathrm{BC}}^2}
{2048B_1^2a_{\mathrm{BC}}^2}\right\}.
\end{split}
\label{eq:boxcox-score-event}
\end{equation}
Here $\omegab^{(j)}$ is the pooled score in
Section~\mainref{sec:finite_sample_theory}.

Since the observations have the same joint law,
$\mathbb E\{\nabla^2\widehat{\mc L}^{(j)}(\theta^*)\}=\Jb$.
On \eqref{eq:boxcox-score-event},
Assumption~\mainref{assm:hessian} and the lower truth weights in
\eqref{eq:boxcox-design-conditions} give, for every $u$,
\[
\begin{split}
\frac1{n_j}\sum_{i,k}w_k\{(z_{ik}^{(j)})^\top u\}^2
\leq\frac{u^\top\nabla^2\widehat{\mc L}^{(j)}(\theta^*)u}
{\inf_{|t|\le B_1}\mathfrak h(t)}\leq\frac{J_u\|u\|_2^2+
\underline\kappa_{\mathrm{BC}}\|u\|_1^2/(32a_{\mathrm{BC}})}
{\inf_{|t|\le B_1}\mathfrak h(t)}.
\end{split}
\]
For $u\in\CC(4a_{\mathrm{BC}}^{1/2})$, the last expression is at most
$(B_3/B_1^{1/2})\|u\|_2^2$.
Consequently, the coordinate bound and the Cauchy--Schwarz inequality give, for
$u,v\in\CC(4a_{\mathrm{BC}}^{1/2})$,
\[
\max_a\frac1{n_j}\sum_{i,k}w_k|z_{ik,a}^{(j)}|
 |(z_{ik}^{(j)})^\top u|\,|(z_{ik}^{(j)})^\top v|
\le B_3\|u\|_2\|v\|_2.
\]
The mean-value theorem and \eqref{eq:boxcox-probit-derivatives}
therefore yield
\begin{equation}
\left\|\{\nabla^2\widehat{\mc L}^{(j)}(\theta^*+v)
-\nabla^2\widehat{\mc L}^{(j)}(\theta^*)\}u\right\|_\infty
\le M_{\mathrm{BC}}\|v\|_2\|u\|_2.
\label{eq:boxcox-initializer-rlh}
\end{equation}

Suppose $\theta_1-\theta_2$ and $\theta_2-\theta^*$ have norm at most
$R_0$ and satisfy the support inequality in
\eqref{eq:boxcox-design-re}.  Along their segment,
\eqref{eq:boxcox-main-cone-implication} gives
$\|\theta_2+t(\theta_1-\theta_2)-\theta^*\|_1
\le8a_{\mathrm{BC}}^{1/2}R_0$ for $0\le t\le1$.
On \eqref{eq:boxcox-score-event},
\eqref{eq:boxcox-design-re} and
\eqref{eq:boxcox-main-cone-implication} yield
\[
u^\top\nabla^2\widehat{\mc L}^{(j)}(\theta^*)u
\ge\left\{\underline\kappa_{\mathrm{BC}}
-16a_{\mathrm{BC}}\frac{\underline\kappa_{\mathrm{BC}}}
{32a_{\mathrm{BC}}}\right\}\|u\|_2^2
=\frac{\underline\kappa_{\mathrm{BC}}}{2}\|u\|_2^2
\]
for every $u$ satisfying the support inequality in
\eqref{eq:boxcox-design-re}.  Since $|S_{ik}^{(j)}\eta_{ik}^{*(j)}|\le B_1$
and $c_{\mathrm{BC},B}\le B_1^{1/2}$, apply the last inequality in
\eqref{eq:boxcox-probit-derivatives} to each Hessian along
this segment.  Integration and \eqref{eq:boxcox-initializer-tuning} yield
\begin{equation}
\begin{split}
\langle\nabla\widehat{\mc L}^{(j)}(\theta_1)
-\nabla\widehat{\mc L}^{(j)}(\theta_2),\theta_1-\theta_2\rangle
\ge
\frac{\underline\kappa_{\mathrm{BC}}}{2}
 e^{-8B_1^{3/2}a_{\mathrm{BC}}^{1/2}R_0
       -32B_1a_{\mathrm{BC}}R_0^2}
\|\theta_1-\theta_2\|_2^2
\ge\kappa\|\theta_1-\theta_2\|_2^2.
\end{split}
\label{eq:boxcox-initializer-rsc}
\end{equation}
The same lower bound holds for the historical Hessians evaluated at
$\widehat\theta^{(h)}$ whenever those estimators satisfy
\eqref{eq:boxcox-initializer-rates}.  Averaging these quadratic forms as
in the proof of Lemma~\ref{lem:surrogate-curvature-inheritance} verifies
\eqref{eq:inherited-surrogate-curvature} on the initializer support cone.

Inductively, the initializer bounds through batch $j-1$ and
Lemma~\ref{lem:batch.size} yield
\$
\phi^{(j)}
\le
\frac{9C^2a_{\mathrm{BC}}\log d_{\mathrm{BC}}}
{4\kappa^2N_j}
\left\{1+\log\left(\frac{N_{j-1}}{n_1}\right)\right\}.
\$
For $j=1$, $\phi^{(1)}=0$; for $j\ge2$,
\eqref{eq:boxcox-online-sample-condition} implies
$3M_{\mathrm{BC}}\phi^{(j)}/2\le\lambda_j/4$.
Together with \eqref{eq:boxcox-score-event}, this verifies the score
inequality in \maineqref{batch_b_event}.
Minimize the objective over the closed radius-$R_0$ ball
centered at $\theta^*$.  The basic inequality in the proof of
Theorem~\mainref{thm:batch_b_init}, with the penalty subgradient zero on
$\mc U_{\mathrm{BC}}\subset\mc A_{\mathrm{BC}}$, yields
$\|u_{\mc A_{\mathrm{BC}}^c}\|_1
\le3\|u_{\mc A_{\mathrm{BC}}}\|_1$.
The unpenalized threshold coordinates contribute as in display (49) of
\citet{zhouzou2024boxcox}.
Equations \eqref{eq:boxcox-initializer-rlh}--
\eqref{eq:boxcox-initializer-rsc} provide every loss inequality used in the proof of
Theorem~\mainref{thm:batch_b_init}, restricted to the KKT cone as in Proposition~\ref{prop:glm-initializer}. 
By an analogous argument to the other examples above, we get the conclusion.
\end{proof}

\subsubsection{Verification of the tightening-stage theorem}
\label{sec:BC_tightening_verification}

Let
\begin{equation*}
\begin{split}
\mc S_{\mathrm{BC}}
&=
\{j\in\mc P_{\mathrm{BC}}\setminus\mc M:\beta_j^*\ne0\},
\qquad
s_{\mathrm{BC}}=|\mc S_{\mathrm{BC}}|,
\\
\overline{\mc M}_{\mathrm{BC}}
&=\mc M_{\mathrm{BC}}\cup\mc S_{\mathrm{BC}},
\qquad
q_{\mathrm{BC}}
=|\overline{\mc M}_{\mathrm{BC}}|=m+K+s_{\mathrm{BC}}.
\end{split}
\end{equation*}
All complements below are relative to $\{1,\ldots,d_{\mathrm{BC}}\}$.
Then
$\mc M_{\mathrm{BC}}^c=\mc P_{\mathrm{BC}}\setminus\mc M$,
$\mc A_{\mathrm{BC}}\subseteq\overline{\mc M}_{\mathrm{BC}}$, and
$a_{\mathrm{BC}}\le q_{\mathrm{BC}}$.  Under
Assumption~\mainref{assm:local_alternative}, define
\begin{equation*}
\theta^\dagger_{\mc M_{\mathrm{BC}}}
=
\theta^*_{\mc M_{\mathrm{BC}}}
-
\widetilde{\Cb}_{\mathrm{BC}}^\top
(\Cb\Cb^\top)^{-1}\hb_B,
\qquad
\theta^\dagger_{\mc M_{\mathrm{BC}}^c}
=\theta^*_{\mc M_{\mathrm{BC}}^c}.
\end{equation*}
Since
$\widetilde{\Cb}_{\mathrm{BC}}
\widetilde{\Cb}_{\mathrm{BC}}^\top=\Cb\Cb^\top$, one has
\begin{equation} 
\widetilde{\Cb}_{\mathrm{BC}}
\theta^\dagger_{\mc M_{\mathrm{BC}}}=\tb,
\qquad
\|\theta^\dagger-\theta^*\|_2
\le d_B,
\qquad
\operatorname{supp}(\theta^\dagger-\theta^*)\subseteq\mc M.
\label{eq:boxcox-feasible-center-properties}
\end{equation}
Let
$\Zb_{\mathrm{BC}}\in
\mathbb R^{(m+K)\times(m+K-r)}$
have orthonormal columns spanning
$\ker(\widetilde{\Cb}_{\mathrm{BC}})$.

Define the reduced and full LLA iterates by
\maineqref{def:LLA.0}--\maineqref{def:LLA.a}, under the substitution preceding
\eqref{eq:boxcox-online-initializer}, with starting value
$\widehat\theta^{(B)}$ and with the coordinates in
$\mc M_{\mathrm{BC}}$ unpenalized.  Denote them by
$\widehat\theta_0^{({\rm LLA},k)}$ and
$\widehat\theta_a^{({\rm LLA},k)}$.  Let
$\widehat\theta_0^{\rm ora}$ and $\widehat\theta_a^{\rm ora}$ be the
estimators in \maineqref{def:constrained_oracle_estimator}--
\maineqref{def:unconstrained_oracle_estimator} under the same substitution.

Let $\gamma_0,p_0,c_{\rm aug},\delta$, and $l$ be the quantities in
Theorem~\mainref{thm:batch_b_tightening}, with $q=q_{\mathrm{BC}}$.  Thus
\begin{equation}
p_0=p'_+(\gamma_0)>0,
\qquad
\delta=\frac{\sqrt5}{2\kappa\gamma_0}<1,
\qquad
l=(2+2/p_0)(c_{\rm aug}^2+1)^{1/2}q_{\mathrm{BC}}^{1/2}.
\label{eq:boxcox-terminal-l}
\end{equation}

On the event in \eqref{eq:boxcox-score-event},
Assumption~\mainref{assm:hessian} and
\eqref{eq:boxcox-design-conditions} yield
\begin{equation}
\begin{gathered}
\overline B_{2,\mathrm{BC}}
=\frac{J_u+q_{\mathrm{BC}}\underline\kappa_{\mathrm{BC}}/(32a_{\mathrm{BC}})}
{\inf_{|t|\le B_1}\mathfrak h(t)},\qquad
\overline B_{3,\mathrm{BC}}
=\frac{B_1^{1/2}\{J_u+l^2\underline\kappa_{\mathrm{BC}}/(32a_{\mathrm{BC}})\}}
{\inf_{|t|\le B_1}\mathfrak h(t)},
\\
\max_{j\le B}\sup_{\substack{\operatorname{supp}(u)
\subseteq\overline{\mc M}_{\mathrm{BC}}\\\|u\|_2=1}}
\frac1{n_j}\sum_{i,k}w_k\{(z_{ik}^{(j)})^\top u\}^2
\le\overline B_{2,\mathrm{BC}},
\\
\max_{j\le B,a}\sup_{\substack{u,v\in\CC(l)\\
\|u\|_2=\|v\|_2=1}}
\frac1{n_j}\sum_{i,k}w_k|z_{ik,a}^{(j)}|
|(z_{ik}^{(j)})^\top u|\,|(z_{ik}^{(j)})^\top v|
\le\overline B_{3,\mathrm{BC}}.
\end{gathered}
\label{eq:boxcox-terminal-design-bounds}
\end{equation}
The quadratic bound preceding \eqref{eq:boxcox-initializer-rlh} implies the first inequality with $\|u\|_1\le q_{\mathrm{BC}}^{1/2}\|u\|_2$.
For the second, apply the Cauchy--Schwarz inequality to the two linear forms, use
$|z_{ik,a}^{(j)}|\le B_1^{1/2}$, and then
$\|u\|_1\le l\|u\|_2$, $\|v\|_1\le l\|v\|_2$.

Define
\begin{equation}
\overline H_{\mathrm{BC},B}^*
=
\frac1{N_B}\sum_{j=1}^Bn_j
\nabla^2\widehat{\mc L}^{(j)}(\theta^*),
\qquad
J_{\mathrm{BC},B}
=
\mathbb E\overline H_{\mathrm{BC},B}^*
=\nabla^2\mc L(\theta^*)=\Jb.
\label{eq:boxcox-terminal-hessians}
\end{equation}
By \eqref{eq:boxcox-probit-law}, we get
\begin{equation*}
\mathbb E\{\mathfrak h(S_{ik}^{(j)}\eta_{ik}^{*(j)})
\mid x_i^{(j)}\}
=
\frac{\varphi^2(\eta_{ik}^{*(j)})}
{\Phi(\eta_{ik}^{*(j)})\Phi(-\eta_{ik}^{*(j)})}
\in(0,1).
\end{equation*}
Assumption~\mainref{assm:hessian} implies
\begin{equation}
\max_a\|(J_{\mathrm{BC},B})_{a,\overline{\mc M}_{\mathrm{BC}}}\|_2
\le J_{\mathrm{BC},u}:=J_u.
\label{eq:boxcox-J-row-bound}
\end{equation}
The expectation in \eqref{eq:boxcox-terminal-hessians} is with respect
to the joint law of $(Y,X)$.

Since $d_{\mathrm{BC}}=p+K\ge2$,
$2\log\log(e d_{\mathrm{BC}})\ge1$ and
$2\log\log(e d_{\mathrm{BC}})\le2\log d_{\mathrm{BC}}$.  
Define
\begin{equation}
\begin{split}
\eta_{\mathrm{BC},\infty}
&=B_1\left[
\frac{2\{\log(2d_{\mathrm{BC}}^2)+2\log\log(e d_{\mathrm{BC}})\}}
{N_B}\right]^{1/2},\\
\chi_{\mathrm{BC},\infty}
&=\left[
\frac{2B_1\{\log(2d_{\mathrm{BC}})+2\log\log(e d_{\mathrm{BC}})\}}
{N_B}\right]^{1/2},\\
\chi_{\mathrm{BC}}(v)
&=2\left[
\left\{\frac{2J_u\{v\log5+\log2+
2\log\log(e d_{\mathrm{BC}})\}}{N_B}\right\}^{1/2}\right.\\
&\hspace{10mm}\left.+
\frac{2G_{\mathrm{BC}}(B_1q_{\mathrm{BC}})^{1/2}
\{v\log5+\log2+2\log\log(e d_{\mathrm{BC}})\}}{3N_B}
\right],\qquad v\ge1,
\end{split}
\label{eq:boxcox-terminal-concentration-rates}
\end{equation}
with $\chi_{\mathrm{BC}}(0)=0$.  Set
\begin{equation*}
M_{\mathrm{BC},l}=4.3\overline B_{3,\mathrm{BC}}.
\end{equation*}
Again write $\lambda=\lambda_B$ and define
\begin{equation}
\begin{split}
\Phi_{\mathrm{BC},B}
&=
\frac{9a_{\mathrm{BC}}}{4\kappa^2N_B}
\sum_{j=1}^{B-1}n_j\lambda_j^2,
\\
\Gamma_{\mathrm{BC},B}
&=
2M_{\mathrm{BC},l}d_B^2
+\{J_{\mathrm{BC},u}
+m^{1/2}\eta_{\mathrm{BC},\infty}\}d_B.
\end{split}
\label{eq:boxcox-terminal-history-alternative}
\end{equation}

\begin{enumerate}[label=(Box--Cox-\arabic*),start=4,leftmargin=3\parindent]
\item\label{BoxCoxassm:terminal-rates}
The same $\kappa$ as in
Proposition~\ref{prop:boxcox-initializer-verification} satisfies
\begin{equation}
\begin{split}
0<\kappa\le{}&
\exp\{-2B_1^{3/2}lR_0-2B_1l^2R_0^2\}
(J_l-l^2\eta_{\mathrm{BC},\infty}),
\end{split}
\label{eq:boxcox-terminal-kappa}
\end{equation}
and
\begin{equation}
\begin{gathered}
\chi_{\mathrm{BC},\infty}\le\frac{p_0\lambda}{8},
\qquad
2M_{\mathrm{BC},l}\Phi_{\mathrm{BC},B}
\le\frac{p_0\lambda}{8},
\qquad
\Gamma_{\mathrm{BC},B}\le\frac{p_0\lambda}{8},
\\
\kappa d_B\le\frac\lambda4,
\qquad
c_{\rm aug}\gamma_0q_{\mathrm{BC}}^{1/2}\lambda
\vee d_B<R_0.
\end{gathered}
\label{eq:boxcox-terminal-rate-conditions}
\end{equation}
\end{enumerate}
Under the notation of Theorem~\mainref{thm:batch_b_tightening}, define
\begin{equation*}
\begin{split}
r_{\mathrm{BC},a}^{\rm ora}
&=
\kappa^{-1}
\{\|\omegab^{(B)}_{\overline{\mc M}_{\mathrm{BC}}}\|_2
+2M_{\mathrm{BC},l}q_{\mathrm{BC}}^{1/2}\phi^{(B)}\},
\\
r_{\mathrm{BC},0}^{\rm ora}
&=
\kappa^{-1}\big[
\|\Zb_{\mathrm{BC}}^\top
\omegab^{(B)}_{\mc M_{\mathrm{BC}}}\|_2
+\|\omegab^{(B)}_{\mc S_{\mathrm{BC}}}\|_2
+2q_{\mathrm{BC}}^{1/2}
\{2M_{\mathrm{BC},l}\phi^{(B)}+\Gamma_{\mathrm{BC},B}\}
\big]+d_B.
\end{split}
\end{equation*}

\begin{proposition}
\label{prop:boxcox-tightening-verification}
Suppose Assumptions~\mainref{assm:penalty_function},
\mainref{assm:local_alternative}, and \mainref{assm:hessian}, conditions
\ref{BoxCoxassm:model}--\ref{BoxCoxassm:re},
\ref{BoxCoxassm:terminal-rates}, and the conditions
of Proposition~\ref{prop:boxcox-initializer-verification} hold.  Then,
with probability at least
\begin{equation}
\begin{split}
1-\frac{2B}{d_{\mathrm{BC}}}
-2d_{\mathrm{BC}}^2\sum_{j=1}^B
\exp\left\{-\frac{n_j\underline\kappa_{\mathrm{BC}}^2}
{2048B_1^2a_{\mathrm{BC}}^2}\right\}
-4\{\log(e d_{\mathrm{BC}})\}^{-2},
\end{split}
\label{eq:boxcox-terminal-probability}
\end{equation}
the conclusions of
Theorem~\mainref{thm:batch_b_tightening} hold for the two Box--Cox LLA
sequences.  In particular, for $e\in\{0,a\}$ and $k\ge1$,
\begin{equation}
\begin{split}
\|\widehat\theta_e^{({\rm LLA},k)}-\theta^*\|_2
\leq
\delta
\|\widehat\theta_e^{({\rm LLA},k-1)}-\theta^*\|_2
+
\kappa^{-1}
\left\|
p'_{\lambda,+}\{(|\beta^*_{\mc S_{\mathrm{BC}}}|
-\gamma_0\lambda)_+\}
\right\|_2
+r_{\mathrm{BC},e}^{\rm ora}.
\end{split}
\label{eq:boxcox-terminal-contraction}
\end{equation}
Moreover,
\begin{equation}
\begin{split}
r_{\mathrm{BC},a}^{\rm ora}
&\le
\kappa^{-1}\{
\chi_{\mathrm{BC}}(q_{\mathrm{BC}})
+2M_{\mathrm{BC},l}q_{\mathrm{BC}}^{1/2}\Phi_{\mathrm{BC},B}\},
\\
r_{\mathrm{BC},0}^{\rm ora}
&\le
\kappa^{-1}\big[
\sqrt2\,\chi_{\mathrm{BC}}(q_{\mathrm{BC}}-r)
+2q_{\mathrm{BC}}^{1/2}
\{2M_{\mathrm{BC},l}\Phi_{\mathrm{BC},B}
+\Gamma_{\mathrm{BC},B}\}\big]+d_B.
\end{split}
\label{eq:boxcox-terminal-oracle-rate-bounds}
\end{equation}
\end{proposition}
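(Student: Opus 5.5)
The plan mirrors the proof of Proposition~\ref{prop:tobit-tightening-verification}: work on the intersection of three events, verify the deterministic hypotheses of Theorem~\mainref{thm:batch_b_tightening} on that intersection, and then rerun its proof with a model-specific curvature inequality replacing \eqref{eq:inherited-surrogate-curvature}. The three events are:
\begin{itemize}
\item[(a)] the initializer event \eqref{eq:boxcox-score-event} of Proposition~\ref{prop:boxcox-initializer-verification};
\item[(b)] a terminal Hessian concentration event $\|\overline H_{\mathrm{BC},B}^*-J_{\mathrm{BC},B}\|_{\max}\le\eta_{\mathrm{BC},\infty}$;
\item[(c)] a terminal score event on which $\|\omegab^{(B)}\|_\infty\le\chi_{\mathrm{BC},\infty}$, $\|\omegab^{(B)}_{\overline{\mc M}_{\mathrm{BC}}}\|_2\le\chi_{\mathrm{BC}}(q_{\mathrm{BC}})$ and $\|\Zb_{\mathrm{BC}}^\top\omegab^{(B)}_{\mc M_{\mathrm{BC}}}\|_2+\|\omegab^{(B)}_{\mc S_{\mathrm{BC}}}\|_2\le\sqrt2\,\chi_{\mathrm{BC}}(q_{\mathrm{BC}}-r)$.
\end{itemize}

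For (b), Hessian entries lie in $[-B_1,B_1]$, so Hoeffding's inequality over all $N_B$ observations with a union over $d_{\mathrm{BC}}^2$ entries at level $\log(2d_{\mathrm{BC}}^2)+2\log\log(ed_{\mathrm{BC}})$ gives failure probability $\{\log(ed_{\mathrm{BC}})\}^{-2}$.

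For (c), I would use \eqref{eq:boxcox-score-centering}: coordinate scores are $B_1$-sub-Gaussian, which gives the sup-norm bound. For the block bounds, I would apply Bernstein's inequality to $a^\top\xi$ for unit vectors $a$ supported on $\overline{\mc M}_{\mathrm{BC}}$. The variance proxy is at most $a^\top\Jb a\le J_u$, since the score is a conditional expectation of the Gaussian error, so its variance is dominated by the Hessian. The range bound is $G_{\mathrm{BC}}(B_1q_{\mathrm{BC}})^{1/2}$, since $|\mathfrak g(\pm\eta)|\le\mathfrak g(-B_1)$ and $\|z_{\overline{\mc M}}\|_2\le(B_1q_{\mathrm{BC}})^{1/2}$. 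A $1/2$-net over the sphere and over $\operatorname{col}\{\operatorname{diag}(\Zb_{\mathrm{BC}},\Ib_{s})\}$ then gives the remaining bounds, and the three failure probabilities add to $4\{\log(ed_{\mathrm{BC}})\}^{-2}$.

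On the intersection, I verify the inputs of Theorem~\mainref{thm:batch_b_tightening} in turn.
\begin{itemize}
\item \emph{RLH.} From \eqref{eq:boxcox-terminal-design-bounds}, the bound $|L_{\mathrm P}'''|\le4.3$ and the mean-value theorem give $\mc E^{(j)}_{\rm rlh}(R_0,l,M_{\mathrm{BC},l})$ for all $j\le B$, exactly as in \eqref{eq:boxcox-initializer-rlh} but with cone parameter $l$.
\item \emph{Curvature.} This is the main obstacle, because the initializer curvature is only available on the $4a_{\mathrm{BC}}^{1/2}$ support cone, while here it is needed on $\BB(R_0)\cap\CC(l)$ for the surrogate $\widehat{\mc Q}_B$.
\begin{itemize}
\item I would bound every Hessian along the segment, and every historical Hessian at $\widehat\theta^{(j)}$, below by $\exp\{-2B_1^{3/2}lR_0-2B_1l^2R_0^2\}$ times its value at $\theta^*$. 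This uses the log-Lipschitz property in \eqref{eq:boxcox-probit-derivatives} with $|\eta^*|\le B_1$ and $|z^\top v|\le B_1^{1/2}\|v\|_1\le B_1^{1/2}lR_0$. The historical displacements lie in $\CC(4a_{\mathrm{BC}}^{1/2})\subseteq\CC(l)$ because $l\ge4q_{\mathrm{BC}}^{1/2}$.
\item Averaging gives the lower bound $v^\top\overline H^*_{\mathrm{BC},B}v\ge(J_l-l^2\eta_{\mathrm{BC},\infty})\|v\|_2^2$ on $\CC(l)$, by (b) and Assumption~\mainref{assm:hessian}.
\item Condition \eqref{eq:boxcox-terminal-kappa} then yields curvature $\kappa$. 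The linear correction term cancels in gradient differences, as in Lemma~\ref{lem:glm-terminal-curvature}.
\end{itemize}
\item \emph{Tuning events.} Lemma~\ref{lem:decomposition} gives $\|\gb^*-\omegab^{(B)}\|_\infty\le\frac32M_{\mathrm{BC},l}\phi^{(B)}$. Lemma~\ref{lem:omega_discrepancy}, applied with $\theta^\dagger$ via \eqref{eq:boxcox-feasible-center-properties}, \eqref{eq:boxcox-J-row-bound} and (b), gives $\|\gb^\dagger-\omegab^{(B)}\|_\infty\le2M_{\mathrm{BC},l}\phi^{(B)}+\Gamma_{\mathrm{BC},B}$. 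Using $\phi^{(B)}\le\Phi_{\mathrm{BC},B}$ from \eqref{eq:boxcox-initializer-rates}, the conditions \eqref{eq:boxcox-terminal-rate-conditions} then imply \maineqref{event:theorem2-full} and \maineqref{event:theorem2-reduced}, since $\frac2{p_0}\cdot\frac{3p_0\lambda}{8}+\frac\lambda4\le\lambda$. The radius conditions on $R_0$ are assumed directly.
\end{itemize}

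With these inputs, the induction in the proof of Theorem~\mainref{thm:batch_b_tightening} runs verbatim. It gives the crude localization \eqref{eq:crude-lla-localization}, which keeps every iterate in the region where the above curvature and RLH hold, and then the contraction \eqref{eq:boxcox-terminal-contraction}.

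Finally, \eqref{eq:boxcox-terminal-oracle-rate-bounds} follows by substituting the bounds from event (c) and $\phi^{(B)}\le\Phi_{\mathrm{BC},B}$ into the definitions of $r^{\rm ora}_{\mathrm{BC},e}$.
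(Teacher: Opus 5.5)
Your proposal is correct and follows essentially the same route as the paper's proof. Both intersect the initializer event \eqref{eq:boxcox-score-event} with a terminal Hessian/score concentration event. Both then obtain RLH and surrogate curvature on $\BB(R_0)\cap\CC(l)$ from the log-Lipschitz bound on $\mathfrak h$ and \eqref{eq:boxcox-terminal-kappa}, verify \maineqref{event:theorem2-full}--\maineqref{event:theorem2-reduced} via the bounds on $\gb^*$ and $\gb^\dagger$, and rerun the induction of Theorem~\mainref{thm:batch_b_tightening}. Two minor precisions, neither a gap:
\begin{itemize}
\item The variance proxy $a^\top\Jb a$ for $K>1$ needs Jensen across the thresholds, because the cross-threshold terms make $\Kb_0\neq\Jb_0$. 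It also needs the per-threshold identity $\mathbb E\{\mathfrak g(S\eta)^2\mid x\}=\mathbb E\{\mathfrak h(S\eta)\mid x\}$. The conditional-expectation representation alone only gives domination by the unweighted Gram matrix.
\item Along the segment the $\ell_1$ radius is $2lR_0$ rather than $lR_0$. The exponent you quote already reflects this.
\end{itemize}
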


\begin{proof}
The following event has probability at least
$1-4\{\log(e d_{\mathrm{BC}})\}^{-2}$:
\begin{equation}
\begin{gathered}
\|\overline H_{\mathrm{BC},B}^*-J_{\mathrm{BC},B}\|_{\max}
\le\eta_{\mathrm{BC},\infty},\qquad
\|\omegab^{(B)}\|_\infty\le\chi_{\mathrm{BC},\infty},\\
\|\omegab^{(B)}_{\overline{\mc M}_{\mathrm{BC}}}\|_2
\le\chi_{\mathrm{BC}}(q_{\mathrm{BC}}),\\
\|\Zb_{\mathrm{BC}}^\top
\omegab^{(B)}_{\mc M_{\mathrm{BC}}}\|_2+
\|\omegab^{(B)}_{\mc S_{\mathrm{BC}}}\|_2
\le\sqrt2\,\chi_{\mathrm{BC}}(q_{\mathrm{BC}}-r).
\end{gathered}
\label{eq:boxcox-terminal-concentration-event}
\end{equation}
Intersect this event with \eqref{eq:boxcox-score-event}.
The first two inequalities follow from the Hessian and score bounds used in \eqref{eq:boxcox-score-event}, with
the fixed levels in \eqref{eq:boxcox-terminal-concentration-rates}.
Put $\xi_i^{(j)}=\nabla\ell_{\mathrm{BC}}
(\theta^*;Y_i^{(j)},x_i^{(j)})$.
For a unit $u$ supported on $\overline{\mc M}_{\mathrm{BC}}$, we have
\[
|u^\top\xi_i^{(j)}|\le
G_{\mathrm{BC}}(B_1q_{\mathrm{BC}})^{1/2},\qquad
\mathbb E(u^\top\xi_i^{(j)})^2
\le u^\top\Jb u\le J_u.
\]
Applying the same argument used in the proof of Proposition~\ref{prop:tobit-tightening-verification} to
$\operatorname{diag}(\Zb_{\mathrm{BC}},\Ib_{s_{\mathrm{BC}}})^\top
\omegab_{\overline{\mc M}_{\mathrm{BC}}}^{(B)}$ gives the fourth with
$v=q_{\mathrm{BC}}-r$, since
$\|a\|_2+\|b\|_2\le\sqrt2(\|a\|_2^2+\|b\|_2^2)^{1/2}$.
Each complement probability is at most
$\{\log(e d_{\mathrm{BC}})\}^{-2}$.

For a pair in Definition~\mainref{def:local_rsc}, put
$\Delta=\theta_1-\theta_2$.  The initializer event gives
$\phi^{(B)}\le\Phi_{\mathrm{BC},B}$ and places every historical
displacement in $\BB(R_0)\cap\CC(l)$, since
$l\ge4q_{\mathrm{BC}}^{1/2}\ge4a_{\mathrm{BC}}^{1/2}$.
For $0\le t\le1$,
$\|\theta_2+t\Delta-\theta^*\|_1\le2lR_0$; every historical
displacement has $\ell_1$ norm at most $lR_0$.
Apply the last inequality in \eqref{eq:boxcox-probit-derivatives}
to each summand of the historical and current Hessians.
Assumption~\mainref{assm:hessian},
$|\Delta^\top(\overline H_{\mathrm{BC},B}^*-\Jb)\Delta|
\le l^2\eta_{\mathrm{BC},\infty}\|\Delta\|_2^2$,
and \eqref{eq:boxcox-terminal-kappa} then give
\begin{equation}
\begin{split}
&\Delta^\top\nabla^2\widehat{\mc Q}_B(\theta_2+t\Delta)\Delta
\ge e^{-2B_1^{3/2}lR_0-2B_1l^2R_0^2}
\Delta^\top\overline H_{\mathrm{BC},B}^*\Delta,\\
&\langle\nabla\widehat{\mc Q}_B(\theta_1)
-\nabla\widehat{\mc Q}_B(\theta_2),\Delta\rangle\ge
 e^{-2B_1^{3/2}lR_0-2B_1l^2R_0^2}
(J_l-l^2\eta_{\mathrm{BC},\infty})\|\Delta\|_2^2
\ge\kappa\|\Delta\|_2^2,\\
&\|\{\nabla^2\widehat{\mc L}^{(j)}(\theta^*+u)
-\nabla^2\widehat{\mc L}^{(j)}(\theta^*)\}v\|_\infty
\le M_{\mathrm{BC},l}\|u\|_2\|v\|_2 .
\end{split}
\label{eq:boxcox-terminal-loss-events}
\end{equation}
The last inequality follows from $|L_{\mathrm P}'''|\le4.3$ and
\eqref{eq:boxcox-terminal-design-bounds} for $u,v\in\CC(l)$.
The second inequality verifies
\eqref{eq:inherited-surrogate-curvature} for the pairs in
\eqref{eq:weighted-full-bound}--\eqref{eq:weighted-reduced-bound}.

By \eqref{eq:boxcox-feasible-center-properties} and
\eqref{eq:boxcox-terminal-rate-conditions},
$\theta^\dagger-\theta^*\in\BB(R_0)\cap\CC(l)$ and
$\|\theta^\dagger-\theta^*\|_1\le m^{1/2}d_B$.
Equations \eqref{eq:R-general-center-bound},
\eqref{eq:boxcox-terminal-loss-events},
\eqref{eq:boxcox-J-row-bound}, and
\eqref{eq:boxcox-terminal-concentration-event} give
\begin{equation*}
\begin{aligned}
\frac1{N_B}\sum_{j<B}n_j\|R_j(\theta^\dagger)\|_\infty
&\le2M_{\mathrm{BC},l}\phi^{(B)}+M_{\mathrm{BC},l}d_B^2,\\
\left\|\frac1{N_B}\sum_{j=1}^Bn_j
\{\nabla\widehat{\mc L}^{(j)}(\theta^\dagger)
-\nabla\widehat{\mc L}^{(j)}(\theta^*)\}\right\|_\infty
&\le\{J_{\mathrm{BC},u}+m^{1/2}\eta_{\mathrm{BC},\infty}\}d_B
+\tfrac12M_{\mathrm{BC},l}d_B^2.
\end{aligned}
\end{equation*}
Lemma~\ref{lem:decomposition} and substitution in
\eqref{eq:surrogate-gradient-remainder} yield
\begin{equation}
\begin{split}
\|\nabla\widehat{\mc Q}_B(\theta^*)
-\omegab^{(B)}\|_\infty
&\le\frac32M_{\mathrm{BC},l}\phi^{(B)},
\\
\|\nabla\widehat{\mc Q}_B(\theta^\dagger)
-\omegab^{(B)}\|_\infty
&\le2M_{\mathrm{BC},l}\phi^{(B)}
+\Gamma_{\mathrm{BC},B}.
\end{split}
\label{eq:boxcox-terminal-center-score-bounds}
\end{equation}
Indeed, \eqref{eq:boxcox-J-row-bound} and
$\operatorname{supp}(\theta^\dagger-\theta^*)\subseteq\mc M$ give
$\|J_{\mathrm{BC},B}(\theta^\dagger-\theta^*)\|_\infty
\le J_{\mathrm{BC},u}d_B$ directly.
Equations \eqref{eq:boxcox-terminal-concentration-event},
\eqref{eq:boxcox-terminal-rate-conditions}, and
$\phi^{(B)}\le\Phi_{\mathrm{BC},B}$ yield
\begin{align*}
\|\omegab^{(B)}\|_\infty+2M_{\mathrm{BC},l}\phi^{(B)}
&\le p_0\lambda/4,\\
2p_0^{-1}\{\|\omegab^{(B)}\|_\infty
+2M_{\mathrm{BC},l}\phi^{(B)}+\Gamma_{\mathrm{BC},B}\}
+\kappa d_B&\le3\lambda/4+\lambda/4=\lambda.
\end{align*}
These are \maineqref{event:theorem2-full} and
\maineqref{event:theorem2-reduced}, respectively.

As in the proof of Proposition~\ref{prop:tobit-tightening-verification}, the induction argument in the proof of
Theorem~\mainref{thm:batch_b_tightening} yields
\begin{equation*}
\begin{aligned}
\max\left\{
\|\widehat\theta_a^{({\rm LLA},k)}-\theta^*\|_2,
\|\widehat\theta_0^{({\rm LLA},k)}-\theta^\dagger\|_2+d_B
\right\}
&<\kappa^{-1}\{2+p_0(c_{\rm aug}^2+1)^{1/2}\}
q_{\mathrm{BC}}^{1/2}\lambda\\
&=c_{\rm aug}\gamma_0q_{\mathrm{BC}}^{1/2}\lambda<R_0.
\end{aligned}
\end{equation*}

The two lines of \eqref{eq:boxcox-terminal-center-score-bounds} replace
\eqref{eq:gstar-score} and \eqref{eq:gdagger-score}.  Since
$\|v_A\|_2\le|A|^{1/2}\|v\|_\infty$ and
$\|\Zb_{\mathrm{BC}}\|_2=1$, they also yield the bounds used in
\eqref{eq:gdagger-block-score}; the second inequality preceding
\eqref{eq:boxcox-terminal-center-score-bounds} replaces
\eqref{eq:omega-transport}.  Lemma~\ref{lem:intermediate.rate} and the
proof of Theorem~\mainref{thm:batch_b_tightening} therefore give
\eqref{eq:boxcox-terminal-contraction} and
\eqref{eq:crude-lla-localization}.  The last two inequalities in
\eqref{eq:boxcox-terminal-concentration-event} and
$\phi^{(B)}\le\Phi_{\mathrm{BC},B}$ give
\eqref{eq:boxcox-terminal-oracle-rate-bounds}.  A union bound with
Proposition~\ref{prop:boxcox-initializer-verification} yields
\eqref{eq:boxcox-terminal-probability}.
\end{proof}

By \eqref{eq:boxcox-initializer-tuning} and
Lemma~\ref{lem:batch.size}, we have
\begin{equation*}
\Phi_{\mathrm{BC},B}
\le
\frac{9C^2a_{\mathrm{BC}}\log d_{\mathrm{BC}}}
{4\kappa^2N_B}
\left\{1+\log\left(\frac{N_{B-1}}{n_1}\right)\right\}.
\end{equation*}
Iterating \eqref{eq:boxcox-terminal-contraction} gives
\begin{equation*}
\begin{split}
\|\widehat\theta_e^{({\rm LLA},k)}-\theta^*\|_2
\le{}&
\delta^k\|\widehat\theta^{(B)}-\theta^*\|_2
\\
&+\frac1{1-\delta}\left[
\kappa^{-1}
\left\|p'_{\lambda,+}\{(|\beta^*_{\mc S_{\mathrm{BC}}}|
-\gamma_0\lambda)_+\}\right\|_2
+r_{\mathrm{BC},e}^{\rm ora}\right].
\end{split}
\end{equation*}
If $\mc S_{\mathrm{BC}}=\varnothing$, the penalty-derivative term is zero.
If $\mc S_{\mathrm{BC}}\ne\varnothing$ and
$\min_{j\in\mc S_{\mathrm{BC}}}|\beta_j^*|
\ge(\gamma_0+c_1)\lambda$, then
$(|\beta_j^*|-\gamma_0\lambda)_+\ge c_1\lambda$ for every
$j\in\mc S_{\mathrm{BC}}$, so Assumption~\mainref{assm:penalty_function}
makes the same term zero.

\subsubsection{Verification of the strong-oracle theorem}
\label{sec:BC_strong_oracle_verification}

Write $T=\overline{\mc M}_{\mathrm{BC}}$ and
\begin{equation*}
U_{\mathrm{BC},a}=\Ib_{q_{\mathrm{BC}}},
\qquad
U_{\mathrm{BC},0}
=\operatorname{diag}(\Zb_{\mathrm{BC}},\Ib_{s_{\mathrm{BC}}}),
\qquad
\varrho_{\mathrm{BC}}=J_{\mathrm{BC},u}/\kappa,
\end{equation*}
and, for $e\in\{0,a\}$,
\begin{equation}
A_{\mathrm{BC},e}
=(J_{\mathrm{BC},B})_{T^c,T}U_{\mathrm{BC},e}
\{U_{\mathrm{BC},e}^\top(J_{\mathrm{BC},B})_{T,T}
U_{\mathrm{BC},e}\}^{-1}.
\label{eq:boxcox-terminal-projection}
\end{equation}
Assumption~\mainref{assm:hessian},
\eqref{eq:boxcox-terminal-kappa}, and
\eqref{eq:boxcox-J-row-bound} yield
\begin{equation}
\lambda_{\min}\{U_{\mathrm{BC},e}^\top
(J_{\mathrm{BC},B})_{T,T}U_{\mathrm{BC},e}\}\ge\kappa,
\qquad
T^c\ne\varnothing\ \Longrightarrow\ 
\max_{j\in T^c}\|(A_{\mathrm{BC},e})_{j,\cdot}\|_2
\le\varrho_{\mathrm{BC}}.
\label{eq:boxcox-terminal-projection-bounds}
\end{equation}
Indeed, the columns of $U_{\mathrm{BC},e}$ are orthonormal, so
Assumption~\mainref{assm:hessian}, with $\Jb=J_{\mathrm{BC},B}$, gives the
first bound because $J_l\ge\kappa$ by
\eqref{eq:boxcox-probit-derivatives} and
\eqref{eq:boxcox-terminal-kappa}.  Its inverse bound and
\eqref{eq:boxcox-J-row-bound} give the row bound when $T^c\ne\varnothing$.

Define
\begin{equation*}
\begin{split}
\chi_{\mathrm{BC},\mathrm{proj}}
&=\left[
\frac{2J_u(1+\varrho_{\mathrm{BC}}^2)
\{\log(4d_{\mathrm{BC}})+2\log\log(e d_{\mathrm{BC}})\}}{N_B}
\right]^{1/2}\\
&\quad+
\frac{2G_{\mathrm{BC}}B_1^{1/2}
(1+\varrho_{\mathrm{BC}}q_{\mathrm{BC}}^{1/2})
\{\log(4d_{\mathrm{BC}})+2\log\log(e d_{\mathrm{BC}})\}}
{3N_B}.
\end{split}
\end{equation*}
\begin{equation}
\begin{split}
\overline r_{\mathrm{BC},B}
={}&\kappa^{-1}\{
\chi_{\mathrm{BC}}(q_{\mathrm{BC}})
+2M_{\mathrm{BC},l}q_{\mathrm{BC}}^{1/2}\Phi_{\mathrm{BC},B}\}
\\
&\vee\bigg[
\kappa^{-1}\big[
\sqrt2\,\chi_{\mathrm{BC}}(q_{\mathrm{BC}}-r)
+2q_{\mathrm{BC}}^{1/2}
\{2M_{\mathrm{BC},l}\Phi_{\mathrm{BC},B}
+\Gamma_{\mathrm{BC},B}\}\big]+d_B\bigg],
\\
\Psi_{\mathrm{BC},B}
={}&\mathbf1\{T^c\ne\varnothing\}\bigg[
\chi_{\mathrm{BC},\mathrm{proj}}
+(1+\varrho_{\mathrm{BC}}q_{\mathrm{BC}}^{1/2})
\bigg\{2M_{\mathrm{BC},l}\Phi_{\mathrm{BC},B}
+\Gamma_{\mathrm{BC},B}
\\
&\hspace{25mm}
+\big[q_{\mathrm{BC}}^{1/2}\eta_{\mathrm{BC},\infty}
+M_{\mathrm{BC},l}
\{\Phi_{\mathrm{BC},B}^{1/2}+\overline r_{\mathrm{BC},B}\}
\big]\overline r_{\mathrm{BC},B}\bigg\}\bigg].
\end{split}
\label{eq:boxcox-terminal-strong-oracle-rates}
\end{equation}

\begin{enumerate}[label=(Box--Cox-\arabic*),start=5,leftmargin=3\parindent]
\item\label{BoxCoxassm:strong-oracle}
There are $\gamma_1>\gamma_0$ and
$\bar\delta\in(5/(4\kappa\gamma_0),1)$ such that
$p_0\in(0,1/2]$, $p'_+(\gamma_1)=0$, $\kappa\gamma_0>5/4$, and
\begin{equation}
\begin{gathered}
s_{\mathrm{BC}}>0
\quad\Longrightarrow\quad
\min_{j\in\mc S_{\mathrm{BC}}}|\beta_j^*|
\ge(\gamma_0+\gamma_1)\lambda,
\\
\overline r_{\mathrm{BC},B}
<R_0\wedge\frac{\lambda}{5\bar\delta\kappa},
\qquad
\Psi_{\mathrm{BC},B}\le\frac{p_0\lambda}{2}.
\end{gathered}
\label{eq:boxcox-terminal-strong-oracle-conditions}
\end{equation}
\end{enumerate}

Let
\begin{equation*}
\widehat s_{\mathrm{BC}}
=|\operatorname{supp}(\widehat\beta^{(B)}_{\mc M_{\mathrm{BC}}^c})|
+s_{\mathrm{BC}},
\qquad
K_{\mathrm{BC}}^{\rm LLA}
=
\begin{cases}
0,&\widehat s_{\mathrm{BC}}=0,\\[1mm]
\displaystyle
\left\lceil
\frac{\log(\widehat s_{\mathrm{BC}}^{1/2}/\bar\delta)}
{\log(1/\bar\delta)}
\right\rceil,&\widehat s_{\mathrm{BC}}>0.
\end{cases}
\end{equation*}

\begin{proposition}
\label{prop:boxcox-strong-oracle-verification}
Suppose that the conditions of
Proposition~\ref{prop:boxcox-tightening-verification} and
\ref{BoxCoxassm:strong-oracle} hold.  Then, with probability at least
\begin{equation}
\begin{split}
1-\frac{2B}{d_{\mathrm{BC}}}
-2d_{\mathrm{BC}}^2\sum_{j=1}^B
\exp\left\{-\frac{n_j\underline\kappa_{\mathrm{BC}}^2}
{2048B_1^2a_{\mathrm{BC}}^2}\right\}
-5\{\log(e d_{\mathrm{BC}})\}^{-2},
\end{split}
\label{eq:boxcox-terminal-strong-oracle-probability}
\end{equation}
every sequence of LLA solutions satisfies
\begin{equation}
\widehat\theta_e^{({\rm LLA},k)}
=\widehat\theta_e^{\rm ora},
\qquad e\in\{0,a\},
\qquad k\ge K_{\mathrm{BC}}^{\rm LLA}+1.
\label{eq:boxcox-terminal-strong-oracle-recovery}
\end{equation}
\end{proposition}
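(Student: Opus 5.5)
The plan is to mirror the proof of Proposition~\ref{prop:tobit-strong-oracle-verification}: verify the hypotheses of Theorem~\mainref{thm:oracle_recovery} on a single high-probability event, and then invoke the bad-set contraction argument. The event is the intersection of the event in Proposition~\ref{prop:boxcox-tightening-verification}, whose probability is \eqref{eq:boxcox-terminal-probability}, with one further projected-score event. On that further event,
\[
\max_{e\in\{0,a\}}\|\omegab^{(B)}_{T^c}-A_{\mathrm{BC},e}U_{\mathrm{BC},e}^\top\omegab^{(B)}_T\|_\infty\le\chi_{\mathrm{BC},\mathrm{proj}}.
\]
I would prove this bound by Bernstein's inequality applied to the linear forms $a_j^\top\xi_i^{(j')}$. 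Here $a_j$ equals $1$ at $j\in T^c$, equals $-U_{\mathrm{BC},e}(A_{\mathrm{BC},e})_{j,\cdot}^\top$ on $T$, and is zero elsewhere. By \eqref{eq:boxcox-terminal-projection-bounds}, $\|a_j\|_2^2\le1+\varrho_{\mathrm{BC}}^2$, so the variance is at most $J_u(1+\varrho_{\mathrm{BC}}^2)$. The summands are bounded by $G_{\mathrm{BC}}B_1^{1/2}\|a_j\|_1\le G_{\mathrm{BC}}B_1^{1/2}(1+\varrho_{\mathrm{BC}}q_{\mathrm{BC}}^{1/2})$. A union bound over $j$ and $e$ at level $\log(4d_{\mathrm{BC}})+2\log\log(ed_{\mathrm{BC}})$ then gives complement probability at most $\{\log(ed_{\mathrm{BC}})\}^{-2}$. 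This accounts for the fifth log term in \eqref{eq:boxcox-terminal-strong-oracle-probability}.

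Next I would localize the oracles. On the tightening event, the inherited curvature \eqref{eq:boxcox-terminal-loss-events} holds, together with the center score bounds \eqref{eq:boxcox-terminal-center-score-bounds} and the concentration event \eqref{eq:boxcox-terminal-concentration-event}. I would minimize $\widehat{\mc Q}_B$ over each oracle affine set intersected with a closed ball: for the full oracle, the ball of radius $R_0$ centered at $\theta^*$; for the reduced oracle, the ball of radius $R_0-d_B$ centered at $\theta^\dagger$. The argument of Lemma~\ref{lem:oracle_rate}, with \eqref{eq:boxcox-terminal-oracle-rate-bounds}, gives $\max_e\|\widehat\theta_e^{\rm ora}-\theta^*\|_2\le\overline r_{\mathrm{BC},B}$. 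Because this is strictly inside the ball by \eqref{eq:boxcox-terminal-strong-oracle-conditions}, each localized minimizer is a global oracle minimizer, and it is unique because differences supported on $T$ lie in $\CC(q_{\mathrm{BC}}^{1/2})\subseteq\CC(l)$. This verifies the radius half of \maineqref{eq:oracle-recovery-score-radius}.

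The main obstacle is the off-support score bound $\|\nabla_{T^c}\widehat{\mc Q}_B(\widehat\theta^{\rm ora}_e)\|_\infty\le p_0\lambda/2$. Set $\theta_a=\theta^*$ and $\theta_0=\theta^\dagger$. Using the KKT cancellation $U_{\mathrm{BC},e}^\top\nabla_T\widehat{\mc Q}_B(\widehat\theta^{\rm ora}_e)=0$ and the defining identity of $A_{\mathrm{BC},e}$ in \eqref{eq:boxcox-terminal-projection}, I would write the same three-term decomposition as in \eqref{eq:tobit-terminal-oracle-score-identity}:
\begin{enumerate}
\item a projected-score term, bounded by $\chi_{\mathrm{BC},\mathrm{proj}}$;
\item a center-discrepancy term, bounded by $(1+\varrho_{\mathrm{BC}}q_{\mathrm{BC}}^{1/2})(2M_{\mathrm{BC},l}\Phi_{\mathrm{BC},B}+\Gamma_{\mathrm{BC},B})$ using \eqref{eq:boxcox-terminal-center-score-bounds};
\item an integrated Hessian-deviation term.
\end{enumerate}
For the third term, I would split $\nabla^2\widehat{\mc Q}_B-J_{\mathrm{BC},B}$ into three parts. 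The first is $\overline H^*_{\mathrm{BC},B}-\Jb$, bounded through $\eta_{\mathrm{BC},\infty}$ times $q_{\mathrm{BC}}^{1/2}$. The second is the historical Hessian shift, bounded by $M_{\mathrm{BC},l}\Phi_{\mathrm{BC},B}^{1/2}$ via Cauchy--Schwarz and the RLH bound in \eqref{eq:boxcox-terminal-loss-events}. The third is the current-batch shift, bounded by $M_{\mathrm{BC},l}\overline r_{\mathrm{BC},B}$. Each part is multiplied by $\|U^\top_{\mathrm{BC},e}(\Delta_e)_T\|_2\le\overline r_{\mathrm{BC},B}$ and by the factor $1+\varrho_{\mathrm{BC}}q_{\mathrm{BC}}^{1/2}$. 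The delicate point is checking that the RLH inequality applies. For this I need the segment $\theta_e+t\Delta_e-\theta^*$ and the columns of $U_{\mathrm{BC},e}$ to lie in $\CC(l)$, which holds because they are supported on $T$ and $q_{\mathrm{BC}}\le l^2$. The three terms then sum exactly to $\Psi_{\mathrm{BC},B}\le p_0\lambda/2$. If $T^c=\varnothing$, the score condition is vacuous.

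Finally, I would apply Theorem~\mainref{thm:oracle_recovery}. Proposition~\ref{prop:boxcox-tightening-verification} gives the crude localization \eqref{eq:crude-lla-localization}. The beta-min condition and the penalty conditions come from \ref{BoxCoxassm:strong-oracle}. Since the unpenalized threshold coordinates lie in $\mc M_{\mathrm{BC}}$, every bad null coordinate belongs to $\operatorname{supp}(\widehat\beta^{(B)}_{\mc M^c_{\mathrm{BC}}})$, so $|\mc B_{e,0}|\le\widehat s_{\mathrm{BC}}$. The contraction \eqref{eq:bad-set-contraction} then empties $\mc B_{e,k}$ by $k=K^{\rm LLA}_{\mathrm{BC}}$, and \eqref{eq:strong-oracle-upper} at the next step yields \eqref{eq:boxcox-terminal-strong-oracle-recovery}.
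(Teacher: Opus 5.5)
Your proposal is correct and follows the paper's proof essentially step for step. It uses the same extra projected-score Bernstein event at level $\log(4d_{\mathrm{BC}})+2\log\log(ed_{\mathrm{BC}})$, which accounts for the fifth $\{\log(ed_{\mathrm{BC}})\}^{-2}$ term; the same ball-localized construction of the two oracles with radius bound $\overline r_{\mathrm{BC},B}$; the same three-term decomposition of $\nabla_{T^c}\widehat{\mc Q}_B(\widehat\theta_e^{\rm ora})$ summing to $\Psi_{\mathrm{BC},B}$; and the same bad-set contraction from $|\mc B_{e,0}|\le\widehat s_{\mathrm{BC}}$. The only step you compress is the boundary case of the localized oracle problems. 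The paper handles it by noting that the nonnegative normal multiplier of the ball constraint only strengthens the curvature inequality, so the minimizers are interior and hence global.
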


\begin{proof}
If $T^c\ne\varnothing$, Bernstein inequalities and a union
bound over $e\in\{0,a\}$ and $h\in T^c$ give, with probability at least
$1-\{\log(e d_{\mathrm{BC}})\}^{-2}$,
\begin{equation}
\max_{e\in\{0,a\}}
\|\omegab^{(B)}_{T^c}
-A_{\mathrm{BC},e}U_{\mathrm{BC},e}^\top
\omegab^{(B)}_T\|_\infty
\le\chi_{\mathrm{BC},\mathrm{proj}}.
\label{eq:boxcox-terminal-projected-score}
\end{equation}
For fixed $e$ and $h\in T^c$, its $(j,i)$ summand is
$\nabla_h\ell_{\mathrm{BC}}$ minus
$(A_{\mathrm{BC},e})_{h,\cdot}U_{\mathrm{BC},e}^\top
\nabla_T\ell_{\mathrm{BC}}$, both evaluated at
$(\theta^*;Y_i^{(j)},x_i^{(j)})$.  By
\eqref{eq:boxcox-terminal-projection-bounds}, its absolute value is at
most $G_{\mathrm{BC}}B_1^{1/2}
(1+\varrho_{\mathrm{BC}}q_{\mathrm{BC}}^{1/2})$.
Its mean is zero, and its variance is at most
$J_u(1+\varrho_{\mathrm{BC}}^2)$ by the Cauchy--Schwarz inequality, as in the proof of
\eqref{eq:boxcox-terminal-concentration-event}.
The deviation level
$\log(4d_{\mathrm{BC}})+2\log\log(e d_{\mathrm{BC}})$ yields complement
probability $\{\log(e d_{\mathrm{BC}})\}^{-2}$, since
$4d_{\mathrm{BC}}e^{-\log(4d_{\mathrm{BC}})}=1$ and
$e^{-2\log\log(e d_{\mathrm{BC}})}=\{\log(e d_{\mathrm{BC}})\}^{-2}$.

On the intersection of \eqref{eq:boxcox-score-event} and
\eqref{eq:boxcox-terminal-concentration-event},
Proposition~\ref{prop:boxcox-initializer-verification},
\eqref{eq:boxcox-terminal-center-score-bounds},
$\phi^{(B)}\le\Phi_{\mathrm{BC},B}$ imply
\begin{equation}
\begin{split}
\kappa^{-1}\|\nabla_T\widehat{\mc Q}_B(\theta^*)\|_2
&\le
\kappa^{-1}\{
\chi_{\mathrm{BC}}(q_{\mathrm{BC}})
+2M_{\mathrm{BC},l}q_{\mathrm{BC}}^{1/2}\Phi_{\mathrm{BC},B}\},
\\
\kappa^{-1}\|U_{\mathrm{BC},0}^\top
\nabla_T\widehat{\mc Q}_B(\theta^\dagger)\|_2+d_B
&\le
\kappa^{-1}\big[
\sqrt2\,\chi_{\mathrm{BC}}(q_{\mathrm{BC}}-r)
+2q_{\mathrm{BC}}^{1/2}
\{2M_{\mathrm{BC},l}\Phi_{\mathrm{BC},B}
+\Gamma_{\mathrm{BC},B}\}\big]+d_B.
\end{split}
\label{eq:boxcox-terminal-oracle-center-rates}
\end{equation}
We intersect the sets in
\maineqref{def:unconstrained_oracle_estimator} and
\maineqref{def:constrained_oracle_estimator} with
$\{\|\theta-\theta^*\|_2\le R_0\}$ and
$\{\|\theta-\theta^\dagger\|_2\le R_0-d_B\}$, respectively, and minimize
$\widehat{\mc Q}_B$ on them.
The KKT equations, \eqref{eq:inherited-surrogate-curvature}, and the Cauchy--Schwarz inequality imply
\begin{equation}
\|\widehat\theta_a^{\rm ora}-\theta^*\|_2
\le\kappa^{-1}\|\nabla_T\widehat{\mc Q}_B(\theta^*)\|_2,
\qquad
\|\widehat\theta_0^{\rm ora}-\theta^\dagger\|_2
\le\kappa^{-1}\|U_{\mathrm{BC},0}^\top
\nabla_T\widehat{\mc Q}_B(\theta^\dagger)\|_2.
\label{eq:boxcox-terminal-oracle-radial-rates}
\end{equation}
For the reduced problem, if a minimizer is on the boundary and
$\Delta=\widehat\theta_0^{\rm ora}-\theta^\dagger$, its normal multiplier
$\mu\ge0$ satisfies
\begin{equation*}
\kappa\|\Delta\|_2^2
\le-\langle\nabla_T\widehat{\mc Q}_B(\theta^\dagger),
\Delta_T\rangle-\mu\|\Delta\|_2
\le\|U_{\mathrm{BC},0}^\top
\nabla_T\widehat{\mc Q}_B(\theta^\dagger)\|_2\|\Delta\|_2.
\end{equation*}
For the full problem, $U_{\mathrm{BC},a}=\Ib_{q_{\mathrm{BC}}}$.
The reduced right-hand side in
\eqref{eq:boxcox-terminal-oracle-radial-rates} is at most
$\overline r_{\mathrm{BC},B}-d_B<R_0-d_B$, and the full right-hand side is
at most $\overline r_{\mathrm{BC},B}<R_0$, by
\eqref{eq:boxcox-terminal-oracle-center-rates} and
\eqref{eq:boxcox-terminal-strong-oracle-conditions}.
The
triangle inequality yields
\begin{equation}
\max_{e\in\{0,a\}}
\|\widehat\theta_e^{\rm ora}-\theta^*\|_2
\le\overline r_{\mathrm{BC},B}.
\label{eq:boxcox-terminal-oracle-rate}
\end{equation}
Again note that both minimizers are unique.

We work with $T^c\ne\varnothing$.  On the
intersection with \eqref{eq:boxcox-terminal-projected-score}, put
$\theta_a=\theta^*$, $\theta_0=\theta^\dagger$, and
$\Delta_e=\widehat\theta_e^{\rm ora}-\theta_e$.  The oracle KKT equations
and \eqref{eq:boxcox-terminal-projection} yield
\begin{equation}
U_{\mathrm{BC},e}^\top
\nabla_T\widehat{\mc Q}_B(\widehat\theta_e^{\rm ora})=0,
\qquad
(J_{\mathrm{BC},B})_{T^c,T}U_{\mathrm{BC},e}
-A_{\mathrm{BC},e}U_{\mathrm{BC},e}^\top
(J_{\mathrm{BC},B})_{T,T}U_{\mathrm{BC},e}=0.
\label{eq:boxcox-terminal-oracle-cancellation}
\end{equation}
Moreover,
$(\Delta_e)_{T^c}=0$ and
$(\Delta_e)_T=U_{\mathrm{BC},e}U_{\mathrm{BC},e}^\top(\Delta_e)_T$.
Subtracting $A_{\mathrm{BC},e}$ times the first equality in
\eqref{eq:boxcox-terminal-oracle-cancellation} and applying the fundamental
theorem of calculus gives
\begin{align}
\nabla_{T^c}\widehat{\mc Q}_B(\widehat\theta_e^{\rm ora})
={}&\omegab^{(B)}_{T^c}
-A_{\mathrm{BC},e}U_{\mathrm{BC},e}^\top\omegab^{(B)}_T
\notag\\
&+\nabla_{T^c}\widehat{\mc Q}_B(\theta_e)-\omegab^{(B)}_{T^c}
-A_{\mathrm{BC},e}U_{\mathrm{BC},e}^\top
\{\nabla_T\widehat{\mc Q}_B(\theta_e)-\omegab^{(B)}_T\}
\notag\\
&+\int_0^1\bigg[
\{\nabla^2\widehat{\mc Q}_B(\theta_e+t\Delta_e)
-J_{\mathrm{BC},B}\}_{T^c,T}U_{\mathrm{BC},e}
\notag\\
&\hspace{19mm}
-A_{\mathrm{BC},e}U_{\mathrm{BC},e}^\top
\{\nabla^2\widehat{\mc Q}_B(\theta_e+t\Delta_e)
-J_{\mathrm{BC},B}\}_{T,T}U_{\mathrm{BC},e}
\bigg]U_{\mathrm{BC},e}^\top(\Delta_e)_T\,dt.
\label{eq:boxcox-terminal-oracle-score-identity}
\end{align}
For every $\theta\in\mathbb R^{d_{\mathrm{BC}}}$, we have
\begin{equation*}
\begin{split}
\nabla^2\widehat{\mc Q}_B(\theta)-J_{\mathrm{BC},B}
&=\overline H_{\mathrm{BC},B}^*-J_{\mathrm{BC},B}
+\frac1{N_B}\sum_{j<B}n_j
\{\nabla^2\widehat{\mc L}^{(j)}(\widehat\theta^{(j)})
-\nabla^2\widehat{\mc L}^{(j)}(\theta^*)\}
\\
&\qquad+\frac{n_B}{N_B}
\{\nabla^2\widehat{\mc L}^{(B)}(\theta)
-\nabla^2\widehat{\mc L}^{(B)}(\theta^*)\}.
\end{split}
\end{equation*}
For $e\in\{0,a\}$ and $0\le t\le1$,
$\theta_e+t\Delta_e-\theta^*$ is supported on $T$, belongs to
$\BB(R_0)\cap\CC(l)$, and has Euclidean norm at most
$\overline r_{\mathrm{BC},B}$.  Also,
$N_B^{-1}\sum_{j<B}n_j\|\widehat\theta^{(j)}-\theta^*\|_2
\le\{\phi^{(B)}\}^{1/2}\le\Phi_{\mathrm{BC},B}^{1/2}$.
Equations \eqref{eq:boxcox-terminal-concentration-event},
\eqref{eq:boxcox-terminal-loss-events}, and
\eqref{eq:boxcox-terminal-oracle-rate} therefore imply
\begin{align}
&\max_{\substack{e\in\{0,a\},\ 0\le t\le1\\j\in T^c}}
\left\|
\{\nabla^2\widehat{\mc Q}_B(\theta_e+t\Delta_e)
-J_{\mathrm{BC},B}\}_{j,T}U_{\mathrm{BC},e}
\right\|_2
\notag\\
&\qquad\le q_{\mathrm{BC}}^{1/2}\eta_{\mathrm{BC},\infty}
+M_{\mathrm{BC},l}
\{\Phi_{\mathrm{BC},B}^{1/2}+\overline r_{\mathrm{BC},B}\},
\notag\\
&\max_{\substack{e\in\{0,a\}\\0\le t\le1}}
\left\|U_{\mathrm{BC},e}^\top
\{\nabla^2\widehat{\mc Q}_B(\theta_e+t\Delta_e)
-J_{\mathrm{BC},B}\}_{T,T}U_{\mathrm{BC},e}\right\|_2
\notag\\
&\qquad\le q_{\mathrm{BC}}^{1/2}
\left[q_{\mathrm{BC}}^{1/2}\eta_{\mathrm{BC},\infty}
+M_{\mathrm{BC},l}
\{\Phi_{\mathrm{BC},B}^{1/2}+\overline r_{\mathrm{BC},B}\}\right].
\label{eq:boxcox-terminal-oracle-hessian-bounds}
\end{align}
For the second inequality, take $j\in T$ in the first bound and use
$\|A\|_2\le\|A\|_{\mathrm F}$ and
$\|A\|_{\mathrm F}\le
q_{\mathrm{BC}}^{1/2}\max_{j\in T}\|A_{j,\cdot}\|_2$.
Equations \eqref{eq:boxcox-terminal-center-score-bounds} and
\eqref{eq:boxcox-terminal-projection-bounds} give
\begin{equation}
\begin{split}
\max_{e\in\{0,a\}}\big\|&
\nabla_{T^c}\widehat{\mc Q}_B(\theta_e)-\omegab_{T^c}^{(B)}
-A_{\mathrm{BC},e}U_{\mathrm{BC},e}^\top
\{\nabla_T\widehat{\mc Q}_B(\theta_e)-\omegab_T^{(B)}\}
\big\|_\infty
\\
&\le(1+\varrho_{\mathrm{BC}}q_{\mathrm{BC}}^{1/2})
\{2M_{\mathrm{BC},l}\Phi_{\mathrm{BC},B}
+\Gamma_{\mathrm{BC},B}\}.
\end{split}
\label{eq:boxcox-terminal-oracle-center-bound}
\end{equation}
Since
$\|U_{\mathrm{BC},e}^\top(\Delta_e)_T\|_2
\le\overline r_{\mathrm{BC},B}$,
\eqref{eq:boxcox-terminal-projected-score},
\eqref{eq:boxcox-terminal-oracle-score-identity},
\eqref{eq:boxcox-terminal-oracle-hessian-bounds}, and
\eqref{eq:boxcox-terminal-oracle-center-bound} imply
\begin{equation}
\max_{e\in\{0,a\}}
\|\nabla_{T^c}\widehat{\mc Q}_B
(\widehat\theta_e^{\rm ora})\|_\infty
\le\Psi_{\mathrm{BC},B}\le\frac{p_0\lambda}{2}.
\label{eq:boxcox-terminal-oracle-offscore}
\end{equation}
Equations \eqref{eq:boxcox-terminal-oracle-rate} and
\eqref{eq:boxcox-terminal-oracle-offscore} verify
\maineqref{eq:oracle-recovery-score-radius}; the remaining inequalities in
Theorem~\mainref{thm:oracle_recovery} are
\eqref{eq:boxcox-terminal-strong-oracle-conditions}.
Proposition~\ref{prop:boxcox-tightening-verification} gives
\eqref{eq:crude-lla-localization}.  In the proof of
Theorem~\mainref{thm:oracle_recovery},
every null coordinate in $\mc B_{e,0}$ belongs to
$\operatorname{supp}(\widehat\beta_{\mc M_{\mathrm{BC}}^c}^{(B)})$, and at
most $s_{\mathrm{BC}}$ coordinates in $\mc S_{\mathrm{BC}}$ can belong to
$\mc B_{e,0}$.  Hence $|\mc B_{e,0}|\le\widehat s_{\mathrm{BC}}$, and
\eqref{eq:bad-set-contraction} yields
\begin{equation*}
|\mc B_{e,k}|^{1/2}
\le\bar\delta^k\widehat s_{\mathrm{BC}}^{1/2}
\le\bar\delta<1,
\qquad k\ge K_{\mathrm{BC}}^{\rm LLA}.
\end{equation*}
Thus $\mc B_{e,k}=\varnothing$ for
$k\ge K_{\mathrm{BC}}^{\rm LLA}$, and
\eqref{eq:strong-oracle-upper} at the next update yields
\eqref{eq:boxcox-terminal-strong-oracle-recovery}.
Finally, the union bound applied to
\eqref{eq:boxcox-terminal-probability} and
\eqref{eq:boxcox-terminal-projected-score} gives
\eqref{eq:boxcox-terminal-strong-oracle-probability}.
\end{proof}

\subsubsection{Online Bahadur representation}
\label{sec:BC_bahadur_verification}

Set
$T=\overline{\mc M}_{\mathrm{BC}}$ and order its coordinates as
$(\mc M_{\mathrm{BC}},\mc S_{\mathrm{BC}})$.  Under the substitution preceding
\eqref{eq:boxcox-online-initializer}, take $\Jb=J_{\mathrm{BC},B}$ and
\begin{equation}
\begin{gathered}
\Jb_0=(J_{\mathrm{BC},B})_{T,T},
\qquad
\Cb_0=
\begin{pmatrix}
\widetilde{\Cb}_{\mathrm{BC}}&
\mathbf0_{r\times s_{\mathrm{BC}}}
\end{pmatrix},
\\
\Psi_0=\Cb_0\Jb_0^{-1}\Cb_0^\top,
\qquad
\Pb_0=\Jb_0^{-1/2}\Cb_0^\top
\Psi_0^{-1}\Cb_0\Jb_0^{-1/2}.
\end{gathered}
\label{eq:boxcox-bahadur-main-objects}
\end{equation}
Define
\begin{equation}
\Kb_0
=
\frac1{N_B}\sum_{j=1}^B\sum_{i=1}^{n_j}
\mathbb E\!\left[
\nabla_T\ell_{\mathrm{BC}}
(\theta^*;Y_i^{(j)},x_i^{(j)})
\nabla_T\ell_{\mathrm{BC}}
(\theta^*;Y_i^{(j)},x_i^{(j)})^\top
\right].
\label{eq:boxcox-bahadur-score-covariance}
\end{equation}
Set
\begin{equation*}
\Vb_0=\Cb_0\Jb_0^{-1}\Kb_0\Jb_0^{-1}\Cb_0^\top,
\qquad
\Omega_0=\Vb_0^{-1/2}\Cb_0\Jb_0^{-1}.
\end{equation*}

\begin{enumerate}[label=(Box--Cox-\arabic*),start=6,leftmargin=3\parindent]
\item\label{BoxCoxassm:bahadur}
As $N_B\to\infty$,
\begin{gather}
\begin{gathered}
\frac{2B}{d_{\mathrm{BC}}}
+2d_{\mathrm{BC}}^2\sum_{j=1}^B
\exp\left\{-\frac{n_j\underline\kappa_{\mathrm{BC}}^2}
{2048B_1^2a_{\mathrm{BC}}^2}\right\}
+\{\log(e d_{\mathrm{BC}})\}^{-2}=o(1),
\\
\|\hb_B\|_2
=O\!\left(\sqrt{\frac r{N_B}}\right),
\end{gathered}
\label{eq:boxcox-bahadur-probability}
\\
\frac{\overline r_{\mathrm{BC},B}}{R_0}=o(1),
\qquad
\lambda_{\min}(\Kb_0)\ge K_l>0,
\label{eq:boxcox-bahadur-localization-and-score}
\\
\frac{c_{\mathrm{BC},B}^2r^{5/2}q_{\mathrm{BC}}}{N_B}=o(1),
\label{eq:boxcox-bahadur-lyapunov}
\\
\begin{split}
&(1+M_{\mathrm{BC},l})
\sqrt{\frac{q_{\mathrm{BC}}}{N_B}}
\left\{
\frac{a_{\mathrm{BC}}}{\kappa^2}
\sum_{j=1}^{B}n_j\lambda_j^2
+\frac{a_{\mathrm{BC}}^{1/2}}{\kappa}
\overline r_{\mathrm{BC},B}
\sum_{j=1}^{B}n_j\lambda_j
+n_B\overline r_{\mathrm{BC},B}^{\,2}
\right\}
\\
&\hspace{38mm}
+\sqrt{N_B}\,q_{\mathrm{BC}}
\eta_{\mathrm{BC},\infty}
\overline r_{\mathrm{BC},B}=o(1).
\end{split}
\label{eq:boxcox-bahadur-remainder-rate}
\end{gather}
\end{enumerate}

\begin{proposition}
\label{prop:boxcox-bahadur-verification}
Suppose Assumptions~\mainref{assm:penalty_function},
\mainref{assm:local_alternative}, and \mainref{assm:hessian}, the conditions of
Proposition~\ref{prop:boxcox-initializer-verification},
conditions~\ref{BoxCoxassm:design}--
\ref{BoxCoxassm:terminal-rates}, and
condition~\ref{BoxCoxassm:bahadur} hold.  Then
\begin{align}
\widehat\theta_{0,T}^{\rm ora}-\theta_T^*
={}&
\Jb_0^{-1/2}(\Pb_0-\Ib_{q_{\mathrm{BC}}})\Jb_0^{-1/2}
\frac1{N_B}\sum_{j=1}^B\sum_{i=1}^{n_j}
\nabla_T\ell_{\mathrm{BC}}
(\theta^*;Y_i^{(j)},x_i^{(j)})
\notag\\
&-\Jb_0^{-1}\Cb_0^\top\Psi_0^{-1}\hb_B
+o_p(N_B^{-1/2}),
\label{eq:boxcox-reduced-bahadur}
\\
\widehat\theta_{a,T}^{\rm ora}-\theta_T^*
={}&
-\Jb_0^{-1}
\frac1{N_B}\sum_{j=1}^B\sum_{i=1}^{n_j}
\nabla_T\ell_{\mathrm{BC}}
(\theta^*;Y_i^{(j)},x_i^{(j)})
+o_p(N_B^{-1/2}).
\label{eq:boxcox-full-bahadur}
\end{align}
\end{proposition}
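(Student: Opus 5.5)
The argument mirrors the proofs of Propositions~\ref{prop:glm-bahadur} and \ref{prop:tobit-bahadur-verification}. The idea is to verify the hypotheses of Lemma~\ref{lem:oracle_bahadur} under the Box--Cox substitution on a high-probability event, and then to show that the remainders $\mc R_0,\mc R_a$ are $o_p(N_B^{-1/2})$.

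\textbf{Step 1: the working event.} I would work on the intersection of \eqref{eq:boxcox-score-event} and \eqref{eq:boxcox-terminal-concentration-event}. By \eqref{eq:boxcox-bahadur-probability}, its complement has probability $o(1)$. On this event, Proposition~\ref{prop:boxcox-initializer-verification} gives $\|\widehat\theta^{(j)}-\theta^*\|_2\le 3a_{\mathrm{BC}}^{1/2}\lambda_j/(2\kappa)$ and $\phi^{(B)}\le\Phi_{\mathrm{BC},B}$. Since $l\ge 4a_{\mathrm{BC}}^{1/2}$, every historical displacement also lies in $\BB(R_0)\cap\CC(l)$.

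\textbf{Step 2: localizing the oracles.} Display \eqref{eq:boxcox-terminal-loss-events} supplies the inherited curvature \eqref{eq:inherited-surrogate-curvature} and the RLH events with constant $M_{\mathrm{BC},l}$. Next, \eqref{eq:boxcox-terminal-center-score-bounds} and the concentration event give $r^{\rm ora}_{\mathrm{BC},e}\le\overline r_{\mathrm{BC},B}$, which is $o(R_0)$ by \eqref{eq:boxcox-bahadur-localization-and-score}. Lemma~\ref{lem:oracle_rate} then yields oracle minimizers with $\|\widehat\theta^{\rm ora}_e-\theta^*\|_2\le\overline r_{\mathrm{BC},B}$. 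These are unique because differences are supported on $T$, so they lie in $\CC(q_{\mathrm{BC}}^{1/2})\subseteq\CC(l)$ and strict curvature applies there. The first line of \eqref{eq:boxcox-terminal-concentration-event} gives $\mc H(\eta_{\mathrm{BC},\infty})$, and $\lambda_{\min}(\Jb_0)\ge J_l$ follows from Assumption~\ref{assm:hessian}. All hypotheses of Lemma~\ref{lem:oracle_bahadur} therefore hold.

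\textbf{Step 3: the remainder.} Lemma~\ref{lem:oracle_bahadur} produces the representations \eqref{eq:boxcox-reduced-bahadur}--\eqref{eq:boxcox-full-bahadur}, with leading score $N_B^{-1}\sum_{j,i}\nabla_T\ell_{\mathrm{BC}}(\theta^*;\cdot)$. I would substitute the Step~1 and Step~2 bounds into \eqref{eq:reduced-bahadur-rem} and \eqref{eq:full-bahadur-rem}, using
\[
\sum_{j<B}n_j\|\widehat\theta^{(j)}-\theta^*\|_2\|\widehat\theta_e^{\rm ora}-\theta^*\|_2\le\tfrac{3a_{\mathrm{BC}}^{1/2}}{2\kappa}\overline r_{\mathrm{BC},B}\sum_{j}n_j\lambda_j .
\]
This bounds $\sqrt{N_B}\max_e\|\mc R_e\|_2$ by a constant times the left-hand side of \eqref{eq:boxcox-bahadur-remainder-rate}, which is $o(1)$. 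The local-alternative term $-\Jb_0^{-1}\Cb_0^\top\Psi_0^{-1}\hb_B$ carries over unchanged, since $\widetilde{\Cb}_{\mathrm{BC}}\widetilde{\Cb}_{\mathrm{BC}}^\top=\Cb\Cb^\top$ and the $\hb_B$ of Assumption~\ref{assm:local_alternative} is the same vector.

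\textbf{Main obstacle.} The hard part is in Step~2: confirming that the Box--Cox-specific constants feed correctly into Lemma~\ref{lem:oracle_rate}. Concretely, the radius conditions $\kappa^{-1}\|\nabla_T\widehat{\mc Q}_B(\theta^*)\|_2<R_0$ and $\kappa^{-1}\|\Zb^\top_{\mathrm{BC}}\nabla\widehat{\mc Q}_B(\theta^\dagger)_{\mc M_{\mathrm{BC}}}\|_2+\cdots<R_0$ must hold. So must the membership $\theta^\dagger-\theta^*\in\BB(R_0)\cap\CC(l)$, which I would get from \eqref{eq:boxcox-feasible-center-properties} together with $d_B<R_0$ in \eqref{eq:boxcox-terminal-rate-conditions}. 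I expect these to follow from \eqref{eq:boxcox-terminal-oracle-center-rates} once $\overline r_{\mathrm{BC},B}<R_0$. Unlike the Tobit case, I would not invoke a Bartlett identity, since $\Kb_0\neq\Jb_0$ in general here. The Lyapunov and score-covariance conditions are not part of this statement and can be deferred to the subsequent test-statistic verification.
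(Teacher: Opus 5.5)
Your proposal is correct and follows essentially the same route as the paper. You work on the intersection of \eqref{eq:boxcox-score-event} and \eqref{eq:boxcox-terminal-concentration-event}, localize the oracles through Lemma~\ref{lem:oracle_rate} using \eqref{eq:boxcox-terminal-loss-events}, \eqref{eq:boxcox-terminal-center-score-bounds} and $2\overline r_{\mathrm{BC},B}<R_0$, and then apply Lemma~\ref{lem:oracle_bahadur}, bounding $\sqrt{N_B}\max_e\|\mc R_{\mathrm{BC},e}\|_2$ by the left-hand side of \eqref{eq:boxcox-bahadur-remainder-rate}. The only difference is that the paper's proof also verifies, at this point and for later use, \maineqref{eq:inference-remainder-condition}, the eigenvalue bounds on $\Kb_0$, and the Lyapunov condition; as you note, none of these is needed for the two displayed representations.
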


\begin{proof}
We work on the intersection of the events in
\eqref{eq:boxcox-score-event} and
\eqref{eq:boxcox-terminal-concentration-event}.  Proposition
\ref{prop:boxcox-initializer-verification} holds on the event in
\eqref{eq:boxcox-score-event}, and the complement probability
of the intersection is at most
\begin{equation*}
\frac{2B}{d_{\mathrm{BC}}}
+2d_{\mathrm{BC}}^2\sum_{j=1}^B
\exp\left\{-\frac{n_j\underline\kappa_{\mathrm{BC}}^2}
{2048B_1^2a_{\mathrm{BC}}^2}\right\}
+4\{\log(e d_{\mathrm{BC}})\}^{-2}=o(1)
\end{equation*}
by \eqref{eq:boxcox-bahadur-probability}.  On this intersection,
\eqref{eq:boxcox-initializer-rates} and
\eqref{eq:boxcox-terminal-history-alternative} yield
\begin{equation}
\begin{gathered}
\|\widehat\theta^{(j)}-\theta^*\|_2
\le
\frac{3a_{\mathrm{BC}}^{1/2}\lambda_j}{2\kappa},
\qquad j=1,\ldots,B,
\\
\phi^{(B)}
=\frac1{N_B}\sum_{j=1}^{B-1}n_j
\|\widehat\theta^{(j)}-\theta^*\|_2^2
\le\Phi_{\mathrm{BC},B}.
\end{gathered}
\label{eq:boxcox-bahadur-initializer-rate}
\end{equation}
Since $p_0\le1$ and $a_{\mathrm{BC}}\le q_{\mathrm{BC}}$,
\eqref{eq:boxcox-terminal-l} gives
$l\ge4a_{\mathrm{BC}}^{1/2}$.  Hence, for $j<B$, one has
\begin{equation*}
\widehat\theta^{(j)}-\theta^*
\in\BB(R_0)\cap\CC(l).
\end{equation*}
Equations \eqref{eq:boxcox-feasible-center-properties} and
\eqref{eq:boxcox-terminal-rate-conditions} imply
\begin{equation*}
\theta^\dagger-\theta^*
\in\BB(R_0)\cap\CC(l),
\end{equation*}
because $\operatorname{supp}(\theta^\dagger-\theta^*)\subseteq\mc M$
and $m^{1/2}\le q_{\mathrm{BC}}^{1/2}\le l$.

Equations \eqref{eq:boxcox-terminal-loss-events},
\eqref{eq:boxcox-terminal-center-score-bounds}, and
\eqref{eq:boxcox-terminal-oracle-center-rates} verify the curvature and the score requirements of Lemma~\ref{lem:oracle_rate}.  Here
$l\ge q_{\mathrm{BC}}^{1/2}$ and
\eqref{eq:boxcox-bahadur-localization-and-score} gives
$2\overline r_{\mathrm{BC},B}<R_0$ for all sufficiently large $N_B$.
In particular, we get
\[
\kappa^{-1}\|\nabla_T\widehat{\mc Q}_B(\theta^*)\|_2
\le\overline r_{\mathrm{BC},B}<R_0,
\qquad
\kappa^{-1}\|U_{\mathrm{BC},0}^\top
\nabla_T\widehat{\mc Q}_B(\theta^\dagger)\|_2
\le\overline r_{\mathrm{BC},B}-d_B<R_0-d_B.
\]
Hence, we have
\begin{equation}
\max_{e\in\{0,a\}}
\|\widehat\theta_e^{\rm ora}-\theta^*\|_2
\le\overline r_{\mathrm{BC},B}=o(R_0).
\label{eq:boxcox-bahadur-oracle-localization}
\end{equation}
Both differences are supported on $T$, so they belong to $\CC(l)$.
Equations \eqref{eq:boxcox-terminal-oracle-rate-bounds} and
\eqref{eq:boxcox-bahadur-localization-and-score} also give
$\max_{e\in\{0,a\}}r_{\mathrm{BC},e}^{\rm ora}\le R_0/2$
for all sufficiently large $N_B$.

Since $\Jb_0=(J_{\mathrm{BC},B})_{T,T}$ is a principal submatrix of
$J_{\mathrm{BC},B}$, Assumption~\mainref{assm:hessian} and eigenvalue calculus give
\begin{equation}
0<J_l
\le\lambda_{\min}(\Jb_0)
\le\lambda_{\max}(\Jb_0)
\le J_u<\infty.
\label{eq:boxcox-bahadur-hessian-bounds}
\end{equation}
The Hessian inequality in
\eqref{eq:boxcox-terminal-loss-events} yields
$\bigcap_{j=1}^B\mc E_{\rm rlh}^{(j)}
(R_0,l,M_{\mathrm{BC},l})$, and the first inequality in
\eqref{eq:boxcox-terminal-concentration-event} implies
$\mc H(\eta_{\mathrm{BC},\infty})$ after
\eqref{eq:boxcox-bahadur-main-objects}.

We then apply Lemma~\ref{lem:oracle_bahadur} with $m+s=q_{\mathrm{BC}}$, $\Cb_a=\Cb_0$, $M=M_{\mathrm{BC},l}$, and $\eta_\infty=\eta_{\mathrm{BC},\infty}$.
Lemma~\ref{lem:decomposition},
\eqref{eq:boxcox-terminal-loss-events}, and
\eqref{eq:boxcox-terminal-concentration-event} provide, for
$e\in\{0,a\}$,
\begin{equation}
\begin{split}
\|\mc R_{\mathrm{BC},e}\|_2
\lesssim{}&
\frac{M_{\mathrm{BC},l}q_{\mathrm{BC}}^{1/2}}{N_B}
\bigg[
\sum_{j=1}^{B-1}n_j
\bigg\{
\|\widehat\theta^{(j)}-\theta^*\|_2^2
+\|\widehat\theta^{(j)}-\theta^*\|_2
\|\widehat\theta_e^{\rm ora}-\theta^*\|_2
\bigg\}
+n_B\|\widehat\theta_e^{\rm ora}-\theta^*\|_2^2
\bigg]
\\
&+q_{\mathrm{BC}}\eta_{\mathrm{BC},\infty}
\|\widehat\theta_e^{\rm ora}-\theta^*\|_2,
\end{split}
\label{eq:boxcox-bahadur-remainder}
\end{equation}
where $\mc R_{\mathrm{BC},0}$ and
$\mc R_{\mathrm{BC},a}$ are the remainders in
\eqref{eq:reduced-oracle-bahadur} and
\eqref{eq:full-oracle-bahadur} after the preceding substitutions.
Substitution of \eqref{eq:boxcox-bahadur-initializer-rate} and
\eqref{eq:boxcox-bahadur-oracle-localization} into
\eqref{eq:boxcox-bahadur-remainder} yields
\begin{equation*}
\begin{split}
\sqrt{N_B}\max_{e\in\{0,a\}}
\|\mc R_{\mathrm{BC},e}\|_2
&\lesssim
M_{\mathrm{BC},l}\sqrt{\frac{q_{\mathrm{BC}}}{N_B}}
\left\{
\frac{a_{\mathrm{BC}}}{\kappa^2}
\sum_{j=1}^{B-1}n_j\lambda_j^2
+\frac{a_{\mathrm{BC}}^{1/2}}{\kappa}
\overline r_{\mathrm{BC},B}
\sum_{j=1}^{B-1}n_j\lambda_j
+n_B\overline r_{\mathrm{BC},B}^{\,2}
\right\}
\\
&\qquad
+\sqrt{N_B}\,q_{\mathrm{BC}}
\eta_{\mathrm{BC},\infty}
\overline r_{\mathrm{BC},B}=o(1)
\end{split}
\end{equation*}
by \eqref{eq:boxcox-bahadur-remainder-rate}.  Since
\eqref{eq:boxcox-bahadur-initializer-rate} also holds for $j=B$,
substitution in the left-hand side of
\maineqref{eq:inference-remainder-condition} replaces the two sums above by
$\sum_{j=1}^B$, replaces $M_{\mathrm{BC},l}$ by
$1+M_{\mathrm{BC},l}$, and leaves the $n_B$ term unchanged.  Its last
term is bounded by the one above because
$q_{\mathrm{BC}}\eta_{\mathrm{BC},\infty}\overline r_{\mathrm{BC},B}
\le\sqrt{N_B}q_{\mathrm{BC}}\eta_{\mathrm{BC},\infty}
\overline r_{\mathrm{BC},B}$.  Thus
\eqref{eq:boxcox-bahadur-remainder-rate} also proves
\maineqref{eq:inference-remainder-condition}.  Lemma
\ref{lem:oracle_bahadur} now proves
\eqref{eq:boxcox-reduced-bahadur}--
\eqref{eq:boxcox-full-bahadur}.

Taking a direct differentiation in \eqref{eq:boxcox-score-hessian} gives
\begin{equation}
\nabla_T\ell_{\mathrm{BC}}
(\theta^*;Y_i^{(j)},x_i^{(j)})
=-
\sum_{k=1}^K w_k
\frac{\varphi(\eta_{ik}^{*(j)})}
{\Phi(\eta_{ik}^{*(j)})\Phi(-\eta_{ik}^{*(j)})}
\{D_{ik}^{(j)}-\Phi(\eta_{ik}^{*(j)})\}
z_{ik,T}^{(j)}.
\label{eq:boxcox-bahadur-score-form}
\end{equation}
Since the thresholds are ordered, substitution in
\eqref{eq:boxcox-bahadur-score-covariance} yields
\begin{equation}
\begin{split}
\Kb_0
={}&\frac1{N_B}\sum_{j=1}^B\sum_{i=1}^{n_j}
\mathbb E\left[
\sum_{k=1}^K\sum_{k'=1}^K w_kw_{k'}
\frac{\varphi(\eta_{ik}^{*(j)})
\varphi(\eta_{ik'}^{*(j)})}
{\Phi(\eta_{i,\min(k,k')}^{*(j)})
 \Phi(-\eta_{i,\max(k,k')}^{*(j)})}
z_{ik,T}^{(j)}(z_{ik',T}^{(j)})^\top
\right].
\end{split}
\label{eq:boxcox-bahadur-score-covariance-expanded}
\end{equation}
The expectation includes $X$, and every threshold cross term is present.
For a unit $u$, the Cauchy--Schwarz inequality applied to
\eqref{eq:boxcox-bahadur-score-form} and the information identity for
each threshold yield
\[
u^\top\Kb_0u
\le\mathbb E\sum_{k=1}^K w_k
\frac{\varphi^2(z_k(X)^\top\theta^*)}
{\Phi(z_k(X)^\top\theta^*)\Phi(-z_k(X)^\top\theta^*)}
\{z_{k,T}(X)^\top u\}^2
=u^\top\Jb_0u.
\]
Thus, we get
\begin{equation}
K_l\le\lambda_{\min}(\Kb_0)
\le\lambda_{\max}(\Kb_0)\le J_u=:K_u
\label{eq:boxcox-bahadur-score-upper}
\end{equation}
by \eqref{eq:boxcox-bahadur-localization-and-score}.
We remark that, if $K>1$, \eqref{eq:boxcox-bahadur-score-covariance-expanded}
requires the sandwich covariance $\Kb_0$ rather than its
single-threshold counterpart $\Jb_0$.

Lemma~\ref{lem:V0-conditioning},
\eqref{eq:boxcox-bahadur-hessian-bounds},
\eqref{eq:boxcox-bahadur-localization-and-score}, and
\eqref{eq:boxcox-bahadur-score-upper} yield $\|\Omega_0\|_2\le K_l^{-1/2}$ and $\Omega_0\Kb_0\Omega_0^\top=\Ib_r$.
Equations \eqref{eq:boxcox-design-conditions} and
\eqref{eq:boxcox-bahadur-score-form} give
\begin{equation}
\max_{j,i}\|\Omega_0\nabla_T\ell_{\mathrm{BC}}
(\theta^*;Y_i^{(j)},x_i^{(j)})\|_2
\le K_l^{-1/2}G_{\mathrm{BC}}(B_1q_{\mathrm{BC}})^{1/2}.
\label{eq:boxcox-bahadur-score-envelope}
\end{equation}
Moreover, we get
\begin{equation}
\frac1{N_B}\sum_{j=1}^B\sum_{i=1}^{n_j}
\mathbb E\|\Omega_0\nabla_T\ell_{\mathrm{BC}}
(\theta^*;Y_i^{(j)},x_i^{(j)})\|_2^2
=\operatorname{tr}(\Omega_0\Kb_0\Omega_0^\top)=r.
\label{eq:boxcox-bahadur-score-second-moment}
\end{equation}
By \eqref{eq:boxcox-design-conditions} and
\eqref{eq:boxcox-initializer-constants}, we have
$G_{\mathrm{BC}}=\mathfrak g(-B_1)=O(1)$.
The threshold coordinates yield $c_{\mathrm{BC},B}\ge1$.
Multiplying \eqref{eq:boxcox-bahadur-score-second-moment} by the envelope in \eqref{eq:boxcox-bahadur-score-envelope}, and using
\eqref{eq:boxcox-bahadur-lyapunov}, verifies
Assumption~\mainref{assm:lyapunov}.
The scores indexed by $(j,i)$ are independent and centered.
Thus \citet[Theorem~1.1]{bentkus2005lyapunov} implies, for
$\Zb_r\sim N(\mathbf0_r,\Ib_r)$,
\begin{equation*}
\sup_{\mc B\ {\rm convex}}\left|
\PP\left\{\frac1{\sqrt{N_B}}\sum_{j=1}^B\sum_{i=1}^{n_j}
\Omega_0\nabla_T\ell_{\mathrm{BC}}
(\theta^*;Y_i^{(j)},x_i^{(j)})\in\mc B\right\}
-\PP(\Zb_r\in\mc B)\right|\lesssim
\frac{G_{\mathrm{BC}}q_{\mathrm{BC}}^{1/2}r^{5/4}}
{\sqrt{N_B}},
\end{equation*}
which is $o(1)$.
Finally, we obtain
\begin{equation*}
\Cb_0\theta_T^*-\tb=\hb_B,
\quad
\lambda_{\max}\{(\Cb_0\Cb_0^\top)^{-1}\}
=\lambda_{\max}\{(\Cb\Cb^\top)^{-1}\}
\le C_{\max}^2,
\quad
\|\hb_B\|_2=O\!\left(\sqrt{\frac r{N_B}}\right).
\end{equation*}
\end{proof}

\subsubsection{Chi-square approximation of test statistics}
\label{sec:BC_test_statistic_verification}

Let
$T=\overline{\mc M}_{\mathrm{BC}}$ and, for $e\in\{0,a\}$,
$\widehat{\overline{\mc M}}{}^{(k)}_e
=\mc M_{\mathrm{BC}}\cup\widehat{\mc S}^{(k)}_e$.  Use
$\Jb_0$, $\Cb_0$, $\Kb_0$, and $\Vb_0$ in
\eqref{eq:boxcox-bahadur-main-objects}--
\eqref{eq:boxcox-bahadur-score-covariance}.    
Indeed, \eqref{eq:boxcox-full-bahadur} and
\eqref{eq:boxcox-bahadur-score-covariance} identify the covariance of
the leading score term as
$\Cb_0\Jb_0^{-1}\Kb_0\Jb_0^{-1}\Cb_0^\top=\Vb_0$.

Define
$R_{\mathrm{BC}}=(\widetilde{\Cb}_{\mathrm{BC}}
\widetilde{\Cb}_{\mathrm{BC}}^\top)^{-1/2}
=(\Cb\Cb^\top)^{-1/2}$.  Then
\begin{equation}
R_{\mathrm{BC}}\widetilde{\Cb}_{\mathrm{BC}}
\theta_{\mc M_{\mathrm{BC}}}=R_{\mathrm{BC}}\tb
\quad\Longleftrightarrow\quad
\widetilde{\Cb}_{\mathrm{BC}}
\theta_{\mc M_{\mathrm{BC}}}=\tb,
\qquad
(R_{\mathrm{BC}}\widetilde{\Cb}_{\mathrm{BC}})
(R_{\mathrm{BC}}\widetilde{\Cb}_{\mathrm{BC}})^\top=\Ib_r.
\label{eq:boxcox-row-normalized-constraint}
\end{equation}
Use $(RVR^\top)^{-1}=R^{-\top}V^{-1}R^{-1}$ first with
$V=\Vb_0$ and then with $V=\Cb_0\Jb_0^{-1}\Cb_0^\top$.
Multiplication by $R_{\mathrm{BC}}$ leaves the feasible set, $\Pb_0$,
and both quadratic forms unchanged, and yields
\begin{equation}
\begin{gathered}
\left\|
(R_{\mathrm{BC}}\Vb_0R_{\mathrm{BC}}^\top)^{-1/2}
R_{\mathrm{BC}}\Cb_0\Jb_0^{-1}u
\right\|_2
=\|\Vb_0^{-1/2}\Cb_0\Jb_0^{-1}u\|_2
=\|\Omega_0u\|_2,
\\
N_B(R_{\mathrm{BC}}\hb_B)^\top
(R_{\mathrm{BC}}\Vb_0R_{\mathrm{BC}}^\top)^{-1}
(R_{\mathrm{BC}}\hb_B)
=N_B\hb_B^\top\Vb_0^{-1}\hb_B=e_B.
\end{gathered}
\label{eq:boxcox-row-normalized-identities}
\end{equation}
Assumption~\mainref{assm:local_alternative} implies
$\|R_{\mathrm{BC}}\|_2\le C_{\max}$.
Recall that $\chi^2(r,e_B)$ denotes the noncentral chi-square law with $r$ degrees of freedom and $e_B=N_B\hb_B^\top\Vb_0^{-1}\hb_B$.

\begin{proposition}
\label{prop:boxcox-test-statistic-verification}
Suppose that the conditions of
Propositions~\ref{prop:boxcox-strong-oracle-verification} and
\ref{prop:boxcox-bahadur-verification} hold.  Then
\begin{equation}
\begin{split}
\sqrt r\bigg[
\max_{e\in\{0,a\}}
\left\|\nabla^2_{T,T}\widehat{\mc Q}_B
(\widehat\theta_e^{\rm ora})-\Jb_0\right\|_2
+\left\|\widehat{\Kb}^{(B)}_{T,T}-\Kb_0\right\|_2
\bigg]=o_p(1).
\end{split}
\label{eq:boxcox-feasible-plugin-bound}
\end{equation}
Consequently, for every $k\ge K_{\mathrm{BC}}^{\rm LLA}+1$ and
$T'=T_W^{(k)}$ or $T_S^{(k)}$,
\begin{equation}
\sup_{z\ge0}
\left|
\PP(T'\le z)-\PP\{\chi^2(r,e_B)\le z\}
\right|\longrightarrow0.
\label{eq:boxcox-test-conditional-limit}
\end{equation}
Under $\hb_B=0$, the rejection probability at
$\chi_\alpha^2(r)$ converges to $\alpha$.
If, in addition,
\begin{equation*}
\|\hb_B\|_2
=O\!\left[
\left\{\frac{\min(q_{\mathrm{BC}}-r,r)}{N_B}\right\}^{1/2}
\right],
\end{equation*}
then the local-power conclusion of Corollary~\mainref{cor:test} holds.
\end{proposition}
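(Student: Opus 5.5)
The plan mirrors the GLM and Tobit verifications (Propositions~\ref{prop:glm-test-statistic} and \ref{prop:tobit-test-statistic-verification}). The main work is \eqref{eq:boxcox-feasible-plugin-bound}. Once it holds, the distributional conclusion \eqref{eq:boxcox-test-conditional-limit} follows from the proofs of Theorems~\mainref{thm:oracle.test} and \mainref{thm:test} under the substitution preceding \eqref{eq:boxcox-online-initializer}, after the row normalization \eqref{eq:boxcox-row-normalized-constraint}. That normalization makes $\lambda_{\max}(\Cb_0\Cb_0^\top)=1$ and, by \eqref{eq:boxcox-row-normalized-identities}, changes neither the statistics, nor $e_B$, nor the Lyapunov quantity. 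The other inputs needed there are supplied by earlier results:
\begin{itemize}
\item[(i)] Proposition~\ref{prop:boxcox-strong-oracle-verification} gives exact oracle recovery, hence \eqref{eq:feasible-oracle-plugin-identification}.
\item[(ii)] Proposition~\ref{prop:boxcox-bahadur-verification} gives the remainder condition \maineqref{eq:inference-remainder-condition}, the bounds $K_l\le\lambda(\Kb_0)\le K_u$, and the Lyapunov condition.
\item[(iii)] Their failure probabilities are $o(1)$ by \eqref{eq:boxcox-bahadur-probability}.
\end{itemize}
Size and local power then follow as in Corollary~\mainref{cor:test}. The rate $\|\hb_B\|_2=O(\sqrt{r/N_B})$, which is implied by the stated $\min(q_{\mathrm{BC}}-r,r)$ rate, gives $e_B=O(r)$.

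For the Hessian part, I decompose $\nabla^2_{T,T}\widehat{\mc Q}_B(\widehat\theta_e^{\rm ora})-\Jb_0$ as in the Tobit proof into three terms:
\begin{itemize}
\item[(a)] $(\overline H^*_{\mathrm{BC},B}-J_{\mathrm{BC},B})_{T,T}$, whose operator norm is at most $q_{\mathrm{BC}}\eta_{\mathrm{BC},\infty}$ by \eqref{eq:boxcox-terminal-concentration-event};
\item[(b)] the historical differences $\nabla^2\widehat{\mc L}^{(j)}(\widehat\theta^{(j)})-\nabla^2\widehat{\mc L}^{(j)}(\theta^*)$;
\item[(c)] the current-batch difference at $\widehat\theta_e^{\rm ora}$.
\end{itemize}
For (b) and (c), the RLH bound in \eqref{eq:boxcox-terminal-loss-events}, together with $q_{\mathrm{BC}}^{1/2}\le l$, gives operator-norm bounds $q_{\mathrm{BC}}^{1/2}M_{\mathrm{BC},l}\|\cdot\|_2$. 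Substituting \eqref{eq:boxcox-bahadur-initializer-rate} and \eqref{eq:boxcox-bahadur-oracle-localization} yields the analogue of \eqref{eq:tobit-test-hessian-bound}. I then use $r\le q_{\mathrm{BC}}$ and $\sqrt{N_B}\,\overline r_{\mathrm{BC},B}\gtrsim q_{\mathrm{BC}}^{1/2}$; the latter holds because $\chi_{\mathrm{BC}}(q_{\mathrm{BC}})\gtrsim\sqrt{q_{\mathrm{BC}}/N_B}$ and $J_u$ is bounded below. This bounds $\sqrt r\,\tau_J$ by the left side of \eqref{eq:boxcox-bahadur-remainder-rate}, which is $o(1)$.

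For the score covariance, I set $\widetilde{\Kb}=N_B^{-1}\sum_{j,i}\xi_{i,T}^{(j)}\xi_{i,T}^{(j)\top}$ with $\xi$ the true scores. Its mean is $\Kb_0$ by \eqref{eq:boxcox-bahadur-score-covariance}. The entries of $\xi_T$ are bounded by $G_{\mathrm{BC}}B_1^{1/2}$, so a Frobenius second-moment bound gives $\|\widetilde{\Kb}-\Kb_0\|_2=O_p(q_{\mathrm{BC}}/\sqrt{N_B})$. Next comes the fitted-score difference $d_i^{(j)}=\nabla_T\ell_{\mathrm{BC}}(\widehat\theta^{(j)})-\nabla_T\ell_{\mathrm{BC}}(\theta^*)$. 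Since $|L_{\mathrm P}''|\le1$,
\[
\|d_i^{(j)}\|_2^2\le B_1q_{\mathrm{BC}}\sum_k w_k\{(z^{(j)}_{ik})^\top\Delta_j\}^2 .
\]
The quadratic design bound preceding \eqref{eq:boxcox-initializer-rlh} then gives
\[
\operatorname{tr}\mathbf D_{\mathrm{BC},B}=O_p\Big(q_{\mathrm{BC}}a_{\mathrm{BC}}\kappa^{-2}N_B^{-1}\sum_j n_j\lambda_j^2\Big).
\]
Combining the pieces via \eqref{eq:tobit-fitted-score-covariance} yields $\sqrt r\,\tau_K=o_p(1)$.

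The main obstacle is this last step. Unlike the GLM and Tobit cases, no separate rate condition such as \eqref{eq:tobit-test-statistic-rate} is assumed here, so both $\sqrt r\,q_{\mathrm{BC}}/\sqrt{N_B}=o(1)$ and the $\mathbf D$-term rate must be extracted from \ref{BoxCoxassm:bahadur}. I would try to obtain the first from \eqref{eq:boxcox-bahadur-lyapunov}, using $c_{\mathrm{BC},B}\ge1$, which gives $r^{5/2}q_{\mathrm{BC}}=o(N_B)$; if that is insufficient, I would replace the Frobenius bound by a matrix-Bernstein bound with bounded summands, of order $\sqrt{q_{\mathrm{BC}}\log q_{\mathrm{BC}}/N_B}$. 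For the second, I would use the first term of \eqref{eq:boxcox-bahadur-remainder-rate}. It bounds $\sqrt{q_{\mathrm{BC}}/N_B}\,a_{\mathrm{BC}}\kappa^{-2}\sum_j n_j\lambda_j^2$, so with $r\le q_{\mathrm{BC}}$, $r\cdot\operatorname{tr}\mathbf D$ is bounded by that term times $\sqrt{q_{\mathrm{BC}}/N_B}\,\sqrt{q_{\mathrm{BC}}}\cdot r/q_{\mathrm{BC}}\lesssim$ that term times $\sqrt{r q_{\mathrm{BC}}/N_B}$, which is $o(1)$.
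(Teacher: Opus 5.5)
Your outline tracks the paper's proof closely: the same three-term Hessian decomposition, the same lower bound $\sqrt{N_B}\,\overline r_{\mathrm{BC},B}\gtrsim q_{\mathrm{BC}}^{1/2}$, the same split of the score covariance into the true-score average and $\mathbf D_{\mathrm{BC},B}$ combined by Cauchy--Schwarz, and the same row normalization before invoking the main theorems. However, the one nonroutine step, which you correctly flag, is left open, and neither suggested fix works under the stated assumptions.

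From your Frobenius bound you need $\sqrt r\,q_{\mathrm{BC}}/\sqrt{N_B}\to0$. Condition \eqref{eq:boxcox-bahadur-lyapunov} gives only $r^{5/2}q_{\mathrm{BC}}=o(N_B)$. That does not imply $r\,q_{\mathrm{BC}}^2=o(N_B)$ once $q_{\mathrm{BC}}\gg r^{3/2}$: for bounded $r$ it yields $q_{\mathrm{BC}}=o(N_B)$, whereas you need $q_{\mathrm{BC}}^2=o(N_B)$. The matrix-Bernstein alternative needs $r\,q_{\mathrm{BC}}\log q_{\mathrm{BC}}=o(N_B)$, which again does not follow from \eqref{eq:boxcox-bahadur-lyapunov} alone.

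The missing observation is that the last term of \eqref{eq:boxcox-bahadur-remainder-rate} already controls this quantity. Since $B_1\ge1$ and $c_{\mathrm{BC},B}\le B_1^{1/2}$, the definition of $\eta_{\mathrm{BC},\infty}$ gives $\eta_{\mathrm{BC},\infty}\gtrsim c_{\mathrm{BC},B}/\sqrt{N_B}$. Combine this with the lower bound $\overline r_{\mathrm{BC},B}\gtrsim\sqrt{q_{\mathrm{BC}}/N_B}\ge\sqrt{r/N_B}$, which you already used for the Hessian part (it also needs $\kappa\le J_l$ from \eqref{eq:boxcox-terminal-kappa}). Then
\[
\frac{c_{\mathrm{BC},B}\,q_{\mathrm{BC}}\sqrt r}{\sqrt{N_B}}
\lesssim
\sqrt{N_B}\,q_{\mathrm{BC}}\,\eta_{\mathrm{BC},\infty}\,\overline r_{\mathrm{BC},B}=o(1),
\]
and $c_{\mathrm{BC},B}\ge1$ finishes the step; this is exactly how the paper closes it.

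Separately, fix the arithmetic in the $\mathbf D_{\mathrm{BC},B}$ step. Set $X=\sqrt{q_{\mathrm{BC}}/N_B}\,a_{\mathrm{BC}}\kappa^{-2}\sum_{j}n_j\lambda_j^2=o(1)$. Then $r\operatorname{tr}\mathbf D_{\mathrm{BC},B}\lesssim r\sqrt{q_{\mathrm{BC}}/N_B}\,X$, not $\sqrt{r q_{\mathrm{BC}}/N_B}\,X$. The corrected multiplier is still $o(1)$, and this is where \eqref{eq:boxcox-bahadur-lyapunov} is the right tool, because $r^2q_{\mathrm{BC}}\le r^{5/2}q_{\mathrm{BC}}=o(N_B)$.
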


\begin{proof}
For each $k\le K$, the threshold coordinate $p+k$ satisfies
$|z_{ik,p+k}^{(j)}|=1$ in the $k$th term.  Since
$\sum_{k=1}^Kw_k=1$, $w_k>0$, and $K$ is fixed,
\eqref{eq:boxcox-design-conditions} and
\eqref{eq:boxcox-terminal-design-bounds} yield
\begin{equation*}
c_{\mathrm{BC},B}\ge1,\qquad
B_1\ge\max_{k\le K}w_k\ge K^{-1},\qquad
\overline B_{2,\mathrm{BC}}\ge\max_{k\le K}w_k\ge K^{-1}.
\end{equation*}
Consequently, \eqref{eq:boxcox-terminal-concentration-rates} and
\eqref{eq:boxcox-terminal-strong-oracle-rates} imply
\begin{equation}
\eta_{\mathrm{BC},\infty}
\gtrsim\frac{c_{\mathrm{BC},B}}{\sqrt{N_B}},
\qquad
\overline r_{\mathrm{BC},B}
\ge\kappa^{-1}\chi_{\mathrm{BC}}(q_{\mathrm{BC}})
\gtrsim\sqrt{\frac{q_{\mathrm{BC}}}{N_B}}.
\label{eq:boxcox-test-lower-rates}
\end{equation}

For $e\in\{0,a\}$, the definition of $\widehat{\mc Q}_B$ gives
\begin{equation*}
\begin{split}
\nabla^2_{T,T}\widehat{\mc Q}_B
(\widehat\theta_e^{\rm ora})-\Jb_0
={}&(\overline H_{\mathrm{BC},B}^*-J_{\mathrm{BC},B})_{T,T}
+\frac1{N_B}\bigg[
\sum_{j=1}^{B-1}n_j\left\{
\nabla^2\widehat{\mc L}^{(j)}(\widehat\theta^{(j)})
-\nabla^2\widehat{\mc L}^{(j)}(\theta^*)
\right\}_{T,T}
\\
&\hspace{27mm}+n_B\left\{
\nabla^2\widehat{\mc L}^{(B)}(\widehat\theta_e^{\rm ora})
-\nabla^2\widehat{\mc L}^{(B)}(\theta^*)
\right\}_{T,T}\bigg].
\end{split}
\end{equation*}
The first inequality in
\eqref{eq:boxcox-terminal-concentration-event}, the Hessian
inequality in \eqref{eq:boxcox-terminal-loss-events},
\eqref{eq:boxcox-bahadur-initializer-rate}, and
\eqref{eq:boxcox-bahadur-oracle-localization} yield
\begin{equation}
\begin{split}
&\sqrt r\max_{e\in\{0,a\}}
\left\|\nabla^2_{T,T}\widehat{\mc Q}_B
(\widehat\theta_e^{\rm ora})-\Jb_0\right\|_2
\\
&\quad\lesssim\sqrt r\left[
q_{\mathrm{BC}}\eta_{\mathrm{BC},\infty}
+\frac{M_{\mathrm{BC},l}q_{\mathrm{BC}}^{1/2}}{N_B}
\left\{
\frac{a_{\mathrm{BC}}^{1/2}}{\kappa}
\sum_{j=1}^{B-1}n_j\lambda_j
+n_B\overline r_{\mathrm{BC},B}
\right\}\right]
\\
&\quad\lesssim
\sqrt{N_B}\,q_{\mathrm{BC}}
\eta_{\mathrm{BC},\infty}\overline r_{\mathrm{BC},B}
+M_{\mathrm{BC},l}\sqrt{\frac{q_{\mathrm{BC}}}{N_B}}
\left\{
\frac{a_{\mathrm{BC}}^{1/2}}{\kappa}
\overline r_{\mathrm{BC},B}
\sum_{j=1}^{B-1}n_j\lambda_j
+n_B\overline r_{\mathrm{BC},B}^{\,2}
\right\}
=o_p(1).
\end{split}
\label{eq:boxcox-test-hessian-rate}
\end{equation}
The second inequality uses
$\sqrt r\le q_{\mathrm{BC}}^{1/2}\lesssim
\sqrt{N_B}\,\overline r_{\mathrm{BC},B}$ from
\eqref{eq:boxcox-test-lower-rates} and
\eqref{eq:boxcox-terminal-kappa}; the last equality follows from
\eqref{eq:boxcox-bahadur-remainder-rate}.

On the event in
Proposition~\ref{prop:boxcox-strong-oracle-verification}, equations
\eqref{eq:boxcox-terminal-strong-oracle-recovery} and
\eqref{eq:feasible-support-identification} give, for
$k\ge K_{\mathrm{BC}}^{\rm LLA}+1$ and $e\in\{0,a\}$,
\begin{equation}
\begin{gathered}
\widehat{\overline{\mc M}}{}^{(k)}_e=T,
\qquad
\widehat{\Cb}^{(k)}_e=\Cb_0,
\\
\widehat{\Jb}^{(k)}_e
=\nabla^2_{T,T}\widehat{\mc Q}_B
(\widehat\theta_e^{\rm ora}),
\qquad
\widehat{\Kb}^{(k)}_0=\widehat{\Kb}^{(k)}_a
=\widehat{\Kb}^{(B)}_{T,T}.
\end{gathered}
\label{eq:boxcox-feasible-common-hessian}
\end{equation}

For the second norm in
\eqref{eq:boxcox-feasible-plugin-bound}, let
\begin{equation*}
s_i^{(j)}(\theta)
=\nabla_T\ell_{\mathrm{BC}}
(\theta;Y_i^{(j)},x_i^{(j)}),
\qquad
\widetilde{\Kb}_{\mathrm{BC},B}
=\frac1{N_B}\sum_{j=1}^B\sum_{i=1}^{n_j}
s_i^{(j)}(\theta^*)s_i^{(j)}(\theta^*)^\top.
\end{equation*}
Equations \eqref{eq:boxcox-score-centering} and
\eqref{eq:boxcox-bahadur-score-covariance} give
$\mathbb E\widetilde{\Kb}_{\mathrm{BC},B}=\Kb_0$.
The normal conditional-expectation identity used in
\eqref{eq:boxcox-score-centering}, Jensen's inequality, and
\eqref{eq:boxcox-design-conditions} imply
\[
\mathbb E\|s_i^{(j)}(\theta^*)\|_2^4
\le3\mathbb E\sum_{k=1}^Kw_k\|z_{ik,T}^{(j)}\|_2^4
\le3B_1^2q_{\mathrm{BC}}^2.
\]
Independence and
$\|ss^\top\|_{\rm F}^2=\|s\|_2^4$ therefore yield
\begin{equation}
\begin{split}
\mathbb E\|\widetilde{\Kb}_{\mathrm{BC},B}-\Kb_0\|_{\rm F}^2
\le\frac{3B_1^2q_{\mathrm{BC}}^2}{N_B},\qquad
\|\widetilde{\Kb}_{\mathrm{BC},B}-\Kb_0\|_2
=O_p\left(\frac{q_{\mathrm{BC}}}{\sqrt{N_B}}\right).
\end{split}
\label{eq:boxcox-truth-score-covariance-rate}
\end{equation}
Since $r\le q_{\mathrm{BC}}$, the last term in
\eqref{eq:boxcox-bahadur-remainder-rate} yields
\begin{equation}
\frac{c_{\mathrm{BC},B}q_{\mathrm{BC}}\sqrt r}{\sqrt{N_B}}
\lesssim
\sqrt{N_B}\,q_{\mathrm{BC}}
\eta_{\mathrm{BC},\infty}\overline r_{\mathrm{BC},B}
=o(1).
\label{eq:boxcox-test-truth-score-rate}
\end{equation}

For $j\le B$, set
$\Delta_j=\widehat\theta^{(j)}-\theta^*$ and
\begin{equation*}
\begin{split}
d_i^{(j)}
=s_i^{(j)}(\widehat\theta^{(j)})-s_i^{(j)}(\theta^*),
\qquad
\mathbf D_{\mathrm{BC},B}
=\frac1{N_B}\sum_{j=1}^B\sum_{i=1}^{n_j}
d_i^{(j)}(d_i^{(j)})^\top.
\end{split}
\end{equation*}
By \eqref{eq:boxcox-score-hessian}, the Cauchy--Schwarz inequality in the sum over $k$
and the integral over $t$, $0<\mathfrak h<1$, and
$\|z_{ik,T}^{(j)}\|_2^2\le B_1q_{\mathrm{BC}}$ give
\begin{equation}
\|d_i^{(j)}\|_2^2
\le B_1q_{\mathrm{BC}}\int_0^1
\Delta_j^\top\nabla^2\ell_{\mathrm{BC}}
(\theta^*+t\Delta_j;Y_i^{(j)},x_i^{(j)})\Delta_j\,dt.
\label{eq:boxcox-fitted-score-pointwise}
\end{equation}
On the initializer event in \eqref{eq:boxcox-score-event},
$\|\Delta_j\|_1\le4a_{\mathrm{BC}}^{1/2}\|\Delta_j\|_2$ for every
$j\le B$.  The quadratic-form bound proved before
\eqref{eq:boxcox-initializer-rlh},
\eqref{eq:boxcox-initializer-constants}, and $0<\mathfrak h<1$ yield,
for $0\le t\le1$,
\begin{align*}
\Delta_j^\top\nabla^2\widehat{\mc L}^{(j)}
(\theta^*+t\Delta_j)\Delta_j
\le\frac1{n_j}\sum_{i=1}^{n_j}\sum_{k=1}^K
w_k\{(z_{ik}^{(j)})^\top\Delta_j\}^2\leq\frac{B_3}{B_1^{1/2}}\|\Delta_j\|_2^2.
\end{align*}
Substitution in \eqref{eq:boxcox-fitted-score-pointwise} and
\eqref{eq:boxcox-initializer-rates} yields
\begin{equation}
\begin{split}
\|\mathbf D_{\mathrm{BC},B}\|_2
\le\operatorname{tr}(\mathbf D_{\mathrm{BC},B})
\le\frac{B_1^{1/2}B_3q_{\mathrm{BC}}}{N_B}
\sum_{j=1}^Bn_j\|\Delta_j\|_2^2
=O_p\!\left\{
\frac{q_{\mathrm{BC}}a_{\mathrm{BC}}}{\kappa^2N_B}
\sum_{j=1}^Bn_j\lambda_j^2
\right\}.
\end{split}
\label{eq:boxcox-fitted-score-quadratic}
\end{equation}
Consequently, we obtain
\begin{equation}
\begin{split}
r\|\mathbf D_{\mathrm{BC},B}\|_2
&=O_p\!\left[
\left\{r\sqrt{\frac{q_{\mathrm{BC}}}{N_B}}\right\}
\left\{\sqrt{\frac{q_{\mathrm{BC}}}{N_B}}
\frac{a_{\mathrm{BC}}}{\kappa^2}
\sum_{j=1}^Bn_j\lambda_j^2\right\}
\right]=o_p(1).
\end{split}
\label{eq:boxcox-fitted-score-quadratic-rate}
\end{equation}
The first factor is $o(1)$ by
\eqref{eq:boxcox-bahadur-lyapunov} and $c_{\mathrm{BC},B}\ge1$;
the second is $o(1)$ by \eqref{eq:boxcox-bahadur-remainder-rate}.

For every unit $v\in\mathbb R^{q_{\mathrm{BC}}}$,
the Cauchy--Schwarz inequality applied to
$v^\top s_i^{(j)}(\theta^*)$ and $v^\top d_i^{(j)}$ yields
\begin{equation}
\left\|\widehat{\Kb}^{(B)}_{T,T}
-\widetilde{\Kb}_{\mathrm{BC},B}\right\|_2
\le2\|\widetilde{\Kb}_{\mathrm{BC},B}\|_2^{1/2}
\|\mathbf D_{\mathrm{BC},B}\|_2^{1/2}
+\|\mathbf D_{\mathrm{BC},B}\|_2.
\label{eq:boxcox-fitted-score-covariance}
\end{equation}
Equations \eqref{eq:boxcox-bahadur-score-upper},
\eqref{eq:boxcox-truth-score-covariance-rate}, and
\eqref{eq:boxcox-test-truth-score-rate} give
$\|\widetilde{\Kb}_{\mathrm{BC},B}\|_2=O_p(1)$ and
$\sqrt r\|\widetilde{\Kb}_{\mathrm{BC},B}-\Kb_0\|_2=o_p(1)$.
Together with
\eqref{eq:boxcox-fitted-score-quadratic-rate}--
\eqref{eq:boxcox-fitted-score-covariance}, these inequalities yield
\begin{equation}
\begin{split}
\sqrt r\|\widehat{\Kb}^{(B)}_{T,T}-\Kb_0\|_2
\le{}&
\sqrt r\|\widetilde{\Kb}_{\mathrm{BC},B}-\Kb_0\|_2
\\
&+2\|\widetilde{\Kb}_{\mathrm{BC},B}\|_2^{1/2}
\{r\|\mathbf D_{\mathrm{BC},B}\|_2\}^{1/2}
+\sqrt r\|\mathbf D_{\mathrm{BC},B}\|_2
=o_p(1).
\end{split}
\label{eq:boxcox-test-score-covariance-rate}
\end{equation}

Equations \eqref{eq:boxcox-test-hessian-rate} and
\eqref{eq:boxcox-test-score-covariance-rate} prove
\eqref{eq:boxcox-feasible-plugin-bound}, hence
\maineqref{eq:feasible-plugin}.  Apply the proofs of
Theorems~\mainref{thm:oracle.test}--\mainref{thm:test} with the surrogate
curvature in \eqref{eq:boxcox-terminal-loss-events}, the localization and
Bahadur representation argument above, and
\eqref{eq:boxcox-row-normalized-constraint}--
\eqref{eq:boxcox-row-normalized-identities}.
The identities in \eqref{eq:boxcox-feasible-common-hessian} and
\eqref{eq:boxcox-feasible-plugin-bound} then provide
\eqref{eq:boxcox-test-conditional-limit} and conclude its null-size and local-power claims.
\end{proof}

By \eqref{eq:boxcox-initializer-tuning} and
Lemma~\ref{lem:batch.size}, note that
\[
\frac1{N_B}\sum_{j=1}^Bn_j\lambda_j^2
\lesssim\frac{\log d_{\mathrm{BC}}}{N_B}
\left\{1+\log\left(\frac{N_B}{n_1}\right)\right\}.
\]
We remark that equations
\eqref{eq:boxcox-truth-score-covariance-rate},
\eqref{eq:boxcox-fitted-score-quadratic}, and
\eqref{eq:boxcox-fitted-score-covariance} show that the covariance
bound \eqref{eq:boxcox-test-score-covariance-rate} is implied by
\begin{equation*}
\sqrt r\left[
\frac{q_{\mathrm{BC}}}{\sqrt{N_B}}
+
\left\{
\frac{q_{\mathrm{BC}}a_{\mathrm{BC}}\log d_{\mathrm{BC}}}
{\kappa^2N_B}
\left\{1+\log\left(\frac{N_B}{n_1}\right)\right\}
\right\}^{1/2}
\right]=o(1).
\end{equation*}

\subsubsection{Additional simulation III}
\label{sec:BC_additional_simulation_III}

In this section, we examine the finite-sample inference performance of the renewable Wald and score
tests in the nonparametric Box--Cox model. 
We generate covariate vectors
$X_i\sim N(\mathbf{0}_p,\Sigma)$ with
$\Sigma_{jk}=0.5^{|j-k|}$ and responses satisfying
\[
Z_i=1+X_i^\T\betab^*+\epsilon_i,\qquad
Y_i=\frac12\sqrt[3]{2Z_i-1}+\frac12,\qquad
\epsilon_i\sim N(0,1),
\]
where the errors are independent of the covariates. This corresponds to the increasing transformation
$g_1(y)=\{(2y-1)^3+1\}/2$; the intercept in $Z_i$ is absorbed into the
unpenalized threshold parameters in the composite-probit formulation of \citet{zhouzou2024boxcox,zhouzou2024testing}.
We set $p=900$, $n_b=150$, and $B=8$, yielding $N_B=1200$, and take
\[
\betab^*=(h,0,0,1+h,1,\mathbf{0}_{p-5}^\T)^\T,
\qquad h\in\{0,0.025,0.05,0.1,0.2\}.
\]
For each value of $h$, we use 500 independent replications and test the
following five hypotheses:
\begin{align*}
H_{0,\mathrm{BC}}^{(1)} &: \ \beta_1^*=0, &
H_{0,\mathrm{BC}}^{(2)} &: \ \beta_4^*-\beta_5^*=0,\\
H_{0,\mathrm{BC}}^{(3)} &: \ (\beta_1^*,\beta_8^*,\beta_9^*)^\T=\mathbf{0}_3, &
H_{0,\mathrm{BC}}^{(4)} &: \ \beta_1^*=0,\ \beta_4^*-\beta_5^*=0,\\
H_{0,\mathrm{BC}}^{(5)} &: \ (\beta_1^*,\beta_8^*,\beta_9^*)^\T=\mathbf{0}_3,
\ \beta_4^*-\beta_5^*=0. &&
\end{align*}
Again, the corresponding degrees of freedom are $1,1,3,2$, and $4$.
The methods, tuning procedures, LLA reweighting mechanism, competitor setups, and nominal level are those in Section~\mainref{sec:numerical}.

We use $K=20$ thresholds at probability levels equally spaced between
$0.05$ and $0.95$, with equal composite-likelihood weights.
The online procedures estimate these thresholds from the empirical
quantiles of the first batch and keep them throughout the stream.
The offline benchmark instead uses empirical quantiles of the pooled
sample. 

The four renewable procedures and the pooled benchmark use the historical
estimation strategies, tuning procedures, and LLA reweighting counts in
Section~\mainref{sec:numerical}, applied to the
composite-probit loss.
All tests use the nominal level $\alpha=0.05$.

Table~\ref{tab:BC-additional-III} reports the empirical rejection
probabilities. With LLA $\to$ LLA estimation, the renewable Wald and score tests have null rejection probabilities close to the nominal level
under both tuning rules. Their powers are generally close to those of the
pooled benchmark and increase with $h$.

With Lasso $\to$ LLA estimation, CV/GIC tuning yields elevated null rejection probabilities for the equality and joint hypotheses. The 1-SE
rule reduces this discrepancy in several cases, although $H_{0,\mathrm{BC}}^{(2)}$, $H_{0,\mathrm{BC}}^{(4)}$, and $H_{0,\mathrm{BC}}^{(5)}$ still exhibit overrejection. In this setting, using LLA throughout the stream improves null calibration while preserving power broadly comparable to the offline benchmark procedure.

\begin{table}[p]
\centering\footnotesize
\setlength{\tabcolsep}{6pt}
\renewcommand{\arraystretch}{1.02}
\caption{Empirical rejection probabilities for the nonparametric Box--Cox model with Gaussian covariates, $p=900$, $n_b=150$, $B=8$, and 500 independent replications per value of $h$.}
\label{tab:BC-additional-III}
\begin{tabular}{llr*{8}{r}}
\toprule
 & & & \multicolumn{4}{c}{1-SE} & \multicolumn{4}{c}{CV/GIC}\\
\cmidrule(lr){4-7}\cmidrule(lr){8-11}
 & & & \multicolumn{2}{c}{Lasso $\to$ LLA} & \multicolumn{2}{c}{LLA $\to$ LLA} & \multicolumn{2}{c}{Lasso $\to$ LLA} & \multicolumn{2}{c}{LLA $\to$ LLA}\\
\cmidrule(lr){4-5}\cmidrule(lr){6-7}\cmidrule(lr){8-9}\cmidrule(lr){10-11}
Hypothesis & $h$ & {\tt Offline} & $T_W$ & $T_S$ & $T_W$ & $T_S$ & $T_W$ & $T_S$ & $T_W$ & $T_S$\\
\midrule
\multirow{6}{*}{$H_{0,\mathrm{BC}}^{(1)}$} & 0.000 & 0.034 & 0.042 & 0.036 & 0.044 & 0.044 & 0.052 & 0.052 & 0.044 & 0.044\\
 & 0.025 & 0.134 & 0.128 & 0.112 & 0.126 & 0.124 & 0.172 & 0.172 & 0.134 & 0.134\\
 & 0.050 & 0.416 & 0.430 & 0.404 & 0.422 & 0.422 & 0.496 & 0.496 & 0.428 & 0.428\\
 & 0.100 & 0.900 & 0.898 & 0.878 & 0.898 & 0.898 & 0.928 & 0.928 & 0.902 & 0.902\\
 & 0.200 & 1.000 & 1.000 & 1.000 & 1.000 & 1.000 & 1.000 & 1.000 & 1.000 & 1.000\\
\addlinespace
\multirow{6}{*}{$H_{0,\mathrm{BC}}^{(2)}$} & 0.000 & 0.048 & 0.092 & 0.086 & 0.058 & 0.058 & 0.094 & 0.094 & 0.056 & 0.056\\
 & 0.025 & 0.082 & 0.126 & 0.122 & 0.080 & 0.080 & 0.130 & 0.132 & 0.086 & 0.086\\
 & 0.050 & 0.146 & 0.188 & 0.188 & 0.142 & 0.142 & 0.204 & 0.202 & 0.144 & 0.144\\
 & 0.100 & 0.400 & 0.504 & 0.446 & 0.414 & 0.414 & 0.542 & 0.542 & 0.426 & 0.426\\
 & 0.200 & 0.938 & 0.924 & 0.904 & 0.932 & 0.914 & 0.958 & 0.958 & 0.936 & 0.936\\
\addlinespace
\multirow{6}{*}{$H_{0,\mathrm{BC}}^{(3)}$} & 0.000 & 0.050 & 0.048 & 0.048 & 0.046 & 0.046 & 0.094 & 0.094 & 0.056 & 0.056\\
 & 0.025 & 0.104 & 0.092 & 0.086 & 0.088 & 0.088 & 0.160 & 0.160 & 0.094 & 0.094\\
 & 0.050 & 0.292 & 0.260 & 0.244 & 0.258 & 0.260 & 0.380 & 0.380 & 0.276 & 0.276\\
 & 0.100 & 0.798 & 0.794 & 0.752 & 0.780 & 0.780 & 0.860 & 0.860 & 0.790 & 0.790\\
 & 0.200 & 1.000 & 1.000 & 1.000 & 1.000 & 1.000 & 1.000 & 1.000 & 1.000 & 1.000\\
\addlinespace
\multirow{6}{*}{$H_{0,\mathrm{BC}}^{(4)}$} & 0.000 & 0.054 & 0.080 & 0.070 & 0.050 & 0.050 & 0.102 & 0.102 & 0.048 & 0.048\\
 & 0.025 & 0.120 & 0.158 & 0.142 & 0.112 & 0.112 & 0.204 & 0.204 & 0.116 & 0.116\\
 & 0.050 & 0.402 & 0.426 & 0.400 & 0.366 & 0.366 & 0.514 & 0.514 & 0.386 & 0.386\\
 & 0.100 & 0.926 & 0.938 & 0.908 & 0.934 & 0.934 & 0.958 & 0.958 & 0.934 & 0.934\\
 & 0.200 & 1.000 & 1.000 & 1.000 & 1.000 & 1.000 & 1.000 & 1.000 & 1.000 & 1.000\\
\addlinespace
\multirow{6}{*}{$H_{0,\mathrm{BC}}^{(5)}$} & 0.000 & 0.062 & 0.072 & 0.056 & 0.058 & 0.056 & 0.106 & 0.106 & 0.064 & 0.064\\
 & 0.025 & 0.110 & 0.134 & 0.126 & 0.104 & 0.104 & 0.188 & 0.188 & 0.116 & 0.116\\
 & 0.050 & 0.326 & 0.358 & 0.324 & 0.308 & 0.308 & 0.416 & 0.416 & 0.326 & 0.328\\
 & 0.100 & 0.870 & 0.898 & 0.848 & 0.868 & 0.868 & 0.936 & 0.936 & 0.872 & 0.872\\
 & 0.200 & 1.000 & 1.000 & 1.000 & 1.000 & 1.000 & 1.000 & 1.000 & 1.000 & 1.000\\
\bottomrule
\end{tabular}
\end{table}

  \clearpage
\fi
\bibliography{allref}

@article{yang2026communication,
  title={Communication-efficient and distributed-oracle estimation for high-dimensional quantile regression},
  author={Yang, Songshan and Gu, Yifan and Yang, Hanfang and He, Xuming},
  journal={The Annals of Statistics},
  volume={54},
  number={1},
  pages={442--465},
  year={2026},
  publisher={Institute of Mathematical Statistics}
}

@article{vdGeer2014,
author = {Sara van de Geer and Peter B{\"u}hlmann and Ya’acov Ritov and Ruben Dezeure},
title = {{On asymptotically optimal confidence regions and tests for high-dimensional models}},
volume = {42},
journal = {The Annals of Statistics},
number = {3},
publisher = {Institute of Mathematical Statistics},
pages = {1166 -- 1202},
year = {2014},
doi = {10.1214/14-AOS1221},
URL = {https://doi.org/10.1214/14-AOS1221}
}

@article{zhang2014confidence,
  title={Confidence intervals for low dimensional parameters in high dimensional linear models},
  author={Zhang, Cun-Hui and Zhang, Stephanie S},
  journal={Journal of the Royal Statistical Society Series B: Statistical Methodology},
  volume={76},
  number={1},
  pages={217--242},
  year={2014},
  publisher={Oxford University Press}
}

@article{han2026adaptive,
  title={Adaptive debiased lasso in high-dimensional generalized linear models with streaming data},
  author={Han, Ruijian and Luo, Lan and Luo, Yuanhang and Lin, Yuanyuan and Huang, Jian},
  journal={Journal of the American Statistical Association},
  pages={1--11},
  year={2026},
  publisher={Taylor \& Francis}
}

@article{zhang2026renewable,
  title={Renewable $\ell_1$-regularized linear support vector machine with high-dimensional streaming data},
  author={Zhang, Na and Xie, Jinhan and Yan, Xiaodong and Jiang, Bei and Li, Ting and Kong, Linglong},
  journal={Journal of Computational and Graphical Statistics},
  volume={35},
  pages={1177--1189},
  year={2026},
  publisher={Taylor \& Francis}
}

@article{jiang2022renewable,
  title={Renewable quantile regression for streaming data sets},
  author={Jiang, Rong and Yu, Keming},
  journal={Neurocomputing},
  volume={508},
  pages={208--224},
  year={2022},
  publisher={Elsevier}
}

@article{schifano2016online,
  title={Online updating of statistical inference in the big data setting},
  author={Schifano, Elizabeth D and Wu, Jing and Wang, Chun and Yan, Jun and Chen, Ming-Hui},
  journal={Technometrics},
  volume={58},
  number={3},
  pages={393--403},
  year={2016},
  publisher={Taylor \& Francis}
}

@article{gao2026renewable,
  title={Renewable Lasso without Batch-Number Constraints: A Gradient-Enhanced Approach},
  author={Gao, Junzhuo and Peng, Ling and Guo, Xu and Lian, Heng},
  journal={arXiv preprint arXiv:2606.11738},
  year={2026}
}

@article{olsen1978note,
  title={Note on the uniqueness of the maximum likelihood estimator for the Tobit model},
  author={Olsen, Randall J},
  journal={Econometrica: Journal of the Econometric Society},
  pages={1211--1215},
  year={1978},
  publisher={JSTOR}
}

@article{liu2026high,
  title={High-dimensional online learning via asynchronous decomposition: Non-divergent results, dynamic regularization, and beyond},
  author={Liu, Shixiang and Li, Zhifan and Yang, Hanming and Yin, Jianxin},
  journal={arXiv preprint arXiv:2603.20696},
  year={2026}
}

@article{rissanen1986stochastic,
  title={Stochastic complexity and modeling},
  author={Rissanen, Jorma},
  journal={The annals of statistics},
  pages={1080--1100},
  year={1986},
  publisher={JSTOR}
}

@article{tashman2000out,
  title={Out-of-sample tests of forecasting accuracy: An analysis and review},
  author={Tashman, Leonard J.},
  journal={International Journal of Forecasting},
  volume={16},
  number={4},
  pages={437--450},
  year={2000},
  publisher={Elsevier}
}

@article{bentkus2005lyapunov,
  title={A Lyapunov-type bound in $\mathbb{R}^d$},
  author={Bentkus, Vidmantas},
  journal={Theory of Probability \& Its Applications},
  volume={49},
  number={2},
  pages={311--323},
  year={2005},
  publisher={SIAM}
}

@article{loh2015regularized,
  title={Regularized {M}-estimators with nonconvexity: Statistical and algorithmic theory for local optima},
  author={Loh, Po-Ling and Wainwright, Martin J.},
  journal={Journal of Machine Learning Research},
  volume={16},
  number={19},
  pages={559--616},
  year={2015}
}

@article{jiang2024renewable,
  title={Renewable Huber estimation method for streaming datasets},
  author={Jiang, Rong and Liang, Lei and Yu, Keming},
  journal={Electronic Journal of Statistics},
  volume={18},
  number={1},
  pages={674--705},
  year={2024},
  publisher={The Institute of Mathematical Statistics and the Bernoulli Society}
}

@article{luo2023online,
  title={Online inference in high-dimensional generalized linear models with streaming data},
  author={Luo, Lan and Han, Ruijian and Lin, Yuanyuan and Huang, Jian},
  journal={Electronic journal of statistics},
  volume={17},
  number={2},
  pages={3443},
  year={2023}
}

@article{negahban2012unified,
author = {Sahand N. Negahban and Pradeep Ravikumar and Martin J. Wainwright and Bin Yu},
title = {{A Unified Framework for High-Dimensional Analysis of $M$-Estimators with Decomposable Regularizers}},
volume = {27},
journal = {Statistical Science},
number = {4},
publisher = {Institute of Mathematical Statistics},
pages = {538 -- 557},
year = {2012},
doi = {10.1214/12-STS400},
URL = {https://doi.org/10.1214/12-STS400}
}

@article{fan2014strong,
  title={Strong oracle optimality of folded concave penalized estimation},
  author={Fan, Jianqing and Xue, Lingzhou and Zou, Hui},
  journal={Annals of Statistics},
  volume={42},
  number={3},
  pages={819},
  year={2014}
}

@article{jacobson2025strong,
  title={Strong oracle guarantees for partial penalized tests of high-dimensional generalized linear models},
  author={Jacobson, Tate},
  journal={Journal of the Royal Statistical Society Series B: Statistical Methodology},
  volume={87},
  number={4},
  pages={1150--1170},
  year={2025},
  publisher={Oxford University Press UK},
  doi={10.1093/jrsssb/qkaf010}
}

@article{jacobsonzou2024tobit,
  title={High-Dimensional Censored Regression via the Penalized {Tobit} Likelihood},
  author={Jacobson, Tate and Zou, Hui},
  journal={Journal of Business \& Economic Statistics},
  volume={42},
  number={1},
  pages={286--297},
  year={2024},
  doi={10.1080/07350015.2023.2182309}
}

@article{jacobsonzou2025testing,
  title={Linear Hypothesis Testing for High Dimensional {Tobit} Models},
  author={Jacobson, Tate and Zou, Hui},
  journal={Statistica Sinica},
  volume={35},
  number={4},
  pages={2185--2204},
  year={2025},
  doi={10.5705/ss.202023.0118}
}

@article{zhouzou2024boxcox,
  title={The nonparametric {Box--Cox} model for high-dimensional regression analysis},
  author={Zhou, He and Zou, Hui},
  journal={Journal of Econometrics},
  volume={239},
  number={2},
  pages={105419},
  year={2024},
  doi={10.1016/j.jeconom.2023.01.025}
}

@article{zhouzou2024testing,
  title   = {A Non-Parametric Box-Cox Approach to Robustifying High-Dimensional Linear Hypothesis Testing},
  author  = {Zhou, He and Zou, Hui},
  journal = {arXiv preprint arXiv:2405.12816},
  year    = {2024}
}

@article{shi2019linear,
  title={Linear hypothesis testing for high dimensional generalized linear models},
  author={Shi, Chengchun and Song, Rui and Chen, Zhao and Li, Runze},
  journal={Annals of Statistics},
  volume={47},
  number={5},
  pages={2671},
  year={2019}
}

@book{mccullagh1989generalized,
  title={Generalized linear models},
  author={McCullagh, Peter and Nelder, John A and McCullagh, P},
  volume={2},
  year={1989},
  publisher={Chapman and hall London}
}

@article{zou2008one,
  title={One-step sparse estimates in nonconcave penalized likelihood models},
  author={Zou, Hui and Li, Runze},
  journal={Annals of statistics},
  volume={36},
  number={4},
  pages={1509},
  year={2008}
}

@misc{han2021online,
  title        = {Online Debiased Lasso for Streaming Data},
  author       = {Han, Ruijian and Luo, Lan and Lin, Yuanyuan and Huang, Jian},
  year         = {2021},
  howpublished = {arXiv preprint arXiv:2106.05925},
  note         = {Available at \url{https://arxiv.org/abs/2106.05925}}
}

@article{fan2001variable,
  title={Variable selection via nonconcave penalized likelihood and its oracle properties},
  author={Fan, Jianqing and Li, Runze},
  journal={Journal of the American statistical Association},
  volume={96},
  number={456},
  pages={1348--1360},
  year={2001},
  publisher={Taylor \& Francis}
}

@article{Zhang2010MCP,
author = {Cun-Hui Zhang},
title = {{Nearly unbiased variable selection under minimax concave penalty}},
volume = {38},
journal = {The Annals of Statistics},
number = {2},
publisher = {Institute of Mathematical Statistics},
pages = {894 -- 942},
year = {2010},
doi = {10.1214/09-AOS729},
URL = {https://doi.org/10.1214/09-AOS729}
}

@article{fan2018lamm,
  title={I-LAMM for sparse learning: Simultaneous control of algorithmic complexity and statistical error},
  author={Fan, Jianqing and Liu, Han and Sun, Qiang and Zhang, Tong},
  journal={Annals of statistics},
  volume={46},
  number={2},
  pages={814},
  year={2018}
}

@article{han2024inference,
  title={Inference on high-dimensional single-index models with streaming data},
  author={Han, Dongxiao and Xie, Jinhan and Liu, Jin and Sun, Liuquan and Huang, Jian and Jiang, Bei and Kong, Linglong},
  journal={Journal of Machine Learning Research},
  volume={25},
  number={337},
  pages={1--68},
  year={2024}
}

@article{gu2025collaborative,
  title={Collaborative Inference for Sparse High-Dimensional Models with Non-Shared Data},
  author={Gu, Yifan and Yang, Hanfang and Yang, Songshan and Zou, Hui},
  journal={Journal of the American Statistical Association},
  volume={just-accepted},
  pages={1--30},
  year={2026},
  publisher={Taylor \& Francis}
}

@article{luo2020renewable,
  title={Renewable estimation and incremental inference in generalized linear models with streaming data sets},
  author={Luo, Lan and Song, Peter X-K},
  journal={Journal of the Royal Statistical Society Series B: Statistical Methodology},
  volume={82},
  number={1},
  pages={69--97},
  year={2020},
  publisher={Oxford University Press}
}

@article{xie2024statistical,
  title={Statistical inference for smoothed quantile regression with streaming data},
  author={Xie, Jinhan and Yan, Xiaodong and Jiang, Bei and Kong, Linglong},
  journal={Journal of Econometrics},
  volume={249},
  pages={105924},
  year={2025},
  publisher={Elsevier}
}

@article{han2024online,
  title={Online inference with debiased stochastic gradient descent},
  author={Han, Ruijian and Luo, Lan and Lin, Yuanyuan and Huang, Jian},
  journal={Biometrika},
  volume={111},
  number={1},
  pages={93--108},
  year={2024},
  publisher={Oxford University Press}
}

@misc{capitalbikeshare,
  author       = {{Capital Bikeshare}},
  title        = {Capital Bikeshare Trip History Data},
  howpublished = {Data set},
  url          = {https://capitalbikeshare.com/system-data},
  note         = {Retrieved September 24, 2026},
  year = {2024}
}

@misc{kantor2023lcdv2,
  author       = {Kantor, Diana and Casey, Nancy W. and Menne, Matthew J. and Buddenberg, Andrew},
  title        = {Local Climatological Data ({LCD}), Version 2},
  year         = {2023},
  doi          = {10.25921/96dw-mb77},
  url          = {https://doi.org/10.25921/96dw-mb77},
  note         = {Subset used: Ronald Reagan Washington National Airport, 2024. {NOAA National Centers for Environmental Information}. Accessed 2026-09-24}
}
\end{document}